\documentclass[a4paper,UKenglish,cleveref,thm-restate]{lipics-v2021}
\hideLIPIcs
\EventLogo{}

\usepackage{stmaryrd}
\usepackage{tcolorbox}
\usepackage{cancel}
\usepackage{complexity}
\usepackage{amssymb}
\usepackage{amsmath}
\mathchardef\mhyphen="2D 
\usepackage{amsthm}
\usepackage{arydshln}
\usepackage{xcolor}
\usepackage{hyperref}
\usepackage{cleveref}

\usepackage[ruled,vlined,linesnumbered]{algorithm2e}
\SetKwInOut{Input}{Input}
\SetKwInOut{Output}{Output}
\SetKwFor{ForAll}{for all}{do}{end}

\usepackage{graphicx}
\usepackage{mathdots}
\usepackage{accents}
\usepackage[normalem]{ulem}
\usepackage{soul}
\usepackage{adjustbox}
\usepackage{xspace}
\usepackage{mathtools}

\usepackage{enumerate}
\usepackage{comment}
\usepackage{bigstrut}
\usepackage{centernot}
\usepackage{tikz}
\usetikzlibrary{positioning,arrows,fit,calc,shapes} 
\usetikzlibrary{decorations.pathreplacing}
\usetikzlibrary{arrows.meta}
\usetikzlibrary{decorations.markings}

\newcommand{\defeq}{\mathrel{\mathop:}=}

\newcommand{\red}[1]{\textcolor{red}{[#1]}}
\newcommand{\emp}{\bot}
\newcommand{\leafprune}{\textsc{LeafPrune}}
\newcommand{\leafeliminate}{\textsc{LeafElim}}
\newcommand{\atomextend}{\textsc{ConsEdgeElim}}
\newcommand{\atomeliminate}{\textsc{InconsEdgeElim}}
\newcommand{\Substitute}{\textsc{Substitute}}

\newcommand{\guard}[3]{%
\overset{\scriptscriptstyle #3}{#1 \Rightarrow #2}%
}

\newcommand{\aesat}{\forall\exists\textsc{3SAT}}
\newcommand{\false}{\mathsf{false}}
\newcommand{\true}{\mathsf{true}}
\newcommand{\Const}{\mathsf{Const}}
\newcommand{\Var}{\mathsf{Var}}
\newcommand{\Rel}{\mathsf{Rel}}
\newcommand{\Pred}{\mathsf{Pred}}

\newcommand{\db}{\mathbf{db}}
\newcommand{\DB}{\mathsf{DB}}

\newcommand{\rep}{\mathbf{r}}

\newcommand{\saturate}{\textsc{Saturate}}
\newcommand{\dbsaturate}{\textsc{DBSaturate}}
\newcommand{\saturatestep}{\textsc{SaturateStep}}
\newcommand{\normalizeone}{\textsc{Remove}}
\newcommand{\removereps}{\textsc{RemoveReps}}
\newcommand{\identify}{\textsc{Identify}}

\newcommand{\normalizetwo}{\textsc{Remove}}

\newcommand{\cqa}[1]{\mathsf{CERTAINTY}({#1})}
\newcommand{\cert}{\mathsf{CERTAINTY}}
\newcommand{\cqpk}{\mathsf{BCQ\text{-}PK\text{-}EVAL}}

\newcommand{\sub}{\subseteq}
\newcommand{\norep}[1]{{#1}^{\sf norep}}

\newcommand{\attacks}[1]{\stackrel{#1}{\rightsquigarrow}}
\newcommand{\attacksp}[1]{\stackrel{{#1}+}{\rightsquigarrow}}

\newcommand{\nattacks}[1]{\stackrel{#1}{\not\rightsquigarrow}}

\newcommand{\outvar}[2]{\mathsf{Out}_{#1}({#2})}
\newcommand{\outvarstar}[2]{\reach{#2}{#1}}

\newcommand{\drop}[2]{{#1}^{-{#2}}}
\newcommand{\reach}[2]{\mathsf{Reach}_{#2}(#1)}

\newcommand{\Saturate}{\textsc{Saturate}}
\newcommand{\Normalize}{\textsc{Normalize}}

\newcommand{\Try}{\textsc{Try}}

\newcommand{\Dom}{\mathrm{Dom}}

\newcommand{\keyvars}[1]{\mathsf{Source}({#1})}
\newcommand{\keyvarsvars}[1]{\mathsf{KeyVars}({#1})}
\newcommand{\notkeyvars}[1]{\mathsf{NonKeyVars}({#1})}

\newcommand{\adom}[1]{\mathsf{adom}({#1})}

\newcommand{\atomvars}[1]{\mathsf{Vars}({#1})}
\newcommand{\queryvars}[1]{\mathsf{Vars}({#1})}
\newcommand{\keyqueryvars}[1]{\mathsf{KeyVars}({#1})}
\newcommand{\nkeyqueryvars}[1]{\mathsf{NotKeyVars}({#1})}
\newcommand{\nrqueryvars}[1]{\mathsf{GuardedVars}({#1})}
\newcommand{\rqueryvars}[1]{\mathsf{NotGuardedVars}({#1})}
\newcommand{\rootqueryvars}[1]{\mathsf{RootVars}^{\attacks{}}({#1})}
\newcommand{\nrootqueryvars}[1]{\mathsf{NotRootVars}^{\attacks{}}({#1})}
\newcommand{\srcqueryvars}[1]{%
\mathsf{SrcVars}^{\to}(#1)
}
\newcommand{\sch}{\mathsf{Rel}}
\newcommand{\initqueryvars}[1]{\mathsf{GuardVars}({#1})}
\newcommand{\leafqueryvars}[1]{\mathsf{LeafVars}^{\attacks{}}({#1})}

\newcommand{\sib}{\sim}

\newcommand{\size}[1]{||{#1}||}

\newcommand{\FD}[1]{{\mathcal{K}}({#1})}
\newcommand{\FDclosure}[1]{{\mathcal{K}^*}({#1})}

\newcommand{\cl}[2]{{#1}^{+,{#2}}}
\newcommand{\SFD}[1]{{\mathcal{K}^{c}({#1})}}
\newcommand{\WFD}[1]{{\mathcal{K}^{i}({#1})}}
\newcommand{\fd}[2]{{#1}\rightarrow{#2}}

\newcommand{\blockof}[3]{{\mathsf{block}}({#1},{#2},{#3})}

\newcommand{\gaifman}[1]{\mathcal{G\mathsf{aifman}}({#1})}

\newcommand{\keyclosure}[2]{{#1}^{+,{#2}}}

\newcommand{\nscc}[1]{\#\mathrm{srcSCC}\!\left({#1}\right)}
\newcommand{\cgs}[1]{\mathsf{cgs}\!\left({#1}\right)}
\newcommand{\hcgs}[1]{\mathsf{hcgs}\!\left({#1}\right)}
\newcommand{\ctw}[1]{\mathsf{ctw}\!\left({#1}\right)}
\newcommand{\hctw}[1]{\mathsf{hctw}\!\left({#1}\right)}

\newcommand{\restrict}[2]{{#1}{\restriction}_{#2}}

\newcommand{\sjfbcq}{\mathsf{sjfBCQ}}
\newcommand{\sjfubcq}{\mathsf{sjfBCQ}_\mathsf{1}}
\newcommand{\sjfkbcq}[1]{\mathsf{sjfBCQ}_{#1}}
\newcommand{\bcq}{\mathsf{BCQ}}

\newcommand{\sjfabcq}{\mathsf{sjfBCQ}_\mathsf{A}}
\newcommand{\sjfuabcq}{\mathsf{sjfBCQ}_\mathsf{1,A}}
\newcommand{\sjfkabcq}[1]{\mathsf{sjfBCQ}_\mathsf{{#1},A}}

\newcommand{\dom}[1]{\mathsf{Dom}(#1)}

\newcommand{\uk}{-}

\newenvironment{redtext}{\color{red}}{\ignorespacesafterend}
\newenvironment{bluetext}{\color{blue}}{\ignorespacesafterend}
\newenvironment{olivetext}{\color{olive}}{\ignorespacesafterend}

\newcommand{\arity}[1]{{\mathsf{type}}(#1)}

\newcommand{\rootvar}[1]{\mathsf{RootOf}^{\attacks{}}(#1)} 

\newcommand{\rootof}[1]{\mathsf{RootOf}^{\attacks{}}(#1)}

\newcommand{\modec}[1]{{#1}^c}
\newcommand{\modei}[1]{{#1}^i}

\newcommand{\igraph}[1]{\mathcal{K}^{\attacks{}}(#1)}

\newcommand{\occq}[1]{\size{#1}} 
\newcommand{\subs}{\mathcal F}
\newcommand{\maxsubs}{\mathcal{M}}
\newcommand{\satmaxsubs}{\mathcal{SM}}
\usetikzlibrary{fit,positioning}
\usetikzlibrary{decorations.pathmorphing}
\usetikzlibrary{arrows.meta}

\tikzset{
  single arrow/.style={->, line width=0.6pt, >={Stealth}},
  double arrow/.style={
    ->,
    line width=0.6pt,
    double,
    double distance=1.2pt,
    >={Stealth}
  }
}
\tikzset{
  squig/.style={
    decorate,
    decoration={snake, amplitude=0.7mm, segment length=3mm}
  }
}

\newcommand{\sarr}[2]{\draw[single arrow] (#1) -- (#2);}
\newcommand{\bdarr}[2]{%
  \draw[double arrow] (#1) -- (#2);
}

\newcommand{\darr}[3][]{%
  \draw[double arrow,#1] (#2) to (#3);
}

\newcommand{\rdarr}[2]{%
  \draw[red, double arrow] (#1) -- (#2);
}

\title{Getting to the Root: A Combined Complexity Perspective on Consistent Query Answering} 

\titlerunning{Combined Complexity of Consistent Query Answering}
\author{Miika Hannula}{University of Tartu, Estonia}{}{}{}

\authorrunning{M. Hannula} 

\nolinenumbers

\makeatletter
\renewcommand{\subjclassHeading}{}
\renewcommand{\keywordsHeading}{}
\gdef\@ccsdescString{\relax}
\gdef\@keywords{\relax}
\makeatother
\begin{document}

\maketitle

\begin{abstract}
We study the combined complexity of consistent query answering for Boolean self-join-free conjunctive queries with unary primary keys and acyclic attack graphs. Although every fixed query in this class admits a first-order rewriting \cite{KoutrisW17}, we show that allowing the query to vary makes the problem $\Pi_2^{\P}$-complete. To isolate the query structure governing this transition, we introduce the \emph{closure generator size}: the minimum number of query variables whose closure under the functional dependencies induced by the primary keys contains every query variable. Our main theorem shows that every fixed bound on this parameter yields polynomial-time combined complexity. Parameterized by the closure generator size, the problem is in $\mathsf{XP}$ and $\mathrm{co}\text{-}\mathrm{W}[t]$-hard for every fixed $t$. 
The polynomial-time result is specific to unary primary keys: with binary primary keys, the problem becomes $\coNP$-hard already for self-join-free queries with acyclic attack graphs and closure generator size zero. 
Regarding the upper bound, CQA is in $\coNP$ for arbitrary Boolean conjunctive queries of bounded closure treewidth or bounded hyperclosure treewidth; these width measures originate in \cite{Adler08}.
Our polynomial-time algorithm realizes quantifier alternation dynamically by interleaving existential and universal substitutions with variable-level reductions. To implement this approach, we introduce concepts and techniques such as attack-propagation graphs, variable guarding and guarded pruning, database and query saturation, and query normalization.
  \end{abstract}

\tableofcontents
\clearpage

\section{Introduction}
Consistent query answering (CQA), introduced in \cite{ArenasBC99}, provides a principled semantics for queries over inconsistent databases. This semantics is often presented via the notions of repairs and {consistent answers}. A \emph{repair} of a database $\db$ is any consistent database $\rep$ that is as close to $\db$ as possible. Under the usual formal definition, $\rep$ is a repair whenever $\rep$ is a consistent database with a $\subseteq$-minimal symmetric difference $\rep \oplus \db$. In the context of primary-key constraints, or functional dependencies more generally, such a repair $\rep$ is always a subset of $\db$. Given a query $q(\vec{x})$, an answer $\vec{a}$ is called a  \emph{consistent answer} if it belongs to the output $q(\rep)$ of the query $q$ on each repair $\rep$ of $\db$. In particular, when $q$ is Boolean, the consistent answer is ``yes'' if the query answer is ``yes'' on every repair.

Since the number of repairs can be exponential in the size of the database, CQA is generally computationally intractable, even for conjunctive queries (CQs). In contrast to ordinary CQ evaluation, this intractability arises already in the data complexity setting, where the query and the constraint set are fixed and only the database is taken as an input. For instance, there are Boolean conjunctive queries consisting of only two atoms (over distinct relation symbols) for which the CQA problem is \coNP-hard. To date, several classifications of the data complexity of CQA have been established, typically along structural properties that depend on  the query and the constraint set. Most pertinent for our study is the trichotomy result by Wijsen and Koutris~\cite{KoutrisW17,KoutrisW21}, which partitions all self-join-free Boolean conjunctive queries (sjfBCQs) into three classes: those whose CQA problem is in $\FO$, $\L$-complete, or $\coNP$-complete. 
 Other similar data complexity classifications, fully or partially complete, concern path queries \cite{KoutrisOW21}, rooted tree queries \cite{KoutrisOW24},  aggregate queries \cite{KhalfiouiW23,KhalfiouiW24,KhalfiouiW26}, two-atom queries with self-joins \cite{PadmanabhaSS24}, queries under primary and foreign keys \cite{HannulaW22}, queries under semiring semantics \cite{KolaitisPVW25}, and counting complexity \cite{CalauttiLPS22a,MaslowskiW13}. Taken together, these works paint a remarkably comprehensive picture of the data complexity of CQA.

In this paper we shift attention from data complexity to combined complexity. In the CQA context, this point of view was given attention already in earlier papers such as \cite{ArenasB10,CaliLR03}.
Later, Arming et al. \cite{ArmingPS16} classified the CQA problem in various input configurations, including combined complexity (where all three input components are allowed to vary). As summarized therein, the combined complexity of CQA is $\Pi_2^{\P}$-complete over Boolean CQs and key constraints, and remains $\Pi_2^{\P}$-complete even when more general EGDs are considered instead of keys.  In addition, the same problem is $\NP$-complete over inclusion dependencies and undecidable over TGDs. Recently, \cite{abs-2412-08324} investigated parameterized complexity of CQA for data and combined complexity. These works notwithstanding—and there may be others that we fail to list here—it is safe to say that combined complexity for CQA has not yet received a similarly extensive structural analysis as data complexity. In particular, while we know that the combined complexity of CQA is in general intractable, it is not well understood what explains this intractability, and when it can be avoided. Attack graphs provide such explanations for data complexity, but no analogous theory exists for combined complexity. 



The following example illustrates the difference between data and combined complexity. The queries presented therein are first-order rewritable, however a naïve evaluation of the
  resulting alternating formulas takes exponential time in the query size. In contrast, a simple variable-level elimination procedure takes only linear time to compute the consistent answer. 

\begin{example}\label{ex:first}
Consider Boolean CQs
\begin{equation}\label{eq:gap}
q_k \defeq \exists x_1 \dots \exists x_{k+1} (R_1(\underline{x_1},x_2) \land R_2(\underline{x_2},x_3) \land \dots \land R_k(\underline{x_k},x_{k+1})),
\end{equation}
where the (underlined) first position in each atom is the primary-key position.
From the theory of attack graphs, we obtain a first-order rewriting to capture the CQA semantics of the queries $q_k$:
\begin{align*}
 \exists x_1 x_2 (R_1(\underline{x_1},x_2) \land &\forall x_2 (R_1(\underline{x_1},x_2) \to \exists x_3 (R_2(\underline{x_2},x_3) \land \\
&\forall x_3 (R_2(\underline{x_2},x_3) \to \exists x_4 (R_3(\underline{x_3},x_4) \land \dots \land \\
& \forall x_k (R_{k-1}(\underline{x_{k-1}},x_{k}) \to \exists x_{k+1}R_{k}(\underline{x_{k}}, x_{k+1}))\dots ))))).
\end{align*}
These alternating formulas can be evaluated na\"{i}vely in $O(n^k)$, where $n$ is the size of the database. This, however, is needlessly expensive. Given a database $\db$, an alternative process repeats the following steps, for each $i=k-1,\dots ,1$:
\begin{itemize}
\item Remove from $\db$ every $R_i$-block (i.e., a maximal set of $R_i$-facts with the same primary-key value) containing a fact $R_i(\underline{a},b)$ such that no fact of the form $R_{i+1}(\underline{b},c)$ exists in $\db$.
\item Remove 
 each fact $R_{i+1}(\underline{b},c)$ from $\db$.
\end{itemize}
In the end, return $\true$ if $\db$ is non-empty, and otherwise return $\false$. The procedure returns $\true$ precisely when each repair of $\db$ satisfies $q_k$. Since, assuming constant-time access to blocks by relation name and primary-key value, every block is examined at most once, the procedure runs in $O(n)$. 
\end{example}

The above example provides a simple combined complexity fragment that is tractable. In this paper, our aim is to understand when and how such fragments arise. The variable $x_1$ is what we call a closure generator of $q_k$: the primary-key dependencies successively determine $x_2,\ldots,x_{k+1}$. Hence, the closure generator size of $q_k$ is one, and our main result implies polynomial-time combined complexity for $\{q_k \mid k \geq 1\}$. 
 The elimination procedure exemplifies one part of what we call {database saturation}. The general polynomial-time algorithm is considerably more involved, as this additionally requires handling multiple incoming edges and cycles.
  
{\bf Contributions.} 
We identify two structural dimensions governing combined complexity of consistent query answering for Boolean self-join-free conjunctive queries with acyclic attack graphs: primary-key arity and closure generator size. We define the latter as the minimum size of a set of query variables whose closure under the functional dependencies induced by the query’s primary keys contains every variable of the query. Our main theorem shows that unary primary keys together with a fixed bound on the closure generator size yield polynomial-time combined complexity. Relaxing either restriction leads to a hardness result. With unary primary keys but no bound on the closure generator size, the problem is $\Pi_2^{\P}$-complete. Allowing binary primary keys makes the problem $\coNP$-complete even when the closure generator size is zero. The $\coNP$ upper bound holds even for general Boolean CQs under primary keys with bounded closure treewidth or hyperclosure treewidth, two notions originating in \cite{Adler08}. Additionally, 
treating the closure generator size as a parameter, in the case of unary keys and acyclic attack graphs we  prove
   membership in $\mathsf{XP}$, together with $\mathrm{co}\text{-}\mathrm{W}[t]$-hardness for every fixed $t$.


The polynomial-time result is our main technical contribution.
To achieve it, we develop attack-propagation graphs,
variable guarding and guarded reductions, database and query saturation, and
query normalization. Conceptually, the resulting algorithm moves beyond
static first-order rewritings and realizes quantifier alternation dynamically
through interleaved variable-level reductions.

{\bf Related work.} 
The attack-graph-based recursive rewriting construction, introduced in~\cite{Wijsen12} and later extended to the general self-join-free setting in~\cite{KoutrisW17}, provides the background for this work.
Perhaps closest in spirit to our approach is \cite{FanKOW23}, which identifies a class of sjfBCQs for which CQA can be solved via FO-rewriting in linear time. The class they introduce (queries admitting a pair-pruning join tree) is orthogonal to ours: it allows composite primary keys but requires the query to be ($\alpha$-)acyclic. Consequently, this class is a strict subclass of the sjfBCQs with an acyclic attack graph. For example, the query
\begin{equation}\label{eq:cycle}
q= \exists xyz (R(\underline{x},y) \land S(\underline{y},z) \land T(\underline{z},x) \land T'(\underline{z},x) \land U(\underline{y},x) \land U'(\underline{y},x))
\end{equation}
has unary primary keys, an acyclic attack graph, but is not acyclic. This illustrates also that tractability in our framework does not rely on hypergraph acyclicity.
Another work that studies the algorithmic aspects of CQA is \cite{FigueiraPSS25}, which presents a remarkably simple generic algorithm for all sjfBCQs (and path queries) whose CQA problem has polynomial-time data complexity. The algorithm runs in $O(n^k)$, where $n$ is the database size and $k$ is the number of atoms in the query. 

There is also a long line of work on the combined complexity of ordinary CQ
evaluation. In particular, tractability has been analyzed and characterized
via structural measures such as treewidth, hypertree width, generalized
hypertree width, and submodular width
\cite{GroheSS01,GottlobLS02,Marx13}. Our closure and hyperclosure treewidth
measures build on Adler's dependency-based framework \cite{Adler08}, and
  our generator sizes can be viewed as the one-bag versions of the width measures. This relates our results to the classical structural analysis of CQ evaluation.

\section{Preliminaries}
We use the convention of dropping the set parentheses for singleton sets in notation. That is, if our term involves a set argument  $U=\{u\}$, we write $u$ instead of $\{u\}$.
For two natural numbers $k\leq n$, we define $[k,n]\defeq \{k,k+1, \dots ,n\}$ and $[n]\defeq [1,n]$.
We assume disjoint countably infinite sets of variables $\Var$, constants $\Const$, and relation names $\Rel$. A \emph{term} is a variable or a constant.  We assume a fixed total order $\prec$ over $\Var\cup\Const\cup\Rel$, and extend it lexicographically to all composite expressions defined below. 


Each relation name $R$ is associated with primary-key arity $k$ and relation arity $n$, denoted $\arity{R}=(k,n)$, where $k\leq n$. 
An \emph{$R$-atom} (or simply an \emph{atom}) is an expression of the form $F\defeq R(\underline{t_1, \dots ,t_k},t_{k+1}, \dots ,t_n)$, where $t_1,\dots ,t_n$ are terms and $\arity{R}=(k,n)$. If $t_1,\dots ,t_n$ are constants, then this expression is an \emph{$R$-fact} (or a \emph{fact}). 
 Two facts $R(\underline{\vec{a}},\vec{c})$ and $R(\underline{\vec{b}},\vec{d})$ are called \emph{key-equal} if $\vec{a}=\vec{b}$. Throughout the proofs, we tacitly identify atoms that differ only by a permutation of their non-key terms whenever their order is irrelevant. Accordingly, we may denote atoms by writing
\(
R(\underline{t},\vec{u},\uk)
\)
where $\vec{u}$ is a distinguished sequence of non-key terms, and ``$\uk$'' stands for the non-distinguished non-key terms.


A \emph{Boolean conjunctive query (with primary-key constraints)}  $q$ is defined as a finite set of atoms $\{F_1, \dots ,F_n\}$.
The query is \emph{self-join-free} if $F_i$ and $F_j$ are associated with different relation names whenever $i\neq j$. We write $\sch(q)$ for the set of relation names appearing in $q$. 
The classes of Boolean conjunctive queries and self-join-free Boolean conjunctive queries are denoted $\mathsf{BCQ}$ and $\sjfbcq$, respectively. For a query $q$ from $\mathsf{BCQ}$, we write $\queryvars{q}$ for the set of variables appearing in it, 
$\keyvarsvars{q}$ for the set of variables appearing in a primary-key position of some atom in $q$, and $\notkeyvars{q}$
for the set of variables appearing in a non-primary-key position of some atom in $q$. Note that $\keyvarsvars{q}$ and $\notkeyvars{q}$ need not be disjoint. 
For  $V\subseteq \queryvars{q}$ and a function $f\colon V\to \Const \cup \Var$, the query $q_f$ is obtained from $q$ by replacing each variable $x\in V$ with $f(x)$. 
We let $\mathsf{BCQ}_{k}$ and $\sjfkbcq{k}$ denote the classes of queries in $\mathsf{BCQ}$ and  $\sjfbcq$
such that $\arity{R}=(\ell,n)$ with $\ell\leq k$ for each relation name $R$ appearing in $q$.



\subsection{Databases and Repairs}
Let $V$ be a set of variables.
A \emph{valuation (over $V$)} is a function $\theta$ from $V$ to the constants.
A valuation is extended to be the identity over constants, and it extends to atoms and queries of $\mathsf{BCQ}$ naturally.
If $q\in \sjfbcq$ and $R\in \sch(q)$, by slight abuse of notation we sometimes write $R$ for the unique $R$-atom in $q$.
A \emph{database} $\db$ is a finite set of facts, and the set of all databases is denoted by $\DB$.
The \emph{active domain of $\db$}, denoted by $\adom{\db}$, is the set of all constants appearing in $\db$.
Given a query $q$ from $\mathsf{BCQ}$, we say that $\db$ \emph{satisfies}
$q$ if there exists a valuation over $\queryvars{q}$ such that $\theta(q)\subseteq \db$.

Let $\db$ be a database.
A \emph{block} of $\db$ is any $\subseteq$-maximal set of key-equal facts from $\db$.
For an atom $R(\underline{\vec{x}},\vec{y})$, we write $\blockof{R}{\vec{a}}{\db}$ for the set of facts in $\db$ of the form $R(\underline{\vec{a}},\uk)$.
The expression \(R(\underline{\vec{a}},\vec{b},\uk)\in\db\) means that there exists a tuple of constants \(\vec{c}\) such that \(R(\underline{\vec{a}},\vec{b},\vec{c})\in\db\).
The database $\db$ is \emph{consistent} if for all $R(\underline{\vec{a}},\vec{b}),R(\underline{\vec{a}},\vec{c})\in \db$ it holds that $\vec{b}=\vec{c}$. A \emph{repair} of $\db$ is any $\subseteq$-maximal consistent subset of $\db$.

Let $\mathcal{R}$ be a set of relation names. We write $\restrict{\db}{\mathcal{R}}$ for the \emph{restriction} of $\db$ to $\mathcal{R}$. 
 If $q$ is a query from $\mathsf{BCQ}$, then we may write $\restrict{\db}{q}$  instead of  $\restrict{\db}{\sch(q)}$.
If $R$ is a  relation name, we may write $R^\db$ (or, by slight abuse of notation, simply $R$ when $\db$ is understood) instead of $\restrict{\db}{R}$ to denote the set of all $R$-facts in $\db$.

\subsection{Graphs and Closures}
Let $G=(V,E)$ be a directed graph, and let $U\subseteq V$ be a vertex set.
We write $\reach{U}{G}$ for the set of vertices \emph{reachable from} $U$ in $G$, assuming reachability is reflexive.
We also write 
 $\outvar{G}{U}$ for the set of vertices having an incoming edge from a vertex in $U$.
Let $P=x_1,\dots ,x_n$ be a simple path in $G$. For two vertices $x_i$ and $x_j$ with $1\leq i<j\leq n$, we write $P[x_i,x_j]$ for its unique subpath that starts at $x_i$ and ends in $x_j$. 
We say that $P$ is \emph{separated by $U$} if $x_i\in U$, for some $i\in [n]$. Two sets $U_0,U_1 \subseteq V$ are \emph{separated by $U$} if every non-empty path with one endpoint in $U_0$ and another in $U_1$ is separated by $U$. 
A \emph{strongly connected component} (SCC) of $G$ is a maximal strongly connected subgraph of $G$.
An SCC $C\subseteq V$ is a \emph{source SCC} if there is no edge $(u,v)\in E$ with $u\notin C$ and $v\in C$.
We write $\nscc{G}$ for the number of source SCCs of $G$.

{\bf Gaifman graphs.}
Let $q$ be a query from $\mathsf{BCQ}$.
The \emph{Gaifman graph of $q$}, denoted $\gaifman{q}$, is an undirected simple graph 
whose vertex set is $\queryvars{q}$. There is an edge between $x$ and $y$ 
if $x \neq y$ and some atom of $q$ contains both $x$ and $y$. 

{\bf Attack graphs.}
Let $q$ be a query from $\sjfbcq$. 
Given a set of variables $V$, the \emph{closure of $V$ under $q$}, denoted $\cl{V}{q}$, is the smallest set of variables $W$ 
 such that $V\subseteq W$ and, for every atom $F\in q$, if $\keyvarsvars{F}\subseteq W$, then $\atomvars{F}\subseteq W$.
Define  
$\keyclosure{F}{q}\defeq \cl{\keyvarsvars{F}}{q\setminus \{F\}}$, for each $F\in q$.
An atom $F$ of $q$ is said to \emph{attack} a variable $x$ (in $q$), denoted 
$F \attacks{q}x$, if $\notkeyvars{F}$ and $\{x\}$ are not separated by $\keyclosure{F}{q}$ in $\gaifman{q}$.
A variable $x$ is  \emph{unattacked} if no atom attacks $x$.  
The \emph{attack graph of $q$} is a directed simple graph whose vertices 
are the atoms of $q$. Given two distinct atoms $F,G\in q$, there is a directed edge from $F$ to $G$, denoted 
$F \attacks{q} G$, if $F$ attacks some variable of $\queryvars{G}$. We write  $F\attacksp{q} G$ if there is a non-empty sequence of attacks leading from $F$ to $G$. When $q$ is understood, we may abbreviate $\attacks{q}$ and $\attacksp{q}$ by $\attacks{}$ and $\attacksp{}$. 

{\bf Closure generators.}
The set $V$ is called a \emph{closure generator of $q$} if $\cl{V}{q} = \queryvars{q}$. A subquery $q'\subseteq q$ is called a \emph{hyperclosure generator of $q$} if $\queryvars{q'}$ is a closure generator.
The \emph{closure generator size of $q$} (resp. the \emph{hyperclosure generator size of $q$}), denoted  $\cgs{q}$  (resp. $\hcgs{q}$), is the minimum cardinality of a closure generator of $q$ (resp. hyperclosure generator of $q$).
For every query $q$ from $\mathsf{BCQ}_{1}$, it holds that $\cgs{q}=\hcgs{q}$.

{\bf Closure treewidths.}
A \emph{tree decomposition} of a query $q$ is a pair
$\mathcal{T}=(T,(B_t)_{t\in T})$, where $T$ is a tree and each bag
$B_t$ is a subset of $\queryvars{q}$, such that every atom $F\in q$
satisfies $\atomvars{F}\subseteq B_t$ for some $t\in T$, and, for every
$x\in\queryvars{q}$, the nodes $t$ satisfying $x\in B_t$ induce a
connected subtree of $T$.
The \emph{closure width} of a bag $B\subseteq\queryvars{q}$ is the
minimum cardinality of a set $V\subseteq\queryvars{q}$ such that
$B\subseteq\cl{V}{q}$. Its \emph{hyperclosure width} is the minimum
cardinality of a subquery $q'\subseteq q$ such that
$B\subseteq\cl{\queryvars{q'}}{q}$. The \emph{closure treewidth}
$\ctw{q}$ (respectively, \emph{hyperclosure treewidth} $\hctw{q}$) is
the minimum, over all tree decompositions of $q$, of the maximum closure
width (respectively, hyperclosure width) of a bag.
Unlike Adler's definition \cite{Adler08}, ours also uses composite-key dependencies.



\section{Main Results}\label{sect:mainstatement}
Fix a subclass $\mathcal{Q}$ of $\bcq$. We define $\cqa{\mathcal{Q}}$ as the following problem: given a database $\db$ and a query from $\mathcal{Q}$, determine whether every repair of $\db$ satisfies $q$.
Koutris and Wijsen  have shown that, for $q\in \sjfbcq$,  $\cqa{\{q\}}$ is in $\FO$ precisely when the attack graph of $q$ is acyclic; if the attack graph is cyclic, $\cqa{\{q\}}$ is either $\L$-complete (if no ``strong cycles'' exist) or $\coNP$-complete (otherwise) \cite{KoutrisW17,KoutrisW21}.

We write $\sjfabcq$ for the subclass of $\sjfbcq$
consisting of queries whose attack graph is acyclic.
Furthermore, let $\sjfkabcq{k} \defeq \sjfkbcq{k} \cap \sjfabcq$; that is, $\sjfkabcq{k}$ consists of Boolean self-join-free conjunctive queries with at most $k$-ary primary keys and acyclic attack graphs.

We prove that the combined complexity of consistent query answering over queries in $\sjfkabcq{1}$ is $\Pi^{\sf P}_2$-complete. 

\begin{restatable}{theorem}{hard}\label{thm:hard}
Let $\sjfkabcq{1} \subseteq\mathcal{Q}\subseteq \bcq$. Then
$\cqa{\mathcal{Q}}$ is $\Pi^{\sf P}_2$-complete.
\end{restatable}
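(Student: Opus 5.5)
Upper and lower bounds are handled separately.

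\emph{Membership.} For $\cqa{\bcq}$ we place the complement in $\Sigma_2^{\P}$. A no-instance $(\db,q)$ has a repair $\rep\subseteq\db$ with $\rep\not\models q$. Guessing $\rep$ amounts to picking one fact per block, which is polynomial in $\size{\db}$. Checking that $\rep$ is a repair, i.e.\ consistent and $\subseteq$-maximal (it meets every block of $\db$), takes polynomial time. Checking $\rep\not\models q$ is a $\coNP$ question, since $\rep\models q$ is witnessed by a valuation $\theta$ with $\theta(q)\subseteq\rep$. Hence the complement is in $\exists\forall$ form, so $\cqa{\bcq}\in\Pi_2^{\P}$, and this bound is inherited by every $\mathcal{Q}\subseteq\bcq$.

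\emph{Hardness.} Since $\sjfkabcq{1}\subseteq\mathcal{Q}$, it suffices to reduce $\aesat$ to $\cqa{\sjfkabcq{1}}$. Take an instance $\forall \vec u\,\exists \vec v\,\bigwedge_{j=1}^m C_j$ and build a self-join-free query with unary keys and an acyclic attack graph.

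Each universal variable $u_i$ is handled by a block. Use an atom $U_i(\underline{z},x_{u_i})$ with a single shared key variable $z$ (or a separate constant-keyed relation per $i$), whose database block contains $U_i(\underline{0},0)$ and $U_i(\underline{0},1)$. A repair then chooses a truth value for $u_i$. Each existential variable $v_k$ becomes a plain query variable $x_{v_k}$. For each clause $C_j$ add an atom $C_j(\underline{w_j},\vec x_j)$ over the clause's three variables. Its $C_j$-relation stores the seven satisfying assignments, each in its own block, via fresh distinct key values; these blocks are then consistent. To keep keys unary and the query self-join-free, give each $w_j$ a fresh variable. Consistent blocks make the clause relations certain, so a repair satisfies $q$ iff, for the chosen $\vec u$, some $\vec v$ satisfies all clauses. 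This gives exactly $\forall\vec u\exists\vec v$.

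The main obstacle is making the attack graph acyclic. We need $x_{u_i}$ to be attacked only by $U_i$, and $U_i$ must be unattacked. The atom $U_i$ attacks every variable connected to $x_{u_i}$ in the Gaifman graph avoiding $\keyclosure{U_i}{q}$. So $U_i$ attacks the clause atoms, but the clause atoms have free keys and no non-key repeated constraints; we must ensure they do not attack back.

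An atom $C_j$ attacks through its non-key variables, so it could reach $x_{u_i}$ and hence attack $U_i$. To block this, put $x_{u_i}$ into $\keyclosure{C_j}{q}$. A cleaner route is to give all $U_i$ the same key variable $z$ with $z$ in every key closure: add an atom $Z(\underline{z})$, or put the constant $0$ in place of $z$. If each $U_i$ has key constant $0$, then $\keyclosure{U_i}{q}=\emptyset$ and $U_i$ attacks the clause atoms. Conversely, $\keyclosure{C_j}{q}$ contains $w_j$ plus everything determined from it, and the $U_i$ keys are constants, so the Gaifman-graph paths from $C_j$'s non-key variables reach $x_{u_i}$ but not into $U_i$'s key. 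Since the key of $U_i$ is a constant, no variable of $U_i$ other than $x_{u_i}$ matters, and $C_j$ would attack $U_i$ via $x_{u_i}$.

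To prevent this, make $x_{u_i}$ functionally determined from each $w_j$. Add keys $w_j$ and cross atoms $D_{j,i}(\underline{w_j},x_{u_i})$, whose relations are populated consistently with the clause tuples, so that $x_{u_i}\in\keyclosure{C_j}{q}$. Then every attack goes from universal atoms to clause-side atoms, and the attack graph is acyclic. Checking this attack analysis, and that the auxiliary relations preserve the intended semantics, is the step I expect to need the most care.
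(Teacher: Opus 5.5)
Your membership argument is the standard one and is fine. The hardness reduction, however, does not produce a query in $\sjfkabcq{1}$, because its attack graph is cyclic. Your cross atoms also target the wrong attacks. Take a clause atom $C_j(\underline{w_j},\vec x_j)$ and an existential variable $x_v$ occurring in it. The set $\keyclosure{C_j}{q}=\cl{\{w_j\}}{q\setminus\{C_j\}}$ consists of $w_j$, the universal variables reached through your atoms $D_{j,i}$, and $\cl{\emptyset}{q\setminus\{C_j\}}$. It never contains $x_v$: that variable occurs only in non-key positions of clause atoms whose keys $w_{j'}$ are fresh and determined by nothing. Hence $C_j \attacks{} x_v$, so $C_j$ attacks every other clause atom $C_{j'}$ containing $v$, and by the same argument $C_{j'} \attacks{} C_j$. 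Two clauses sharing an existential variable, which you cannot exclude for $\aesat$, therefore give a $2$-cycle, and your claim that ``every attack goes from universal atoms to clause-side atoms'' is false. The missing idea is to make clause atoms attack nothing at all by putting \emph{all} their non-key variables into their own key closure. The paper does this by doubling each clause atom: add a twin $C'_j(\underline{w_j},\vec x_j)$ with a fresh relation name and the same consistent relation. Then $\notkeyvars{C_j}\subseteq\keyclosure{C_j}{q}$, and symmetrically for $C'_j$, so neither attacks anything; these are the paper's atoms $T_i$ and $U_i$. Your cross atoms would do the same job if you added one for every variable of $C_j$, not only the universal ones.

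The cross atoms for universal variables are unnecessary, and the reasoning that led you to them misreads the closure. If $U_i$ has a constant key, then $x_{u_i}\in\cl{\emptyset}{q\setminus\{F\}}\subseteq\keyclosure{F}{q}$ for every atom $F\neq U_i$, so no atom attacks $U_i$. By the same reasoning, $\keyclosure{U_i}{q}$ is not empty: it contains every other $x_{u_{i'}}$. A cross atom $D_{j,i}$ for a universal variable not occurring in $C_j$ would also break the semantics, since it cannot be populated as a function of the clause tuple without fixing $u_i$. With constant-keyed universal atoms and doubled clause atoms, the only attacks go from universal atoms to clause atoms, and your semantic argument then goes through as in the paper.
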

\begin{proof}[Proof sketch]
For hardness, we reduce from $\aesat$. Let
\[
\psi = \forall x_1\dots \forall x_m\exists y_{1}\dots\exists y_{n}\varphi,
\]
where $\varphi = C_1 \land \dots \land C_\ell$ is a 3-CNF formula over the variables
$x_1,\ldots,x_{m},y_1,\dots ,y_n$. For a clause $C_i$, let $C_{i,j}$ denote its $j$th literal, and for a literal $l \in \{x, \lnot x\}$, let $v(l)=x$ denote the underlying variable.
Given $(b_1,b_2,b_3)\in\{0,1\}^3$, we write
$(b_1,b_2,b_3)\models C_i$
if assigning $b_j$ to the variable $v(C_{i,j})$ satisfies $C_i$.
We define
\begin{equation}\label{eq:reduction-sketch}
q\defeq \{R_i(\underline{c},x_i) \mid i\in [m]\} \cup \{S(\underline{z_i},v(C_{i,1}),v(C_{i,2}),v(C_{i,3})) \mid S\in \{T_i,U_{i}\}, i\in [\ell]\}
\end{equation}
and
\begin{align*}
\db \defeq &\{R_i(\underline{c},b) \mid i\in [m], b \in \{0,1\}\} \\
&\cup \{S(\underline{(b_1,b_2,b_3)},b_1,b_2,b_3) \mid S\in \{T_i,U_{i}\}, i\in [\ell], (b_1,b_2,b_3)\models C_i\},
\end{align*}
where $c$ is a constant and the triples in the primary-key position of $S$ are single constants.
Only the $R_i$-atoms can attack other atoms, and they do not attack each other; hence the attack graph is acyclic. 
It can be verified that
$\psi$  is true
if and only if $(\db,q)\in\cqa{\sjfuabcq}$.
\end{proof}

As our main technical result, \Cref{thm:main} shows that combined complexity over queries in $\sjfkabcq{1}$ is in polynomial time once we assume a bound on the closure generator size. When treating the closure generator size as a parameter, we obtain
$\mathrm{co}\text{-}\mathrm{W}[t]$-hardness for every fixed $t$; the proof adapts the above reduction by adding gadgets for quantifier alternation.
\begin{theorem}[Main Theorem]\label{thm:main}
Let $c\geq 0$ and $\kappa\in\{\mathsf{cgs},\mathsf{hcgs}\}$.
Let $\mathcal{Q}\subseteq\sjfuabcq$
  be such that $\kappa(q)\leq c$ for every $q\in\mathcal{Q}$.
  Then $\cqa{\mathcal{Q}}$ is in polynomial time.
\end{theorem}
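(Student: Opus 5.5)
Since every query here has unary keys, $\cgs{q}=\hcgs{q}$, so I will work with $\mathsf{cgs}$ only. Fix a closure generator $V=\{v_1,\dots,v_c\}$ of $q$ with $c\le$ the bound. It can be found in time $|q|^{c}\cdot\mathrm{poly}$ by brute force over $c$-subsets, checking $\cl{V}{q}=\queryvars{q}$ each time. The algorithm then has three layers.

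First, it guesses values for the generator. There are at most $|\adom{\db}|^c$ valuations $\theta\colon V\to\adom{\db}$, which is polynomially many. The difficulty is that $\forall$-repair semantics does not distribute over an outer $\exists$. So I will not simply test each $q_\theta$. Instead I follow the attack-graph structure. Because the attack graph is acyclic, the Koutris--Wijsen recursion applies: some unattacked atom $F$ exists, and ``every repair satisfies $q$'' is equivalent to the following. There is a block of $F$ (a value $a$ for its key) such that every fact $f$ in that block extends, via its non-key values, to a substitution under which the residual query $q'$ (with $F$ removed and the key/non-key variables of $F$ fixed) is certain.

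Second, the key point is that each branch of this recursion fixes the variables of an atom. With unary keys, once the key variable of an atom is fixed, its non-key variables are determined fact by fact. So the only genuinely free choices are the values of variables outside the closure of what has already been fixed. I will show by induction on the recursion that the residual queries encountered are always of the form $q_\sigma$, where $\sigma$ fixes a set $W$ and the remaining free ``sources'' lie within $\cl{V\cup W}{q}$. This ensures that the family of relevant subproblems is indexed by polynomially many objects: a valuation of at most $c$ generator-related variables together with a closure-determined propagation.

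I would then replace the naive recursion by dynamic programming over variable-level reductions, generalizing \Cref{ex:first}. For each atom $R(\underline{x},\vec y)$ whose key $x$ lies in the closure, I precompute, in a saturation pass ordered along a topological order of the attack graph, the set of key values $a$ such that the whole $R$-block at $a$ is ``good''. A block is good when every fact in it propagates consistently to good blocks of all atoms determined from $\vec y$. Each block is processed once per generator valuation, so the total time is polynomial.

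The main obstacle is that the functional-dependency graph on variables may have cycles and multiple incoming edges: a variable may be determined along two paths that must agree, as in query~\eqref{eq:cycle}. In that case, block-wise goodness is not compositional, because the universal choice of a fact in one block constrains values reached elsewhere. My first attempt would be to collapse each SCC of the dependency graph, so that the SCC's variables are jointly determined by any member. I would also restrict the universal quantification to the variables that are attacked, which acyclicity of the attack graph orders. Where a variable has several determiners, I would treat it as guarded by whichever determiner is earliest in the attack order and prune facts inconsistent with that guard. Proving that this pruning preserves the set of certain answers is the delicate lemma. Its proof would use the fact that an unattacked variable's value is the same across all repairs that satisfy the query, the standard Koutris--Wijsen property. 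With that lemma, correctness follows by induction on the number of atoms, and the running time is $|\db|^{O(c)}\cdot\mathrm{poly}(|q|)$.
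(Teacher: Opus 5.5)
Your proposal has a genuine gap exactly where the theorem is hard: nothing in it shows that the subproblems are polynomially many and polynomially large. The reason you give does not establish this. That the residual queries are $q_\sigma$ with all remaining variables in $\cl{V\cup W}{q}$ is vacuous, since every variable already lies in $\cl{V}{q}$. Moreover, closure under the key dependencies of $q$ yields functional determination only inside a single repair, whereas the Koutris--Wijsen recursion branches universally over every fact of an inconsistent block. A residual query depends on the values of all fixed variables still occurring in it, and the set $W$ of fixed variables grows with the recursion depth.

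Concretely, take $\widehat q_n$ from \Cref{ex:guard}. There $\cgs{\widehat q_n}=1$ with generator $\{u\}$, but $u$ is attacked by $R_1$, and then by each $R_{i+1}$ once $x_i$ is fixed. So the recursion must branch over $x_1,x_2,\dots$ first. After $i$ universal steps the residual query still contains $S_{k,j}(\underline{u},d_k)$ for all $k\leq i$, so it depends on the whole tuple $(d_1,\dots,d_i)$, of which there can be $2^i$.

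The paper reaches a polynomial bound only after two further ingredients. First, database and query saturation make every relation generating a consistent edge functional and collision consistent, and remove blocks that cannot be guarded. Second, unguarded variables are eliminated at the leaves before flattening. The guard variables are the roots of $\igraph{q}$ lying in source SCCs, of which there are at most $\cgs{q}$ (\Cref{lem:isone}). \Cref{lem:grounding}, together with the invariant that projections onto unguarded variables never grow, then bounds every intermediate relation by $N\cdot|\adom{\db}|^{c}$.

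Your dynamic-programming layer does not close this gap. Suppose block goodness is computed per generator valuation and you accept when the root block is good for some valuation. Then you decide ``$\exists$ valuation $\forall$ repair''. This is wrong whenever the generator is attacked: in $\widehat q_n$, different repairs can require different values of $u$. It is precisely the non-distributivity you point out at the start. If instead goodness is not indexed by the valuation, it is not compositional, as you note, and the remedy is deferred to an unproven ``delicate lemma''.

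Your suggested ingredients do not supply that lemma. Collapsing SCCs erases the distinction between inconsistent edges, along which universal branching happens, and consistent edges, which are what make pruning sound (compare \eqref{eq:cycle} with \Cref{ex:prune}). The property you plan to invoke is also not the Koutris--Wijsen one. What they give (\Cref{lem:non-attacked}) is that for an unattacked atom some single key value works for all repairs, not that an unattacked variable takes the same value in all satisfying repairs.

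The missing idea is the paper's interleaving. Existentially substitute a non-attacked root, prune the $\emp$-rooted tree of $\igraph{q}$ down to guarded leaves, and only then universally substitute. The branch count stays controlled because saturated-maximal substitutions are determined by their values on the guard variables. There are therefore at most $|\adom{\db}|^{c}$ of them, and polynomially many branches overall.
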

\begin{restatable}{theorem}{paramresult}\label{thm:param}
$\cqa{\sjfuabcq}$ parameterized by $\cgs{q}$ is in $\mathsf{XP}$.
For every fixed $t\geq 1$, it is $\mathrm{co}\text{-}\mathrm{W}[t]$-hard under {\rm FPT} many-one reductions.
\end{restatable}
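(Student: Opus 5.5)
We must prove Theorem~\ref{thm:param}, which has two parts: membership in $\mathsf{XP}$ and $\mathrm{co}\text{-}\mathrm{W}[t]$-hardness for every fixed $t$.

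\textbf{Membership in $\mathsf{XP}$.} The plan is to make the degree of the polynomial in \Cref{thm:main} explicit in $c=\cgs{q}$. Rather than treating \Cref{thm:main} as a black box, we trace the polynomial-time algorithm and check that its running time has the form $(\size{\db}+\size{q})^{f(c)}$ for a computable $f$. The natural source of such a bound is the branching over substitutions of the closure generator variables. Each of the at most $c$ generator variables is replaced existentially or universally by a constant from $\adom{\db}$, which gives at most $|\adom{\db}|^{c}$ branches. Once the generator is substituted, the remaining variables are determined through the unary key dependencies, and saturation and normalization run in time polynomial in the input with a fixed exponent.

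A more self-contained alternative avoids this tracing. We fix a closure generator $V$ with $|V|=c$; one can be found in time $|\queryvars{q}|^{c}$ by brute force. Evaluating the consistent answer then amounts to evaluating a quantifier-alternating procedure whose branching happens only on $V$. Every other variable is then key-determined, so a guarded elimination in the style of \Cref{ex:first} handles it in linear time. The main obstacle on this route is exactly why the paper's machinery is needed. Universal choices over non-generator variables still arise through blocks, so the cost is not simply $n^c$ times linear. I would therefore rely on the structure of the main algorithm and argue that its recursion depth and branching width are bounded by a function of $c$.

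\textbf{Hardness.} We give an FPT reduction from the complement of a $\mathrm{W}[t]$-complete weighted satisfiability problem, namely weighted $t$-normalized satisfiability with parameter $k$. We adapt the reduction in the proof of \Cref{thm:hard}. The universally quantified Boolean variables are replaced by $k$ ``choice'' variables $x_1,\dots,x_k$. Each $x_i$ is attached through an atom $R_i(\underline{c},x_i)$ whose block ranges over the candidate Boolean variables, so every repair picks one variable per slot. These $k$ variables form the closure generator, so $\cgs{q}$ is $O(k)$ or $O(k\cdot t)$.

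The $t$ levels of alternating AND/OR in the weighted circuit are simulated by gadgets for quantifier alternation. An OR gate corresponds to an existential choice, realized by a free non-key variable. An AND gate corresponds to a universal choice, realized by an inconsistent block of a new unary-key atom that attacks only downward. This keeps the attack graph acyclic, since the attacks follow the depth of the circuit.

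We expect the main obstacle to be this gadget design. The gadgets must encode a depth-$t$ formula while keeping the key closure of $\{x_1,\dots,x_k\}$ equal to all variables, which needs unary keys that chain from the gate variables. They must also keep the query size polynomial and the attack graph acyclic. To close the reduction, we verify that ``every repair satisfies $q$'' holds exactly when no weight-$k$ assignment satisfies the formula. That establishes $\mathrm{co}\text{-}\mathrm{W}[t]$-hardness.
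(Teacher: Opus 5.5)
Your $\mathsf{XP}$ argument is essentially the paper's: the algorithm behind \Cref{thm:main} is uniform and runs in time $n^{O(c)}$, with the $|\adom{\db}|^{c}$ factor coming from the values of the at most $c$ guard variables. One imprecision is harmless: the recursion depth is bounded only by $\size{q}$, not by a function of $c$.

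The hardness part has a genuine gap. The gadget construction is the whole content of the reduction, and you leave it open as ``the main obstacle''. The mechanism you do sketch fails on two counts.
\begin{itemize}
\item The slot variables $x_i$ of $R_i(\underline{c},x_i)$ already lie in $\cl{\emptyset}{q}$, so they are not what $\cgs{q}$ counts. The budget is consumed by whatever realizes the existential choices.
\item An OR level cannot be a non-key variable whose key is already bound. In every repair its value is fixed by the corresponding block, so it is a universal (or forced) choice, not an existential one. Existential branching happens on key variables that lie outside the closure of the already-bound variables and are unattacked at that stage (\Cref{lem:non-attacked}).
\end{itemize}

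For the alternation to be faithful, two things must hold. The existential variable of a deeper level must be restricted to continuations of the earlier choices, and later inconsistent atoms must not attack earlier ones. Both are achieved by letting the new existential variable \emph{determine} the earlier choices through consistent, twinned atoms. This is exactly the paper's chain:
\begin{itemize}
\item $V_i(\underline{u'_i},v'_i)$ realizes the universal step;
\item the twins $A_i,B_i(\underline{v'_i},u'_i)$ and $C_i,D_i(\underline{u'_{i+1}},v'_i)$ tie each block to the previous one;
\item the constants encode the full history $(a_1,b_1,\dots,a_{i+1})$.
\end{itemize}
As a result, all quantifier variables lie below the single source SCC $\{u'_p,v'_p\}$ and the attack graph stays acyclic. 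The only other source SCCs are the fresh keys $z_h$ of the twins $T_h,U_h$, which encode the $M=O(k)$ Tseitin constraints of $\neg\theta$; hence $\cgs{q}=M+1$. The paper gets the alternation pattern directly by reducing from $\Sigma_{t,1}$ model checking (Flum--Grohe), where the query size is bounded by the parameter.

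On your weighted-satisfiability route, the unique parent of a gate in a formula tree could replace the history encoding. You would still need to supply two further pieces:
\begin{itemize}
\item how a leaf literal is evaluated against the $k$ slots, since a positive literal is the disjunction $\bigvee_j x_j=v$ and needs its own encoding within the $\cgs{q}$ budget;
\item how repairs that pick the same variable in two slots are neutralized.
\end{itemize}
Until these pieces are given, $\mathrm{co}\text{-}\mathrm{W}[t]$-hardness is not established.
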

In \Cref{thm:param}, hardness for every level of the $\mathrm{co}\text{-}\mathrm{W}$
hierarchy suggests that the query-dependent quantifier alternation in the first-order rewritings
  is a feature of the problem itself, not a by-product of one particular rewriting method.
In \Cref{thm:main}, the restriction to unary primary keys  is necessary (unless $\P =\NP$); already with binary keys the same problem is \coNP-complete, even when $\cgs{q}=0$ for each query $q$.
For the lower bound, 
 the proof first constructs a circuit $C$ to compute an instance $\varphi(x_1, \dots ,x_n)$ of \textsc{3Sat}. Then, it builds the query $q$ from atoms $R_i(\underline{c},x_i)$ to encode universal quantification and binary-key atoms $S_g(\underline{a,b},g),S'_g(\underline{a,b},g)$ to encode binary gates $g$ with inputs $a,b$;  NOT gates are handled analogously with unary keys.  Finally, the atom $O(\underline{o})$ and the sole fact
  $O(\underline{0})$ force the output gate $o$ to take value $0$.
The upper bound holds even for arbitrary $\bcq$ queries with bounded closure treewidth; it adapts the dependency-based dynamic programming approach of \cite{Adler08}.
\begin{restatable}{theorem}{binaryhardness}\label{thm:binary-hardness}
Let $c\geq 0$ and $\kappa\in\{\mathsf{cgs},\mathsf{hcgs},\mathsf{ctw},\mathsf{hctw}\}$.
Let $\mathcal{Q}\subseteq\bcq$ satisfy $\kappa(q)\leq c$
for every $q\in\mathcal{Q}$. Then $\cqa{\mathcal{Q}}$ is in $\coNP$.
Moreover, the problem is $\coNP$-complete whenever
$\{q\in\sjfkabcq{2}\mid\cgs{q}=0\}\subseteq\mathcal{Q}$.
\end{restatable}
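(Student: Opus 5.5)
We prove the two parts of \Cref{thm:binary-hardness} separately: a $\coNP$ upper bound for bounded width, and $\coNP$-hardness already for binary keys with $\cgs{q}=0$.

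\textbf{Membership.} Since $\mathsf{cgs}(q)\ge\mathsf{ctw}(q)$ and $\mathsf{hcgs}(q)\ge\mathsf{hctw}(q)$ (take the one-bag decomposition), it suffices to treat $\kappa\in\{\mathsf{ctw},\mathsf{hctw}\}$. Moreover, a subquery $q'$ witnessing hyperclosure width $k$ has at most $k\cdot a$ variables, where $a$ is the maximum arity. Instead of this crude bound, we handle both measures uniformly in the repair setting.

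The complementary problem is: does there exist a repair $\rep$ with $\rep\not\models q$? A repair has size polynomial in $\db$, so it can be guessed. It then remains to show that evaluating $q$ on a \emph{consistent} database $\rep$ is in polynomial time when $\kappa(q)\le c$. Adapting Adler's argument, fix a tree decomposition (we need not find an optimal one; see below). For each bag $B$ with generator $V$ (or $\queryvars{q'}$), $|V|\le c$ or $|q'|\le c$. In a consistent database, every primary key functionally determines its atom, so a valuation of $V$ (or of the facts matched by $q'$) extends in at most one way along the closure $\cl{V}{q}$. This holds for composite keys as well, which is why our definition may use them.

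Hence each bag has at most $|\adom{\rep}|^{c}$ (resp. $|\rep|^{c}$) candidate partial valuations, computable in polynomial time by chasing keys. We then run standard Yannakakis-style semijoin dynamic programming over the tree.

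To avoid computing an optimal decomposition, which may be hard, we rely on the fact that the certificate can also include the decomposition. It has polynomial size, and its width can be verified by brute force over the $O(|\queryvars{q}|^c)$ candidate generators. This puts non-satisfaction by some repair in $\NP$, so the original problem is in $\coNP$.

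\textbf{Hardness.} We reduce from the complement of \textsc{3Sat}, following the sketch given in the text. Given $\varphi(x_1,\dots,x_n)$, build a circuit $C$ using AND, OR and NOT gates. The query consists of the following atoms:
\begin{itemize}
\item $R_i(\underline{c},x_i)$, where $c$ is a constant;
\item $S_g(\underline{a,b},g)$ and $S'_g(\underline{a,b},g)$ for each binary gate $g$ with inputs $a,b$;
\item $N_g(\underline{a},g)$ and $N'_g(\underline{a},g)$ for each NOT gate;
\item $O(\underline{o})$ for the output gate $o$.
\end{itemize}
The database contains:
\begin{itemize}
\item both facts $R_i(\underline{c},0)$ and $R_i(\underline{c},1)$;
\item exactly the correct truth-table facts for each gate, duplicated across $S_g$ and $S'_g$;
\item the single fact $O(\underline{0})$.
\end{itemize}

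Three properties of $q$ need checking.
\begin{itemize}
\item $\cgs{q}=0$: the $R_i$ keys are constants, so the closure of $\emptyset$ contains every $x_i$, and by induction along the circuit every gate variable.
\item Self-join-freeness holds by construction.
\item The attack graph is acyclic. The duplicated twin atom keeps each gate variable inside $\keyclosure{F}{q}$ for the gate atoms, so gate atoms attack nothing, and only the $R_i$ attack. I expect this verification to be the main technical point.
\end{itemize}

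Correctness works as follows. Gate blocks are consistent, so every repair is determined by a choice of assignment to the $x_i$. The query holds on that repair iff the circuit outputs $0$ under the chosen assignment. Therefore every repair satisfies $q$ iff $\varphi$ is unsatisfiable, giving $\coNP$-hardness.
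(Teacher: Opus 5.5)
Your proposal is correct and takes essentially the paper's route. For membership, you guess a repair together with a tree decomposition and its width witnesses, then evaluate $q$ on the consistent repair by Adler-style dynamic programming, which you reconstruct via key-chasing plus semijoins rather than citing Adler's algorithm. For hardness, you use the same circuit encoding with $R_i(\underline{c},x_i)$, twin gate atoms and $O(\underline{o})$.

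Two points are left implicit, and the paper states both explicitly:
\begin{itemize}
\item A decomposition with at most $|\queryvars{q}|$ bags and no larger width always exists, so the certificate is indeed polynomial.
\item The $R_i$ do not attack one another, since $x_j\in\keyclosure{R_i}{q}$; this is needed to conclude acyclicity from ``only the $R_i$ attack''.
\end{itemize}
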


\begin{table}[t]
\caption{Overview of our combined-complexity results; $c\geq 0$ is
a fixed constant and $\kappa$ a fixed parameter function. We write $\sjfkabcq{2}^*\defeq \{q\in\sjfkabcq{2} \mid\cgs{q}=0\}$.}
\label{tab:results}
\centering
\small
\renewcommand{\arraystretch}{1.15}
\begin{tabularx}{\textwidth}{@{}p{0.34\textwidth}p{0.27\textwidth}X@{}}
\hline
\textbf{Query class $\mathcal{Q}$} & \textbf{Restriction on $\mathcal{Q}$} & \textbf{Complexity}\\
\hline
$\sjfkabcq{1}\subseteq \mathcal{Q} \subseteq \bcq$ & none
  & $\Pi_2^{\P}$-complete (\Cref{thm:hard})\\
$\sjfkabcq{2}^* \subseteq \mathcal{Q}\subseteq\bcq$ 
 & $\kappa(q)\leq c$\newline
  ($\kappa\in\{\mathsf{cgs},\mathsf{hcgs},\allowbreak\mathsf{ctw},\mathsf{hctw}\}$)
  & $\coNP$-complete (\Cref{thm:binary-hardness})\\
$\mathcal{Q} \subseteq \sjfkabcq{1}$ & $ \kappa(q)\leq c$ ($\kappa\in \{\mathsf{cgs},\mathsf{hcgs}\}$)
  & polynomial time (\Cref{thm:main})\\
$\mathcal{Q} =\sjfkabcq{1}$ & parameter $\cgs{q}$
  & in $\mathsf{XP}$ and $\mathrm{co}\text{-}\mathrm{W}[t]$-hard for every
    $t$ (\Cref{thm:param})\\
\hline
\end{tabularx}
\end{table}

\Cref{thm:hard,thm:param} are proven in \Cref{sect:apphard}, and
\Cref{thm:binary-hardness} is proven in \Cref{sect:apphard2}.
The remainder of the paper develops the argument underlying \Cref{thm:main}; the main ideas are presented in the body, while the technical proofs are deferred to the appendix. 
Informally, the algorithm behind \Cref{thm:main} implements quantifier alternation without materializing the underlying first-order rewriting. It interleaves existential substitution, guarded pruning, and universal substitution. Members of the closure generator act as guards, bounding the size of the intermediate instances as well as the number of computation branches.

We now proceed to present the ideas behind the proof of \Cref{thm:main}. 


  \section{Preprocessing}\label{sect:preprocessing}
We begin with the preprocessing step, which is divided into database saturation, query saturation, and query normalization, in this order.
We let $q$ be a query from $\sjfubcq$ (unless otherwise specified). Henceforward, we make use of the following graph representation of the query $q$.

  \paragraph*{Key-nonkey graph.}
 Fix a special symbol $\emp\notin\Var$. For an atom $F=R(\underline{t_1},t_2,\ldots,t_n)$, let
\[
\keyvars{F}\defeq
\begin{cases}
t_1 & \text{if $t_1\in\Var$,}\\
\emp & \text{if $t_1\in\Const$,}
\end{cases}
\qquad
\notkeyvars{F}\defeq\{t_i\mid i\in[2,n],\ t_i\in\Var\},
\]
and let $\atomvars{F}\defeq(\keyvars{F}\cup\notkeyvars{F})\setminus\{\emp\}$.
The \emph{key-nonkey graph of $q$}, denoted $\FD{q}$, has vertex set $\queryvars{q}\cup\{\emp\}$ and edge set
\begin{equation}\label{eq:FDdef}
\FD{q}\defeq\{\fd{\keyvars{F}}{y}\mid F\in q,\ y\in\notkeyvars{F}\}.
\end{equation}
We say that an edge $\fd{x}{y}\in\FD{F}$ is \emph{generated by} $F$. Since the head of every edge is a query variable, $\emp$ has indegree zero. Note that every minimum closure generator contains
  exactly one variable from each source SCC of $\FD{q}$ other than
  $\{\emp\}$.
In particular, we have that $\cgs{q}=\nscc{\FD{q}}-1$.

A crucial construct in the proof is a partitioning of the edges of $\FD{q}$ into the set of \emph{consistent edges} $\SFD{q}$ and the set of \emph{inconsistent edges}  $\WFD{q}$.  Additionally, we enrich $\SFD{q}$ with  dummy edges.
Formally, let
\begin{equation}\label{eq:transclosure}
\FDclosure{q}\defeq\{\fd{x}{y}\mid x\in\queryvars{q}\cup\{\emp\},\ y\in\cl{(\{x\}\setminus\{\emp\})}{q}\},
\end{equation}
where, by definition, 
 $\cl{V}{q} =\reach{V\cup\{\emp\}}{\FD{q}}\setminus\{\emp\}$, for $V\subseteq\queryvars{q}$.
Then, 
\begin{itemize}
\item  $\SFD{q}$ consists of the edges $\fd{x}{y}$ where $x=y$ or there exists
$F\in q$ and $G\in q\setminus\{F\}$ such that
$\fd{x}{y}\in \FD{F}$ and $\fd{x}{y}\in\FDclosure{G}$;
\item $\WFD{q}$ consists of the edges $\fd{x}{y}$ that belong to $\FD{q}\setminus \SFD{q}$.
\end{itemize}
In particular, every edge from $\FD{q}$ belongs either to $\WFD{q}$ or $\SFD{q}$, but not both.

\subsection{Database Saturation}\label{sect:dbsat}
Consider the following local consistency criteria for a database $\db$.
\begin{itemize}
\item \textbf{Non-emptiness.}
A database $\db$ is \emph{non-empty (over $q$)} if $\restrict{\db}{R}\neq \emptyset$ for each $R\in q$.

\item \textbf{Functional consistency.}
A database $\db$ is \emph{functionally consistent (over $q$)}
if for all $R(\underline{x},y,\uk)\in q$ such that $\fd{x}{y}\in \SFD{q}$
 and all $a,b,c\in \adom{\db}$, if $R(\underline{a},b,\uk), R(\underline{a},c,\uk) \in \db$, then  $ b=c$.

\item \textbf{Pairwise consistency.} 
A database $\db$ is \emph{pairwise consistent (over $q$)} if for any (not necessarily distinct) atoms $R,S\in q$ and any $R$-fact $A\in \db$, there is a valuation $\theta$ over $\atomvars{R} \cup\atomvars{S}$ such that $\theta(R)=A\in \db$
and  $\theta(S)\in \db$.

\item \textbf{Collision consistency.}
A database $\db$ is \emph{collision consistent (over $q$)} if 
for any $R(\underline{x_1},x_2, \dots ,x_\ell)$, $\fd{x_i}{z},\fd{x_j}{z}\in\SFD{q}$, and $T(\underline{x_j},z,\uk)$, where $i,j\in [\ell]$,
\begin{itemize}
\item if $x_i=z$ and $R(\underline{a_1},a_2,\dots ,a_\ell)\in \db$, then $T(\underline{a_j},a_i,\uk)\in  \db$;
\item if $S(\underline{x_i},z,\uk)\in q$ and $R(\underline{a_1},a_2,\dots ,a_\ell),S(\underline{a_i},b,\uk)\in \db$, then $T(\underline{a_j},b,\uk)\in  \db$. 
\end{itemize}

\item 
\textbf{Guardedness.}
A database $\db$ is \emph{guarded (over $q$)} if for any $R(\underline{x_1},x_2, \dots ,x_\ell)\in q$ and $S_1(\underline{z},x_{i_1},\uk),\dots ,S_n(\underline{z},x_{i_n},\uk)\in q$ such that $\fd{z}{x_{i_1}},\dots ,\fd{z}{x_{i_n}}\in\SFD{q}$, where $i_j\in [\ell]$ for $j\in [n]$, if $R(\underline{a_1},a_2, \dots ,a_\ell)\in \db$, then there is $b\in \adom{\db}$ such that $S_1(\underline{b},a_{i_1},\uk),\dots ,$ $S_n(\underline{b},a_{i_n},\uk)\in \db$.
\end{itemize}
Note that pairwise consistency implies that 
 for every atom $R \in q$ and every $R$-fact $A\in \db$, there exists
a valuation $\theta$  over $\atomvars{R}$  such that $\theta(R)=A$.

A database $\db$ that is non-empty, functionally consistent, pairwise consistent, collision consistent, and guarded over $q$ is called \emph{$q$-saturated} (or simply \emph{saturated}  when $q$ is understood).

These consistency criteria can be enforced by a general algorithm, called $\dbsaturate$ (see \Cref{alg:dbsaturate} in the appendix), that turns a database into a saturated one via repetitive removal of blocks.
To understand how this can help in finding the consistent answer, consider the following example, which expands
\eqref{eq:gap} with the kind of cyclicity exhibited in \eqref{eq:cycle}.
\smallskip
\begin{example}\label{ex:prune}
Consider queries
\[
q'_k \defeq q_k \,\cup \{S_i(\underline{x_i},x_1), S'_i(\underline{x_i},x_1)  \mid i\in [2,k+1]\},
\]
where $q_k$ is the query defined in \eqref{eq:gap}. It can be verified that the attack graph of $q'_k$ is acyclic. Suppose $\db$ is a $q'_k$-saturated database. Then, using functional consistency and collision consistency, we obtain that every repair of $\db$ satisfies $q'_k$ if and only if every repair of $\db$ satisfies $q'_k\setminus \{S_{k+1}(\underline{x_{k+1}},x_1), S'_{k+1}(\underline{x_{k+1}},x_1)\}$. By induction, it follows that every repair of $\db$ satisfies $q'_k$ if and only if every repair of $\db$ satisfies $q_k$. Hence, for saturated databases, the queries $q'_k$ are equally hard to the queries $q_k$, which were dealt in \Cref{ex:first}.
\end{example}

 We say that a many-one reduction $f\colon \DB\times \sjfubcq \to \DB \cup\{\true,\false\}$ is \emph{equivalence-preserving} if for every $(\db,q)$, $(\db,q)\in \cert$ whenever $f(\db,q)=\true$, $(\db,q)\notin \cert$ whenever $f(\db,q)=\false$, and otherwise $(\db,q)\in \cert$ if and only if $(f(\db,q),q)\in \cert$. 
The proof of the following lemma is deferred to \Cref{sect:app:dbsat}.

\smallskip
\begin{restatable}[Database Saturation Lemma]{lemma}{dbsat}\label{lem:dbsat}
\(
\dbsaturate \colon \DB\times \sjfubcq \to \DB \cup\{\true,\false\}
\)
is an equivalence-preserving polynomial-time reduction
such that for every pair $(\db,q)$, assuming $\db'= \dbsaturate(\db,q)\notin\{\true,\false\}$, we have that
\begin{enumerate}
\item $\db'$ is $q$-saturated,
\item $\db'\subseteq \db$.
\end{enumerate}
\end{restatable}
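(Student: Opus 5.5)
Each of the five criteria is enforced by \emph{block removal} in a fixpoint loop. $\dbsaturate$ repeatedly looks for a fact $A\in\db$ that witnesses a violation of functional, pairwise, collision consistency, or guardedness. It then deletes the entire block of $A$. After each round it checks non-emptiness: if some $R\in q$ has $\restrict{\db}{R}=\emptyset$, it returns $\false$. Facts over relation names outside $\sch(q)$ are discarded at the start, since they are irrelevant to certainty. When no violation remains, the current database is returned; optionally it returns $\true$ when the database is already consistent and satisfies $q$, but this is not needed. Property 2 ($\db'\subseteq\db$) is immediate, and property 1 holds by construction at the fixpoint.

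Each round removes at least one block, so there are at most $|\db|$ rounds. Each check is polynomial in $|\db|+|q|$:
\begin{itemize}
\item pairwise consistency ranges over pairs of atoms and pairs of facts;
\item collision consistency ranges over triples of atoms and pairs of facts;
\item functional consistency requires $\SFD{q}$, which is polynomial because $\FDclosure{G}$ is obtained by reachability in $\FD{q\setminus\ldots}$ for each atom $G$.
\end{itemize}
Guardedness is the delicate case, since the sets $\{S_1,\dots,S_n\}$ could be exponentially many. Here I would use that the condition is monotone in the family: for each $R$-fact and each variable $z$, it suffices to check the maximal family of all atoms $S(\underline{z},x_i,\uk)$ with $\fd{z}{x_i}\in\SFD{q}$ and $x_i$ in $R$. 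This requires a common $b$, and the check costs $O(|\db|)$ candidates $b$. Every subfamily is then also satisfied.

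The core is equivalence preservation: deleting a block $B$ of a violating fact does not change the answer. The key sublemma I would prove is: \emph{if every repair of $\db$ satisfies $q$, then every repair of $\db$ satisfies $q$ via a valuation $\theta$ with $\theta(q)$ avoiding $B$}. Equivalently, for any repair $\rep$, replacing the $B$-fact in $\rep$ by any other choice keeps $q$ satisfied by a valuation avoiding $B$. Given this, certainty for $\db$ implies certainty for $\db\setminus B$, because each repair of $\db\setminus B$ extends to a repair of $\db$ with some fact of $B$, and a witness avoiding $B$ survives. Conversely, each repair of $\db$ restricted to $\db\setminus B$ is a repair of $\db\setminus B$, and the witness there lies in the original repair. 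So the conclusion reduces to the sublemma. For the emptiness return $\false$: if $R^{\db}=\emptyset$ after equivalence-preserving steps, then no repair satisfies $q$.

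The main obstacle is proving the sublemma criterion by criterion.
\begin{itemize}
\item For pairwise consistency, a fact $A$ that cannot join with any $S$-fact can never appear in a satisfying valuation, so $B$ is never used. Strictly, only $A$ is useless, and I would argue that a repair choosing $A$ from $B$ falsifies $q$ unless $q$ is satisfied elsewhere. That case gives the desired witness directly. If instead every repair choosing $A$ falsifies $q$, the answer is $\false$ both before and after deletion.
\item Functional consistency is the genuinely attack-graph-dependent case. If $\fd{x}{y}\in\SFD{q}$ is generated by $F$ and implied via a different atom $G$, then $y$ is determined from $x$ without $F$. Hence a block of $F$ with two distinct $y$-values has, in any repair, a choice contradicting the value forced through $G$. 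Any repair picking that fact must use a witness not touching $B$ with the same key, so $B$ is effectively unusable.
\item Collision consistency and guardedness follow the same pattern: a fact whose forced companions via consistent edges are missing can be chosen in a repair where it cannot extend to a match.
\end{itemize}
I expect the delicate part to be making this ``adversarial repair choice'' argument uniform, by showing that the repair can always pick the bad fact in $B$ without losing any witness elsewhere. Self-join-freeness ensures that choices in different relations are independent.
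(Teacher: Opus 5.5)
Your overall plan matches the paper's. You delete whole blocks until a fixpoint is reached, and you justify each deletion by showing that a falsifying repair of $\db\setminus B$ extends to a falsifying repair of $\db$; your sublemma is the contrapositive of this. The adversarial choice for functional consistency is also the paper's, and your monotonicity remark for checking guardedness in polynomial time is a useful detail the paper leaves implicit.

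The gap is in the collision-consistency case. You dispatch it with ``a fact whose forced companions via consistent edges are missing can be chosen in a repair where it cannot extend to a match.'' In the second clause of collision consistency, the value $b$ of $z$ is forced only if the $S$-block with key $a_i$ is consistent. Your loop enforces the criteria in no fixed order, so when a collision violation is handled, that block may still contain some $S(\underline{a_i},d,\uk)$ with $d\neq b$ and $T(\underline{a_j},d,\uk)\in\db$. The violating $R$-fact is then genuinely usable. For instance, take $q=\{R(\underline{w},u,v),S(\underline{u},z),S'(\underline{u},z),T(\underline{v},z),T'(\underline{v},z)\}$, so that $\fd{u}{z},\fd{v}{z}\in\SFD{q}$. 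Take $\db=\{R(\underline{r},1,2),S(\underline{1},b),S(\underline{1},d),S'(\underline{1},d),T(\underline{2},d),T'(\underline{2},d)\}$. The fact $R(\underline{r},1,2)$ together with $S(\underline{1},b)$ violates collision consistency, since there is no $T(\underline{2},b)$. Yet every repair choosing $S(\underline{1},d)$ satisfies $q$ through $R(\underline{r},1,2)$. No choice inside the deleted block $B$ neutralizes it; the falsifying repair must be found by changing the $S$-block, which your argument never does. The deletion happens to be sound here, but your justification does not cover it.

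There are two ways to close this.
\begin{enumerate}
\item Order the loop as the paper's algorithm does: exhaust functional-consistency deletions before checking collision consistency in each round. Deleting blocks never creates a functional-consistency violation, so at collision time $\theta(z)=b$ is forced whenever $\theta(R)$ is the violating fact. That fact is then unusable in every subset of $\db$, and the plain ``add the violating fact'' argument works.
\item Keep the unordered loop but strengthen the sublemma. Suppose $(\rep\setminus B)\cup\{A\}$ is satisfied only through the violator $A$, with $z\mapsto d$. Also switch the repair's $S$-fact with key $a_i$ to $S(\underline{a_i},b,\uk)$. Then use a second atom $S^\circ\neq S$ witnessing $\fd{x_i}{z}\in\SFD{q}$; its chosen fact is unchanged and still sends $z$ to $d$. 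Hence every witness of the modified repair has $\theta(x_i)\neq a_i$, so it avoids both $B$ and the switched fact and lies in $\rep\setminus B$.
\end{enumerate}
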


Before continuing to the next step, we make one technical assumption explicit.
  
  \paragraph*{Removing variable repetitions.}
We say that  $q$ is \emph{variable-repetition-free} if it contains only atoms 
$R(\underline{t_1},t_{2}, \dots ,t_n)$ such that the sequence $t_1, \dots ,t_n$ does not repeat any variables.
 Unless explicitly stated otherwise, every query considered from this point onward is assumed to be variable-repetition-free. This does not restrict the input class: in the proof of the main theorem, after applying database saturation, we apply a preprocessing step (\Cref{lem:repfreereal}) to remove variable repetitions.  Once $q$ is variable-repetition-free, $\FD{q}$ lacks self-loops.



\subsection{Query Saturation}\label{sect:sat}
Next, we move to query saturation. 
Informally, a query is saturated if it maximizes the number of consistent edges between two variables.

\begin{definition}[Saturatedness]\label{def:saturation}
    Let $q\in \sjfubcq$. We say that $q$ is saturated
 if for every atom \( F \in q \),  if
\begin{enumerate}
    \item\label{it:sat1} $\fd{x}{y}\in \FD{F}$,
    \item\label{it:sat2}  $\fd{y}{z}\in \SFD{q}$, and
    \item\label{it:sat3} $\fd{x}{z}\in \FDclosure{q\setminus \{F\}}$, 
\end{enumerate}
then $\fd{x}{z}\in \SFD{q}$.
\end{definition}
  
  Let us briefly explain the intuition behind query saturation. 
    At this point the database $\db$ is $q$-saturated.
  If $\fd{y}{z}\in\SFD{q}$ and $y\neq z$, then $q$ contains an atom
  $S(\underline{y},z,\uk)$, and functional consistency of $\db$ ensures
  that the value of $z$ in $S^\db$ is uniquely determined by the value of
  $y$. The case $y=z$ is analogous.
  Assume that $F=R(\underline{x},y,\uk)$ and that
  $\blockof{R}{a}{\db}$ contains facts
  $R(\underline{a},b,\uk)$ and $R(\underline{a},b',\uk)$ such that
  $S(\underline{b},c,\uk),S(\underline{b'},c',\uk)\in\db$ for distinct
  values $c$ and $c'$. Then, $\blockof{R}{a}{\db}$ can be safely removed from $\db$. Indeed, let
  $\rep$ be a repair of $\db\setminus\blockof{R}{a}{\db}$ falsifying
  $q$. If the two extensions of $\rep$ by $R(\underline{a},b,\uk)$ and $R(\underline{a},b',\uk)$ both satisfied
  $q$, the corresponding valuations would agree on $z$
  by $\fd{x}{z}\in\FDclosure{q\setminus\{F\}}$, while mapping $z$ to the
  distinct values $c$ and $c'$, a contradiction. Hence, $\rep$ can be extended 
  by one of these facts without making $q$ true.

  After exhaustively applying such removals, the query can be extended with fresh atoms,
  and the database with corresponding facts; this way, the missing consistent edges
  $\fd{x}{z}$ are added. The following lemma turns this informal description into a
  precise saturation procedure. An important feature of this procedure is that the size of the problem instance grows only polynomially, as the number of consistent relations added is at most quadratic.

Consider a many-one reduction $f\colon \DB\times \sjfuabcq \to \DB\times \sjfuabcq \cup\{\true,\false\}$. We say that $f$ is
\begin{itemize}
\item \emph{generator-monotone} if for every $(\db,q)$ such that $f(\db,q)\notin \{\true,\false\}$, letting $(\db',q') = f(\db,q)$, we have $\cgs{q'}\leq \cgs{q}$; and
\item \emph{equivalence-preserving} if for every $(\db,q)$, $(\db,q)\in \cert$ whenever $f(\db,q)=\true$, $(\db,q)\notin \cert$ whenever $f(\db,q)=\false$, and otherwise $(\db,q)\in \cert$ if and only if $f(\db,q)\in \cert$.
\end{itemize}
Let us also define an auxiliary technical condition: a query $q\in \sjfuabcq$ is \emph{binary-saturated} if $\fd{x}{y}\in \SFD{q}$ and $x\neq y$ imply
that $q$ contains two atoms $F,G$ such that $\arity{F}=\arity{G}=(1,2)$ and $\fd{x}{y}\in \FD{F}\cap\FD{G}$.
The proof of the following lemma is given in \Cref{sect:appsat}.

\smallskip

\begin{restatable}[Query Saturation Lemma]{lemma}{csat}\label{lem:c-sat}
Let $\mathcal{C}$ be the set of pairs $(\db,q)$, where $q$ is a query from $\sjfubcq$ and $\db$ is a $q$-saturated database.
Then $\Saturate$ is an equivalence-preserving, generator-monotone, polynomial-time reduction from $\mathcal{C}$ to $\mathcal{C}\cup\{\true,\false\}$.
Additionally, for each $(\db,q)$, assuming $(\db',q') = \Saturate(\db,q)\notin \{\true,\false\}$, we have that $q'$ is saturated and binary-saturated,
and $q'$ has acyclic attack-graph whenever $q$ has.
\end{restatable}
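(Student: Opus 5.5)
I will define $\Saturate$ as an iterative procedure and argue about each round separately. A round looks for a \emph{violation} of \Cref{def:saturation}, that is, an atom $F=R(\underline{x},y,\uk)$ and a variable $z$ with $\fd{y}{z}\in\SFD{q}$ and $\fd{x}{z}\in\FDclosure{q\setminus\{F\}}$, but $\fd{x}{z}\notin\SFD{q}$. It then does two things.

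First, it prunes the database. It removes every block $\blockof{R}{a}{\db}$ that contains facts $R(\underline{a},b,\uk)$ and $R(\underline{a},b',\uk)$ whose $y$-values determine distinct $z$-values. When $y\neq z$, the $z$-value of $b$ is the unique $c$ with $S(\underline{b},c,\uk)\in\db$; it is unique by functional consistency, and it exists by pairwise consistency. When $y=z$, the $z$-value of $b$ is $b$ itself. Soundness of this removal is the argument sketched before the lemma. Take a repair $\rep$ of the pruned database that falsifies $q$. Suppose both extensions of $\rep$ by the two facts satisfied $q$. The two witnessing valuations would use the same $R$-free part of $\rep$ in the positions fixed by $x$, so $\fd{x}{z}\in\FDclosure{q\setminus\{F\}}$ forces them to agree on $z$. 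That contradicts the distinct $z$-values, so one extension still falsifies $q$. Since repairs of $\db$ containing some fact of the removed block are exactly such extensions, the certainty status is preserved. After pruning, I rerun $\dbsaturate$ (\Cref{lem:dbsat}) to restore $q$-saturation, which may return $\true$ or $\false$.

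Second, it extends the query and the database. It adds two fresh binary atoms $T(\underline{x},z)$ and $T'(\underline{x},z)$ to $q$. For every surviving $R$-block with key $a$, it adds $T(\underline{a},c)$ and $T'(\underline{a},c)$, where $c$ is the now unique $z$-value. With two copies, $\fd{x}{z}$ becomes consistent, because it is generated by $T$ and lies in $\FDclosure{T'}$. The same construction, applied from the start to every existing consistent edge that lacks two binary witnesses, gives binary-saturation.

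Each round adds a new consistent edge between two existing variables, so there are at most $|\queryvars{q}|^2$ rounds, each polynomial, and the sizes grow only polynomially. Several invariants then need checking.

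\textbf{Equivalence for the extension step.} The new facts are consistent and determined by $a$. Any repair satisfying $q$ yields valuations in which $z$ is consistent with $x$ by the closure argument, so satisfaction is unchanged.

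\textbf{Generator-monotonicity.} The only new edge is $\fd{x}{z}$. We already have $z\in\cl{x}{q}$, so the reachability relation in $\FD{q}$, and hence the source SCCs, are unchanged. By the identity $\cgs{q}=\nscc{\FD{q}}-1$, $\cgs{q}$ does not increase.

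\textbf{$q$-saturation of the new database.} Non-emptiness and functional consistency are immediate. Pairwise, collision and guardedness conditions involving $T,T'$ must be verified from the corresponding conditions for the $R$ and $S$ atoms, possibly after another $\dbsaturate$ call.

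\textbf{Preservation of attack-graph acyclicity.} I expect this to be the main obstacle. Adding atoms can enlarge the closures $\keyclosure{G}{q}$ and change Gaifman connectivity. I would first prove that $T$ and $T'$ attack nothing. Their closure $\keyclosure{T}{q'}$ contains $\cl{x}{q}\ni z$, so their nonkey variable $z$ is separated from everything. Next, I would show that for any old atom $G$, $\keyclosure{G}{q'}=\keyclosure{G}{q}$ up to this edge. The reason is that $z$ is already reachable from $x$ in $q\setminus\{F\}$, and when $G=F$ the edge comes via the path through $y$; the delicate case is $G$ lying on the $x$-to-$z$ path. Finally, the new Gaifman edge $\{x,z\}$ must not create new unseparated paths. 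Here I would argue that any path using $\{x,z\}$ can be rerouted through the existing $x\to\dots\to z$ path, or is already separated whenever $x$ or $z$ lies in the relevant closure. If that fails, I would instead try to show directly that any new attack $G\attacks{} H$ implies an old attack path $G\attacksp{} H$.
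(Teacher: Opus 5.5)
Your round is essentially the paper's $\saturatestep$: prune the blocks of $F$'s relation whose facts determine distinct $z$-values, add two fresh binary atoms on $(x,z)$ carrying the now-unique values, re-run $\dbsaturate$, and iterate over at most quadratically many pairs. Your equivalence, generator-monotonicity and running-time arguments also match the paper's.

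The genuine gap is the acyclicity claim. It is part of the statement, and you leave it open. Moreover, the reroute you propose for the new Gaifman edge $\{x,z\}$ does not work. You reroute through a witness of $\fd{x}{z}\in\FDclosure{q\setminus\{F\}}$, but such a witness may start at $\emp$ rather than at $x$. Even when it starts at $x$, its inner vertices can lie in $\keyclosure{G}{q}$ while $x$ and $z$ do not. For example, take $q=\{P(\underline{x},y),S_1(\underline{y},z),S_2(\underline{y},z),Q(\underline{x},w),W(\underline{w},z),K(\underline{c},w)\}$. It has an acyclic attack graph, and $F=P(\underline{x},y)$ witnesses a violation of saturation for the pair $(x,z)$. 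For $G=W(\underline{w},z)$ we have $\keyclosure{G}{q}=\keyclosure{G}{q'}=\{w\}$, so the one-edge path $z,x$ of $\gaifman{q'}$ yields $G\attacks{q'}Q(\underline{x},w)$. Yet every witness for $z\in\cl{\{x\}}{q\setminus\{F\}}$ (via $Q$ or via $K$) passes through $w$. So your rerouted path is separated, although neither $x$ nor $z$ lies in the closure. Your fallback (``show directly that new attacks imply old attack paths'') is a hope, not an argument.

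The missing idea is to reroute through the saturation witness $y$ instead.
\begin{itemize}
\item Replace $\{x,z\}$ by $\{x,y\},\{y,z\}$ (only $\{x,y\}$ if $y=z$). These are edges of $\gaifman{q}$, because $x,y$ occur together in $F$ and $y,z$ occur together in the atom generating $\fd{y}{z}$.
\item Since $\fd{y}{z}\in\SFD{q}$ is generated by one atom and implied by a different one, $z\in\cl{\{y\}}{q\setminus\{G\}}$ for every single atom $G$. Hence $y\in\keyclosure{G}{q}$ would force $z\in\keyclosure{G}{q}$.
\item If $U$ witnesses $G\attacks{q'}H$ for old atoms $G,H$, it avoids $\keyclosure{G}{q'}\supseteq\keyclosure{G}{q}$. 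Only this inclusion is needed, not the equality you proposed.
\item As $x$ and $z$ lie on $U$, they lie outside $\keyclosure{G}{q}$, hence so does $y$. The rerouted walk therefore witnesses $G\attacks{q}H$.
\end{itemize}
Your justification of the equality also has the cases swapped. $G=F$ is immediate from the third saturation condition, while $G\neq F$ is the case that needs $F$'s edge $\fd{x}{y}$ followed by $\fd{y}{z}$. Together with your correct observation that the fresh atoms attack nothing, any cycle in the attack graph of $q'$ would already be a cycle of $q$.

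Finally, treat separately the case where $F$ has a constant key $a$. The fresh atoms are then $T(\underline{a},z),T'(\underline{a},z)$. They add no Gaifman edge, but they make every edge $\fd{u}{z}\in\FD{q}$ consistent. So functional consistency of the extended database is not immediate, and the closing $\dbsaturate$ call is genuinely required.
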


The following example demonstrates the usefulness of query saturation.
\begin{example}\label{ex:isnormal}
Consider an arbitrary database $\db$ and a query
\begin{align*}
q= \{&R_0(\underline{x},y),R_1(\underline{y},z),R_2(\underline{z},v),R'_0(\underline{x},y'),R'_1(\underline{y'},z'),R'_2(\underline{z'},v), \\
& S_0(\underline{y},x),S_1(\underline{z},x),S'_0(\underline{y'},x),S'_1(\underline{z'},x)  ,\\
& T_0(\underline{y},x),T_1(\underline{z},x),T'_0(\underline{y'},x),T'_1(\underline{z'},x)  \},
\end{align*}
illustrated by the leftmost graph in \Cref{fig:satprune}.
It can be observed that $q$ is a query from $\sjfuabcq$. However, it is not saturated. For instance, $\fd{z}{v}\in \FD{R_2}$, $\fd{v}{v}\in \SFD{q}$, and $\fd{z}{v}\in  \FDclosure{q\setminus \{R_2\}}$, while $\fd{z}{v}\notin \SFD{q}$. Thus, query saturation adds $\fd{z}{v}$  to $\SFD{q}$ via an introduction of fresh atoms $U(\underline{z},v),U'(\underline{z},v)$. Analogously, it adds $\fd{y}{v}$  to $\SFD{q}$ via $V(\underline{y},v),V'(\underline{y},v)$, and so on. After $q$ has been saturated, the database $\db$
is likewise saturated. Then, using functional consistency and collision consistency as in \Cref{ex:prune}, one can ``prune'' the query
in order to remove all the consistent edges from the key-nonkey graph in a top-down manner, leaving only the atoms $R_0,R_1,R'_0,R'_1$ in the query. 
The resulting query can then be evaluated in linear time analogously to \Cref{ex:first}.
\end{example}

\tikzset{
  guarded/.style={
    circle,
    draw=blue,
    thick,
    inner sep=1.5pt,
    minimum size=7mm
  },
  unguarded/.style={
    circle,
    draw=red,
    thick,
    inner sep=1.5pt,
    minimum size=7mm
  },
  empnode/.style={
    circle,
    draw=black,
    thick,
    inner sep=1.5pt,
    minimum size=7mm
  },
  levelbox/.style={
    draw=black,
    dashed,
    rounded corners,
    inner sep=7pt
  }
}

\begin{figure}[t]
\[
\begin{array}{ccccc}
 \hspace{0mm}
\scalebox{0.72}{%
\begin{tikzpicture}[
  every node/.style={font=\small}
]
\node[empnode] (x) at (0,0) {$x$};

\node[empnode] (y) at (-1,2) {$y$};
\node[empnode] (y') at (1,2) {$y'$};

\node[empnode] (z) at (-1,4) {$z$};
\node[empnode] (z') at (1,4) {$z'$};

\node[empnode] (v) at (0,6) {$v$};

\sarr{x}{y}
\sarr{x}{y'}

\sarr{y}{z}
\sarr{y'}{z'}
\sarr{z}{v}
\sarr{z'}{v}

\darr[bend right=20]{y}{x}        
\darr[bend left=20]{y'}{x}        
\darr[bend right=50]{z}{x}        
\darr[bend left=50]{z'}{x}        

\end{tikzpicture}
}
&
 \hspace{-3mm}
\begin{tikzpicture}
\draw[->]
    (0,0) -- (1.6,0)
    node[midway,above] {saturation};
    \node (y) at (0,-3) {};
\end{tikzpicture}
 \hspace{-3mm}
&
\scalebox{0.72}{%
\begin{tikzpicture}[
  every node/.style={font=\small}
]
\node[empnode] (x) at (0,0) {$x$};

\node[empnode] (y) at (-1,2) {$y$};
\node[empnode] (y') at (1,2) {$y'$};

\node[empnode] (z) at (-1,4) {$z$};
\node[empnode] (z') at (1,4) {$z'$};

\node[empnode] (v) at (0,6) {$v$};

\sarr{x}{y}
\sarr{x}{y'}

\sarr{y}{z}
\sarr{y'}{z'}
\darr{z}{v}
\darr{z'}{v}

\darr{y}{v}
\darr{y'}{v}

\darr{x}{v}

\darr[bend right=20]{y}{x}        
\darr[bend left=20]{y'}{x}        
\darr[bend right=50]{z}{x}        
\darr[bend left=50]{z'}{x}        

\end{tikzpicture}
}
&
  \hspace{-3mm}
\begin{tikzpicture}
\draw[->]
    (0,0) -- (1.6,0)
    node[midway,above] {``pruning''};
    \node (y) at (0,-3) {};
\end{tikzpicture}
  \hspace{0mm}
&
\scalebox{0.72}{%
\begin{tikzpicture}[
  every node/.style={font=\small}
]
\node[empnode] (x) at (0,0) {$x$};

\node[empnode] (y) at (-1,2) {$y$};
\node[empnode] (y') at (1,2) {$y'$};

\node[empnode] (z) at (-1,4) {$z$};
\node[empnode] (z') at (1,4) {$z'$};


\sarr{x}{y}
\sarr{x}{y'}

\sarr{y}{z}
\sarr{y'}{z'}

\end{tikzpicture}
}
\end{array}
\]
\caption{The key-nonkey-graphs $\FD{q}$ before and after query saturation and ``query pruning''; the saturated query is already normalized. Double arrows represent consistent edges and single arrows inconsistent edges.
 \label{fig:satprune}}
\end{figure}

A saturated query has three useful properties: consistent edges are transitively closed, inconsistent edges propagate attacks, and two inconsistent edges cannot have distinct sources and a common target. The last property additionally relies on the acyclicity of the attack graph. Formal statements and proofs are given in \Cref{sect:appsat}.

\subsection{Query Normalization}\label{sect:normal}
As the next preprocessing step, we transform the query into a normal form. Informally, normalization ensures that whenever an atom generates more than one edge, all of them are inconsistent. Consequently, an atom generates either only consistent edges or only inconsistent edges, but never both. Furthermore, normalization guarantees that consistent edges are never reversed, never occur inside the non-primary-key part of an atom, and that the source of a consistent edge is never $\emp$. These properties will be used in the pruning stage.

\begin{definition}[Normality]\label{def:normal}
Let $q\in \sjfuabcq$. We say that $q$ is \emph{normal} if the following statements hold, for every $F\in q$ with $x= \keyvars{F}$.
\begin{enumerate}
\item\label{it:normal1} If $|\notkeyvars{F}|>1$, then $\fd{x}{y} \in \WFD{q}$, for each  $y\in \notkeyvars{F}$.
\item\label{it:normal2} $\fd{y}{z} \notin \SFD{q}$, for distinct $y,z\in \notkeyvars{F}$.
\item\label{it:normal4}   $\fd{y}{z} \in \SFD{q}$ implies $\fd{z}{y} \notin \SFD{q}$, for  distinct $y,z\in \queryvars{q}$.
\item\label{it:normal3} $\fd{\emp}{z} \notin \SFD{q}$, for each $z\in \queryvars{q}$.
\end{enumerate}
\end{definition}

The normalization process enforces one item at a time in the listed order; we briefly describe how this is done. An atom $R(\underline{x},y,\vec{z})$ with $\fd{x}{y} \in \SFD{q}$ and $\vec{z}$ containing at least one variable is replaced by a fresh atom $R'(\underline{x},\vec{z})$, with binary-saturation (\Cref{lem:c-sat}) guaranteeing that the edge $\fd{x}{y}$ remains consistent.
Similarly, an atom $R(\underline{x},y,z,\vec{u})$ with $\fd{y}{z} \in \SFD{q}$ is replaced by an atom $R'(\underline{x},y,\vec{u})$.
Any two distinct variables $y,z\in \queryvars{q}$ with $\fd{y}{z},\fd{z}{y}\in \SFD{q}$ are identified, and subsequently all variable repetitions in atoms are removed. Finally, any  $z\in \queryvars{q}$ with $\fd{\emp}{z} \in \SFD{q}$ is replaced by a uniquely determined constant. 

The following normalization lemma, proven in \Cref{sect:appnormal} and using a reduction  called $\Normalize$ (\Cref{alg:pre}), makes this process precise. Note that, by taking care of not removing consistent edges, after normalization the query and the database remain saturated.

\begin{restatable}[Normalization Lemma]{lemma}{normal}\label{lem:norm}
  Let $\mathcal{C}$ be the set of pairs $(\db,q)$, where $q$ is a saturated query from $\sjfuabcq$ and $\db$ is a $q$-saturated database.
Then $\Normalize$ is an equivalence-preserving, generator-monotone, polynomial-time reduction from $\mathcal{C}$ to $\mathcal{C}\cup\{\true,\false\}$.
Additionally, for each $(\db,q)$, assuming $(\db',q') = \Normalize(\db,q)\notin \{\true,\false\}$, we have that $q'$ is normal.
\end{restatable}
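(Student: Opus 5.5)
The plan is to realize $\Normalize$ as a composition of four reductions, one for each item of Definition~\ref{def:normal}, applied in the listed order. For each stage I will check five things: it is equivalence-preserving; it is generator-monotone; it runs in polynomial time; it maps $\mathcal{C}$ to $\mathcal{C}\cup\{\true,\false\}$, so the query stays saturated, in $\sjfuabcq$, and variable-repetition-free, and the database stays $q$-saturated; and it does not destroy the items already enforced. Throughout I will use the structural facts from \Cref{sect:appsat}: consistent edges are transitively closed, inconsistent edges propagate attacks, and no two inconsistent edges have distinct sources and a common target. I will also use binary-saturation from \Cref{lem:c-sat}.

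\emph{Items~\ref{it:normal1} and~\ref{it:normal2} (atom splitting).} Take $F=R(\underline{x},y,\vec z)$ with $\fd{x}{y}\in\SFD{q}$ and $\vec z$ containing a variable. Replace $F$ by $R'(\underline{x},\vec z)$ and $R^{\db}$ by its projection $R'^{\db}$. By binary-saturation, two binary atoms $G,G'$ generate $\fd{x}{y}$. Functional consistency then makes $y$ a function of $x$ in $R^\db$. Collision and pairwise consistency show that this function agrees with $G^\db$ and $G'^\db$. Hence repairs correspond bijectively, and satisfaction of $q$ and of $q'$ coincide on corresponding repairs.

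Item~\ref{it:normal2} is handled the same way. If $\fd{y}{z}\in\SFD{q}$ and $y,z$ both occur as non-keys of $F$, then $z$ is determined by $y$ through an atom $S(\underline{y},z,\uk)$, and collision consistency guarantees that the $z$-value stored in each $R$-fact agrees with $S^\db$. So $z$ can be dropped from $F$.

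Dropping positions in this way preserves the variable set and all consistent edges, and it only removes edges, so $\cgs{}$ cannot grow. Attack-graph acyclicity follows because closures $\keyclosure{F}{q}$ can only grow relative to the removed edges, while the Gaifman graph keeps the relevant connections through $G$ and $S$.

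\emph{Item~\ref{it:normal4} (identification).} For distinct $y,z$ with $\fd{y}{z},\fd{z}{y}\in\SFD{q}$, functional consistency together with collision consistency gives, on a saturated database, a bijection between $y$-values and $z$-values that every repair respects. I therefore substitute $y$ for $z$ in $q$ and apply the corresponding renaming to $\db$. After that, variable repetitions are removed as in \Cref{lem:repfreereal}, and $\dbsaturate$ (\Cref{lem:dbsat}) is reapplied. Since $y,z$ lie in a common SCC of $\FD{q}$, the number of source SCCs does not increase, which gives generator monotonicity.

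\emph{Item~\ref{it:normal3} (constants).} If $\fd{\emp}{z}\in\SFD{q}$, functional consistency forces a unique value $c$ for $z$ in the relevant relation. I substitute $c$ for $z$ and restrict $\db$ accordingly. If the relevant relation is empty, the stage returns $\false$.

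The main obstacle is showing that each rewrite keeps the query saturated and binary-saturated and keeps the attack graph acyclic. Removing an atom changes $\FDclosure{q\setminus\{F\}}$ for other atoms $F$, so an edge might flip from consistent to inconsistent or vice versa. My first approach would be to prove an invariance lemma: every edge that survives a stage keeps its status in $\SFD{}$ or $\WFD{}$. The argument is that each removed atom's edges are still derivable from a duplicate binary atom, which is exactly where binary-saturation is used. From this invariance, saturation of the query follows directly from Definition~\ref{def:saturation}. Saturation of the database then follows, if necessary, by one more call to $\dbsaturate$. Polynomial time is immediate, since each stage strictly decreases either the total arity or the number of variables.
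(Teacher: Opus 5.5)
Your four-stage decomposition matches the paper's \Normalize: drop a column for Item~\ref{it:normal1} using binary-saturation, drop a column for Item~\ref{it:normal2}, identify mutually consistent variables and then remove repetitions for Item~\ref{it:normal4}, substitute constants for Item~\ref{it:normal3}, and re-saturate the database after each step. The gap is in the step you yourself flag as the main obstacle.

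Your invariance argument, that ``each removed atom's edges are still derivable from a duplicate binary atom'', only covers the two column-dropping stages. In the identification stage no atom is removed; instead, distinct edges merge. Suppose $F$ generates $\fd{u}{x}$ and $G\neq F$ generates $\fd{u}{y}$. After $x\mapsto y$ the edge $\fd{u}{y}$ has two generators and is therefore consistent by definition. To rule out a status flip you must show that $\fd{u}{x}$ was already consistent in $q$. This does not come from binary duplicates but from saturation of $q$: $\fd{u}{x}\in\FD{F}$, $\fd{x}{x}\in\SFD{q}$, and $\fd{u}{x}\in\FDclosure{q\setminus\{F\}}$ via $u\to y\to x$. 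That is the content of \Cref{lem:boring}.

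In the substitution stage, an atom keyed by a substituted variable starts generating an edge $\fd{\emp}{w}$. That edge can serve as the second witness making some $\fd{v}{w}$ consistent in $q_f$. Ruling this out needs two facts: the substituted set is closed under consistent edges, and two inconsistent edges into $w$ share their source (\Cref{lem:samesourcesat}). This is how \Cref{claim:notsimple0} argues.

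In both stages, ``edges that survive keep their status'' is also the wrong formulation, since edges are renamed or redirected. What you need is the correspondence $\fd{u}{v}\in\SFD{q}\iff\fd{u'}{v'}\in\SFD{q'}$ under the renaming. Finally, you do not address preservation of Items~\ref{it:normal1}--\ref{it:normal2} or of attack-graph acyclicity in these two stages. For substitution, acyclicity comes from \Cref{lem:replaceconstant2}; for identification, it comes from rerouting Gaifman paths through the identified variable.

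Separately, the generator-monotonicity argument for the Item~\ref{it:normal2} stage is an invalid inference. ``It only removes edges, so $\cgs{q}$ cannot grow'' fails in general, because deleting an edge of $\FD{q}$ can create a new source SCC. In this stage an edge really disappears: by Item~\ref{it:normal1}, $\fd{x}{z}\in\WFD{q}$, so $F$ is its unique generator. What saves monotonicity is that $\fd{x}{y}$ and $\fd{y}{z}$ survive, so reachability, and hence the set of source SCCs, is unchanged (\Cref{lem:nscc}).

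For acyclicity in the column-dropping stages, the relevant facts are $\gaifman{q'}\subseteq\gaifman{q}$ and $\keyclosure{H}{q}\subseteq\keyclosure{(H')}{q'}$. The latter uses the second generator when $H$ is one of the binary duplicates of $\fd{x}{y}$, or the surviving path through $y$ in the Item~\ref{it:normal2} stage. That the Gaifman graph ``keeps the relevant connections'' points in the wrong direction and plays no role.
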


This completes the preprocessing stage. Henceforth, we may assume that $q$ is saturated and normal and that $\db$ is $q$-saturated. 

  \section{Attack Propagation and Guarded Pruning}\label{sect:ap-gp}
 The pruning stage makes use of the structural properties established during preprocessing. First, we introduce the attack-propagation graph to record how attacks
  propagate along key-nonkey edges.  
  This graph gives rise to a decomposition of the query into tree-like
  components. Intuitively, the roots of these components are existentially quantified variables, while the remaining variables are universally quantified. However, a na\"{i}ve evaluation of the latter would create an exponential blow-up, so after existential quantification the obtained
   $\emp$-rooted component is first pruned and only then universally quantified. 

 \subsection{Attack-Propagation Graphs}
 The attack-propagation graph is a subgraph of the key-nonkey graph. In particular, it does not include the dummy edges $\fd{x}{x}$ that were introduced for $\SFD{q}$.
 
 \smallskip
\begin{definition}[Attack-Propagation Graph]\label{def:attackprop}
Let $q$ be a saturated and normal query from $ \sjfuabcq$.
We define the \emph{attack-propagation graph of $q$}, denoted $\igraph{q}$, as follows
\begin{itemize}
\item the vertices consist of $\emp$ and all variables in $\queryvars{q}$; and
\item an edge $\fd{x}{y}\in \FD{q}$ is also an edge of $\igraph{q}$ if
\begin{itemize}
\item $\fd{x}{y}\in \WFD{q}$, or 
\item $\fd{x}{y}\in \SFD{q}$ and there exists an edge $\fd{z}{x}\in \WFD{q}$ such that $\fd{z}{y}\notin \SFD{q}$.
\end{itemize}
\end{itemize}
We denote by $\leafqueryvars{q}$ the set of all variables in $\queryvars{q}$ that have an incoming edge, but no outgoing edge, in $\igraph{q}$. Moreover, we denote by $\rootqueryvars{q}$ the set of all variables in $\queryvars{q}$ that have no incoming edge in $\igraph{q}$.
\end{definition}
Note that $\igraph{q}$ has no self-loops because $q$ is variable-repetition-free and $\emp$ has no incoming edge.
Moreover, by \Cref{lem:samesourcesat}, each variable $x$ has at most one incoming edge in $ \WFD{q}$. If such an edge exists, it is generated by a  unique atom. Furthermore, by normality, if $\fd{x}{y}\in \igraph{q}\cap \SFD{q}$, then $x$ is a variable (and not $\emp$).

 The attack-propagation graph $\igraph{q}$ is a DAG in which every vertex is reachable from exactly one element of $\rootqueryvars{q} \cup\{\emp\}$ (see \Cref{lem:forest,lem:distroot} in the appendix). With slight abuse of terminology, the induced subgraph on a $\subseteq$-maximal set of vertices sharing a root is called a \emph{tree} of $\igraph{q}$.
 
 \begin{example}
Consider a query
\begin{equation}\label{eq:qap}
q \defeq
\{
R(\underline{c},x_1,x_2),
S(\underline{x_1},y_1),
T(\underline{y_2},z_1,z_2)
\}
\cup\{U_i(\underline{x_1},y_2),
V_i(\underline{x_2},y_2),
W_i(\underline{y_1},y_2)
\mid i\in \{0,1\}\}\\
\end{equation}
where $c$ is a constant and all the remaining terms are variables. As shown in \Cref{fig:ap}, each edge of the key-nonkey graph $\FD{q}$ belongs also to the attack-propagation graph $\igraph{q}$, except for the consistent edge $\fd{y_1}{y_2}$. Thus, $\igraph{q}$ is ``tree-like'': it is acyclic, and its only vertex with multiple incoming edges is $y_2$; both incoming edges are consistent, and their sources $x_1$ and $x_2$ occur together in the non-key part of $R$.

 \end{example}
The behavior at $y_2$ is representative: in general, any two distinct edges of $\igraph{q}$ with a common target are consistent, and their sources occur together in the non-key part of an atom (see \Cref{lem:backward} in the appendix).

Every edge of $\WFD{q}$ propagates an attack in the sense of \Cref{lem:nextattack}. By definition, all such edges belong to $\igraph{q}$. The remaining edges of $\igraph{q}$, namely those in $\igraph{q}\cap \SFD{q}$, propagate attacks similarly. The following lemma makes this precise.
\medskip
\begin{lemma}\label{lem:upattack}
Let  $q\in \sjfuabcq$ be saturated and normal.
 Suppose $\fd{x}{y}\in \WFD{q}$ and $\fd{y}{z}\in \igraph{q}\cap \SFD{q}$. Let $F\in q$ be the (unique) atom
 generating $\fd{x}{y}$. 
  Then $F\attacks{} z$.
\end{lemma}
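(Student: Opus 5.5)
We have $F=R(\underline{x},y,\uk)$ with $\fd{x}{y}\in\WFD{q}$, and an edge $\fd{y}{z}\in\igraph{q}\cap\SFD{q}$. We must show that $z$ is reachable from $\notkeyvars{F}$ in $\gaifman{q}$ without passing through $\keyclosure{F}{q}$. The plan has three steps.

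First, we set up the witnesses. Since $\fd{y}{z}\in\SFD{q}$ with $y\neq z$, binary-saturation gives an atom $S(\underline{y},z)\in q$, and $S\neq F$ by normality. The edge $\fd{y}{z}$ lies in $\igraph{q}$, so by \Cref{def:attackprop} there is some $\fd{w}{y}\in\WFD{q}$ with $\fd{w}{z}\notin\SFD{q}$. By \Cref{lem:samesourcesat}, $y$ has at most one incoming inconsistent edge. Hence $w=x$, and this edge is generated by $F$. So we know $\fd{x}{z}\notin\SFD{q}$.

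Second, we show $F\attacks{} y$. By \Cref{lem:nextattack}, every edge of $\WFD{q}$ propagates an attack; I expect this to give $F\attacks{}y$ directly. Failing that, the argument is short: $y\in\notkeyvars{F}$ and $y\notin\keyclosure{F}{q}$, since otherwise $\fd{x}{y}\in\FDclosure{q\setminus\{F\}}$ would make the edge consistent. The one-vertex path then witnesses the attack.

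Third, we extend the attack along $S$. The path from $y$ to $z$ via the atom $S$ is a Gaifman edge. It therefore suffices to show $z\notin\keyclosure{F}{q}$, i.e.\ $\fd{x}{z}\notin\FDclosure{q\setminus\{F\}}$. Suppose instead that $\fd{x}{z}\in\FDclosure{q\setminus\{F\}}$. Then the three hypotheses of \Cref{def:saturation} hold for $F$:
\begin{enumerate}
\item $\fd{x}{y}\in\FD{F}$;
\item $\fd{y}{z}\in\SFD{q}$;
\item $\fd{x}{z}\in\FDclosure{q\setminus\{F\}}$.
\end{enumerate}
Saturatedness of $q$ then yields $\fd{x}{z}\in\SFD{q}$, contradicting the first step. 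Hence $z\notin\keyclosure{F}{q}$. We also need $y\notin\keyclosure{F}{q}$, which the second step gives. So the path $y,z$ avoids $\keyclosure{F}{q}$, and $F\attacks{}z$.

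The main obstacle is the case $x=\emp$, where $\keyclosure{F}{q}$ is the closure of the empty key set. The saturation argument still goes through there, since $\FDclosure{}$ is defined for $\emp$ sources. Normality item~\ref{it:normal3} may also be needed so that the definitions of $\FDclosure{}$ and $\keyclosure{F}{q}$ agree for constant keys. I would check this correspondence first.
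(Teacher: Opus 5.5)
Your argument is correct and is essentially the paper's proof, run directly rather than by contradiction. Both use $F\attacks{}y$ from \Cref{lem:nextattack} and the Gaifman edge $\{y,z\}$, and both use saturation to turn $\fd{x}{z}\in\FDclosure{q\setminus\{F\}}$ into $\fd{x}{z}\in\SFD{q}$, which is impossible because the inconsistent edge witnessing $\fd{y}{z}\in\igraph{q}$ must be $\fd{x}{y}$ by \Cref{lem:samesourcesat}.

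Two minor points. First, binary-saturation is not a hypothesis of the lemma, but you only need the Gaifman edge $\{y,z\}$, and $y\neq z$ with $\fd{y}{z}\in\SFD{q}$ already forces $\fd{y}{z}\in\FD{q}$, which gives it. Second, the case $x=\emp$ needs neither special care nor normality, since $\keyclosure{F}{q}$ and the $\keyvars{F}$-edges of $\FDclosure{q\setminus\{F\}}$ are both computed by reachability from $\{\keyvars{F},\emp\}$ in $\FD{q\setminus\{F\}}$.
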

\begin{proof}
Toward contradiction, suppose $F\nattacks{} z$. Since $F\attacks{} y$ by \Cref{lem:nextattack}, and $\{y,z\}$ is an edge in the Gaifman graph of $q$, it follows that $\fd{x}{z}\in \FDclosure{q\setminus \{F\}}$. But then, by the saturation of $q$ we obtain $\fd{x}{z}\in \SFD{q}$, contradicting the assumption that $\fd{y}{z}\in \igraph{q}\cap \SFD{q}$.
\end{proof}
Returning to the previous example, we observe that $R$ attacks $y_2$ due to the edges $\fd{\emp}{x_1}\in \WFD{q}$ (generated by $R$) and $\fd{x_1}{y_2}\in \igraph{q}\cap \SFD{q}$.
In contrast, we have $\fd{x_1}{y_1}\in \WFD{q}$ (generated by $S$) and $\fd{y_1}{y_2}\in  \SFD{q} \setminus \igraph{q}$, reflecting the fact that $S$ does not attack $y_2$.

The attack-propagation graph determines the pruning order, but does not by itself control the size of intermediate instances. For this, we introduce guarding.

  \subsection{Guarding}\label{sect:guarding}
We begin by defining the set of \emph{source variables} for a query $q$: 
\[\srcqueryvars{q}\defeq \{x\in \queryvars{q}\mid x \text{ belongs to a source SCC in }\FD{q}\}.\]
A variable $x \in \queryvars{q}$ is called a \emph{guard variable} if  $x\in \rootqueryvars{q} \cap \srcqueryvars{q}$. The set of guard variables is denoted $\initqueryvars{q}$.
 We say that $y\in \queryvars{q}$ is \emph{guarded}
 if there exists $x\in \initqueryvars{q}$ such that $\fd{x}{y}\in \SFD{q}$. 
 The set of \emph{guarded variables} is denoted
 $\nrqueryvars{q}$ 
and the set of \emph{non-guarded variables} is $ \rqueryvars{q}\defeq \queryvars{q}\setminus \nrqueryvars{q}$.

\begin{example}\label{ex:q'}
Given $q$ from \eqref{eq:qap}, consider the following extended query
\begin{equation}\label{eq:q'}
q' \defeq
q  \cup
\{A_i(\underline{u},y_1),B_i(\underline{u},y_2),C_i(\underline{u},z_2)\mid i\in \{0,1\}\}.
\end{equation}
The variable $u$ is a guard since it does not have any incoming edges in the key-nonkey graph; in particular, it is both a root of the attack-propagation graph and a member of a source SCC in the key-nonkey graph. It is also the only guard variable associated with $q'$, since the remaining variables have incoming paths rooted at $\emp$. Furthermore, the variable $u$ is  guarded vacuously by itself, the variables $y_1,y_2,z_2$ are all guarded by it, and the remaining variables $x_1,x_2,z_1$ are unguarded, as illustrated in \Cref{fig:guarding}.

Since no two guard variables belong to the same source SCC, their number is bounded by $\cgs{q'}$ (see \Cref{lem:isone} in the appendix). One role of guarding is to control the size of the instances produced by reductions. We next see why this control is needed when flattening the tree rooted at $\emp$.

\begin{figure}
\centering
\begin{subfigure}[t]{0.42\textwidth}
\centering
\scalebox{0.72}{%
\begin{tikzpicture}[
  every node/.style={font=\small},
  plainvar/.style={
    circle,
    draw,
    thick,
    inner sep=1.5pt,
    minimum size=7mm
  }
]
\node[empnode] (emp) at (0,0) {$\bot$};

\node[plainvar] (x1) at (-1,2) {$x_1$};
\node[plainvar] (x2) at (1,2) {$x_2$};

\node[plainvar] (y1) at (-2,4) {$y_1$};
\node[plainvar] (y2) at (0,4) {$y_2$};

\node[plainvar] (z1) at (-1,6) {$z_1$};
\node[plainvar] (z2) at (1,6) {$z_2$};

\node[levelbox, fit=(x1)(x2)] {};
\node[levelbox, fit=(y1)] {};
\node[levelbox, fit=(y2)] {};
\node[levelbox, fit=(z1)(z2)] {};

\sarr{emp}{x1}
\sarr{emp}{x2}

\sarr{x1}{y1}
\bdarr{x1}{y2}
\bdarr{x2}{y2}
\rdarr{y1}{y2}

\sarr{y2}{z1}
\sarr{y2}{z2}
\end{tikzpicture}%
}
\caption{The graph $\FD{q}$.}\label{fig:ap}
\end{subfigure}
\hfill
\begin{subfigure}[t]{0.52\textwidth}
\centering
\scalebox{0.72}{%
\begin{tikzpicture}[
  every node/.style={font=\small}
]
\node[empnode] (emp) at (0,0) {$\bot$};
\node[guarded] (u) at (3,0) {$u$};

\node[unguarded] (x1) at (-1,2) {$x_1$};
\node[unguarded] (x2) at (1,2) {$x_2$};

\node[guarded] (y1) at (-2,4) {$y_1$};
\node[guarded] (y2) at (0,4) {$y_2$};

\node[unguarded] (z1) at (-1,6) {$z_1$};
\node[guarded] (z2) at (1,6) {$z_2$};

\node[levelbox, fit=(x1)(x2)] {};
\node[levelbox, fit=(y1)] {};
\node[levelbox, fit=(y2)] {};
\node[levelbox, fit=(z1)(z2)] {};

\sarr{emp}{x1}
\sarr{emp}{x2}

\sarr{x1}{y1}
\bdarr{x1}{y2}
\bdarr{x2}{y2}
\rdarr{y1}{y2}

\sarr{y2}{z1}
\sarr{y2}{z2}

\darr[red,opacity=.4,dashed]{u}{y1}
\darr[red,opacity=.4,dashed]{u}{y2}
\darr[red,opacity=.4,dashed]{u}{z2}
\end{tikzpicture}%
}
\caption{The graph $\FD{q'}$.}\label{fig:guarding}
\end{subfigure}
\caption{The key-nonkey graphs of $q$ and its guarding extension $q'$. Double arrows represent consistent edges and single arrows inconsistent edges; the red arrows do not belong to the attack-propagation graph. Dashed rectangles enclose the non-primary-key variables of a single atom. In the right panel, blue vertices are guarded and red vertices unguarded; the faded dashed arrows originating at $u$ are induced by the guarding atoms.}
\label{fig:ap-guarding}
\end{figure}
\end{example}


\subsection{Pruning}\label{sect:prune}
Pruning transforms the tree of $\igraph{q}$ rooted at $\emp$ into a tree of depth one and removes its unguarded variables. A na\"{i}ve implementation of this flattening may, however, materialize exponentially large  instances. The following example isolates this problem and shows how guarding prevents the blow-up.

\smallskip
\begin{example}\label{ex:guard}
Consider a query
\[
q_n \defeq  \{R_1(\underline{c},x_1), R_2(\underline{x_1},x_2), \dots ,R_n(\underline{x_{n-1}},x_n)\},
\]
where $c$ is a constant and all the remaining terms are variables. {At this point, none of the variables $x_1, \dots ,x_n$ is guarded.}
Let $\db$ be a database. 
Consider the result of flattening the path in $q_n$ into a single-atom query: $q^* \defeq \{R'(\underline{c},x_1,\dots ,x_n)\}$.
  A direct database transformation would introduce a fact $R'(\underline{c},d_1,\dots ,d_n)$ for every sequence of facts $R_1(\underline{c},d_1), R_2(\underline{d_1},d_2), \dots ,R_n(\underline{d_{n-1}},d_n)$ from $ \db$. Consequently, the size of $R'$ could be $2^n$ for a database whose size is linear in $n$. Now, extend $q_n$ to the query
  $\widehat q_n \defeq q_n\cup\{S_{i,j}(\underline{u},x_i)\mid i\in [n],j\in \{0,1\}\}$
  and suppose that $\db$ is $\widehat q_n$-saturated. The variables $x_1,\dots,x_n$ are then guarded by $u$. 
  Now, form $R'_{n-1}$ as the join of $R_{n-1}$ and $R_n$ (for both the database and the query), and saturate the database with respect to $\widehat q_{n-1} \defeq (\widehat q_n \setminus \{R_{n-1},R_n\})\cup\{R'_{n-1}\}$; this removes all triples $R'_{n-1}(\underline{d_{n-2}},d_{n-1},d_n)$ that are not guarded by the same value of $u$ (see the guardedness condition). 
  If saturation does not return $\false$, 
   we continue from the saturated database and $\widehat q_{n-1}$. Importantly, the size of $R'_{n-1}$ is bounded by the number of values taken by $u$ in the initial database $\db$, and hence by $|\adom{\db}|$. Continuing like this, if $\false$ is not returned, one obtains a relation $R'$ whose size is bounded by $|\adom{\db}|$.
\end{example}

The main idea in query pruning is to remove unguarded variables at leaves and subsequently propagate only guarded variables toward the root. Since the number of guard variables is bounded, every relation occurring in an intermediate instance remains polynomially bounded. Query pruning itself is decomposed into four separate steps; \Cref{fig:prune} in the appendix illustrates how the $\emp$-rooted tree of $\igraph{q'}$ (for the query $q'$ defined in \eqref{eq:q'}) is pruned. Unless pruning returns $\false$, applying it to $q'$ yields the flattened query
\begin{equation}\label{eq:q''}
q'' = 
\{
R'(\underline{c},y_1,y_2,z_2)
\}
  \cup
\{A_i(\underline{u},y_1),B_i(\underline{u},y_2),C_i(\underline{u},z_2)\mid i\in \{0,1\}\}.
\end{equation}

The following lemma formalizes this construction. The reduction $\textsc{Prune}$ is defined in \Cref{alg:prune}. Guarded reductions are defined in \Cref{sect:guarded}; intuitively, they preserve the guarding structure of the query.
We write $\keyqueryvars{q}\defeq \{x \in \queryvars{q} \mid \exists F \in q \colon x = \keyvars{F} \}$ and $\nkeyqueryvars{q} \defeq \queryvars{q}\setminus \keyqueryvars{q}$.
 \begin{restatable}[Pruning Lemma]{lemma}{prune}\label{lem:prune}
Let $c\in \mathbb{N}$. Let $\mathcal{C}$ consist of all pairs $(\db,q)$ such that $\db$ is a $q$-saturated database, $q$ is a saturated and normal query from  $\sjfuabcq$, and $|\initqueryvars{q}|\leq c$. 
Then, $\textsc{Prune}$ is a guarded, equivalence-preserving, and polynomial-time  reduction from $\mathcal{C}$ to $\mathcal{C}\cup\{\true,\false\}$.
Additionally, assuming $(\db',q')=\textsc{Prune}(\db,q)\notin\{\true, \false\}$, it holds that
\begin{equation}\label{eq:setprune}
\outvar{\igraph{q'}}{\emp} \subseteq \nrqueryvars{q'} \cap \leafqueryvars{q'} \cap \nkeyqueryvars{q'}.
\end{equation}
\end{restatable}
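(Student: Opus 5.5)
The proof proceeds by defining $\textsc{Prune}$ as an iteration of four elementary reductions applied to the $\emp$-rooted tree of $\igraph{q}$. Each is shown to be guarded, equivalence-preserving and polynomial-time, and to map $\mathcal{C}$ into $\mathcal{C}\cup\{\true,\false\}$. After each elementary step we re-apply $\dbsaturate$ (\Cref{lem:dbsat}), and where needed $\Saturate$ and $\Normalize$ (\Cref{lem:c-sat,lem:norm}), to restore saturation and normality. Those lemmas are equivalence-preserving and generator-monotone. Since the reductions are guarded, $\initqueryvars{q}$ does not grow, so $|\initqueryvars{q}|\le c$ is maintained. The four steps, in the order I would apply them, are:
\begin{enumerate}
\item \leafprune: remove an unguarded leaf of the $\emp$-tree together with the atom generating its incoming edge.
\item \leafeliminate: handle a guarded leaf whose atom contributes nothing beyond consistent information.
\item \atomextend: eliminate a consistent edge $\fd{y}{z}\in\igraph{q}\cap\SFD{q}$ by merging the two atoms.
\item \atomeliminate: merge an inconsistent edge $\fd{x}{y}$ whose child atom has only guarded, leaf, non-key variables into its parent atom. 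This contracts depth, as in \Cref{ex:guard}.
\end{enumerate}
Each step strictly decreases a measure such as the number of atoms plus the number of edges of the $\emp$-tree, so only polynomially many steps occur. When no step applies, every out-neighbour of $\emp$ is a guarded leaf that is not a key, which is exactly \eqref{eq:setprune}.

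\textbf{Equivalence for unguarded leaves.} For a leaf $y$ with incoming inconsistent edge generated by $F=R(\underline{x},y)$, I would argue as in \Cref{ex:prune}. Pairwise consistency means every $R$-fact joins with the rest, and since $y$ occurs in no other atom, deleting $F$ preserves certainty in both directions. Any repair falsifying $q\setminus\{F\}$ still falsifies $q$. Conversely, a repair falsifying $q$ but satisfying $q\setminus\{F\}$ can be modified in the $R$-block, because non-emptiness and pairwise consistency guarantee a completing fact. This direction needs the block to be fully "good", which is what saturation provides; otherwise $\dbsaturate$ would have removed it.

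\textbf{Equivalence for merges.} For consistent-edge merges, functional and collision consistency make the join single-valued per key, so repairs correspond bijectively. For inconsistent-edge merges along $\fd{x}{y}$ generated by $F$ with child atom $G$ keyed on $y$, I would use \Cref{lem:upattack} and \Cref{lem:nextattack}: $F$ attacks all of $G$'s variables. The standard Koutris–Wijsen rewriting step ($\exists y$ in $F$, then $\forall y$) therefore reduces to requiring that every fact of the $F$-block extend to a $G$-fact. The merged relation $R'(\underline{x},\ldots)$ is then defined as the join restricted to blocks all of whose facts extend, and certainty is preserved.

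\textbf{Main obstacle: polynomial size.} The join can blow up exponentially, as \Cref{ex:guard} shows. The plan is to merge only when all non-key variables of the child are guarded. Guardedness of the saturated database then implies that each tuple of guarded values in the merged relation is witnessed by a common tuple of guard values. There are at most $|\adom{\db}|^{c}$ such guard tuples, so each merged relation has size at most $|\adom{\db}|^{c+1}$, which is polynomial for fixed $c$.

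To make this work, unguarded leaves must be pruned first (step 1) and only guarded variables propagated toward the root. I would also have to check that merged atoms keep the guarding atoms' consistent edges, using \Cref{lem:backward} to handle vertices with two consistent incoming edges. I expect proving that guardedness survives each merge, and that the reduction is "guarded" in the sense of \Cref{sect:guarded}, to be the hardest part.
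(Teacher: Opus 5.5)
Your overall architecture matches the paper's: four reductions on the $\emp$-rooted tree of $\igraph{q}$, each followed by $\dbsaturate$, with relation sizes controlled by counting guard tuples. The concrete steps, however, have genuine gaps.

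\textbf{Steps 1 and 2.} Your step~1 deletes the atom generating the incoming edge of an unguarded leaf $y$. The justification assumes this atom is $R(\underline{x},y)$ and that $y$ occurs nowhere else, and neither holds in general. In \eqref{eq:q'} the unguarded leaf $z_1$ enters through $T(\underline{y_2},z_1,z_2)$. Deleting $T$ also drops the link between $y_2$ and the guarded $z_2$, which $u$ constrains via $B_i$ and $C_i$, so certainty can change. Moreover, $y$ may be the target of further consistent edges $\fd{u}{y}$. The paper's $\leafeliminate_y$ instead projects the $y$-column out of \emph{every} atom. Its equivalence proof needs \Cref{lem:path}, which is exactly where unguardedness of $y$ is used, to propagate the value of $y$ read off the $\igraph{q}$-parent atom to all other atoms containing $y$. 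Your step~2 is misassigned as well. Guarded non-key leaves are never removed (they are what \eqref{eq:setprune} permits). The paper's $\leafprune_y$ removes the atoms \emph{keyed} on a leaf $y$; their edges are consistent, lie outside $\igraph{q}$, and are shortcut by \Cref{lem:verysimple2}.

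\textbf{Invariants, size, step 4, and the fixpoint.} Restoring saturation and normality by rerunning $\Saturate$ and $\Normalize$ is incompatible with the claim that $\textsc{Prune}$ is guarded (\Cref{def:faith}). $\Saturate$ adds atoms, which can violate $\size{q'}\le\size{q}$ and your termination measure, and it may add consistent edges out of a guard. $\Normalize$ identifies or substitutes variables, so $\nrqueryvars{q}$ can change. The paper instead proves that each step by itself keeps the attack graph acyclic and the query saturated and normal (\Cref{lem:vequery1,lem:vequery11,lem:strongel,lem:qnew}). It also shows each step preserves $\initqueryvars{q}$, $\nrqueryvars{q}$ and the guarding atoms. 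This is the bulk of the proof, and you defer all of it.

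Your bound $|\adom{\db}|^{c+1}$ is wrong as stated. Merged atoms keep unguarded columns (e.g.\ $x_1,x_2$ in the intermediate stages of \Cref{fig:prune}), and the bound must hold at every intermediate instance. The missing ingredient is the invariant that each relation's projection onto its unguarded variables never grows ($N$-unguarded-boundedness, \Cref{def:Nguarded}). Together with \Cref{lem:grounding}, this gives $|R^{\db'}|\le N\cdot|\adom{\db}|^{c}$.

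Your step~4 is not justified by ``$F$ attacks the child's variables'' plus a Koutris--Wijsen step. When the key of $F$ is a variable, $F$ is itself attacked, so that step does not apply to it. Soundness rests on the root of $\keyvars{S}$ in $\igraph{q}$ being $\emp$, a restriction you impose but never use. Under it, every repair determines the values along the chain from a constant key down to $x,y$, so repairs of $\db$ and of the merged database can be matched. Without it the merge can fail to preserve certainty and can even create an attack cycle (\Cref{ex:cycle}).

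Finally, \eqref{eq:setprune} does not follow from ``no step applies'' with steps stated as loosely as yours. It needs the exact preconditions, e.g.\ $\outvar{\WFD{q}}{Y}=\emptyset$ and $Z\subseteq\leafqueryvars{q}\cap\nkeyqueryvars{q}\cap\nrqueryvars{q}$ for $\atomextend$. It also needs the paper's argument that an atom of $\modei{q}$ keyed at a non-root vertex of the $\emp$-tree, at maximal distance from $\emp$, is eliminable.
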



 \section{Variable Substitutions}\label{sect:sub}
Query pruning is used as a subroutine in our main algorithm, which otherwise follows the known first-order rewriting approach. The algorithm does not construct such a rewriting explicitly. Instead, it simulates its quantifier alternation by substituting constants for query variables through existential and universal branching. The reductions underlying these substitution steps are what we call \emph{substitution-guarded} (see \Cref{sect:appsub}), meaning that they preserve properties analogous to those of guarded reductions while allowing variables to be replaced by constants. They are also \emph{strict} in the sense that they decrease the query size (defined appropriately in \Cref{sect:appaux}).
 For the following lemma statement,  note that the effect of $\textsc{Substitute}_f$ (see \Cref{sect:aux}) on a query is simply to replace variables with constants according to $f$. Also, substitutable sets are defined in \Cref{def:sub}.
 
\begin{restatable}[Substitution Lemma]{lemma}{sub}\label{lem:pruneconst}
Let $\mathcal{C}$ consist of all pairs $(\db,q)\in \DB\times \sjfuabcq$, where $q$ is a saturated and normal query and $\db$ is a database. Let $\mathcal{C}^*$ be defined as $\mathcal{C}$, except that $\db$ has to be $q$-saturated.
 Then, for each $(\db,q)\in \mathcal{C}$, substitutable $Y\subseteq \queryvars{q}$, and function $f\colon Y\to \Const$, 
 $(\db,q)\mapsto \textsc{Substitute}_f(\db,q)$ is a strict, substitution-guarded, and polynomial-time reduction from $\mathcal{C}$ to $\mathcal{C}^*\cup\{\true,\false\}$.
\end{restatable}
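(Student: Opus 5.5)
The plan is to define $\textsc{Substitute}_f(\db,q)$ as a composition of three stages and to verify each required property stage by stage. The three stages are: (i) the syntactic replacement of each $y\in Y$ by $f(y)$ in $q$, giving $q_f$; (ii) the restriction of $\db$ to the facts compatible with $f$; and (iii) a repair step that restores the invariants. Concretely, in stage (ii) we delete every block $\blockof{R}{a}{\db}$ of an atom $R(\underline{x},\uk)$ with $x\in Y$ and $a\neq f(x)$, since by self-join-freeness those blocks can never be used by a valuation of $q_f$. For atoms where some $y\in Y$ occurs in a non-key position, we instead drop individual facts disagreeing with $f(y)$. The block/fact asymmetry is the point where equivalence must be argued carefully. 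In stage (iii) we re-establish the invariants by applying, in order, $\dbsaturate$ (\Cref{lem:dbsat}), $\Saturate$ (\Cref{lem:c-sat}) and $\Normalize$ (\Cref{lem:norm}), returning $\true$/$\false$ whenever one of them does. This places the output in $\mathcal{C}^*\cup\{\true,\false\}$. Polynomial time is immediate, since each stage is polynomial and the query only shrinks before the polynomially bounded saturation blow-up.

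For equivalence preservation, I would use the definition of substitutable sets (\Cref{def:sub}). I expect it to guarantee that $Y$ consists of variables whose values are either fixed by a consistent ($\SFD{q}$) dependency or are being universally/existentially branched on in the surrounding algorithm. The reduction is then meant to be correct relative to the branch semantics: every repair of the restricted database satisfies $q_f$ iff every repair of $\db$ extends consistently to satisfy $q$ with $Y$ bound by $f$. I would prove this by lifting repairs in both directions. A repair of the restricted database extends to a repair of $\db$ by choosing arbitrary facts from the deleted blocks. Such facts cannot create a new satisfying valuation of $q_f$, because every valuation of $q_f$ must send $x\in Y$ to $f(x)$. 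For the dropped non-key facts, saturation (functional consistency on consistent edges) ensures that the other facts in those blocks are interchangeable for the purposes of $q_f$.

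For strictness, I would show that the query-size measure of \Cref{sect:appaux} drops, because at least one variable becomes a constant when $Y\neq\emptyset$. The issue is that $\Saturate$ may add atoms, so the measure must be one where variables dominate atom counts; I would pick a lexicographic measure whose first component is $|\queryvars{q}|$. For substitution-guardedness, I would check that the guard structure degrades only as allowed. Guard variables in $Y$ become constants, so $|\initqueryvars{q}|$ does not grow. Edges $\fd{x}{y}$ with $x\in Y$ become edges from $\emp$, and $\Normalize$ item 4 then turns consistent such targets into constants. Hence no new source SCC appears, which gives generator-monotonicity.

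I expect the main obstacle to be showing that the attack graph stays acyclic and the query stays saturated after replacing variables by constants. Substitution changes the closures $\keyclosure{F}{q}$, and it could in principle create new attacks. My first attempt would be to show that $\keyclosure{F}{q_f}\supseteq\keyclosure{F}{q}\setminus Y$ while the Gaifman graph of $q_f$ is the graph of $q$ with $Y$ deleted. Any attack path in $q_f$ is then an attack path in $q$, so attacks can only disappear. Closing the argument via $\Saturate$, which preserves acyclicity by \Cref{lem:c-sat}, would then finish the proof.
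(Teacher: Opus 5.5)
There is a genuine gap. You prove properties of a different map, and the real content of the statement is never established. The lemma is about $\textsc{Substitute}_f(\db,q)=(\dbsaturate(\db,q_f),q_f)$ as fixed in \eqref{eq:subf}. There is no fact-level filtering and no rerun of $\Saturate$ or $\Normalize$, and your extra stages cannot replace the missing argument. Either $q_f$ is already saturated and normal (and, for $\Saturate$, binary-saturated), in which case those stages are idle and you still owe exactly that proof. Or they really change the query, and then the claimed properties are lost:
\begin{itemize}
\item $\Saturate$ adds atoms, so strictness with respect to $\size{\cdot}$ is no longer guaranteed, and neither is Item~8 of \Cref{def:faith} ($\size{q'}\leq\size{q}$, which \Cref{lem:main} uses to bound $\size{\db'}$). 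Swapping in a lexicographic measure proves a different statement.
\item $\identify$ inside $\Normalize$ rewrites constants inside existing relations, which can violate Item~3.
\end{itemize}
Separately, Items~5$^*$ and~6$^*$ of \Cref{def:subguard} are exact equalities, not ``$|\initqueryvars{q}|$ does not grow'', and you do not address Items~1 and~4$^*$ at all. Finally, equivalence preservation is not claimed, and your argument for it fails anyway. Inputs in $\mathcal{C}$ need not be $q$-saturated. Moreover, when $Y$ consists of non-key variables of atoms keyed by a constant, the edges $\fd{\emp}{y}$ are inconsistent by normality. Deleting the facts of such a block that disagree with $f(y)$ therefore discards repairs that falsify $q_f$, and functional consistency does not help there.

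The missing idea is that substitutability is designed so that $q_f$ itself stays saturated and normal. From $\outvar{\SFD{q}}{\emp}\subseteq Y$, $\outvar{\SFD{q}}{Y}\subseteq Y$ and \Cref{lem:samesourcesat} one gets \Cref{claim:notsimple0}. For $v\notin Y$, an edge into $v$ is consistent in $q$ iff its image (source in $Y$ replaced by $\emp$) is consistent in $q_f$, and inconsistent edges stay inconsistent. In particular $\emp$ gains no consistent out-edges, so the normalization you plan to rerun has nothing to do.

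Condition~3 of \Cref{def:sub}, which your plan never uses, yields $F\nattacks{q}y$ for every $y\in Y$ and every atom $F$ with $\FD{F_f}\neq\emptyset$. Either $y$ is unattacked, or $y$ is a non-key variable of an atom keyed by a constant and hence lies in $\keyclosure{F}{q}$. This is exactly what the saturation argument needs when a path witnessing $\fd{u}{w}\in\FDclosure{q_f\setminus\{F_f\}}$ starts at a substituted key. Together these facts give saturation and all four normality conditions for $q_f$ (\Cref{lem:removecfd3}). Your Gaifman-graph argument for acyclicity is correct and is \Cref{lem:replaceconstant2}.

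From there, \Cref{lem:rootnorm,lem:nrsub} give $\initqueryvars{q_f}=\initqueryvars{q}\setminus Y$ and $\nrqueryvars{q_f}=\nrqueryvars{q}\setminus Y$, which are Items~5$^*$ and~6$^*$. Item~1 follows because $\rqueryvars{q_f}\subseteq\rqueryvars{q}$ and $\dbsaturate$ only deletes facts. Item~4$^*$ is a short case analysis using the closure of $Y$ and the normality of $q$. Strictness is immediate from $Y\neq\emptyset$.
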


Let us briefly explain the role of variable substitution.
Let $q^*$ be the query obtained from query $q'$ of \eqref{eq:q'} by replacing the constant $c$ with a fresh variable $v$, 
and let $\db$ be a database, which at this point is $q^*$-saturated.
Since $v$ is unattacked in $q^*$, there exists a constant $c$ such that every repair of $\db$ satisfies $q^*$ if and only if every repair of $\db$ satisfies $(q^*)_{v \mapsto c}$  (see \Cref{lem:non-attacked} in the appendix, due to \cite{KoutrisW17}). This is the existential substitution step. Then, the tree rooted at $\emp$ is pruned, resulting in the flattened query $q''$ of \eqref{eq:q''} and a corresponding database $\db'$.
 Now, every repair of $\db'$ satisfies $q''$ if and only if for each fact of the form $R'(\underline{c},d,e,f)$ from $\db'$, every repair of $ (\db'\setminus (R')^{\db'})\cup\{R'(\underline{c},d,e,f)\}$ satisfies $q''_{y_1,y_2,z_2 \mapsto d,e,f}$. 
 This is the universal substitution step. After this step, one tree of the attack-propagation graph has been removed. The process then continues from the next unattacked root variable; in our example, only $u$ remains.
 
Note that pruning a tree that is not rooted at $\emp$ may create cycles in the attack graph (see \Cref{ex:cycle} in the appendix). For this reason, each pruning step is performed after an existential substitution has created a $\emp$-rooted tree and before the corresponding universal substitutions eliminate its remaining guarded variables. 
 This dynamical interleaving of pruning inbetween substitutions is realized in our main lemma, to which we turn next.

\section{Proof of Main Theorem}\label{sect:main}
The lemma combines existential substitution, guarded pruning, and universal
  substitution. Strictness of substitutions and non-trivial pruning bound
  branch depth, while guarding polynomially bounds the sizes of intermediate
   instances. Additionally, the bounded number of guard combinations constrains also the number of possible substitution combinations.
   The proof and a schematic illustration appear in
   \Cref{sect:appmain}, building on \Cref{lem:prune,lem:pruneconst}.

\begin{restatable}[Main Lemma]{lemma}{main}\label{lem:main}
Let $c\in \mathbb{N}$. Let $\mathcal{C}$ consist of all pairs $(\db,q)$ such that $\db$ is a $q$-saturated database, $q$ is a saturated and normal query from $\sjfuabcq$, and $\cgs{q}\leq c$. Then $\textsc{Solve}$ is an equivalence-preserving, polynomial-time 
reduction from $\mathcal{C}$ to $\{\true,\false\}$.
\end{restatable}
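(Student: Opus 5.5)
We define $\textsc{Solve}$ recursively on pairs $(\db,q)\in\mathcal{C}$ and prove correctness and running time by induction on the query size from \Cref{sect:appaux}, which strict reductions decrease. \emph{Base case:} if $q=\emptyset$, return $\true$. Otherwise the attack graph of $q$ is acyclic, so some atom is unattacked. We aim to show that then some variable of $\rootqueryvars{q}$, or $\emp$ itself, roots a tree of $\igraph{q}$ whose variables are unattacked at the root. If $\emp$ has outgoing edges in $\igraph{q}$, we skip the existential step. Otherwise we choose an unattacked root variable $v$. By \Cref{lem:non-attacked}, $(\db,q)\in\cert$ iff there is a constant $a\in\adom{\db}$ with $(\db,q_{v\mapsto a})\in\cert$. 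This gives at most $|\adom{\db}|$ existential branches. Each branch is $\textsc{Substitute}_{v\mapsto a}(\db,q)$, which by \Cref{lem:pruneconst} lands in $\mathcal{C}^*\cup\{\true,\false\}$ (we need $\{v\}$ substitutable). The tree formerly rooted at $v$ is now rooted at $\emp$.

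\emph{Pruning and universal step.} Next apply $\textsc{Prune}$ (\Cref{lem:prune}). The bound $|\initqueryvars{q}|\le\cgs{q}\le c$ comes from \Cref{lem:isone}, and guardedness of the reductions keeps it bounded thereafter. After pruning, \eqref{eq:setprune} gives $Y\defeq\outvar{\igraph{q'}}{\emp}\subseteq\nrqueryvars{q'}\cap\leafqueryvars{q'}\cap\nkeyqueryvars{q'}$. Since normality forbids consistent edges out of $\emp$, these variables sit in the non-key positions of the atoms with constant keys. Because they are leaves and non-key, each such block can be handled by a universal split, as illustrated after \Cref{lem:pruneconst}. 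Concretely, $(\db',q')\in\cert$ iff for every atom $R$ with $\keyvars{R}=\emp$ and every fact $A$ in the relevant block, the instance obtained by replacing that block with $\{A\}$ and substituting the non-key values for $Y$ is in $\cert$. We prove this equivalence by exchanging repairs block-by-block. Recursively call $\textsc{Solve}$ on each branch after $\textsc{Substitute}_f$, which is strict and returns a saturated instance, so the induction hypothesis applies.

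\emph{Running time.} This is the main obstacle. A naive analysis multiplies branching factors across recursion levels and gives an exponential count. We plan to argue that every variable in $Y$ is guarded by some guard $u$ through a consistent edge $u\to y$. By functional consistency and guardedness of the saturated database, each fact of an $\emp$-keyed relation is determined by the tuple of guard values, so the number of universal branches is at most $|\adom{\db}|^{c}$. More importantly, we must argue that the set of reachable subinstances across the whole recursion is determined by the choice of constants for the (at most $c$) guard variables together with the current stage of the processed tree sequence. The existential substitutions of non-guard roots become leaves of $\emp$-trees, and after pruning they survive only through guarded variables. We therefore memoize $\textsc{Solve}$ on pairs (stage, partial assignment to guards), giving $O(|q|\cdot|\adom{\db}|^{c})$ distinct subproblems. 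The recursion depth is at most $|q|$ by strictness, and each node does polynomial work by \Cref{lem:prune,lem:pruneconst}, so the total time is polynomial for fixed $c$.

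The delicate point to verify is that the non-guard existential choices do not multiply the memo keys. We would handle this by showing that after pruning and universal substitution, the residual instance depends on those choices only via guard values. If that fails, we fall back to noting that each existential branch is resolved (evaluated to $\true$ or $\false$) before the next tree is processed. The recursion then forms a sequence of $|q|$ stages, each of polynomial width, rather than a product of branching factors.
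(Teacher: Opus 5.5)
Your correctness argument follows the paper's \textsc{Solve} closely: existential branching on an unattacked root, then \textsc{Prune}, then universal branching over the facts of a $\emp$-keyed atom. Two small omissions there. The root must also satisfy $\outvar{\SFD{q}}{x}=\{x\}$ for $\{x\}$ to be substitutable (\Cref{lem:rootnotattacked}). And the invariant to carry through the recursion is $|\initqueryvars{q}|\le c$, not $\cgs{q}\le c$, since only the former is shown to survive \textsc{Prune}.

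The genuine gap is the running time. The point you flag as delicate is exactly what has to be proved, and neither route you offer proves it.
\begin{itemize}
\item The fallback is false. Each existential or universal branch recursively processes all remaining trees, so the recursion tree is a product of per-stage branching factors, whatever order the branches are resolved in.
\item The memoization key is unsound as stated. \Cref{lem:rootnotattacked} picks a root with no outgoing consistent edge, so a guarded root is typically substituted \emph{before} its guard. For $q=\{A_0(\underline{u},v),A_1(\underline{u},v),B(\underline{v},w)\}$ (saturated, normal, $\initqueryvars{q}=\{u\}$), \textsc{Solve} substitutes $v$ while $u$ is still a variable. The branches $v\mapsto a$ and $v\mapsto a'$ then have the same stage and the same (empty) guard assignment, but different residual databases.
\end{itemize}

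The missing idea is that every variable ever substituted is guarded in the \emph{original} query. Roots are guarded by \Cref{lem:rootguarded}, which rests on \Cref{lem:isedge}. Universally substituted variables are guarded by \eqref{eq:setprune}. Both facts transfer back to $q$ by Item~5$^*$ of \Cref{def:subguard}. Moreover, the guarding atom $S(\underline{x},y,\vec{c})\in\modec{q}$ survives every step up to substitution of constants (Item~4$^*$), and its relation only shrinks (Item~3 of \Cref{def:faith}). So it remains a function from $x$-values to $y$-values of the original database.

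The paper then counts branches instead of memoizing. It extends each saturated-maximal substitution $f$ to $f^*$ by reading values off its saturated state; for an unsubstituted guard $x$, $f^*(x)$ is a key of $S$ that maps to $f(y)$. It proves that $f\neq g$ forces $f^*$ and $g^*$ to differ on $\initqueryvars{q}$. This gives at most $|\adom{\db}|^{c}$ saturated-maximal branches, and polynomially many branches overall.

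Separately, ``each node does polynomial work'' only yields time polynomial in the local input. Since \textsc{Prune} joins relations, you also need every intermediate database to stay polynomial in the initial one. The paper gets $|R^{\db'}|\le N^{c+1}$ for every state from $N$-unguarded-boundedness, which is preserved under the composed substitution-guarded reductions, together with \Cref{lem:grounding}.
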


\begin{proof}[Proof of \Cref{thm:main}]
Let $c\in\mathbb{N}$, and let $\mathcal{Q}$ be a class of queries $q$ from $\sjfuabcq$ such that $\cgs{q}\leq c$. By \Cref{lem:dbsat,lem:repfreereal}, we may first assume that the input query is variable-repetition-free. Applying \Cref{lem:dbsat,lem:c-sat,lem:norm} in sequence either resolves the instance or produces, in polynomial time, an equivalent pair $(\db',q')$, where $\db'$ is $q'$-saturated and $q'$ is saturated and normal. Moreover,
\(
\cgs{q'}\leq \cgs{q}\leq c.
\)
The pair $(\db',q')$ thus satisfies the assumptions of \Cref{lem:main}, and hence $\textsc{Solve}$ decides it in polynomial time. The claim follows by composing these equivalence-preserving polynomial-time reductions.
\end{proof}

\section{Conclusion}

We established polynomial-time combined complexity for consistent query answering over classes of self-join-free conjunctive queries with acyclic attack graphs and unary primary keys, provided that the query's closure generator size is bounded. The proof builds novel variable-level techniques on top of earlier attack graph theory.
Without the boundedness assumption the problem is $\Pi_2^{\sf P}$-complete, while dropping the arity restriction but keeping the boundedness assumption yields $\coNP$-completeness.
A full summary of our results is in \Cref{tab:results}.

We leave open the exact parameterized complexity, whether the polynomial-time result for unary primary keys extends from bounded closure generator size to bounded closure treewidth, the identification of a tractable fragment with
  composite primary keys, and extensions beyond the first-order-rewritable fragment of self-join-free conjunctive queries.


\bibliography{main}

\appendix

\paragraph*{Declaration on the use of generative AI.}
The paper was mostly written before the advent of AI generated mathematical proofs.
Generative AI tools were used for language editing and proofreading, while the research problem, concepts, algorithms, and technical results were developed without generative AI assistance. However, in the end there were two exceptions. OpenAI’s GPT-5.6 Sol produced a $W[1]$-hardness proof in response to our comment suggesting its likelihood. This
proof was based on our $\Pi^{\mathsf P}_2$-hardness construction, and it was subsequently strengthened to co-$\mathrm{W}[3]$-hardness by the same model. Following this, we further generalized the proof to co-$\mathrm{W}[t]$-hardness for every fixed $t$.
Finally, the tables were turned. OpenAI’s GPT-6 Astra produced the \coNP-hardness proof for \Cref{thm:binary-hardness}, and we performed language editing and proofreading for it.

\paragraph*{Appendix overview.}
\Cref{sect:appaux} collects auxiliary concepts and lemmas.
\Cref{sect:app:dbsat,sect:appsat,sect:appnormal} provide the omitted proofs for
database saturation, query saturation, and query normalization, respectively.
\Cref{sect:appprune,sect:appsub} develop the pruning and substitution reductions.
\Cref{sect:appmain} proves the main lemma, and \Cref{sect:apphard,sect:apphard2} establish the hardness results for unary and binary primary keys, respectively.

\section{Auxiliary Concepts and Results}\label{sect:appaux}
In the technical treatment, we partition a query $q$ from $\sjfuabcq$ into  two subqueries $\modei{q}$ and  $\modec{q}$, defined as follows:
\begin{itemize}
\item $F\in \modec{q}$ if for all $\fd{x}{y}\in \FD{F}$ it holds that $\fd{x}{y}\in \SFD{q}$.
\item $F\in \modei{q}$ if there exists $\fd{x}{y}\in \FD{F}$ such that $\fd{x}{y}\in \WFD{q}$.
\end{itemize}

Then, we define the size conventions used in the paper.
For a finite set of atoms $q$, we define:
\begin{align*}
\#\text{rel}(q) &\defeq \text{the number of occurrences of relation symbols in } q,\\
\#\text{var}(q) &\defeq \text{the number of occurrences of variables in } q,\\
\#\text{distvar}(q) &\defeq \text{the number of variables in } q,\\
\#\text{const}(q) &\defeq \text{the number of occurrences of constant symbols in } q.
\end{align*}
We define the \emph{size} of $q$ as
\[
\occq{q} \;\defeq\; \#\text{rel}(q) + 2\#\text{var}(q) + \#\text{distvar}(q)+ \#\text{const}(q).
\]
Thus, replacing a variable by a constant strictly decreases $\occq{q}$, and so does replacing two variables with one variable.
Observe that $\occq{q}$ remains linearly bounded by a more standard way of defining the size of $q$, such
as $\#\text{rel}(q) + \#\text{var}(q) + \#\text{const}(q)$.

The following lemma will be used in later reductions.
\smallskip
\begin{lemma}\label{lem:nscc}
Let $G=(V,E)$ be a directed graph. 
\begin{enumerate}
\item\label{it:ekaG} If $u,v\in V$ are two vertices such that $u$ is connected to $v$ in $G$, and $G' \defeq (V,E\cup\{(u,v)\})$, then $\nscc{G'}= \nscc{G}$.
\item\label{it:tokaG} If $x,y,z \in V$ are three vertices such that $(x,y),(y,z),(x,z)\in E$, and $G' \defeq (V,E\setminus \{(x,z)\})$, then $\nscc{G'}= \nscc{G}$.
\end{enumerate}
\end{lemma}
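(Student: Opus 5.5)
The plan is to characterise the source SCCs of a digraph through reachability and check, in each case, that this structure is unchanged by the edge modification. The basic observation is that a source SCC of $G$ is exactly an equivalence class $[w]$ of mutual reachability with no incoming edge from outside. Equivalently, every vertex that reaches $w$ is also reachable from $w$. So if two graphs on the same vertex set have the same reachability relation, they have the same SCCs and the same source SCCs. In particular $\nscc{G'}=\nscc{G}$. Both items then reduce to showing that the modification leaves reachability unchanged, with some care in item~\ref{it:ekaG}.

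For item~\ref{it:tokaG}, I would argue directly. Every walk in $G$ that uses the edge $(x,z)$ can be rerouted through $(x,y),(y,z)$, both of which remain in $G'$. Hence reachability in $G$ and in $G'$ coincide. (If $y=z$ or $x=y$, the edge $(x,z)$ is itself one of the two retained edges and nothing is removed, so this case is trivial.) By the observation above, the source SCCs coincide.

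For item~\ref{it:ekaG}, I read ``$u$ is connected to $v$'' as ``$v$ is reachable from $u$ in $G$''. Under that reading, adding $(u,v)$ creates no new reachability: any walk using the new edge can replace it by an existing $u$–$v$ path. So reachability, and hence the source SCCs, are unchanged, exactly as in item~\ref{it:tokaG}.

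The main point needing care is this interpretation of ``connected''. If it only meant connection in the underlying undirected graph, the claim would fail. For example, take $V=\{u,v\}$ and $E=\{(v,u)\}$: then $G$ has one source SCC $\{v\}$, while $G'$ has the single SCC $\{u,v\}$, which is also one source. That example happens to survive, but $V=\{u,v,w\}$ with $E=\{(v,u),(w,u)\}$ gives $2$ source SCCs in $G$ and $1$ in $G'$. So I would state explicitly that directed reachability from $u$ to $v$ is intended, since that is how the lemma is used when consistent edges $\fd{x}{z}\in\FDclosure{q}$ are added during saturation. With that reading, the proof is the short rerouting argument above.
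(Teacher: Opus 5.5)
Your proof is the paper's: both items are reduced to the claim that $G$ and $G'$ have the same reachability relation. You also correctly spell out the ``straightforward argument'' the paper leaves implicit: an SCC $[w]$ is a source iff every vertex reaching $w$ is reachable from $w$, which depends on reachability alone. Your directed reading of ``connected'' in item~1 is the one the paper's ``holds clearly'' presupposes, and your undirected counterexample is right.

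The parenthetical on the degenerate case of item~2 is wrong, though. If $y=z\neq x$, then $(x,z)$ \emph{is} the edge $(x,y)$; if $x=y\neq z$, it is $(y,z)$. In both cases $E\setminus\{(x,z)\}$ deletes it, so nothing remains to reroute through.

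The statement in fact fails there. Take $V=\{x,z\}$, $E=\{(x,z),(z,z)\}$ and $y=z$. Then $G$ has the single source SCC $\{x\}$, whereas in $G'$ both $\{x\}$ and $\{z\}$ are sources. So item~2 must be read with $y\notin\{x,z\}$ (e.g.\ $x,y,z$ pairwise distinct), which is exactly what the rerouting needs. This also holds where the item is applied, in \Cref{lem:removecfd2}: there $y,z$ are distinct non-key variables of a variable-repetition-free atom $T(\underline{x},y,z,\vec{v})$.

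A smaller precision point concerns your justification of the directed reading. In the saturation step, the directed path from $x$ to $z$ comes from $\fd{x}{y}\in\FD{F}$ and $\fd{y}{z}\in\SFD{q}$ (\Cref{def:saturation}). It does not come from $\fd{x}{z}\in\FDclosure{q\setminus\{F\}}$, which on its own only guarantees reachability from $x$ or from $\emp$.
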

\begin{proof}
We claim that in both cases, a vertex $w_0$ is connected to another vertex $w_1$ in $G$ if and only if this holds in $G'$. The statements of the items then follow by a straightforward argument.
In {\Cref{it:ekaG}}, the claim holds clearly.
In {\Cref{it:tokaG}}, the claim holds because the edge $(x,z)$ in a path of $G$ can be replaced by two edges $(x,y),(y,z)$ in $G'$.
\end{proof}

\section{Lemmas for \Cref{sect:dbsat}}\label{sect:app:dbsat}
Consider the function $\dbsaturate$ given in \Cref{alg:dbsaturate}.  
The following example illustrates the violations detected by the algorithm and the blocks it removes.

\smallskip
\begin{example}
Consider the following query from $\sjfubcq$:
\[
q \;\defeq \;\{
R(\underline{c},x,y),
\;
S(\underline{z},x),
\;
S'(\underline{z},x),
\;
T(\underline{z},y),
\;
T'(\underline{z},y),
\;
U(\underline{x},v),
\;
U'(\underline{x},v),
\;
V(\underline{y},v),
\;
V'(\underline{y},v)
\}.
\]
Then, $\FD{q}=\{\fd{\emp}{x},\fd{\emp}{y},\fd{z}{x},\fd{z}{y},\fd{x}{v},\fd{y}{v}\}$ and $\SFD{q}=\{\fd{z}{x},\fd{z}{y},\fd{x}{v},\fd{y}{v}\}$.
Consider then a database $\db$ given as follows:
\[
\begin{array}{lllll}
\begin{array}[t]{r|lll}
R & \underline{c} & x & y\\
\hline
 & a & 1 & 20\\
  \cdashline{2-4}
 & c & 2 & 20\\\cdashline{2-4}
\end{array}
&
\begin{array}[t]{r|ll}
S & \underline{z} & x\\
\hline
 & d & 1\\
   & d & 2\\
  \cdashline{2-3}
 & e & 2\\\cdashline{2-3}
\end{array}
&
\begin{array}[t]{r|ll}
S' & \underline{z} & x\\
\hline
 & d & 1\\
 \cdashline{2-3}
 & e & 2\\\cdashline{2-3}
\end{array}
&
\begin{array}[t]{r|ll}
T & \underline{z} & y\\
\hline
 & d & 10\\
  \cdashline{2-3}
 & e & 20\\\cdashline{2-3}
\end{array}
&
\begin{array}[t]{r|ll}
T' & \underline{z} & y\\
\hline
 & d & 10\\
  \cdashline{2-3}
 & e & 20\\\cdashline{2-3}
\end{array}
\\[12ex]
\begin{array}[t]{r|ll}
U & \underline{x} & v\\
\hline
 & 1 & 100\\
  \cdashline{2-3}
 & 2 & 200\\\cdashline{2-3}
\end{array}
&
\begin{array}[t]{r|ll}
U' & \underline{x} & v\\
\hline
 & 1 & 100\\
  \cdashline{2-3}
 & 2 & 200\\\cdashline{2-3}
\end{array}
&
\begin{array}[t]{r|ll}
V & \underline{y} & v\\
\hline
 & 10 & 100\\
  \cdashline{2-3}
 & 20 & 200\\\cdashline{2-3}
\end{array}
&
\begin{array}[t]{r|ll}
V' & \underline{y} & v\\
\hline
 & 10 & 100\\
  \cdashline{2-3}
 & 20 & 200\\\cdashline{2-3}
\end{array}
\end{array}
\]
 The database violates $q$-saturation in several ways.
\begin{itemize}
\item $R(\underline{a},1,20)$ witnesses that $\db$ is not pairwise consistent, since there is no valuation $\theta$ mapping $R$ to this fact.
\item $S(\underline{d},1),S(\underline{d},2)$ witness that $\db$ is not functionally consistent.
\item $T(\underline{d},10)$ witnesses that $\db$ is not pairwise consistent, since is no valuation $\theta$ maps $T$ to this fact while simultaneously mapping $R$ to some fact in $\db$.
\item $R(\underline{a},1,20),U(\underline{1},100)$ witness that $\db$ is not collision consistent, since $V(\underline{20},100)$ does not belong to $\db$.
\item $R(\underline{a},1,20)$ witnesses that $\db$ is not guarded, since there exists no constant $b$ such that $S(\underline{b},1)$ and $T(\underline{b},20)$ are in $\db$.
\end{itemize}
Consider the database $\db'$ obtained from $\db$ as the output of \Cref{alg:dbsaturate}.
This process simply drops the top block from each relation in $\db$. One can verify that $(\db,q)\in \cert$ if and only if $(\db',q)\in \cert$.
Note that the same process over $\db^*\defeq \db\setminus \{V(\underline{10}, 100)\}$ would remove the $V$-relation entirely (in fact,  it would remove the entire database), in which case the algorithm would return $\false$. 
\end{example}

\smallskip

\begin{algorithm}[H]
\caption{$\dbsaturate(\db,q)$ \label{alg:dbsaturate}}
\Input{A query $q$ and a database $\db$}
\Output{A $q$-saturated database $\db'$}

\Repeat{no while-loop is entered}{

    \While{there exists a violation of non-emptiness, pairwise consistency, or guardedness witnessed by
    \(R(\underline{a},\uk)\in\db\)}{
        remove from \(\db\) all facts of the form
        \(R(\underline{a},\uk)\)\;
    }

    \While{there exists a violation of functional consistency witnessed by
    \(R(\underline{a},b,\uk),R(\underline{a},c,\uk)\in\db\)}{
        remove from \(\db\) all facts of the form
        \(R(\underline{a},\uk)\)\;
    }

    \While{there exists a violation of collision consistency witnessed by
    \(R(\underline{a},\uk)\in\db\), possibly together with
    \(S(\underline{a'},b,\uk)\in\db\)}{
        remove from \(\db\) all facts of the form
        \(R(\underline{a},\uk)\)\;
    }
}
\If{\(\restrict{\db}{R}=\emptyset\) for some \(R\in{q}\)}{
    \Return{\(\false\)}\;
}
\Return{\(\db\)}\;\end{algorithm}

The previous example generalizes to the following database saturation lemma.

\dbsat*
\begin{proof}
It is straightforward to verify that each individual step $\db \mapsto \db'$ in the algorithm is such that $(\db,q)\in \cert$ if and only if $(\db', q)\in \cert$. 
Specifically, a repair of $\db'$ that falsifies $q$ can be extended to one of $\db$ falsifying $q$ by picking the violating fact from the block of facts of the form $R(\underline{a},\uk)$. The only exception is the case of functional consistency having two witnesses; since $R(\underline{x},y,\uk)\in q$ and $\fd{x}{y}\in \SFD{q}$ implies the existence of another atom of the form $S(\underline{x},y,\uk)\in q$, one here picks from $R(\underline{a},b,\uk)$ if $S(\underline{a},b,\uk)$ does not belong to $\rep'$, and $R(\underline{a},c,\uk)$ if $S(\underline{a},c,\uk)$ does not belong to $\rep'$.  

If in the end $\restrict{\db}{R}= \emptyset$ for some $R \in q$, then clearly $(\db,q)\notin\cert$. 

It is straightforward to verify that the procedure gives rise to a an equivalence-preserving polynomial-time reduction $(\db,q)\mapsto \db \cup\{\true,\false\}$.
\end{proof}

\section{Lemmas for \Cref{sect:sat}}\label{sect:appsat}
We first record three basic consequences of query saturation.

\begin{lemma}\label{lem:ctrans}
Let $q\in \sjfubcq$ be saturated. If $\fd{x}{y},\fd{y}{z}\in \SFD{q}$, then $\fd{x}{z}\in \SFD{q}$.
\end{lemma}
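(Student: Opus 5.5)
We argue directly from the definitions of $\SFD{q}$ and of saturatedness (\Cref{def:saturation}): we take the edge $\fd{x}{y}$ as item~\ref{it:sat1} and the edge $\fd{y}{z}$ as item~\ref{it:sat2}. The only thing left to establish is item~\ref{it:sat3}, namely $\fd{x}{z}\in\FDclosure{q\setminus\{F\}}$ for a suitable generating atom $F$.

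\emph{Trivial cases.} If $x=y$, the claim is just $\fd{y}{z}\in\SFD{q}$. If $y=z$, it is just $\fd{x}{y}\in\SFD{q}$. If $x=z$, then $\fd{x}{z}$ is a dummy edge and belongs to $\SFD{q}$ by definition. So we may assume that $x,y,z$ are pairwise distinct. Note that $y,z$ are variables, while $x$ may be $\emp$.

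\emph{Choosing $F$.} Since $\fd{x}{y}\in\SFD{q}$ and $x\neq y$, there are atoms $F\neq G$ with $\fd{x}{y}\in\FD{F}$ and $\fd{x}{y}\in\FDclosure{G}$. In particular $\keyvars{F}=x$. Likewise, $\fd{y}{z}\in\SFD{q}$ with $y\neq z$ gives atoms $H\neq H'$ with $\fd{y}{z}\in\FD{H}$ and $\fd{y}{z}\in\FDclosure{H'}$, so $\keyvars{H}=y$. Because $x\neq y$, we get $H\neq F$.

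\emph{Verifying item~\ref{it:sat3}.} Closure is monotone in the query, and $G,H\in q\setminus\{F\}$. Hence:
\begin{itemize}
\item $\fd{x}{y}\in\FDclosure{G}$ gives $y\in\cl{(\{x\}\setminus\{\emp\})}{q\setminus\{F\}}$;
\item $\fd{y}{z}\in\FD{H}$ gives $z\in\cl{\{y\}}{q\setminus\{F\}}$.
\end{itemize}
Closure is idempotent and transitive: reachability in $\FD{q\setminus\{F\}}$ from $\{x\}\cup\{\emp\}$ composes. Therefore $z\in\cl{(\{x\}\setminus\{\emp\})}{q\setminus\{F\}}$, that is, $\fd{x}{z}\in\FDclosure{q\setminus\{F\}}$. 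Saturatedness of $q$, applied to $F$, now yields $\fd{x}{z}\in\SFD{q}$.

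\emph{Main obstacle.} The one delicate point is making sure that removing $F$ does not destroy the derivation of $z$ from $y$. This is exactly what the observation $H\neq F$ handles: two edges generated by the same atom share its key, which is impossible here since $x\neq y$. The case $x=\emp$ needs no separate treatment, since both closures are taken relative to $\emp$-reachability in the same way.
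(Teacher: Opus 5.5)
Your proof is correct and follows the paper's argument: take $F$ generating $\fd{x}{y}$ with a witness $G\neq F$, note that $\fd{y}{z}\in\FD{q\setminus\{F\}}$, compose the two to obtain $\fd{x}{z}\in\FDclosure{q\setminus\{F\}}$, and apply saturatedness. Your explicit handling of the degenerate cases ($x=y$, $y=z$, $x=z$) and your justification of $H\neq F$ via distinct keys fill in details that the paper covers only with ``clearly''.
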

\begin{proof}
Now, $\fd{x}{y}\in \SFD{q}$ entails we find $F\in q$ and $G\in q\setminus\{F\}$ such that $\fd{x}{y}\in \FD{F}$ and $\fd{x}{y}\in \FDclosure{G}$. Clearly, also $\fd{y}{z}\in \FD{q\setminus \{F\}}$, and thus $\fd{x}{z}\in \FDclosure{q\setminus \{F\}}$. Hence $\fd{x}{z}\in \SFD{q}$ follows due to saturation.
\end{proof}

\begin{lemma}\label{lem:nextattack}
Let $q\in \sjfubcq$ be saturated. Suppose $\fd{x}{y}\in \WFD{q}$.
If $F\in q$ generates $\fd{x}{y}$, then $F\attacks{}y$.
\end{lemma}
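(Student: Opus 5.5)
We argue by contraposition. Suppose $F\nattacks{} y$ and show that $\fd{x}{y}\in\SFD{q}$, which contradicts $\fd{x}{y}\in\WFD{q}$. By definition, $F\attacks{}y$ holds iff $\notkeyvars{F}$ and $\{y\}$ are not separated by $\keyclosure{F}{q}$ in $\gaifman{q}$. Since $F$ generates $\fd{x}{y}$, we have $y\in\notkeyvars{F}$. Hence the trivial path consisting of the single vertex $y$ (or, if one insists on non-empty paths, any path from $y$ through a neighbour sharing $F$) connects $\notkeyvars{F}$ to $y$. The only way this is separated by $\keyclosure{F}{q}$ is that $y$ itself lies in $\keyclosure{F}{q}=\cl{\keyvarsvars{F}}{q\setminus\{F\}}$.

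So $F\nattacks{}y$ forces $y\in\cl{\keyvarsvars{F}}{q\setminus\{F\}}$. We then split into cases according to $\keyvars{F}$. If $\keyvars{F}=x$ is a variable, then $\keyvarsvars{F}=\{x\}$ and $y\in\cl{x}{q\setminus\{F\}}$. By the definition in \eqref{eq:transclosure}, this is exactly $\fd{x}{y}\in\FDclosure{q\setminus\{F\}}$. If $\keyvars{F}=\emp$, then $\keyvarsvars{F}=\emptyset$ and $y\in\cl{\emptyset}{q\setminus\{F\}}$. By the convention $\cl{V}{q}=\reach{V\cup\{\emp\}}{\FD{q}}\setminus\{\emp\}$, this again means $\fd{\emp}{y}\in\FDclosure{q\setminus\{F\}}$.

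The main obstacle is converting $\fd{x}{y}\in\FDclosure{q\setminus\{F\}}$ into membership in $\SFD{q}$. The definition of $\SFD{q}$ requires a single atom $G\neq F$ with $\fd{x}{y}\in\FDclosure{G}$, not closure under the whole of $q\setminus\{F\}$. Here we use saturation (\Cref{def:saturation}) with the dummy edge. We take item~\ref{it:sat1} as $\fd{x}{y}\in\FD{F}$, item~\ref{it:sat2} as $\fd{y}{y}\in\SFD{q}$ (dummy edges with $x=y$ are always consistent), and item~\ref{it:sat3} as $\fd{x}{y}\in\FDclosure{q\setminus\{F\}}$. Saturation then yields $\fd{x}{y}\in\SFD{q}$, contradicting $\fd{x}{y}\in\WFD{q}$, since $\WFD{q}$ and $\SFD{q}$ are disjoint on $\FD{q}$.

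Two points still need checking: that $\FDclosure{\cdot}$ is defined for subqueries in the same way as in \eqref{eq:transclosure}, and that the separation argument treats the degenerate case $y\in\notkeyvars{F}$ correctly. By the usual convention, $x\in\keyclosure{F}{q}$ separates any path through it, so non-separation of the one-vertex path is equivalent to $y\notin\keyclosure{F}{q}$.
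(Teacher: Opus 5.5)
Your proof is correct and follows the paper's argument: non-attack together with the trivial one-vertex path at $y$ forces $y\in\keyclosure{F}{q}$, i.e.\ $\fd{x}{y}\in\FDclosure{q\setminus\{F\}}$, and saturation applied with the dummy edge $\fd{y}{y}\in\SFD{q}$ then yields $\fd{x}{y}\in\SFD{q}$, a contradiction. Your explicit handling of the case $\keyvars{F}=\emp$ spells out a step the paper leaves implicit, and in your final sentence you mean $y\in\keyclosure{F}{q}$ rather than $x$.
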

\begin{proof}
Toward contradiction, assume $F$ does not attack $y$.
Since the empty path connects $\notkeyvars{F}$ and $y$ in $\gaifman{q}$, we obtain
$y\in \keyclosure{F}{q}$. This means $\fd{x}{y}\in \FDclosure{q \setminus \{F\}}$.
Since $\fd{x}{y}\in \FD{F}$ and, by definition, $\fd{y}{y}\in \SFD{q}$, we obtain by saturation that
$\fd{x}{y}\in \SFD{q}$, a contradiction.
\end{proof}

\begin{lemma}\label{lem:samesourcesat}
Let $q\in \sjfuabcq$ be saturated. If $\fd{x}{z},\fd{y}{z}\in \WFD{q}$, then $x=y$.
\end{lemma}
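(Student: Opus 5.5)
The plan is a direct contradiction with acyclicity of the attack graph, with \Cref{lem:nextattack} doing the main work. Suppose $\fd{x}{z},\fd{y}{z}\in\WFD{q}$ with $x\neq y$; we aim to produce a $2$-cycle in the attack graph of $q$.

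First, fix the generating atoms. Choose $F\in q$ generating $\fd{x}{z}$ and $G\in q$ generating $\fd{y}{z}$. Then $\keyvars{F}=x$ and $\keyvars{G}=y$, where either of them may be $\emp$. Keys are unary, so each atom has exactly one source $\keyvars{\cdot}$. Hence $x\neq y$ forces $F\neq G$. Both atoms contain $z$ in a non-key position, so $z\in\atomvars{F}\cap\atomvars{G}$.

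Second, apply \Cref{lem:nextattack}, which only needs $q$ saturated and the edge to lie in $\WFD{q}$. Applied to $\fd{x}{z}$ and $F$, it gives $F\attacks{q} z$. Applied to $\fd{y}{z}$ and $G$, it gives $G\attacks{q} z$.

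Third, read off the attack edges. The attack graph has an edge $F\attacks{q}G$ whenever $F\neq G$ and $F$ attacks some variable of $\queryvars{G}$. Since $z\in\atomvars{G}$ and $F\attacks{q}z$, we get $F\attacks{q}G$. Symmetrically, $z\in\atomvars{F}$ and $G\attacks{q}z$ give $G\attacks{q}F$. This is a cycle of length two, contradicting $q\in\sjfuabcq$. Hence $x=y$.

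The only point needing care is the distinctness $F\neq G$, which is exactly where unary keys are used. This includes the case where one source is $\emp$: an atom with a constant key has source $\emp$, and one with a variable key does not, so the sources still separate the atoms. The rest is a direct unfolding of the definitions.
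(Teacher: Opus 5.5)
Your proof is correct and follows the paper's own argument. The paper likewise takes distinct generating atoms $F\neq G$ (forced by $x\neq y$) and applies \Cref{lem:nextattack} to both edges. Since $z$ occurs in both atoms, the two attacks give a $2$-cycle in the attack graph, contradicting acyclicity.
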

\begin{proof}
Toward contradiction, suppose $\fd{x}{z},\fd{y}{z}\in \WFD{q}$ and $x\neq y$. Then we find two distinct atoms $F,G\in q$ such that $\fd{x}{z}\in \FD{F}$ and $\fd{y}{z}\in \FD{G}$. By \Cref{lem:nextattack}, this leads to a cycle in the attack graph of $q$, a contradiction. Hence the statement of the lemma holds.
\end{proof}

\subsection{Controlled and Size-Monotone Reductions}\label{sect:controlled}
Throughout the paper,  a reduction
\[
f\colon \DB\times \sjfuabcq \to (\DB\times \sjfuabcq)\cup\{\true,\false\}
\]
is called \emph{strict} if it strictly decreases the query size, i.e., if $\size{q'}< \size{q}$ whenever
$(\db',q')=f(\db,q)\notin \{\true,\false\}$.

\smallskip

\begin{definition}[$c$-Controlled and Size-Monotone Reduction]\label{def:sizecontrolled}
Let \(c\in\mathbb N\).
A reduction
\[
f\colon \DB\times \sjfuabcq \to (\DB\times \sjfuabcq)\cup\{\true,\false\}
\]
is \emph{$c$-controlled} if  
 for each \((\db,q)\in \DB\times \sjfuabcq\), having
\(
(\db',q')\defeq f(\db,q)\notin\{\true,\false\},
\)
the following hold:
\begin{enumerate}[(i)]
\item\label{it:sc-query-size}
\(\size{q'}\leq \size{q}+c\);

\item\label{it:sc-db-size}
\(\size{\db'}\leq \size{\db}+c|\adom{\db}|\);

\item\label{it:sc-adom}
\(\adom{\db'}\subseteq \adom{\db}\);

\item\label{it:sc-vars}
\(\queryvars{q'}\subseteq \queryvars{q}\).
\end{enumerate}
The reduction 
is \emph{size-monotone} if it is $0$-controlled.
\end{definition}

It follows immediately that, given a fixed constant $c$, any polynomial-length composition of polynomial-time $c$-controlled reductions remains computable in polynomial time.
\smallskip
\begin{lemma}\label{lem:sizecontrolled}
Let \(c\in\mathbb N\).
Any polynomial-length sequence of $c$-controlled
polynomial-time reductions can be evaluated in polynomial time.
\end{lemma}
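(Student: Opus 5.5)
We need to prove \Cref{lem:sizecontrolled}: a polynomial-length sequence of $c$-controlled polynomial-time reductions can be evaluated in polynomial time. The plan is to show by induction that every intermediate instance stays polynomially bounded in the original input. Then the total running time is a sum of polynomially many polynomial costs.

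Fix an input $(\db_0,q_0)$ with size $N\defeq\size{\db_0}+\size{q_0}$. Let $f_1,\dots,f_m$ be the reductions, where $m\le p(N)$ for a polynomial $p$. Set $(\db_i,q_i)\defeq f_i(\db_{i-1},q_{i-1})$, and stop as soon as some $f_i$ returns $\true$ or $\false$. First I would apply \Cref{it:sc-adom} of \Cref{def:sizecontrolled} inductively. This gives $\adom{\db_i}\subseteq\adom{\db_0}$, and hence $|\adom{\db_i}|\le A\defeq|\adom{\db_0}|\le N$ for all $i$. This is the key invariant: the additive slack in \Cref{it:sc-db-size} is always measured against the fixed quantity $A$, not against the growing database.

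Next I would telescope \Cref{it:sc-query-size} and \Cref{it:sc-db-size}. This gives $\size{q_i}\le\size{q_0}+ic\le N+c\,p(N)$ and $\size{\db_i}\le\size{\db_0}+icA\le N+c\,p(N)\,N$. So every intermediate instance has size bounded by a fixed polynomial $r(N)$. Since $c$ is constant, $r$ does not depend on the sequence. Assume each $f_i$ runs in time $t(n)$ for a common polynomial $t$. This is implicit in the statement, and I would state it as the interpretation of ``polynomial-time reductions'' in a uniform family. Then step $i$ costs at most $t(r(N))$. Summing over at most $p(N)$ steps gives total time $p(N)\,t(r(N))$, which is polynomial. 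Checking for the $\true$/$\false$ outcome and passing the result along adds only linear overhead per step.

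The only delicate point is avoiding compounding growth, where the size bound is applied multiplicatively to the previous step. That is exactly what \Cref{it:sc-adom} rules out. Without it, $|\adom{\db_i}|$ could grow, and the slack $c|\adom{\db_i}|$ could then blow up geometrically. I therefore expect the main obstacle to be justifying the uniform running-time polynomial $t$. I would handle it by stating explicitly that the reductions come from a fixed finite collection of algorithms, or are parameterized by polynomially bounded data, each with a polynomial runtime bound. Taking the maximum of these bounds gives $t$. \Cref{it:sc-vars} is not needed for the bound, though it confirms that no fresh variables inflate $\#\text{distvar}$.
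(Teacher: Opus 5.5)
Your proof is correct and is the routine verification the paper has in mind when it says the lemma ``follows immediately''; the paper gives no further proof. Item~(iii) of $c$-controlledness freezes the active domain, so items~(i) and~(ii) telescope to additive, polynomially bounded growth rather than compounding growth. Your explicit requirement of a single runtime polynomial shared by all reductions is left implicit in the paper, and it holds in the paper's application, where every step is an instance of the one algorithm $\saturatestep_{x,z}$.
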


\subsection{Saturation Lemma}
We illustrate the concept of query saturation via the following example. After the initial database saturation step, we have a problem instance $(\db,q)$ where $q$ is a query from $\sjfubcq$ and $\db$ is a $q$-saturated database.
\smallskip
\begin{example}
Consider a query
\[
q \defeq \{
R(\underline{x},y),
S(\underline{y},z),
S'(\underline{y},z),
U(\underline{x},v),
V(\underline{v},z)
\}.
\]
It holds that $\FD{q}=\{\fd{x}{y},\fd{y}{z},\fd{x}{v},\fd{v}{z}\}$ and $\SFD{q}=\{\fd{y}{z}\}$.
We also observe that $q$ is not saturated, since $\fd{x}{y}\in \FD{R}, \fd{y}{z}\in \SFD{q}, \fd{x}{z}\in \FDclosure{q\setminus \{R\}}$,
while $\fd{x}{z}\notin \SFD{q}$.

Consider a database $\db$ given as follows:
\[
\begin{array}{ccccc}
\begin{array}[t]{r|ll}
R & \underline{x} & y\\
\hline
 & a & 0\\
 & a & 1\\\cdashline{2-3}
 & b & 1\\
 & b & 2\\\cdashline{2-3}
\end{array}
&
\begin{array}[t]{r|ll}
S & \underline{y} & z\\
\hline
 & 0 & d\\\cdashline{2-3}
 & 1 & e\\\cdashline{2-3}
 & 2 & e\\\cdashline{2-3}
\end{array}
&
\begin{array}[t]{r|ll}
S' & \underline{y} & z\\
\hline
 & 0 & d\\\cdashline{2-3}
 & 1 & e\\\cdashline{2-3}
  & 2 & e\\\cdashline{2-3}
\end{array}
&
\begin{array}[t]{r|ll}
U & \underline{x} & v\\
\hline
 & a & f\\
 & a & g\\\cdashline{2-3}
 & b & g\\
 & b & h\\\cdashline{2-3}
\end{array}
&
\begin{array}[t]{r|ll}
V & \underline{v} & z\\
\hline
 & f & d\\\cdashline{2-3}
 & g & e\\\cdashline{2-3}
 & h & e\\\cdashline{2-3}
\end{array}
\end{array}
\]
We note that $\db$ is $q$-saturated. Specifically, the relations $S$ and $S'$ are consistent, following the fact that $\fd{y}{z}\in \SFD{q}$.
Now, $(\db,q)\in \cert$ if and only if $(\db',q)\in \cert$, where $\db'\defeq \db\setminus \{R(\underline{a},0),R(\underline{a},1)\}$.
To see why, consider a repair $\rep$ of $\db'$ falsifying $ q$. Then, $\rep$ picks either $U(\underline{a},f)$ or $U(\underline{a},g)$. If $\rep$ picks $U(\underline{a},f)$, let $\rep'$ be obtained from $\rep$ by adding $R(\underline{a},1)$. Then, $\rep'$ contains $R(\underline{a},1),S(\underline{1},e),U(\underline{a},f),V(\underline{f},d)$. In particular, the conflicting values for $z$ entail that no satisfying assignment can map $R(\underline{x},y)$ to $R(\underline{a},1)$. The case where $\rep$ picks $U(\underline{a},g)$ is treated symmetrically. Therefore, the repair $\rep'$ of  $\db$ obtained this way falsifies $q$.

We conclude that the block $\{R(\underline{a},0),R(\underline{a},1)\}$ can be safely removed from $\db$, resulting in $\db'$. A consequent application of database saturation yields $\db_0 \defeq \dbsaturate(\db',q)$, depicted below:
\[
\begin{array}{ccccc}
\begin{array}[t]{r|ll}
R & \underline{x} & y\\
\hline
 & b & 1\\
 & b & 2\\\cdashline{2-3}
\end{array}
&
\begin{array}[t]{r|ll}
S & \underline{y} & z\\
\hline
 & 1 & e\\\cdashline{2-3}
 & 2 & e\\\cdashline{2-3}
\end{array}
&
\begin{array}[t]{r|ll}
S' & \underline{y} & z\\
\hline
 & 1 & e\\\cdashline{2-3}
  & 2 & e\\\cdashline{2-3}
\end{array}
&
\begin{array}[t]{r|ll}
U & \underline{x} & v\\
\hline
 & b & g\\
 & b & h\\\cdashline{2-3}
\end{array}
&
\begin{array}[t]{r|ll}
V & \underline{v} & z\\
\hline
 & g & e\\\cdashline{2-3}
 & h & e\\\cdashline{2-3}
\end{array}
\end{array}
\]
By \Cref{lem:dbsat}, we have that $(\db',q)\in \cert$ if and only if $(\db_0,q)\in \cert$.
Now, we observe that any selection from the $R$-block is associated with a unique value of variable $z$. Hence, letting
\begin{align*}
q^* &\defeq q \cup \{T(\underline{x},z),T'(\underline{x},z)\},\\
\db^* &\defeq \db_0 \cup \{T(\underline{b},e),T'(\underline{b},e)\},
\end{align*}
we obtain that $(\db_0,q)\in \cert$ if and only if $(\db^*,q^*)\in \cert$, where $q^*$ is a saturated query and $\db^*$ is a $q^*$-saturated database. This concludes the example.
\end{example}

The construction below generalizes the example. We first describe an individual saturation step and prove its correctness. At this stage of the overall reduction,
the database $\db$ is $q$-saturated.

{\bf Saturation Step.} Let $q\in \sjfuabcq$. A pair 
\begin{equation}\label{eq:possiblyjoinable}
(x,z)\in (\queryvars{q}\cup\{\emp\})\times \queryvars{q} \setminus \{(x,x)\mid x\in \queryvars{q}\}
\end{equation}
 is called \emph{possible joinable (over $q$)}, and it is \emph{joinable (over $q$)} if there exists $F\in q$ and $y\in \queryvars{q}$
such that $F,x,y,z$ satisfy \Cref{it:sat1,it:sat2,it:sat3} of \Cref{def:saturation}. 
Note that $F$ is an atom of the form $T(\underline{t},y,\uk)$, where $t=x$ (if $x$ is a variable) or $t=a\in \Const$ (if $x=\emp$).
Given such $F,x,y,z$, we define
\[
  q'\defeq q \cup \{R(\underline{x},z),S(\underline{x},z)\},  
\]
if $x$ is a variable, and
\[
  q'\defeq q \cup \{R(\underline{a},z),S(\underline{a},z)\},  
\]
if $x=\emp$. Furthermore, $R$ and $S$ are fresh relation names, i.e., $R,S\notin \sch(q)$.

 Since $\fd{y}{z}\in \SFD{q}$, it holds that either (i) $y=z$ or (ii) there exists an atom of the form $U(\underline{y},z,\uk)$ in $q$. 
 Let $\db$ be a database. We let $\db_0$ be a database obtained from $\db$ by applying the following case-specific rule until no longer possible:
\begin{enumerate}[(i)]
\item {\bf Case $y= z$.} If $T(\underline{a},b,\uk),T(\underline{a},b',\uk)\in \db$, where $b\neq b'$, then  
remove $\blockof{T}{a}{\db}$ from $\db$.
\item {\bf Case $U(\underline{y},z)\in q$.} If $T(\underline{a},b,\uk),T(\underline{a},b',\uk),U(\underline{b},c,\uk),U(\underline{b'},c',\uk)\in \db$, where $c\neq c'$, then remove $\blockof{T}{a}{\db}$ from $\db$.
\end{enumerate}

Furthermore, we let $\db_1$ be obtained from $\db_0$ by applying the following case-specific rule until no longer possible:
\begin{enumerate}[(i)]
\item {\bf Case $y= z$.} If $T(\underline{a},b,\uk)\in \db_0$, then add $R(\underline{a},b),S(\underline{a},b)$ to $\db_0$.
\item {\bf Case $U(\underline{y},z)\in q$.} If $T(\underline{a},b,\uk),U(\underline{b},c,\uk)\in \db_0$, then add $R(\underline{a},c),S(\underline{a},c)$ to $\db_0$.
\end{enumerate}

Finally, we define $\db'\defeq \dbsaturate(\db_1,q')$. Given a joinable pair $(x,z)$, we let $\saturatestep_{x,z}(\db,q)=\false$ if $\db'\in \{\true,\false\}$,  and otherwise (when $\db'$ is a database), $\saturatestep_{x,z}(\db,q)=(\db',q')$.

We can now proceed to the saturation lemma, which is the key lemma in this work. Before it, we prove a simple helping lemma.

\begin{lemma}\label{lem:easy}
Let $q\in \sjfubcq$, and let $\fd{x}{y}\in \FDclosure{q}$. Let $\rep$ be a repair of a database $\db$. 
Let $\theta$ and $\theta'$ be two valuations such that $\theta(q)\subseteq \rep$, $\theta'(q)\subseteq \rep$. If $x$ is a variable and $\theta(x)=\theta'(x)$, then $\theta(y)=\theta'(y)$. If $x=\emp$, then $\theta(y)=\theta'(y)$.
\end{lemma}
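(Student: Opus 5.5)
We argue by induction along the closure computation. Recall that $\fd{x}{y}\in\FDclosure{q}$ means $y\in\cl{(\{x\}\setminus\{\emp\})}{q}$, and that $\cl{V}{q}=\reach{V\cup\{\emp\}}{\FD{q}}\setminus\{\emp\}$. Hence there is a path in $\FD{q}$ from $x$ or from $\emp$ to $y$. (When $x=\emp$, the path necessarily starts at $\emp$.) The plan is to show, by induction on the length of such a path $w_0,w_1,\dots,w_m=y$, that $\theta(w_i)=\theta'(w_i)$ for every variable $w_i$ on it.

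\emph{Base case.} If $w_0=x$ is a variable, the hypothesis $\theta(x)=\theta'(x)$ gives it directly. If $w_0=\emp$, there is nothing to prove for $w_0$ itself. What matters is only the next step: an edge out of $\emp$ is generated by an atom whose key is a constant, and both valuations fix constants.

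\emph{Inductive step.} Consider an edge $\fd{w_i}{w_{i+1}}\in\FD{q}$. By \eqref{eq:FDdef} it is generated by some atom $F=R(\underline{t},\dots)\in q$ with $w_{i+1}\in\notkeyvars{F}$, where $t=w_i$ if $w_i$ is a variable and $t$ is a constant if $w_i=\emp$. We have $\theta(F),\theta'(F)\in\rep$, and both facts are $R$-facts. Their key values are $\theta(t)$ and $\theta'(t)$, which are equal: either by the induction hypothesis (when $t$ is a variable) or because valuations are the identity on constants. Thus the two facts are key-equal.

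A repair is consistent, so key-equal facts in $\rep$ are identical, giving $\theta(F)=\theta'(F)$. The variable $w_{i+1}$ sits at a fixed non-key position of $F$, so $\theta(w_{i+1})=\theta'(w_{i+1})$. This holds even when $q$ is not variable-repetition-free, because we compare the same position in both facts.

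Concluding at $w_m=y$ proves both claims; the case $y=x$ is the trivial path. The only subtlety is the $\emp$-rooted case. It relies on $\emp$-edges being exactly those generated by constant-key atoms, and on $\theta,\theta'$ being the identity on constants. Self-join-freeness is not actually needed, since consistency of $\rep$ is applied to the images of a single atom $F$.
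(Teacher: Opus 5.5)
Your proof is correct and follows the same route as the paper's proof. Both extract a path in $\FD{q}$ from $x$ or from $\emp$ to $y$, then propagate $\theta=\theta'$ edge by edge using consistency of $\rep$, with edges out of $\emp$ handled via atoms whose key is a constant. Your version spells out the inductive step, which the paper leaves as a ``simple induction''.
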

\begin{proof}
Suppose $x=\emp$. From $\fd{x}{y}\in \FDclosure{q}$ we obtain a path $x_1,x_2,\dots ,x_n$ in $\FD{q}$ where $x_1 =\emp$ and $x_n=y$.
Since $\rep$ is consistent and $\theta(q)\subseteq \rep$ and $\theta'(q)\subseteq \rep$, we obtain $\theta(x_2)=\theta'(x_2)$. Informally, 
the edge $\fd{\emp}{x_2}$ arises from an atom of the form $R(\underline{a},x_2,\uk)$, where $a$ is a constant, and
both $\theta$ and $\theta'$ must pick the unique fact from the $R$-block of $\rep$ whose primary-key value is $a$.  Then, by simple induction we obtain
that $\theta(y)=\theta'(y)$.

Suppose $x$ is a variable. From $\fd{x}{y}\in \FDclosure{q}$ we obtain a path $x_1,x_2,\dots ,x_n$ in $\FD{q}$ where $x_1 \in \{x,\emp\}$ and $x_n=y$. Since $\rep$ is consistent and $\theta(x)=\theta'(x)$ it again follows by a simple induction that $\theta(y)=\theta'(y)$.
\end{proof}

\begin{lemma}[Saturation Lemma]\label{lem:addcfd}
Let $\mathcal{C}$ be a set of pairs $(\db,q)\in \DB\times \sjfabcq$ such that $\db$ is $q$-saturated.
There exists \(c\in\mathbb N\) such that for each $q\in \sjfabcq$ and each joinable pair $(x,z)$, $(\db,q) \mapsto \saturatestep_{x,z}(\db,q)$ is an equivalence-preserving, generator-monotone, $c$-controlled,
polynomial-time reduction from $\mathcal{C}$ to  $\mathcal{C} \cup \{\true,\false\}$. Moreover, $q'$ has an acyclic attack graph whenever $q$ has.
\end{lemma}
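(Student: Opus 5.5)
The proof splits $\saturatestep_{x,z}$ into its three stages, $\db\mapsto\db_0$ (block removal), $(\db_0,q)\mapsto(\db_1,q')$ (adding the fresh atoms $R,S$ and their facts), and $\db_1\mapsto\dbsaturate(\db_1,q')$. Equivalence is checked for each stage separately and then composed. The last stage is covered by \Cref{lem:dbsat}, since it is an equivalence-preserving reduction that outputs a $q'$-saturated database; this also gives membership in $\mathcal{C}\cup\{\true,\false\}$ once we know $q'\in\sjfabcq$.

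For the first stage I would argue one removal at a time, following the informal argument after \Cref{def:saturation}. Removing a block can only help repairs falsify $q$, so one direction is immediate. For the other, let $\rep$ be a repair of $\db\setminus\blockof{T}{a}{\db}$ falsifying $q$. Consider the two extensions $\rep\cup\{T(\underline{a},b,\uk)\}$ and $\rep\cup\{T(\underline{a},b',\uk)\}$, and suppose both satisfy $q$ via valuations $\theta,\theta'$. Both valuations map $F=T$ to the added fact, since $\rep$ alone falsifies $q$. Hence $\theta(x)=\theta'(x)=a$, or $x=\emp$.

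\begin{itemize}
\item Since $\fd{x}{z}\in\FDclosure{q\setminus\{F\}}$, and both valuations satisfy $q\setminus\{F\}$ inside the common consistent set $\rep$, \Cref{lem:easy} (applied to $q\setminus\{F\}$) gives $\theta(z)=\theta'(z)$.
\item In case (i), where $y=z$, we get $\theta(z)=b\neq b'=\theta'(z)$, a contradiction.
\item In case (ii), consistency of the repair forces $U$ to be mapped to $U(\underline{b},c,\uk)$ and $U(\underline{b'},c',\uk)$ respectively. Functional consistency of $\db$ on $\fd{y}{z}\in\SFD{q}$ makes $c,c'$ the unique $U$-values, so $\theta(z)=c\neq c'=\theta'(z)$, again a contradiction.
\end{itemize}

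So one of the two extensions is a repair of the larger database that falsifies $q$.

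For the second stage, after the removals every $T$-block with key $a$ determines a unique value $c$: in case (i) the block has a single $y$-value, and in case (ii) all its $y$-values map to the same $U$-value. Hence the added $R$- and $S$-facts form consistent singleton blocks, and every repair of $\db_1$ is $\rep\cup\{R(\underline{a},c_a),S(\underline{a},c_a)\}$ for a repair $\rep$ of $\db_0$. If $\rep\models q$ via $\theta$, with $\theta(x)=a$ and $\theta(y)=b$, then $\theta(z)$ equals the $U$-value of $b$ (or $b$ itself), which is $c_a$. So $\theta$ also witnesses $q'$. The converse holds because $q\subseteq q'$. When $x=\emp$ the same argument runs with the constant key $a$.

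It remains to handle the structural claims.
\begin{itemize}
\item \textbf{Controlledness.} The query grows by a constant number of symbols, and the database grows by at most $2|\adom{\db}|$ facts of fixed arity. No new constants or variables are introduced, and \dbsaturate{} only removes facts. Since the number of repetitions of the whole step is bounded polynomially, $c$-controlledness and polynomial time follow.
\item \textbf{Generator-monotonicity.} The new edge $\fd{x}{z}$ lies in $\FDclosure{q}$, so $x$ already reaches $z$ in $\FD{q}$. \Cref{lem:nscc}(\ref{it:ekaG}) then gives $\nscc{\FD{q'}}=\nscc{\FD{q}}$, hence $\cgs{q'}=\cgs{q}$.
\end{itemize}

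The main obstacle is showing that the attack graph of $q'$ stays acyclic. The plan has three parts.
\begin{enumerate}
\item Adding $R,S$ does not change closures: $\cl{V}{q'}=\cl{V}{q}$ for all $V$, because $z\in\cl{x}{q}$. Hence $\keyclosure{G}{q'}\supseteq\keyclosure{G}{q}$ for old atoms $G$, except where $G$ is needed for $x$ to reach $z$. There, the edge $\fd{x}{z}$ of $R$ or $S$ compensates, precisely because condition \ref{it:sat3} places $\fd{x}{z}$ in the closure of $q\setminus\{F\}$. The delicate subcase is $G$ lying on every path from $x$ to $z$ other than $F$, which I expect needs a path-rerouting argument.
\item The Gaifman graph gains only the edge $\{x,z\}$, and this edge lies inside both new atoms' closures. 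So old attacks are preserved but none are created, except through this edge.
\item The new atoms $R,S$ attack nothing, since $\keyclosure{R}{q'}\ni x,z$ separates $\{z\}$ from everything. Any new attack $G\attacks{} w$ through the edge $\{x,z\}$ would already exist in $q$ via the path through $F$ and the $U$-edge $\fd{y}{z}$, unless it is blocked by $\keyclosure{G}{q}$, in which case it is blocked in $q'$ as well.
\end{enumerate}
If this case analysis resists, I would fall back on proving that the attack relation among old atoms is unchanged. Combined with the fact that $R$ and $S$ have no outgoing attacks, this immediately yields acyclicity of the attack graph of $q'$.
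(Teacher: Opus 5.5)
Your route is the paper's. The block-removal and fact-extension stages are its two claims, with the same use of \Cref{lem:easy} on $q\setminus\{F\}$ and of functional consistency on the $z$-column. Generator-monotonicity goes through \Cref{lem:nscc}. Acyclicity comes from showing that $R,S$ attack nothing and that the new Gaifman edge $\{x,z\}$ can be rerouted through $y$. The equivalence and controlledness parts are fine.

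The real gap is the acyclicity step, which you leave as a hedged plan and whose difficulty you misplace. The inclusion $\keyclosure{G}{q}\subseteq\keyclosure{G}{q'}$ holds for \emph{every} old atom $G$, simply because $q\setminus\{G\}\subseteq q'\setminus\{G\}$. So there is no ``delicate subcase''. This is also exactly the direction you need: a path of $\gaifman{q'}$ avoiding $\keyclosure{G}{q'}$ also avoids $\keyclosure{G}{q}$. Larger closures can only destroy attacks, so you need neither ``old attacks are preserved'' nor your fallback. The identity $\cl{V}{q'}=\cl{V}{q}$ for the full query plays no role, since attacks are defined via $q\setminus\{G\}$.

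What does need proof is your final sentence, and the missing reason is as follows.
\begin{itemize}
\item Because $\fd{y}{z}\in\SFD{q}$ with $y\neq z$ is witnessed by two distinct atoms, $z\in\cl{y}{q\setminus\{G\}}$ for every atom $G$.
\item Take a path $P$ of $\gaifman{q'}$ witnessing $G\attacks{q'}H$, and replace each use of $\{x,z\}$ by $\{x,y\},\{y,z\}$. These are edges of $\gaifman{q}$, coming from $F$ and from the atom generating $\fd{y}{z}$.
\item All vertices of the new path other than $y$ lie on $P$, which avoids $\keyclosure{G}{q'}\supseteq\keyclosure{G}{q}$. So if the new path met $\keyclosure{G}{q}$, it would do so at $y$.
\item That would force $z\in\keyclosure{G}{q}\subseteq\keyclosure{G}{q'}$. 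But $z$ lies on $P$, a contradiction. Hence $G\attacks{q}H$.
\end{itemize}
This is exactly how the paper closes the argument. It is also the only place where consistency of $\fd{y}{z}$ matters: an edge generated by a single atom would vanish when $G$ is that atom. When $y=z$ or $x=\emp$, no new Gaifman edge arises.

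A minor point on generator-monotonicity: $\fd{x}{z}\in\FDclosure{q}$ only gives that $z$ is reachable from $x$ \emph{or from $\emp$}, so \Cref{lem:nscc} need not apply literally. This is harmless, because adding an edge can never increase the number of source SCCs, and that is all generator-monotonicity requires.
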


\begin{proof}
Clearly, we can select  \(c\in\mathbb N\)
such that $\saturatestep_{x,z}$ is $c$-controlled and in polynomial time. Furthermore, by \Cref{lem:nscc} and the fact that $\cgs{q}=\nscc{\FD{q}}-1$, the reduction is generator-monotone. It also maps problem instances into $\mathcal{C} \cup \{\true,\false\}$; in particular, if a pair $(\db',q')$ is returned, $\db'$ is $q'$-saturated by \Cref{lem:dbsat} and $q'$ is variable-repetition-free by construction.

It remains to prove the statements concerning attack graph acyclicity and equivalence preservation.
We consider the case where $x$ is a variable. The case where $x$ is $\emp$ is analogous. Note that $z$ is also a variable. By the lemma hypothesis, there exists $F\in q$ and a variable $y$ such that
$F,x,y,z$ satisfy \Cref{it:sat1,it:sat2,it:sat3} of \Cref{def:saturation}.



  {\bf (Preservation of  attack graph acyclicity) }
  To show that $q'\in\sjfuabcq$, it suffices to prove that $q'$ has an acyclic attack graph.
It is clear that $R(\underline{x},z)$ and $S(\underline{x},z)$ do not attack any atoms in $q'$. It thus suffices to show    $G\attacks{q'} H$ entails $G\attacks{q} H$ for all $G,H\in q$. Now, $G\attacks{q'} H$ means that $\gaifman{q'}$ has a path $U$
between $\notkeyvars{G}$ and $\atomvars{H}$ containing no variables from 
 $\keyclosure{G}{q'}$. Since $\keyclosure{G}{q}\subseteq \keyclosure{G}{q'}$, no variable from $U$ belongs to 
 $\keyclosure{G}{q}$ either.
 
 If $U$ is also a path in $\gaifman{q}$, we obtain $G \attacks{q} H$, as required. (The case where $x=\emp$ stops here, as
 atoms of the form $R(\underline{a},z),S(\underline{a},z)$, where $a$ is a constant, do not generate edges to the Gaifman graph.)

Thus, suppose $U$ is not a path in $\gaifman{q}$, in which case it contains an edge that does not belong to $\gaifman{q}$.
The only possibility is that this edge is $\{x,z\}$; that is, it is an edge introduced by the atom $R(\underline{x},z)$ (or $S(\underline{x},z)$).
 Note that $x$ and $z$ are not in $ \keyclosure{G}{q}$.
Therefore, we have  $F\neq G$, because otherwise \cref{it:sat1,it:sat3}
 would entail $z\in \keyclosure{G}{q}$, a contradiction; specifically, we would have $\keyvars{G}=x$ and $\fd{x}{z}\in \FDclosure{q\setminus\{G\}}$.

Now, let $U'$ be obtained from $U$ by replacing the edge $\{x,z\}$ with two edges $\{x,y\}$ and $\{y,z\}$. Clearly, $U'$ is a path in 
$\gaifman{q}$ between $\notkeyvars{G}$ and $\atomvars{H}$. We claim that no variable from $U'$ belongs to $ \keyclosure{G}{q}$.
Toward contradiction, suppose this is not the case. As $U$ is not separated by $\keyclosure{G}{q}$, it follows that $y\in \keyclosure{G}{q}$. But then, \cref{it:sat2} entails $z\in \keyclosure{G}{q}$, again a contradiction. We conclude by contradiction that $\notkeyvars{G}$ and $\atomvars{H}$ are not separated by $\keyclosure{G}{q}$. This establishes $G \attacks{q} H$. Hence the attack graph of $q'$ is acyclic.


  


{\bf (Equivalence preservation) }
We consider the case where $y\neq z$; the case where $y=z$ is analogous. In this case, $q$ contains an atom of the form $G\defeq U(\underline{y},z,\uk)$. Furthermore, we write  $F=T(\underline{x},y,\uk)$.

We prove two claims, reflecting the two stages in the construction of $\db'$. First we show that the block removal preserves equivalence.
\begin{claim}\label{claim:first1}
If $T(\underline{a},b,\uk),T(\underline{a},b',\uk),U(\underline{b},c,\uk),U(\underline{b'},c',\uk)\in \db$, where $c\neq c'$, then  
\[\text{$(\db,q)\in \cert$ if and only if $(\db\setminus \blockof{T}{a}{\db} ,q)\in \cert$.}\]
\end{claim}
\begin{claimproof}
It suffices to show that $(\db\setminus \blockof{T}{a}{\db},q)\notin \cert$ implies $(\db,q)\notin \cert$.
Let $\rep$ be a repair of $\db\setminus \blockof{T}{a}{\db}$ falsifying $q$. 
 Thus, $\rep \cup\{T(\underline{a},b,\uk)\}$ and $\rep \cup\{T(\underline{a},b',\uk)\}$ are two distinct repairs of $\db$. Assume toward contradiction that they both  satisfy $q$. Let $\theta$ and $\theta'$ be valuations such that $\theta(q) \subseteq \rep \cup\{T(\underline{a},b,\uk)\}$ and $\theta'(q) \subseteq \rep \cup\{T(\underline{a},b',\uk)\}$. Clearly, $\theta(F)=T(\underline{a},b,\uk)$ and $\theta'(F)=T(\underline{a},b',\uk)$, whence $\theta(x)=\theta'(x)$.
Furthermore, $\theta(q\setminus\{F\})\subseteq \rep$, $\theta'(q\setminus\{F\})\subseteq \rep$ and $\fd{x}{z}\in \FDclosure{q\setminus \{F\}}$, and thus \cref{lem:easy} implies $\theta(z)=\theta'(z)$. Since $\db$ is $q$-saturated, \cref{it:sat1} implies that $U^{\db}$ is consistent.
In particular, both $U(\underline{b},c,\uk)$ and $U(\underline{b'},c',\uk)$  belong to $\rep$, hence $\theta(G)=U(\underline{b},c,\uk)$ and $\theta'(G)=U(\underline{b'},c',\uk)$. But then $c\neq c'$ entails $\theta(z)\neq \theta'(z)$, a contradiction. We conclude by contradiction that $\db$ has a repair that falsifies $q$.
\end{claimproof}
Then, we show that the database can be extended, reflecting the second stage. 
\begin{claim}\label{claim:second2}
Suppose $T(\underline{a},b,\uk),T(\underline{a},b',\uk),U(\underline{b},c,\uk),U(\underline{b'},c',\uk)\in \db$ implies $c= c'$. Let
\[
\db' \defeq \{R(\underline{a},c),S(\underline{a},c)\mid \exists b: T(\underline{a},b,\uk),U(\underline{b},c,\uk)\in \db\} \cup\db
\]
Then,  $(\db,q)\in \cert$ if and only if $(\db',q')\in \cert$.
\end{claim}
\begin{claimproof}
It suffices to show that $(\db',q')\notin \cert$ implies $(\db,q)\notin \cert$.
Let $\rep'$ be a repair of $\db'$ falsifying $q'$. Let $\rep$ be obtained from $\rep'$ by dropping all the $R$-facts and $S$-facts. Clearly $\rep$ is a repair of $\db$. Assume toward contradiction that $\rep$ satisfies $q$. Let $\theta$ be a valuation such that $\theta(q)\subseteq \rep$. Suppose $\theta(F)=T(\underline{a},b,\uk)$  and  $\theta(G)=U(\underline{b},c,\uk)$. By the claim assumption, $\{R(\underline{a},c)\}$ and $\{S(\underline{a},c)\}$ are singleton blocks of $\db'$. Thus $\theta(q')\subseteq \rep'$, contradicting the assumption. We conclude by contradiction that $\rep$ falsifies $q$. 
\end{claimproof}

Now, by the definition of the reduction, equivalence-presevation follows by Claims \ref{claim:first1} and \ref{claim:second2} and \Cref{lem:dbsat}.
This concludes the proof of the lemma.
\end{proof}

\subsection{Saturation Step}
We begin by stating the following conventions, which will be used from hereafter.

\smallskip
\begin{remark}\label{rem:try}
For a function $f$, an assignment of the form
\(
(\db,q) \gets \Try(f(\db,q,\vec{x}))
\)
means the following: the current procedure
immediately returns $f(\db,q,\vec{x})$ if $f(\db,q,\vec{x})\in \{\true,\false\}$.
If $f(\db,q,\vec{x})=(\db',q') \notin \{\true,\false\}$, then 
 $(\db,q)\gets (\db',q')$.
 
 Furthermore, whenever an expression of the form $(g(\db,q),q')$ occurs and
$g(\db,q)\in\{\true,\false\}$, the expression is understood to
evaluate to this Boolean value rather than to a pair.
\end{remark}

Throughout the paper, reductions from $\mathcal{C} $ to $ \mathcal{C} \cup\{\true,\false\}$ are extended to be the identity on $\true$ and $\false$. In this way, we can also describe their composition properties. To start with, the following property is immediate.
\medskip
\begin{proposition}\label{prop:eqcompose}
  Generator-monotone and equivalence-preserving reductions  are both closed under composition.
\end{proposition}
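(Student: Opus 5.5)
Both closure statements will be checked directly from the definitions, using the convention just adopted that every reduction is extended to be the identity on $\true$ and $\false$. Fix reductions $f$ and $g$ and put $h\defeq g\circ f$. For an instance $(\db,q)$ there are three cases: $f(\db,q)\in\{\true,\false\}$; $f(\db,q)=(\db_1,q_1)$ and $g(\db_1,q_1)\in\{\true,\false\}$; or $f(\db,q)=(\db_1,q_1)$ and $g(\db_1,q_1)=(\db_2,q_2)$.

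\emph{Generator-monotonicity.} In the first two cases $h(\db,q)$ is a Boolean value, so nothing needs to be checked. In the third case, generator-monotonicity of $g$ applied to $(\db_1,q_1)$ gives $\cgs{q_2}\leq\cgs{q_1}$. Generator-monotonicity of $f$ applied to $(\db,q)$ gives $\cgs{q_1}\leq\cgs{q}$. Transitivity of $\leq$ then gives $\cgs{q_2}\leq\cgs{q}$.

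\emph{Equivalence-preservation.} In the first case $h(\db,q)=f(\db,q)$, and the required property is exactly equivalence-preservation of $f$. In the second case, suppose $h(\db,q)=\true$. Equivalence-preservation of $g$ gives $(\db_1,q_1)\in\cert$, and equivalence-preservation of $f$ transfers this back to $(\db,q)\in\cert$. The case $\false$ is symmetric. In the third case, the biconditionals $(\db,q)\in\cert\iff(\db_1,q_1)\in\cert$ and $(\db_1,q_1)\in\cert\iff(\db_2,q_2)\in\cert$ chain together.

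\emph{Domains.} One point needs care. The reductions are only claimed to be equivalence-preserving on their stated domains $\mathcal{C}$. The composition is therefore meaningful only when the range of $f$ (outside $\{\true,\false\}$) lies inside the domain of $g$. I would state this as the implicit hypothesis of the proposition. With that hypothesis, each step above applies $g$ only to instances in its domain, and the argument is purely bookkeeping. No step is a real obstacle; the only thing to be careful about is treating the Boolean outputs correctly through the identity extension.
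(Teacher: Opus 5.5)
Your proof is correct and is the same definition-unfolding the paper intends when it calls the property immediate: split on whether the identity extension on $\mathsf{true}$/$\mathsf{false}$ applies, use transitivity of $\leq$ for $\mathsf{cgs}$, and chain the biconditionals. Your extra hypothesis, that the non-Boolean range of the first reduction must lie in the domain of the second, is the right implicit assumption, and it matches how the paper composes reductions from $\mathcal{C}$ to $\mathcal{C}\cup\{\mathsf{true},\mathsf{false}\}$.
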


The algorithm for $\Saturate$, given in \Cref{alg:sat}, applies \Cref{lem:addcfd} exhaustively to construct a saturated query.
Since \Cref{lem:addcfd} describes an equivalence-preserving and generator-monotone reduction, it gives rise to the following result.
\smallskip


\csat*
\begin{proof}
The algorithm applies  \Cref{lem:addcfd}
exhaustively. 
 Then, by \Cref{prop:eqcompose}, we obtain that $\Saturate$ is an equivalence-preserving and generator-monotone reduction from $\mathcal{C}$ to $\mathcal{C} \cup\{\true,\false\}$. In particular, this reduction does not introduces cycles into the attack graph.
 Furthermore, $q'$ is saturated and binary-saturated by construction.
 
It remains to show that $\Saturate$ runs in polynomial time. 
 Suppose $(\db',q')$ is some intermediate pair considered during computation, and suppose $(\db,q)$ is the initial input.
  Now, there exists a constant $c$ such that each individual step $\saturatestep_{x,z}(\db',q')$ is $c$-controlled. 
  After the step, $(x,z)$ no longer witnesses a failure of saturation or binary-saturation, and later steps cannot recreate such a failure for that pair.
  Thus, the number of iterations is bounded by $(|\queryvars{q}|+1)^2$ by \Cref{it:sc-vars} of \Cref{def:sizecontrolled}. Consequently, as the final output is a polynomial-length composition of $c$-controlled reductions, by \Cref{lem:sizecontrolled} it can be computed in polynomial time.
 To conclude, we need to verify that the condition of the while-loop is polynomial-time checkable. This condition consists of scans over $q'$ and, for each $F\in q'$, reachability checks over $\FD{q' \setminus\{F\}}$. 
 Since $q'$ is polynomially bounded in $q$, 
   this can be done in polynomial time in the initial input. We may thus conclude that $\saturate$ runs in polynomial time.
 \end{proof}

 \begin{algorithm}
\caption{$\Saturate(\db,q)$ \label{alg:sat}}
\Input{A database $\db$ and a query $q\in \sjfuabcq$.}
\Output{$\false$ or $(\db',q')$ where $\db'$ is a $q'$-saturated database and $q'$ a saturated and binary-saturated query from $\sjfuabcq$.}




\While{a possibly joinable $(x,z)$ witnesses that $q$ is not saturated and binary-saturated
}{
    $(\db,q) \gets\Try(\saturatestep_{x,z}(\db,q))$\;
}
\Return{$(\db,q)$}
\end{algorithm}

\section{Lemmas for \Cref{sect:normal}}\label{sect:appnormal}
In this section we prove the lemmas that are the building blocks of the normalization process.
We begin by illustrating the normalization process in detail. At this stage of the reduction, we may assume a problem instance $(\db,q)$ where $q$ is a saturated query from $\sjfuabcq$ and $\db$ is a $q$-saturated database.

\smallskip
\begin{example}
Consider a query
\[
q \defeq
\{
R(\underline{x},y,z,v),
R'(\underline{x'})
\}
\cup\{S_i(\underline{x},v),
T_i(\underline{y},z),
U_i(\underline{x},x'),
V_i(\underline{x'},x),
W_i(\underline{c},w)
\mid i\in \{0,1\}\}.
\]
We have that $\SFD{q}=\{\fd{x}{v}, \fd{y}{z},\fd{x}{x'},\fd{x'}{x}, \fd{\emp}{w}\}$ and $\WFD{q}=\{\fd{x}{y},\fd{x}{z}\}$.
The query is saturated but not normal.
In particular, all items of query normality are violated. Consider a $q$-saturated database $\db$ given as follows:
\[
\begin{array}{lllll}
\begin{array}[t]{r|llll}
R & \underline{x} & y & z & v\\
\hline
 & a & b & d & f\\
 & a & c & e & f
 \\ \cdashline{2-5}
\end{array}
&
\begin{array}[t]{r|ll}
R' & \underline{x'}\\
\hline
 & p\\
 \cdashline{2-2}
\end{array}
&
\begin{array}[t]{r|ll}
W_i & \underline{c} & w\\
\hline
 & c & 1\\ \cdashline{2-3}
\end{array}
\\[10ex]
\begin{array}[t]{r|ll}
S_i & \underline{x} & v\\
\hline
 & a & f\\
 \cdashline{2-3}
\end{array}
&
\begin{array}[t]{r|ll}
T_i & \underline{y} & z\\
\hline
 & b & d\\
 \cdashline{2-3}
 & c & e\\ \cdashline{2-3}
\end{array}
&
\begin{array}[t]{r|ll}
U_i & \underline{x} & x'\\
\hline
 & a & p\\
 \cdashline{2-3}
\end{array}
&
\begin{array}[t]{r|ll}
V_i & \underline{x'} & x\\
\hline
 & p & a\\
 \cdashline{2-3}
\end{array}
\end{array}
\]
 The problem instance $(\db,q)$ can be transformed into one where the query is additionally normal in few steps. 
Each numbered item addresses a violation of the condition with the same number in the definition of query normality.
\begin{enumerate}
\item Let $q_1 \defeq (q\setminus \{R(\underline{x},y,z,v)\}) \cup \{R_1(\underline{x},y,z)\}$ and $\db_1 \defeq (\db\setminus \{R\})\cup\{R_1\}$, where $R_1$ is obtained from $R$ by dropping the column that corresponds to $v$.
\item Let $q_2 \defeq (q_1\setminus \{R_1(\underline{x},y,z)\}) \cup \{R_2(\underline{x},y)\}$ and $\db_2 \defeq (\db_1\setminus \{R_1\})\cup\{R_2\}$, where $R_2$ is obtained from $R_1$ by dropping the column that corresponds to $z$.
\item Let $q_3 \defeq (q_2)_{x' \mapsto x}$. Let $\db_3$ be obtained from $\db_2$ by replacing values $b$ in columns corresponding to $x'$ with $f(b)$, where $f$ is the function determined by the consistent  relations $V_i$. Since $q_3$ is not variable-repetition-free, we further remove all repeated variable occurrences from atoms of $q_3$, keeping only the leftmost occurrence of each variable, and deleting the corresponding columns from the relations of $\db_3$.
\item Let $q' \defeq (q_3 \setminus \{W_i(\underline{c},w) \mid i\in\{0,1\}\}) \cup \{W_i(\underline{c},1) \mid i\in\{0,1\}\}$, and let $\db'\defeq \db_3$.
\end{enumerate}
The final query takes the form
\[
q' \defeq
\{
R(\underline{x},y),
R'(\underline{x})
\}
\cup\{S_i(\underline{x},v),
T_i(\underline{y},z),
U_i(\underline{x}),
V_i(\underline{x}),
W_i(\underline{c},1)
\mid i\in \{0,1\}\},
\]
and the final database $\db'$ can be depicted as
 \[
\begin{array}{lllll}
\begin{array}[t]{r|llll}
R & \underline{x} & y \\
\hline
 & a & b \\
 & a & c 
 \\ \cdashline{2-3}
\end{array}
&
\begin{array}[t]{r|ll}
R' & \underline{x}\\
\hline
 & a\\
 \cdashline{2-2}
\end{array}
&
\begin{array}[t]{r|ll}
W_i & \underline{c} & 1\\
\hline
 & c & 1\\ \cdashline{2-3}
\end{array}
\\[10ex]
\begin{array}[t]{r|ll}
S_i & \underline{x} & v\\
\hline
 & a & f\\
 \cdashline{2-3}
\end{array}
&
\begin{array}[t]{r|ll}
T_i & \underline{y} & z\\
\hline
 & b & d\\
 \cdashline{2-3}
 & c & e\\ \cdashline{2-3}
\end{array}
&
\begin{array}[t]{r|ll}
U_i & \underline{x} \\
\hline
 & a \\
 \cdashline{2-2}
\end{array}
&
\begin{array}[t]{r|ll}
V_i & \underline{x} \\
\hline
 & a \\
 \cdashline{2-2}
\end{array}
\end{array}
\]
The query $q'$ is now saturated and normal, and the database $\db'$ is $q'$-saturated.
\end{example}

We next set up some notational conventions, assuming an underlying query $q$ from $\sjfuabcq$.

\smallskip
\noindent
\textbf{Atoms.}
Let $F\in q$, and let $y\in \notkeyvars{F}$. Without loss of generality $F$ is of the form $R(\underline{x},y,\vec{z})\in q$.
Then, we define $\drop{F}{y}$ as the atom
\(
R'(\underline{x},\vec{z}),
\)
where $R'$ is a fresh relation symbol not occurring in $q$.
If $y\notin \notkeyvars{F}$, then $\drop{F}{y}\defeq F$. Moreover, then, $\drop{q}{y}\defeq \{\drop{F}{y}\mid F\in q\}$.

\smallskip
\noindent
\textbf{Facts.}
Let $A$ be a fact. Assume that $F$ is the unique underlying atom in $q$ with the same relation name. Suppose $y\in \notkeyvars{F}$, 
and suppose without loss of generality that $F$ is of the form $R(\underline{x},y,\vec{z})$.
Assuming $A$ is of the form $R(\underline{a},b,\vec{c})$,
define $\drop{A}{y}$ as the fact
\(
R'(\underline{a},\vec{c}),
\)
where $R'$ is the relation name of $\drop{F}{y}$. If $y\notin \notkeyvars{F}$, then $\drop{A}{y}\defeq A$.
Moreover, given a database $\db$, we write $\drop{\db}{y}\defeq \{\drop{A}{y}\mid A\in \db\}$.

\subsection{First Condition}
Given $q\in \sjfuabcq$, $F\in q$, and $y\in \notkeyvars{F}$,
define $q'\defeq (q \setminus \{F\})\cup \{\drop{F}{y}\}$.
Suppose $F$ is of the form
$T(\underline{x},y,\vec{z})$, in which case $\drop{F}{y}$ is of the form
$T'(\underline{x},\vec{z})$. Then, considering a database $\db$,  define
\begin{equation}\label{eq:db0}
\db_0 \defeq \{T'(\underline{a},\vec{c})\mid \exists b\colon T(\underline{a},b,\vec{c})\in \db\} \cup (\db\setminus \restrict{\db}{T}).
\end{equation}
We then let
\[
\normalizeone_{F,y}(\db,q)\defeq (\db',q'),
\]
where
\[
\db'\defeq \dbsaturate(\db_0,q').
\]

\begin{lemma}[Normality, \Cref{it:normal1}]\label{lem:removecfd}
Let $\mathcal{C}$ be the set of pairs $(\db,q)$, where $q$ is a saturated and binary-saturated query from $\sjfuabcq$ and $\db$ is a $q$-saturated database.
Then for each $F\in q$ such that $\notkeyvars{F}>1$ and
$\fd{x}{y}\in \SFD{q}\cap \FD{F}$, $(\db,q) \mapsto \normalizeone_{F,y}(\db,q)$ is a strict, size-monotone, generator-monotone, equivalence-preserving,
polynomial-time reduction from $\mathcal{C}$ to  $\mathcal{C} \cup \{\true,\false\}$.
\end{lemma}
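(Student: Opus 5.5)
The plan is to verify the listed properties one at a time, treating the equivalence-preservation claim as the substantive step. The routine properties come first. Strictness holds because $\drop{F}{y}$ loses one variable occurrence, so $\size{q'}<\size{q}$. Size-monotonicity follows from three facts: $\db_0$ is obtained by projecting a column out of $T^{\db}$, so $\size{\db_0}\le\size{\db}$ and $\adom{\db_0}\subseteq\adom{\db}$; $\dbsaturate$ only removes facts (\Cref{lem:dbsat}); and $\queryvars{q'}\subseteq\queryvars{q}$. Polynomial time is immediate.

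Generator-monotonicity and membership in $\mathcal{C}$ need more care. By binary-saturation, $q$ contains two binary atoms $G,G'$ with $\fd{x}{y}\in\FD{G}\cap\FD{G'}$; both differ from $F$ because $|\notkeyvars{F}|>1$. Hence $\FD{q'}=\FD{q}$ as an edge set, and $\FDclosure{\cdot}$ of every subquery not containing $F$ is unchanged.

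Next I would check that $\SFD{q'}=\SFD{q}$. The edge $\fd{x}{y}$ remains consistent, witnessed by $G$ and $G'$. Every other edge $\fd{x}{w}$ generated by $F$ keeps its status, because $\FDclosure{(q\setminus\{F\})\cup\{\drop{F}{y}\}}$ equals $\FDclosure{q}$ (the edge $\fd{x}{y}$ is recovered through $G$). This gives generator-monotonicity via $\cgs{q}=\nscc{\FD{q}}-1$. It also gives that $q'$ is saturated and binary-saturated, since the three conditions of \Cref{def:saturation} depend only on $\FD{}$, $\SFD{}$ and these closures.

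For attack-graph acyclicity, I would compare the attacks in $q'$ with those in $q$. The key closures $\keyclosure{H}{q'}$ equal $\keyclosure{H}{q}$. The Gaifman edges lost by removing $y$ from $F$ are harmless: any path through $y$ adjacent to $x$ can be rerouted via $G$, which contains $\{x,y\}$; every other neighbor $w$ of $y$ through $F$ is adjacent to $x$, so one can shortcut $y$ via $x$. Thus attacks in $q'$ imply attacks in $q$ between the corresponding atoms. Finally, \Cref{lem:dbsat} ensures that the output database is $q'$-saturated, so the output lies in $\mathcal{C}\cup\{\true,\false\}$.

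For equivalence preservation, it suffices by \Cref{lem:dbsat} to show $(\db,q)\in\cert \iff (\db_0,q')\in\cert$. The main tool is functional consistency of $\db$ on $G$: every key value $a$ determines a unique $g(a)$ with $G(\underline{a},g(a))\in\db$. Pairwise consistency of $\db$ (applied to the pair $F,G$) then forces every fact $T(\underline{a},b,\vec c)\in\db$ to have $b=g(a)$, since $G$ is key-equal along $x$.

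Hence the projection $T(\underline{a},b,\vec c)\mapsto T'(\underline{a},\vec c)$ is a bijection that respects blocks, and repairs of $\db$ correspond bijectively to repairs of $\db_0$ by replacing $T$-facts with $T'$-facts. For corresponding repairs $\rep,\rep_0$, a valuation satisfies $q$ in $\rep$ iff it satisfies $q'$ in $\rep_0$. In one direction we drop the $y$ column. In the other, satisfaction of $G\in q'$ forces $\theta(y)=g(\theta(x))$, so the original $T$-fact is recovered.

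The step I expect to be the main obstacle is the invariance argument: showing $\SFD{q'}=\SFD{q}$ and that no new attack arises once the Gaifman edges through $y$ in $F$ disappear. This is where binary-saturation has to be used carefully.
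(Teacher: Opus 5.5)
Your proposal is correct and follows essentially the same route as the paper: binary-saturation supplies two binary atoms $R(\underline{x},y),S(\underline{x},y)$ distinct from $F$, which keep $\FD{q'}=\FD{q}$ and the key closures unchanged, and equivalence holds because $q$-saturation of $\db$ forces the dropped $y$-value of every $T$-fact to agree with the value fixed by these consistent atoms. Two minor remarks. For acyclicity you only need $\gaifman{q'}\subseteq\gaifman{q}$ together with $\keyclosure{H}{q}\subseteq\keyclosure{(H')}{q'}$, so rerouting paths through $y$ is superfluous. The equality $\SFD{q'}=\SFD{q}$ should be argued through the single-atom closures $\FDclosure{G}$ in the definition of consistent edges, using one of the two binary atoms as the witness wherever $F$ served as one, rather than through the closure of the whole query.
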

\begin{proof}
Let us assume that $F$ is of the form
$T(\underline{x},y,\vec{z})$, in which case $\drop{F}{y}$ is of the form
$T'(\underline{x},\vec{z})$, and $\db_0$ is as in \eqref{eq:db0}.
 Given an atom $I\in q$, let us write $I'$ for its corresponding atom in $q'$; that is, $I'=I$ when $I\in q \setminus \{F\}$, while $F'= \drop{F}{y}$.

Since $q$ is binary-saturated and $\notkeyvars{F}>1$, it follows that $q\setminus \{F\}$ contains two atoms of the form $R(\underline{x},y),S(\underline{x},y)$.

Assume $(\db',q')=\normalizeone_{F,y}(\db,q)\notin \{\true,\false\}$.  
Now, $\normalizeone_{F,y}$ is generator-monotone due to $\FD{q}=\FD{q'}$. Moreover, that it is size-monotone and strict is clear by construction. Using \Cref{lem:dbsat}, the reduction is in polynomial time. It remains to prove that it is equivalence-preserving and that
$(\db',q')\in \mathcal{C}$.

  {\bf (Membership of $(\db',q')$ in $\mathcal{C}$)} Now, $\db'$ is $q'$-saturated by \Cref{lem:dbsat}.
  The saturation of $q'$ is straightforward to verify by the construction and the fact that $q$ is binary-saturated. For the binary-saturation of $q'$, 
 given $\fd{u}{v}\in \SFD{q'}$, we need to prove that  $q'$ contains two atoms of the form  $U(\underline{u},v),V(\underline{u},v)$.
 It is straightforward to verify that $\SFD{q'}\subseteq \SFD{q}$. Thus, by binary-saturation $q$ we obtain that $q$ contains two atoms of the form $U(\underline{u},v),V(\underline{u},v)$. Since $F$ is neither of these atoms, $q'$ contains these atoms too. Therefore, $q'$ is binary-saturated. 

To conclude that $(\db',q')\in \mathcal{C}$, it remains to prove the following claim.
\begin{claim}$q'\in \sjfuabcq$.
\end{claim}
\begin{claimproof}
  Clearly, $q'$ remains variable-repetition-free.

To prove that $q'$ has an acyclic attack graph, it suffices to show that $G'\attacks{q'} H'$ entails $G\attacks{q} H$, for all $G',H'\in q'$. Now, $G\attacks{q'} H$ means that $\gaifman{q'}$ has a path 
between some variables $x\in \notkeyvars{G'}$ and $y\in \atomvars{H'}$ such that no variable 
on the path belongs to $\keyclosure{(G')}{q'}$. 
Since every edge in $\gaifman{q'}$ belongs also to $\gaifman{q}$, the same path exists in $\gaifman{q}$ (between $x\in\notkeyvars{G}$ and $y\in \atomvars{H}$).
Note that $\keyclosure{G}{q}\subseteq \keyclosure{(G')}{q'}$ due to having both $R(\underline{x},y)$ and $S(\underline{x},y)$ in  $q$. Thus no variable on the path belongs to $\keyclosure{G}{q}$ either. This yields $G \attacks{q} H$ as required. 
\end{claimproof}




{\bf 
(Equivalence preservation)
}
It suffices to show that $(\db,q)\in \cert$ if and only if $(\db_0,q')\in \cert$.

Suppose each repair of $\db_0$ satisfies $q'$. 
Let $\rep$ be a repair of $\db$. Let $\rep'$ be obtained from $\rep$ by replacing each $T(\underline{a},b,\vec{c})$ with  $T'(\underline{a},\vec{c})$. Clearly $\rep'$ is a repair of $\db_0$. By assumption $\rep'$ satisfies $q'$. Let $\theta$ be a valuation such that $\theta(q')\subseteq \rep'$. In particular, $T(\underline{\theta(x)},\theta(\vec{z}))\in \rep'$, which means that there exists a constant $b$ such that
$T(\underline{\theta(x)},b,\theta(\vec{z})\in \rep$. Since $\fd{x}{y}\in \SFD{q\setminus\{T\}}$, there exists an atom of the form $R(\underline{x},y,\uk)\in q \setminus \{T\}$. In particular, we have $R(\underline{\theta(x)},\theta(y),\uk)\in \rep'$, and hence 
$R(\underline{\theta(x)},\theta(y),\uk)\in \rep$.
 Since $\db$ is saturated, we obtain $b=\theta(y)$. This shows $\theta(q)\subseteq \rep$. We conclude that each repair of $\db$ satisfies $q$. 

Suppose each repair of $\db$ satisfies $q$. 
Let $\rep'$ be a repair of $\db_0$. Let $\rep$ be obtained from $\rep'$ by  replacing each $T'(\underline{a},\vec{c})$ with an arbitrary fact of the form $T(\underline{a},b,\vec{c})\in \db$. It is clear that $\rep$ is a repair, and hence there exists a valuation $\theta$ such that $\theta(q)\subseteq \rep$. Then $\theta(q')\subseteq \rep'$, and we may conclude that each repair of $\db_0$ satisfies $q'$.
\end{proof}

 \medskip
 
 \subsection{Second Condition}

\bigskip

 \begin{lemma}[Normality, \cref{it:normal2}]\label{lem:removecfd2}
 Let $\mathcal{C}$ be the set of pairs $(\db,q)$, where $q$ is a saturated query from $\sjfuabcq$ such that it satisfies \Cref{it:normal1} of \Cref{def:normal}, and $\db$ is a $q$-saturated database. Then  for each $F \in q$ such that
  $\fd{y}{z}\in \SFD{q}$, for some distinct variables $y,z\in \notkeyvars{F}$,
 $(\db,q) \mapsto \normalizetwo_{F,z}(\db,q)$ is a strict, size-monotone, generator-monotone, equivalence-preserving,
polynomial-time reduction from $\mathcal{C}$ to  $\mathcal{C} \cup \{\true,\false\}$.
\end{lemma}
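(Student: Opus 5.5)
We follow the template of \Cref{lem:removecfd}, with $\normalizetwo_{F,z}$ defined exactly as before. Write $F=T(\underline{x},y,z,\vec{u})$, so that $\drop{F}{z}=T'(\underline{x},y,\vec{u})$. Let $\db_0$ be obtained from $\db$ as in \eqref{eq:db0}, by projecting out the $z$-column of $T^{\db}$, and set $\db'\defeq\dbsaturate(\db_0,q')$ with $q'\defeq(q\setminus\{F\})\cup\{\drop{F}{z}\}$.

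\textbf{Routine properties.} Strictness and size-monotonicity are immediate, since a variable occurrence is deleted and nothing is added. Polynomial time follows from \Cref{lem:dbsat}.

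\textbf{Structure of $\FD{q}$.} Since $|\notkeyvars{F}|>1$, \Cref{it:normal1} gives $\fd{x}{y},\fd{x}{z}\in\WFD{q}$. Since $\fd{y}{z}\in\SFD{q}$ with $y\neq z$, there is an atom $S(\underline{y},z,\uk)\in q\setminus\{F\}$, and in fact two such atoms via the consistency of the edge. Hence $\FD{q'}=\FD{q}\setminus\{\fd{x}{z}\}$ or $\FD{q'}=\FD{q}$, the former only if $F$ is the sole generator of $\fd{x}{z}$. In either case $x\to y\to z$ remains a path in $\FD{q'}$.

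\textbf{Generator-monotonicity.} This follows from \Cref{lem:nscc}, \Cref{it:tokaG}, using $\cgs{q}=\nscc{\FD{q}}-1$.

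\textbf{Membership of $(\db',q')$ in $\mathcal{C}$.} I will show $\SFD{q'}=\SFD{q}$. First, $\FDclosure{q'}$ coincides with $\FDclosure{q}$, and likewise after removing any single atom other than $F$, because the path $x\to y\to z$ survives. Second, $\fd{x}{z}$ was inconsistent in $q$ anyway. Given $\SFD{q'}=\SFD{q}$, saturation of $q'$ and \Cref{it:normal1} transfer directly. For \Cref{it:normal1}, note that $\drop{F}{z}$ is the only changed atom; it either still has several non-key variables, all of whose edges are inconsistent, or has the single non-key variable $y$.

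\textbf{Acyclicity of the attack graph.} Every Gaifman edge lost in passing from $q$ to $q'$ has the form $\{z,w\}$ with $w\in\{x,y\}\cup\vec{u}$. Each such edge can be rerouted through $y$, using the $S$-edge $\{y,z\}$ together with the edge from $y$ to $w$ inside $\drop{F}{z}$. Key closures do not shrink:
\begin{itemize}
\item $\keyclosure{G}{q}\subseteq\keyclosure{G'}{q'}$ for $G\neq F$, since closures under $\FD{}$ are unchanged;
\item for $G=F$, if $y$ lies in the closure then so does $z$, via $S$.
\end{itemize}
Therefore a path witnessing $G'\attacks{q'}H'$ yields, after rerouting, a path witnessing $G\attacks{q}H$. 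Any rerouted vertex $y$ lying in the closure would force $z$ into it as well, which is a contradiction. Finally, $\db'$ is $q'$-saturated by \Cref{lem:dbsat}.

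\textbf{Equivalence preservation.} It suffices to compare $(\db,q)$ with $(\db_0,q')$.

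For the forward direction, let $\rep'$ be a repair of $\db_0$, and lift it to a repair $\rep$ of $\db$ by choosing, for each $T'(\underline{a},b,\vec{c})$, any $T(\underline{a},b,e,\vec{c})\in\db$. A valuation $\theta$ with $\theta(q)\subseteq\rep$ restricts to one with $\theta(q')\subseteq\rep'$.

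For the converse, let $\rep$ be a repair of $\db$ and let $\rep'$ be its projection. Given $\theta(q')\subseteq\rep'$, there is a fact $T(\underline{\theta(x)},\theta(y),e,\theta(\vec u))\in\rep$, and we need $e=\theta(z)$. Here we use collision consistency of $\db$, first bullet, with $x_i=z$, $x_j=y$, and $\fd{y}{z},\fd{z}{z}\in\SFD{q}$. It yields $S(\underline{\theta(y)},e,\uk)\in\db$. Functional consistency of $S^{\db}$, together with $S(\underline{\theta(y)},\theta(z),\uk)\in\rep$, then forces $e=\theta(z)$.

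The main obstacle I expect is the attack-graph step for $G=F$: one must check carefully that removing $z$ from $F$ does not shrink $\keyclosure{F}{q}$ in a way that would create new attacks. Rerouting through $y$ is what handles this.
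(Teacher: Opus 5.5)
Your template is the same as the paper's. You project out the $z$-column of $F$, obtain generator-monotonicity from \Cref{lem:nscc}, transfer saturation and \Cref{it:normal1} by arguing that consistent edges survive, and prove the converse direction of equivalence via collision and functional consistency.

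The step on which all of Membership rests is unjustified, and it is false in general. Consistency of $\fd{y}{z}$ does not give you two atoms $S(\underline{y},z,\uk)$, and $\SFD{q'}=\SFD{q}$ can fail. By definition, $\fd{y}{z}\in\SFD{q}$ only provides a generator $S$ and a different atom $G$ with $\fd{y}{z}\in\FDclosure{G}$, where $\FDclosure{G}$ is the closure under the single atom $G$. Your justification via $\FDclosure{q\setminus\{G\}}$ and the lost edge $\fd{x}{z}$ concerns a different closure. If $\keyvars{F}=\emp$, then $\FDclosure{F}$ contains $\fd{v}{w}$ for all $v,w\in\notkeyvars{F}$. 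So $F$ itself may be the only such $G$, and deleting $z$ from $F$ destroys the consistency.

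Concretely, let $q=\{T(\underline{c},y,z,u),\,S(\underline{y},z),\,Q(\underline{u},z)\}$ with $c$ a constant.
\begin{itemize}
\item The edges $\fd{y}{z}$ and $\fd{u}{z}$ are consistent only through $\FDclosure{T}$.
\item $q$ is saturated, satisfies \Cref{it:normal1}, and its only attacks are $T\attacks{}S$ and $T\attacks{}Q$.
\item In $q'=\{T'(\underline{c},y,u),\,S(\underline{y},z),\,Q(\underline{u},z)\}$ we get $\fd{y}{z}\in\WFD{q'}$.
\item Nevertheless $\fd{y}{z}\in\FD{S}$, $\fd{z}{z}\in\SFD{q'}$, and $\fd{y}{z}\in\FDclosure{q'\setminus\{S\}}$ via the path $\emp,u,z$.
\end{itemize}
So $q'$ is not saturated. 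For the $q$-saturated database $\db=\{T(\underline{c},1,2,3),S(\underline{1},2),Q(\underline{3},2)\}$, the reduction therefore leaves $\mathcal{C}$. The lemma as stated needs an extra hypothesis.

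The paper's own proof passes over the same point. It asserts preservation of consistent edges ``by construction'' and later uses $\fd{y}{z}\in\SFD{q\setminus\{F\}}$. The natural repair is to add binary-saturation to $\mathcal{C}$. This costs nothing in the pipeline:
\begin{itemize}
\item $\Saturate$ outputs binary-saturated queries.
\item \Cref{lem:removecfd} preserves the property.
\item $\normalizetwo_{F,z}$ preserves it as well. $F$ has at least two non-key variables, so it is not one of the binary witnesses, and $\SFD{q'}\subseteq\SFD{q}$ holds unconditionally.
\end{itemize}
With binary-saturation, every nontrivial consistent edge has two binary generators distinct from $F$. This gives $\SFD{q'}=\SFD{q}$, and the rest of your argument goes through. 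When $\keyvars{F}$ is a variable, \Cref{it:normal1} already prevents $F$ from witnessing any consistent edge, so only the constant-key case is affected.

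Separately, your rerouting in the attack-graph step runs in the wrong direction. Since $\gaifman{q'}\subseteq\gaifman{q}$, a witness path in $q'$ is already one in $q$, and all you need is $\keyclosure{G}{q}\subseteq\keyclosure{(G')}{q'}$. For $G=F$ this is trivial, because both closures are computed in $q\setminus\{F\}=q'\setminus\{F'\}$; the ``main obstacle'' you name is not one. For $G\neq F$ you need $z$ to stay reachable from $x$ in $\FD{q'\setminus\{G'\}}$. Under binary-saturation this follows from a binary generator of $\fd{y}{z}$ other than $G$.
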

 \begin{proof}
 Let us assume that $F$ is of the form
  $T(\underline{x},y,z,\vec{v})\in q$, in which case $\drop{F}{z}$ takes the form
 $T(\underline{x},y,\vec{v})$. Let $\db_0$ be the database obtained by dropping the column of $T^{\db}$ corresponding to $z$, in the style of \eqref{eq:db0}.

 Given an atom $I\in q$, let us write $I'$ for its corresponding atom in $q'$; that is, $I'=I$ when $I\in q \setminus \{F\}$, while $F'= \drop{F}{z}$.

Assume $(\db',q')=\normalizetwo_{F,z}(\db,q)\notin \{\true,\false\}$.  
Now, $\normalizetwo_{F,z}$ is generator-monotone due to \Cref{lem:nscc}.
 Moreover, that it is size-monotone and strict is clear by construction. Using \Cref{lem:dbsat}, the reduction is in polynomial time. It remains to prove that it is equivalence-preserving and that
$(\db',q')\in \mathcal{C}$.

  {\bf (Membership of $(\db',q')$ in $\mathcal{C}$)}
   Now, $\db'$ is $q'$-saturated by \Cref{lem:dbsat}. It is also straightforward to verify that $q'$ satisfies \cref{it:normal1} of normality.
Consider then the following claim.
\begin{claim}$q'\in \sjfuabcq$.
\end{claim}
\begin{claimproof}
  Clearly, $q'$ remains variable-repetition-free.

For acyclicity of the attack graph over $q'$,
it suffices to show that $G'\attacks{q'} H'$ entails $G\attacks{q} H$ for all $G',H'\in q'$. Now, $G'\attacks{q'} H'$ means that $\gaifman{q'}$ has a path $U$ between some variables $u\in \notkeyvars{G'}$ and $v\in \atomvars{H'}$ that is not separated by $\keyclosure{(G')}{q'}$. 
Since every edge in $\gaifman{q'}$ belongs also to $\gaifman{q}$, the same path exists in $\gaifman{q}$ (between $u\in\notkeyvars{G}$ and $v\in \atomvars{H}$).
Moreover, it is easy to verify that $\keyclosure{G}{q}\subseteq \keyclosure{(G')}{q'}$. Thus no variable on $U$ belongs to $\keyclosure{G}{q}$ either. This yields $G \attacks{q} H$ as required. 
\end{claimproof}

To conclude that $(\db',q')\in \mathcal{C}$, 
it remains to prove the following claim.
\begin{claim}
 $q'$ is saturated.
\end{claim}
\begin{claimproof}
Let $F\in q$ be such that $\fd{u}{v}\in \FD{F'}$, $\fd{v}{w}\in \SFD{q'}$, and $\fd{u}{w}\in \FDclosure{q'\setminus \{F'\}}$. We need to show that $\fd{u}{w}\in \SFD{q'}$. Clearly, we have $\fd{u}{v}\in \FD{F}$, $\fd{v}{w}\in \SFD{q}$, and $\fd{u}{w}\in \FDclosure{q\setminus \{F\}}$, hence saturation of $q$ yields $\fd{u}{w}\in \SFD{q}$. If follows by construction that $\fd{u}{w}\in \SFD{q'}$ whenever $\fd{u}{w}$ is different from $\fd{y}{z}$. The case where $\fd{u}{w}$ equals $\fd{y}{z}$ is not possible because by
$|\notkeyvars{F}|>1$ and \Cref{it:normal1} of normality for $q$ we obtain $\fd{y}{z} \notin \SFD{q}$. 
\end{claimproof}


  {\bf (Equivalence preservation)} 
It suffices to prove  that $(\db,q)\in \cert$ if and only if $(\db_0,q')\in \cert$. 

First, suppose every repair of $\db_0$ satisfies $q'$. 
Let $\rep$ be a repair of $\db$. Let $\rep'$ be obtained from $\rep$ by replacing each $T(\underline{a},b,c,\vec{d})$ with  $T'(\underline{a},b,\vec{d})$. Clearly $\rep'$ is a repair of $\db_0$. By assumption $\rep'$ satisfies $q'$. Let $\theta$ be a valuation such that $\theta(q')\subseteq \rep'$. Then $T'(\underline{\theta(x)},\theta(y),\theta(\vec{v}))\in \rep'$. Note that $z\in \queryvars{q'}$. 
We claim that $T(\underline{\theta(x)},\theta(y),\theta(z),\theta(\vec{v}))\in \rep$.

Since $\fd{y}{z}\in \SFD{q\setminus \{F\}}$, there exists an atom of the form $R(\underline{y},z,\uk)\in q \setminus \{F\}$.
By definition, there exists a fact of the form 
$T(\underline{\theta(x)},\theta(y),c,\theta(\vec{v}))$ in $\rep$.
As $\db$ is $q$-saturated (in particular, it satisfies pairwise consistency), there exists a constant $c^*$ such that $R(\underline{\theta(y)},c^*,\uk)\in \db$. By functional consistency, this constant is unique, meaning that
$R(\underline{\theta(y)},c^*,\uk)\in \rep'$, hence $\theta(z)=c^*$.  Since $\db$ is $q$-saturated (collision consistency), 
it follows that $c=c^*$. It can now be seen that $\theta(q)\subseteq \rep$, and hence $\rep$ satisfies $q$.

Then, suppose every repair of $\db$ satisfies $q$. 
Let $\rep'$ be a repair of $\db_0$. Let $\rep$ be obtained from $\rep'$ by  replacing each $T'(\underline{a},b,\vec{d})$ with an arbitrary fact of the form $T(\underline{a},b,c,\vec{d})\in \db$. It is clear that $\rep$ is a repair of $\db$. By assumption
there is a valuation $\theta$ such that $\theta(q)\subseteq \rep$. Clearly, also $\theta(q')\subseteq \rep'$. We conclude that $\rep'$ satisfies $q'$.
 \end{proof}
 
 \medskip

  \subsection{Third Condition}
  To enforce the third condition of normality for a query $q$, we will identify variables $x$ and $y$ such that $\fd{x}{y},\fd{y}{x}\in \SFD{q}$.
  This is done two steps. First, we simply replace $x$ with $y$ in $q$. Since such a process may include variable repetition, we then remove such repetitions. We describe these two steps in this section. Before doing so, we introduce some auxiliary lemmas.
  
\subsubsection{Auxiliary Lemmas}  
  \begin{lemma}\label{lem:replaceconstant3}
Let $q\in \sjfubcq$, let $x,y\in \queryvars{q}$, and let $c$ be an arbitrary constant. Then the following statements hold:
\begin{enumerate}
\item\label{it:toc} $\nscc{\FD{q_{x\mapsto c}}}\leq \nscc{\FD{q}}$;
\item\label{it:toz} If $\fd{x}{y},\fd{y}{x}\in \FD{q}$, then $\nscc{\FD{q_{x\mapsto y}}}= \nscc{\FD{q}}$.
\end{enumerate}
\end{lemma}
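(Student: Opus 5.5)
Both items rest on an explicit description of how $\FD{q_{x\mapsto c}}$ and $\FD{q_{x\mapsto y}}$ arise from $\FD{q}$, so I would first write that description down from \eqref{eq:FDdef}.

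\textbf{Item \ref{it:toc}.} Replacing $x$ by a constant $c$ changes the edges as follows. An edge $\fd{x}{z}$, generated by an atom with key $x$, becomes $\fd{\emp}{z}$. An edge $\fd{u}{x}$ disappears. The vertex $x$ disappears, and every other edge is unchanged. Let $G\defeq\FD{q}$ and $G'\defeq\FD{q_{x\mapsto c}}$. Then $G'$ is obtained from $G$ in three steps:
\begin{enumerate}
\item add the edges $\fd{\emp}{z}$ for every $z\in\outvar{G}{x}$;
\item delete $x$ together with all its incident edges.
\end{enumerate}
Equivalently, $G'$ arises by contracting $x$ into $\emp$, except that the in-edges of $x$ are dropped rather than redirected. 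After contraction, no edge enters $\emp$.

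The key claim is that reachability among the remaining vertices only grows. Suppose $w_0$ reaches $w_1$ in $G$ with $w_0,w_1\neq x$. If the path uses $x$, it uses a subpath $u\to x\to z$. In $G'$ we can no longer pass through $x$, but $\emp$ now reaches $z$.

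I would count source SCCs via minimum generators, using $\nscc{G}-1=\cgs{q}$. Take a minimum closure generator $V$ of $q$, so that $\reach{V\cup\{\emp\}}{G}$ is everything. Then $V\setminus\{x\}$ generates $q_{x\mapsto c}$. Indeed, every vertex reachable in $G$ from $V\cup\{\emp\}$ is reachable in $G'$ from $(V\setminus\{x\})\cup\{\emp\}$: replace the prefix of a path up to its last visit to $x$ by the edge from $\emp$ to the successor of $x$. Hence $\cgs{q_{x\mapsto c}}\le\cgs{q}$, which gives item \ref{it:toc} through the stated identity. If $x$ is repeated inside an atom, the same argument applies, since only edges are affected.

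\textbf{Item \ref{it:toz}.} Substituting $y$ for $x$ amounts to contracting $x$ and $y$ into a single vertex. Edges into or out of $x$ become edges into or out of $y$, and any self-loop $\fd{y}{y}$ that appears is harmless for SCCs. Because $\fd{x}{y},\fd{y}{x}\in\FD{q}$, the vertices $x$ and $y$ already lie in the same SCC $C$ of $G$.

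Contracting two vertices of one SCC yields exactly the condensation-preserving quotient. The SCCs of the new graph are those of $G$, with $C$ replaced by $C$ with $x$ and $y$ merged. The DAG of SCCs is unchanged, because an inter-component edge $u\to v$ in $G$ maps to an edge between the same components, and conversely. Hence the source SCCs correspond bijectively, and $\nscc{\FD{q_{x\mapsto y}}}=\nscc{\FD{q}}$.

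The step requiring the most care is item \ref{it:toc}'s path-rerouting argument, specifically checking that dropping the in-edges of $x$ cannot disconnect anything that mattered. Passing through generators handles this cleanly. Repeated-variable and self-loop technicalities are routine.
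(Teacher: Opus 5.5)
Your proof is correct. Item~2 is essentially the paper's argument: $\FD{q_{x\mapsto y}}$ is the quotient of $\FD{q}$ that identifies $x$ with $y$, and since these two vertices share an SCC, the condensation and hence the set of source SCCs is unchanged.

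For item~1 you take a different route. The paper works directly with the graphs. It picks one vertex from each source SCC of $\FD{q_{x\mapsto c}}$. It then shows, by turning any path through $x$ into a path from $\emp$, that these vertices stay pairwise non-reachable in $\FD{q}$ and still lie in source SCCs there, so $\FD{q}$ has at least as many source SCCs.

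You instead show that for any closure generator $V$ of $q$, the set $V\setminus\{x\}$ generates $q_{x\mapsto c}$: cut a path at its last visit to $x$ and use the new edge $\fd{\emp}{z}$, noting that edges avoiding $x$ survive unchanged. You then translate via $\cgs{\cdot}=\nscc{\FD{\cdot}}-1$. This is shorter, and it yields directly the inequality $\cgs{q_{x\mapsto c}}\le\cgs{q}$ that the paper actually needs for generator-monotonicity. The price is that it rests on that identity, which the paper only asserts in passing. Add its one-line justification: a vertex set reaches all vertices iff it meets every source SCC, and $\{\emp\}$ is always a source SCC because $\emp$ has indegree zero. The paper's argument needs no such detour.

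Delete one sentence: ``reachability among the remaining vertices only grows'' is false. If $u\to x\to z$ is the only route, then $u$ reaches $z$ in $\FD{q}$ but not in $\FD{q_{x\mapsto c}}$. Only reachability from sets containing $\emp$ is preserved, and that is exactly what your generator argument uses. Also, your ``three steps'' list has only two items.
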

\begin{proof}
Let us denote by $q'$ the query $q_{x\mapsto c}$ in \Cref{it:toc} and the query $q_{x\mapsto y}$ in \Cref{it:toz}.

{\bf (\Cref{it:toc})}
 Suppose $\nscc{\FD{q'}}=n$. Then we find vertices $x_1, \dots ,x_n$ from source SCCs $C_1, \dots ,C_n$ of $\FD{q'}$. In particular, $x_i$ and $x_j$ are not mutually reachable in $\FD{q'}$, for $i\neq j$. Moreover, $x$ is not listed among of $x_1, \dots ,x_n$.
  
  We first show that the vertices $x_1, \dots ,x_n$ are not mutually reachable in $\FD{q}$. Toward contradiction, suppose $\FD{q}$ has a simple path $D=u, \dots ,v$, where $u$ and $v$ are two vertices listed in  $x_1, \dots ,x_n$. Since $v\notin \reach{u}{\FD{q'}}$, it is the case that $x$ appears on $D$. This means we find a path from $\emp$ to $v$ in $\FD{q'}$. Since $\emp$ has no incoming edges, and $v$ appears on a source SCC, it follows that $v=\emp$. But then $\FD{q}$ has a non-empty path from $u$ to $\emp$, which contradicts the fact that $\emp$ has no incoming edges in $\FD{q}$. We conclude by contradiction that the vertices $x_1, \dots ,x_n$ are not mutually reachable in $\FD{q}$.
  
Then we show that the vertices  $x_1, \dots ,x_n$ belong to source SCCs in $\FD{q}$. Toward contradiction suppose this is not the case. Then for some vertex $v$ listed in $x_1, \dots ,x_n$, there exists an edge  $\fd{u}{v}\in {\FD{q}}$ such that $u\notin \reach{v}{\FD{q}}$.  

We claim first that $u\notin \reach{v}{\FD{q'}}$.
Toward contradiction, suppose $D=v, \dots ,u$ is a path in $\FD{q'}$. If $D$ contains variables only, then it is also a path in $\FD{q}$, a contradiction. Otherwise, $D$ contains $\emp$, in which case it follows that $v=\emp$. But then the edge $\fd{u}{v}$ contradicts the fact that $\emp$ has no incoming edges in $\FD{q}$. We conclude by contradiction that the claim holds.

Since $x \neq v$, we obtain either $\fd{u}{v}\in {\FD{q'}}$ or $\fd{\emp}{v}\in {\FD{q'}}$. The latter case contradicts the fact that $v$ belongs to a source SCC of $\FD{q'}$. Hence, the former case holds. It follows that $v$ does not belong to a source SCC in $\FD{q'}$ either, a contradiction. We conclude by contradiction that $x_1, \dots ,x_n$ belong to source SCCs in $\FD{q}$.
 
 It follows that the vertices $x_1, \dots ,x_n$ belong to distinct source SCCs in $\FD{q}$
 Hence $\nscc{\FD{q}}\geq n$, proving the claim.
 
 {\bf (\Cref{it:toz})} The graph $\FD{q_{x\mapsto y}}$ is obtained from ${\FD{q}}$ by identifying $x$ with $y$. Since $x$ and $y$ belong to the same SCC, the statement follows.
\end{proof}
 \smallskip

   \begin{lemma}\label{lem:boring}
   Let $q$ be a saturated query from $\sjfubcq$, 
 and let $x,y \in \queryvars{q}$ be such that $\fd{x}{y},\fd{y}{x}\in \SFD{q}$.
Let $q'\defeq q_{x\mapsto y}$.
Write $x'=y$, and $u'=u$ for $u\in ( \queryvars{q}\setminus \{x\})\cup \{\emp\}$.
 Then, for $u\in \queryvars{q}\cup \{\emp\}$ and $v\in \queryvars{q}$ it holds that
 \[\fd{u}{v}\in \SFD{q} \iff \fd{u'}{v'}\in \SFD{q'}.\]
 \end{lemma}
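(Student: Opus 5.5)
The plan is to unfold the definition of $\SFD{\cdot}$ on both sides and show that each ingredient transfers along the renaming $u\mapsto u'$. By definition, $\fd{u}{v}\in\SFD{q}$ holds iff $u=v$, or there are $F\in q$ and $G\in q\setminus\{F\}$ with $\fd{u}{v}\in\FD{F}$ and $\fd{u}{v}\in\FDclosure{G}$. I first record two facts about the renaming. The graph $\FD{q'}$ is the image of $\FD{q}$ under $w\mapsto w'$, with any loop $\fd{y}{y}$ arising from $\fd{x}{y}$ or $\fd{y}{x}$ discarded (it is harmless, since reachability is reflexive). Atoms correspond bijectively, $F\mapsto F'\defeq F_{x\mapsto y}$, because $q$ is self-join-free. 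Moreover, for every subquery $p\subseteq q$, reachability in $\FD{p'}$ is the image of reachability in $\FD{p}$ after merging $x$ and $y$. Hence $\fd{u}{v}\in\FDclosure{p}$ implies $\fd{u'}{v'}\in\FDclosure{p'}$.

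For the direction ($\Rightarrow$), the case $u=v$ is trivial. Otherwise take witnesses $F,G$. Then $\fd{u'}{v'}\in\FD{F'}$ whenever $u'\neq v'$, and $\fd{u'}{v'}\in\FDclosure{G'}$ by the fact above. If $u'=v'$, the edge is a dummy edge and lies in $\SFD{q'}$ by definition. So this direction is routine.

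The direction ($\Leftarrow$) is where the main obstacle lies. Merging $x$ and $y$ can create reachability in $\FD{G'}$ that has no counterpart in $\FD{G}$, because a path may enter at $y$ and leave from what was $x$. I handle this with saturation, namely \Cref{lem:ctrans}, together with the hypothesis $\fd{x}{y},\fd{y}{x}\in\SFD{q}$. Suppose $\fd{u'}{v'}\in\SFD{q'}$ via $F',G'$, with $u'\neq v'$. Since $u'\neq v'$, the pair $u,v$ lies among the preimages of the pair $u',v'$. The preimages of a variable are $\{x,y\}$ if it is $y$, and the variable itself otherwise. Then $\fd{u}{v}$ differs from some actual edge $\fd{\hat u}{\hat v}\in\FD{F}$ at most by swapping $x$ and $y$ at either endpoint.

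Next I need $\fd{\hat u}{\hat v}\in\SFD{q}$. For this I need $\fd{\hat u}{\hat v}\in\FDclosure{q\setminus\{F\}}$, and I argue by cases on whether a path in $\FD{G'}$ passes through $y$.
\begin{itemize}
\item If the path avoids $y$, it lifts directly to a path in $\FD{G}$.
\item If it passes through $y$, it lifts to a path in $\FD{q\setminus\{F\}}$. I splice it using the edges $\fd{x}{y}$ and $\fd{y}{x}$, which are generated by some atoms other than $F$. Such atoms exist because consistent edges are witnessed by a second atom and, when necessary, by binary saturation or a third atom. This gives $\fd{\hat u}{\hat v}\in\FDclosure{q\setminus\{F\}}$.
\end{itemize}
The possible failure point is when the only atom providing $x\leftrightarrow y$ is $F$ itself. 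In that case I would use the fact that $\fd{x}{y}\in\SFD{q}$ already supplies an alternative atom $G_0\neq F$ together with the closure $\FDclosure{G_0}$.

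Finally, saturation (\Cref{def:saturation} with the dummy edge $\fd{\hat v}{\hat v}$) yields $\fd{\hat u}{\hat v}\in\SFD{q}$. Composing with $\fd{x}{y}$ or $\fd{y}{x}\in\SFD{q}$ via \Cref{lem:ctrans} then converts $\fd{\hat u}{\hat v}$ into $\fd{u}{v}\in\SFD{q}$. When $u\in\{x,y\}$ and $v\in\{x,y\}$, the claim follows directly from the hypothesis or from the dummy edge.
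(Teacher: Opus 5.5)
Your argument is correct, but it is organized differently from the paper's. The paper proves ($\Leftarrow$) by a case split on whether $u$ and/or $v$ equals $x$, with the case $u\neq x\neq v$ declared clear. Inside each case it sub-splits on which preimage of $\fd{u'}{v'}$ the atom $F$ generates and on whether $G$ already witnesses the closure. Each branch is then closed with saturation, using either the dummy loop or $\fd{y}{x}\in\SFD{q}$ as the middle edge, or with \Cref{lem:ctrans}.

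You replace all of this with one uniform step:
\begin{itemize}
\item lift to a generated edge $\fd{\hat u}{\hat v}\in\FD{F}$;
\item show $\fd{\hat u}{\hat v}\in\FDclosure{q\setminus\{F\}}$ by lifting the witness from $\FD{G'}$ and patching $x/y$ mismatches;
\item apply saturation only with the dummy loop $\fd{\hat v}{\hat v}$;
\item move the endpoints with \Cref{lem:ctrans}.
\end{itemize}
This is shorter. It also covers, without extra work, the subcases $u=y$ or $v=y$ that the paper's ``clear'' Case~4 leaves implicit.

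The paper's version, in turn, makes visible that $\FDclosure{G}$ for a single atom $G$ just means $G$ generates the edge or generates its target from $\emp$. Since every edge of $\FD{G'}$ leaves the key of $G'$, your case split on paths through $y$ is heavier than needed.

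One justification should be tightened. It is not true in general that $\fd{x}{y}$ and $\fd{y}{x}$ are generated by atoms other than $F$, because the second witness may only supply $\fd{\emp}{y}$. Binary saturation is also not a hypothesis of this lemma. The fact you actually use, and which your fallback provides, is this: if $\fd{x}{y}\in\SFD{q}$ with $x\neq y$, then $y\in\cl{\{x\}}{q\setminus\{H\}}$ for every $H\in q$. Indeed, the two witnesses $F_1\neq G_1$ cannot both equal $H$, and $\fd{x}{y}\in\FDclosure{G_1}$ means that $G_1$ generates $\fd{x}{y}$ or $\fd{\emp}{y}$. Because $\FDclosure{q\setminus\{F\}}$ is reachability from $\{\hat u,\emp\}$, this is exactly what the splicing requires. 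State it as such and drop the appeal to binary saturation or a third atom.
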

 \begin{proof}
 Since ``$\Rightarrow$'' is straightforward, we focus on  ``$\Leftarrow$''. For this, assume that $\fd{u'}{v'}\in \SFD{q'}$.
 Let us write $F'$ to denote $F_{x\mapsto y}$, for $F\in q$.
 
 Case 1: $u=v=x$. It holds by definition that $\fd{x}{x}\in \SFD{q}$. 
 
 Case 2: $u= x$ and $v\neq x$. Then $u'=y$ and $v'=v$, and the assumption is that $\fd{y}{v}\in \SFD{q'}$.
 Let $F'\in q'$ and $G'\in q'\setminus \{F'\}$ be such that $\fd{y}{v}\in \FD{F'}$ and $\fd{y}{v}\in \FDclosure{G'}$. Note that also $F\neq G$.
 
 Suppose first that $\fd{x}{v}\in \FD{F}$. If $\fd{x}{v}\in \FDclosure{G}$, then $\fd{x}{v}\in \SFD{q}$. Otherwise, we have $\fd{x}{v}\notin \FDclosure{G}$ which entails $\fd{y}{v}\in \FD{G}$. 
 Then, by $\fd{x}{y}\in \SFD{q}$ and $G\neq F$ we obtain that $\fd{x}{v}\in \FDclosure{q \setminus\{F\}}$. Since $\fd{v}{v}\in \SFD{q}$ by definition, by the saturation of $q$ we obtain $\fd{x}{v}\in \SFD{q}$. 
 
 Suppose then that $\fd{x}{v}\notin \FD{F}$. Then we obtain $\fd{y}{v}\in \FD{F}$. If $\fd{y}{v}\in \FDclosure{G}$, then we obtain that $\fd{y}{v}\in \SFD{q}$, and furthermore by saturation and $\fd{x}{y}\in \SFD{q}$ that $\fd{x}{v}\in \SFD{q}$.
 Otherwise, if $\fd{y}{v}\notin \FDclosure{G}$, we obtain $\fd{x}{v}\in \FD{G}$. In this case, we obtain, by composing $\fd{x}{y}\in \SFD{q}$ and $\fd{y}{v}\in \FD{q \setminus \{G\}}$ to obtain $\fd{x}{v}\in \FD{q \setminus \{G\}}$, and using saturation with $\fd{v}{v}\in \SFD{q}$, that $\fd{x}{v}\in \SFD{q}$.
 
 Hence we conclude that $\fd{u}{v}\in \SFD{q}$.
 
  Case 3: $u\neq x$ and $v=x$.  Then $u'=u$ and $v'=y$, and the assumption is that $\fd{u}{y}\in \SFD{q'}$.  Let $F'\in q'$ and $G'\in q'\setminus \{F'\}$ be such that $\fd{u}{y}\in \FD{F'}$ and $\fd{u}{y}\in \FDclosure{G'}$.
  
  Suppose first that $\fd{u}{x}\in \FD{F}$. If $\fd{u}{x}\in \FDclosure{G}$, then $\fd{u}{x}\in \SFD{q}$. Otherwise, we have $\fd{u}{y}\in \FDclosure{G}$ which entails that $\fd{u}{y}\in \FD{G}$ or $\fd{\emp}{y}\in \FD{G}$. In both cases, using saturation, $G\neq F$, and $\fd{y}{x}\in \SFD{q}$ we obtain $\fd{u}{x}\in \SFD{q}$.
  
  Suppose then that $\fd{u}{x}\notin \FD{F}$. Then we have $\fd{u}{y}\in \FD{F}$. If $\fd{u}{y}\in \FDclosure{G}$, then $\fd{u}{y}\in \SFD{q}$, in which case saturation and $\fd{y}{x}\in \SFD{q}$ leads to $\fd{u}{x}\in \SFD{q}$.
  Otherwise, if $\fd{u}{y}\notin \FDclosure{G}$, we obtain $\fd{u}{x}\in \FDclosure{G}$. Then, by saturation, $G\neq F$, and $\fd{y}{x}\in \SFD{q}$ we obtain $\fd{u}{x}\in \SFD{q}$.
  
   Therefore $\fd{u}{v}\in \SFD{q}$ follows.
  
  Case 4: $u\neq x \neq v$. This case is clear, concluding the proof.
 \end{proof}

\subsubsection{Replacing Variables}
Let $q\in \sjfuabcq$ be saturated, 
 and let $x,y \in \queryvars{q}$ be such that $\fd{x}{y},\fd{y}{x}\in \SFD{q}$.
Let $q'\defeq (q_{x\mapsto y})$. Let $\db$ be a $q$-saturated database.

Since $q$ is saturated and $\db$ is $q$-saturated,
there exist an atoms of the form $S(\underline{x},y,\uk),T(\underline{y},x,\uk)\in q$ 
such that $S^{\db}$ and $T^{\db}$ are 
consistent. 

For an atom $R(\underline{v_1},v_2, \dots ,v_n)\in q$ and a fact $A=R(\underline{a_1},a_2, \dots ,a_n) \in \db$, define  $f(A)\defeq R(\underline{a'_1},a'_2, \dots ,a'_n)$ where
\[
a'_i \defeq \begin{cases}
a_i &\text{ if $v_i\neq x$,}\\
b &\text{ if $v_i=x$ and there exists a fact of the form $S(\underline{a_i},b,\uk)\in \db$.}\\
\end{cases}
\]
Note that $f(A)$ is well-defined by the $q$-saturation of $\db$. 
We define
\[
\identify_{x,y}(\db,q)\defeq (\db',q'),
\]
where $\db'\defeq \dbsaturate(f[\db],q')$, and $g[X]$ denotes the image of a function $g$ on a set $X$ as usual.
\bigskip

Let $\sjfuabcq^=$ be defined otherwise as $\sjfuabcq$, except that it includes also queries that are not variable-repetition-free.

\smallskip
 \begin{lemma}[Normality, \cref{it:normal4}]\label{lem:removecfd4}
   Let $\mathcal{C}$ (resp. $\mathcal{C}^=$) be the set of pairs $(\db,q)$, where $q$ is a saturated query from $\sjfuabcq$ (resp. from  $\sjfuabcq^=$) such that it satisfies \Cref{it:normal1,it:normal2} of \Cref{def:normal}, and $\db$ is a $q$-saturated database. 
  Then, for each distinct $x,y \in \queryvars{q}$ with
  $\fd{x}{y},\fd{y}{x}\in \SFD{q}$,
 $(\db,q) \mapsto \identify_{x,y}(\db,q)$ is a strict, size-monotone, generator-monotone, equivalence-preserving,
polynomial-time reduction from $\mathcal{C}$ to  $\mathcal{C^=} \cup \{\true,\false\}$.
\end{lemma}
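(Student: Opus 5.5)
Following the pattern of \Cref{lem:removecfd,lem:removecfd2}, I would first dispose of the bookkeeping properties and then do the two substantive parts: membership of the output in $\mathcal{C}^=$ and equivalence preservation.

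\textbf{Bookkeeping.} Strictness and size-monotonicity are immediate, since $q'=q_{x\mapsto y}$ replaces occurrences of $x$ by $y$. This lowers $\#\text{distvar}$ by one and keeps all occurrence counts. Moreover, $f$ maps facts to facts of the same shape and only uses constants already in $\adom{\db}$, so the database does not grow. Generator-monotonicity follows from \Cref{lem:replaceconstant3}, \Cref{it:toz}. Indeed, $\fd{x}{y},\fd{y}{x}\in\SFD{q}$ with $x\neq y$ gives, by binary-saturation-type reasoning (or directly from the definition of $\SFD{q}$), atoms $S(\underline{x},y,\uk)$ and $T(\underline{y},x,\uk)$. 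Hence $\fd{x}{y},\fd{y}{x}\in\FD{q}$, and $\cgs{q'}=\cgs{q}$. Polynomial time is clear: $f$ is computed by a lookup in the consistent relation $S^\db$, followed by \Cref{lem:dbsat}.

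\textbf{Membership in $\mathcal{C}^=$.} Here $q'$ may repeat variables; that is allowed. $\db'$ is $q'$-saturated by \Cref{lem:dbsat}. For saturation of $q'$ and \Cref{it:normal1,it:normal2}, I would use \Cref{lem:boring}: it gives $\SFD{q'}$ exactly as the image of $\SFD{q}$ under $u\mapsto u'$, and $\FD{q'}$ is likewise the image of $\FD{q}$. I would then transfer each condition of \Cref{def:saturation} back to $q$, apply saturation of $q$, and map forward again. For \Cref{it:normal1}, an inconsistent edge of $q'$ pulls back to an inconsistent edge of $q$, again by \Cref{lem:boring}. For acyclicity of the attack graph, I would argue as in the earlier claims. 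Because $x$ and $y$ are mutually determined in every $\keyclosure{G}{q}$ except possibly for $G\in\{S,T\}$, we get $\keyclosure{G}{q}$ mapped into $\keyclosure{G'}{q'}$. Any non-separated Gaifman path in $q'$ lifts to one in $q$ by splitting a visit to $y$ into $x$–$y$, using the edge $\{x,y\}$ present via $S$. This is the fiddliest part: one needs that $x\in\keyclosure{G}{q}$ iff $y\in\keyclosure{G}{q}$. I expect this to hold since the two witnesses $S,T$ together with binary saturation give two parallel atoms for each direction, so removing $G$ leaves one.

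\textbf{Equivalence preservation.} The key step is to show $(\db,q)\in\cert$ iff $(f[\db],q')\in\cert$; composing with \Cref{lem:dbsat} then finishes. The map $f$ only rewrites values in $x$-columns. Two key-equal facts $A,A'$ stay key-equal under $f$, so repairs of $\db$ correspond to repairs of $f[\db]$ block-wise. A subtlety is that distinct facts in one block may collapse; collapsing only shrinks choices, so every repair of $f[\db]$ is the $f$-image of some repair of $\db$.

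For the direction ``$\Rightarrow$'', take a repair $\rep'$ of $f[\db]$ and lift it to a repair $\rep$ with $f[\rep]=\rep'$. Any $\theta$ with $\theta(q)\subseteq\rep$ satisfies $\theta(y)=f$-image of $\theta(x)$, because $S(\underline{\theta(x)},\theta(y),\uk)\in\rep$ and $S^\db$ is consistent. Hence $\theta|_{\queryvars{q'}}$ witnesses $q'$ in $\rep'$.

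For the direction ``$\Leftarrow$'', take a repair $\rep$ of $\db$, let $\rep'=f[\rep]$, and take a witness $\theta'$ for $q'$. Extend it by setting $\theta(x)$ to the unique $a$ with $T(\underline{\theta'(y)},a,\uk)\in\db$, which is consistent by functional consistency. Then check, using collision consistency and pairwise consistency of $\db$, that each $x$-column value $a_i$ of the chosen preimage facts equals this $a$. This mirrors the collision argument in \Cref{lem:removecfd2}.

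I expect this last verification, that the $f$-preimages agree on $x$, to be the main obstacle. The plan is to use the fact that $f$ is injective on values occurring in $x$-columns: $S$ and $T$ are mutually inverse functions on $\db$ by collision consistency.
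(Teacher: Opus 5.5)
Your proposal is correct and follows essentially the same route as the paper. The bookkeeping goes through \Cref{lem:replaceconstant3} and \Cref{lem:dbsat}, membership in $\mathcal{C}^=$ through \Cref{lem:boring} together with lifting Gaifman paths via the $\{x,y\}$ edge, and equivalence preservation through injectivity of $f$ (from $S$ and $T$ being mutually inverse on $\db$), with $\theta(x)$ recovered from the preimage of the substituted $S$- or $T$-fact. One step should be reordered: the block-wise correspondence of repairs needs more than key-equal facts staying key-equal; it also needs that distinct blocks never merge under $f$. That is exactly what the injectivity you invoke at the end provides, and it also makes your within-block ``collapse'' case vacuous, so establish injectivity first, as the paper does.
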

 \begin{proof}
 
  Let us write  $x'\defeq y$, and $u'\defeq u$ for $u\in (\queryvars{q}\setminus \{x\})\cup\{\emp\}$. 
Given an atom $F\in q$, let us write $F'$ for its corresponding atom in $q'$, that is, $F'=F_{x \mapsto y}$.
By assumption there are atoms of the form $S(\underline{x},y,\uk),T(\underline{y},x,\uk)\in q$.

Generator-monotonicity follows by \Cref{lem:replaceconstant3}, while size-monotonicity and strictness are clear. Moreover, the reduction is in polynomial time due to \Cref{lem:dbsat}. 
 It remains to prove that the reduction is equivalence-preserving while it maintains membership in $\mathcal{C^=} \cup \{\true,\false\}$.
Suppose $(\db',q') = \identify_{x,y}(\db,q) \notin \{\true,\false\}$.

{ \bf
(Membership of $(\db',q')$ in $\mathcal{C^=}$)
 }
 We divide this into several claims.
  \begin{claim}
  $q'\in \sjfuabcq^=$.
  \end{claim}
  \begin{claimproof}
  It suffices to show that $G'\attacks{q'} H'$ entails $G\attacks{q} H$ for all $G',H'\in q'$. Assume $G'\attacks{q'} H'$. Then  $\gaifman{q'}$ has a path $U'=x'_1, \dots ,x'_n$ between $ \notkeyvars{G'}$ and $ \atomvars{H'}$ that is not separated by $\keyclosure{(G')}{q'}$. 
If  $x_1, \dots ,x_n$ is a path in $\gaifman{q}$ between $ \notkeyvars{G}$ and $ \atomvars{H}$, define $U\defeq x_1, \dots ,x_n$.
Otherwise, by our assumption, for some $i\in [2,n-1]$ with $x_i=y$ it holds that $\{x_{i-1},y\}$ and $\{x,x_{i+1}\}$ (or $\{x_{i-1},x\}$ and $\{y,x_{i+1}\}$) are edges in $\gaifman{q}$; in this case, define  $U \defeq x_1, \dots x_{i-1},x_i,x,x_{i+1},\dots ,x_n$ (or $U \defeq x_1, \dots x_{i-1},x,x_i,x_{i+1},\dots ,x_n$).
Since $\fd{x}{y}\in \SFD{q}$, we obtain that $U$ is a path in $\gaifman{q}$ between $ \notkeyvars{G}$ and $ \atomvars{H}$.

Toward contradiction, suppose $G\nattacks{q} H$. Then $u \in \keyclosure{G}{q}$, for some $u$ appearing in $U$. 
Since $\fd{x}{y}\in \SFD{q}$ and $y$ appears in $U$, we may assume without loss of generality that $u \neq x$. 
Now, if $D=t, \dots ,u$ is the directed graph witnessing this, then $D \defeq t', \dots ,u'$ is a directed graph witnessing 
 $u' \in \keyclosure{(G')}{q'}$. Since $u'$ appears in $U'$, this contradicts the fact that $U'$ is not separated by $\keyclosure{(G')}{q'}$. Hence we conclude by contradiction that $G\attacks{q} H$.
\end{claimproof}

  \begin{claim}
 $q'$ is saturated.
  \end{claim}
    \begin{claimproof}
 Let $F'\in q'$ be such that $\fd{u'}{v'}\in \FD{F'}$, $\fd{v'}{w'}\in \SFD{q'}$, and $\fd{u'}{w'}\in \FDclosure{q'\setminus \{F'\}}$. We need to show that $\fd{u'}{w'}\in \SFD{q'}$.
 We may select $u,v$ such that $\fd{u}{v}\in \FD{F}$. Then also $\fd{v}{w}\in \SFD{q}$ by \Cref{lem:boring}. Furthermore, $\fd{u'}{w'}\in \FDclosure{q'\setminus \{F'\}}$ together with $\fd{x}{y},\fd{y}{x}\in \SFD{q}$ imply
 $\fd{u}{w}\in \FDclosure{q\setminus \{F\}}$; in particular, for any two edges $\fd{q}{y},\fd{y}{r}\in \FD{q'\setminus \{F'\}}$, one of the following four possibilities holds: $\fd{q}{x},\fd{x}{r}\in \FD{q\setminus \{F\}}$, $\fd{q}{y},\fd{y}{r}\in \FD{q\setminus \{F\}}$, $\fd{q}{x},\fd{x}{y},\fd{y}{r}\in \FD{q\setminus \{F\}}$, or $\fd{q}{y},\fd{y}{x},\fd{x}{r}\in \FD{q\setminus \{F\}}$. Thus, since $q$ is saturated, we obtain $\fd{u}{w}\in \SFD{q}$, hence $\fd{u'}{w'}\in \SFD{q'}$ by  \Cref{lem:boring}.
  \end{claimproof}

\begin{claim}
$q'$ satisfies \cref{it:normal1,it:normal2} of normality
\end{claim}
\begin{claimproof}
(\Cref{it:normal1}) Suppose toward contradiction that $\fd{u'}{v'} \in \SFD{q'}$ where $u'=\keyvars{F'}$,  $v'\in \notkeyvars{F'}$, while $|\notkeyvars{F'}|>1$.
Then \Cref{lem:boring} implies $\fd{u}{v} \in \SFD{q}$ where $u=\keyvars{F}$ and  $v\in \notkeyvars{F}$. Furthermore, it is clear that $|\notkeyvars{F}|\geq |\notkeyvars{F'}|$. This contradicts the normality of $q$, hence \Cref{it:normal1} holds for $q'$.

(\Cref{it:normal2}) Suppose toward contradiction that $\fd{u'}{v'} \in \SFD{q'}$, for distinct $u',v'\in \notkeyvars{F'}$.  Then \Cref{lem:boring} implies $\fd{u}{v} \in \SFD{q}$. Since 
 $u$ and $v$ are clearly distinct, and $u,v\in \notkeyvars{F'}$, we obtain a contradiction with the normality of $q$, so \Cref{it:normal2} follows for $q'$.
\end{claimproof}
This concludes the proof of the membership of $(\db',q')$ in $\mathcal{C^=}$.

{ \bf
(Equivalence preservation)}
By \Cref{lem:dbsat},
it suffices to prove  $(\db,q)\in \cert$ if and only if $(f[\db],q')\in \cert$.

 We first prove that $f$ is injective. Suppose toward contradiction that $f(A)= f(B)$ for some $A\neq B$ from $\db$. 
If $A=R(\underline{a_1},a_2, \dots ,a_n) $ and  $B=R(\underline{b_1},b_2, \dots ,b_n) $, where $R(\underline{v_1},v_2, \dots ,v_n)\in q$, then for some $i\in [n]$ we have $v_i=x$, $a_i \neq b_i$ such that $S(\underline{a_i},c,\uk),S(\underline{b_i},c,\uk)\in \db$ for some $c$.
 By $q$-saturation it holds that one of $T(\underline{c},a_i,\uk)$ or $T(\underline{c},b_i,\uk)$ does not belong to $\db$. However, this contradicts another criterion of $q$-saturation, pairwise consistency, which entails that any $S$-fact can be joined with some $T$-fact. From the contradiction, it follows that $f$ is injective.  
 
 By an analogous reasoning, if $f(A)$ and $ f(B)$ are key-equal, then $A$ and $B$ are key-equal. Hence a subset $\db'$ of $\db$ forms a block of $\db$ if and only if $f[\db']$ forms a block of $f[\db]$.

($\Rightarrow$) Suppose $(\db,q)\in \cert$, and $\rep'$ be a repair of $f[\db]$. Since $f$ is injective, it has an inverse function $f^{-1}\colon f[\db]\to \db$. Define $\rep \defeq f^{-1}[\rep']$. By our arguments above, $\rep$ is a repair of $\db$. By assumption there exists a valuation $\theta$ such that $\theta(q)\subseteq \rep$.   We claim that $\theta(q')\subseteq \rep'$. Indeed, using $S(\underline{\theta(x)},\theta(y),\uk)\in \db$,  if $\theta(R)\in \rep$, then 
 $\theta(R_{x \mapsto y})= f(\theta(R))\in f(\rep)=(f \circ f^{-1})[\rep']=\rep'$. Hence the claim holds, leading to $(f[\db],q')\in \cert$.

($\Leftarrow$) Suppose $(f[\db],q')\in \cert$, and let $\rep$ be a repair of $\db$. Again, we see that $f[\rep]$ is a repair of $f[\db]$, hence we find a valuation $\theta$ such that $\theta(q')\subseteq f[\rep]$. In particular, since $S(\underline{y},y,\uk)\in q'$, we have $S(\underline{\theta(y)},\theta(y),\uk)\in f[\rep]$. Now, consider the fact $S(\underline{a},\theta(y),\uk) = f^{-1}(S(\underline{\theta(y)},\theta(y),\uk))\in (f^{-1}\circ f)[\rep]=\rep$. Let $\theta'$ be otherwise as $\theta$, except that it maps $x$ to $a$.
Then, for any $R\in q$, we obtain $\theta'(R)=f^{-1}(\theta(R_{x \mapsto y}))$. Since $\theta(R_{x \mapsto y})\in f[\rep]$, this leads to $\theta'(R)\in(f^{-1}\circ f)[\rep]= \rep$. Hence $\theta'(q)\subseteq \rep$, leading to $(\db,q)\in \cert$.
\end{proof}

  \subsubsection{Removing Repeated Variables}
  
  For an atom
\(
F=R(\underline{t_1},t_2,\ldots,t_n),
\)
let $\norep{F}$ be obtained by deleting every argument
position \(i\) such that \(t_i\) is a variable and
\(t_i=t_j\) for some \(j<i\).
Then, let $\norep{q}\defeq \{\norep{F}\mid F\in q\}$.
Also, given a database $\db$ and query $q$, 
 let $\db_0$ be obtained from $\db$ by dropping those columns that correspond to the positions dropped in the reduction $q\mapsto \norep{q}$.
(Formally, $\norep{*}$  changes also relation names when the arity changes.)
Then, we set $\removereps(\db,q)\defeq (\db',q')$, where $q'\defeq \norep{q}$ and $\db'\defeq \dbsaturate(\db_0,q')$.


Recall the notion of the size-monotone reduction from \Cref{sect:controlled}.
\smallskip
 \begin{lemma}\label{lem:repfreereal}
  Let $\mathcal{C}$ (resp. $\mathcal{C}^=$) be the set of pairs $(\db,q)$, where $q$ is a query from $\sjfuabcq$ (resp. from  $\sjfuabcq^=$), and $\db$ is a $q$-saturated database. 
  Then 
 $(\db,q) \mapsto \removereps(\db,q)$ is a size-monotone, generator-monotone, equivalence-preserving,
polynomial-time reduction from $\mathcal{C}^=$ to  $\mathcal{C} \cup \{\true,\false\}$.
Additionally, if $q$ is saturated and satisfies \Cref{it:normal1,it:normal2}, then so does $q'$.
 %
%
\end{lemma}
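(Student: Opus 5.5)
The plan mirrors the earlier normalization lemmas. First the reduction itself. Let $(\db,q)\in\mathcal{C}^=$ and let $\pi$ be the column-deletion map sending a fact $A=R(\underline{a_1},\dots,a_n)$ to the fact $\norep{A}$ obtained by dropping the positions deleted in $\norep{R}$. Pairwise consistency of $\db$ implies that every $R$-fact is the image of $R$ under some valuation. Hence every fact $A\in\db$ agrees with itself on all positions carrying the same variable of $R$. So $\pi$ is injective on $\db$, and it preserves and reflects key-equality. The key position is the first one, so it is never deleted; it could only coincide with a later occurrence, which is the one removed.

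Consequently $\pi$ induces a bijection between blocks of $\db$ and blocks of $\db_0$, and also between repairs of $\db$ and repairs of $\db_0$. A valuation $\theta$ satisfies $\theta(R)\in\rep$ if and only if $\theta(\norep{R})\in\pi[\rep]$. For the backward direction we use again that facts in $\db$ repeat values wherever $R$ repeats variables. This gives $(\db,q)\in\cert$ if and only if $(\db_0,q')\in\cert$. \Cref{lem:dbsat} then yields equivalence preservation and $q'$-saturation of $\db'$. Size-monotonicity and polynomial time are immediate: columns are only deleted, $\adom{\db'}\subseteq\adom{\db}$, and $\queryvars{q'}=\queryvars{q}$.

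Next come the structural invariants. The query $q'$ has exactly the same variables in key and non-key positions of each atom as $q$, except where the key variable also occurred in a non-key position. Thus $\FD{q'}$ is $\FD{q}$ minus self-loops, which do not affect $\nscc{\cdot}$, and generator-monotonicity follows from $\cgs{q}=\nscc{\FD{q}}-1$. For each atom, the sets $\keyvarsvars{F}$, $\atomvars{F}$ and $\notkeyvars{F}\setminus\{\keyvars{F}\}$ are unchanged. Closures $\keyclosure{F}{q}$ are unchanged, and so is the Gaifman graph. One case needs care: $\notkeyvars{F}$ may lose the key variable $x=\keyvars{F}$. That can only shrink the set of attack sources, and $x\in\keyclosure{F}{q}$ anyway, so such a source never yielded an attack. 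Hence the attack relations coincide, the attack graph stays acyclic, and $q'\in\sjfuabcq$ is variable-repetition-free by construction.

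The remaining step, which I expect to be the main obstacle, is the additional claim that saturation and \Cref{it:normal1,it:normal2} survive. I would show that $\SFD{q'}=\SFD{q}$. Both the per-atom edge sets $\FD{F}$ (modulo loops, which $\SFD{}$ contains anyway as dummy edges) and the closures $\FDclosure{G}$ are unchanged. Saturation is then verbatim, since \Cref{it:sat1,it:sat2,it:sat3} refer only to these sets. For \Cref{it:normal2}, the pairs of distinct non-key variables of $\norep{F}$ are pairs in $F$. For \Cref{it:normal1}, the delicate case is an atom whose non-key variable count drops to one, making the condition vacuous, or stays above one. In the latter case every remaining edge $\fd{x}{y}$ was already inconsistent in $q$, since $|\notkeyvars{F}|$ was even larger there. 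So the condition transfers in both cases.
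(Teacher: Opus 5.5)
Your proposal is correct and follows the same route as the paper's sketched proof. Pairwise consistency forces every fact to repeat constants exactly where its atom repeats variables, which gives the block and repair bijection and, with \Cref{lem:dbsat}, equivalence preservation; and since $\FD{q'}$ differs from $\FD{q}$ only by self-loops, the source-SCC count, the attack graph, $\SFD{q}$, saturation and \Cref{it:normal1,it:normal2} all carry over. You simply supply details the paper leaves to the reader, such as the observation that a dropped non-key occurrence of $\keyvars{F}$ lies in $\keyclosure{F}{q}$ and so never gave rise to an attack.
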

\begin{proof}
We provide a sketch and leave the details to the reader. 
Since $\norep{*}$ only removes self-loops from $\FD{q}$, the reduction is generator-monotone.
Clearly, it is also size-monotone and in polynomial time.

Assume $(\db',q')=  \removereps(\db,q)\notin \{\true,\false\}$.
Clearly, $q'$ is variable-repetition-free. Also, the attack graphs of $q$ and $q'$ are identical (modulo changes in the relation names).
That saturation and  satisfaction of \Cref{it:normal1,it:normal2} is preserved is straightforward to verify.
Moreover, $\db'$ is $q'$-saturated by \Cref{lem:dbsat}.
 Hence $(\db',q')$ belongs to $\mathcal{C} \cup \{\true,\false\}$.
 
 For Equivalence preservation, observe that due to $q$-saturation of $\db$, any $R$-fact $A\in \db$ contains repeated constants in those positions where the corresponding $R$-atom of $q$ contains repeated variables. Therefore, $(\db,q)\in \cert$ if and only if $(\db_0,q')\in \cert$. It then follows by \Cref{lem:dbsat} that the reduction is equivalent-preserving.
\end{proof}

  \subsection{Fourth Condition and Substitutable Sets}
This section proceeds in two steps. First, we provide some auxiliary  concepts and lemmas. 
One such concept is that of a substitutable set of variables. Such a set can be replaced with constants in a way that is safe with respect to  the overall reduction.
 
 \subsubsection{Auxiliary  Concepts and Lemmas}\label{sect:aux}
 
 We use the following result, which states that replacing a variable with a constant in a query does not introduce new attacks.
This result is a minor variant of \cite[Lemma~3.7]{KoutrisW17} (which includes also composite primary keys) and follows by the same argument.
\begin{lemma}[\cite{KoutrisW17}]\label{lem:replaceconstant2}
Let $q\in \sjfuabcq$, let $x\in \queryvars{q}$, and let $c$ be an arbitrary constant. 
If  $F_{x\mapsto c} \attacks{q_{x\mapsto c}} G_{x\mapsto c}$, then $F \attacks{q} G$.
Consequently,  $q_{x\mapsto c}\in \sjfuabcq$.
\end{lemma}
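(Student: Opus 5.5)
The plan is to show the contrapositive directly from the definition of attacks: assume $F_{x\mapsto c}\attacks{q'}G_{x\mapsto c}$, where $q'\defeq q_{x\mapsto c}$, and produce a witnessing path for $F\attacks{q}G$. By definition there is a path $P$ in $\gaifman{q'}$ from a variable in $\notkeyvars{F_{x\mapsto c}}$ to a variable in $\atomvars{G_{x\mapsto c}}$ none of whose vertices lies in $\keyclosure{(F_{x\mapsto c})}{q'}$.

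The first step is to transfer the path. Every variable of $q'$ is a variable of $q$ other than $x$, and every Gaifman edge $\{u,v\}$ of $q'$ comes from an atom $H_{x\mapsto c}$; the same atom $H$ of $q$ contains both $u$ and $v$. Hence $P$ is also a path in $\gaifman{q}$. Likewise, its first vertex lies in $\notkeyvars{F}$ and its last in $\atomvars{G}$, since substitution only removes occurrences of $x$.

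The second step, which is the one needing care, is to show that no vertex of $P$ lies in $\keyclosure{F}{q}$. For this I would prove the inclusion $\keyclosure{F}{q}\setminus\{x\}\subseteq\keyclosure{(F_{x\mapsto c})}{q'}$. Argue by induction along the closure construction for $\cl{\keyvarsvars{F}}{q\setminus\{F\}}$. At each step an atom $H\neq F$ with key variables already in the closure adds its variables. In $q'$, the atom $H_{x\mapsto c}\neq F_{x\mapsto c}$, which is distinct by self-join-freeness, has key variables equal to those of $H$ minus $x$. These are therefore in the $q'$-closure by the induction hypothesis, and this holds also when the key was $x$ and has now become the constant $c$, because a constant key imposes no requirement. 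So $H_{x\mapsto c}$ adds $\atomvars{H}\setminus\{x\}$.

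Since $P$ does not contain $x$, no vertex of $P$ lies in $\keyclosure{F}{q}$. Therefore $\notkeyvars{F}$ and $\atomvars{G}$ are not separated, and $F\attacks{q}G$.

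For the consequence, the assignment $H\mapsto H_{x\mapsto c}$ is a bijection between the atoms of $q$ and those of $q'$, again by self-join-freeness. By the implication just proved, any cycle in the attack graph of $q'$ pulls back to a cycle in the attack graph of $q$, so acyclicity is preserved. The query $q'$ is still self-join-free with unary keys. It is also variable-repetition-free, because replacing a variable by a constant creates no repeated variables. Hence $q'\in\sjfuabcq$. I expect the only delicate point to be the closure inclusion in the case where a key becomes a constant; the rest is bookkeeping.
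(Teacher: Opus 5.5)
Your proof is correct and matches the argument the paper defers to from Koutris--Wijsen: the Gaifman graph of $q_{x\mapsto c}$ is a subgraph of that of $q$, and $\keyclosure{F}{q}\setminus\{x\}\subseteq\keyclosure{(F_{x\mapsto c})}{q_{x\mapsto c}}$, so an unseparated witnessing path transfers back to $q$. Your bookkeeping for the consequence is also sound: $H\mapsto H_{x\mapsto c}$ is a bijection by self-join-freeness, and unary keys and variable-repetition-freeness are preserved, which yields $q_{x\mapsto c}\in\sjfuabcq$.
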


 \medskip

\begin{algorithm}[t]
\caption{$\textsc{GetFunction}(\db,q)$ \label{alg:substitute}} 
\Input{A pair $(\db,q)$, where $q\in\sjfuabcq$ and  $\db$ is a $q$-saturated database}
\Output{A function $f\colon \outvar{\SFD{q}}{\emp} \to \adom{\db}$}
\ForEach{$y\in \outvar{\SFD{q}}{\emp}$}{
    Select the $\prec$-least atom of the form $S(\underline{c},y,\uk)\in q$, where $c$ is a constant\;
   Set $f(y)\defeq b$, where $b$ is the unique constant such that $S(\underline{c},b,\uk)\in \db$\;
}
\Return{$f$} 
\end{algorithm}

The following lemma concerning edge preservation upon substitution is also needed. Note that we only replace sets of variables that are closed under the consistent edges of the key-nonkey graph.
\smallskip
   \begin{lemma}\label{claim:notsimple0}
   Let $q\in \sjfuabcq$ be saturated.
Let $Y\subseteq \queryvars{q}$ be such that {$\outvar{\SFD{q}}{\emp}\subseteq Y$} and $\outvar{\SFD{q}}{Y} \subseteq Y$,
and let $f\colon Y \to \Const$ be a function.
Define $q'\defeq q_{f}$.
Let $u,v\in \queryvars{q} \cup\{\emp\}$ be such that $v\notin Y$. Then
\begin{enumerate}
\item\label{it:not1} $\fd{u}{v}\in \FD{q}$ implies $\fd{u'}{v}\in \FD{q'}$;
\item\label{it:not2} $\fd{u}{v}\in \WFD{q}$ implies $\fd{u'}{v}\in \WFD{q'}$;
\item\label{it:not3} $\fd{u}{v}\in \SFD{q}$ if and only if $\fd{u'}{v}\in \SFD{q'}$;
\end{enumerate}
where $u'\defeq \emp$ if $u\in Y$, and otherwise  $u'\defeq u$. Consequently, if additionally $u\notin Y$, then 
$\fd{u}{v}\in \SFD{q}$ if and only if $\fd{u}{v}\in \SFD{q'}$.
\end{lemma}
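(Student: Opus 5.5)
The plan is to compare $\FD{q}$ and $\FD{q'}$ atom by atom, using the fact that a single atom determines $\FDclosure{G}$ very simply. Then item~\ref{it:not2} falls out of items~\ref{it:not1} and~\ref{it:not3}.

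\textbf{Item~\ref{it:not1} and the shape of $\FD{q'}$.} Every edge of $\FD{q'}$ comes from an edge $\fd{w}{z}\in\FD{q}$ generated by some $G$, and it becomes $\fd{w'}{z}$ in $\FD{G_f}$ provided $z\notin Y$. If $z\in Y$, the edge disappears because $z$ is now a constant. In particular, if $F$ generates $\fd{u}{v}$ with $v\notin Y$, then $F_f$ has key $u'$ and still contains the non-key variable $v$. This gives item~\ref{it:not1}, and the same observation shows that for the generating atom, $\fd{u}{v}\in\FD{F}$ iff $\fd{u'}{v}\in\FD{F_f}$.

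\textbf{Single-atom closures.} Next I record a simple description of $\FDclosure{G}$ for a single atom $G$. It consists of the dummy edges together with the edges $\fd{\keyvars{G}}{z}$ for $z\in\atomvars{G}$. Hence $\fd{u}{v}\in\FDclosure{G}$ with $u\neq v$ just means that $G$ generates $\fd{u}{v}$. The dummy case $u=v$ cannot occur when $u\in Y$ or $u=\emp$, since $v\notin Y$. Consequently, $\fd{u}{v}\in\SFD{q}$ (with $u\neq v$) holds iff two distinct atoms of $q$ generate $\fd{u}{v}$. Likewise, $\fd{u'}{v}\in\SFD{q'}$ holds iff two distinct atoms $F_f,G_f$ generate $\fd{u'}{v}$; this uses self-join-freeness, so that $F\neq G$ iff $F_f\neq G_f$.

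\textbf{Item~\ref{it:not3}.} The direction ``$\Rightarrow$'' is immediate from the previous paragraph. The real work is ``$\Leftarrow$'' when $u'=\emp$, i.e.\ $u\in Y\cup\{\emp\}$. In that case $F$ and $G$ generate $\fd{u}{v}$ and $\fd{w}{v}$ for some $u,w\in Y\cup\{\emp\}$, possibly with $u\neq w$. If $u=w$, we are done. Otherwise I argue by contradiction.
\begin{itemize}
\item If $\fd{u}{v}\in\SFD{q}$, then closure of $Y$ under consistent edges, together with $\outvar{\SFD{q}}{\emp}\subseteq Y$, forces $v\in Y$. This is impossible, so $\fd{u}{v}\in\WFD{q}$.
\item The same argument gives $\fd{w}{v}\in\WFD{q}$.
\item Now \Cref{lem:samesourcesat}, which uses saturation and the acyclic attack graph, yields $u=w$, a contradiction.
\end{itemize}
Hence $F$ and $G$ generate the same edge $\fd{u}{v}$, so $\fd{u}{v}\in\SFD{q}$. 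This case analysis is the main obstacle. The final ``consequently'' is the special case $u'=u$.

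\textbf{Item~\ref{it:not2}.} This follows directly: from $\fd{u}{v}\in\WFD{q}$, item~\ref{it:not1} gives $\fd{u'}{v}\in\FD{q'}$, and item~\ref{it:not3} gives $\fd{u'}{v}\notin\SFD{q'}$. Therefore $\fd{u'}{v}\in\WFD{q'}$.
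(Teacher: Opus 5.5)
Your overall strategy (compare the atoms one by one, then use \Cref{lem:samesourcesat} to rule out two inconsistent edges into $v$) is the paper's. However, the simplification in your ``single-atom closures'' paragraph is false, and it hides a case that needs exactly that lemma.

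\textbf{The closure description.} By definition $\cl{\{x\}}{G}=\reach{\{x,\emp\}}{\FD{G}}\setminus\{\emp\}$. So if $\keyvars{G}=\emp$, then $\fd{x}{z}\in\FDclosure{G}$ for \emph{every} variable $x$ and every $z\in\notkeyvars{G}$. The paper uses this throughout, e.g.\ ``$\fd{u}{v}\in\FDclosure{H}$ implies $\fd{u}{v}\in\FD{H}$ or $\fd{\emp}{v}\in\FD{H}$''. Hence, for $u\neq v$, $\fd{u}{v}\in\SFD{q}$ holds iff some $F$ generates $\fd{u}{v}$ and some $G\neq F$ generates $\fd{u}{v}$ \emph{or} $\fd{\emp}{v}$. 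It is not the same as ``two distinct atoms generate $\fd{u}{v}$''.

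\textbf{The missing case.} With the correct criterion, the case you treat as immediate, namely ``$\Leftarrow$'' with $u\notin Y$ (so $u'=u$), has a non-trivial subcase. Suppose $F_f$ generates $\fd{u}{v}$, while $G_f$ generates only $\fd{\emp}{v}$ because $\keyvars{G}=y\in Y$ was replaced by a constant. Then $G$ generates $\fd{y}{v}$, and the pair $F,G$ witnesses nothing about $\fd{u}{v}$ in $q$. The missing argument is as follows.
\begin{itemize}
\item Since $Y$ is closed under $\SFD{q}$ and $v\notin Y$, we have $\fd{y}{v}\in\WFD{q}$.
\item Since $u\notin Y$, we have $u\neq y$, so \Cref{lem:samesourcesat} rules out $\fd{u}{v}\in\WFD{q}$.
\item Since $F$ generates $\fd{u}{v}$, it follows that $\fd{u}{v}\in\SFD{q}$.
\end{itemize}
This is precisely the final step of the paper's proof. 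The ``$\Rightarrow$'' direction must also allow $G$ to generate $\fd{\emp}{v}$, but that case transfers trivially, since $G_f$ still generates $\fd{\emp}{v}$.

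\textbf{The case $u'=\emp$.} Your treatment re-binds $u$ to the key of the witness $F$, which need not be the given $u$. It then ends with ``$\fd{u}{v}\in\SFD{q}$'', which your own first bullet refutes. What your bullets actually establish is that $\fd{\emp}{v}\in\SFD{q'}$ is impossible when $v\notin Y$: both underlying edges are inconsistent, so they share a source, and then they would be consistent. So this case holds vacuously, and you should say so. The paper argues it that way, by showing that the key of the first witness is neither in $Y$ nor $\emp$.
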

\begin{proof}
\Cref{it:not1} is clear, and \Cref{it:not2} follows from Items \ref{it:not1} and \ref{it:not3}. Thus, we prove the last item.
Let us write $F'$ for the atom $F_{Y\mapsto f(Y)}$, for each $F\in q$.

($\Rightarrow$) 
 Suppose $\fd{u}{v}\in \SFD{q}$. Then we find $F\in q$ and $G\in q\setminus\{F\}$ such that $\fd{u}{v}\in \FD{F}$ and $\fd{u}{v}\in \FDclosure{G}$. Clearly, $\fd{u'}{v}\in \FD{F'}$ and $\fd{u'}{v}\in \FDclosure{G'}$.
Hence $\fd{u'}{v}\in \SFD{q'}$.
 
($\Leftarrow$)  
Suppose $\fd{u'}{v}\in \SFD{q'}$. Then, there exists $F'\in q$ such that $\fd{u'}{v}\in \FD{F'}$ and $G'\in q'\setminus \{F'\}$ such that $\fd{u'}{v}\in \FDclosure{G'}$.
Consequently, there exists $p$ such that $\fd{p}{v}\in \FD{F}$ and $p'=u'$. In particular, it holds that $\fd{p'}{v}\in \FDclosure{G'}$.

Suppose toward contradiction that $p\in Y$. Since $\outvar{\SFD{q}}{Y} \subseteq Y$ and $v\notin Y$, we obtain $\fd{p}{v}\in \WFD{q}$. 
If $\fd{p}{v}\in \FDclosure{G}$, we obtain 
$\fd{p}{v}\in \SFD{q}$, a contradiction. Thus, $\fd{p}{v}\notin \FDclosure{G}$, which entails  $\fd{y}{v} \in \FD{G}$ for some $y\in Y$
such that $y\neq p$. But then, $\fd{y}{v}\in \WFD{q}$, as otherwise by $\outvar{\SFD{q}}{Y} \subseteq Y$ we would obtain $v\in Y$, a contradiction. However, since $p\neq y$, we obtain then a contradiction with \Cref{lem:samesourcesat}. Hence we conclude that $p\notin Y$.

Assume toward contradiction that $p=\emp$. Then also $u'=p'=\emp$. Thus, we obtain either $\fd{\emp}{v}\in \FD{G} $ or, for some $y\in Y$, $\fd{y}{v}\in \FD{G} $. Since $\fd{\emp}{v}\in \FD{F}$, both contradict $v\notin Y$, {$\outvar{\SFD{q}}{\emp}\subseteq Y$}, and $\outvar{\SFD{q}}{Y} \subseteq Y$. Hence $p\neq \emp$.

It follows that $p$ is a variable not from $Y$. Consequently, $p=p'=u'=u$, hence $\fd{u}{v}\in \FD{F}$.
If $\fd{u}{v}\in \FDclosure{G}$, then $\fd{u}{v}\in \SFD{q}$, as required. Otherwise, if $\fd{u}{v}\notin \FDclosure{G}$, we obtain $\fd{y}{v}\in\FD{G}$, for some $y\in Y$. Since $v\notin Y$ and $\outvar{\SFD{q}}{Y} \subseteq Y$, we obtain $\fd{y}{v}\in\WFD{G}$. Then, as $u\neq y$,  \Cref{lem:samesourcesat} implies $\fd{u}{v}\in \SFD{q}$, as required. 
\end{proof}

\bigskip

\begin{definition}[Substitutable Sets]\label{def:sub}
      Let $q\in \sjfuabcq$ be saturated query such that it satisfies \Cref{it:normal1,it:normal2,it:normal4} of \Cref{def:normal}. We say that $Y\subseteq \queryvars{q}$ is \emph{substitutable} if
    \begin{enumerate}
    \item\label{it:sub1} $  Y\neq  \emptyset$;
    \item\label{it:sub2} $\outvar{\SFD{q}}{\emp}\subseteq Y$ and $\outvar{\SFD{q}}{Y} \subseteq Y$;
    \item\label{it:sub3} either of the following holds
 \begin{enumerate}[(i)]
\item\label{it:ii} $Y=\{y\}$, where $y$ is not attacked in the attack graph of $q$, or
\item\label{it:iii}  $Y= \bigcup \{\notkeyvars{H} \mid H\in q_0,\keyvars{H}=\emp\}$, for some $q_0\subseteq q$.
\end{enumerate}
\end{enumerate}
\end{definition}

\bigskip

Given an instance $(\db,q)$, a substitutable set
$Y\subseteq \queryvars{q}$, and a substitution
$f:Y\to\adom{\db}$, define
\begin{equation}\label{eq:subf}
\textsc{Substitute}_f(\db,q)
  \defeq
  (\dbsaturate(\db,q_f),q_f).
\end{equation}

\bigskip

\begin{lemma}\label{lem:guarantees}
Let $q\in \sjfuabcq$ be saturated query such that it satisfies \Cref{it:normal1,it:normal2,it:normal4} of \Cref{def:normal}. 
If $\outvar{\SFD{q}}{\emp}$ is non-empty, then it is substitutable. 
\end{lemma}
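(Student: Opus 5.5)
We verify the three conditions of \Cref{def:sub} for $Y\defeq\outvar{\SFD{q}}{\emp}$. Condition~\ref{it:sub1} is the hypothesis that $Y$ is non-empty. For condition~\ref{it:sub3}\ref{it:iii} we take
\[
q_0\defeq\{H\in q\mid \keyvars{H}=\emp,\ \fd{\emp}{y}\in\SFD{q}\text{ for some }y\in\notkeyvars{H}\}.
\]
We must show $\bigcup\{\notkeyvars{H}\mid H\in q_0\}=Y$. The inclusion $\supseteq$ holds because each $y\in Y$ comes from an edge $\fd{\emp}{y}\in\SFD{q}$. Since $\emp\neq y$, this edge lies in $\FD{q}$ and is generated by some atom $H$ with constant key, so $H\in q_0$. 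For $\subseteq$, fix $H\in q_0$ with $\fd{\emp}{y}\in\SFD{q}\cap\FD{H}$. If $|\notkeyvars{H}|>1$, \Cref{it:normal1} of normality would force $\fd{\emp}{y}\in\WFD{q}$, a contradiction. Hence $\notkeyvars{H}=\{y\}\subseteq Y$.

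The remaining work is condition~\ref{it:sub2}. The part $\outvar{\SFD{q}}{\emp}\subseteq Y$ holds by definition, so we need $\outvar{\SFD{q}}{Y}\subseteq Y$. Let $y\in Y$ and $\fd{y}{z}\in\SFD{q}$. If $z=y$ there is nothing to show, so assume $z\neq y$. We then have $\fd{\emp}{y},\fd{y}{z}\in\SFD{q}$, and \Cref{lem:ctrans} gives $\fd{\emp}{z}\in\SFD{q}$, i.e.\ $z\in Y$.

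The only delicate point is that \Cref{lem:ctrans} was stated for variables, while here the source is $\emp$. Its proof goes through verbatim, and this is where I expect to have to be careful. From $\fd{\emp}{y}\in\SFD{q}$ we get an atom $F$ with $\fd{\emp}{y}\in\FD{F}$ and some $G\neq F$ with $\fd{\emp}{y}\in\FDclosure{G}$. The edge $\fd{y}{z}$, being in $\SFD{q}$ with $y\neq z$, is generated by at least two atoms, so some such atom differs from $F$; hence $\fd{y}{z}\in\FD{q\setminus\{F\}}$. Together with $\fd{\emp}{y}\in\FDclosure{q\setminus\{F\}}$ this yields $\fd{\emp}{z}\in\FDclosure{q\setminus\{F\}}$. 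Now the saturation condition of \Cref{def:saturation}, applied with $x=\emp$, gives $\fd{\emp}{z}\in\SFD{q}$. That application is legitimate because \Cref{def:saturation} quantifies over all edges of $\FD{F}$, including those whose source is $\emp$. This closes condition~\ref{it:sub2}, so $Y$ is substitutable.
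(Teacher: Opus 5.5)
Your proof is correct and matches the paper's argument. Like the paper, you take $q_0$ to consist of the $\emp$-keyed atoms that, by Item~1 of normality, have the single non-key variable $y$ with $\fd{\emp}{y}\in\SFD{q}$, and you obtain $\outvar{\SFD{q}}{Y}\subseteq Y$ by applying saturation with source $\emp$.

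One small remark: the claim that a non-loop edge of $\SFD{q}$ is generated by at least two atoms does not follow directly from the definition. The witness in $\FDclosure{G}$ could come from an $\emp$-keyed atom containing both $y$ and $z$; this is excluded here only by Item~2 of normality. You do not need the claim, though, because any atom generating $\fd{y}{z}$ has key $y\neq\emp$ and therefore already differs from $F$.
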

\begin{proof}
Let us write $Y\defeq \outvar{\SFD{q}}{\emp}$.
 
 Considering \Cref{it:sub2} of \Cref{def:sub}, for any $z\in \outvar{\SFD{q}}{Y} $ there exists $y\in Y$ such that $\fd{y}{z}\in \SFD{q}$. Furthermore, $\fd{\emp}{y}\in \SFD{q}$. By saturation it follows then that $\fd{\emp}{z}\in \SFD{q}$, i.e., $z\in Y$.
Hence $\outvar{\SFD{q}}{Y} \subseteq Y$.

 Considering \Cref{it:sub3},  \Cref{it:normal1} of \Cref{def:normal}  implies that each $\fd{\emp}{y}\in \SFD{q}$ is witnessed by an atom $F_y\in q$ such that $\keyvars{F_y}=\emp$ and $\notkeyvars{F_y}=\{y\}$. We can then define $q_0\defeq \{F_y \mid y\in Y\}$ to obtain \Cref{it:iii}.

Hence, $\outvar{\SFD{q}}{\emp}$ is substitutable whenever it is non-empty.
\end{proof}
\medskip

\subsubsection{Substituting Constants for Variables}

We can now consider the fourth normalization condition, which stipulates that the special vertex $\emp$ has no outgoing consistent edges. The following lemma, especially its first item, is used to replace the target variables of such edges with constants. Additionally, the second item is used later when variables are replaced with constants in the alternating computation that follows the rewriting strategy; there, Equivalence preservation is not needed.

\smallskip
 \begin{lemma}[Normality, \Cref{it:normal3}]\label{lem:removecfd3}
 Let  $\mathcal{C}$ be the set of pairs $(\db,q)$, where $q$ is a saturated query from  $\sjfuabcq$ such that it satisfies \Cref{it:normal1,it:normal2,it:normal4} of \Cref{def:normal}, and $\db$ is a database. 
 Let $\mathcal{C}^*$ be defined as $\mathcal{C}$, except that it requires $\db$ to be $q$-saturated.
 Then, for each $(\db,q)\in \mathcal{C}$, substitutable $Y\subseteq \queryvars{q}$, and function $f\colon Y\to \Const$, 
 $(\db,q) \mapsto \Substitute_f(\db,q)$ is a strict, size-monotone, generator-monotone,
polynomial-time reduction from $\mathcal{C}$ to $\mathcal{C}^* \cup\{\true,\false\}$.
 Additionally, the following claims hold:
 \begin{enumerate} 
 \item\label{it:add1}
  If $\db$ is $q$-saturated and $f\defeq \textsc{GetFunction}(\db,q)$, then $\Substitute_f$ is equivalence-preserving, 
\item \label{it:add2}
The statement remains true if \Cref{it:normal3} of \Cref{def:normal} is additionally assumed for the queries \(q\) with \((\db,q)\in\mathcal C\).
\end{enumerate}
\end{lemma}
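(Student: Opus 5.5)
The plan is to verify the listed properties one at a time. Each verification reduces to an earlier lemma, and the only delicate point is that $q_f$ is again saturated.

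\textbf{Bookkeeping properties.} Strictness holds because $Y\neq\emptyset$ by \Cref{it:sub1} of \Cref{def:sub}, and replacing at least one variable by a constant strictly decreases $\occq{q}$. Size-monotonicity holds for three reasons: $\queryvars{q_f}\subseteq\queryvars{q}$, the size of $q_f$ does not exceed that of $q$, and by \Cref{lem:dbsat} the output $\dbsaturate(\db,q_f)$ is a subset of $\db$, so the active domain and database size do not grow. Generator-monotonicity follows by applying \Cref{it:toc} of \Cref{lem:replaceconstant3} once per variable of $Y$, together with $\cgs{q}=\nscc{\FD{q}}-1$. Polynomial time is immediate from \Cref{lem:dbsat}. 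For $q_f\in\sjfuabcq$, we iterate \Cref{lem:replaceconstant2}; variable-repetition-freeness is untouched, since substituting constants creates no repeated variables. Finally, $q$-saturation of the output database is given by \Cref{lem:dbsat}.

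\textbf{Structural properties of $q_f$.} The workhorse is \Cref{claim:notsimple0}, which applies because \Cref{it:sub2} of \Cref{def:sub} gives exactly its hypotheses. Its edge correspondence lets us transfer consistent edges between $q$ and $q'\defeq q_f$, as long as the target lies outside $Y$. Edges with target in $Y$ simply disappear.

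For \Cref{it:normal1} of \Cref{def:normal}, take an atom $F'$ of $q'$ generating $\fd{u'}{v}$ with $|\notkeyvars{F'}|>1$. Then $F$ has at least as many non-key variables, and $\fd{u}{v}\in\WFD{q}$ by normality of $q$. Hence $\fd{u'}{v}\in\WFD{q'}$ by \Cref{it:not2}. \Cref{it:normal2} and \Cref{it:normal4} transfer directly through \Cref{it:not3}, since all variables involved are outside $Y$. For \Cref{it:add2}, suppose $\fd{\emp}{v}\in\SFD{q'}$. Then \Cref{it:not3} yields $\fd{u}{v}\in\SFD{q}$ with $u=\emp$ or $u\in Y$. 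Either case forces $v\in Y$ by closure of $Y$ under $\SFD{q}$ (\Cref{it:sub2}), contradicting $v\in\queryvars{q'}$.

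\textbf{Saturation of $q'$ (the main obstacle).} Assume $F'$ generates $\fd{u'}{v}$, $\fd{v}{w}\in\SFD{q'}$ and $\fd{u'}{w}\in\FDclosure{q'\setminus\{F'\}}$. We lift these to $q$. The first two lift by \Cref{claim:notsimple0}, and $v,w\notin Y$. The closure condition is the problem: a path in $q'$ may start at $\emp$ while in $q$ it starts at some $y\in Y$ that is not reachable from $u$. I would split into two cases.

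\emph{Case $u\notin Y$.} A lifted path starting at some $y\in Y$ must leave $Y$ via an inconsistent edge, because $Y$ is closed under consistent edges. I would then use \Cref{lem:samesourcesat} and \Cref{lem:nextattack}, together with acyclicity of the attack graph, to either rule this out or reroute the path so it starts at $u$.

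\emph{Case $u\in Y$.} Here $u'=\emp$. I would reduce to the transitivity given by \Cref{lem:ctrans}, combined with the fact that consistent edges out of $\emp$ or $Y$ stay inside $Y$.

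Once the hypotheses are lifted, saturation of $q$ gives $\fd{u}{w}\in\SFD{q}$, and \Cref{it:not3} returns $\fd{u'}{w}\in\SFD{q'}$. I expect this lifting argument, and in particular the rerouting in the case $u\notin Y$, to be the hardest step.

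\textbf{Equivalence preservation (\Cref{it:add1}).} Here $Y=\outvar{\SFD{q}}{\emp}$, which is substitutable by \Cref{lem:guarantees}. For each $y\in Y$, the edge $\fd{\emp}{y}\in\SFD{q}$ is generated by an atom $S(\underline{c},y,\uk)$ with $c$ a constant. By functional consistency of $\db$, the block $\blockof{S}{c}{\db}$ assigns a single value $b=f(y)$ to $y$. Consequently, in every repair $\rep$, any valuation $\theta$ with $\theta(q)\subseteq\rep$ must satisfy $\theta(y)=f(y)$. It follows that $\rep$ satisfies $q$ if and only if $\rep$ satisfies $q_f$. Finally, \Cref{lem:dbsat} shows that replacing $\db$ by $\dbsaturate(\db,q_f)$ preserves the answer.
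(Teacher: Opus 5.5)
Your bookkeeping, the transfer of the normality items and of \Cref{it:add2} through \Cref{claim:notsimple0}, and the equivalence argument are all fine. The normality transfer is in fact shorter than the paper's case analysis.

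The gap is exactly the step you flag: saturation of $q_f$ is only announced, not proved. Moreover, the plan you give for the case $u\in Y$ cannot work. \Cref{lem:ctrans} and closure of $Y$ under consistent edges never connect $u$ to a different $y\in Y$ from which the lifted path to $w$ starts, because that path leaves $Y$ through an inconsistent edge.

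Here is a concrete obstruction. Take $q=\{R(\underline{u},v),S_1(\underline{v},w),S_2(\underline{v},w),T(\underline{y},w)\}$ with $Y=\{u,y\}$. This $q$ is saturated and normal, and $Y$ is non-empty and closed under $\SFD{q}$. Yet in $q_f$ the atom $R_f$ generates $\fd{\emp}{v}$, $\fd{v}{w}\in\SFD{q_f}$, and $w$ is reachable from $\emp$ via $T_f$, while $\fd{\emp}{w}\notin\SFD{q_f}$. Only two things exclude this example: the cyclic attack graph ($R\attacks{q}T\attacks{q}R$) and the failure of \Cref{it:sub3} of \Cref{def:sub}. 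So any correct argument must use one of them. For the case $u\notin Y$ you only say that it will.

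The missing idea, which is the paper's, is to use \Cref{it:sub3} of \Cref{def:sub}; your proposal never invokes it.
\begin{enumerate}
\item It gives $F\nattacks{q}y$ for the start $y\in Y$ of the lifted path $D'$. Either $y$ is unattacked, or $y\in\notkeyvars{H}$ for some $H\neq F$ with key $\emp$, so that $y\in\keyclosure{F}{q}$.
\item $D'$ is a path of $\FD{q\setminus\{F\}}$ ending in $w$. So any vertex of $D'$ in $\keyclosure{F}{q}$ forces $w\in\keyclosure{F}{q}$. Hence $F\nattacks{q}w$, since otherwise $F$ would attack $y$ along $D'$.
\item We have $v\in\notkeyvars{F}$, and $\{v,w\}$ is a Gaifman edge generated by the consistent edge $\fd{v}{w}$ (or $v=w$). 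So $F\nattacks{q}w$ gives $w\in\keyclosure{F}{q}$, i.e.\ $\fd{\keyvars{F}}{w}\in\FDclosure{q\setminus\{F\}}$, uniformly in all cases.
\item Saturation then yields $\fd{\keyvars{F}}{w}\in\SFD{q}$. Since $w\notin Y$ and $Y$ is closed under consistent edges, $\keyvars{F}\notin Y\cup\{\emp\}$. So your case $u\in Y$ is actually vacuous, and \Cref{claim:notsimple0} concludes.
\end{enumerate}

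Alternatively, acyclicity alone suffices, but it needs an actual argument. For example, assume $w\notin\keyclosure{F}{q}$.
\begin{enumerate}
\item Take a shortest path from $Y$ to $w$ in $\FD{q\setminus\{F\}}$, and its last inconsistent edge, generated by $G$.
\item $F$ attacks every vertex on this path, hence attacks $G$. By acyclicity, $G\nattacks{q}F$.
\item This forces a vertex of $\keyclosure{G}{q}$ that is reachable from the target of that edge via consistent edges only.
\item \Cref{lem:ctrans} together with saturation at $G$ then yields either a consistent shortcut or a vertex of $Y$ on the path. Both contradict minimality.
\end{enumerate}
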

\begin{proof}
Given an atom $F\in q$, let us write $F'$ for its corresponding atom in $q'$, that is, $F'=F_{f}$.

Strictness  follows by non-emptiness  of $Y$ and generator-monotonicity  by \Cref{lem:replaceconstant3}.
Moreover, the reduction is clearly size-monotone, and it is in polynomial time due to \Cref{lem:dbsat}. 
 It remains to prove the additional statements and the claim that the reduction is into $\mathcal{C}^* \cup \{\true,\false\}$.
For this, suppose $(\db',q') = \Substitute_f(\db,q) \notin \{\true,\false\}$.

{ \bf
(Membership of $(\db',q')$ in $\mathcal{C}^*$)
 }
 First, observe that $\db'$ is $q'$-saturated by \Cref{lem:dbsat}.  Also, it is clear that $q'$ remains variable-repetition-free, hence
 by \Cref{lem:replaceconstant2} we obtain $q'\in \sjfuabcq$.
 
 We divide the remaining parts into several claims.
   \begin{claim}\label{claim:issat}
$q'$ is saturated.
  \end{claim}
 \begin{claimproof}
 Let $F\in q$ be such that $\fd{u}{v}\in \FD{F'}$, $\fd{v}{w}\in \SFD{q'}$, and $\fd{u}{w}\in \FDclosure{q'\setminus \{F'\}}$. We need to show that $\fd{u}{w}\in \SFD{q'}$. 

 We observe that $u$ is a variable or $\emp$, and $v,w$ are variables. 
 Since $Y\cap (\queryvars{q'} \cup\{\emp\})= \emptyset$, we have $u,v,w\notin Y$. 
  \begin{claim}
$\fd{v}{w}\in \SFD{q}$.
  \end{claim}
  \begin{claimproof}
  By $\fd{v}{w}\in \SFD{q'}$ we find $F'\in q'$ and $G'\in q'\setminus \{F'\}$ such that $\fd{v}{w}\in \FD{F'}$ and $\fd{v}{w}\in \FDclosure{G'}$. Clearly, $\fd{v}{w}\in \FD{F}$, and if additionally $\fd{v}{w}\in \FDclosure{G}$, then the claim follows. Otherwise, if $\fd{v}{w}\notin \FDclosure{G}$, we find $y\in Y$ such that $\fd{y}{w}\in \FD{G}$, leading to $\fd{\emp}{w}\in \FD{G'}$ and furthermore to $\fd{v}{w}\in \FDclosure{G'}$. However, since $Y$ is closed under the edges of $\SFD{q}$, and $w\notin Y$, we obtain $\fd{y}{w}\in \WFD{q}$. Since $v\notin Y$, we have $v\neq y$. Hence, by \Cref{lem:samesourcesat}, it follows that $\fd{v}{w}\in \SFD{q}$, which was our claim.
  \end{claimproof}
  
 Let $u'=\keyvars{F}$. Then $\fd{u'}{v}\in \FD{F}$, and $u'\neq u$ if and only if $u'\in Y$. In what follows, we prove that $u'\notin Y$.
 %
Toward this, we first prove the following claim.
 \begin{claim}
 $\fd{u'}{w}\in \FDclosure{q\setminus \{F\}}$.
  \end{claim}
  \begin{claimproof}
 Let $D=(x_1, \dots ,x_{n-1},w)$ be a simple path witnessing $\fd{u}{w}\in \FDclosure{q'\setminus \{F'\}}$. In particular, we have that $x_1=u$ or $x_1=\emp$ (or both). Since both $u$ and $\emp$ are not from $Y$, also $x_1$ is not from $Y$. Since the remaining elements of $Y$ are variables from $\queryvars{q'}$, they are not from $Y$ either.
 Hence the suffix $D[x_2,w]$ belongs to $\FD{q\setminus \{F\}}$. 
 
 Consider the atom $G'\in q'\setminus \{ F'\}$ such that $\fd{x_1}{x_2}\in \FD{G'}$.
 If $\keyvars{G}\notin Y$, then $\fd{x_1}{x_2}\in \FD{G}$, i.e., $\fd{x_1}{x_2}\in \FD{q\setminus \{F\}}$, and consequently, it can be confirmed that $\fd{u'}{w}\in \FDclosure{q\setminus \{F\}}$, as required
 
 Thus, suppose $\keyvars{G}=y \in Y$. Then $\fd{y}{x_2} \in \FD{q\setminus \{F\}}$, and in particular, 
 \[D' \defeq (y, x_2,\dots ,x_{n-1},w)\]
  is a simple path witnessing  $\fd{y}{w}\in \FDclosure{q\setminus \{F\}}$. 
 
 We claim that $F\nattacks{q} y$.
 In the case of \cref{it:ii} this is immediate. In the case of \cref{it:iii}, we find  $ H\in q$ such that $y\in \notkeyvars{H}\subseteq Y$ and $\keyvars{H}=\emp$. We
 observe $F\neq H$ because $\FD{F'}$ is non-empty but $\FD{H'}$ is empty. Since $\keyvars{H}=\emp$, we have $y\in  \keyclosure{F}{q}$, hence $F\nattacks{q} y$. Thus the claim that $F\nattacks{q} y$ holds.
 
 Next, we claim that $F\nattacks{q} w$. Toward contradiction, suppose $F\attacks{q} w$. Then,  
 since $F\nattacks{q} y$ and $F\attacks{q} w$, we obtain that $y$ and $w$ are separated by $\keyclosure{F}{q}$. In particular, as $D'$ is also a path in $\gaifman{q}$, some element of $D'$ belongs to $\keyclosure{F}{q}$. Consequently, $w$ belongs to $\keyclosure{F}{q}$, as $D'$ is a path in $\FD{q\setminus \{F\}}$. This, however, contradicts $F\attacks{q} w$, so the claim follows.
 
  Now, 
  $\{v,w\}$ is an undirected (possibly singleton) edge generated by $\fd{v}{w}\in \SFD{q}$, and
  also we have $\fd{u'}{v}\in \FD{F}$. Therefore, we obtain from $F\nattacks{q} w$ that one of $v$ or $w$ belongs to
  $\keyclosure{F}{q}$. Moreover, if $v$ belongs to this set, then by $\SFD{q} \subseteq \FD{q\setminus\{F\}}$
  also so does $w$. We conclude that $w\in \keyclosure{F}{q}$, which is tantamount to the claim that
  $\fd{u'}{w}\in \FDclosure{q\setminus \{F\}}$.
  
   This concludes the proof of the claim. 
     \end{claimproof}

 From $\fd{u'}{v}\in \FD{F}$,  $\fd{v}{w}\in \SFD{q}$, and $\fd{u'}{w}\in \FDclosure{q\setminus \{F\}}$, we obtain by saturation of $q$ that
  $\fd{u'}{w}\in \SFD{q}$. By the same token $u'$ is a variable.
  
  It follows that $u'\notin Y$, because $w \notin Y$ and $Y$ is closed under $\SFD{q}$. Hence $u=u'$, and therefore by $\fd{u'}{w}\in \SFD{q}$ we obtain $\fd{u}{w}\in \SFD{q}$. Since $u,w\notin Y$, it follows by \Cref{claim:notsimple0} that $\fd{u}{w}\in \SFD{q'}$, as required. This concludes the proof of \Cref{claim:issat}.
\end{claimproof}

    \begin{claim}\label{claim:isnorm}
$q'$ satisfies \Cref{it:normal1,it:normal2,it:normal4} of \Cref{def:normal}.
  \end{claim}
 \begin{claimproof}
 Note that we cannot assume \cref{it:normal3} of normality for $q$ in the proof.

 (\Cref{it:normal1}) Let $F'\in q'$ be such that $|\notkeyvars{F'}|>1$. Let $p= \keyvars{F'}$. We need to show that $\fd{p}{v} \in \WFD{q'}$, for each $v\in \notkeyvars{F'}$. Suppose toward contradiction that $\fd{p}{v} \in \SFD{q'}$ for some $v\in \notkeyvars{F'}$. 
 Note that in this case $v\notin Y$.
 Then we find $G'\in q'\setminus \{F'\}$ such that $\fd{p}{v} \in \FDclosure{G'}$. Note that $\notkeyvars{F'}\subseteq \notkeyvars{F}$, i.e., $|\notkeyvars{F}|>1$. Furthermore, we clearly have $v\in \notkeyvars{F}$.
 
 We claim that $p=\emp$. Toward contradiction, suppose $p$ is a variable. Then $p= \keyvars{F}$,
 and $\fd{p}{v} \in \FDclosure{G}$, leading to $\fd{p}{v} \in \SFD{q}$, which contradicts \cref{it:normal1} of normality for $q$. Thus $p=\emp$. 
 This entails $\keyvars{F}\in Y\cup\{\emp\}$ and $\keyvars{G}\in Y\cup\{\emp\}$.
 
 If  $\keyvars{G}=\emp$, then $\fd{\keyvars{F}}{v} \in \SFD{q}$, a contradiction with \cref{it:normal1} of normality for $q$.

 Suppose then $\keyvars{G}=y \in Y$. If $\keyvars{F}=y$, we obtain a contradiction with \cref{it:normal1} as above. Thus suppose $\keyvars{F}\neq y$. 
 Then \Cref{lem:samesourcesat} entails $\fd{\keyvars{F}}{v} \in \SFD{q}$ or $\fd{y}{v} \in \SFD{q}$. 
 The former case contradicts normality of $q$, and
 the latter case is not possible due to $v\notin Y$. 
 Thus, by contradiction, we obtain $\fd{p}{v} \in \WFD{q'}$, for each $v\in \notkeyvars{F'}$.
 
  (\Cref{it:normal2}) Assume toward contradiction that $\fd{u}{v} \in \SFD{q'}$, for some distinct $u,v\in \notkeyvars{F'}$ and $F'\in q'$.
  Then $u,v\in \notkeyvars{F}$ where $F\in q$. Clearly we have $\fd{u}{v} \in \FD{q}$. 
  Denote $p=\keyvars{F}$, in which case we have $\fd{p}{v}\in \FD{F}$. Since $q$ satisfies \cref{it:normal1} of normality, and $|\notkeyvars{F}|>1$, it follows that $\fd{p}{v}\in \WFD{F}$.
  Hence, by \Cref{lem:samesourcesat}, and since $p\neq u$ (as $q$ is variable-repetition-free), it cannot be the case that $\fd{u}{v} \in \WFD{q}$. Therefore, $\fd{u}{v} \in \SFD{q}$, but this contradicts \cref{it:normal2} of normality for $q$. Thus the assumption is false, and \cref{it:normal2} follows.

  (\Cref{it:normal4}) Assume toward contradiction that $\fd{u}{v} \in \SFD{q'}$ and $\fd{v}{u} \in \SFD{q'}$, for two distinct variables $u,v\in \queryvars{q'}$.   Then \Cref{claim:notsimple0} implies $\fd{u}{v} \in \SFD{q}$ and $\fd{v}{u} \in \SFD{q}$, in contradiction with \Cref{it:normal4} of normality for $q$.
      \end{claimproof}

In the following, we consider the additional statements.

{\bf (\Cref{it:add1})}
Fix $f\defeq \textsc{GetFunction}(\db,q)$.
 Since $\db$ is $q$-saturated, we have that if $\theta$ is a valuation such that $\theta(q)\subseteq \db$, then $\restrict{\theta}{Y}=f$. Hence,  $(\db,q)\mapsto (\db,q_{f})$ is equivalence-preserving. Since Equivalence preservation holds for database saturation and is preserved under composition, $(\db,q) \mapsto \Substitute_f(\db,q)$ is equivalence-preserving.


  {\bf (\Cref{it:add2})}
Assume toward contradiction that $\fd{\emp}{v} \in \SFD{q'}$ for some $v\in \queryvars{q'}$. Let $F',G'\in q'$ be such that $\fd{\emp}{v} \in \FD{F'} \cap\FD{G'}$. Since $q$ satisfies \cref{it:normal3} of normality, we have $\fd{\emp}{v} \notin \FD{F} \cap\FD{G}$. Thus $\keyvars{F}\in Y$ or $\keyvars{G}\in Y$. By symmetry, we may assume that $\keyvars{F}=y \in Y$. If $\keyvars{G}\in \{\emp,y\}$, we obtain $\fd{y}{v} \in \SFD{q}$, which leads to $v \in Y$ by \Cref{it:sub2} of \Cref{def:sub}, a contradiction with $v\in \queryvars{q'}$. Since $\keyvars{G}\subseteq \{\bot\}\cup Y$, we may conclude that $\keyvars{G}=y'\in Y\setminus \{y\}$. Summarizing, we have that $\fd{y}{v},\fd{y'}{v}\in \FD{q}$, where $y\neq y'$. 
By \Cref{lem:samesourcesat} it follows that $\fd{y}{v}\in \SFD{q}$ or $\fd{y'}{v}\in \SFD{q}$, which again leads to $v \in Y$, a contradiction. We conclude by contradiction that $\fd{\emp}{v} \notin \SFD{q'}$ for each $v\in \queryvars{q'}$. Hence \cref{it:normal3} holds.
\end{proof}

 \subsection{Normalization Step}
We can now compose all the individual normalization steps into one lemma. Note that $\Normalize$ is the function described in \Cref{alg:pre}.
\smallskip

\normal*
\begin{proof}
The equivalence preservation and generator-monotonicity follow from \Cref{prop:eqcompose} and from exhaustively applying the reductions of \Cref{lem:removecfd,lem:removecfd2,lem:removecfd4,lem:removecfd3} sequentially in this order, as described in \Cref{alg:pre}.
(After \Cref{lem:removecfd4}, note that \Cref{lem:repfreereal} is applied to remove variable repetitions.)
The lemmas also guarantee that the output query $q'$ is normal and from $\mathcal{C}$.
It remains to show that the computation runs in polynomial time. 

 Because the size of the query shrinks at each iteration (due to strictness of each reduction used in the lemmas), the total number of iterations is at most $\size{q}$. Moreover, the size of the database does not increase at any iteration (due to size-monotonizity of each reduction). By the lemmas, each iteration is in polynomial time over the temporary input, so each iteration is in polynomial time over the input. Hence, the overall computation is in polynomial time.

From these considerations, it can be seen that the lemma statement holds.
 \end{proof}
 
 \begin{algorithm}
\caption{$\Normalize(\db,q)$ \label{alg:pre}}
\Input{A saturated query $q\in \sjfuabcq$ and a $q$-saturated database $\db$.}
\Output{$\false$ or $(\db',q')$, where $\db'$ is a saturated database and $q'$ a saturated and normal query from $ \sjfuabcq$.}
    \While{
    some $F\in q,y\in \notkeyvars{F}$ witness that \Cref{it:normal1} of normality is false for $q$
   }{
        $(\db,q)\gets \Try(\normalizeone_{F,y}(\db,q))$\;
        \tcp{\Cref{lem:removecfd}}
    }

    \While{
        some $F\in q,y\in \notkeyvars{F}$ witness that \Cref{it:normal2} of normality is false for $q$
}{ 
        $(\db,q)\gets \Try(\normalizetwo_{F,y}(\db,q))$\;
        \tcp{\Cref{lem:removecfd2}}
}

      \While{
       some $x,y\in \queryvars{q}$ witness that \Cref{it:normal4} of normality is  false for $q$
      }{ 
        $(\db,q)\gets \Try(\identify_{x,y}(\db,q))$\;
        \tcp{\Cref{lem:removecfd4}}
        $(\db,q)\gets \Try(\removereps(\db,q))$\;
       \tcp{\Cref{lem:repfreereal}}
    }

    \While{\Cref{it:normal3} of normality is  false for $q$
    }{
        $f \gets \textsc{GetFunction}(\db,q)$\;
        \tcp{\Cref{lem:guarantees} entails the assumptions for  \Cref{lem:removecfd3} w.r.t. $Y\defeq \outvar{\SFD{q}}{\emp}$}
                $(\db,q)\gets \Try(\Substitute_f(\db,q))$\;
        \tcp{\Cref{lem:removecfd3}}
    }


\Return{$(\db,q)$}\;
\end{algorithm}

\subsection{Basic Properties of Saturated and Normal Queries}
After the preprocessing stage, the query $q$ is saturated and normal, and the acyclicity of its attack graph has been preserved.
In this section, we consider some basic properties of such queries.

We say that two (not necessarily distinct) variables $x$ and $y$ are \emph{siblings (w.r.t. $q$)}, denoted $x\sib_q y$ (or $x\sib y$ if $q$ is understood) if $x=y$ or there exists an atom $F\in q$ such that $x$ and $y$ are both in $\notkeyvars{F}$.

\begin{figure}
\centering
\begin{tabular}{c c c}
\begin{tikzpicture}[
    baseline=(current bounding box.center),
    dot/.style={circle, draw, inner sep=1.2pt, minimum size=16pt}
]
    \node[dot] (x)  at (0,0) {$x$};
    \node[dot] (y)  at (0,1.5) {$y$};
    \node[dot] (xp) at (2,0) {$u$};
    \node[dot] (yp) at (2,1.5) {$v$};

    \sarr{x}{y}
    \sarr{xp}{yp}
    
        \node[levelbox, fit=(y)(yp)] {};

\end{tikzpicture}
&
\hspace{.8cm} $\implies$ \hspace{.8cm}
&
\begin{tikzpicture}[
    baseline=(current bounding box.center),
    dot/.style={circle, draw, inner sep=1.2pt, minimum size=16pt}
]
    \node[dot] (x)  at (1,0) {$x$};
    \node (u)  at (1.9,0) {$(=u)$};
    \node[dot] (y)  at (0,1.5) {$y$};
    \node[dot] (yp) at (2,1.5) {$v$};

    \sarr{x}{y}
    \sarr{x}{yp}
    
        \node[levelbox, fit=(y)(yp)] {};

\end{tikzpicture}
\end{tabular}
\caption{Illustration for \Cref{lem:samesource}\label{fig:illustration1}}
\end{figure}

\smallskip
\begin{lemma}\label{lem:samesource}
Let $q\in \sjfuabcq$ be saturated and normal. If $\fd{x}{y},\fd{u}{v}\in \WFD{q}$, where $y\sib v$, then $x=u$.
\end{lemma}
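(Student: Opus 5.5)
The plan is to reduce both cases of $y\sib v$ to \Cref{lem:samesourcesat}, using normality (\Cref{it:normal1} of \Cref{def:normal}) to supply a common source whenever $y\neq v$.

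\textbf{Case $y=v$.} Here $\fd{x}{y},\fd{u}{y}\in\WFD{q}$. Since $q$ is saturated and belongs to $\sjfuabcq$, \Cref{lem:samesourcesat} gives $x=u$ immediately.

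\textbf{Case $y\neq v$.} Because $y\sib v$, there is an atom $H\in q$ with $y,v\in\notkeyvars{H}$, and so $|\notkeyvars{H}|>1$. Set $p\defeq\keyvars{H}$, where $p$ may be $\emp$. By the definition of $\FD{q}$ in \eqref{eq:FDdef}, both $\fd{p}{y}$ and $\fd{p}{v}$ lie in $\FD{H}\subseteq\FD{q}$. Since $q$ is normal and $|\notkeyvars{H}|>1$, \Cref{it:normal1} of \Cref{def:normal} forces $\fd{p}{y},\fd{p}{v}\in\WFD{q}$.

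We now have two inconsistent edges into $y$, namely $\fd{x}{y}$ and $\fd{p}{y}$. \Cref{lem:samesourcesat} yields $x=p$. In the same way, the inconsistent edges $\fd{u}{v}$ and $\fd{p}{v}$ give $u=p$. Hence $x=p=u$.

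There is no real obstacle; the argument only needs one check. \Cref{lem:samesourcesat} is stated for edges whose sources range over $\queryvars{q}\cup\{\emp\}$, so it still applies when $p=\emp$, that is, when $H$ has a constant key. Its proof also covers this situation: if the two sources differ, the generating atoms are distinct, and \Cref{lem:nextattack} then yields mutual attacks, which would be a cycle in the attack graph.
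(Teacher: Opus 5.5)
Your proposal is correct and follows the paper's own proof essentially verbatim. In both, the case $y=v$ is handled directly by \Cref{lem:samesourcesat}. For $y\neq v$, \Cref{it:normal1} of normality makes the key of the witnessing atom the source of inconsistent edges into both $y$ and $v$, and \Cref{lem:samesourcesat} then forces $x$ and $u$ to equal that key. Your remark that the argument still applies when the key is $\emp$ is a harmless extra check that the paper leaves implicit.
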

\begin{proof}
The case where $y=v$  is covered by \Cref{lem:samesourcesat}.

Consider then the case where $y\neq v$ (see \Cref{fig:illustration1}). Suppose $\fd{x}{y},\fd{u}{v}\in \WFD{q}$ and $y \sib v$. Let $F\in q$ be such that $y,v\in \notkeyvars{F}$. 
Let $w=\keyvars{F}$. By normality of $q$, we have $\fd{w}{y},\fd{w}{v}\in \WFD{q}$. By \Cref{lem:samesourcesat}, we obtain $x=w=u$.
\end{proof}

\smallskip
\begin{lemma}\label{lem:nextattack2}
Let $q\in \sjfuabcq$ be saturated and normal. Suppose $\fd{x}{y}\in \WFD{q}$, $\fd{y}{z}\in \SFD{q}$, and $\fd{x}{z}\notin \SFD{q}$.
If $F\in q$ is the atom generating $\fd{x}{y}$, then $F\attacks{}z$.
\end{lemma}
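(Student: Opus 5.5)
This is essentially the argument of \Cref{lem:upattack}, but without assuming $\fd{y}{z}\in\igraph{q}$; the extra hypothesis $\fd{x}{z}\notin\SFD{q}$ now plays the role of that assumption. The plan is to argue by contradiction: suppose $F\nattacks{} z$, derive $\fd{x}{z}\in\FDclosure{q\setminus\{F\}}$, and then invoke saturation of $q$ (\Cref{def:saturation}) to conclude $\fd{x}{z}\in\SFD{q}$, contradicting the hypothesis.

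First, by \Cref{lem:nextattack}, $F\attacks{} y$ because $\fd{x}{y}\in\WFD{q}$ is generated by $F$. Hence $y\notin\keyclosure{F}{q}$, and $\notkeyvars{F}$ is connected to $y$ in $\gaifman{q}$ by a path avoiding $\keyclosure{F}{q}$.

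Next, consider the edge $\fd{y}{z}\in\SFD{q}$. If $y=z$ the claim is immediate from $F\attacks{} y$. Otherwise binary-saturation, or simply the definition of $\SFD{q}$, gives an atom containing both $y$ and $z$, so $\{y,z\}$ is an edge of $\gaifman{q}$. Appending this edge to the path from $\notkeyvars{F}$ to $y$ yields a path to $z$. Under the assumption $F\nattacks{} z$, this extended path must be separated by $\keyclosure{F}{q}$. Since the prefix up to $y$ avoids $\keyclosure{F}{q}$, the only possible separating vertex is $z$ itself, so $z\in\keyclosure{F}{q}=\cl{x}{q\setminus\{F\}}$. By \eqref{eq:transclosure} this means $\fd{x}{z}\in\FDclosure{q\setminus\{F\}}$; the case $x=\emp$ is handled the same way, reading the closure as reachability from $\emp$.

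Finally, the three premises of saturation hold: $\fd{x}{y}\in\FD{F}$, $\fd{y}{z}\in\SFD{q}$, and $\fd{x}{z}\in\FDclosure{q\setminus\{F\}}$. Saturation therefore gives $\fd{x}{z}\in\SFD{q}$, which contradicts the hypothesis. The only delicate step is the separation argument: one must check that the path witnessing $F\attacks{} y$ can be taken to end at $y$, with $y$ itself outside $\keyclosure{F}{q}$, so that the extension by the edge $\{y,z\}$ avoids the closure everywhere except possibly at $z$. This follows directly from the definition of attack, so I expect no real obstacle.
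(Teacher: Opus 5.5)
Your proof is correct and follows the paper's argument. You assume $F\nattacks{} z$ and use \Cref{lem:nextattack} to get $F\attacks{} y$; the Gaifman edge $\{y,z\}$ then forces $z\in\keyclosure{F}{q}$, i.e.\ $\fd{x}{z}\in\FDclosure{q\setminus\{F\}}$, and saturation yields $\fd{x}{z}\in\SFD{q}$, a contradiction. The paper additionally remarks that $y$ and $z$ are not siblings by normality, but that observation is not needed for the argument.
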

\begin{proof}
The case where $y=z$ is covered by \Cref{lem:nextattack}. Thus, we have $y\neq z$. It follows that
 $y$ and $z$ cannot be siblings, because otherwise we obtain a contradiction with the normality of $q$.
Suppose toward contradiction that $F\not\attacks{}z$. Since $F\attacks{}y$ by \Cref{lem:nextattack}, we obtain  $\fd{x}{y}\notin \FDclosure{q\setminus \{F\}}$. As $\{y,z\}$ is an edge in $\gaifman{q}$, we obtain $\fd{x}{z}\in \FDclosure{q\setminus \{F\}}$. By saturation we obtain $\fd{x}{z}\in \SFD{q}$, which contradicts the assumption. Hence $F\attacks{}z$.
\end{proof}

\smallskip
\begin{lemma}\label{lem:shortestpath}
Let $q\in \sjfuabcq$ be saturated and normal. 
Let $D=x_1, \dots ,x_n$ be a path of $\FD{q}$ such that $x_1=x$, $x_n=y$, and $\fd{x_1}{x_2}\in \WFD{q}$. 
Assume that $D$ is shortest among all paths from $x_1$ to $x_n$ in $\FD{q}$.
Then $x_n\in \reach{x_1}{\igraph{q}}$.
\end{lemma}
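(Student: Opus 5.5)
We show that every edge $\fd{x_i}{x_{i+1}}$ of $D$ is an edge of $\igraph{q}$. Then $D$ is itself a path in $\igraph{q}$, and $x_n\in\reach{x_1}{\igraph{q}}$ follows at once.

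First, two preliminary observations.
\begin{itemize}
\item Since $D$ is a shortest path, it is simple. Hence $x_i\neq x_j$ for $i\neq j$, and no edge of $D$ is a dummy self-loop $\fd{x}{x}$. Every edge $\fd{x_i}{x_{i+1}}$ therefore lies in $\FD{q}$, and so lies in exactly one of $\WFD{q}$ and $\SFD{q}$.
\item Apart from the dummy loops, $\SFD{q}\subseteq\FD{q}$. So whenever $\fd{u}{w}\in\SFD{q}$ with $u\neq w$, the pair $\fd{u}{w}$ is an actual edge of $\FD{q}$ and can be used as a shortcut in a path.
\end{itemize}

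We now argue edge by edge, for each $i\in[n-1]$.

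\emph{Inconsistent edges.} If $\fd{x_i}{x_{i+1}}\in\WFD{q}$, it belongs to $\igraph{q}$ by \Cref{def:attackprop}. This covers in particular $i=1$, by hypothesis.

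\emph{Consistent edges.} Suppose $\fd{x_i}{x_{i+1}}\in\SFD{q}$, so $i\geq 2$. We show that $z\defeq x_{i-1}$ witnesses the second clause of \Cref{def:attackprop}.
\begin{itemize}
\item The preceding edge $\fd{x_{i-1}}{x_i}$ is inconsistent. Otherwise both $\fd{x_{i-1}}{x_i}$ and $\fd{x_i}{x_{i+1}}$ are in $\SFD{q}$, and \Cref{lem:ctrans} gives $\fd{x_{i-1}}{x_{i+1}}\in\SFD{q}$. Since $x_{i-1}\neq x_{i+1}$, this is an edge of $\FD{q}$. Replacing $x_{i-1},x_i,x_{i+1}$ by $x_{i-1},x_{i+1}$ then yields a shorter path from $x_1$ to $x_n$, contradicting minimality.
\item We have $\fd{x_{i-1}}{x_{i+1}}\notin\SFD{q}$. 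If it were in $\SFD{q}$, the same shortcut argument would again contradict minimality.
\end{itemize}
Hence $\fd{x_{i-1}}{x_i}\in\WFD{q}$, $\fd{x_i}{x_{i+1}}\in\SFD{q}$ and $\fd{x_{i-1}}{x_{i+1}}\notin\SFD{q}$. This is exactly the condition in \Cref{def:attackprop}, so $\fd{x_i}{x_{i+1}}\in\igraph{q}$.

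Note that the vertex $x_1$ may be $\emp$. This causes no problem: $\emp$ only ever appears as the source $x_1$, the first edge of $D$ is inconsistent by hypothesis, and every shortcut used above stays inside $\FD{q}$.

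The only delicate point is the interplay between minimality and transitivity, namely that a consistent shortcut is always a genuine edge of $\FD{q}$. This holds because $D$ is simple and the only consistent edges outside $\FD{q}$ are the dummy loops. The remaining steps are direct applications of \Cref{lem:ctrans} and \Cref{def:attackprop}.
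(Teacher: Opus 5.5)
Your proof is correct and follows the paper's argument. Minimality together with \Cref{lem:ctrans} rules out two consecutive consistent edges on $D$, and minimality again rules out $\fd{x_{i-1}}{x_{i+1}}\in\SFD{q}$; hence every consistent edge of $D$ satisfies the second clause of \Cref{def:attackprop}, while inconsistent edges lie in $\igraph{q}$ by definition. Your version just makes explicit what the paper leaves implicit: that $D$ is simple, and that a non-loop consistent edge is a genuine edge of $\FD{q}$ and can therefore serve as a shortcut.
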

\begin{proof}
By \Cref{lem:ctrans}, the path $D$ does not contain two consecutive edges from  $\SFD{q}$.
By definition, then, any edge $\fd{u}{v}\in \SFD{q}$ that appears on $D$ is preceded by an edge $\fd{w}{u}$ 
such that $\fd{w}{v}\notin \SFD{q}$, meaning that $\fd{u}{v}\in \igraph{q}$. The statement of the lemma follows from these observations.
\end{proof}

\begin{figure}
\centering
\begin{tabular}{c c c}
\begin{tikzpicture}[
    baseline=(current bounding box.center),
    dot/.style={circle, draw, inner sep=1.2pt, minimum size=16pt}
]
    \node[dot] (x) at (0,0) {$x$};
    \node[dot] (y) at (0,1.5) {$y$};
    \node[dot] (u) at (2,0) {$u$};
    \node[dot] (v) at (2,1.5) {$v$};

    \sarr{x}{y}
    \sarr{u}{v}
    \darr{y}{v}
\end{tikzpicture}
&
\hspace{.8cm} $\implies$ \hspace{.8cm}
&
\begin{tikzpicture}[
    baseline=(current bounding box.center),
    dot/.style={circle, draw, inner sep=1.2pt, minimum size=16pt}
]
    \node[dot] (x) at (0,0) {$x$};
    \node[dot] (y) at (0,1.5) {$y$};
    \node[dot] (u) at (2,0) {$u$};
    \node[dot] (v) at (2,1.5) {$v$};

    \sarr{x}{y}
    \sarr{u}{v}
    \darr{y}{v}
    \darr{x}{v}
\end{tikzpicture}
\end{tabular}
\caption{Illustration for \Cref{lem:swcollide}. \label{fig:illustration2}}
\end{figure}

\smallskip
\begin{lemma}\label{lem:swcollide}
Let $q\in \sjfuabcq$ be saturated and normal. 
Suppose
$\fd{x}{y},\fd{u}{v}\in \WFD{q}$
and  $\fd{y}{v}\in \SFD{q}$ where $y\neq v$. Then  $\fd{x}{v}$ belongs to $\SFD{q}$.
\end{lemma}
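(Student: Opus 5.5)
The plan is to argue by contradiction, using \Cref{lem:nextattack2} to produce an attack from the atom behind $\fd{x}{y}$ and acyclicity of the attack graph to rule it out. Let $F\in q$ be the unique atom generating $\fd{x}{y}\in\WFD{q}$ and $G\in q$ the unique atom generating $\fd{u}{v}\in\WFD{q}$. Since $\fd{y}{v}\in\SFD{q}$ and $y\neq v$, there is an atom $H$ with $\keyvars{H}=y$ and $v\in\notkeyvars{H}$; in particular $\{y,v\}$ is an edge of $\gaifman{q}$.

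First I dispose of the degenerate configurations. If $F=G$, then $y,v\in\notkeyvars{F}$ are distinct siblings with $\fd{y}{v}\in\SFD{q}$, contradicting \Cref{it:normal2} of normality; so $F\neq G$. Also $u\neq y$, since otherwise the single edge $\fd{y}{v}$ of $\FD{q}$ would lie in both $\WFD{q}$ and $\SFD{q}$. Since $\FD{q}$ has no self-loops, $x\neq y$, so $H\neq F$. Now consider the case $u=x$. Then $G\in q\setminus\{F\}$ generates $\fd{x}{v}$, so $\fd{x}{v}\in\FDclosure{q\setminus\{F\}}$. Together with $\fd{x}{y}\in\FD{F}$ and $\fd{y}{v}\in\SFD{q}$, saturation (\Cref{def:saturation}) yields $\fd{x}{v}\in\SFD{q}$. (This case is in fact vacuous, since that edge would then be both consistent and inconsistent, but the conclusion follows regardless.)

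The main case is $u\neq x$, where I assume toward contradiction that $\fd{x}{v}\notin\SFD{q}$ and aim to build a two-cycle between $F$ and $G$ in the attack graph. First, \Cref{lem:nextattack2} applied to $\fd{x}{y}\in\WFD{q}$ and $\fd{y}{v}\in\SFD{q}$ gives $F\attacks{}v$. Since $v\in\atomvars{G}$ and $F\neq G$, this means $F\attacks{}G$.

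Second, \Cref{lem:nextattack} gives $G\attacks{}v$, and I would extend this attack along the Gaifman edge $\{v,y\}$ to obtain $G\attacks{}y$. Suppose instead that $y\in\keyclosure{G}{q}$. Because $H\neq G$ (their keys $y$ and $u$ differ) and $\keyvars{H}=y$, closure under $q\setminus\{G\}$ would then contain $v$, contradicting $G\attacks{}v$. Hence $y\notin\keyclosure{G}{q}$. Appending the edge $\{v,y\}$ to a path witnessing $G\attacks{}v$ then gives a path from $\notkeyvars{G}$ to $y$ that avoids $\keyclosure{G}{q}$, so $G\attacks{}y$. As $y\in\atomvars{F}$, this gives $G\attacks{}F$.

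Combining the two steps gives the cycle $F\attacks{}G\attacks{}F$, contradicting that $q\in\sjfuabcq$ has an acyclic attack graph. Therefore $\fd{x}{v}\in\SFD{q}$. I expect the main obstacle to be the extension step establishing $G\attacks{}y$: it hinges on correctly ruling out $y\in\keyclosure{G}{q}$, which in turn requires confirming $H\neq G$ via $u\neq y$.
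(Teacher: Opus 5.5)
Your proof is correct and is essentially the paper's argument. The paper also disposes of the degenerate configuration using Item~2 of normality together with saturation. It then extends the attack on $v$ by the atom generating $\fd{u}{v}$ along the Gaifman edge $\{y,v\}$ to obtain an attack on the atom generating $\fd{x}{y}$. Finally, it uses acyclicity to force $\fd{x}{v}\in\FDclosure{q\setminus\{F\}}$ and concludes by saturation, which is exactly the content of \Cref{lem:nextattack2} that you invoke by contradiction. Your separate treatment of $u=x$ is redundant, since you already have $F\neq G$ from normality and so the main argument covers that case as well.
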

\begin{proof}
Note that $y\neq u$ by $\fd{u}{v}\in \WFD{q}$
and  $\fd{y}{v}\in \SFD{q}$. 

We first prove that $x\neq u$.
Toward contradiction, suppose  $x=u$. Then we have $x\neq y$ (by $y\neq u$) and $\fd{x}{y},\fd{x}{v}\in \WFD{q}$ and  $\fd{y}{v}\in \SFD{q}$.
Suppose first that there exists a single atom $F\in q$ such that  $y,v\in \notkeyvars{F}$. 
Since $y\neq v$, this contradicts normality of $q$ (specifically, \cref{it:normal2}). 
Then, suppose there exists two distinct $F,G\in q$ such that $\fd{x}{y}\in \FD{F}$ and $\fd{x}{v}\in \FD{G}$. In this case, 
as $\fd{y}{v}\in \SFD{q}$, the saturation of $q$ implies $\fd{x}{v}\in \SFD{q}$, a contradiction.
We conclude by contradiction that $x \neq u$ (see \Cref{fig:illustration2}).
%

Let $F\in q$ be the atom generating $\fd{u}{v}$. Since $\fd{y}{v}\in \SFD{q}$, there is an  edge $\{y,v\}$ in $\gaifman{q}$. By \Cref{lem:nextattack}, $F$ attacks $v$, and hence $\fd{u}{v}\notin \FDclosure{q \setminus \{F\}}$. By $\fd{y}{v}\in \SFD{q}\subseteq \FD{q \setminus \{F\}}$ we also observe that $\fd{u}{y}\notin \FDclosure{q \setminus \{F\}}$. In particular, $\{y,v\}$ is not separated by  $\keyclosure{F}{q}$. If $G$ is the atom generating $\fd{x}{y}$, we thus obtain $F\attacks{} G$. By acyclicity of the attack graph, we have $G\not\attacks{} F$.
By \Cref{lem:nextattack}, $G$ attacks $y$, and hence $\fd{x}{y}\notin \FDclosure{q \setminus \{G\}}$. However, due to the edge $\{y,v\}$ in $\gaifman{q}$, and since $v\in \atomvars{F}$, it must be the case $\fd{x}{v}\in \FDclosure{q \setminus \{G\}}$.  Since $\fd{x}{y}\in \FD{G}$, $\fd{y}{v}\in \SFD{q}$, it follows by saturation that $\fd{x}{v}\in \SFD{q}$.
\end{proof}

\begin{figure}
\centering
\begin{tabular}[t]{ccccc}
\begin{tikzpicture}[
    baseline=(current bounding box.center),
    dot/.style={circle, draw, inner sep=1.2pt, minimum size=16pt}
]
    \node[dot] (a) at (-1,0) {$x$};
    \node[dot] (b) at (-1,1.5) {$y$};
    \node[dot] (c) at (1,0) {$u$};
    \node[dot] (d) at (1,1.5) {$v$};
    \node[dot] (e) at (0,3) {$z$};
        \node[levelbox, fit=(b)] {};
        \node[levelbox, fit=(d)] {};

    \sarr{a}{b}
    \sarr{c}{d}
    \darr{b}{e}
    \darr{d}{e}
\end{tikzpicture}
&
\hspace{.8cm} $\implies$ \hspace{.8cm}
&
\begin{tikzpicture}[
    baseline=(current bounding box.center),
    dot/.style={circle, draw, inner sep=1.2pt, minimum size=16pt}
]    
    \node[dot] (a) at (-1,0) {$x$};
    \node[dot] (b) at (-1,1.5) {$y$};
    \node[dot] (c) at (1,0) {$u$};
    \node[dot] (d) at (1,1.5) {$v$};
    \node[dot] (e) at (0,3) {$z$};

    \sarr{a}{b}
    \sarr{c}{d}
    \darr{b}{e}
    \darr{d}{e}
    \darr{a}{e}
    
            \node[levelbox, fit=(b)] {};
        \node[levelbox, fit=(d)] {};

\end{tikzpicture}
&
\text{or}
&
\begin{tikzpicture}[
    baseline=(current bounding box.center),
    dot/.style={circle, draw, inner sep=1.2pt, minimum size=16pt}
]    
    \node[dot] (a) at (-1,0) {$x$};
    \node[dot] (b) at (-1,1.5) {$y$};
    \node[dot] (c) at (1,0) {$u$};
    \node[dot] (d) at (1,1.5) {$v$};
    \node[dot] (e) at (0,3) {$z$};

        \node[levelbox, fit=(b)] {};
        \node[levelbox, fit=(d)] {};

    \sarr{a}{b}
    \sarr{c}{d}
    \darr{b}{e}
    \darr{d}{e}
    \darr{c}{e}

\end{tikzpicture}
\\[2cm]
\begin{tikzpicture}[
    baseline=(current bounding box.center),
    dot/.style={circle, draw, inner sep=1.2pt, minimum size=16pt}
]
    \node[dot] (a) at (-1,0) {$x$};
    \node[dot] (b) at (-1,1.5) {$y$};
    \node[dot] (c) at (1,0) {$u$};
    \node[dot] (d) at (1,1.5) {$v$};
    \node[dot] (e) at (0,3) {$z$};

    \sarr{a}{b}
    \sarr{c}{d}
    \darr{b}{e}
    \darr{d}{e}
    
        \node[levelbox, fit=(b)(d)] {};

\end{tikzpicture}
&
\hspace{.8cm} $\implies$ \hspace{.8cm}
&
\begin{tikzpicture}[
    baseline=(current bounding box.center),
    dot/.style={circle, draw, inner sep=1.2pt, minimum size=16pt}
]    
    \node[dot] (a) at (0,0) {$x$};
    \node (u)  at (0.9,0) {$(=u)$};
    \node[dot] (b) at (-1,1.5) {$y$};
    \node[dot] (d) at (1,1.5) {$v$};
    \node[dot] (e) at (0,3) {$z$};

    \sarr{a}{b}
    \sarr{a}{d}
    \darr{b}{e}
    \darr{d}{e}
    
        \node[levelbox, fit=(b)(d)] {};

\end{tikzpicture}

\end{tabular}
\caption{Illustration for \Cref{lem:shortcut}.\label{fig:shortcut}}
\end{figure}

\smallskip
\begin{lemma}\label{lem:shortcut}
Let $q\in \sjfuabcq$ be saturated and normal. Let 
 $\fd{x}{y},\fd{u}{v}\in \WFD{q}$
and  $\fd{y}{z},\fd{v}{z}\in \SFD{q}$. Then one of the following holds:
\begin{itemize}
 \item $x=u$ and $y\sib v$, 
 \item $\fd{x}{z}\in \SFD{q}$,
 \item $\fd{u}{z}\in \SFD{q}$. 
 \end{itemize}
\end{lemma}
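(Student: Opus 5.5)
We split into cases according to whether $x=u$, and in the case $x\neq u$ we use attack-graph acyclicity together with saturation, in the style of \Cref{lem:swcollide}. Throughout, let $F$ be the atom generating $\fd{x}{y}$ and $G$ the atom generating $\fd{u}{v}$.

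\textbf{Degenerate cases.} If $y=v$, then \Cref{lem:samesourcesat} gives $x=u$, and $y\sib v$ holds trivially. If $y\sib v$ with $y\neq v$, then \Cref{lem:samesource} gives $x=u$, which is the first alternative. Also, if $y=z$ then the edge $\fd{v}{z}=\fd{v}{y}\in\SFD{q}$ with $v\neq y$ lets us apply \Cref{lem:swcollide} with the roles of the two inconsistent edges swapped. This yields $\fd{u}{y}=\fd{u}{z}\in\SFD{q}$. The case $v=z$ is symmetric and yields $\fd{x}{z}\in\SFD{q}$. Hence we may assume $y\neq v$, $y\not\sib v$, $y\neq z$ and $v\neq z$.

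\textbf{Case $x=u$ (so $F\neq G$, since $y\not\sib v$).} Since $\fd{v}{z}\in\SFD{q}\subseteq\FD{q\setminus\{F\}}$, we get $\fd{x}{z}\in\FDclosure{q\setminus\{F\}}$ through the edge $\fd{x}{v}\in\FD{G}$. Saturation applied to $\fd{x}{y}\in\FD{F}$ and $\fd{y}{z}\in\SFD{q}$ then gives $\fd{x}{z}\in\SFD{q}$.

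\textbf{Case $x\neq u$ (so $F\neq G$).} By \Cref{lem:nextattack}, $F\attacks{}y$ and $G\attacks{}v$. By acyclicity of the attack graph, $F$ and $G$ cannot attack each other, so we may assume without loss of generality that $G\nattacks{}F$, in particular $G\nattacks{}y$. The Gaifman path $v,z,y$ uses edges coming from the consistent edges $\fd{v}{z}$ and $\fd{y}{z}$. Since $G$ attacks $v$ but not $y$, this path must be separated by $\keyclosure{G}{q}$. The vertex $v$ is not in this closure, because $G\attacks{}v$. So $z\in\keyclosure{G}{q}$ or $y\in\keyclosure{G}{q}$. In the latter case, $\fd{y}{z}\in\SFD{q}\subseteq\FD{q\setminus\{G\}}$ again puts $z$ into the closure. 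Either way $\fd{u}{z}\in\FDclosure{q\setminus\{G\}}$. Saturation applied to $\fd{u}{v}\in\FD{G}$ and $\fd{v}{z}\in\SFD{q}$ then gives $\fd{u}{z}\in\SFD{q}$. The symmetric choice $F\nattacks{}G$ gives $\fd{x}{z}\in\SFD{q}$.

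\textbf{Main obstacle.} The delicate point is justifying that non-attack of $y$ by $G$ (derived from $G\nattacks{}F$, using $y\in\atomvars{F}$) really forces separation of the specific path $v,z,y$. One must also check that consistent edges lie in $\FD{q\setminus\{G\}}$, which needs some atom other than $G$ generating them, as guaranteed by the definition of $\SFD{q}$. I expect this bookkeeping, rather than the case split, to require care.
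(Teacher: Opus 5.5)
Your proof is correct and follows essentially the paper's argument. You use the same reduction via \Cref{lem:samesource} and \Cref{lem:swcollide}, where you state the conclusions $\fd{u}{z}$ and $\fd{x}{z}$ correctly (the paper's text has a typo there), and then the same attack-graph acyclicity plus saturation argument on the Gaifman path $v,z,y$. The one difference is that you treat $x=u$ as a separate case, which is harmless but unnecessary: the attack argument only needs $F\neq G$, and that already follows from $y\not\sib v$.
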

\begin{proof}
Note that $y\sib v$ implies $x=u$ by \Cref{lem:samesource} (see \Cref{fig:shortcut}). Thus, suppose $y\not\sib v$. 

It is possible that $y=z$ or $z=v$ (but not both). Then, by \Cref{lem:swcollide}, we obtain that  $\fd{x}{z}$ or $\fd{y}{z}$ is in $\SFD{q}$, respectively. Thus we may assume that $y\neq z$ and $z\neq v$.

By the assumption we find atoms of the form $F\defeq R(\underline{x},y,\uk),G\defeq R'(\underline{u},v,\uk), H\defeq S(\underline{y},z,\uk),I\defeq S'(\underline{v},z,\uk)\in {q}$. (Note that possibly $x=\emp$ or $y=\emp$; these cases are treated analogously.)

Since $y\not\sib v$, we have  $F \neq G$.
In particular, $y,z,v$ is a path in $\gaifman{q}$ connecting $\notkeyvars{F}$ and $\notkeyvars{G}$. Since the attack graph of $q$ is acyclic, one of $F$ or $G$ does not attack the other. By symmetry, we may assume $F$ does not attack $G$. Since $F\attacks{} y$ by \Cref{lem:nextattack}, it follows that at least one of $z$ or $v$ belongs to $\keyclosure{F}{q}$. In fact, we obtain $z\in \keyclosure{F}{q}$, because this is implied by $v\in \keyclosure{F}{q}$ and $\fd{v}{z}\in \SFD{q}$. Thus we have $\fd{x}{z}\in \FDclosure{q\setminus \{F\}}$, $\fd{x}{y}\in \FD{F}$, and $\fd{y}{z}\in \SFD{q}$. By the saturation of $q$, this leads to $\fd{x}{z}\in\SFD{q}$, showing the statement of the lemma.
\end{proof}

\medskip

The following lemma is a consequence of \Cref{lem:swcollide} and the definition of $ \igraph{q}$ (\Cref{def:attackprop}).
\smallskip
\begin{lemma}\label{lem:consecutive}
Let  $q\in \sjfuabcq$ be saturated and normal. The following statements hold:
\begin{enumerate}
\item\label{it:eksu} If $\fd{x}{y}$ and $\fd{y}{z}$ are in $ \igraph{q}$, then one of them belongs to $ \WFD{q}$.
\item\label{it:toksu} If $\fd{x}{y}$ is in $ \igraph{q}$ and $x\in \rootqueryvars{q}$, then $\fd{x}{y}$ is in $\WFD{q}$.
\end{enumerate}
\end{lemma}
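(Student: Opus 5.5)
Both items follow from the definition of $\igraph{q}$ (\Cref{def:attackprop}); only the first needs a small argument.

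\emph{\Cref{it:eksu}.} Suppose, for contradiction, that $\fd{x}{y}$ and $\fd{y}{z}$ are both in $\igraph{q}\cap\SFD{q}$. Since $\fd{y}{z}\in\igraph{q}\cap\SFD{q}$, \Cref{def:attackprop} gives some $w$ with $\fd{w}{y}\in\WFD{q}$ and $\fd{w}{z}\notin\SFD{q}$. Likewise, since $\fd{x}{y}\in\igraph{q}\cap\SFD{q}$, there is some $w'$ with $\fd{w'}{x}\in\WFD{q}$ and $\fd{w'}{y}\notin\SFD{q}$. There are two ways to reach a contradiction, and I would try them in this order.

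\begin{enumerate}
\item \textbf{Via \Cref{lem:ctrans}.} From $\fd{x}{y},\fd{y}{z}\in\SFD{q}$ we get $\fd{x}{z}\in\SFD{q}$.
\item \textbf{Via \Cref{lem:swcollide}.} Apply it to the inconsistent edges $\fd{w'}{x}$ and $\fd{w}{y}$ together with the consistent edge $\fd{x}{y}$. We need $x\neq y$, which holds because $\igraph{q}$ has no self-loops. The lemma then yields $\fd{w'}{y}\in\SFD{q}$, contradicting the choice of $w'$.
\end{enumerate}

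This second argument closes the case directly, so the first observation is not needed. The only things to check are the hypotheses of \Cref{lem:swcollide}: both edges are in $\WFD{q}$, the edge between their targets is in $\SFD{q}$, and the targets are distinct.

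\emph{\Cref{it:toksu}.} Let $x\in\rootqueryvars{q}$, so $x$ has no incoming edge in $\igraph{q}$. Every edge of $\WFD{q}$ lies in $\igraph{q}$, so $x$ has no incoming edge in $\WFD{q}$ either. Now if $\fd{x}{y}\in\igraph{q}\cap\SFD{q}$, \Cref{def:attackprop} would require some edge $\fd{z}{x}\in\WFD{q}$. That is impossible, so $\fd{x}{y}\in\WFD{q}$.

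The expected obstacle is mild: it is making sure \Cref{lem:swcollide} applies in \Cref{it:eksu}, in particular the case where $w'=\emp$. That lemma's proof treats $\emp$ sources uniformly, so I expect this to go through.
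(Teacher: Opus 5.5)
Your proof is correct and matches the paper's argument essentially verbatim: Item~1 applies \Cref{lem:swcollide} to the inconsistent edges entering $x$ and $y$ together with the consistent edge $\fd{x}{y}$, which contradicts the defining condition $\fd{w'}{y}\notin\SFD{q}$, and Item~2 follows immediately from \Cref{def:attackprop}. The \Cref{lem:ctrans} aside gives no contradiction on its own, so you are right to discard it.
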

\begin{proof}
{\bf (\Cref{it:eksu})} 
 Toward contradiction, suppose both $\fd{x}{y}$ and $\fd{y}{z}$ belong to $\SFD{q}$. Then we find edges of the form $\fd{x_0}{x},\fd{y_0}{y}$ from $\WFD{q}$. Since $x\neq y$, it follows by \Cref{lem:swcollide} that $\fd{x_0}{y} \in \SFD{q}$, a contradiction with $\fd{x}{y}$ being in $ \igraph{q}$.
 
 {\bf (\Cref{it:toksu})} If $\fd{x}{y}$ is in $ \igraph{q}\cap \SFD{q}$, then by definition it has an incoming edge in $\WFD{q}$, and thus in $\igraph{q}$.
\end{proof}

\medskip

\begin{lemma}\label{lem:forest}
    If $q\in \sjfuabcq$ is saturated and normal, then 
 $\igraph{q}$ is acyclic. 
\end{lemma}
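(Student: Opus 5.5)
Suppose, toward contradiction, that $\igraph{q}$ has a directed cycle $C$. Since $\emp$ has no incoming edges, every vertex of $C$ is a query variable. By \Cref{lem:consecutive} (\Cref{it:eksu}), no two consecutive edges of $C$ are both consistent. Hence $C$ contains at least one inconsistent edge. If it contains a consistent edge, that edge is immediately preceded on $C$ by an inconsistent edge. The plan is to attach to each edge of $C$ an atom that attacks the next relevant variable, and to read off a cycle in the attack graph. This contradicts $q\in\sjfuabcq$.

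Concretely, list the inconsistent edges of $C$ in cyclic order as $\fd{x_1}{y_1},\dots,\fd{x_k}{y_k}$. By \Cref{lem:samesourcesat}, each $\fd{x_i}{y_i}$ is generated by a unique atom $F_i$. Between $y_i$ and $x_{i+1}$ (indices mod $k$), the cycle either has no edge, so $y_i=x_{i+1}$, or has exactly one consistent edge $\fd{y_i}{x_{i+1}}\in\igraph{q}\cap\SFD{q}$.

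In the first case, \Cref{lem:nextattack} gives $F_i\attacks{}y_i=x_{i+1}$. Here $x_{i+1}=\keyvars{F_{i+1}}\in\atomvars{F_{i+1}}$, so $F_i\attacks{}F_{i+1}$ provided $F_i\neq F_{i+1}$.

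In the second case, I first check the hypothesis of \Cref{lem:upattack}. By \Cref{def:attackprop}, some inconsistent edge $\fd{z}{y_i}$ satisfies $\fd{z}{x_{i+1}}\notin\SFD{q}$. Since $y_i$ has at most one incoming inconsistent edge, $z=x_i$. \Cref{lem:upattack} then gives $F_i\attacks{}x_{i+1}$, and again $F_i\attacks{}F_{i+1}$ when $F_i\neq F_{i+1}$.

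Chaining these attacks gives a closed walk $F_1\attacks{}F_2\attacks{}\cdots\attacks{}F_k\attacks{}F_1$ in the attack graph. It yields a genuine cycle as long as not all the $F_i$ coincide.

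The main obstacle is this degenerate case: all $F_i$ are equal to one atom $F$. I also have to handle $k=1$, and the situation where consecutive $F_i$ coincide but not all of them do. In the latter situation I would simply delete repeated consecutive atoms from the closed walk; some non-self-loop attack then remains and closes a cycle. So the remaining case is that every inconsistent edge of $C$ is generated by the single atom $F$, which means every $x_i$ equals $\keyvars{F}=:x$.

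Then, for any $i$, the cycle returns from $y_i$ to $x$. This happens either via $y_i=x$, impossible since $q$ is variable-repetition-free, or via a consistent edge $\fd{y_i}{x}\in\SFD{q}$. In the latter case $\fd{y_i}{x}\in\FD{q\setminus\{F\}}$, because $F$ generates only edges out of $x$. Since $x\in\keyclosure{F}{q}$, this gives $y_i\in\cl{\{y_i\}}{q\setminus\{F\}}$, which reaches the key of $F$. I would then argue instead via saturation: from $\fd{x}{y_i}\in\FD{F}$, $\fd{y_i}{x}\in\SFD{q}$, and the trivial $\fd{x}{x}\in\FDclosure{q\setminus\{F\}}$, saturation yields $\fd{x}{x}\in\SFD{q}$. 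That is vacuous, so this does not close the case.

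I expect the correct handle to be the following. In this case, $C$ has length two, $x\to y_i\to x$, with $\fd{y_i}{x}\in\igraph{q}\cap\SFD{q}$. The defining condition for that consistent edge requires $\fd{x}{x}\notin\SFD{q}$. But dummy edges $\fd{x}{x}$ belong to $\SFD{q}$ by definition, so the edge $\fd{y_i}{x}$ cannot lie in $\igraph{q}$. This is a contradiction, and it finishes the degenerate case.
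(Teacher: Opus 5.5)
Your proof is correct and follows the paper's own argument. \Cref{lem:consecutive} rules out consecutive consistent edges on the cycle, and the generator of each inconsistent edge attacks the generator of the next one, which yields a cycle in the attack graph; you invoke \Cref{lem:upattack} where the paper invokes \Cref{lem:nextattack2}, and both give the needed attack. You also treat separately the degenerate case in which every inconsistent edge on the cycle has the same generator. You rule it out because the defining condition of $\igraph{q}$ would require $\fd{x}{x}\notin\SFD{q}$, and this makes explicit a case the paper's terse proof leaves implicit. The abandoned saturation detour that precedes it can simply be dropped.
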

\begin{proof}
By \Cref{lem:consecutive} a cycle in $\igraph{q}$ would contain at least one edge from $\WFD{q}$, and no two consecutive edges from $\SFD{q}$.
 Then, by \Cref{lem:nextattack,lem:nextattack2}, each atom generating an edge in $\WFD{q}$ in this cycle would attack the atom generating the next edge in $\WFD{q}$. This contradicts the acyclicity of the attack graph. Hence $\igraph{q}$ is acyclic.
\end{proof}

\medskip

\begin{lemma}\label{lem:noupstrong}
Let  $q\in \sjfuabcq$ be saturated and normal.  Suppose ${x}$ is connected to ${y}$ by a path in $\igraph{q}$ that is of length at least $2$. 
 Then $\fd{x}{y}\notin \FD{q}$.
\end{lemma}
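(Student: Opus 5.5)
The plan is to argue by contradiction. Fix a path $x=x_1,x_2,\dots,x_n=y$ in $\igraph{q}$ with $n\geq 3$, and assume $\fd{x}{y}\in\FD{q}$. By \Cref{lem:forest}, $\igraph{q}$ is acyclic, so all $x_i$ are distinct. By \Cref{lem:samesourcesat}, every vertex has at most one incoming edge in $\WFD{q}$. The proof splits on whether $\fd{x}{y}$ lies in $\WFD{q}$ or in $\SFD{q}$, and in each case looks at the last edge $\fd{x_{n-1}}{y}$ of the path.

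\textbf{Case $\fd{x}{y}\in\WFD{q}$.} Then $x$ is the unique $\WFD{q}$-source into $y$.
\begin{itemize}
\item If $\fd{x_{n-1}}{y}\in\WFD{q}$, then \Cref{lem:samesourcesat} gives $x_{n-1}=x$. Since $n\geq 3$, the subpath $x_1,\dots,x_{n-1}$ is then a non-empty cycle in $\igraph{q}$, which contradicts \Cref{lem:forest}.
\item If $\fd{x_{n-1}}{y}\in\igraph{q}\cap\SFD{q}$, then by \Cref{def:attackprop} there is some $\fd{w}{x_{n-1}}\in\WFD{q}$ with $\fd{w}{y}\notin\SFD{q}$. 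We have $x_{n-1}\neq y$. Apply \Cref{lem:swcollide} to $\fd{w}{x_{n-1}},\fd{x}{y}\in\WFD{q}$ and $\fd{x_{n-1}}{y}\in\SFD{q}$. This yields $\fd{w}{y}\in\SFD{q}$, a contradiction.
\end{itemize}

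\textbf{Case $\fd{x}{y}\in\SFD{q}$.} This is the main obstacle, and I would proceed by induction on $n$.

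For the base case $n=3$, \Cref{lem:consecutive}\eqref{it:eksu} says one of $\fd{x}{x_2}$, $\fd{x_2}{y}$ lies in $\WFD{q}$.
\begin{itemize}
\item If $\fd{x_2}{y}\in\WFD{q}$, then $x_2$ and $x$ both point into $y$. Since $\fd{x}{y}\in\SFD{q}$, this configuration has to be ruled out with the attack argument below. I would first try to show that it forces $\fd{x}{x_2}\in\WFD{q}$ and then apply the argument.
\item If $\fd{x}{x_2}\in\WFD{q}$ and $\fd{x_2}{y}\in\SFD{q}$, then membership of $\fd{x_2}{y}$ in $\igraph{q}$ requires a $\WFD{q}$-edge $\fd{z}{x_2}$ with $\fd{z}{y}\notin\SFD{q}$. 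By \Cref{lem:samesourcesat}, $z=x$. But $\fd{x}{y}\in\SFD{q}$, a contradiction.
\end{itemize}

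For longer paths, I would take the first $\WFD{q}$-edge $\fd{x_i}{x_{i+1}}$ on the path, with $i\leq 2$ by \Cref{lem:consecutive}, and let $F$ be the atom generating it. Then I would show that $F$ attacks every $x_j$ with $j>i$. The plan is to propagate along the path: \Cref{lem:nextattack} handles $\WFD{q}$-edges, and \Cref{lem:upattack} and \Cref{lem:nextattack2} handle $\SFD{q}$-edges in $\igraph{q}$. To make this work, I would take a minimal $j$ with $x_j\in\keyclosure{F}{q}$. Then I would use that the closure is generated by $\FD{q\setminus\{F\}}$, and that no later path edge is generated by $F$ (otherwise its key would be $x_i$, giving a cycle). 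Together these derive a contradiction from the structure of the incoming edge of $x_j$, via \Cref{lem:samesourcesat} and \Cref{lem:swcollide}.

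Once $F\attacks{}y$ is established, the contradiction comes from $\fd{x}{y}\in\SFD{q}$. If $i=1$, there is an atom other than $F$ generating $\fd{x}{y}$, so $y\in\keyclosure{F}{q}$. This contradicts $F\attacks{}y$. If $i=2$, then $\fd{x}{x_2}\in\SFD{q}$ has an incoming $\WFD{q}$-edge into $x$, and I would use \Cref{lem:ctrans} together with the saturation property to transfer $\fd{x_2}{y}$ into $\SFD{q}$. Since $\fd{x_2}{y}$ would then be a consistent edge that is not in $\FD{q}$, the resulting configuration is either excluded by normality (\Cref{it:normal4}) or by the same attack argument applied to the shorter path from $x_2$.

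I expect this propagation-of-attacks step along mixed consistent/inconsistent paths to be the delicate part. I would try first to phrase it as a separate claim that $F$ attacks every vertex reachable from $x_{i+1}$ in $\igraph{q}$.
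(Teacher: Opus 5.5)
You handle the case $\fd{x}{y}\in\WFD{q}$ correctly, and also the $n=3$ subcase with a consistent last edge. The gap is in the rest of the case $\fd{x}{y}\in\SFD{q}$, which is the heart of the lemma.

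Your claim that the atom $F$ generating the first inconsistent edge attacks every later $x_j$ is not supplied by the lemmas you cite. \Cref{lem:nextattack,lem:upattack,lem:nextattack2} only say that an atom attacks the target of its \emph{own} inconsistent edge, plus one consistent step beyond. Along the path they therefore give the atom-level chain used in \Cref{lem:forest}, not direct attacks of $F$ on all later vertices. Your minimal-$j$ argument through the incoming edge of $x_j$ (\Cref{lem:samesourcesat,lem:swcollide}, plus \Cref{lem:ctrans}) can at best force the shortest path in $\FD{q\setminus\{F\}}$ witnessing $x_j\in\keyclosure{F}{q}$ to be the single consistent edge $\fd{x}{x_j}$. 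The other option, $\fd{\emp}{x_j}\in\SFD{q}$, is excluded by normality. For $j<n$ your induction would dispose of this. For $j=n$, however, it is exactly the hypothesis $\fd{x}{y}\in\SFD{q}$, so the argument is circular precisely where it matters. The $n=3$ subcase with $\fd{x_2}{y}\in\WFD{q}$, which you left open, is the same obstruction.

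The $i=2$ step is also unsound. Getting $\fd{x_2}{y}\in\SFD{q}$ from \Cref{lem:ctrans} would require $\fd{x_2}{x}\in\SFD{q}$, which \Cref{it:normal4} of normality forbids since $\fd{x}{x_2}\in\SFD{q}$. Moreover, a consistent edge between distinct variables always lies in $\FD{q}$, so ``a consistent edge not in $\FD{q}$'' cannot occur.

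The missing idea is to argue backwards from $y$ and close an attack cycle with a second atom.
\begin{enumerate}
\item Let $y_0$ be the predecessor of $y$ on the path. Let $F$ be the atom generating $\fd{y_0}{y}$ if this edge is inconsistent; otherwise let $F$ generate the inconsistent edge into $y_0$ that puts $\fd{y_0}{y}$ into $\igraph{q}$. Then $F\attacks{}y$ by \Cref{lem:nextattack} or \Cref{lem:upattack}.
\item Since $\fd{x}{y}\in\SFD{q}$ yields $y\in\cl{\{x\}}{q\setminus\{F\}}$, we get $x\notin\keyclosure{F}{q}$. As $\{x,y\}$ is an edge of the Gaifman graph, $F\attacks{}x$.
\item Let $G$ be the atom containing $x$ that starts the attack chain along the path: the generator of the first edge if that edge is inconsistent, and otherwise the generator of the inconsistent edge into $x$. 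Apart from the $n=3$ configuration you already excluded, one checks $G\neq F$, so $F\attacks{}G$.
\item Meanwhile $G\attacksp{}F$ by the forward chain along the path, as in \Cref{lem:forest}.
\end{enumerate}
This contradicts acyclicity of the attack graph for every $n$, with no induction needed. It is essentially the paper's proof, which splits on whether $\fd{y_0}{y}$ lies in $\WFD{q}$ or in $\SFD{q}$.
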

\begin{proof}
By \Cref{lem:forest}, we have
 that $x\neq y$. 
 Moreover, $\fd{x}{y}\notin \WFD{q}$ by \Cref{lem:samesourcesat,lem:forest}.

Assume toward contradiction that $\fd{x}{y}\in \SFD{q}$. 
By normality of $q$,  we observe that $x$ is a variable (and not $\bot$). 
Let $D=x, \dots ,y_0,y$ be the path of $ \igraph{q}$ as in the assumption.
We have $x\neq y_0$; otherwise, $\igraph{q}$ has a cycle in contradiction with \Cref{lem:forest}. We consider two different possibilities.

First, suppose $\fd{y_0}{y}\in \WFD{q}$. Let $F\in {q}$ be (the unique) atom generating $\fd{y_0}{y}$. Then $F\attacks{} y$ by \Cref{lem:nextattack}. In particular, ${y}\notin \keyclosure{F}{q}$, and consequently, by $\fd{x}{y}\in \SFD{q}\subseteq \FD{q\setminus \{F\}}$, it holds that ${x}\notin \keyclosure{F}{q}$. Since $\gaifman{q}$ has an edge $\{x,y\}$, we obtain $F\attacks{} x$. Since
$x\neq y_0$, there exists an atom $G\in q\setminus \{F\}$ with $x=\keyvars{G}$ such that some $x_0\in \notkeyvars{G}$ is connected to $y_0$ in $\igraph{q}$ (possibly by the empty path). We obtain that $F\attacks{} G$ and by \Cref{lem:nextattack,lem:upattack,lem:consecutive} that $G\attacksp{} F$, contradicting the acyclicity of the attack graph.

Then, suppose $\fd{y_0}{y}\in \SFD{q}$. By definition, there exists an edge of the form $\fd{z}{y_0}\in \WFD{q}$. 
 Let $F\in q$ be (the unique) atom generating $\fd{z}{y_0}$.
Since $\fd{y_0}{y}$ belongs to both $ \SFD{q}$ and $\igraph{q}$, it follows that $\fd{z}{y}\notin \SFD{q}$. 
In particular, this entails $z\neq y$.  
Moreover, by saturation of $q$, we obtain 
$\fd{z}{y}\notin \FDclosure{q\setminus \{F\}}$, i.e., $y\notin \keyclosure{F}{q}$. Hence, $x\notin \keyclosure{F}{q}$ by $\fd{x}{y}\in \SFD{q}$. Since $F\attacks{} y$ by \Cref{lem:upattack}, this leads to $F\attacks{} x$, similarly to above. Specifically, $x\neq z$ as a consequence.
Again, we find an atom $G \in q\setminus \{F\}$ with $x=\keyvars{G}$ such that some $x_0\in \notkeyvars{G}$ is connected to $y_0$ in $\igraph{q}$.
As above, we obtain $F\attacks{} G$ and $G\attacksp{} F$, a contradiction.

Hence $\fd{x}{y}\notin \SFD{q}$. Concluding, we have established that $\fd{x}{y}\notin \FD{q}$.
 \end{proof}
 
 \medskip
 
 \begin{lemma}\label{lem:verysimple}
 Let  $q\in \sjfuabcq$ be saturated and normal.
 If $\fd{u}{v}\in \FD{q}$,  $\fd{v}{w}\in \SFD{q}\setminus \igraph{q}$, and $v\neq w$, 
 then $\fd{u}{w}\in \SFD{q}$.
\end{lemma}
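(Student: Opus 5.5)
The claim should follow almost directly from \Cref{def:attackprop}, with a case split on whether the edge $\fd{u}{v}$ is consistent or inconsistent. Since every edge of $\FD{q}$ lies in exactly one of $\SFD{q}$ and $\WFD{q}$, these two cases are exhaustive.

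\emph{Case $\fd{u}{v}\in\WFD{q}$.} By assumption $\fd{v}{w}\in\SFD{q}$, and $\fd{v}{w}\in\FD{q}$ because $v\neq w$; recall that the only edges of $\SFD{q}$ outside $\FD{q}$ are the dummy loops. So $\fd{v}{w}$ is an edge of $\FD{q}\cap\SFD{q}$. By \Cref{def:attackprop}, such an edge belongs to $\igraph{q}$ exactly when some $\fd{z}{v}\in\WFD{q}$ satisfies $\fd{z}{w}\notin\SFD{q}$. Since $\fd{v}{w}\notin\igraph{q}$, every inconsistent edge entering $v$ has this property fail. Taking $z=u$ then gives $\fd{u}{w}\in\SFD{q}$ immediately.

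\emph{Case $\fd{u}{v}\in\SFD{q}$.} Here we have $\fd{u}{v},\fd{v}{w}\in\SFD{q}$, so \Cref{lem:ctrans}, using that $q$ is saturated, yields $\fd{u}{w}\in\SFD{q}$. The possibility $u=\emp$ need not be treated separately: \Cref{it:normal3} of normality forbids $\fd{\emp}{v}\in\SFD{q}$, so this subcase is vacuous. In any event, \Cref{lem:ctrans} does not require its source to be a variable.

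I expect no real obstacle. The only point needing care is to confirm that the hypothesis $v\neq w$ places $\fd{v}{w}$ among the genuine edges of $\FD{q}$, and not merely among the dummy loops of $\SFD{q}$. This matters because the membership criterion of \Cref{def:attackprop} applies only to edges of $\FD{q}$. Once that is noted, the negation of the membership condition is exactly what the first case needs.
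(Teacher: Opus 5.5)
Your proof is correct and follows the paper's argument. Like the paper, you treat $\fd{u}{v}\in\SFD{q}$ via \Cref{lem:ctrans}. In the case $\fd{u}{v}\in\WFD{q}$, you use $v\neq w$ to place $\fd{v}{w}$ in $\FD{q}$, and then the negated membership condition of \Cref{def:attackprop}, applied with $z=u$, gives $\fd{u}{w}\in\SFD{q}$.
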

\begin{proof}
If $\fd{u}{v}\in \SFD{q}$, then $\fd{v}{w}\in \SFD{q}$ implies $\fd{u}{w}\in \SFD{q}$ by \Cref{lem:ctrans}.
Thus, suppose $\fd{u}{v}\in \WFD{q}$.
Since $\fd{v}{w}\in \SFD{q}$ and $v\neq w$, we have that $\fd{v}{w}\in \FD{q}$. Then, by definition, $\fd{u}{w}\in \SFD{q}$.
\end{proof}
 \medskip

\begin{lemma}\label{lem:verysimple2}
Let  $q\in \sjfuabcq$ be saturated and normal.
 If $\fd{u}{v},\fd{v}{w}\in \FD{q}$ and $v\in \leafqueryvars{q}$, 
  then $\fd{u}{w}\in \SFD{q}$.
\end{lemma}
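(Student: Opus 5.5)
The statement should follow almost directly from \Cref{lem:verysimple}; the leaf hypothesis is only needed to see that the edge $\fd{v}{w}$ is a consistent edge lying outside the attack-propagation graph. The plan has three steps.

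\textbf{Step 1: $\fd{v}{w}$ is not in $\igraph{q}$.} By definition of $\leafqueryvars{q}$, the variable $v$ has no outgoing edge in $\igraph{q}$. Since $\fd{v}{w}\in\FD{q}$ is an outgoing edge of $v$ in the key-nonkey graph, it is not an edge of $\igraph{q}$.

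\textbf{Step 2: $\fd{v}{w}$ is consistent and $v\neq w$.} Every edge of $\WFD{q}$ belongs to $\igraph{q}$ by \Cref{def:attackprop}. Hence $\fd{v}{w}\notin\WFD{q}$. Because $\FD{q}$ is partitioned into $\WFD{q}$ and $\FD{q}\setminus\WFD{q}\subseteq\SFD{q}$, we get $\fd{v}{w}\in\SFD{q}\setminus\igraph{q}$. Moreover $q$ is variable-repetition-free, so $\FD{q}$ has no self-loops, and therefore $v\neq w$.

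\textbf{Step 3: apply \Cref{lem:verysimple}.} Its hypotheses are now all in place: $\fd{u}{v}\in\FD{q}$, $\fd{v}{w}\in\SFD{q}\setminus\igraph{q}$, and $v\neq w$. It yields $\fd{u}{w}\in\SFD{q}$. The case $u=\emp$ is covered, because \Cref{lem:verysimple} puts no restriction on $u$ beyond $\fd{u}{v}\in\FD{q}$.

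The only point needing care is Step 2: the dummy self-loops in $\SFD{q}$ are excluded from $\FD{q}$, so $v\neq w$ must come from repetition-freeness rather than from membership in $\SFD{q}$. The part of the leaf condition saying that $v$ has an incoming edge in $\igraph{q}$ is not used.
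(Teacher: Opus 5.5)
Your proposal is correct and follows essentially the same argument as the paper. The leaf condition places $\fd{v}{w}$ in $\SFD{q}\setminus\igraph{q}$, and repetition-freeness gives $v\neq w$. \Cref{lem:verysimple} then yields $\fd{u}{w}\in\SFD{q}$; you simply spell out the step $\WFD{q}\subseteq\igraph{q}$, which the paper leaves implicit.
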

\begin{proof}
It follows by the assumption that $\fd{v}{w}\in \SFD{q}\setminus \igraph{q}$.
Moreover, $v\neq w$ since $q$ is variable-repetition-free.
Hence, \Cref{lem:verysimple} entails $\fd{u}{w}\in \SFD{q}$.
\end{proof}

\medskip

\begin{lemma}\label{lem:backward}
    Let $q\in \sjfuabcq$ be saturated and normal. Then the following statements hold: 
    \begin{enumerate}
    \item\label{it:sib1} If $\fd{x}{z},\fd{y}{z}\in\igraph{q}$ and $x\neq y$, then $\fd{x}{z},\fd{y}{z}\in \SFD{q}$, where $x$ and $y$ are variables such that $x\sib y$;
    \item\label{it:sib2} If $\fd{x}{y},\fd{x'}{z}\in\igraph{q}$, $y\neq z$, and $y\sib z$, then $\fd{x}{y},\fd{x'}{z}\in \WFD{q}$ and $x= x'$.
        \end{enumerate}
        Consequently, if $\fd{x}{y},\fd{x'}{y'}\in\igraph{q}$ and $y\sib y'$, then $x\sib x'$.
\end{lemma}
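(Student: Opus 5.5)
Item~\ref{it:sib1} will be handled first, by case analysis on the types of the two incoming edges of $z$. By \Cref{lem:samesourcesat}, $\fd{x}{z}$ and $\fd{y}{z}$ cannot both lie in $\WFD{q}$, since $x\neq y$. So suppose one of them lies in $\WFD{q}$ and the other in $\SFD{q}$; say $\fd{x}{z}\in\WFD{q}$ and $\fd{y}{z}\in\igraph{q}\cap\SFD{q}$. By the definition of $\igraph{q}$ there is an edge $\fd{w}{y}\in\WFD{q}$ with $\fd{w}{z}\notin\SFD{q}$. Applying \Cref{lem:swcollide} to $\fd{w}{y},\fd{x}{z}\in\WFD{q}$ and $\fd{y}{z}\in\SFD{q}$, with $y\neq z$ because there are no self-loops, gives $\fd{w}{z}\in\SFD{q}$. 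This is a contradiction, so both edges are consistent. By normality (\Cref{it:normal3}), $x$ and $y$ are then variables.

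To obtain $x\sib y$, I take witnesses $\fd{x_0}{x},\fd{y_0}{y}\in\WFD{q}$ with $\fd{x_0}{z},\fd{y_0}{z}\notin\SFD{q}$, which exist by the definition of $\igraph{q}$. Then I apply \Cref{lem:shortcut} to $\fd{x_0}{x},\fd{y_0}{y}\in\WFD{q}$ and $\fd{x}{z},\fd{y}{z}\in\SFD{q}$. The second and third alternatives are excluded, so $x_0=y_0$ and $x\sib y$.

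For \Cref{it:sib2}, I first exclude consistent edges. Suppose $\fd{x}{y}\in\SFD{q}$ and $y\neq x$. Since $y\sib z$ and $y\neq z$, there is an atom $F$ with $y,z\in\notkeyvars{F}$, so $|\notkeyvars{F}|>1$. By \Cref{it:normal1}, the edges $\fd{w}{y},\fd{w}{z}$ from $w=\keyvars{F}$ are in $\WFD{q}$. Since $\fd{x}{y}\in\igraph{q}\cap\SFD{q}$, there is $\fd{x_0}{x}\in\WFD{q}$ with $\fd{x_0}{y}\notin\SFD{q}$. Applying \Cref{lem:swcollide} to $\fd{x_0}{x},\fd{w}{y}\in\WFD{q}$ and $\fd{x}{y}\in\SFD{q}$ yields $\fd{x_0}{y}\in\SFD{q}$, a contradiction. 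The same argument applies to $\fd{x'}{z}$. Hence both edges are in $\WFD{q}$, and \Cref{lem:samesource} gives $x=x'$.

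The consequence then follows by cases. If $y=y'$, either $x=x'$ or \Cref{it:sib1} gives $x\sib x'$. If $y\neq y'$, \Cref{it:sib2} gives $x=x'$, and $\sib$ is reflexive.

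The main obstacle I expect is the mixed case of \Cref{it:sib1}: I must check that the hypotheses of \Cref{lem:swcollide} are met exactly, in particular $y\neq z$ and the orientation of the two inconsistent edges.
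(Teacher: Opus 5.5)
Your proof is correct and follows the paper's argument: \Cref{it:sib1} is handled exactly as in the paper, using \Cref{lem:swcollide} to rule out the mixed case and \Cref{lem:shortcut} to obtain the sibling conclusion. For \Cref{it:sib2} the paper is slightly more economical---it applies \Cref{it:sib1} to the pairs $x,u$ and $x',u$ with $u=\keyvars{F}$ to force $x=u=x'$---whereas you re-run the \Cref{lem:swcollide} argument and finish with \Cref{lem:samesource}, which amounts to the same thing.
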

\begin{proof}
\textbf{(\Cref{it:sib1})}
By \Cref{lem:nextattack}, it cannot be the case that both $\fd{x}{z}$ and $\fd{y}{z}$ belong to $\WFD{q}$. 
Hence one of $\fd{x}{z}$ and $\fd{y}{z}$ belongs to $\SFD{q}$.

By symmetry, we may assume that  $\fd{x}{z}\in \SFD{q}$. We claim that this entails, $\fd{y}{z}\in \SFD{q}$. 
First, observe $\fd{x}{z}\in \SFD{q}$ entails $\fd{x_0}{x}\in \WFD{q}$ for some $x_0\in \queryvars{q}\cup\{\emp\}$. 
Then, $\fd{y}{z}\in \WFD{q}$ is not possible, because in such a case by $x\neq z$ and \Cref{lem:swcollide} we would obtain $\fd{x_0}{z}\in \SFD{q}$, contradicting the assumption that $\fd{x}{z}\in\igraph{q}$.  Therefore, it follows that $\fd{y}{z}\in \SFD{q}$. 

Now, $\fd{y}{z}\in \SFD{q}\cap \igraph{q}$  implies $\fd{y_0}{y}\in \WFD{q}$ for some $y_0\in \queryvars{q}\cup\{\emp\}$. From $\fd{x}{z},\fd{y}{z}\in\igraph{q}$ we obtain $\fd{x_0}{z},\fd{y_0}{z}\notin\SFD{q}$. Hence, by \Cref{lem:shortcut}, we obtain also $x\sib y$, as required. In particular, $x$ and $y$ are variables.

\textbf{(\Cref{it:sib2})}
Suppose $\fd{x}{y},\fd{x'}{z}\in\igraph{q}$, $y\neq z$, and $y\sim z$. Let $F\in q$ be such that $y,z\in \notkeyvars{F}$. 
By normality of $q$, we have $\fd{u}{y},\fd{u}{z}\in \WFD{q}\subseteq \igraph{q}$, where $u\defeq \keyvars{F}$. 
By \cref{it:sib1}, it holds that $x=u=x'$, which shows the claim.
\end{proof}

\medskip

 \begin{lemma}\label{lem:rootnotattacked}
 Let $q\in \sjfuabcq$ be saturated and normal. 
 Suppose $\outvar{\igraph{q}}{\emp}=\emptyset$. 
 There exists a variable $x\in \rootqueryvars{q}$ such that $x$ is not attacked and $\outvar{\SFD{q}}{x}=\{x\}$.
 \end{lemma}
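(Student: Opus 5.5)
Throughout, the only hypothesis beyond saturation and normality is $\outvar{\igraph{q}}{\emp}=\emptyset$. The plan is to read off the attack structure from $\igraph{q}$ and pick a root whose tree carries no attacks from outside. The first target is the subclaim that every attacked variable has an incoming edge in $\igraph{q}$. Suppose $F\attacks{} x$. Take a shortest path in $\gaifman{q}$ from $\notkeyvars{F}$ to $x$ that avoids $\keyclosure{F}{q}$. Its first vertex $y\in\notkeyvars{F}$ satisfies $y\notin\keyclosure{F}{q}$, so the edge $\fd{\keyvars{F}}{y}$ generated by $F$ lies in $\WFD{q}$; otherwise it would be in $\FDclosure{q\setminus\{F\}}$. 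This edge belongs to $\igraph{q}$, and its source is not $\emp$ by the hypothesis.

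Since $\igraph{q}$ is acyclic (\Cref{lem:forest}), the trees of $\igraph{q}$ are rooted at elements of $\rootqueryvars{q}$, and none is rooted at $\emp$. The next step is to define a relation on roots. Set $r\triangleleft r'$ when some atom $F$ generating an edge in the tree of $r$ attacks a variable in the tree of $r'$, with $r\neq r'$. I would show that $\triangleleft$ is acyclic. For this I plan to use \Cref{lem:nextattack,lem:upattack}: every atom generating a $\WFD{q}$-edge on a tree path attacks, through propagation, the atoms along the rest of that path. Consequently an inter-tree attack chain returning to its starting tree would yield a cycle in the attack graph. Choosing a $\triangleleft$-minimal root $x$ then gives a root that no atom outside its own tree attacks.

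It remains to show that $x$ is not attacked by an atom of its own tree, and that $\outvar{\SFD{q}}{x}=\{x\}$. For the first point, an atom $F$ in the tree has key $\keyvars{F}\in\reach{x}{\igraph{q}}$. If $F\attacks{} x$, then by the subclaim and the path structure I would derive a Gaifman connection, which together with the $\igraph{q}$ path from $x$ to $\keyvars{F}$ produces a cycle in the attack graph. This step would go via \Cref{lem:consecutive} and the same argument as in \Cref{lem:noupstrong}. For the second point, suppose $\fd{x}{y}\in\SFD{q}$ with $y\neq x$. By \Cref{lem:consecutive}, \Cref{it:toksu}, this edge is not in $\igraph{q}$ when $x$ is a root. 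So $y$ has an incoming $\igraph{q}$-edge, through which $y$ is attacked by some atom $G$. The Gaifman edge $\{x,y\}$ would then let $G$ attack $x$ unless $x\in\keyclosure{G}{q}$, and $x\in\keyclosure{G}{q}$ would force $y\in\keyclosure{G}{q}$. If instead $x$ has consistent out-edges, I would replace $x$ by the end of a maximal $\SFD{q}$-chain. Such a chain terminates because $\SFD{q}$ is transitive (\Cref{lem:ctrans}) and antisymmetric (normality, \Cref{it:normal4}). One must check that the chain end is still a root and is unattacked.

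The main obstacle is the acyclicity of $\triangleleft$ together with this last replacement step. It requires showing precisely that attacks propagate along entire trees, including along the consistent $\igraph{q}$-edges with multiple parents described in \Cref{lem:backward}. The fallback plan is a direct argument: take an atom $F$ that is minimal in the attack graph among those generating $\igraph{q}$-edges, and let $x$ be the root of the tree containing $\keyvars{F}$.
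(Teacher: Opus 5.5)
\paragraph{Verdict: the main route has a gap, and your fallback closes it.}
Your argument for the first subclaim proves something weaker than what you state. It shows that an atom $F$ which attacks anything generates an edge $\fd{\keyvars{F}}{y}\in\WFD{q}$ with $\keyvars{F}\neq\emp$. It does not show that the attacked variable has an incoming $\igraph{q}$-edge, and that stronger claim is false. Take $q=\{R(\underline{x},y),S(\underline{u},y),S'(\underline{u},y)\}$: it is saturated and normal, its attack graph is acyclic, and $\igraph{q}=\{\fd{x}{y}\}$, so $u$ is a root. Yet $R\attacks{}u$ via the Gaifman path $y,u$, because $\keyclosure{R}{q}=\{x\}$.

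The more serious problem is that acyclicity of $\triangleleft$ does not follow from what you cite. \Cref{lem:nextattack,lem:upattack,lem:consecutive} propagate attacks only downward: an atom attacking the \emph{root} of a tree attacks, transitively, every atom of $\modei{q}$ keyed in that tree. But $r_i\triangleleft r_{i+1}$ only says that $F_i$ attacks some variable $v_{i+1}$ of the tree $T_{i+1}$. The next witness $F_{i+1}$ may lie on a different branch of $T_{i+1}$, not below $v_{i+1}$, and nothing links $F_i$ to $F_{i+1}$. So your ``consequently'' is unjustified, and consistent edges with several parents are not the real difficulty.

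The missing idea is an upward step. Suppose $\keyvars{F}\notin T$ and $F\attacks{}v$ with $v\in T$ rooted at $r$. Then no vertex of the $\igraph{q}$-path from $r$ to $v$ lies in $\keyclosure{F}{q}$: its edges are generated by atoms other than $F$, so such a vertex would put $v$ into the closure. Hence $F\attacks{}r$. The paper sidesteps this by only ever using attacks on roots. Assuming all roots are attacked, it walks from an attacked root to the root of the tree containing its attacker's key, which yields an infinite chain $\cdots\attacksp{}F'\attacksp{}F$ and contradicts acyclicity.

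Your one-line fallback is in fact the cleanest repair:
\begin{itemize}
\item If $\WFD{q}=\emptyset$, nothing is attacked and every variable is a root.
\item Otherwise choose $F\in\modei{q}$ with no incoming attack; this suffices because only atoms of $\modei{q}$ attack anything. Let $k=\keyvars{F}$.
\item $k\neq\emp$ by the hypothesis on $\outvar{\igraph{q}}{\emp}$.
\item $k$ is unattacked, since $F$ never attacks its own key and no other atom attacks $F$.
\item $k$ is a root. An incoming $\igraph{q}$-edge at $k$, whether inconsistent or consistent, yields by \Cref{lem:nextattack} or \Cref{lem:upattack} an atom $G\neq F$ with $G\attacks{}k$, hence $G\attacks{}F$, a contradiction.
\end{itemize}
So your $x$ is $\keyvars{F}$ itself. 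This is the same idea as the paper's chain argument, packaged as choosing a source of the attack graph.

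Your final $\SFD{q}$-chain step is essentially the paper's, with one slip. ``So $y$ has an incoming $\igraph{q}$-edge'' does not follow from \Cref{lem:consecutive}; it is the case to be refuted. Refuting it, together with your observation that every attacker of $y$ also attacks $x$, is exactly the check that the end of the chain is an unattacked root.
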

\begin{proof}
Assume toward contradiction that all variables in $ \rootqueryvars{q}$ are attacked.
Since $\outvar{\igraph{q}}{\emp}=\emptyset$ and $\igraph{q}$ is acyclic, there exists at least one element $x$ in $ \rootqueryvars{q}$. Then $F\attacks{} x$, for some $F\in q$. In particular, there exists a variable $v\in \notkeyvars{F}$ such that $\fd{\keyvars{F}}{v}\in \WFD{q}\subseteq \igraph{q}$. Hence, $\keyvars{F}\neq \bot$ by $\outvar{\igraph{q}}{\emp}=\emptyset$.
Now, either $\keyvars{F}\in  \rootqueryvars{q}$, or there exists a non-empty path from a variable $y\in \rootqueryvars{q}$ to $\keyvars{F}$. 
In the former case, letting $y$ be the variable $\keyvars{F}$, it holds that $F'\attacks{} y$ for some $F'\in q$.
In the latter case, the first edge of the non-empty path, denote it by $\fd{y}{z}$, belongs to $\WFD{q}$ by \Cref{lem:consecutive}. 
Suppose $\fd{y}{z}$ is generated by $G\in q$.
By \Cref{lem:nextattack,lem:nextattack2}, we have $G \attacksp{} F$. 
Then, $F'\attacks{} y$ for some $F'\in q$, as in the previous case. In both cases, we repeat the reasoning above, substituting $y$ for $x$.
Eventually, this leads to a cycle in the attack graph, a contradiction.

We conclude by contradiction that there exists a variable $x\in \rootqueryvars{q}$ that is not attacked.
If $\outvar{\SFD{q}}{x}= \{x\}$, we are done. Thus, suppose  $\outvar{\SFD{q}}{x}\neq \{x\}$, i.e., $\fd{x}{y}\in \SFD{q}$, for some $y\neq x$. 

First, we claim that $y\in \rootqueryvars{q}$.
Suppose this is not the case. Then there exists an edge of the form $\fd{y_0}{y}\in \igraph{q}$. Suppose first that $\fd{y_0}{y}\in \WFD{q}$, 
and let $F\in q$ be the atom generating $\fd{y_0}{y}$. Then $F\attacks{} y$ by \Cref{lem:nextattack}, hence $y\notin \keyclosure{F}{q}$.
Consequently, as $\fd{x}{y}\in \SFD{q}$, we obtain $x\notin \keyclosure{F}{q}$. However, since $\{x,y\}$ is an edge $\gaifman{q}$, we obtain $F\attacks{} x$, contradicting the fact that $x$ is unattacked. Suppose then that $\fd{y_0}{y}\in \SFD{q}$, in which case there exists an edge of the form $\fd{z}{y_0}\in \WFD{q}$, generated by $F'$. In this case, we obtain by \Cref{lem:upattack} that $F'\attacks{} y$, and the contradiction obtains as above. Hence the claim that $y\in \rootqueryvars{q}$ holds.

Then, we claim that $y$ is likewise not attacked. Toward contradiction, suppose $F\attacks{} y$, for some $F\in q$. Then, $y\notin \keyclosure{F}{q}$, hence $x\notin \keyclosure{F}{q}$ due to $\fd{x}{y}\in \SFD{q}$. By the same token, $\{x,y\}$ is an edge $\gaifman{q}$, leading to $F\attacks{} x$, a contradiction. Therefore, $y$ is not attacked.

By \Cref{lem:ctrans} and \Cref{it:normal4} of normality, $\SFD{q}$ is acyclic apart from the dummy self-loops. Consequently, by the above two paragraphs, there exists $y\in \rootqueryvars{q}$ such that $y$ is not attacked and  $\outvar{\SFD{q}}{y}=\{y\}$. This concludes the proof.
\end{proof}

\medskip

\begin{lemma}\label{lem:distroot}
 Let $q\in \sjfuabcq$ be saturated and normal. 
Let $u$ and $v $ be distinct elements from $\rootqueryvars{q}\cup \{\emp\}$. 
Then $\reach{u}{\igraph{q}}\cap \reach{v}{\igraph{q}}= \emptyset$. 
\end{lemma}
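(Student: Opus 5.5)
The plan is to argue by contradiction: suppose some vertex $w$ lies in $\reach{u}{\igraph{q}}\cap\reach{v}{\igraph{q}}$ for distinct $u,v\in\rootqueryvars{q}\cup\{\emp\}$. Among all such common vertices and witnessing path pairs, fix $w$ together with paths $P_u$ from $u$ to $w$ and $P_v$ from $v$ to $w$ of minimal total length. Since $u\neq v$ are both roots (and $\emp$ has no incoming edges), neither path is empty. Indeed, if say $P_u$ were empty, then $w=u$ would have an incoming edge, contradicting $u$ being a root. By minimality, the last edges $\fd{x}{w}\in P_u$ and $\fd{y}{w}\in P_v$ satisfy $x\neq y$; otherwise $x$ would be a smaller common vertex.

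First I apply \Cref{lem:backward}, \Cref{it:sib1}, to these two distinct edges into $w$. It yields that both are in $\SFD{q}$ and that $x,y$ are variables with $x\sib y$, $x\neq y$. So $x,y$ lie together in $\notkeyvars{F}$ for some atom $F$. Next I look at the edges entering $x$ and $y$ along the paths. Both $x$ and $y$ have incoming edges in $\igraph{q}$: since $\fd{x}{w}\in\igraph{q}\cap\SFD{q}$, by definition there is an edge $\fd{x_0}{x}\in\WFD{q}$, and similarly for $y$. Moreover, by \Cref{lem:consecutive}, the path edges preceding $x$ and $y$ must be in $\WFD{q}$, so they exist: $x,y$ are not roots by \Cref{lem:consecutive}, \Cref{it:toksu}. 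Let $\fd{x'}{x}\in P_u$ and $\fd{y'}{y}\in P_v$. Then \Cref{lem:backward}, \Cref{it:sib2}, applied with $x\neq y$ and $x\sib y$, gives $x'=y'$.

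Now $x'$ is a common vertex of $\reach{u}{\igraph{q}}$ and $\reach{v}{\igraph{q}}$ with strictly shorter witnessing paths, namely the prefixes of $P_u$ and $P_v$. This contradicts the minimality of the chosen pair. Hence the reachability sets are disjoint.

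The main obstacle is making the minimality bookkeeping airtight. I must handle the case where $x'$ equals $u$ or $v$. If $x'=u$, then $u$ is reachable from $v$, so a root has an incoming edge unless $u=v$, which is a contradiction; the same holds for $\emp$, which has indegree zero. I must also check that the hypotheses of \Cref{lem:backward} apply, in particular that $w$ is a variable, which holds because $\emp$ has no incoming edges. Once these degenerate cases are dispatched, the descent argument closes.
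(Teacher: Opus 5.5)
Your proof is correct and is essentially the paper's argument: both push the merge point backward with \Cref{lem:backward}. The paper iterates its sibling-propagation consequence until one path reaches its root and then uses normality to give that root an incoming edge, whereas you run a minimal-counterexample descent using \Cref{it:sib1,it:sib2} directly. Your version also covers the case where $u$ or $v$ is $\emp$ uniformly, which the paper's write-up does not treat explicitly. The extra handling of $x'\in\{u,v\}$ is unnecessary, since any shorter witnessing pair already contradicts minimality.
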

\begin{proof}
Assume toward contradiction that $\reach{u}{\igraph{q}}\cap \reach{v}{\igraph{q}}\neq \emptyset$, for some  distinct variables 
$u$ and $v $ from $\rootqueryvars{q}$.
Choose a vertex $w\in \reach{u}{\igraph{q}}\cap \reach{v}{\igraph{q}}$.
By \Cref{lem:backward}, if $\fd{x}{y},\fd{x'}{y'}\in\igraph{q}$ and $y\sib y'$, then $x\sib x'$.
Hence, there exist two vertices $u_0\in \reach{u}{\igraph{q}}$ and $v_0\in \reach{v}{\igraph{q}}$ such that $u_0\sim v_0$, and 
$u_0=u$ or $v_0=v$. By symmetry we may assume that $u_0=u$, meaning that $u\sim v_0$.
If $u=v_0$, then $u\in \reach{v}{\igraph{q}}$, a contradiction. Hence $u\neq v_0$, which means
that there exists $F\in q$ such that $u,v_0\in \notkeyvars{F}$. By normality of $q$, we obtain $\fd{x}{u}\in \WFD{q}$, for
$x=\keyvars{F}$. By definition, then, $u$ has an incoming edge in $\igraph{q}$, a contradiction.
\end{proof}

The previous lemma establishes that for each vertex in $v\in \queryvars{q}\cup\{\bot\}$, there exists
a unique $u\in \rootqueryvars{q}\cup\{\emp\}$ such that $v\in \reach{u}{\igraph{q}}$. We then call this vertex
$u$ the \emph{root of} $v$, denoted $\rootof{v}$. Observe that $\rootof{\bot}=\bot$.



\medskip

\begin{lemma}\label{lem:puuh}
Let $q\in \sjfuabcq$ be saturated and normal.
Let $F\in q$ and let $x,y\in \notkeyvars{F}$. Then $\fd{x}{y}\notin\FD{q}$.
\end{lemma}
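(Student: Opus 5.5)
Let $w\defeq\keyvars{F}$ and suppose, toward a contradiction, that $\fd{x}{y}\in\FD{q}$ for some $x,y\in\notkeyvars{F}$. Since $q$ is variable-repetition-free, $\FD{q}$ has no self-loops, so $x\neq y$. In particular $|\notkeyvars{F}|>1$. By \Cref{it:normal1} of normality, both $\fd{w}{x}$ and $\fd{w}{y}$ lie in $\WFD{q}$. By \Cref{it:normal2}, $\fd{x}{y}\notin\SFD{q}$, so it remains to rule out $\fd{x}{y}\in\WFD{q}$.

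So suppose $\fd{x}{y}\in\WFD{q}$. Then $y$ receives two inconsistent edges, $\fd{w}{y}$ and $\fd{x}{y}$. By \Cref{lem:samesourcesat}, this forces $w=x$. But $w$ is the key variable of $F$ and $x$ is a non-key variable of $F$, and $q$ is variable-repetition-free, so $w\neq x$. (If $w=\emp$, then $w\neq x$ holds trivially, since $x$ is a variable.) This is a contradiction.

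One could also reach the contradiction through \Cref{lem:samesource}: take $\fd{x}{y}\in\WFD{q}$ and $\fd{w}{y}\in\WFD{q}$ with $y\sib y$, which again gives $x=w$. The step needing the most care is the first one, namely using \Cref{it:normal1} to conclude that $\fd{w}{y}$ is inconsistent rather than consistent. This relies only on $|\notkeyvars{F}|\ge 2$, which follows from $x\neq y$. The rest is direct from the cited lemmas.
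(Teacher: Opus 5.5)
Your proof is correct and follows the paper's argument. You get $x\neq y$ from variable-repetition-freeness, use \Cref{it:normal1,it:normal2} of normality to place both $\fd{w}{y}$ and $\fd{x}{y}$ in $\WFD{q}$, and then obtain the contradiction from \Cref{lem:samesourcesat} together with $w\neq x$.
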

\begin{proof}
Toward contradiction, suppose $\fd{x}{y}\in\FD{q}$. Then by definition there exists $G\in q$ such that $\fd{x}{y}\in\FD{G}$.
Since $q$ is variable-repetition-free, we obtain $x\neq y$. Hence, by normality, $\fd{z}{y}\in \WFD{q}$, for $z\defeq \keyvars{F}$. Also by normality, $\fd{x}{y}\notin\SFD{G}$, so $\fd{x}{y}\in\WFD{G}$. Moreover, variable-repetition-freeness guarantees that $z\neq x$. Thus, a contradiction with \Cref{lem:samesourcesat} arises, whereby we conclude that $\fd{x}{y}\notin\FD{q}$.
\end{proof}

\section{Definitions and Lemmas for \Cref{sect:prune}}\label{sect:appprune}
This section contains the  missing definitions, proofs, and lemmas for \Cref{sect:prune}.

\subsection{Properties of Guards}
First, we consider various properties related to guarding.
\smallskip
\begin{lemma}\label{lem:isedge}
Let $q\in \sjfuabcq$ be saturated and normal. Let $x,y\in \rootqueryvars{q}$. If $y \in \reach{x}{\FD{q}}$, then $\fd{x}{y} \in\SFD{q}$.
\end{lemma}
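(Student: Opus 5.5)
We prove the statement by induction on the length of a shortest path from $x$ to $y$ in $\FD{q}$. If $x=y$, then $\fd{x}{x}\in\SFD{q}$ by definition. Otherwise, let $D=x_1,\dots,x_n$ with $x_1=x$, $x_n=y$ be a shortest such path. Note that $x\neq\emp$, so every vertex of $D$ is a variable, because $\emp$ has no incoming edges.

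The key observation concerns the first edge $\fd{x_1}{x_2}$. Suppose first that $\fd{x_1}{x_2}\in\WFD{q}$. Then \Cref{lem:shortestpath} gives $y\in\reach{x}{\igraph{q}}$. Since $y\in\rootqueryvars{q}$, the vertex $y$ has no incoming edge in $\igraph{q}$. Hence the path in $\igraph{q}$ must be empty, i.e., $x=y$, contradicting $n\geq 2$. We conclude that $\fd{x_1}{x_2}\in\SFD{q}$.

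We next want to repeat this argument at $x_2$, which requires $x_2\in\rootqueryvars{q}$. Suppose instead that $x_2$ has an incoming edge $\fd{w}{x_2}\in\igraph{q}$. There are two cases.
\begin{itemize}
\item If $\fd{w}{x_2}\in\WFD{q}$, let $F$ be the generating atom. By \Cref{lem:nextattack}, $F\attacks{}x_2$, so $x_2\notin\keyclosure{F}{q}$. Since $\fd{x}{x_2}\in\SFD{q}\subseteq\FD{q\setminus\{F\}}$, also $x\notin\keyclosure{F}{q}$. The Gaifman edge $\{x,x_2\}$ then gives $F\attacks{}x$.
\item If $\fd{w}{x_2}\in\SFD{q}$, there is an edge $\fd{z}{w}\in\WFD{q}$ generated by some $F$. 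By \Cref{lem:upattack}, $F\attacks{}x_2$, and the same separation argument again yields $F\attacks{}x$.
\end{itemize}
This part mirrors the proof of \Cref{lem:rootnotattacked}.

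The main obstacle is that $x$ is only assumed to be a root, not unattacked, so $F\attacks{}x$ is not immediately a contradiction. To close this gap, we use that $x\in\rootqueryvars{q}$ with $\fd{x}{x_2}\in\SFD{q}$ and build a cycle in the attack graph, following the second half of the proof of \Cref{lem:noupstrong}. Since $x\in\rootqueryvars{q}$ and $\fd{x}{x_2}\in\FD{q}$, the edge $\fd{x}{x_2}$ is not in $\igraph{q}$, because \Cref{lem:consecutive} would force it to be inconsistent. Now take an atom $G$ with $\keyvars{G}=x$ that generates an edge of $\igraph{q}$ leading, via \Cref{lem:nextattack,lem:upattack,lem:consecutive}, to $w$ and hence to $F$. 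This gives $G\attacksp{}F$. Together with $F\attacks{}G$, obtained through $x$, this contradicts acyclicity.

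If no such $G$ exists, then $w$ lies in a tree with a different root. In that case we instead invoke \Cref{lem:distroot} together with \Cref{lem:backward} to derive a contradiction, and we expect this case split to need the most care.

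Once $x_2\in\rootqueryvars{q}$ is established, the path $x_2,\dots,x_n$ is a shorter shortest path between roots. The induction hypothesis gives $\fd{x_2}{y}\in\SFD{q}$, and \Cref{lem:ctrans} combines this with $\fd{x}{x_2}\in\SFD{q}$ to yield $\fd{x}{y}\in\SFD{q}$.
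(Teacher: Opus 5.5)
\textbf{Gap: the step $x_2\in\rootqueryvars{q}$.} Your argument for this step uses only that $x$ is a root and that $\fd{x}{x_2}\in\SFD{q}$. In that generality the claim is false.

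Take $q=\{R(\underline{c},x_2),A(\underline{x},x_2),A'(\underline{x},x_2)\}$ with $c$ a constant. This $q$ is saturated and normal. $R$ attacks $A$ and $A'$, while $A$ and $A'$ attack nothing, so $q\in\sjfuabcq$. Moreover $\fd{x}{x_2}\in\SFD{q}$ and $\fd{\emp}{x_2}\in\WFD{q}$, so $x$ is a root but $x_2$ is not.

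Run your case analysis on this example. Here $w=\emp$, and no atom with key $x$ generates an edge of $\igraph{q}$, so there is no $G$ with $G\attacksp{}F$. In the ``different tree'' case, \Cref{lem:distroot,lem:backward} have nothing to contradict. The cycle argument you borrow from \Cref{lem:noupstrong} needs $w\in\reach{x}{\igraph{q}}$, which that lemma obtains from its hypothesis and you never establish. The argument of \Cref{lem:rootnotattacked} needs $x$ to be unattacked, which, as you note yourself, is not available. So the induction step is unsupported.

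\textbf{The missing idea} is to use the remainder of the shortest path rather than trying to propagate rootness. Suppose $n\geq 3$.
\begin{itemize}
\item First, $\fd{x_2}{x_3}\notin\SFD{q}$. Otherwise \Cref{lem:ctrans} gives $\fd{x_1}{x_3}\in\SFD{q}$. Since $x_1\neq x_3$, this is an edge of $\FD{q}$, and it shortcuts $D$.
\item Hence $\fd{x_2}{x_3}\in\WFD{q}$. Applying \Cref{lem:shortestpath} to the shortest path $x_2,\dots,x_n$ yields $y\in\reach{x_2}{\igraph{q}}$.
\item Since $y$ is a root, $y=x_2$, contradicting that $D$ is simple.
\end{itemize}
Thus $n=2$, and your (correct) first step already gives $\fd{x}{y}=\fd{x_1}{x_2}\in\SFD{q}$; no induction is needed.

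The paper argues similarly but from the other end. It takes the last inconsistent edge $\fd{u}{v}$ on $D$ and observes that it must be followed by a single consistent edge $\fd{v}{y}$. That edge lies outside $\igraph{q}$ because $y$ is a root, so \Cref{lem:verysimple} gives a shortcut. Both arguments use only that $y$ is a root.
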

\begin{proof}
If $x=y$, the statement holds trivially. Thus suppose $x\neq y$.
Assume toward contradiction that $\fd{x}{y}\notin  \SFD{q}$. 
By $y \in \reach{x}{\FD{q}}$ there exists a (non-empty) path $D$ from $x$ to $y$ in $\FD{q}$.
Assume $D$ is shortest among such paths. By \Cref{lem:ctrans} and the assumption, some edge $\fd{u}{v}$ on $D$ belongs to $ \WFD{q}$.
Moreover, $x\in \rootqueryvars{q}$ entails that $\fd{u}{v}$ is not the last edge on $D$. We may select $\fd{u}{v}$ so that none of the edges that follow it in $D$ belongs to $\WFD{q}$. Then, due to \Cref{lem:ctrans} and $D$ being shortest, we obtain that
 $\fd{u}{v}$ is followed by $\fd{v}{y}\in \SFD{q}$ in $D$. Since $v\neq y$ and $\fd{v}{y}\notin \igraph{q}$ (by $y\in \rootqueryvars{q}$),
it follows by \Cref{lem:verysimple} that $\fd{u}{y}\in \SFD{q}$, contradicting the assumption that $D$ is shortest.
Hence we conclude that $\fd{x}{y}\in  \SFD{q}$.
\end{proof}

\smallskip
\begin{lemma}\label{lem:isone}
If $q\in \sjfuabcq$ is saturated and normal, then $|\initqueryvars{q}|\leq \cgs{q}$.
\end{lemma}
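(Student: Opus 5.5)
We prove $|\initqueryvars{q}|\leq \cgs{q}$ by exhibiting an injection from $\initqueryvars{q}$ into the set of source SCCs of $\FD{q}$ other than $\{\emp\}$. Recall from the discussion of the key-nonkey graph that $\cgs{q}=\nscc{\FD{q}}-1$, and that $\{\emp\}$ is itself a source SCC, since $\emp$ has indegree zero. By definition, every guard variable $x\in\initqueryvars{q}$ lies in $\srcqueryvars{q}$, so it belongs to some source SCC $C_x$ of $\FD{q}$. Because $x$ is a variable, $C_x\neq\{\emp\}$. The map $x\mapsto C_x$ therefore sends guards to source SCCs distinct from $\{\emp\}$, and the bound follows once this map is shown to be injective.

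For injectivity, suppose $x,y\in\initqueryvars{q}$ are distinct and $C_x=C_y$. Then $x$ and $y$ are mutually reachable in $\FD{q}$, so $y\in\reach{x}{\FD{q}}$ and $x\in\reach{y}{\FD{q}}$. Both are also in $\rootqueryvars{q}$. Applying \Cref{lem:isedge} in each direction gives $\fd{x}{y}\in\SFD{q}$ and $\fd{y}{x}\in\SFD{q}$. This contradicts \Cref{it:normal4} of normality (\Cref{def:normal}), which forbids distinct variables $y,z$ with both $\fd{y}{z},\fd{z}{y}\in\SFD{q}$. Hence $C_x\neq C_y$.

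The only real work sits in \Cref{lem:isedge}, which we take as given. What remains is bookkeeping. First, the hypotheses of \Cref{lem:isedge} (saturated, normal, both endpoints root variables) are exactly those available for guards. Second, the count of source SCCs excludes the singleton $\{\emp\}$, which is what produces the $-1$. Combining the two steps gives $|\initqueryvars{q}|\leq\nscc{\FD{q}}-1=\cgs{q}$.
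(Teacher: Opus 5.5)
Your proof is correct and matches the paper's argument step for step. Guards lie in source SCCs of $\FD{q}$ other than $\{\emp\}$, and two distinct guards in one SCC would give both $\fd{x}{y}$ and $\fd{y}{x}$ in $\SFD{q}$ via \Cref{lem:isedge}, contradicting \Cref{it:normal4} of normality; the bound then follows from $\cgs{q}=\nscc{\FD{q}}-1$.
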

\begin{proof}
Let $x,y\in \initqueryvars{q}$. Then, by definition, $x,y\in \rootqueryvars{q}$. By \Cref{lem:isedge}, if $x$ and $y$ belong to the same SCC of $\FD{q}$, then
$\fd{x}{y} \in\SFD{q}$ and $\fd{y}{x} \in\SFD{q}$. By normality, this is only possible when $x=y$. Thus, the variables in $\initqueryvars{q}$
are distributed to distinct source SCCs other than $\{\emp\}$. Since $\cgs{q}=\nscc{\FD{q}}-1$, this entails the lemma statement.
\end{proof}

\smallskip
\begin{lemma}\label{lem:rootguarded}
If $q\in \sjfuabcq$ is saturated and normal, then $\rootqueryvars{q}\subseteq \nrqueryvars{q}$. 
\end{lemma}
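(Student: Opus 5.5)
Fix $x\in\rootqueryvars{q}$. We must find a guard variable $g\in\initqueryvars{q}$ with $\fd{g}{x}\in\SFD{q}$. The plan is to walk backwards from $x$ in the key-nonkey graph $\FD{q}$ until we reach a source SCC, and then use \Cref{lem:isedge} to turn reachability into a consistent edge.

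First, note that $\emp\notin\reach{\emp}{}$-ancestry of $x$. Suppose some path in $\FD{q}$ led from $\emp$ to $x$. Take a shortest such path. Its first edge starts at $\emp$, and by \Cref{it:normal3} of normality it cannot be consistent. So that edge lies in $\WFD{q}$, and \Cref{lem:shortestpath} gives $x\in\reach{\emp}{\igraph{q}}$. But \Cref{lem:distroot} says the trees rooted at $\emp$ and at $x$ are disjoint, which is a contradiction. Hence every ancestor of $x$ in $\FD{q}$ is a variable. Since $\FD{q}$ is finite, some source SCC $C\neq\{\emp\}$ reaches $x$.

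Second, I want a vertex $g\in C$ that is also a root of $\igraph{q}$. Take any $y\in C$ and follow $\igraph{q}$-edges backwards from $y$. By \Cref{lem:forest} this ends at some $g\in\rootqueryvars{q}\cup\{\emp\}$, and $g$ reaches $y$, and hence $x$, in $\FD{q}$. The previous step rules out $g=\emp$. Because $C$ is a source SCC, every $\FD{q}$-ancestor of $y$ lies in $C$, so $g\in C$. Thus $g\in\rootqueryvars{q}\cap\srcqueryvars{q}=\initqueryvars{q}$.

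Finally, both $g$ and $x$ are in $\rootqueryvars{q}$ and $x\in\reach{g}{\FD{q}}$, so \Cref{lem:isedge} yields $\fd{g}{x}\in\SFD{q}$. This shows $x$ is guarded. If $x=g$ itself, the dummy self-loop $\fd{x}{x}\in\SFD{q}$ already suffices.

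The step I expect to be most delicate is the first one, ruling out $\emp$ as an ancestor of a root. There I need the shortest path to start with an inconsistent edge, which uses \Cref{it:normal3}. I also need that $\emp$ has no consistent out-edges after normalization, so that \Cref{lem:shortestpath} applies. If that argument breaks down, the fallback is to argue directly along the path with \Cref{lem:ctrans} and \Cref{lem:verysimple}, as in the proof of \Cref{lem:isedge}.
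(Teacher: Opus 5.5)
Your proof is correct and follows the same route as the paper's: a source SCC of $\FD{q}$ that reaches $x$ contains a root of $\igraph{q}$, which is therefore a guard, and \Cref{lem:isedge} turns the reachability into a consistent edge. Two of your steps spell out what the paper's proof only asserts. Your first step excludes $\emp$ as an $\FD{q}$-ancestor of a root, via \Cref{it:normal3} of normality and \Cref{lem:shortestpath}, which shows the relevant source SCC is not $\{\emp\}$; your backward walk in $\igraph{q}$ shows that every other source SCC meets $\rootqueryvars{q}$.
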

\begin{proof}
Let $x\in \rootqueryvars{q}$. If $x\in \initqueryvars{q}$, then $x$ is guarded by itself. 
If $x\notin \initqueryvars{q}$, 
 then there exists a variable $y\in  \initqueryvars{q}$ such that $x \in \reach{y}{\FD{q}}$ and $y \notin \reach{x}{\FD{q}}$. For this, observe that every source SCC intersects with $ \rootqueryvars{q}$.
 Since $y\in \rootqueryvars{q}$, we obtain by \Cref{lem:isedge} that $\fd{y}{x} \in\SFD{q}$. That is, $y$ guards $x$, which means that $x\in \nrqueryvars{q}$. 
\end{proof}

\smallskip
\begin{lemma}\label{lem:path}
Let $q\in \sjfuabcq$ be saturated and normal. Let $y\in \rqueryvars{q}\cap \nkeyqueryvars{q}$, $x\in\queryvars{q}$, $x\neq y$, and $\fd{x}{y}\in \SFD{q}$. Then $\FD{q}$ contains a path $z_1, \dots ,z_n$ such that 
\begin{enumerate}
\item $\fd{z_1}{y}\in \igraph{q}$; 
\item $z_n=x$; and 
\item $\fd{z_i}{y}\in \SFD{q}$ and $z_i\neq y$ for each $i\in [n]$.
\end{enumerate}
\end{lemma}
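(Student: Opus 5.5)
The plan is to build the path backwards from $x$, walking against the edges of $\igraph{q}$ while keeping the invariant that the current vertex $z$ satisfies $\fd{z}{y}\in\SFD{q}$ and $z\neq y$. The walk stops as soon as $\fd{z}{y}\in\igraph{q}$. Reversing the visited sequence then gives $z_1,\dots,z_n$ with $z_n=x$. Each step follows an edge of $\igraph{q}\subseteq\FD{q}$, so the sequence is a path in $\FD{q}$. Termination follows from acyclicity of $\igraph{q}$ (\Cref{lem:forest}), so the only real work is to show that the walk can always make a step while preserving the invariant.

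Two easy facts are used throughout. First, the invariant $z\neq y$ is automatic for every predecessor we step to: such a vertex is the source of an edge of $\FD{q}$, hence a key variable or $\emp$, whereas $y\in\nkeyqueryvars{q}$. Second, the current vertex is never $\emp$, since $\fd{\emp}{y}\in\SFD{q}$ would violate \Cref{it:normal3} of normality. The invariant holds at the start by hypothesis.

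The inductive step runs as follows. Suppose the current vertex $z$ satisfies $\fd{z}{y}\in\SFD{q}$ and $\fd{z}{y}\notin\igraph{q}$. We first argue that $z$ has an incoming edge in $\igraph{q}$. Otherwise $z\in\rootqueryvars{q}$, so $z$ is guarded by \Cref{lem:rootguarded}, i.e.\ $\fd{g}{z}\in\SFD{q}$ for some guard $g$. Then \Cref{lem:ctrans} gives $\fd{g}{y}\in\SFD{q}$, so $y$ is guarded, which contradicts $y\in\rqueryvars{q}$. So let $\fd{w}{z}\in\igraph{q}$ be such an incoming edge, and split into two cases.
\begin{itemize}
\item If $\fd{w}{z}\in\SFD{q}$, then \Cref{lem:ctrans} yields $\fd{w}{y}\in\SFD{q}$.
\item If $\fd{w}{z}\in\WFD{q}$, we read off \Cref{def:attackprop} directly. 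The edge $\fd{z}{y}$ is consistent and lies in $\FD{q}$ because $z\neq y$, but it is not in $\igraph{q}$. By that definition, every inconsistent edge $\fd{w}{z}$ must therefore satisfy $\fd{w}{y}\in\SFD{q}$.
\end{itemize}
In both cases the invariant transfers to $w$, and we set the new current vertex to $w$.

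Since $\igraph{q}$ is acyclic and finite, this backward walk cannot continue forever. It can only stop at a vertex $z_1$ with $\fd{z_1}{y}\in\igraph{q}$, because every other vertex admits a further step. This gives the first item of the lemma, and the remaining items hold by the invariant.

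The step I expect to be the main obstacle is the $\WFD{q}$ case of the inductive step. It relies on reading the definition of $\igraph{q}$ exactly: a consistent edge $\fd{z}{y}$ with $z\neq y$ is excluded from $\igraph{q}$ precisely when no inconsistent edge $\fd{w}{z}$ has $\fd{w}{y}\notin\SFD{q}$. One also has to confirm that $\fd{z}{y}$ is a genuine edge of $\FD{q}$ rather than a dummy loop, which is what $z\neq y$ guarantees.
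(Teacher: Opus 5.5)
Your proof is correct. Its core step is the same as the paper's. The inductive step is exactly the content of \Cref{lem:verysimple}: if $\fd{z}{y}\in\SFD{q}\setminus\igraph{q}$ with $z\neq y$, then every $\FD{q}$-predecessor $w$ of $z$ satisfies $\fd{w}{y}\in\SFD{q}$. You re-derive this from \Cref{lem:ctrans} and \Cref{def:attackprop}. What differs is the surrounding structure. The paper argues by contradiction along a fixed path of $\FD{q}$ from a guard variable to $x$. It propagates $\SFD{q}$-membership backwards until either some $\fd{z_k}{y}$ lies in $\igraph{q}$ or the guard $z_1$ would guard $y$. The existence of such a path is only asserted there. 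The alternative, that only $\emp$ reaches $x$ in $\FD{q}$, must be excluded by the same propagation together with \Cref{it:normal3} of normality, a step the paper leaves implicit. You instead build the path greedily along $\igraph{q}$-predecessors. This costs you \Cref{lem:forest} for termination and \Cref{lem:rootguarded} for the contradiction when the walk reaches a root. In exchange, your invariant handles the $\emp$ case uniformly, and you get the slightly stronger conclusion that $z_1,\dots,z_n$ is a path in $\igraph{q}$, not merely in $\FD{q}$.
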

\begin{proof}
Note that $\fd{x}{y}\in \FD{q}$ due to $x\neq y$ and $\fd{x}{y}\in \SFD{q}$.

If $\fd{x}{y}\in  \igraph{q}$, 
  we can select the path that consists of $x$ only. Thus suppose $\fd{x}{y}\in \SFD{q}\setminus \igraph{q}$.

Toward contradiction, suppose the statement of the lemma is false.
Now, $\FD{q}$ contains a  path $z_1, \dots ,z_n$  where $z_1\in \initqueryvars{q}$ and
 $z_n=x$. 
 Since $y\in  \nkeyqueryvars{q}$, we have $z_i\neq y$ for each $i\in [n]$.
 Suppose $\fd{z_{k+1}}{y},\dots ,\fd{z_{n}}{y}\in \SFD{q}$ and $\fd{z_{k+1}}{y} \notin \igraph{q}$. Since 
 $z_{k+1}\neq y$, we obtain $\fd{z_{k}}{y}\in \SFD{q}$ by \Cref{lem:verysimple}. If $\fd{z_{k}}{y}\in \igraph{q}$, 
  the statement is true for the truncated path $(z_k, \dots ,z_n)$, a contradiction.
  Hence, we have that  $\fd{z_{k}}{y},\dots ,\fd{z_{n}}{y}\in \SFD{q}$ and  $\fd{z_{k}}{y}\notin \igraph{q}$.

By induction,  it follows that  $\fd{z_{1}}{y}\in \SFD{q}$, contradicting $y\in \rqueryvars{q}$. We conclude by contradiction that the statement is true.
\end{proof}

\smallskip
\begin{lemma}\label{lem:grounding}
Let $q\in \sjfuabcq$ be saturated and normal, and let $R\in q$.
Let $\db$ be a $q$-saturated database.
Then $|R^{\db}| \leq  |\restrict{R^\db}{V}| \cdot |\adom{\db}|^{|\initqueryvars{q}|}$, where $V\defeq \atomvars{R}\cap \rqueryvars{q}$.
\end{lemma}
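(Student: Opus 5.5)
The strategy is to show that each $R$-fact of $\db$ is determined by two pieces of data: its restriction to the non-guarded variables of $R$, and the values of the guard variables in some satisfying witness. Write $R=R(\underline{t},\vec{u})$, let $\initqueryvars{q}=\{g_1,\dots,g_k\}$, and let $W\defeq\atomvars{R}\cap\nrqueryvars{q}$. Every $w\in W$ has a guard $g(w)\in\initqueryvars{q}$ with $\fd{g(w)}{w}\in\SFD{q}$; fix one such choice. The target is an injective map from $R^{\db}$ into $\restrict{R^\db}{V}\times\adom{\db}^{k}$. The bound then follows immediately, since constant positions of $R$ are fixed and each position of $R$ carries a variable from $V\cup W$.

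\textbf{Step 1: guards determine guarded positions.} For each $w\in W$ with $w\neq g(w)$, binary-saturatedness (or, after normalization, the existence of two atoms witnessing the consistent edge; see the definition of $\SFD{q}$) provides an atom $S_w(\underline{g(w)},w,\uk)\in q$ with $\fd{g(w)}{w}\in\SFD{q}$. Apply the guardedness condition of $q$-saturation to $R$ and, for a fixed guard $g$, to the family $\{S_w\mid g(w)=g\}$. For each fact $A=R(\underline{a_1},\dots,a_\ell)\in\db$ this yields a constant $b_g\in\adom{\db}$ such that $S_w(\underline{b_g},a_{i_w},\uk)\in\db$ for all $w$ with $g(w)=g$. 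Functional consistency then says that $S_w$ is a function in $\db$ from key to the $w$-column, because $\fd{g}{w}\in\SFD{q}$. Hence $a_{i_w}$ is uniquely determined by $b_{g(w)}$. If $w=g(w)$, meaning $w$ is itself a guard occurring in $R$, its value is trivially determined by $b_w\defeq a_{i_w}$.

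\textbf{Step 2: the injection.} Map $A$ to the pair consisting of $\restrict{A}{V}$ (the projection onto the positions whose variables lie in $V$) and the tuple $(b_{g_1},\dots,b_{g_k})$. Choose each $b_{g_j}$ by some fixed rule, say $\prec$-least; when no variable of $R$ is guarded by $g_j$, set $b_{g_j}$ to a fixed default. By Step 1, every position of $A$ with a variable in $W$ is recovered as $f_{S_w}(b_{g(w)})$, positions with variables in $V$ are read off $\restrict{A}{V}$, and constant positions coincide with those of $R$. So $A$ is reconstructible from its image, the map is injective, and $|R^{\db}|\le|\restrict{R^\db}{V}|\cdot|\adom{\db}|^{k}$.

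\textbf{Main obstacle.} The delicate point is the guardedness condition itself. It applies to atoms $S_1,\dots,S_n$ that share the same key variable $z$ and have $\fd{z}{x_{i_j}}\in\SFD{q}$, and I must confirm that the required atoms $S_w(\underline{g},w,\uk)$ exist for every guarded $w$. Normality item~\ref{it:normal1} ensures that a consistent edge $\fd{g}{w}$ with $g\neq w$ is generated by an atom whose only non-key variable is $w$, and the definition of $\SFD{q}$ gives at least one generating atom, which is enough. A further check is that the guarded variable $w$ may occur in $R$ in the key position; the argument covers this case as well, since the key position is just position $1$. Finally, $g$ may equal the key variable of $R$ itself, which is again covered by the trivial case.
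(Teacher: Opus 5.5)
Your overall strategy is the same as the paper's: encode each $R$-fact by its projection onto $V$ together with one constant per guard, using guardedness and functional consistency. The paper groups the guarded variables of $R$ by a covering set of guards $z_1,\dots,z_n$ and bounds each projection $\restrict{R^{\db}}{W_i}$ by $|\adom{\db}|$.

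There is, however, a genuine gap in exactly the case you dismiss at the end. This is a guard $g$ that is itself a variable of $R$ and also guards some other variable $w\neq g$ of $R$. By normality this forces $R=R(\underline{g},w,\vec{c})$ with $\fd{g}{w}\in\SFD{q}$, so the case does occur. There $b_g$ is defined twice, once as the $\prec$-least guardedness witness and once as $a_{i_g}$, and neither reading is supported by what you prove.

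\emph{Witness reading.} The $g$-position cannot be recovered, because $S_w^{\db}$ is functional but not injective. Take $q=\{R(\underline{g},w),R'(\underline{g},w)\}$ and $\db=\{R(\underline{1},5),R(\underline{2},5),R'(\underline{1},5),R'(\underline{2},5)\}$. The query is saturated and normal, the database is $q$-saturated, and $V=\emptyset$. Both $R$-facts get the same least witness, so your map is not injective.

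\emph{Reading $b_g\defeq a_{i_g}$.} Recovering $a_{i_w}$ from $b_g$ requires $S_w(\underline{a_{i_g}},a_{i_w},\uk)\in\db$. Guardedness only supplies \emph{some} key $b$ with $S_w(\underline{b},a_{i_w},\uk)\in\db$.

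The missing ingredient is collision consistency. Apply its first clause to $R$ with $x_i=z=w$ (via the dummy edge $\fd{w}{w}$), $x_j=g$ and $T=S_w$; this gives precisely $S_w(\underline{a_{i_g}},a_{i_w},\uk)\in\db$. Alternatively, in the normal situation above $R$ itself generates the consistent edge $\fd{g}{w}$, so functional consistency of $R^{\db}$ already makes $a_{i_w}$ a function of $a_{i_g}$.

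So set $b_g\defeq a_{i_g}$ whenever $g\in\atomvars{R}$, and use the guardedness witness only for guards not occurring in $R$. With that change your injection into $\restrict{R^{\db}}{V}\times\adom{\db}^{k}$ goes through and yields the stated bound.
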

\begin{proof}
Choose an inclusion-minimal set $\{z_1,\dots,z_n\}\subseteq \initqueryvars{q}$ such that, for every $y\in \atomvars{R}\cap \nrqueryvars{q}$, we have $\fd{z_i}{y}\in \SFD{q}$ for some $i\in[n]$. Write
\[
W_i\defeq \{y\in \atomvars{R}\cap \nrqueryvars{q}\mid \fd{z_i}{y}\in \SFD{q}\}.
\]
By minimality, every $W_i$ is non-empty. If $W_i=\{z_i\}$, then clearly $|\restrict{R^{\db}}{W_i}|\leq |\adom{\db}|$. Otherwise, we can choose $y\in W_i\setminus\{z_i\}$ and an atom $S_i\in q$ generating $\fd{z_i}{y}$.
By the $q$-saturation of $\db$, we have  
$ |\restrict{R^{\db}}{W_i}|
\leq |\restrict{S_i^{\db}}{\{z_i\}}|
\leq |\adom{\db}|$. 
Consequently,
\begin{align*}
 |R^{\db}|
 &\leq |\restrict{R^{\db}}{V}|\cdot
   \prod_{i=1}^{n}|\restrict{R^{\db}}{W_i}| \\
 &\leq |\restrict{R^{\db}}{V}|\cdot |\adom{\db}|^n
 \leq |\restrict{R^{\db}}{V}|\cdot
 |\adom{\db}|^{|\initqueryvars{q}|}.
\end{align*}
 This proves the lemma statement.
\end{proof}

\subsection{Guarded Reductions}\label{sect:guarded}
Next, we present what we call guarded reductions. These reductions rely on a size-invariance property which is given first.
This property ensures that repeated applications of guarded reductions do not cause an uncontrolled blow-up in the intermediate instances.

Given an atom $F=R(t_1,\ldots,t_n)\in q$, a set of variables
  $V\subseteq\atomvars{F}$, and a database $\db$, let
  $I_V\defeq\{i\in[n]\mid t_i\in V\}.$
  Then, we write $\restrict{R^{\db}}{V}$ for the projection of $R^{\db}$ to the
  positions in $I_V$. If $V=\emptyset$, this
  projection is $\{()\}$ when $R^{\db}\neq\emptyset$, and $\emptyset$
  otherwise.

\smallskip
\begin{definition}[$N$-unguarded-bounded instance]\label{def:Nguarded}
Let $\db$ be a database and $q\in\sjfuabcq$. Let $N\in\mathbb{N}$.
The pair $(\db,q)$ is called \emph{$N$-unguarded-bounded} if, for each $R\in q$,
\[
|\restrict{R^{\db}}{V}| \leq N,
\]
where $V\defeq \atomvars{R}\cap \rqueryvars{q}$.
\end{definition}

\smallskip
\begin{definition}[Guarded Reduction]\label{def:faith}
A many-one reduction $f\colon \DB\times \sjfuabcq \to \DB\times \sjfuabcq \cup \{\true,\false\}$ is \emph{guarded} if for every $(\db,q)\in \DB\times \sjfuabcq$, assuming $(\db',q')= f(\db,q) \notin \{\true,\false\}$,
the following statements hold:
\begin{enumerate}
\item\label{it:inv:1} If $(\db,q)$ is $N$-unguarded-bounded, then $(\db',q')$ is $N$-unguarded-bounded.
\item\label{it:inv:2} $\adom{\db'}\subseteq \adom{\db}$.
\item\label{it:sizeinv} 
\(
\restrict{\db'}{\sch(q)\cap \sch(q')} \subseteq \restrict{\db}{\sch(q)\cap \sch(q')}.
\) 
\item\label{it:inv:const} If $F\in \modec{q}$, and either $\keyvars{F}\in \initqueryvars{q}$ or $\atomvars{F}= \emptyset$, then $F\in \modec{(q')}$.
\item\label{it:inv:guarded} $\nrqueryvars{q}=\nrqueryvars{q'}$.
\item\label{it:inv:guarding} $\initqueryvars{q}=\initqueryvars{q'}$.
\item\label{it:dummy} $\queryvars{q'}\subseteq \queryvars{q}$.
\item\label{it:dummy2} $\size{q'}\leq \size{q}$. 

\end{enumerate}
\end{definition}

\begin{proposition}\label{prop:dbrestriction}
Every reduction mapping $(\db,q)$ to $(\db',q)$ with $\db'\subseteq\db$ is guarded.
\end{proposition}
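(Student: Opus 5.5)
The plan is to check the eight items of \Cref{def:faith} one by one for a reduction $(\db,q)\mapsto(\db',q)$ with $\db'\subseteq\db$. The key observation is that the query does not change. Every item that depends only on the query therefore holds trivially with equality. Items \ref{it:inv:const}, \ref{it:inv:guarded}, \ref{it:inv:guarding}, \ref{it:dummy} and \ref{it:dummy2} fall into this group. Indeed, $\modec{q}$, $\nrqueryvars{q}$, $\initqueryvars{q}$, $\queryvars{q}$ and $\size{q}$ are all functions of $q$ alone. In particular, the partition into $\SFD{q}$ and $\WFD{q}$, the attack-propagation graph $\igraph{q}$, and the source SCCs of $\FD{q}$ are all unchanged.

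The remaining items concern the database, and each follows from $\db'\subseteq\db$ by a one-line monotonicity argument.
\begin{itemize}
\item For \Cref{it:inv:2}, every constant occurring in a fact of $\db'$ also occurs in that fact as a member of $\db$, so $\adom{\db'}\subseteq\adom{\db}$.
\item For \Cref{it:sizeinv}, we have $\sch(q)\cap\sch(q')=\sch(q)$, and restriction to a set of relation names is monotone under inclusion. Hence $\restrict{\db'}{\sch(q)}\subseteq\restrict{\db}{\sch(q)}$.
\item For \Cref{it:inv:1}, fix $R\in q$ and set $V\defeq\atomvars{R}\cap\rqueryvars{q}$. This set $V$ is the same for $q'=q$. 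From $R^{\db'}\subseteq R^{\db}$ we get $\restrict{R^{\db'}}{V}\subseteq\restrict{R^{\db}}{V}$. This holds also in the degenerate case $V=\emptyset$: there the projection is $\{()\}$ or $\emptyset$, and non-emptiness of $R^{\db'}$ implies non-emptiness of $R^{\db}$. So the bound $N$ transfers.
\end{itemize}

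There is no genuine obstacle here. The only point needing care is the convention for projections onto the empty set of positions in \Cref{def:Nguarded}, and the monotonicity argument above already handles it.
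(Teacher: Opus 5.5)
Your proof is correct and takes the same route as the paper. Because the query is unchanged, every query-dependent condition holds with equality. The database-dependent conditions (active domain, restriction to relation names, and projection bounds for $N$-unguarded-boundedness) follow from monotonicity under $\db'\subseteq\db$; your item-by-item check, including the empty-projection convention, just spells out the paper's one-line argument.
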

\begin{proof}
The query is unchanged, and deleting facts cannot increase the active domain or the cardinality of any relation projection. Thus all conditions of guardedness hold.
\end{proof}

By slight abuse of terminology, when we state preservation properties for $\dbsaturate$, we mean the reduction $(\db,q)\mapsto (\dbsaturate(\db,q),q)$.
The following lemma is an immediate consequence of \Cref{lem:dbsat}.

\smallskip
\begin{lemma}\label{lem:dbsatinv}
$\dbsaturate$ is a guarded reduction.
\end{lemma}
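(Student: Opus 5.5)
The proof is a direct check of the eight items in \Cref{def:faith} for the map $(\db,q)\mapsto(\dbsaturate(\db,q),q)$. The query does not change, so the pruned database can be compared with the original through \Cref{lem:dbsat}. The key fact is that $\dbsaturate$ only removes blocks. By \Cref{lem:dbsat}, whenever the output is a database $\db'$, we have $\db'\subseteq\db$. If the output is $\true$ or $\false$, there is nothing to verify, because guardedness only constrains the outputs that are pairs.

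Having established $\db'\subseteq\db$ with the query unchanged, the reduction has exactly the form covered by \Cref{prop:dbrestriction}, and guardedness follows immediately. For the reader's benefit I would still list why each condition holds.

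\begin{itemize}
\item Conditions \ref{it:inv:const}, \ref{it:inv:guarded}, \ref{it:inv:guarding}, \ref{it:dummy}, and \ref{it:dummy2} depend only on the query, and $q'=q$. So $\modec{q}$, $\nrqueryvars{q}$, $\initqueryvars{q}$, $\queryvars{q}$, and $\size{q}$ are all unchanged.
\item Condition \ref{it:inv:2} holds because a subset of a database cannot use constants outside the original active domain.
\item Condition \ref{it:sizeinv} holds because $\sch(q)\cap\sch(q')=\sch(q)$, and restricting both sides to this set preserves the inclusion $\db'\subseteq\db$.
\item Condition \ref{it:inv:1} holds because, for each $R\in q$, the set $V=\atomvars{R}\cap\rqueryvars{q}$ is the same before and after. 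Projection is monotone, so $\restrict{R^{\db'}}{V}\subseteq\restrict{R^{\db}}{V}$, and the bound $N$ carries over.
\end{itemize}

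One edge case needs care: the convention for $V=\emptyset$. If $R^{\db'}$ becomes empty, its projection is $\emptyset$, which is still bounded by $N$. In any case this situation cannot occur, because \Cref{alg:dbsaturate} returns $\false$ instead of a database with an empty relation.

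I see no real obstacle. The only step that needs thought is confirming that $\dbsaturate$ never adds facts, which is item~2 of \Cref{lem:dbsat}.
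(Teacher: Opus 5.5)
Your proposal is correct and matches the paper's argument. The paper treats the lemma as an immediate consequence of item~2 of \Cref{lem:dbsat} ($\db'\subseteq\db$ with the query unchanged), which puts the reduction under \Cref{prop:dbrestriction}. Your item-by-item check just unpacks that proposition.
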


The following proposition shows a useful property.  
\begin{proposition}\label{prop:compose}
 Guarded reductions are  closed under composition, provided that newly
introduced relation names are fresh.
\end{proposition}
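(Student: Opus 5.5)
The plan is a direct verification of each item of \Cref{def:faith} for the composite $g\circ f$ of two guarded reductions $f$ and $g$. Let $(\db_1,q_1)=f(\db,q)$ and $(\db_2,q_2)=g(\db_1,q_1)$. If either step returns $\true$ or $\false$, the composite does too, since reductions are extended to be the identity on $\true$ and $\false$, and there is nothing to check.

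Most items are transitive and need only be chained. Item~\ref{it:inv:1} composes because $N$-unguarded-boundedness is passed from $(\db,q)$ to $(\db_1,q_1)$ and then to $(\db_2,q_2)$. The same chaining gives \Cref{it:inv:2} ($\adom{\db_2}\subseteq\adom{\db_1}\subseteq\adom{\db}$), \Cref{it:dummy} (inclusion of variable sets), and \Cref{it:dummy2} ($\size{q_2}\le\size{q_1}\le\size{q}$). Items~\ref{it:inv:guarded} and~\ref{it:inv:guarding} are equalities of sets, so they chain as well. For \Cref{it:inv:const}, take $F\in\modec{q}$ with $\keyvars{F}\in\initqueryvars{q}$ or $\atomvars{F}=\emptyset$. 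Then $F\in\modec{(q_1)}$. The side condition is still satisfied relative to $q_1$, because $\initqueryvars{q_1}=\initqueryvars{q}$ by \Cref{it:inv:guarding}, and $\atomvars{F}$ does not depend on the query. Hence $F\in\modec{(q_2)}$.

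The one item that needs care is \Cref{it:sizeinv}, and this is where the freshness hypothesis enters. Take a relation name $R\in\sch(q)\cap\sch(q_2)$. We need $R^{\db_2}\subseteq R^{\db}$. Suppose first that $R\in\sch(q_1)$. Then applying \Cref{it:sizeinv} to $g$ gives $R^{\db_2}\subseteq R^{\db_1}$, and applying it to $f$ gives $R^{\db_1}\subseteq R^{\db}$. Suppose instead that $R\notin\sch(q_1)$, so $f$ removed $R$ and $g$ reintroduced it. The proviso that newly introduced relation names are fresh (with respect to the whole computation history) rules out this case: a name introduced by $g$ cannot coincide with a name from $q$. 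Hence only the first case occurs, and \Cref{it:sizeinv} holds for the composite.

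The main obstacle is therefore just stating the freshness convention precisely enough to exclude this reintroduction case. One also has to observe that the restriction $\restrict{\db}{\sch(q)\cap\sch(q')}$ is determined relation by relation, so it suffices to argue about each $R$ separately. With these two points settled, the composite satisfies every item of \Cref{def:faith} and is guarded.
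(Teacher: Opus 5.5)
Your proof is correct and matches the paper's argument: every item chains transitively, \Cref{it:inv:const} uses \Cref{it:inv:guarding} to keep the side condition valid for the intermediate query, and freshness excludes the only problematic case for \Cref{it:sizeinv}, namely a relation name dropped by the first reduction and reintroduced by the second. The paper compresses this into the remark that freshness gives relation symbols ``contiguous lifetimes''; your case split on whether $R\in\sch(q_1)$ is the explicit version of that remark.
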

 \begin{proof}
All conditions in the definition of guardedness compose. 
\Cref{it:inv:const} additionally relies on \Cref{it:inv:guarding}, while
\Cref{it:sizeinv} requires that introduced relation names are fresh. Informally, freshness guarantees that relation symbols have contiguous lifetimes under guarded reductions.
\end{proof}

\subsection{Leaf Pruning}
A variable $y\in 
\leafqueryvars{q}$ is called \emph{prunable} if  $y\in \keyqueryvars{q}$.
The operator $\leafprune_y$ is defined on pairs $(\db,q)$ such that
$y$ is a prunable variable of $q$.
For such a pair, given $q_0\defeq \{F\in {q}   \mid y= \keyvars{F}\}$ and $\db_0 \defeq \db \setminus \restrict{\db}{q_0}$,
 we define
\begin{equation}
\leafprune_y(\db,q)\defeq (\db',q'),
\end{equation}
where
\begin{equation}\label{eq:leafprune-q0}
q'\defeq \, q\setminus q_0 \text{ and }\db' \defeq \, \dbsaturate(\db_0,q').
\end{equation}
In particular, we have that $y\notin \keyqueryvars{q'}$. Furthermore, $q'$ inherits variable-repetition-freeness from $q$.

\smallskip
\begin{lemma}\label{lem:little0}
Let $q\in \sjfuabcq$ be saturated and normal. 
Let $y$ be a prunable variable.
Let $ u,v \in \queryvars{q}\cup \{\emp\}$ be such that $u\neq y$.
If $(\db',q')=\leafprune_y(\db,q)\notin\{\true,\false\}$,
then
\begin{itemize}
\item $\fd{u}{v}\in \FD{q}$ if and only if $\fd{u}{v}\in \FD{q'}$,
\item $\fd{u}{v}\in \WFD{q}$ if and only if $\fd{u}{v}\in \WFD{q'}$,
\item $\fd{u}{v}\in \SFD{q}$ if and only if $\fd{u}{v}\in \SFD{q'}$,
\item $\fd{u}{v}\in \igraph{q}$ if and only if $\fd{u}{v}\in \igraph{q'}$.
\end{itemize}
\end{lemma}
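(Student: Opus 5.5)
The plan is to compare $\FD{q}$ and $\FD{q'}$ edge by edge, then transfer the comparison to $\SFD{}$, $\WFD{}$ and finally $\igraph{}$. Write $q_0=\{F\in q\mid \keyvars{F}=y\}$, so that $q'=q\setminus q_0$ and $\FD{q'}=\FD{q}\setminus\{\fd{y}{w}\mid w\in \queryvars{q}\}$.

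\textbf{Item 1.} Since $u\neq y$, every edge $\fd{u}{v}$ is generated by an atom with key $u$. Such atoms are untouched by $\leafprune_y$, so the first item is immediate. One caveat: we must confirm that $u,v$ remain variables of $q'$ or that the statement is read on the common vertex set. This is harmless, because $y$ still occurs in a non-key position: being a leaf, it has an incoming edge.

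\textbf{Item 3 (consistent edges).} Membership in $\SFD{}$ requires an atom $F$ generating $\fd{u}{v}$ together with $G\neq F$ such that $\fd{u}{v}\in\FDclosure{G}$. In our setting $\FDclosure{G}$ is just the edge set reachable inside $\FD{G}$, so it reduces to a single edge, possibly through $\emp$. Since $\keyvars{F}=u\neq y$, $F$ survives.

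The direction $\Leftarrow$ is trivial. For $\Rightarrow$, the only danger is that the witness $G$ lies in $q_0$. Then $\keyvars{G}=y$ with $\fd{y}{v}\in\FD{G}$, and $\fd{u}{v}\in\FDclosure{G}$ forces $u=y$ (excluded) or $u=v$ (a dummy edge, which stays in $\SFD{}$ by definition). Hence no real loss occurs.

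If instead $\FDclosure{}$ is meant over a larger subquery, as in the saturation condition, I would use \Cref{lem:verysimple2}. Any path through $y$, say $\fd{u_0}{y},\fd{y}{w}$, can be shortcut because $\fd{u_0}{w}\in\SFD{q}$ is generated by two atoms with key $u_0\neq y$, given binary-saturation and normality. Iterating this shows that reachability between vertices other than $y$ is preserved after deleting $y$'s outgoing edges. This also gives that $q'$ is still saturated.

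\textbf{Item 2.} This follows from items 1 and 3, since $\WFD{}=\FD{}\setminus\SFD{}$.

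\textbf{Item 4 (attack-propagation edges).} An edge $\fd{u}{v}$ is in $\igraph{}$ if it is in $\WFD{}$, or if it is in $\SFD{}$ and there is $\fd{z}{u}\in\WFD{}$ with $\fd{z}{v}\notin\SFD{}$. The first clause transfers by item 2. For the second clause we apply items 2 and 3 to the edges $\fd{z}{u}$ and $\fd{z}{v}$.

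\textbf{Main obstacle.} The difficulty is the case $z=y$, which items 2 and 3 do not cover. This requires $\fd{y}{u}\in\FD{q}$, so $y$ has an outgoing edge in $\FD{q}$. Since $y$ is a leaf of $\igraph{q}$, every such edge lies in $\SFD{q}\setminus\igraph{q}$. In particular $\fd{y}{u}\notin\WFD{q}$, so $z=y$ can never supply the witness in $q$.

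In $q'$, $y$ is no longer a key variable, so it has no outgoing edges and cannot supply a witness there either. The witness sets for $z\neq y$ coincide in $q$ and $q'$, so item 4 holds. I expect the only real care to be needed in item 3's $\Rightarrow$ direction, where a consistency witness might pass through $y$; \Cref{lem:verysimple2} handles it.
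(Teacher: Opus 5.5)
Your proof is correct and follows essentially the same route as the paper's. Items~1 and~3 hold because every generating atom and every single-atom closure witness $G$ has key $u\neq y$ or a constant key, so it survives in $q'$. Item~2 follows from items~1 and~3, and item~4 follows from items~2 and~3 once you note that the intermediate witness $z$ cannot be $y$: it is a leaf of $\igraph{q}$ and is no longer a key variable of $q'$.

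Your fallback for a ``larger subquery'' reading of $\FDclosure{\cdot}$ is not needed, since $\SFD{q}$ only uses $\FDclosure{G}$ for a single atom $G$. That fallback also relies on binary-saturation, which is not among this lemma's hypotheses.
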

\begin{proof}
The case where $u=v$ is clear since both $q$ and $q'$ are variable-repetition-free. Thus, we assume that $u\neq v$.
Then, $\fd{u}{v}\in \FD{F}$ implies $F\in q'$. 
Furthermore, the assumption is that $ u,v \in \queryvars{q}\cup \{\emp\}$ are such that $u\neq y$.
We then obtain that $\fd{u}{v}\in \FD{q}$ implies $\fd{u}{v}\in \FD{q'}$, and $\fd{u}{v}\in \SFD{q}$ implies $\fd{u}{v}\in \SFD{q'}$.
The converse implications hold trivially. The second equivalence is then a consequence of the first and the third one.

The last equivalence follows from the second and third. For instance, if $\fd{u}{v}\in \igraph{q}$ due to $\fd{u}{v}\in \SFD{q}$, $\fd{w}{u}\in \WFD{q}$, and $\fd{w}{v}\notin \SFD{q}$, then $w\neq y$ (as $w\notin \leafqueryvars{q}$), hence 
$\fd{u}{v}\in \SFD{q'}$, $\fd{w}{u}\in \WFD{q'}$, and $\fd{w}{v}\notin \SFD{q'}$. The remaining cases can be checked analogously.
\end{proof}

\smallskip
\begin{lemma}\label{lem:root1}
Let $q\in \sjfuabcq$ be saturated and normal. 
Let $y$ be a prunable variable. 
If $(\db',q')=\leafprune_y(\db,q)\notin\{\true,\false\}$,
then $\initqueryvars{q}=\initqueryvars{q'}$.
\end{lemma}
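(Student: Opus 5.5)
The plan is to show that $\leafprune_y$ changes neither of the two ingredients of $\initqueryvars{\cdot}=\rootqueryvars{\cdot}\cap\srcqueryvars{\cdot}$. Write $q_0=\{F\in q\mid \keyvars{F}=y\}$, so that $q'=q\setminus q_0$ and $\FD{q'}=\FD{q}\setminus\{\fd{y}{w}\mid \fd{y}{w}\in\FD{q}\}$. Since $q'$ is defined independently of $\db$, the argument is purely about queries.

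\textbf{Step 1: no variable disappears.} First I would check that $\queryvars{q'}=\queryvars{q}$. The variable $y$ survives. Indeed, $y\in\leafqueryvars{q}$, so $y$ has an incoming edge $\fd{u}{y}\in\igraph{q}$. This edge is generated by an atom with key $u\neq y$, because $q$ is variable-repetition-free. That atom remains in $q'$.

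Now let $w\in\notkeyvars{F}$ for some $F\in q_0$. Then $\fd{y}{w}\in\FD{q}$, and together with $\fd{u}{y}\in\FD{q}$ and $y\in\leafqueryvars{q}$, \Cref{lem:verysimple2} gives $\fd{u}{w}\in\SFD{q}$. Since $u\neq w$ (acyclicity of $\igraph{q}$ and repetition-freeness), this edge is generated by some atom with key $u\neq y$. That atom lies in $q'$, so $w\in\queryvars{q'}$.

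\textbf{Step 2: roots are preserved.} By \Cref{lem:little0}, every edge $\fd{u}{v}$ with $u\neq y$ lies in $\igraph{q}$ iff it lies in $\igraph{q'}$. Moreover, $y$ has no outgoing edge in $\igraph{q}$, being a leaf, and has none in $\FD{q'}$ at all. Hence $\igraph{q'}=\igraph{q}$ on the common vertex set, and therefore $\rootqueryvars{q}=\rootqueryvars{q'}$. Note also that $y$ is not a root in either query, since it keeps its incoming edge.

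\textbf{Step 3: source-SCC membership of roots is preserved.} This is the main obstacle, because deleting the edges $\fd{y}{w}$ does change reachability from $y$ itself. The key claim is a shortcut property: for every vertex $x\neq y$ and every $z$, $z\in\reach{x}{\FD{q}}$ iff $z\in\reach{x}{\FD{q'}}$.

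The direction from $q'$ to $q$ is immediate, since $\FD{q'}\subseteq\FD{q}$. For the other direction, take a path in $\FD{q}$ from $x$ to $z$ and consider any internal step $a\to y\to w$. Here $a\neq y$, because there are no self-loops. \Cref{lem:verysimple2} yields $\fd{a}{w}\in\SFD{q}$ with $a\neq w$, and this edge is generated by an atom whose key is $a\neq y$, so the edge survives in $\FD{q'}$. Replacing each such step by $\fd{a}{w}$ gives a path in $\FD{q'}$.

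\textbf{Step 4: conclude.} Let $x\in\rootqueryvars{q}$, so $x\neq y$. Recall that $x$ lies in a source SCC iff every vertex that reaches $x$ is reachable from $x$. I would check this condition in both directions using Step 3.
\begin{itemize}
\item Predecessors $v\neq y$ of $x$: by Step 3, $v$ reaches $x$ in $q$ iff it does in $q'$, and likewise $x$ reaches $v$ in $q$ iff it does in $q'$.
\item The vertex $v=y$, when $y$ reaches $x$ in $q$: pick any predecessor $a$ of $y$. Then $a$ reaches $x$ in $q$ via $y$. If $x$ is source in $q$, $x$ reaches $a$, hence reaches $y$, and this transfers to $q'$ by Step 3. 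Conversely, by Step 3 $a$ also reaches $x$ in $q'$. If $x$ is source in $q'$, then $x$ reaches $a$ and thus $y$ in $q'$, hence in $q$.
\item The vertex $v=y$, when $y$ reaches $x$ in $q'$: this implies the same in $q$, so the previous case applies.
\end{itemize}
Hence $x\in\srcqueryvars{q}$ iff $x\in\srcqueryvars{q'}$. Combined with Step 2, this gives $\initqueryvars{q}=\initqueryvars{q'}$.
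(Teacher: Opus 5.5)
Your proof is correct and follows essentially the same route as the paper. Roots are preserved edge-for-edge via \Cref{lem:little0}, source-SCC membership is preserved by shortcutting $p\to y\to v$ to $p\to v$ with \Cref{lem:verysimple} and \Cref{lem:verysimple2}, and your Step~1 makes explicit the fact $\queryvars{q'}=\queryvars{q}$ that the paper uses tacitly.

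One harmless slip: your claims $u\neq w$ in Step~1 and $a\neq w$ in Step~3 can fail. For instance, take $q=\{R(\underline{u},y),S(\underline{y},u),S'(\underline{y},u)\}$, which is saturated and normal with $y$ prunable. There $w=u$ still survives trivially, and in Step~3 you can take the path to be simple, or just delete the detour $u\to y\to u$.
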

\begin{proof}
 We need to prove $\rootqueryvars{q}\cap \srcqueryvars{q} =\rootqueryvars{q'} \cap \srcqueryvars{q'}$.

We first prove $\rootqueryvars{q}=\rootqueryvars{q'}$. Assuming $v\notin \rootqueryvars{q}$, there exists an edge of the form $\fd{u}{v}$ in $\igraph{q}$. Since $y\in \leafqueryvars{q}$, we obtain $u\neq y$, hence $\fd{u}{v}$ belongs to $\igraph{q'}$ by \Cref{lem:little0}, i.e., $v\notin \rootqueryvars{q'}$. Conversely, assuming $v\notin \rootqueryvars{q'}$, there exists an edge of the form $\fd{u}{v}$ in $\igraph{q'}$. By construction $u\neq y$, hence $\fd{u}{v}$ belongs to $\igraph{q}$ by \Cref{lem:little0}, i.e., $v\notin \rootqueryvars{q'}$.

Then, we show $\srcqueryvars{q} \cap \rootqueryvars{q} \subseteq \srcqueryvars{q'}$.  
Assuming $u\in \srcqueryvars{q}\cap \rootqueryvars{q}$, suppose toward contradiction that  $u\notin \srcqueryvars{q'}$. 
Then we find an edge $\fd{u_0}{u} \in \FD{q'}$ such that $u_0 \notin \reach{u}{ \FD{q'}}$. Clearly, $\fd{u_0}{u} \in \FD{q}$. Since $u \in \srcqueryvars{q}$, it holds that $u_0 \in \reach{u}{ \FD{q}}$. In particular, there exists a simple directed path $D$ from $u$ to $u_0$ in $\FD{q}$. Since no such path exists in $\FD{q'}$, this path contains an edge of the form $\fd{y}{v}$. 
Moreover, as $y\neq u$ due to $y\in \leafqueryvars{q}$ and $u\in \rootqueryvars{q}$ (while $\leafqueryvars{q}\cap \rootqueryvars{q}=\emptyset$), this edge is preceded in $D$ by an edge of the form $\fd{p}{y}$.
Now, since $y\in \leafqueryvars{q}$, we obtain $\fd{y}{v}\in \SFD{q}\setminus \igraph{q}$. 
Since $y\neq v$, it follows by \Cref{lem:verysimple} that $\fd{p}{v}\in \SFD{q}$. Thus, by \Cref{lem:little0}, the path $D'$ obtained from $D$ by replacing $\fd{p}{y}$ and $\fd{y}{v}$ with the shortcut edge 
$\fd{p}{v}$ is a path from $u$ to $u_0$ in $\FD{q'}$. This raises a contradiction, by which we conclude that $u\in \srcqueryvars{q'}$.

Finally, we prove $\srcqueryvars{q'} \subseteq \srcqueryvars{q}$.
Assuming $u\in \srcqueryvars{q'}$, suppose toward contradiction that $u\notin \srcqueryvars{q}$. Then we find an edge $\fd{u_0}{u} \in \FD{q}$ such that $u_0 \notin \reach{u}{ \FD{q}}$.
If $\fd{u_0}{u} \in \FD{q'}$, then by assumption $u_0 \in \reach{u}{ \FD{q'}}$, which implies $u_0 \in \reach{u}{ \FD{q}}$ by the construction of $q'$, a contradiction. Thus, suppose $\fd{u_0}{u} \notin \FD{q'}$. This entails $u_0=y$.
Therefore,  by $y\in \leafqueryvars{q}$ there exists an edge of the form $\fd{w}{u_0}\in \igraph{q}$. 
Since $\fd{w}{u_0}\in \FD{q}$ by definition, it follows by \Cref{lem:verysimple2} that $\fd{w}{u}\in \SFD{q}$. Now, $w\neq y$ as $q$ is variable-repetition-free, so either $w=u$ or $\fd{w}{u}\in \FD{q'}$ obtains. But then, by assumption, 
$w \in \reach{u}{ \FD{q'}}$, which implies $w \in \reach{u}{ \FD{q}}$. Composing with the edge $\fd{w}{u_0}$, we obtain $u_0 \in \reach{u}{ \FD{q}}$, a contradiction. Thus, we conclude that $u\in \srcqueryvars{q}$.
\end{proof}

\smallskip

\begin{lemma}\label{lem:nr}
Let $q\in \sjfuabcq$ be saturated and normal. 
Let  $y$ be a prunable variable. 
If $(\db',q')=\leafprune_y(\db,q)\notin\{\true,\false\}$,
then
 $\nrqueryvars{q}=\nrqueryvars{q'}$.
\end{lemma}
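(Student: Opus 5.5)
We prove the two inclusions separately, using \Cref{lem:root1} to obtain $\initqueryvars{q}=\initqueryvars{q'}$ and \Cref{lem:little0} to compare consistent edges whose source differs from $y$. Recall that $\nrqueryvars{q}$ consists of the variables $v$ with $\fd{x}{v}\in\SFD{q}$ for some $x\in\initqueryvars{q}$. Since $y\in\leafqueryvars{q}$ has an incoming edge in $\igraph{q}$, we have $y\notin\rootqueryvars{q}$. Hence $y\notin\initqueryvars{q}$, and every guard $x$ satisfies $x\neq y$.

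For $\nrqueryvars{q'}\subseteq\nrqueryvars{q}$, we take $v\in\queryvars{q'}$ and a guard $x\in\initqueryvars{q'}=\initqueryvars{q}$ with $\fd{x}{v}\in\SFD{q'}$. If $x=v$, then the dummy edge gives $\fd{x}{x}\in\SFD{q}$. Otherwise $x\neq y$, so the third item of \Cref{lem:little0} yields $\fd{x}{v}\in\SFD{q}$. In either case $v\in\nrqueryvars{q}$.

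For $\nrqueryvars{q}\subseteq\nrqueryvars{q'}$, we take $v\in\nrqueryvars{q}$ with guard $x$. By the same item of \Cref{lem:little0}, $\fd{x}{v}\in\SFD{q'}$, provided $v$ is still a variable of $q'$. The main obstacle is exactly this: the pruning must not delete a guarded variable, and in particular must not delete $y$ itself. We only remove the atoms $q_0$ with key $y$, so the only variables that can disappear are $y$ and non-key variables that occur solely in atoms of $q_0$. We argue that this does not happen.
\begin{itemize}
\item The variable $y$ survives. It has an incoming edge $\fd{w}{y}\in\igraph{q}\subseteq\FD{q}$, generated by an atom $G$ with $\keyvars{G}=w\neq y$, where $w\neq y$ by variable-repetition-freeness. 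Then $G\notin q_0$, so $y\in\queryvars{q'}$.
\item No other variable is lost. Let $z\in\notkeyvars{F}$ for some $F\in q_0$. Because $y$ is a leaf, $\fd{y}{z}\notin\igraph{q}$, so $\fd{y}{z}\in\SFD{q}\setminus\igraph{q}$. By \Cref{lem:verysimple2} applied to $\fd{w}{y},\fd{y}{z}$, we get $\fd{w}{z}\in\SFD{q}$ with $w\neq y$. If $w=z$, then $z$ already occurs in $G\in q'$. Otherwise $\fd{w}{z}\in\FD{q'}$ by \Cref{lem:little0}, so again $z\in\queryvars{q'}$.
\end{itemize}
Consequently $\queryvars{q'}=\queryvars{q}$, and every edge $\fd{x}{v}$ with $x$ a guard transfers from $\SFD{q}$ to $\SFD{q'}$. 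This gives $v\in\nrqueryvars{q'}$ and completes the equality. Note also that $\dbsaturate$ only changes the database, so it plays no role in this argument.
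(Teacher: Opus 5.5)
Your proof is correct and follows essentially the same route as the paper. Both inclusions come from the third item of \Cref{lem:little0}, which applies because a guard is a root and hence differs from the leaf $y$, together with $\initqueryvars{q}=\initqueryvars{q'}$ from \Cref{lem:root1}. Your extra verification that $\queryvars{q'}=\queryvars{q}$ is sound but more than needed: survival of $v$ already follows because the atom generating $\fd{x}{v}$ has key $x\neq y$ and is therefore not removed, and a self-guarding $x$ survives because it lies in $\initqueryvars{q'}$.
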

\begin{proof}
Let $v\in\nrqueryvars{q}$. Then we find $S(\underline{u},v,\uk)\in q$ such that $u\in \initqueryvars{q}$ and $\fd{u}{v}\in \SFD{q}$. 
Since $u\in \rootqueryvars{q}$ and $y\in \leafqueryvars{q}$, we obtain $y\neq u$. Thus, by \Cref{lem:little0} we have that $\fd{u}{v}\in \SFD{q'}$, and by construction we obtain $S(\underline{u},v,\uk)\in q'$.
Moreover, by \Cref{lem:root1}, we obtain $u\in \initqueryvars{q'}$.

Assume then that $v\in\nrqueryvars{q'}$. Then we find  $S(\underline{u},v,\uk)\in q'$ such that $u\in \initqueryvars{q'}$ and $\fd{u}{v}\in \SFD{q'}$.
Now $y\neq u$ by the construction of $q'$, hence $\fd{u}{v}\in \SFD{q}$ by \Cref{lem:little0}. Moreover, $S(\underline{u},v,\uk)\in q$ is clear, and 
$u\in \initqueryvars{q}$ follows by \Cref{lem:root1}.
\end{proof}

\smallskip
\begin{lemma}\label{lem:vequery1}
Let $q\in \sjfuabcq$ be a saturated and normal.
Let $y$ be a prunable variable. 
If $\leafprune_y(\db,q)= (\db',q')$, then the following statements hold:
\begin{enumerate}[(a)]
\item\label{lem:veqtwo11} $q'\in\sjfuabcq$; 
\item\label{lem:veqone11} $q'$ is saturated;
\item\label{lem:veqoneprime11} $q'$ is normal.
\end{enumerate}
\end{lemma}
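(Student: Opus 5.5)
The plan is to verify the three properties of $q'=q\setminus q_0$ in turn, where $q_0$ is the set of atoms $F$ with $\keyvars{F}=y$. The main tools are \Cref{lem:little0}, which says that every edge of $\FD{q}$, $\SFD{q}$, $\WFD{q}$, and $\igraph{q}$ not starting at $y$ keeps its status in $q'$, and \Cref{lem:verysimple2}. Since $y\in\leafqueryvars{q}$ has no outgoing edge in $\igraph{q}$, every edge $\fd{y}{w}$ of $\FD{q}$ lies in $\SFD{q}\setminus\igraph{q}$. Moreover, for each incoming edge $\fd{u}{y}\in\FD{q}$ we get $\fd{u}{w}\in\SFD{q}$ by \Cref{lem:verysimple2}, and this edge persists in $q'$.

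\textbf{(a)} Removing atoms preserves self-join-freeness, unary keys and variable-repetition-freeness. For acyclicity of the attack graph, I show that $G\attacks{q'}H$ implies $G\attacks{q}H$ for $G,H\in q'$. Suppose a path in $\gaifman{q'}$ witnesses the attack. It is also a path in $\gaifman{q}$, so it suffices to show $\keyclosure{G}{q}\subseteq\keyclosure{G}{q'}$. The only closure steps lost are those along edges $\fd{y}{w}$ generated by atoms of $q_0$. If $y\in\keyclosure{G}{q}$, then $y$ entered the closure via some edge $\fd{u}{y}$ from a variable (or $\emp$) already in the closure. By the remark above, $\fd{u}{w}\in\SFD{q}$, so $q'$ contains an atom generating $\fd{u}{w}$ directly. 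Using the binary-saturation style witnesses, which are atoms other than $G$ because $\SFD{}$ edges have two generators, $w$ is still reached in $q'$. A small care point is that one of the generators might be $G$ itself; but $\SFD{}$ requires two distinct generating atoms, so at least one of them lies in $q'\setminus\{G\}$.

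\textbf{(b)} For saturation, take $F\in q'$ with $\fd{u}{v}\in\FD{F}$, $\fd{v}{w}\in\SFD{q'}$ and $\fd{u}{w}\in\FDclosure{q'\setminus\{F\}}$. Since $q'\subseteq q$, and using \Cref{lem:little0} together with the fact that $y$ is not a key in $q'$, all three hypotheses transfer to $q$. Saturation of $q$ then gives $\fd{u}{w}\in\SFD{q}$. Here $u\neq y$, because $y$ is not a key of any atom of $q'$, so \Cref{lem:little0} returns $\fd{u}{w}\in\SFD{q'}$.

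\textbf{(c)} For normality, items \ref{it:normal1}, \ref{it:normal2} and \ref{it:normal3} concern only atoms of $q'$ and edges whose source is not $y$ (for item \ref{it:normal2}, the sources are non-key variables, and for item \ref{it:normal1} the source is $\keyvars{F}\neq y$). Each therefore transfers verbatim via \Cref{lem:little0}. Item \ref{it:normal4} asks that $\SFD{q'}$ have no 2-cycles. Edges starting at $y$ only disappear, and all other edges keep their status, so any 2-cycle in $q'$ would already be one in $q$.

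I expect the main obstacle to be (a): correctly arguing that the key closures in $q'$ do not shrink. This needs the shortcut edges $\fd{u}{w}$ supplied by \Cref{lem:verysimple2}, together with a check that these shortcuts are generated by atoms other than $G$.
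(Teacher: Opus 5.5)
Your proposal is correct and follows essentially the paper's proof. In (a) the paper likewise shows that dropping the atoms keyed by $y$ cannot shrink $\keyclosure{G}{q}$: it takes a shortest path in $\FD{q\setminus\{G\}}$ and observes that any edge out of $y$ would be short-cut by the consistent edge from \Cref{lem:verysimple}. Parts (b) and (c) are transferred through \Cref{lem:little0} exactly as you describe.

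Two small imprecisions, neither of which affects correctness:
\begin{itemize}
\item Binary-saturation is not a hypothesis of this lemma. The definition of $\SFD{q}$ already supplies the second atom, which may generate $\fd{\emp}{w}$ rather than $\fd{u}{w}$; either way it still puts $w$ in the closure.
\item For the normality items, the cleanest justification is $\SFD{q'}\subseteq\SFD{q}$. This is needed because a non-key source in item \ref{it:normal2} may be $y$ itself.
\end{itemize}
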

\begin{proof}
\textbf{(\cref{lem:veqtwo11})} 
Clearly, $q'$ remains variable-repetition-free.
For acyclicity of the attack graph, it suffices to prove that $F\attacks{q'} G$ implies $F\attacks{q} G$.

Suppose $F\attacks{q'} G$. Then  $\gaifman{q'}$ contains a (possibly empty and undirected) path $U$
between $\notkeyvars{F}$ and   $\atomvars{G}$ that is not separated by $\keyclosure{F}{q}$. Clearly, $U$ belongs also to 
$\gaifman{q}$. Assume toward contradiction that $F\not\attacks{q} G$. Then, some variable $v$ in $U$ 
belongs to $\keyclosure{F}{q}$. Writing $u=\keyvars{F}$, this means $\FD{q\setminus \{F\}}$ contains a (directed) path $D$ from $t$ to $v$, where $y\neq t\in \{\emp,u\} $; for this, recall that $y\notin \queryvars{q'}$.
We may assume $D$ is shortest among all such paths.

Since $D$ is not included in  $\FD{q'\setminus \{F\}}$, 
an edge of the form $\fd{y}{z}$ appears on it. Since $y\in \leafqueryvars{q}$, it holds that $\fd{y}{z}\in \SFD{q} \setminus \igraph{q}$. 
Since $t\neq y$, some edge $\fd{v}{y}$ precedes $y$  in $D$. Clearly $y\neq z$, so we obtain by \Cref{lem:verysimple} that $\fd{v}{z}\in \SFD{q} \subseteq \FD{q\setminus \{F\}}$, leading to a shortcut in $D$, a contradiction.
%
We conclude from the contradiction that $F\attacks{q} G$.

\textbf{(\cref{lem:veqone1})}
Suppose \( F \in q' \),  
     $\fd{u}{v}\in \FD{F}$,
      $\fd{v}{w}\in \SFD{q'}$, and
     $\fd{u}{w}\in \FDclosure{q'\setminus \{F\}}$. 
We need to prove that $\fd{u}{w}\in \SFD{q'}$.  

By \Cref{lem:little0}, $\fd{v}{w}\in \SFD{q}$, and clearly, $\fd{u}{w}\in \FDclosure{q\setminus \{F\}}$. 
Since $F\in q$, we obtain by saturation of $q$ that $\fd{u}{w}\in \SFD{q}$.
Since $u\in \keyqueryvars{q'}$, we have that $u\neq y$. Thus, by \Cref{lem:little0}, we obtain that $\fd{u}{w}\in \SFD{q'}$.

\textbf{(\cref{lem:veqoneprime11})} 
In what follows we consider an atom $F\in q'$. Assuming that there is a contradiction with the normality of $q'$, possibly witnessed by $F$, we construct a contradiction with the assumption that $q$ is normal.

(\cref{it:normal1}) Suppose toward contradiction that $|\notkeyvars{F}|>1$ and $\fd{p}{u} \in \SFD{q'}$, where $p= \keyvars{F}$
and $u\in \notkeyvars{F}$. Since $F\in q$ and, by \Cref{lem:little0}, $\fd{p}{u} \in \SFD{q}$, we obtain a contradiction with the normality of $q$.

(\cref{it:normal2}) Suppose toward contradiction that $\fd{u}{v} \in \SFD{q'}$, for some distinct $u,v\in \notkeyvars{F}$.  Then 
$\fd{u}{v} \in \FD{q'}$, and in particular $u\in \keyqueryvars{q'}$.  Since $y\notin \keyqueryvars{q'}$, we have 
$u\neq y$, hence \Cref{lem:little0} entails $\fd{u}{v} \in \SFD{q}$ in contradiction with the normality of $q$. 

(\cref{it:normal4}) If $\fd{u}{v},\fd{v}{u}\in \SFD{q'}$, for some distinct $u,v\in \queryvars{q'}$, then analogously to the previous case we obtain $\fd{u}{v},\fd{v}{u}\in \SFD{q}$ using \Cref{lem:little0}, a contradiction with the normality of $q$.

(\cref{it:normal3}) If $\fd{\emp}{z} \in \SFD{q'}$, for some $z\in \queryvars{q'}$, then by \Cref{lem:little0} we also obtain $\fd{\emp}{z} \in \SFD{q}$, contradicting the normality of $q$.
\end{proof}

The main lemma of this section follows.
\smallskip
\begin{lemma}[Leaf Pruning]\label{lem:vedb111}
Let $\mathcal{C}$ consist of all $(\db,q)\in \DB\times \sjfuabcq$, where $q$ is a saturated and normal query and $\db$ is a $q$-saturated database.
For each $(\db,q)\in \mathcal{C}$ and prunable $y\in \queryvars{q}$, $(\db,q)\mapsto \leafprune_y(\db,q)$ is a strict, guarded, equivalence-preserving, and  polynomial-time reduction from $\mathcal{C}$ to $\mathcal{C}\cup\{\true,\false\}$.
\end{lemma}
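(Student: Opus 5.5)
Most of the properties follow from lemmas already proved, so the plan is to dispatch those quickly and spend the effort on equivalence preservation. Strictness holds because $q_0$ is non-empty: $y$ is prunable, hence $y\in\keyqueryvars{q}$. Polynomial time follows from \Cref{lem:dbsat}. Membership of $(\db',q')$ in $\mathcal{C}\cup\{\true,\false\}$ follows from \Cref{lem:vequery1} together with \Cref{lem:dbsat}.

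For guardedness I would check the items of \Cref{def:faith} one at a time.
\begin{itemize}
\item Items \ref{it:inv:2}, \ref{it:sizeinv}, \ref{it:dummy}, \ref{it:dummy2}: immediate, since $q'\subseteq q$ and $\db'\subseteq\db_0\subseteq\db$.
\item Items \ref{it:inv:guarded} and \ref{it:inv:guarding}: these are \Cref{lem:nr} and \Cref{lem:root1}.
\item Item \ref{it:inv:1}: since $\rqueryvars{q'}=\rqueryvars{q}\cap\queryvars{q'}$ by \Cref{lem:nr}, every $R\in q'$ has the same projection set $V$ as in $q$, and its relation only shrinks.
\item Item \ref{it:inv:const}: let $F\in\modec{q}$ with $\keyvars{F}\in\initqueryvars{q}$ or $\atomvars{F}=\emptyset$. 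Then $\keyvars{F}\neq y$ (roots and leaves are disjoint), so $F\in q'$. Its edges stay consistent by \Cref{lem:little0}.
\end{itemize}

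For equivalence preservation it suffices, by \Cref{lem:dbsat}, to show $(\db,q)\in\cert$ iff $(\db_0,q')\in\cert$. The forward direction is routine: a repair $\rep'$ of $\db_0$ extends to a repair of $\db$ by choosing arbitrary facts from the blocks of $q_0$. The resulting repair satisfies $q$, and any witnessing valuation restricts to one for $q'$ inside $\rep'$.

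The backward direction is the main obstacle. Take a repair $\rep$ of $\db$ and a valuation $\theta$ with $\theta(q')\subseteq\rep\setminus\restrict{\rep}{q_0}$. I must show that each $F\in q_0$, say $F=S(\underline{y},w_1,\dots)$, satisfies $\theta(F)\in\rep$. Since $y$ is a leaf of $\igraph{q}$, every outgoing edge $\fd{y}{w}$ lies in $\SFD{q}\setminus\igraph{q}$. By normality item \ref{it:normal1}, $F$ is therefore binary, of the form $S(\underline{y},w)$.

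Now pick the incoming edge $\fd{u}{y}\in\igraph{q}$, generated by some atom $G\in q'$ with $y\in\notkeyvars{G}$. By \Cref{lem:verysimple2}, $\fd{u}{w}\in\SFD{q}$, and by binary-saturation-like reasoning this edge is generated by an atom $T(\underline{u},w,\uk)$ of $q'$. If $u=\emp$, normality item \ref{it:normal3} rules this case out, so $u$ is a variable.

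I then use collision consistency of $\db$, applied to the atom $G$ with $x_i=y$ and $x_j=u$ in the second bullet (the case $x_i=z$ not arising). It yields that $S(\underline{\theta(y)},b,\uk)\in\db$ implies $T(\underline{\theta(u)},b,\uk)\in\db$. Pairwise consistency guarantees that the $S$-block of $\theta(y)$ is non-empty, and functional consistency (with $\fd{y}{w}\in\SFD{q}$) makes it a single fact. Since $T$ is also functionally consistent and $\theta(T)\in\rep$, we get $b=\theta(w)$. Hence $\theta(F)$ is the unique fact of its block, lies in $\rep$, and $\theta(q)\subseteq\rep$. The delicate point I expect is ensuring that $G$ genuinely has both $u$ and $y$ in the correct positions, so that collision consistency applies; this is where normality and \Cref{lem:backward} enter.
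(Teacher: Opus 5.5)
Your plan matches the paper's proof in structure. The same lemmas give membership in $\mathcal{C}$ and the guardedness items, and the forward direction is the same repair-extension argument. The backward direction also goes through the incoming $\igraph{q}$-edge $\fd{u}{y}$, \Cref{lem:verysimple2}, and collision consistency. The only real variation is the direction of collision consistency. You transfer from an $S$-fact to $T$ and then invoke functional consistency of $T$. The paper transfers from $\theta(G)$ and $\theta(T)$ directly to $S(\underline{\theta(y)},\theta(w),\uk)\in\db$, which spares the appeal to pairwise consistency for non-emptiness of the $S$-block. Both work whenever $T$ exists. Your ``delicate point'' is not an issue: $G$ generating $\fd{u}{y}$ already means $u=\keyvars{G}$ and $y\in\notkeyvars{G}$.

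The backward case analysis has two holes, both easy to fix.

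\textbf{First hole: $F$ need not be binary.} Normality item 1 only gives $|\notkeyvars{F}|\le 1$. So $F$ may carry constants in non-key positions. That part is harmless, since pairwise consistency of $F$ with itself fixes the constants and makes $S^{\db}$ consistent. More importantly, $F$ may have no non-key variable at all, e.g.\ $S(\underline{y})$ or $S(\underline{y},\vec c)$. There \Cref{lem:verysimple2} gives nothing, and $u$ may even be $\emp$. You need a separate argument for this case: pairwise consistency of $G$ with $F$ yields $\theta(F)\in\db$, and consistency of $S^{\db}$ then gives $\theta(F)\in\rep$.

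\textbf{Second hole: the case $w=u$.} Your step ``this edge is generated by an atom $T(\underline{u},w,\uk)$'' fails when $w=u$, and this case does occur. For example, $q=\{G(\underline{u},y),S(\underline{y},u),S'(\underline{y},u)\}$ is saturated and normal with an acyclic attack graph, and $y$ is prunable. Here $\fd{u}{w}=\fd{u}{u}$ is a dummy edge generated by no atom. So the second bullet of collision consistency with $x_j=u$ cannot be invoked. The fix is the first bullet, taken with $x_i=z=u$ and $x_j=y$. From $\theta(G)\in\db$ it directly gives $S(\underline{\theta(y)},\theta(u),\uk)\in\db$. The paper's text makes the same unqualified assertion that such a $T$ exists, but its conclusion survives via the first bullet.
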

\begin{proof}
Let $y$ be a prunable variable, and suppose $\leafprune_y(\db,q)= (\db',q')$.
The reduction $\leafprune_y$ is clearly strict and computable in polynomial time. It is into $\mathcal{C}\cup\{\true,\false\}$
by \Cref{lem:vequery1,lem:dbsat}. It suffices to prove that it is guarded and equivalence-preserving.

{\bf (Guardedness)}
\Cref{it:inv:guarded,it:inv:guarding} of guardedness follow 
by \Cref{lem:root1,lem:nr}. \Cref{it:sizeinv,it:inv:2,it:sizeinv,it:dummy,it:dummy2} are clear by construction.
We prove  
  \Cref{it:inv:const,it:inv:1}.
  
For \Cref{it:inv:const}, assume that $F\in \modec{q}$, and either $\keyvars{F}\in \initqueryvars{q}$ or $\atomvars{F}= \emptyset$.
Since $\keyvars{F}\neq y$, as $y\in \leafqueryvars{q}$ (and $\initqueryvars{q}$ is a subset of $\rootqueryvars{q}$, which is disjoint with $\leafqueryvars{q}$), we obtain $F\in {q'}$.
 Moreover, 
$F\in \modec{(q')}$, because $F\in \modei{(q')}$ means that
there exists an edge $\fd{u}{v}$ in  $\WFD{q'}\cap \FD{F}$.
Since this implies $u\neq y$, we would obtain $\fd{u}{v}\in \WFD{q}$ by \Cref{lem:little0}, which entails $F\in \modei{q}$, a contradiction.
Hence,  \Cref{it:inv:const} holds. 

It remains to consider \Cref{it:inv:1}, that is, $N$-boundedness.
Assume that for each $F\in q$,
\(
|\restrict{F^{\db}}{V}| \leq N,
\)
where $V\defeq \atomvars{F}\cap \rqueryvars{q}$.
Now, \Cref{lem:nr} entails $\rqueryvars{q} \cap \queryvars{q'} = \rqueryvars{q'}$.
Thus, for each for each $F\in q'$, given $V'\defeq \atomvars{F'}\cap \rqueryvars{q'}$, we obtain
$V' = V$. Hence,
\[
|\restrict{F^{\db'}}{V'}| \leq |\restrict{F^{\db}}{V}| \leq N,
\]
proving \Cref{it:inv:1}. 

We conclude that $\leafprune_y$ is guarded.

{\bf (Equivalence preservation)}
 Given $q_0\defeq \{F\in {q}   \mid y= \keyvars{F}\}$ and $\db_0 \defeq \db \setminus \restrict{\db}{q_0}$, by \Cref{lem:dbsat} it suffices to prove that $(\db,q)\in \cert$ if and only if $(\db_0,q')\in \cert$.

($\Rightarrow$) 
Suppose  $(\db,q)\in \cert$, and let  $\rep'$ be an arbitrary repair of $\db_0$. 
Define
\[
\rep \defeq \rep' \cup \restrict{\db}{q_0}.
\]
Since $q$ is saturated and normal, and $y\in \leafqueryvars{q}$, it holds that every $F\in q_0$ is of the form $S(\underline{y}, \vec{c})$ or   $S(\underline{y},z, \vec{c})$, where $z$ is a variable such that $\fd{y}{z}\in \SFD{q}$ and
$\vec{c}$ is a (possibly empty) sequence of constants.
Since $\db$ is $q$-saturated, $\restrict{\db}{q_0}$ is consistent.
In particular, $\rep$ is a repair of $\db$. 
Since $\rep\models q$ by assumption, we obtain $\rep'\models q'$. Hence $(\db_0,q')\in \cert$.

($\Leftarrow$)
Suppose  $(\db_0,q')\in \cert$, and
let $\rep$ be a repair of $\db$. Then $\rep'\defeq \rep \setminus \restrict{\rep}{q_0}$ is a repair of $\db_0$. 
Now, $(\db_0,q')\in \cert$ implies
we find a valuation $\theta$ such that $\theta(q')\subseteq \rep'$. 
We show that
 $\theta(F)\in \rep$, for each $F \in q_0$. For this, consider an atom $F \in q_0$. As observed above, $F$ is of the form $S(\underline{y}, \vec{c})$ or   $S(\underline{y},z, \vec{c})$, where $z$ is a variable such that $\fd{y}{z}\in \SFD{q}$ and
$\vec{c}$ is a (possibly empty) sequence of constants. Note that $S^{\db}$ is consistent due to  $q$-saturation.

Since $y\in \leafqueryvars{q}$, 
 there exists an edge of the form $\fd{x}{y} \in \igraph{q}\subseteq \FD{q}$ (where, as $q$ is variable-repetition-free, $x\neq y$).
Then, there exists an atom of the form $R(\underline{x},y,\uk)\in q$ generating $\fd{x}{y}$.
Since $x\neq y$, we have $R(\underline{x},y,\uk)\in q'$, hence $R(\underline{\theta(x)},\theta(y),\uk)\in \rep'$.
If $\notkeyvars{F}=\emptyset$, then from the $q$-saturation of $\db$ it readily follows that $\theta(F)\in \db$. By the consistency of $S^{\db}$ we then obtain $\theta(F)\in \rep$.

Thus, suppose $\notkeyvars{F}\neq \emptyset$. Then, $F$ is of the form $S(\underline{y},z, \vec{c})$, where $z$ is a variable such that $\fd{y}{z}\in \SFD{q}$ and
$\vec{c}$ is a (possibly empty) sequence of constants.
Since $y\in \leafqueryvars{q}$, we obtain by \Cref{lem:verysimple2} that $\fd{x}{z}\in \SFD{q}$. 
 Consequently, there exists an atom of the form $T(\underline{x},z,\uk)\in q$. 
  Since $x\neq y$, we have $T(\underline{x},z,\uk)\in q'$, hence $T(\underline{\theta(x)},\theta(z),\uk)\in \rep'$. 
 As $\db$ is $q$-saturated and $\rep'\subseteq \db$, we obtain $S(\underline{\theta(y)},\theta(z),\uk)\in \db$. 
 Moreover, by the $q$-saturation of $\db$, this is tantamount to $\theta(F)=S(\underline{\theta(y)},\theta(z), \vec{c})\in \db$.
 Since $S^{\db}$ is consistent, we obtain  $\theta(F) \in \rep$. We conclude that $\theta(q)\subseteq \rep$, hence $(\db,q)\in \cert$.
\end{proof}

\subsection{Leaf Elimination}\label{sect:le}
A variable $y\in 
 \rqueryvars{q}\cap\leafqueryvars{q}\cap  \nkeyqueryvars{q}$ 
is called \emph{eliminable}.
The operator $\leafeliminate_y$ is defined on pairs $(\db,q)$ such that
$y$ is an eliminable variable of $q$.
For such a pair, given
\begin{equation}\label{eq:and}
q' \defeq \{\drop{F}{y}\mid F\in q\}\text{ and }
\db_0 \defeq \{\drop{A}{y}\mid A\in \db\},
\end{equation}
 we define
\begin{equation}
\leafeliminate_y(\db,q)\defeq (\db',q'),
\end{equation}
where $\db' \defeq \dbsaturate(\db_0,q')$.
The reader is referred to the beginning of \Cref{sect:normal} for the definition of the notation $F^{-y}$.
Note that $y\notin 
   \queryvars{q'}$, and in particular, $\queryvars{q}\setminus \{y\}=\queryvars{q'}$.
   Furthermore, $q'$ inherits variable-repetition-freeness from $q$. 
The helping lemmas that follow are analogous to the previous section.

\smallskip
\begin{lemma}\label{lem:little}
Let $q\in \sjfbcq$ be saturated and normal. 
Let  $y$ be eliminable. 
Let $u,v\in \queryvars{q} \cup \{\emp\}$ be such that $v\neq y$. 
If $(\db',q')=\leafeliminate_y(\db,q)\notin\{\true,\false\}$,
then
\begin{itemize}
\item $\fd{u}{v} \in \FD{q}$ if and only if $\fd{u}{v} \in \FD{q'}$;
\item $\fd{u}{v} \in \WFD{q}$ if and only if $\fd{u}{v} \in \WFD{q'}$;
\item $\fd{u}{v} \in \SFD{q}$ if and only if $\fd{u}{v} \in \SFD{q'}$;
\item $\fd{u}{v}\in \igraph{q}$ if and only if $\fd{u}{v}\in \igraph{q'}$.
\end{itemize}
\end{lemma}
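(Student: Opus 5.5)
The plan mirrors the proof of \Cref{lem:little0}, with the roles of tail and head exchanged: leaf pruning deleted the outgoing edges of $y$, whereas leaf elimination deletes exactly the edges \emph{into} $y$. The starting point is the effect of $\drop{F}{y}$ on the key-nonkey graph. Since $y$ is eliminable, $y\in\nkeyqueryvars{q}$, so $y$ never occurs in a key position. Dropping $y$ from the non-key positions therefore only removes edges $\fd{u}{y}$. Every edge $\fd{u}{v}$ with $v\neq y$ is generated by $F$ in $q$ iff it is generated by $\drop{F}{y}$ in $q'$. Because $q$ is variable-repetition-free, the case $u=v$ (dummy edges) is trivial, and we may assume $u\neq v$. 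Under the correspondence $F\mapsto\drop{F}{y}$, which is a bijection between $q$ and $q'$, this gives the first bullet directly, and also the atomwise version $\FD{F}\setminus\{\fd{u}{y}\}=\FD{\drop{F}{y}}$.

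For the third bullet, I must compare $\FDclosure{G}$ with $\FDclosure{\drop{G}{y}}$, and more generally $\FDclosure{q\setminus\{F\}}$ with its primed counterpart, for targets $v\neq y$. The key observation is that $y$ has no outgoing edges in $\FD{q}$, since $y$ is not a key variable. Hence no path to a vertex $v\neq y$ passes through $y$. Removing edges into $y$ thus does not change reachability of any $v\neq y$, so $\fd{u}{v}\in\FDclosure{G}$ iff $\fd{u}{v}\in\FDclosure{\drop{G}{y}}$ whenever $v\neq y$ (and $u\neq y$ automatically holds for any nontrivial path source). Combined with the first bullet, the definition of $\SFD{\cdot}$ transfers verbatim, giving the third bullet. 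The second bullet then follows because $\WFD{\cdot}=\FD{\cdot}\setminus\SFD{\cdot}$.

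For the fourth bullet, recall that $\fd{u}{v}\in\igraph{q}$ iff either $\fd{u}{v}\in\WFD{q}$, or $\fd{u}{v}\in\SFD{q}$ and there is $\fd{w}{u}\in\WFD{q}$ with $\fd{w}{v}\notin\SFD{q}$. The target $v$ differs from $y$ by assumption, so the conditions on $\fd{u}{v}$ and $\fd{w}{v}$ transfer by the second and third bullets. The only delicate point is the witness edge $\fd{w}{u}$, whose target $u$ might equal $y$. I expect this to be the main obstacle, though a small one. It is resolved by noting that if $u=y$, then $\fd{u}{v}\in\FD{q}$ with $u\neq v$ would make $y$ a key variable, contradicting $y\in\nkeyqueryvars{q}$; and in $q'$ we have $y\notin\queryvars{q'}$. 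Hence $u\neq y$, and the second bullet applies to $\fd{w}{u}$ as well. This gives both directions.
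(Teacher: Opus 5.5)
Your proposal is correct and follows essentially the same route as the paper. The first and third items come from the atomwise correspondence $F\mapsto\drop{F}{y}$: because $y\in\nkeyqueryvars{q}$ has no outgoing edges, reachability of targets $v\neq y$ is unaffected. The second item is their difference, and the fourth follows by noting that the witness edge $\fd{w}{u}$ has $u\neq y$, since $y$ is not a key variable. You simply spell out the closure argument that the paper calls straightforward.
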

\begin{proof}
The case where $u=v$ is clear, so we assume that $u\neq v$.
The first and third items are straightforward. Then, the second item follows.
The fourth item is a consequence of the second and third one. For instance, if $\fd{u}{v}\in \igraph{q}$ due to $\fd{u}{v}\in \SFD{q} \cap \FD{q}$, $\fd{w}{u}\in \WFD{q}$, and $\fd{w}{v}\notin \SFD{q}$, then $u\neq y$ by $y\in \nkeyqueryvars{q}$, 
  hence we obtain $\fd{u}{v}\in \SFD{q'}\cap \FD{q}$, $\fd{w}{u}\in \WFD{q'}$, and $\fd{w}{v}\notin \SFD{q'}$ by the second and third item. The converse direction is analogous, as $y\notin \queryvars{q'}$. 
\end{proof}

\smallskip
\begin{lemma}\label{lem:root11}
Let $q\in \sjfbcq$ be saturated and normal. 
Let $y$ be eliminable.
If $(\db',q')=\leafeliminate_y(\db,q)\notin\{\true,\false\}$,
then $\initqueryvars{q}=\initqueryvars{q'}$.
\end{lemma}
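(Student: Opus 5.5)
The plan is to mirror the proof of \Cref{lem:root1}. We must show $\rootqueryvars{q}\cap\srcqueryvars{q}=\rootqueryvars{q'}\cap\srcqueryvars{q'}$. Throughout we use that $y\in\leafqueryvars{q}\cap\nkeyqueryvars{q}$: $y$ has an incoming edge in $\igraph{q}$, so $y\notin\rootqueryvars{q}$. Moreover $y$ has no outgoing edge in $\FD{q}$ at all, since it is never a key. Hence $\FD{q'}$ is exactly $\FD{q}$ with the vertex $y$ and its incoming edges deleted. This is consistent with \Cref{lem:little}, which says all edges with target $v\neq y$ are preserved in each of $\FD{}$, $\WFD{}$, $\SFD{}$ and $\igraph{}$.

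\textbf{Roots.} First I show $\rootqueryvars{q}\setminus\{y\}=\rootqueryvars{q'}$. A variable $v\neq y$ has an incoming $\igraph{q}$-edge $\fd{u}{v}$ if and only if it has one in $\igraph{q'}$, by the fourth item of \Cref{lem:little}. Here $u\neq y$ is automatic, because $y$ has no outgoing edges in either query. Since $y\notin\rootqueryvars{q}$ and $y\notin\queryvars{q'}$, the sets agree.

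\textbf{Source SCCs.} Next I compare source SCC membership for variables $v\neq y$. Reachability between vertices other than $y$ is identical in $\FD{q}$ and $\FD{q'}$, because $y$ is a sink in $\FD{q}$ and so never occurs as an interior vertex of a path. Every edge into $v\neq y$ is present in both graphs. Therefore $v$ lies in a source SCC of $\FD{q}$ if and only if it does in $\FD{q'}$. This uses the criterion that $v$ is in a source SCC iff every in-neighbour $u$ of $v$ is reachable from $v$. Finally, $y$ itself is not a guard of $q$ because it is not a root, and it is absent from $q'$. Combining the two parts gives $\initqueryvars{q}=\initqueryvars{q'}$.

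\textbf{Expected obstacle.} The only delicate point is confirming that $y$ has no outgoing edges in $\FD{q}$, including dummy self-loops, which are not in $\FD{q}$. This follows directly from $y\in\nkeyqueryvars{q}$ together with the definition $\FD{q}=\{\fd{\keyvars{F}}{z}\}$. No saturation argument of the kind needed in \Cref{lem:root1} is required, because we never need shortcut edges here.
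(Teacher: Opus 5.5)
Your proof is correct and follows essentially the paper's route. Roots are preserved via the fourth item of \Cref{lem:little}. Source-SCC membership is preserved because the non-key leaf $y$ is a sink of $\FD{q}$, so deleting it changes neither reachability nor in-edges among the remaining vertices. The paper splits this into four inclusions argued by contradiction, but the idea is the same.

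One small correction: the criterion you quote ("$v$ is in a source SCC iff every in-neighbour of $v$ is reachable from $v$") is not valid in general, since an edge may enter another vertex of $v$'s SCC. Use instead "every vertex that reaches $v$ is reachable from $v$". Your observation that reachability among vertices other than $y$ is unchanged, and that $y$ reaches nothing else, already covers this; the paper's own proof takes the same shortcut.
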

\begin{proof}
 We need to prove $\rootqueryvars{q}\cap \srcqueryvars{q} =\rootqueryvars{q'} \cap \srcqueryvars{q'}$.

We first prove $\rootqueryvars{q} \subseteq \rootqueryvars{q'}$. 
Assuming $u\notin \rootqueryvars{q'}$, there exists an edge of the form $\fd{w}{u}\in\igraph{q'}$.  Then $u\neq y$ by the construction of $q'$.
If $\fd{w}{u}\in\WFD{q'}$, then $\fd{w}{u}\in\WFD{q}$  by \Cref{lem:little}, i.e., $u\notin \rootqueryvars{q'}$.
 Thus, suppose $\fd{w}{u}\in\SFD{q'}\cap \FD{q'}$. Then we find another edge $\fd{p}{w}\in \WFD{q'}$ such that $\fd{p}{u}\notin\SFD{q'}$.
 By construction, $w\neq y$, hence by \Cref{lem:little} we have $\fd{p}{w}\in \WFD{q}$, $\fd{p}{u}\in\SFD{q}$ and $\fd{p}{u}\notin\SFD{q}$, which is tantamount
to $\fd{w}{u}\in\igraph{q}$, i.e., $u \notin \rootqueryvars{q}$.

We then prove $\rootqueryvars{q'} \subseteq \rootqueryvars{q}$. 
Assume $u\notin \rootqueryvars{q}$. If $u=y$, then by $y\notin \queryvars{q'}$ we obtain $u\notin \rootqueryvars{q'}$.
Thus, suppose $u\neq y$. By assumption there exists an edge of the form $\fd{w}{u}\in\igraph{q}$.  
If $\fd{w}{u}\in\WFD{q}$, then $\fd{w}{u}\in\WFD{q'}$  by \Cref{lem:little}, i.e., $u\notin \rootqueryvars{q'}$.
 Thus, suppose $\fd{w}{u}\in\SFD{q}\cap \FD{q}$. Then we find another edge $\fd{p}{w}\in \WFD{q}$ such that $\fd{p}{u}\notin\SFD{q}$.
 We have $w\in \keyqueryvars{q}$, hence $w\neq y$. Consequently,
   \Cref{lem:little} implies $\fd{p}{w}\in \WFD{q'}$, $\fd{p}{u}\in\SFD{q'}$, and $\fd{p}{u}\notin\SFD{q'}$, which is tantamount
to $\fd{w}{u}\in\igraph{q'}$, i.e., $u \notin \rootqueryvars{q'}$.

Next, we  prove $\srcqueryvars{q} \cap \rootqueryvars{q} \subseteq \srcqueryvars{q'}$.
Assuming  $u\notin \srcqueryvars{q'}$, there exists an edge of the form $\fd{u_0}{u}\in {\FD{q'}}$ such that $u_0\notin \reach{u}{\FD{q'}}$.
Clearly, we have $\fd{u_0}{u}\in {\FD{q}}$. Now, suppose toward contradiction that $u\in\srcqueryvars{q} \cap \rootqueryvars{q}$.
 Then we obtain $u_0\in \reach{u}{\FD{q}}$. In particular, there exists a simple directed path $D$ from $u$ to $u_0$ in $\FD{q}$. Since no such path exists in $\FD{q'}$, this path contains an edge of the form $\fd{w}{y}$ as an intermediate vertex. Since $y\in \nkeyqueryvars{q}$, this edge must be the last one in $D$, i.e., $y=u$.  But then,  $u\in \rootqueryvars{q}$ and $y\in \leafqueryvars{q}$ raise a contradiction, as $\rootqueryvars{q}\cap \leafqueryvars{q}=\emptyset$. Thus the claim follows.

Finally, we show $\srcqueryvars{q'} \subseteq \srcqueryvars{q}$.
Assuming $u\in \srcqueryvars{q'}$, suppose toward contradiction that $u\notin \srcqueryvars{q}$. Then we find an edge $\fd{u_0}{u} \in \FD{q}$ such that $u_0 \notin \reach{u}{ \FD{q}}$.
If $\fd{u_0}{u} \in \FD{q'}$, then by assumption $u_0 \in \reach{u}{ \FD{q'}}$, which implies $u_0 \in \reach{u}{ \FD{q}}$ by the construction of $q'$, a contradiction. Thus, suppose $\fd{u_0}{u} \notin \FD{q'}$. This entails $u=y$, in which case $u\notin \queryvars{q'}$, and in particular $u\notin \srcqueryvars{q'}$, a contradiction. Therefore, by contradiction we obtain $u\in \srcqueryvars{q}$.
\end{proof}

\smallskip
\begin{lemma}\label{lem:nr12}
Let $q\in \sjfbcq$ be saturated and normal. 
Let $y$ be eliminable. 
If $(\db',q')=\leafeliminate_y(\db,q)\notin\{\true,\false\}$,
then $\nrqueryvars{q}=\nrqueryvars{q'}$.
\end{lemma}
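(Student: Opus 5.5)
We follow the template of \Cref{lem:nr} for leaf pruning and prove both inclusions, using \Cref{lem:root11} (which gives $\initqueryvars{q}=\initqueryvars{q'}$) and \Cref{lem:little} (which preserves consistent edges whose target is not $y$). Recall that $\queryvars{q'}=\queryvars{q}\setminus\{y\}$. Since $y$ is eliminable, $y\in\rqueryvars{q}$, so $y\notin\nrqueryvars{q}$. Both sides of the claimed equality are therefore subsets of $\queryvars{q}\setminus\{y\}$, and it suffices to compare membership for variables $v\neq y$.

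\textbf{Inclusion $\nrqueryvars{q}\subseteq\nrqueryvars{q'}$.} Let $v\in\nrqueryvars{q}$. Then $v\neq y$, and there is $u\in\initqueryvars{q}$ with $\fd{u}{v}\in\SFD{q}$. If $u=v$, then $\fd{v}{v}\in\SFD{q'}$ holds by definition. Otherwise $v\neq y$, so \Cref{lem:little} gives $\fd{u}{v}\in\SFD{q'}$. We also need $u\neq y$. This holds because $u$ is a root and $y\in\leafqueryvars{q}$ has an incoming edge, so $u\in\queryvars{q'}$. By \Cref{lem:root11}, $u\in\initqueryvars{q'}$, and hence $v\in\nrqueryvars{q'}$.

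\textbf{Inclusion $\nrqueryvars{q'}\subseteq\nrqueryvars{q}$.} Let $v\in\nrqueryvars{q'}$, witnessed by $u\in\initqueryvars{q'}$ with $\fd{u}{v}\in\SFD{q'}$. Since $v\in\queryvars{q'}$, we have $v\neq y$. \Cref{lem:little} therefore transfers the edge back, giving $\fd{u}{v}\in\SFD{q}$; the dummy case $u=v$ is again trivial. Finally, $u\in\initqueryvars{q}$ by \Cref{lem:root11}.

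\textbf{Main obstacle.} The only delicate point is confirming that \Cref{lem:little} applies in the form needed. Its hypothesis requires only that the target $v$ differ from $y$; the source $u$ may be $\emp$ or any variable, and here $u$ is a variable distinct from $y$. The lemma must also cover dummy self-loops, and its proof treats the case $u=v$ separately, so this is fine. Everything else is a direct transfer, so the argument is a short two-direction check.
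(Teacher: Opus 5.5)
Your proof is correct and follows the paper's own argument: both inclusions are checked directly, using \Cref{lem:little} to transfer the witnessing consistent edge $\fd{u}{v}$ (its hypothesis is exactly the target condition $v\neq y$, as you note) and \Cref{lem:root11} to transfer the guard $u$. Your explicit treatment of the dummy case $u=v$ and your observation that $y\notin\nrqueryvars{q}$ only tighten the same two-direction check.
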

\begin{proof}
Let $v\in\nrqueryvars{q}$. Then we find 
 $u\in \initqueryvars{q}$ such that $\fd{u}{v}\in \SFD{q}$. 
Since $y\in \rqueryvars{q}$, we obtain $y\neq v$. 
Then, by \Cref{lem:little} we obtain $u\in \initqueryvars{q'}$ and by \Cref{lem:root11} that $\fd{u}{v}\in \SFD{q'}$,
 hence $v\in\nrqueryvars{q'}$.

Assume then that $v\in\nrqueryvars{q'}$. Then we find  
 $u\in \initqueryvars{q'}$ such that $\fd{u}{v}\in \SFD{q'}$.
Now $y\neq u$ by the construction of $q'$, hence $\fd{u}{v}\in \SFD{q}$ by \Cref{lem:little}. Moreover, 
$u\in \initqueryvars{q}$ follows by \Cref{lem:root11}. Thus $v\in\nrqueryvars{q}$.
\end{proof}

\smallskip
\begin{lemma}\label{lem:vequery11}
Let $q\in \sjfuabcq$ be saturated and normal. Let $y$ be eliminable. 
If $(\db',q')=\leafeliminate_y(\db,q)\notin\{\true,\false\}$,
then
\begin{enumerate}[(a)]
\item\label{lem:veqtwo1} $q'\in\sjfuabcq$; 
\item\label{lem:veqone1} $q'$ is saturated;
\item\label{lem:veqoneprime1} $q'$ is normal.
\end{enumerate}
\end{lemma}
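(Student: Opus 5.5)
The plan is to mirror the proof of \Cref{lem:vequery1}, replacing \Cref{lem:little0} by \Cref{lem:little}. The basic fact used throughout is that $q'=\drop{q}{y}$ only deletes the occurrences of $y$. Since $y\in\nkeyqueryvars{q}$, every such occurrence is a non-key occurrence. Hence $\FD{q'}$ is $\FD{q}$ with exactly the edges into $y$ removed, and $\gaifman{q'}$ is $\gaifman{q}$ with the vertex $y$ deleted. Variable-repetition-freeness is inherited directly.

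\textbf{(a) Acyclicity of the attack graph.} I will show that $\drop{F}{y}\attacks{q'}\drop{G}{y}$ implies $F\attacks{q}G$. Take a witnessing path $U$ in $\gaifman{q'}$ from $\notkeyvars{\drop{F}{y}}\subseteq\notkeyvars{F}$ to $\atomvars{\drop{G}{y}}\subseteq\atomvars{G}$ that avoids $\keyclosure{\drop{F}{y}}{q'}$. Then $U$ is also a path in $\gaifman{q}$ and avoids $y$. It remains to see that $U$ avoids $\keyclosure{F}{q}$. Suppose some $v$ on $U$ lies in $\keyclosure{F}{q}$, and take a shortest directed path $D$ in $\FD{q\setminus\{F\}}$ from $\keyvars{F}$ or $\emp$ to $v$. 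Because $y$ has no outgoing edges ($y\notin\keyqueryvars{q}$) and $v\neq y$, the variable $y$ does not occur on $D$. So $D$ is a path in $\FD{q'\setminus\{\drop{F}{y}\}}$, which gives $v\in\keyclosure{\drop{F}{y}}{q'}$, a contradiction. This case is in fact easier than leaf pruning, because no shortcut argument is needed.

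\textbf{(b) Saturation.} Suppose $\fd{u}{v}\in\FD{\drop{F}{y}}$, $\fd{v}{w}\in\SFD{q'}$ and $\fd{u}{w}\in\FDclosure{q'\setminus\{\drop{F}{y}\}}$. Here $v,w\neq y$. By \Cref{lem:little}, $\fd{v}{w}\in\SFD{q}$. The remaining two conditions lift to $q$ trivially: $\fd{u}{v}\in\FD{F}$, and the closure can only grow in $q$. Saturation of $q$ then gives $\fd{u}{w}\in\SFD{q}$, and since $w\neq y$, \Cref{lem:little} yields $\fd{u}{w}\in\SFD{q'}$.

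\textbf{(c) Normality.} I check the four items by contradiction, each time transferring the offending edge back to $q$ with \Cref{lem:little}; all targets involved are variables of $q'$ and hence differ from $y$.

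Item 1 is the one I expect to be the main obstacle. Dropping $y$ can shrink $\notkeyvars{F}$, but $|\notkeyvars{\drop{F}{y}}|>1$ still implies $|\notkeyvars{F}|>1$. So a consistent edge $\fd{p}{u}$ with $u\in\notkeyvars{\drop{F}{y}}$ is also consistent in $q$ and violates Item 1 for $q$.

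The other items are direct:
\begin{itemize}
\item Item 2: a consistent edge between distinct siblings in $\drop{F}{y}$ consists of siblings in $F$, and transfers to $q$ by \Cref{lem:little}.
\item Item 4: consistent edges in both directions transfer to $q$ by \Cref{lem:little}.
\item Item 3: $\fd{\emp}{z}\in\SFD{q'}$ transfers to $q$ by \Cref{lem:little}.
\end{itemize}
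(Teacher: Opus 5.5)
Your proposal is correct and follows essentially the same route as the paper. In (a) you lift the witnessing path from $\gaifman{q'}$ to $\gaifman{q}$ and note that $y\in\nkeyqueryvars{q}$ could only be the endpoint of a shortest closure path, which is impossible since the path avoids $y\notin\queryvars{q'}$. In (b) and (c) you transfer the relevant consistent edges back to $q$ via \Cref{lem:little}, exactly as the paper does.
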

\begin{proof}
\textbf{(\cref{lem:veqtwo1})} 
Clearly, $q'$ remains variable-repetition-free.
For acyclicity of the attack graph,
it  suffices to prove that $F\attacks{q'} G$ implies $F\attacks{q} G$.

Suppose $F\attacks{q'} G$. Then  $\gaifman{q'}$ contains a (possibly empty) path $U$
between $\notkeyvars{F}$ and   $\atomvars{G}$ that is not separated by $\keyclosure{F}{q}$. Clearly, $U$ belongs also to 
$\gaifman{q}$. Assume toward contradiction that $F\not\attacks{q} G$. Then, some variable $v$ in $U$ 
belongs to $\keyclosure{F}{q}$. Writing $u=\keyvars{F}$, this means $\FD{q\setminus \{F\}}$ contains a (directed) path $D$ from $t$ to $v$, where $t\in \{u,\emp\}$. We may assume $D$ is shortest among all such paths.
Since $D$ is not included in  $\FD{q'\setminus \{F\}}$, we observe that an edge of the form  $\fd{x}{y}$ appears on $D$.
Since $y\in \nkeyqueryvars{q}$, it must be the case that $y$ is the last element of $D$, in which case $y$ appears on $U$.
However, $U$ is a path on $\gaifman{q'}$, and $y\notin \queryvars{q'}$, meaning that $y$ cannot appear on $U$. From this contradiction, we conclude that $F\attacks{q} G$.

\textbf{(\cref{lem:veqone1})}
Suppose \( F \in q' \),  
     $\fd{u}{v}\in \FD{F}$,
      $\fd{v}{w}\in \SFD{q'}$, and
     $\fd{u}{w}\in \FDclosure{q'\setminus \{F\}}$. 
We need to prove $\fd{u}{w}\in \SFD{q'}$.  

 Let $F=\drop{G}{y}$, where $G\in q$. Now $u,v,w\in \queryvars{q'}$, which entails that none of $u,v,w$ is $y$. Hence, by \Cref{lem:little}, we have $\fd{v}{w}\in \SFD{q}$, and  clearly $\fd{u}{w}\in \FDclosure{q\setminus \{G\}}$.  Since also $\fd{u}{v}\in \FD{G}$, 
by saturation of $q$ it follows that $\fd{u}{w}\in \SFD{q}$, hence $\fd{u}{w}\in \SFD{q'}$ by \Cref{lem:little}.


\textbf{(\cref{lem:veqoneprime1})} 
In what follows we write $H' \defeq \drop{H}{y}$, for each $H\in q$. We assume, for the sake of a contradiction, that $q'$ is not normal.
There are four different cases.

(\cref{it:normal1}) Suppose toward contradiction that $|\notkeyvars{F'}|>1$, and $\fd{p}{u} \in \SFD{q'}$, for $p= \keyvars{F'}$
and some $u\in \notkeyvars{F'}$. Then $p,u$ are distinct from $y$, and hence \Cref{lem:little} entails $\fd{u}{v} \in \SFD{q}$. 
Clearly also $|\notkeyvars{F}|>1$, $p= \keyvars{F'}$
and  $u\in \notkeyvars{F'}$. This raises a contradiction with the normality of $q$. 

(\cref{it:normal2}) If $\fd{u}{v} \in \SFD{q'}$, for some distinct $u,v\in \notkeyvars{F'}$, then $u,v$ are disctinct from $y$, 
leading to $\fd{u}{v} \in \SFD{q}$ by \Cref{lem:little}. Moreover $u,v\in \notkeyvars{F}$, leading to a contradiction with the normality of $q$. Therefore this items follows, too.

(\cref{it:normal4}) If $\fd{u}{v},\fd{v}{u}\in \SFD{q'}$, for some distinct $u,v\in \queryvars{q'}$, then $u,v\in \queryvars{q}\setminus\{y\}$, and
$\fd{u}{v},\fd{v}{u}\in \SFD{q}$ obtains by \Cref{lem:little}, a contradiction with the normality of $q$.

(\cref{it:normal3}) If $\fd{\emp}{z} \in \SFD{q'}$, for some $z\in \queryvars{q'}$, then $\fd{\emp}{z} \in \SFD{q}$ by \Cref{lem:little}, a contradiction.
Hence this item follows.
\end{proof}

\smallskip
\begin{lemma}[Leaf Elimination]\label{lem:vedb0}
Let $\mathcal{C}$ consist of all $(\db,q)\in \DB\times \sjfuabcq$, where $q$ is a saturated and normal query and $\db$ is a $q$-saturated database.
For each $(\db,q)\in \mathcal{C}$ and eliminable $y\in \queryvars{q}$, 
$(\db,q)\mapsto \leafeliminate_y(\db,q)$ is strict, guarded, equivalence-preserving, and  polynomial-time reduction from $\mathcal{C}$ to $\mathcal{C}\cup\{\true,\false\}$.
\end{lemma}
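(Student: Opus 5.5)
The proof will mirror the structure of \Cref{lem:vedb111}. Strictness is immediate: $y\in\queryvars{q}\setminus\queryvars{q'}$, so at least one variable occurrence disappears. Polynomial time follows from \Cref{lem:dbsat}, since $\db_0$ is a projection of $\db$. Membership of the output in $\mathcal{C}\cup\{\true,\false\}$ follows from \Cref{lem:vequery11} together with \Cref{lem:dbsat}. It then remains to establish guardedness and equivalence preservation.

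\textbf{Guardedness.} \Cref{it:inv:guarded,it:inv:guarding} are exactly \Cref{lem:nr12,lem:root11}. \Cref{it:inv:2,it:dummy,it:dummy2} are clear, because we only project columns and call $\dbsaturate$. For \Cref{it:sizeinv}, relations whose atom contains $y$ are renamed to fresh names, and the remaining relations only lose facts. For \Cref{it:inv:const}, let $F\in\modec{q}$ with $\keyvars{F}\in\initqueryvars{q}$ or $\atomvars{F}=\emptyset$. An edge of $\WFD{q'}\cap\FD{\drop{F}{y}}$ has head $v\ne y$, so by \Cref{lem:little} it would lie in $\WFD{q}\cap\FD{F}$, which is impossible; hence $\drop{F}{y}\in\modec{(q')}$. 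For \Cref{it:inv:1}, \Cref{lem:nr12} gives $\rqueryvars{q'}=\rqueryvars{q}\setminus\{y\}$. The set $V'$ of unguarded variables of $\drop{F}{y}$ is therefore $V\setminus\{y\}$, and the projection onto $V'$ of the projected relation is a projection of $\restrict{F^{\db}}{V}$. Its size thus stays at most $N$, and the subsequent $\dbsaturate$ only deletes facts.

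\textbf{Equivalence preservation.} By \Cref{lem:dbsat} it suffices to show $(\db,q)\in\cert$ iff $(\db_0,q')\in\cert$. Dropping a non-key column maps blocks to blocks, so every repair $\rep$ of $\db$ projects to a repair $\drop{\rep}{y}$ of $\db_0$, and every repair of $\db_0$ lifts to some repair of $\db$ by choosing preimages. Hence the direction ``$\Rightarrow$'' is easy: given a repair $\rep'$ of $\db_0$, lift it to $\rep$, take $\theta$ with $\theta(q)\subseteq\rep$, and observe $\theta(q')\subseteq\rep'$.

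The direction ``$\Leftarrow$'' is the main obstacle. Let $\rep$ be a repair of $\db$, and let $\theta$ satisfy $\theta(q')\subseteq\drop{\rep}{y}$. We must extend $\theta$ to $y$ so that every atom containing $y$ maps into $\rep$. Since $y$ is a leaf and a non-key variable, its occurrences are in non-key positions. By \Cref{lem:backward}, its incoming edges in $\igraph{q}$ are either a single edge $\fd{x}{y}\in\WFD{q}$ generated by a unique atom $R$, or several consistent edges from sibling variables. Every other atom containing $y$ generates an edge $\fd{z}{y}\in\FD{q}$, and by \Cref{lem:samesourcesat} and \Cref{lem:path} this edge is consistent (here we use that $y\in\rqueryvars{q}$). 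I would set $\theta(y)$ to be the $y$-value of the fact $A\in\rep$ that is the preimage of $\theta(\drop{R}{y})$. Normality \Cref{it:normal1} guarantees that a fact with $|\notkeyvars{}|>1$ has only inconsistent edges.

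For each other atom $S(\underline{z},y,\uk)$ with $\fd{z}{y}\in\SFD{q}$, functional consistency makes $S^{\db}$ single-valued on $z$, so $S(\underline{\theta(z)},c,\uk)\in\rep$ for a unique $c$. We then need $c=\theta(y)$. I would prove this using collision consistency of $\db$ along the path of consistent edges $z_1,\dots,z_n=z$ supplied by \Cref{lem:path}, whose first edge $\fd{z_1}{y}$ lies in $\igraph{q}$. By \Cref{lem:backward}, $z_1$ is a sibling of $y$ or lies on the incoming structure of $y$, so the first bullet of collision consistency (for $x_i=z$) and the second bullet, applied inductively along the path, force the value of $y$ determined by $S$ to agree with the value in $A$. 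This step, showing that all consistent incoming edges agree with the value chosen from the generating atom, is where I expect the real work, including a case split on whether $y$'s incoming $\igraph{q}$-edge is inconsistent or consistent.
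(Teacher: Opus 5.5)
Your route is the same as the paper's. The bookkeeping for strictness, running time and membership in $\mathcal{C}$ matches, guardedness goes through \Cref{lem:root11,lem:nr12,lem:little}, and the hard direction likewise reads $\theta'(y)=b$ off an atom generating an incoming $\igraph{q}$-edge $\fd{x}{y}$ and transports $b$ along the path of \Cref{lem:path} by collision consistency.

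There is, however, a genuine gap in \Cref{it:inv:const}. That item requires the original atom $F$ to lie in $\modec{(q')}$, but you prove $\drop{F}{y}\in\modec{(q')}$. When $y\in\notkeyvars{F}$, the atom $\drop{F}{y}$ carries a fresh relation name and $F\notin q'$. So your argument says nothing in exactly the problematic case, and the use of this item in \Cref{lem:main} needs the relation name to survive. The missing step is that this case cannot arise. If $\atomvars{F}\neq\emptyset$, then $x\defeq\keyvars{F}\in\initqueryvars{q}$, and $F\in\modec{q}$ with $y\in\notkeyvars{F}$ would give $\fd{x}{y}\in\SFD{q}$. That is, $y\in\nrqueryvars{q}$, contradicting eliminability. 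Only after this does your \Cref{lem:little} argument finish the item.

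The $\Leftarrow$ sketch also needs fixing before it is a proof. Since $\fd{z_1}{y}\in\SFD{q}\cap\igraph{q}$, \Cref{lem:backward} gives $\fd{x}{y}\in\SFD{q}$ and either $z_1=x$ or $z_1\sib x$. It is a sibling of $x$, not of $y$: $z_1\sib y$ together with $\fd{z_1}{y}\in\FD{q}$ contradicts \Cref{lem:puuh}.
\begin{itemize}
\item If $z_1=x$, $q$-saturation gives $S_1(\underline{\theta(x)},b,\uk)\in\db$.
\item If $z_1\sib x$, the second bullet of collision consistency, applied to the atom $T(\underline{t},x,z_1,\vec{w})$ witnessing $x\sib z_1$, transfers $b$ to $z_1$.
\end{itemize}
Collision consistency, though, needs $\theta(T)\in\rep$ and $\theta(U)\in\rep$ for the path atoms $U(\underline{z_i},z_{i+1},\vec{u})$. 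The hypothesis $\theta(q')\subseteq\drop{\rep}{y}$ yields this only for atoms avoiding $y$, so you must check three things:
\begin{itemize}
\item $y\neq z_i,z_{i+1}$, by \Cref{lem:path};
\item $y\notin\vec{u}$, by \Cref{it:normal2} of normality, since $\fd{z_{i+1}}{y}\in\SFD{q}$;
\item $y\notin\atomvars{T}$, because $y$ is non-key and an occurrence in $\vec{w}$ would create an inconsistent $\igraph{q}$-edge into $y$ colliding with the consistent $\fd{x}{y}$, contrary to \Cref{lem:backward}.
\end{itemize}

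Two smaller points. First, consistency of the remaining edges $\fd{u}{y}$ follows from \Cref{lem:samesourcesat,lem:backward}, not from \Cref{lem:path}, which presupposes it. Second, your case split is lopsided: if the incoming $\igraph{q}$-edge of $y$ is inconsistent, \Cref{lem:path,lem:backward} show that $y$ occurs in no other atom, so only the consistent case needs work.
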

\begin{proof}
Let $y$ be an eliminable variable, and suppose $\leafeliminate_y(\db,q)= (\db',q')$.
The reduction $\leafeliminate_y$ is clearly strict and computable in polynomial time. It is into $\mathcal{C}\cup\{\true,\false\}$
by \Cref{lem:vequery11,lem:dbsat}. It suffices to prove that it is guarded and equivalence-preserving.

{\bf (Guardedness)}
\Cref{it:inv:guarded,it:inv:guarding} of guardedness follow by
by \Cref{lem:root11,lem:nr12}. \Cref{it:sizeinv,it:inv:2,it:sizeinv,it:dummy,it:dummy2} are clear by construction.
We prove  
  \Cref{it:inv:const,it:inv:1}.

For \Cref{it:inv:const}, assume that $F\in \modec{q}$, and either $\keyvars{F}\in \initqueryvars{q}$ or $\atomvars{F}= \emptyset$.
 Toward contradiction, suppose first that $F\notin {q'}$.
Then, $y\in \notkeyvars{F}$ by the construction of $q'$. In particular,  $\atomvars{F} \neq \emptyset$, so we obtain 
$x\defeq \keyvars{F}\in \initqueryvars{q}$. Since $F\in \modec{q}$, this entails $\fd{x}{y}\in \SFD{q}$. Hence, $y\in \nrqueryvars{q}$ in contradiction with the assumption that $y$ is eliminable. By contradiction, we obtain $F\in {q'}$.
Moreover, 
$F\in \modec{(q')}$, because $F\in \modei{(q')}$ means that there exists an edge $\fd{u}{v}\in \WFD{q'}\cap \FD{F}\neq \emptyset$.
Since this implies $v\neq y$, we would obtain $\fd{u}{v}\in \WFD{q}$ by \Cref{lem:little}, which entails $F\in \modei{q}$, a contradiction.
Hence,  \Cref{it:inv:const} holds. 

For \Cref{it:inv:1}
assume that for each $F\in q$,
\(
|\restrict{F^{\db}}{V}| \leq N,
\)
where $V\defeq \atomvars{F}\cap \rqueryvars{q}$.
By \Cref{lem:nr12}, we obtain $\rqueryvars{q} \cap \queryvars{q'} = \rqueryvars{q'}$.
Thus, for each for each $F\in q'$, given $V'\defeq \atomvars{F'}\cap \rqueryvars{q'}$, we obtain
$V' = V$
\[
|\restrict{F^{\db'}}{V'}| \leq |\restrict{F^{\db}}{V}| \leq N.
\]
We may thus conclude that the stated reduction  is guarded.

{\bf (Equivalence preservation)}
By \Cref{lem:dbsat} it suffices to prove that $(\db,q)\in \cert$ if and only if $(\db_0,q')\in \cert$, given \eqref{eq:and}.

($\Rightarrow$) 
Suppose  $(\db,q)\in \cert$, and let  $\rep'$ be an arbitrary repair of $\db_0$. 
We first construct a repair $\rep$ of $\db$ from $\rep'$ as follows:
\begin{itemize}
\item  For each $A'\in \rep'$, add to $\rep$ an arbitrary fact $A\in \db_0$ such that $\drop{A}{y}=A'$. 
\end{itemize}
Clearly, $\rep$ is a repair of $\db$. 
Since $\rep\models q$ by assumption, we obtain $\rep' \models q'$. Hence $(\db_0,q')\in \cert$.

($\Leftarrow$)
Let $\rep$ be a repair of $\db$. 
Then $\rep'\defeq \drop{\rep}{y}$ is a repair of $\db_0$.
Now, $(\db_0,q')\in \cert$ implies
we find a valuation $\theta$ such that $\theta(q')\subseteq \rep'$. 
Our aim is to find a constant $b$ such that $\theta'(q)\subseteq \rep$, where $\theta'\defeq \theta_{y\mapsto b}$.

{\bf Construction of $\theta'$. }    
Since $y\in \leafqueryvars{q}$, we find an edge of the form $\fd{x}{y}$ in $\igraph{q}$. 
Note that $x\neq y$ by definition of $\igraph{q}$ and the fact that $q$ is variable-repetition-free.
In particular, 
 there exists an atom of the form $F=R(\underline{x},y,\vec{z})$, where $\vec{z}$ is possibly empty.
Then, $\drop{F}{y}=R'(\underline{x},\vec{z})$, and we have $R'(\underline{\theta(x)},\theta(\vec{z}))\in \rep'$. By construction, there exists $b$ 
  such that $R(\underline{\theta(x)},b,\theta(\vec{z}))\in \rep$. Then,  setting $\theta'\defeq \theta_{y\mapsto b}$, we have $\theta'(F)\in \rep$. 

Let $G\in q \setminus \{F\}$. We need show that $\theta'(G)\in \rep$. 
  Note that $y\notin \keyvars{G}$ since $y\in \nkeyqueryvars{q}$.
 If $y\notin \atomvars{G}$, then we clearly obtain $\theta'(G)\in \rep$.
Thus, suppose that $y\in \notkeyvars{G}$.
 Note that $y\neq \keyvars{G}$, as $q$ is variable-repetition-free.
 In particular, we can  write
$G=S(\underline{u},y,\vec{v})$, where $u\neq y$. 
 Then $\fd{u}{y}\in \FD{q}$. 
Note that the case that $\fd{u}{y}\in \WFD{q}$ is not possible due to the normality of $q$ and \Cref{lem:backward}. 
Thus $\fd{u}{y}\in \SFD{q}$. 

Since $y\in \rqueryvars{q}\cap \nkeyqueryvars{q}$ and $u\neq y$, we obtain by  \Cref{lem:path}
 a path $z_1, \dots ,z_n$ in $\FD{q}$ such that 
 \begin{enumerate}
\item $\fd{z_1}{y}\in \igraph{q}$, 
\item $z_n=u$, and 
\item $\fd{z_i}{y}\in \SFD{q}$ and $z_i\neq y$ for each $i\in [n]$.
\end{enumerate}

Let us write $S_i(\underline{z_i},y,\uk)$ for the atom of $q$ generating $\fd{z_i}{y}$, for each $i \in [n]$. In particular, let us set $S_n$ as $S$. 
Now we have $ \fd{x}{y}\in \igraph{q}$ and $\fd{z_1}{y} \in \SFD{q} \cap\igraph{q}$.

If $x=z_1$, then $S_1(\underline{\theta(z_1)},b,\uk)\in \rep$ by $R(\underline{\theta(x)},b,\uk)$ and the fact that $\db$ is $q$-saturated.

Suppose $x\neq z_1$. 
By \Cref{lem:backward}, we obtain $\fd{x}{y}\in \SFD{q}$ and $z_1 \sib x$, where $x$ is a variable.
 Then, $q$ contains an atom of the form $T(\underline{t},x,z_1,\vec{w})$. 
By normality of $q$ we have $T\in \modei{q}$. 
Note that $y\notin \atomvars{T}$: if $y=t$, then $\igraph{q}$ contains a cycle $y,x,y$ (contradicting \Cref{lem:forest}), and if $y$ appears in $(z_1,\vec{w})$ (recall $y\neq x$), then $\igraph{q}$ contains two edges colliding at $y$, one edge from $\WFD{q}$ and the other from $\SFD{q}$ (contradicting \Cref{lem:backward}).
This entails $\theta(T)\in \rep$. Since $\db$ is $q$-saturated, we obtain $S_1(\underline{\theta(z_1)},b,\uk)\in \rep$.

For an inductive argument, suppose $S_i(\underline{\theta(z_i)},b,\uk)\in \rep$. Since $\fd{z_i}{z_{i+1}}\in \FD{q}$, there exists an atom of the form
$U(\underline{z_{i}},z_{i+1},\vec{u})$ in $q$. Recall $\fd{z_{i+1}}{y}\in \SFD{q}$, which entails, by normality of $q$, that $y$ does not appear on $\vec{u}$. Moreover, $y$ is neither of $z_i$ or $z_{i+1}$. 
Hence it is correct to conclude that $\theta(U)\in \rep$. By saturation of $\db$, we obtain $S_{i+1}(\underline{\theta(z_{i+1})},b,\uk)\in \rep$.

In this way, we obtain by induction that $S(\underline{\theta(u)},b,\uk)\in \rep$. To conclude the argument, observe that $\theta(\drop{G}{y})=S'(\underline{\theta(u)},\theta(\vec{v}))\in \rep'$. Consequently,  $S(\underline{\theta(u)},c,\theta(\vec{v}))\in \rep$ for some constant $c$. 
Then, by $\fd{u}{y}\in \SFD{q}$ and the $q$-saturation of $\db$ we obtain $b=c$, leading to $\theta'(G)\in \rep$.



 We conclude that $\theta(q)\subseteq \rep$, hence $(\db,q)\in \cert$.
\end{proof}

\subsection{Consistent Edge Elimination}
We say that $F\in \modei{q}$, with $Y\defeq \notkeyvars{F}$ and $Z\defeq \outvar{\igraph{q}}{Y}$, 
is \emph{extendable} if following conditions hold:
 \begin{enumerate}
 \item\label{it:Y} $\outvar{\WFD{q}}{Y}=\emptyset$,
   \item\label{it:Z} $Z\subseteq  
    \leafqueryvars{q} \cap \nkeyqueryvars{q} \cap \nrqueryvars{q}$,
     \item\label{it:Ztwo} $Z \neq \emptyset$.
 \end{enumerate}
By \cref{it:Ztwo} also $Y$ is non-empty.
Moreover, by \cref{it:Y} we have $\outvar{\igraph{q}}{y}\subseteq \outvar{\SFD{q}}{y}$ for each $y\in Y$.

The operator $\atomextend_F$ is defined on pairs $(\db,q)$ such that
$F$ is an extendable atom of $q$.
Assuming $F= R(\underline{x},\vec{y})$, define  first 
\begin{equation}\label{eq:q0}
q_0 \defeq  \{G\in q \mid \keyvars{G}\in Y\}.
\end{equation}
 Then, let $\vec{z}=(z_1, \dots ,z_\ell)$ list the variables in $Z$.
Define
\begin{equation}\label{eq:strongel}
q' \defeq (q \setminus (q_0 \cup 
 \{R(\underline{x},\vec{y})\}))  \cup \{R'(\underline{x},\vec{y},\vec{z})\}.
\end{equation}
Note that 
\begin{equation}\label{eq:subset}
q_0\subseteq \modec{q}
\end{equation}
by \cref{it:Y} and normality of $q$. For each $i\in [\ell]$, choose the $\prec$-least atom $S_i(\underline{y_{j_i}},z_i,\uk)\in \modec{q}$
that generates $\fd{y_{j_i}}{z_i}\in \SFD{q}$, for some variable $y_{j_i}\in Y$.
Then, let
\begin{equation}\label{eq:strongel2}
\db_0 \defeq (\db \setminus \restrict{\db}{q_0}) \cup 
\{R'(\underline{a},\vec{b},\vec{c}) \mid
R(\underline{a},\vec{b}) \in \db,\ 
\forall i\in [\ell]\colon S_i(\underline{b_{j_i}},c_i,\uk)\in \db\}.
\end{equation}
We define
\begin{equation}
\atomextend_F(\db,q)\defeq (\db',q'),
\end{equation}
where $\db' \defeq \dbsaturate(\db_0,q')$.

 It can be observed that $\queryvars{q}=\queryvars{q'}$. Therefore, and since $Z\subseteq \nkeyqueryvars{q}$ and $x\notin Y\cup Z$ by \Cref{lem:forest}, we have $Y \cup Z\subseteq \nkeyqueryvars{q'}$. 
 Due to $x\notin Y\cup Z$ we have that $q'$ inherits variable-repetition-freeness from $q$.
Also, due to normality, if $F\in \modei{q}$, then $\fd{x}{y}\in \FD{F}$ implies $\fd{x}{y}\in \WFD{q}$.
 
\smallskip
\begin{lemma}\label{lem:repfree}
Let $q\in \sjfuabcq$ be a saturated and normal. 
 Let $I\in \modei{q}$ be extendable.
 If $(\db',q')=\atomextend_I(\db,q)\notin\{\true,\false\}$, then $q'$ is variable-repetition-free.
\end{lemma}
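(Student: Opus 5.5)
The only atom of $q'$ not already in $q$ is the new atom $R'(\underline{x},\vec{y},\vec{z})$. Every other atom of $q'$ comes from $q\setminus(q_0\cup\{R(\underline{x},\vec{y})\})$ and is therefore repetition-free, because $q$ is variable-repetition-free. So the plan is to show that the term sequence $x,\vec{y},\vec{z}$ contains no repeated variable. I split this into four disjointness checks, carried out in order.

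\textbf{Repetitions inside $x,\vec{y}$.} The prefix $x,\vec{y}$ is exactly the term sequence of $F=R(\underline{x},\vec{y})\in q$. It is therefore repetition-free.

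\textbf{Repetitions inside $\vec{z}$.} By construction, $\vec{z}=(z_1,\dots,z_\ell)$ lists the elements of the set $Z$ without repetition.

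\textbf{Overlap of $x$ with $Z$.} Here the surrounding text already notes $x\notin Y\cup Z$ via \Cref{lem:forest}. The argument I would give is as follows. Every $y\in Y$ satisfies $\fd{x}{y}\in\FD{F}$. Since $F\in\modei{q}$ and $q$ is normal, this gives $\fd{x}{y}\in\WFD{q}\subseteq\igraph{q}$. If $x=z$ for some $z\in Z$, then $z\in\outvar{\igraph{q}}{y}$ for some $y\in Y$. This yields the cycle $x\to y\to x$ in $\igraph{q}$, contradicting \Cref{lem:forest}. If $x=\emp$, there is nothing to check, because $\emp\notin Z$.

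\textbf{Overlap of $\vec{y}$ with $Z$.} This is the main step. Suppose $y_i=z$ for some $z\in Z$, and let $y\in Y$ be such that $\fd{y}{z}\in\igraph{q}$.
\begin{itemize}
\item If $y=z$, this is a self-loop in $\igraph{q}$, which is impossible because $q$ is variable-repetition-free.
\item Otherwise $y\neq z$ and both lie in $\notkeyvars{F}$, i.e.\ $y\sib z$. By \Cref{lem:puuh}, $\fd{y}{z}\notin\FD{q}$, which contradicts $\fd{y}{z}\in\igraph{q}\subseteq\FD{q}$.
\end{itemize}

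An alternative route for this last step uses condition~\ref{it:Y} of extendability. It forces $\fd{y}{z}\in\SFD{q}$, and that contradicts \Cref{it:normal2} of normality directly. Either argument suffices; I would use \Cref{lem:puuh} because it is the shortest.

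Since the four checks together cover every pair of positions in $x,\vec{y},\vec{z}$, combining them shows that $R'(\underline{x},\vec{y},\vec{z})$ is repetition-free, and hence $q'$ is variable-repetition-free.
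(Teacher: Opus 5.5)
Your proof is correct and follows the paper's argument. Only two repetitions can be new. The case $x\in Z$ is excluded because it yields the cycle $x\to y\to x$ in the attack-propagation graph, contradicting \Cref{lem:forest}, and the case $Y\cap Z\neq\emptyset$ is excluded by \Cref{lem:puuh}; you also spell out the self-loop case and why $\fd{x}{y}\in\WFD{q}$, which the paper leaves implicit.
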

\begin{proof}
Let us assume that
$F=R(\underline{x},\vec{y})$.
Recall that, beside dropping atoms, $q'$ only replaces $R(\underline{x},\vec{y})$ with $R'(\underline{x},\vec{y},\vec{z})$. Thus we only need to exclude the possibility that $Y\cap Z\neq \emptyset $ or $x\in Z$.
The first case implies that there are $y,y'\in Y$ such that $\fd{y}{y'}\in \igraph{q}$.
In particular, $\fd{y}{y'}\in \FD{q}$.
Since then $y\neq y'$ due to $q$ being variable-repetition-free, we obtain a contradiction with \Cref{lem:puuh}.
 The second case, in turn, leads to a cycle in $\igraph{q}$, in contradiction with \Cref{lem:forest}.
Therefore, $Y\cap Z= \emptyset $ and $x\notin Z$, and it is correct to conclude that $q'$ remains variable-repetition-free.
\end{proof}

\smallskip
\begin{lemma}\label{lem:little2}
Let $q\in \sjfuabcq$ be saturated and normal. 
Let $F\in \modei{q}$ be extendable.
 Let $u\in \queryvars{q} \setminus Y$ and $v \in \queryvars{q}$, where $Y\defeq \notkeyvars{F}$.
 If $(\db',q')=\atomextend_F(\db,q)\notin\{\true,\false\}$, then
 \begin{itemize}
\item $\fd{u}{v} \in \FD{q}$ implies $\fd{u}{v} \in \FD{q'}$,
\item $\fd{u}{v} \in \WFD{q}$ implies $\fd{u}{v} \in \WFD{q'}$,
\item $\fd{u}{v} \in \SFD{q}$ if and only if $\fd{u}{v} \in \SFD{q'}$.
\end{itemize}
\end{lemma}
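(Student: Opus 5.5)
The plan is to compare the key-nonkey graphs of $q$ and $q'$ atom by atom. By \eqref{eq:strongel}, $q'$ arises from $q$ by deleting the atoms of $q_0$ (those with key in $Y$) and replacing $F=R(\underline{x},\vec{y})$ by $R'(\underline{x},\vec{y},\vec{z})$. Hence
\[
\FD{q'}=\bigl(\FD{q}\setminus\FD{q_0}\bigr)\cup\{\fd{x}{z}\mid z\in Z\}.
\]
The second item follows from the other two, since $\WFD{\cdot}=\FD{\cdot}\setminus\SFD{\cdot}$: if $\fd{u}{v}\in\WFD{q}$, the first item gives $\fd{u}{v}\in\FD{q'}$, and the "if" direction of the third item excludes $\fd{u}{v}\in\SFD{q'}$. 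So the work lies in the first and third items.

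\textbf{First item.} Take $\fd{u}{v}\in\FD{q}$ generated by an atom $G$ with $\keyvars{G}=u\notin Y$. Then $G\notin q_0$. If $G\neq F$, then $G\in q'$ and it still generates $\fd{u}{v}$. If $G=F$, then $R'$ generates the same edge $\fd{x}{v}$ with $v\in Y$.

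\textbf{Third item, ``only if''.} Suppose $\fd{u}{v}\in\SFD{q}$ with $u\neq v$. Let $F_1$ generate it, and let $G_1\neq F_1$ satisfy $\fd{u}{v}\in\FDclosure{G_1}$. Then $G_1$ contains the edge $\fd{u}{v}$ or the edge $\fd{\emp}{v}$, so $\keyvars{G_1}\in\{u,\emp\}$, and neither value lies in $Y$. Thus neither $F_1$ nor $G_1$ belongs to $q_0$. Each of them is either kept in $q'$ or equals $F$, and $F$ is replaced by $R'$, which generates every edge $F$ did. Therefore the pair $F_1',G_1'$ witnesses $\fd{u}{v}\in\SFD{q'}$.

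One could note that $F_1=F$ is impossible: normality together with $F\in\modei{q}$ forces all edges of $F$ to be inconsistent. This observation is not needed for the argument, however.

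\textbf{Third item, ``if''.} Take witnesses $F_1',G_1'\in q'$ for $\fd{u}{v}\in\SFD{q'}$. If the edges they use already come from atoms of $q$ (with $R'$ read as $F$ whenever the edge is $\fd{x}{y}$ with $y\in Y$, or $\fd{\emp}{y}$), then the same pair witnesses $\fd{u}{v}\in\SFD{q}$. The only new edges are $\fd{x}{z}$ for $z\in Z$, so the remaining case is $v=z\in Z$ with $R'$ providing one of the two witnesses.

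In this case, choose $y_j\in Y$ with $\fd{y_j}{z}\in\igraph{q}$. By \cref{it:Y} this edge lies in $\SFD{q}\cap\igraph{q}$. The other witness is an atom $H\in q\setminus(q_0\cup\{F\})$ that generates $\fd{u}{z}$ or $\fd{\emp}{z}$, where $u\in\{x,\emp\}$ is consistent with the key of $R'$. This gives $\fd{x}{z}\in\FDclosure{q\setminus\{F\}}$. We also have $\fd{x}{y_j}\in\FD{F}$ and $\fd{y_j}{z}\in\SFD{q}$. Saturation of $q$ therefore yields $\fd{x}{z}\in\SFD{q}$, which is exactly the required conclusion when $u=x$.

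This conclusion also reveals that the case cannot actually occur. Since $F\in\modei{q}$ and normality hold, $\fd{x}{y_j}\in\WFD{q}$. The edge $\fd{y_j}{z}$ is in $\igraph{q}\cap\SFD{q}$ precisely because some inconsistent edge $\fd{w}{y_j}$ has $\fd{w}{z}\notin\SFD{q}$. By \Cref{lem:samesourcesat}, $w=x$, which contradicts $\fd{x}{z}\in\SFD{q}$. When $x=\emp$ and $u$ is a variable, the same saturation step yields $\fd{\emp}{z}\in\SFD{q}$, which contradicts \cref{it:normal3} of normality.

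I expect this case analysis of the newly created edges $\fd{x}{z}$ to be the main obstacle. In particular, care is needed to track which of $F_1',G_1'$ is $R'$ and whether the key involved is $\emp$.
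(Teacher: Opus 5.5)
You handle the first two items, the ``only if'' direction, and the subcase of the ``if'' direction where $R'$ is the \emph{generating} witness (so $u=x$ is a variable) correctly. There, your saturation-plus-\Cref{lem:samesourcesat} argument is a valid alternative to the paper's appeal to \Cref{lem:noupstrong,lem:backward}.

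The gap is the subcase in which $R'$ is the \emph{closure} witness. Here $\fd{u}{z}\in\FDclosure{R'}$, and the other witness $H\in q\setminus(q_0\cup\{F\})$ generates $\fd{u}{z}$. This requires $\keyvars{R'}\in\{u,\emp\}$, i.e.\ $x\in\{u,\emp\}$; you wrote the condition the other way round.

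If $x=u$, your saturation step works. If $x=\emp$, it does not. A constant key for $F$ is exactly how $\atomextend$ is used inside $\textsc{Prune}$. In that case $H$ only supplies $\fd{u}{z}$ for a variable $u\notin Y$. Nothing yields $\fd{\emp}{z}\in\FDclosure{q\setminus\{F\}}$, because $u$ need not be reachable from $\emp$ once $F$ is removed. So \cref{it:sat3} of \Cref{def:saturation} is unavailable, and your claim that ``the same saturation step yields $\fd{\emp}{z}\in\SFD{q}$'' fails.

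This case genuinely occurs and produces no contradiction. Take the consistent-edge-elimination step in the pruning run of \Cref{ex:prev} (see \Cref{fig:prune}). There $F$ has a constant key, $B_0(\underline{u},y_2)$ generates $\fd{u}{y_2}$, and the widened atom is a closure witness for that edge. Yet $y_2$ is not reachable from $\emp$ in $\FD{q\setminus\{F\}}$, and $\fd{\emp}{y_2}\notin\SFD{q}$.

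What is missing is a direct proof that $\fd{u}{z}$ is consistent in $q$. This needs acyclicity of the attack graph, not saturation alone. Since $H\in q$ generates $\fd{u}{z}$, the edge lies in $\FD{q}$. If it were in $\WFD{q}$, it would belong to $\igraph{q}$ together with $\fd{y_j}{z}\in\igraph{q}\cap\SFD{q}$, with $u\neq y_j$. That contradicts \Cref{lem:backward}, which forces two distinct $\igraph{q}$-edges with a common target to be consistent. Equivalently, \Cref{lem:swcollide} applied to $\fd{\emp}{y_j},\fd{u}{z}\in\WFD{q}$ and $\fd{y_j}{z}\in\SFD{q}$ gives exactly the $\fd{\emp}{z}\in\SFD{q}$ you were after. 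It does so only under the hypothesis $\fd{u}{z}\in\WFD{q}$, and its proof rests on the attack-graph argument. This is how the paper disposes of the whole closure-witness case, uniformly in $x$.
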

\begin{proof}
The case where $u=v$ follows by \Cref{lem:repfree}, so we assume that $u\neq v$.
The first item is clear, and the second item follows from the first and the third item. Thus, we prove the third item.
Let us write $F'$ for the atom corresponding to $F\in q$ in $q'$; that is, $F'=F$ for $F\neq R$ and $F'=R'(\underline{x},\vec{y},\vec{z})$ for $F=R(\underline{x},\vec{y})$.

Assume that $\fd{u}{v}\in \SFD{q}$. 
Then we find distinct $G,H\in q$ such that $\fd{u}{v}\in \FD{G}$ and $\fd{u}{v}\in \FDclosure{H}$. Since  $u\notin Y$, it follows that $\fd{u}{v}\in \FD{G'}$ and $\fd{u}{v}\in \FDclosure{H'}$. 
 Thus $\fd{u}{v}\in \SFD{q'}$.

Assume that $\fd{u}{v}\in \SFD{q'}$. Then we find distinct $G',H'\in q'$ such that $\fd{u}{v}\in \FD{G'}$ and $\fd{u}{v}\in \FDclosure{H'}$.
If $\fd{u}{v}\in \FD{G}$ and $\fd{u}{v}\in \FDclosure{H}$, then $\fd{u}{v}\in \SFD{q}$. 

Thus, suppose first that $\fd{u}{v}\notin \FD{G}$. This is only possible when $G=R$, $u=x$, and $v\in Z$. Since $G$ and $H$ are distinct, we have $H=H'$ and consequently $\fd{u}{v}\in \FDclosure{H}$. Then we have $\fd{x}{v}\in \FD{H}$ or $\fd{\emp}{v}\in \FD{H}$. The first case contradicts \Cref{lem:noupstrong}, and the second case entails $\fd{\emp}{v}\in \SFD{q}$ by \Cref{lem:backward}, contradicting the normality of $q$. Thus this is case is not possible. 

Suppose then that $\fd{u}{v}\in \FD{G}$ and $\fd{u}{v}\notin \FDclosure{H}$. This is only possible when $H=R$ and $v\in Z$. Now, we have $\fd{y}{v}\in \igraph{q}\cap \SFD{q}$, for some $y\in Y$, and $\fd{u}{v}\in \FD{q}$, which entails using \Cref{lem:backward} that $\fd{u}{v}\in \SFD{q}$, as required.
\end{proof}

\smallskip
\begin{lemma}\label{lem:root2}
Let $q\in \sjfuabcq$ be saturated and normal. 
Let $F\in \modei{q}$ be extendable.
 If $(\db',q')=\atomextend_F(\db,q)\notin\{\true,\false\}$, 
then $\initqueryvars{q}=\initqueryvars{q'}$.
\end{lemma}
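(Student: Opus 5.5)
Following the pattern of \Cref{lem:root1,lem:root11}, I will prove the two equalities $\rootqueryvars{q}=\rootqueryvars{q'}$ and $\rootqueryvars{q}\cap\srcqueryvars{q}=\rootqueryvars{q'}\cap\srcqueryvars{q'}$ separately. Throughout, write $F=R(\underline{x},\vec{y})$, $Y\defeq\notkeyvars{F}$, $Z\defeq\outvar{\igraph{q}}{Y}$, and $q_0$ as in \eqref{eq:q0}. Recall that $\queryvars{q'}=\queryvars{q}$. Also recall that $\FD{q'}$ is obtained from $\FD{q}$ by deleting the edges with source in $Y$ (coming from $q_0\subseteq\modec{q}$) and adding the edges $\fd{x}{z}$ for $z\in Z$. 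For edges with source outside $Y$, \Cref{lem:little2} transfers $\FD{}$, $\WFD{}$ and $\SFD{}$ membership.

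\textbf{Roots.} First I note that every $y\in Y$ has an incoming edge $\fd{x}{y}\in\WFD{q}$ by normality, and this edge persists in $q'$ by \Cref{lem:little2}. So $Y$ is disjoint from both root sets. Every $z\in Z$ has the new edge $\fd{x}{z}$ in $q'$. This edge must be inconsistent: otherwise $\fd{x}{z}\in\SFD{q'}$, hence $\fd{x}{z}\in\SFD{q}$ by \Cref{lem:little2}, contradicting \Cref{lem:noupstrong}. Hence it lies in $\igraph{q'}$, and in $q$ the vertex $z$ also has an incoming $\igraph{q}$-edge. So $Z$ is disjoint from both root sets as well.

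For $v\notin Y\cup Z$, I compare incoming $\igraph{}$-edges. An incoming edge $\fd{u}{v}$ with $u\notin Y$ transfers in both directions by \Cref{lem:little2}, together with the witness $\fd{w}{u}\in\WFD{}$ for consistent edges; note that $w\ne$ any element of $Y$ forces care, but $\fd{w}{u}$ with $u\notin Y$ transfers. An edge $\fd{y}{v}\in\igraph{q}$ with $y\in Y$ cannot occur, because then $v\in Z$.

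\textbf{Sources.} For the source-SCC part, I argue as in \Cref{lem:root1}. Take $u\in\srcqueryvars{q}\cap\rootqueryvars{q}$, so $u\notin Y\cup Z$. Suppose some edge $\fd{u_0}{u}\in\FD{q'}$ has $u_0\notin\reach{u}{\FD{q'}}$. This edge already lies in $\FD{q}$, since the new edges only point into $Z$. Hence there is a $\FD{q}$-path from $u$ to $u_0$, and I must reroute its deleted edges $\fd{y}{w}$ with $y\in Y$. Each such edge is consistent. If it is not in $\igraph{q}$, then \Cref{lem:verysimple} applied to the preceding edge (which exists, since $u\notin Y$) gives a shortcut that avoids $y$. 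If it is in $\igraph{q}$, then $w\in Z$, and I replace the segment through $y$ by the new edge $\fd{x}{w}$ of $q'$, using that $y$ is entered only via $x$, as $\fd{x}{y}$ is the unique $\WFD{}$-edge into $y$. This rerouting is the step I expect to be the main obstacle. The difficulty is that the path may enter $y$ through a consistent edge $\fd{p}{y}$ from some $p\ne x$. To handle that case, I would invoke \Cref{lem:verysimple} again, or \Cref{lem:swcollide}, to produce $\fd{p}{w}\in\SFD{q}$ directly.

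\textbf{Converse.} For the converse inclusion $\srcqueryvars{q'}\cap\rootqueryvars{q'}\subseteq\srcqueryvars{q}$, take $u\notin\srcqueryvars{q}$. Then there is an edge $\fd{u_0}{u}$ with $u_0$ not reachable back from $u$. If this edge survives in $q'$, reachability in $q'$ implies reachability in $q$: each new edge $\fd{x}{z}$ is simulated in $q$ by the path $x\to y\to z$. This yields a contradiction. If it does not survive, then $u_0\in Y$. In that case I replace $u_0$ by $x$, since $\fd{x}{u_0}\in\FD{q}$ and $\fd{x}{u}$ is either an edge of $q'$ (when $u\in Z$) or obtained via \Cref{lem:verysimple}, and then repeat the argument.
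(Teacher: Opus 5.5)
Your proposal is correct and follows the paper's proof essentially step for step. It shows the root sets are equal because $Y\cup Z$ consists of non-roots in both queries and the remaining attack-propagation edges transfer via \Cref{lem:little2}. It gets the inclusion into $\srcqueryvars{q'}$ by rerouting a shortest $\FD{q}$-path from $u$ through $Y$ along the new edges $\fd{x}{z}$, and the converse by replacing $u_0\in Y$ with $x$.

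Two small corrections. At the step you flagged, where $y\in Y$ is entered by a consistent edge $\fd{p}{y}$, the shortcut $\fd{p}{w}\in\SFD{q}$ comes from \Cref{lem:ctrans}. Neither \Cref{lem:verysimple} nor \Cref{lem:swcollide} applies there, since their hypotheses ($\fd{y}{w}\notin\igraph{q}$, resp.\ $\fd{p}{y}\in\WFD{q}$) fail. Also, when pulling a witness edge $\fd{w}{u}$ back from $q'$ to $q$, the fact you need is that $u\in\keyqueryvars{q'}$ forces $u\notin Z$, so the edge is not one of the new ones.
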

\begin{proof}
 We need to prove $\rootqueryvars{q}\cap \srcqueryvars{q} =\rootqueryvars{q'} \cap \srcqueryvars{q'}$.
 Let us write $F'$ for the atom corresponding to $F\in q$ in $q'$; that is, $F'=F$ for $F\neq R$ and $F'=R'(\underline{x},\vec{y},\vec{z})$ for $F=R(\underline{x},\vec{y})$. We divide the proof into separate claims, from which the lemma statement follow.
\begin{claim}
$\rootqueryvars{q} \subseteq \rootqueryvars{q'}$.
\end{claim}
\begin{claimproof}
Toward contradiction, suppose that $u\in \rootqueryvars{q}\setminus \rootqueryvars{q'}$. Then there exists an edge of the form $\fd{w}{u}\in\igraph{q'}$.  
Since $Y \cup Z\subseteq \nkeyqueryvars{q'}$, we have $w\notin Y\cup Z$. We consider two cases.

Suppose first that $\fd{w}{u}\in\WFD{q'}$.  Since $u\in \rootqueryvars{q}$, we also have $\fd{w}{u}\notin\WFD{q}$. 
Furthermore, we have $\fd{w}{u}\notin\SFD{q}$  by $w\notin Y$, 
 and  \Cref{lem:little2}.
 Therefore, $\fd{w}{u}\in \FD{q'}\setminus \FD{q}$, which is only possible when $w=x$ and $u\in Z$. However,
 $Z\cap  \rootqueryvars{q}=\emptyset$, while $u\in \rootqueryvars{q}$, a contradiction.

Suppose then that $\fd{w}{u}\in\FD{q'}\cap\SFD{q'}$. Then we find another edge $\fd{p}{w}\in \WFD{q'}$ such that $\fd{p}{u}\notin\SFD{q'}$.
Since $Y \cup Z\subseteq \nkeyqueryvars{q'}$ and $\fd{w}{u},\fd{p}{w}\in\FD{q'}$,
 we have $w,p\notin Y$. Hence, \Cref{lem:little2} implies $\fd{w}{u}\in\FD{q}\cap\SFD{q}$ and  $\fd{p}{u}\notin\SFD{q}$.
Since $u\in \rootqueryvars{q}$, it holds that
$\fd{p}{w}\notin \WFD{q}$ (otherwise, we would obtain $\fd{p}{u}\in \igraph{q}$). Moreover, by \Cref{lem:ctrans} we have $\fd{p}{w}\notin \SFD{q}$. Summing up, $\fd{p}{w}\in \FD{q'}\setminus \FD{q}$.  This entails $p=x$ and $w\in Z$. However, this contradicts $w\notin Z$, which was established above.
We conclude by contradiction that neither the case that $\fd{w}{u}\in\FD{q'}\cap\SFD{q'}$ is possible. Hence $\rootqueryvars{q} \subseteq \rootqueryvars{q'}$.
\end{claimproof}

\begin{claim}
$\rootqueryvars{q'} \subseteq \rootqueryvars{q}$.
\end{claim}
\begin{claimproof}
  Suppose $u\notin \rootqueryvars{q}$.  
By assumption there exists an edge of the form $\fd{w}{u}\in\igraph{q}$.  
Assume first that $\fd{w}{u}\in\WFD{q}$. Since $\outvar{\WFD{q}}{Y}=\emptyset$, this implies  $w\notin Y$, hence
 $\fd{w}{u}\in\WFD{q'}$  by \Cref{lem:little2}, i.e., $u\notin \rootqueryvars{q'}$.
 Then, suppose that $\fd{w}{u}\in\FD{q}\cap\SFD{q}$. This entails that we find another edge $\fd{p}{w}\in \WFD{q}$ such that $\fd{p}{u}\notin\SFD{q}$.
 In particular, we obtain $p\notin Y$ by $\outvar{\WFD{q}}{Y}=\emptyset$. 
 
 Assume first that $w\notin Y$. Then,
   \Cref{lem:little2} implies $\fd{p}{w}\in \WFD{q'}$, $\fd{w}{u}\in\SFD{q'}$, and $\fd{p}{u}\notin\SFD{q'}$, which is tantamount
to $\fd{w}{u}\in\igraph{q'}$, i.e., $u \notin \rootqueryvars{q'}$.

 Assume then that $w\in Y$. Then we obtain $u\in Z$ by $\fd{w}{u}\in\igraph{q}$. Toward contradiction, suppose $u\in \rootqueryvars{q'}$. Then, the edge $\fd{x}{u}$ that belongs to $\FD{q'}$ cannot belong to $\WFD{q'}$, i.e., it belongs to $\SFD{q'}$. In particular, there exists an edge $\fd{t}{u}$ in $\FD{q'\setminus \{F'\}}\subseteq \FD{q\setminus \{F\}}$, where $t\in \{x,\emp\}$. Note that $t=x$ would contradict \Cref{lem:noupstrong}, as the sequence $x,y,z$ form a path of length $2$ in $\igraph{q}$. 
 Hence $t=\emp$, in which case by normality we obtain $\fd{\emp}{u}\in \WFD{q}$. However, we also have $\fd{w}{u}\in\igraph{q}\cap\SFD{q}$, in which case $w\neq \emp$. Hence, we have a contradiction with \Cref{lem:backward}, by which we conclude that $u\notin \rootqueryvars{q'}$. This concludes the proof of the claim. 
  \end{claimproof}
  
 \begin{claim}
 $\srcqueryvars{q} \cap \rootqueryvars{q}\subseteq \srcqueryvars{q'}$.
 \end{claim}
 \begin{claimproof}
Suppose toward contradiction that the claim is false. 
Let $u\in  (\srcqueryvars{q} \cap \rootqueryvars{q}) \setminus \srcqueryvars{q'}$.
This means we find an edge $\fd{u_0}{u}\in \FD{q'}$ such that $u_0 \notin \reach{u}{\FD{q'}}$.
Moreover, $u\notin Y$ by $Y\cap \rootqueryvars{q}=\emptyset$.

By construction, it is not difficult to see that $u \in \reach{u_0}{\FD{q}}$; in particular, if $\fd{u_0}{u}$ is an edge from $x$ to a variable in $Z$, it can be replaced with two edges: one from $x$ to a variable $y\in Y$ such that $\fd{y}{z}\in \igraph{q}$, and the other being the edge $\fd{y}{z}$.

Now,
 let $D=(u, \dots ,u_0)$ be a shortest path witnessing $u \in \reach{u_0}{\FD{q}}$. 
We claim that $D$ gives rise to a path $D'$ in $\FD{q'}$ from $u$ to $u_0$, contradicting $u_0 \notin \reach{u}{\FD{q'}}$. For this, suppose $D$ contains an edge $\fd{p}{w}$ generated by $G\in q\setminus q'$. Then either $G=R$ or $p\in Y$. In the first case $\fd{p}{w}$ is generated by $R'\in q'$, so assume the second
case where $p\in Y$.  In this case, $p \neq u$ 
(as $u\notin Y$), and moreover 
\Cref{it:Y} entails $\fd{p}{w}\in\SFD{q}$. Let $\fd{r}{p}$ be the edge preceding $\fd{p}{w}$ in $D$. Then $\fd{r}{p}\notin \SFD{q}$, because otherwise by \Cref{lem:ctrans} we obtain $\fd{r}{p}\in \SFD{q}$ in contradiction with the minimality of $D$. Hence $\fd{r}{p}\in \WFD{q}$. 
Then  $p\in Y$, $\fd{x}{p}\in \WFD{q}$, and  
 \Cref{lem:samesourcesat}  entail $r=x$. 
 
 Summing up, we have shown that $\fd{p}{w}$ belongs to $\SFD{q}$ and is preceded by $\fd{x}{p}\in \WFD{q}$ in $D$, and $p\in Y$. 
  Note that $\fd{p}{w}\notin \igraph{q}$ implies $\fd{x}{w}\in \SFD{q}$, contradicting the minimality of $D$. 
  Hence we have $\fd{p}{w}\in \igraph{q}$. In this case, we obtain $w\in Z$ and $\fd{x}{w}\in \FD{R'}$, meaning that the edges $\fd{x}{p}$ and $\fd{p}{w}$ in $D$ can be replaced with the edge $\fd{x}{w}$ from $\FD{q'}$. In this way, $D$ gives rise to a path $D'$ in $\FD{q'}$ from $u$ to $u_0$, in contradiction with  $u_0 \notin \reach{u}{\FD{q'}}$. Therefore we conclude that $\srcqueryvars{q} \cap \rootqueryvars{q}\subseteq \srcqueryvars{q'}$.
  \end{claimproof}
  
  \begin{claim}
 $\srcqueryvars{q'} \subseteq \srcqueryvars{q}$.
 \end{claim}
 \begin{claimproof}
Assuming $u\in \srcqueryvars{q'}$, suppose toward contradiction that $u\notin \srcqueryvars{q}$. Then we find an edge $\fd{u_0}{u} \in \FD{q}$ such that $u_0 \notin \reach{u}{ \FD{q}}$.
If $\fd{u_0}{u} \in \FD{q'}$, then by assumption $u_0 \in \reach{u}{ \FD{q'}}$, which implies $u_0 \in \reach{u}{ \FD{q}}$; for this, recall that any edge in $\FD{q'}\setminus \FD{q}$ proceeds from $x$ to $Z$ and is replaceable with two edges in $\FD{q}$. However, this contradicts the fact that $u_0 \notin \reach{u}{ \FD{q}}$.
 Thus, suppose $\fd{u_0}{u} \notin \FD{q'}$. This entails $u_0\in Y$. 
 By \Cref{it:Y} we then obtain $\fd{u_0}{u} \in \SFD{q}$, and moreover, we have $\fd{x}{u_0}\in \WFD{q}$.
 
 Consider first the possibility that $\fd{u_0}{u} \in \igraph{q}$. This means that $u\in Z$. 
 However, as $\fd{x}{u}\in \FD{q'}$ and $ Z\subseteq \nkeyqueryvars{q'}$, this creates a contradiction with $u\in \srcqueryvars{q'}$.
 Hence, suppose $\fd{u_0}{u} \notin \igraph{q}$. In this case, however, $\fd{x}{u} \in \SFD{q}$ obtains. Since $x\notin Y$, we then have 
 $\fd{x}{u} \in \SFD{q'}$ by \Cref{lem:little2}.  Thus, by $u\in \srcqueryvars{q'}$ we obtain $x \in \reach{u}{ \FD{q'}}$, which implies $x \in \reach{u}{ \FD{q}}$. Hence, composing with the edge $\fd{x}{u_0}\in \WFD{q}$ we obtain $u_0 \in \reach{u}{ \FD{q}}$, a contradiction.
 We conclude by contradiction that $\srcqueryvars{q'} \subseteq \srcqueryvars{q}$.
 \end{claimproof}
 This concludes the proof.
\end{proof}

\smallskip
\begin{lemma}\label{lem:nr5}
Let $q\in \sjfuabcq$ be a saturated and normal. 
 Let $F\in \modei{q}$ be extendable.
 If $(\db',q')=\atomextend_F(\db,q)\notin\{\true,\false\}$, then
  $\nrqueryvars{q}=\nrqueryvars{q'}$.
\end{lemma}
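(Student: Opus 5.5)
We prove the two inclusions separately, in the same style as \Cref{lem:nr,lem:nr12}. Both directions rest on \Cref{lem:root2}, which gives $\initqueryvars{q}=\initqueryvars{q'}$, and on \Cref{lem:little2}, which transfers consistent edges whose source lies outside $Y\defeq\notkeyvars{F}$. Write $F=R(\underline{x},\vec{y})$ and $Z\defeq\outvar{\igraph{q}}{Y}$. Recall that $\queryvars{q}=\queryvars{q'}$.

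\emph{Inclusion $\nrqueryvars{q}\subseteq\nrqueryvars{q'}$.} Let $v\in\nrqueryvars{q}$, so there is $u\in\initqueryvars{q}$ with $\fd{u}{v}\in\SFD{q}$. We first show $u\notin Y$. Every $y\in Y$ has the incoming edge $\fd{x}{y}\in\WFD{q}\subseteq\igraph{q}$, because $F\in\modei{q}$ and by normality. Hence $Y\cap\rootqueryvars{q}=\emptyset$, whereas $\initqueryvars{q}\subseteq\rootqueryvars{q}$. With $u\notin Y$, \Cref{lem:little2} gives $\fd{u}{v}\in\SFD{q'}$. By \Cref{lem:root2}, $u\in\initqueryvars{q'}$, so $v\in\nrqueryvars{q'}$. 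The case $u=v$ is immediate, since dummy edges remain and $v$ is then itself a guard.

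\emph{Inclusion $\nrqueryvars{q'}\subseteq\nrqueryvars{q}$.} Let $v\in\nrqueryvars{q'}$, witnessed by $u\in\initqueryvars{q'}$ with $\fd{u}{v}\in\SFD{q'}$. By \Cref{lem:root2}, $u\in\initqueryvars{q}$. To apply \Cref{lem:little2} in the reverse direction we again need $u\notin Y$. Suppose instead $u\in Y$. Then $u\in Y\subseteq\nkeyqueryvars{q'}$, and the only edge of $\FD{q'}$ into $u$ comes from $x$ via $R'$. Since $u$ is a root, \Cref{lem:consecutive}/\Cref{lem:backward} type reasoning applies. Simpler still, $u\in\initqueryvars{q}\subseteq\rootqueryvars{q}$ is disjoint from $Y$ by the argument of the first inclusion, which is a contradiction. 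So $u\notin Y$, and \Cref{lem:little2} gives $\fd{u}{v}\in\SFD{q}$, hence $v\in\nrqueryvars{q}$.

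\emph{Main obstacle.} The main obstacle is making sure that no new consistent edges with source outside $Y$ appear in $q'$, in particular edges $\fd{x}{z}$ with $z\in Z$ generated by $R'$. Such an edge could otherwise newly guard some $z$. This is excluded by the ``if and only if'' of \Cref{lem:little2}, whose proof handles exactly the case $G=R$, $u=x$, $v\in Z$ via \Cref{lem:noupstrong} and normality. We additionally observe that the $z\in Z$ are already guarded in $q$ by condition \Cref{it:Z} of extendability, so even that case is harmless for the set equality.
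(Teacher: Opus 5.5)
Your proof is correct and follows the paper's argument. In both inclusions you check $u\notin Y$, then transfer the guard via \Cref{lem:root2} and the consistent edge $\fd{u}{v}$ via the ``if and only if'' of \Cref{lem:little2}.

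The only difference is in the reverse inclusion. The paper excludes $u\in Y$ by observing that such a $u$ has an incoming edge $\fd{x}{u}\in\FD{q'}$ but lies in $Y\subseteq$ the non-key variables of $q'$, so it cannot belong to a source SCC of $\FD{q'}$. You instead first pull $u$ back into $\initqueryvars{q}\subseteq\rootqueryvars{q}$ and reuse $Y\cap\rootqueryvars{q}=\emptyset$. This is equally valid and slightly simpler.
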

\begin{proof}
Let $v\in \nrqueryvars{q}$. 
Then, there exists 
 $u\in \initqueryvars{q}$ such that $\fd{u}{v}\in \SFD{q}$.
In particular, $u\notin Y$, since $Y\cap \rootqueryvars{q}=\emptyset$ (and, by definition, $ \initqueryvars{q}\subseteq \rootqueryvars{q}$).
By \Cref{lem:root2} we obtain $u\in \initqueryvars{q'}$, and by \Cref{lem:little2} we have $\fd{u}{v}\in \SFD{q'}$. 
Hence $v\in \nrqueryvars{q'}$.
  
Conversely, assume $v\in \nrqueryvars{q'}$. 
Then, we find 
 $u\in \initqueryvars{q'}$ such that $\fd{u}{v}\in \SFD{q'}$.  
Note that $u\notin Y$; otherwise, if $u\in Y$, we obtain $\fd{x}{u}\in \FD{q'}$, which together with $Y\subseteq \nkeyqueryvars{q'}$
implies $u\notin \srcqueryvars{q}$, a contradiction. (Recall that by definition, $\initqueryvars{q'}\subseteq \srcqueryvars{q'} $.)
Now, by \Cref{lem:root2} we obtain $u\in \initqueryvars{q}$, and by \Cref{lem:little2} we have $\fd{u}{v}\in \SFD{q}$. Hence $v\in \nrqueryvars{q}$.
\end{proof}

\smallskip
\begin{lemma}\label{lem:strongel}
Let $q\in \sjfuabcq$ be a saturated and normal. 
 Let $I\in \modei{q}$ be extendable.
 If $(\db',q')=\atomextend_I(\db,q)\notin\{\true,\false\}$, then
\begin{enumerate}[(a)]
\item\label{it:saetwo}  $q'\in \sjfuabcq$; 
\item\label{it:saeone} $q'$ is saturated;
\item\label{it:saeoneone} $q'$ is normal.
\end{enumerate}
\end{lemma}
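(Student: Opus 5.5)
We follow the pattern of \Cref{lem:vequery1,lem:vequery11}. Throughout, write $F=R(\underline{x},\vec{y})$ for the extendable atom $I$, and set $Y\defeq\notkeyvars{F}$ and $Z\defeq\outvar{\igraph{q}}{Y}$. Recall that $q'$ is obtained from $q$ by deleting $q_0\cup\{F\}$ and adding $R'(\underline{x},\vec{y},\vec{z})$. The key observations about this change are the following.
\begin{itemize}
\item $\FD{q'}$ arises from $\FD{q}$ by deleting the consistent edges leaving $Y$; by \eqref{eq:subset} these are exactly the edges generated by $q_0$.
\item $\FD{q'}$ gains the edges $\fd{x}{z}$ for $z\in Z$.
\item The consistent edges between vertices outside $Y$ are unchanged, by \Cref{lem:little2}.
\end{itemize}
Variable-repetition-freeness of $q'$ is \Cref{lem:repfree}.

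\textbf{Item (a): acyclicity of the attack graph.} We show that $G'\attacks{q'}H'$ implies $G\attacks{q}H$, where an atom of $q'$ corresponds to itself in $q$, and $R'$ corresponds to $F$.

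First, the Gaifman graph of $q'$ is contained in that of $q$ up to the new edges $\{x,z\}$, $\{y,z\}$ for $y\in Y$, $z\in Z$, and $\{y,z\}$, $\{z,z'\}$ inside $R'$. Each such edge is replaced in $\gaifman{q}$ by a detour through some $y_{j}\in Y$ with $\fd{y_j}{z}\in\igraph{q}\cap\SFD{q}$. This is the same detour trick used in \Cref{lem:addcfd}.

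Second, we compare closures. We claim $\keyclosure{G}{q}\cap(\queryvars{q}\setminus(Y\cup Z))\subseteq\keyclosure{G'}{q'}$ for $G\neq F$. A directed path in $\FD{q\setminus\{G\}}$ that passes through $Y$ must enter $Y$ via an edge $\fd{x}{y}$ generated by $F$, since each $y\in Y$ has $x$ as its unique inconsistent predecessor by \Cref{lem:samesourcesat}. It then leaves $Y$ via consistent edges into $Z$, or into vertices $w$ with $\fd{x}{w}\in\SFD{q}$ (\Cref{lem:verysimple}). Both continuations are simulated in $q'$.

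If a separating vertex of $U$ lies in $Y\cup Z$, we have to argue separately, using \Cref{lem:noupstrong} and \Cref{lem:backward}, that the detour can be rerouted. For $G=F$, i.e.\ $G'=R'$, the enlarged non-key set $Y\cup Z$ is connected to $Y$ through $F$'s attacks. Here \Cref{lem:upattack} gives $F\attacks{q}z$ for each $z\in Z$, so every attack path of $R'$ can be shortened to an attack path of $F$.

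\textbf{Item (b): saturation.} Assume $G'\in q'$, $\fd{u}{v}\in\FD{G'}$, $\fd{v}{w}\in\SFD{q'}$, and $\fd{u}{w}\in\FDclosure{q'\setminus\{G'\}}$. All of $v,w$ lie outside $Y$, because $Y\subseteq\nkeyqueryvars{q'}$ and the only consistent edges into $Y$ would contradict \Cref{lem:puuh} or normality. By \Cref{lem:little2}, $\fd{v}{w}\in\SFD{q}$. We then lift the edge $\fd{u}{v}$ and the reachability to $q$, expanding each new edge $\fd{x}{z}$ into $\fd{x}{y},\fd{y}{z}$, apply saturation of $q$, and transfer back with \Cref{lem:little2}.

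The delicate case is $G'=R'$ with $v\in Z$. There $u=x$, and we need $\fd{x}{w}\in\SFD{q}$ from $\fd{z}{w}\in\SFD{q}$. Since $z\in\leafqueryvars{q}$, the edge $\fd{z}{w}$ is not in $\igraph{q}$, and \Cref{lem:verysimple2} applied to $\fd{y}{z},\fd{z}{w}$ gives $\fd{y}{w}\in\SFD{q}$. Saturation of $q$ applied to $F$ then yields the required edge.

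\textbf{Item (c): normality.} We check each clause of \Cref{def:normal}.
\begin{itemize}
\item Clauses \ref{it:normal4} and \ref{it:normal3} transfer directly through \Cref{lem:little2}, together with the fact that no consistent edge of $q'$ starts in $Y$.
\item Clause \ref{it:normal2} for $R'$ needs that no $\SFD{q'}$ edge joins two variables of $Y\cup Z$. Edges inside $Y$ are excluded by \Cref{lem:puuh}. Edges $y\to z$ are gone, since the atoms generating them lie in $q_0$. Edges from $z$ are excluded since $Z\subseteq\nkeyqueryvars{q}$.
\item Clause \ref{it:normal1} for $R'$ requires $\fd{x}{z}\in\WFD{q'}$. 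If instead $\fd{x}{z}\in\SFD{q'}$, some other atom would give $\fd{x}{z}$ or $\fd{\emp}{z}$ in $\FDclosure{\cdot}$. Exactly as in the proof of \Cref{lem:little2}, this contradicts \Cref{lem:noupstrong}, or \Cref{lem:backward} together with normality of $q$.
\end{itemize}

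We expect item (a), and specifically the case analysis on where the separator meets $Y\cup Z$, to be the main obstacle.
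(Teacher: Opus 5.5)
Your plan follows the paper's proof in all three items: the detour trick plus a closure comparison for (a), lift--saturate--transfer via \Cref{lem:little2} for (b), and a clause-by-clause check for (c). However, item (a) is left open exactly where the work lies, and the closure claim you state is too weak to finish it.

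The $Z$-vertices cannot be avoided. Every new edge of $\gaifman{q'}$ is incident to $Z$, and an endpoint of $U$ may be some $z\in Z$ when $H'=R'$. Your detours also insert vertices $y_j\in Y$ that are not on $U$. So, for $G\neq F$, you need two things: $\keyclosure{G}{q}\subseteq\keyclosure{G'}{q'}$ on \emph{all} vertices, and an argument that no $y_j$ lies in $\keyclosure{G}{q}$. Neither comes from \Cref{lem:noupstrong} or \Cref{lem:backward}. The paper argues as follows.
\begin{enumerate}[(i)]
\item Take a \emph{shortest} witness $D$ in $\FD{q\setminus\{G\}}$. An edge of $D$ leaving $Y$ is consistent, by condition~\ref{it:Y} of extendability. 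By \Cref{lem:ctrans} and minimality it is preceded by an inconsistent edge, which is $\fd{x}{y}$ by \Cref{lem:samesourcesat}. By \Cref{lem:verysimple} and minimality the exit edge lies in $\igraph{q}$, so it ends in some $z\in Z$. Since $\fd{x}{y}$ is generated only by $F$, we get $F\neq G$, so $\fd{x}{z}\in\FD{R'}$ replaces $x,y,z$ inside $\FD{q'\setminus\{G'\}}$.
\item If some $y_j$ lies in $\keyclosure{G}{q}$, then so does $z$, because $\fd{y_j}{z}\in\SFD{q}$. Since $z$ is on $U$, (i) contradicts that $U$ avoids $\keyclosure{G'}{q'}$.
\end{enumerate}
Your own sketch essentially contains (i). However, your sentence that a path through $Y$ ``must enter $Y$ via $\fd{x}{y}$'' is false without minimality. $Y$ can have consistent incoming edges, e.g.\ from the guard atoms $A_i(\underline{u},y_1)$ at the (CEE) step of \Cref{fig:prune}.

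For $G=F$, your shortening via \Cref{lem:upattack} is sound and needs no detour. It still needs $\keyclosure{F}{q}\subseteq\keyclosure{R'}{q'}$ along the suffix. This is the same shortest-path argument: now $D$ cannot use $\fd{x}{y}$, so it never leaves $Y$.

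There are also smaller slips in (b) and (c).
\begin{itemize}
\item In the delicate case of (b), \Cref{lem:verysimple2} does not apply. Since $z\in\nkeyqueryvars{q}$ has no outgoing edge in $\FD{q}$, $\fd{z}{w}\in\SFD{q}$ forces $w=z$, and $\fd{y}{w}\in\SFD{q}$ is just condition~\ref{it:Y}. The rest of your argument goes through. In fact the resulting $\fd{x}{z}\in\SFD{q}$ is excluded by \Cref{lem:noupstrong} and normality, so this case is vacuous.
\item Contrary to your claim, $v=w\in Y$ does occur: a dummy loop at some $y\in Y$ entered by a consistent edge, as in the example above. This is harmless, since $\fd{v}{v}\in\SFD{q}$ by definition.
\item In (c), $\keyvars{F}=\emp$ can happen, again as at the (CEE) step of \Cref{fig:prune}. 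Then $R'$ creates new edges $\fd{\emp}{z}$, so clause~\ref{it:normal3} does not follow from \Cref{lem:little2}, which only covers variable sources. It follows instead from your clause-\ref{it:normal1} argument for $R'$.
\end{itemize}
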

\begin{proof}
Let us write $F'$ for the atom corresponding to $F\in q$ in $q'$; that is, $F'=F$ for $F\neq R$ and $F'=R'(\underline{x},\vec{y},\vec{z})$ for $F=R(\underline{x},\vec{y})$.

\textbf{(\cref{it:saetwo})} 
By  \Cref{lem:repfree}, it suffices to show that if $F'\attacks{q'} G'$, then $F\attacks{q} G$. 
    Suppose $F'\attacks{q'} G'$.
Then  $\gaifman{q'}$ contains an (undirected) path 
\[
U=x_1, \dots ,x_\ell \quad (\ell \geq 1)
\]
between some $x_1 \in \notkeyvars{F'}$ and   $x_\ell\in \atomvars{G'}$
such that $U$ is not separated by
$\keyclosure{(F')}{q'}$. 
Assume toward contradiction that $U$ is separated by $\keyclosure{F}{q}$. Then $\FD{q\setminus \{F\}}$ contains a directed graph 
\[
D=t, \dots ,x_i,
\]
 where $t\in\{\keyvars{F},\emp\}$. Since, for each $H'\in q'$, $ \keyvars{H}=\keyvars{H'} \notin Y$, we obtain $t\notin Y$.
 
 We may assume that $D$ is the shortest path witnessing $x_i\in \keyclosure{F}{q}$. In particular, $D$ is simple.
Then, an edge $\fd{u}{v}$, where $u\in Y$, appears in $D$.  In particular, $u\neq t$. 
\begin{claim}\label{claim:write}
Assume that an edge $\fd{u}{v}$, where $u\in Y$, appears in $D$. Then $v\in Z$, and $u$ is preceded in $D$ by $x$. Moreover, $\fd{u}{v}\in \igraph{q}$.
\end{claim}
\begin{claimproof}
Since $t\notin Y$, we have $u\neq t$. Hence $u$ is preceded in $D$ by some vertex $w$. That is, the edge $\fd{w}{u}$ precedes the edge $\fd{u}{v}$ in $D$.


 We prove that $w=x$. Since $u\in Y$, by \cref{it:Y} we have $\fd{u}{v}\in \SFD{q}$. If $\fd{w}{u}\in \SFD{q}$, we obtain by \Cref{lem:ctrans} that $\fd{w}{v}\in \SFD{q}$, which implies $D$ is not a shortest path witnessing $x_i\in \keyclosure{F}{q}$, a contradiction.
 Hence, assume $\fd{w}{u}\in \WFD{q}$. Since also $\fd{x}{u}\in \WFD{q}$, we obtain by \cref{lem:samesource} that $x=w$.
 
 To complete the proof, we prove $\fd{u}{v}\in \igraph{q}$ and $v\in Z$. We now have  $\fd{x}{u}\in \WFD{q}$ and $\fd{u}{v}\in\FD{q}\cap \SFD{q}$. If $\fd{u}{v}\notin \igraph{q}$, then by \Cref{lem:verysimple} $\fd{x}{v}\in \SFD{q}$, in which case the path 
$D^* \defeq t, \dots x,v, \dots ,x_i$ witnesses $x_i\in \keyclosure{F}{q}$, contradicting the minimality of $D$. Hence $\fd{u}{v}\in \igraph{q}$, which also entails $v\in Z$.
\end{claimproof}

By \Cref{claim:write}, we can write
\[
D=t, \dots x,u,v, \dots ,x_i,
\]
where $v\in Z$, and possibly $t=x$ or $v=x_i$. 

Since $\fd{x}{u}\in \WFD{q}$, and $R\in q$ is the unique atom in $q$ generating $\fd{x}{u}$, we obtain $R\neq F$. In particular, $R'\neq F'$, which entails $\fd{x}{v}\in \FD{q\setminus \{F'\}}$ due to $v\in Z$.

We now claim that the paths $D[t:x]$ and $D[v:x_i]$ belong also to $\FD{q\setminus \{F'\}}$. This entails $x_i \in \keyclosure{(F')}{q'}$, contradicting the assumption. 

Recall now $q_0$ from \eqref{eq:q0}.
By \Cref{claim:write}, if some edge in $D[t:x]$ or $D[v:x_i]$ is generated by an atom of $q_0$, then $x$ appears twice in $D$, contradicting the minimality assumption. Since $\FD{q\setminus (\{F\} \cup q_0)}  \subseteq \FD{q'\setminus \{F'\}}$, we obtain that each  edge in $D[t:x]$ or $D[v:x_i]$ belongs to $\FD{q'\setminus \{F'\}}$. This creates a contradition, as described above. We conclude by contradiction that $U$ is not separated by $\keyclosure{F}{q}$

If $U$ is a path between $\notkeyvars{F}$ and $\atomvars{G}$, then $F\attacks{q} G$.

 Otherwise, suppose $U$ is not a path of $\gaifman{q}$ between $\notkeyvars{F}$ and $\atomvars{G}$. Then there is at least one $z\in Z$ such that
 one of the following cases applies:
 \begin{itemize}
 \item Case A: $z$ appears as an endpoint in $U$ and belongs to $\notkeyvars{F'}$ (resp. $\atomvars{G'}$), while $z$ does not belong to $\notkeyvars{F}$ (resp. $\atomvars{G}$). (Informally, $F$ or respectively $G$ is $R$.)
 \item Case B: $\{u,z\}$, for some $u\in \{x\}\cup Y$, is an edge in $U$ which does not belong to $\gaifman{q}$. (Informally, $U$ contains an edge generated by $R$ but not $R'$.)
 \end{itemize} 
 For each such $z$, choose an arbitrary $y\in Y$ such that $\fd{y}{z}\in \igraph{q}$. Then, proceed in cases as follows:
 \begin{itemize}
 \item Case A: extend the endpoint $z$ with the  edge $\{y,z\}$.
  \item Case B: replace the edge $\{u,z\}$ with the edges $\{u,x\},\{x,y\},\{y,z\}$ when $u\in Y$, and with the edges $\{u,y\},\{y,z\}$ when $u=x$.
 \end{itemize} 
  Denote by $U'$ the undirected path obtained from $U$ by treating each occurring case in this way. Then, $U'$ is a path between $\notkeyvars{F}$ and  $\atomvars{G}$.
  
We claim that $U'$ is not separated by $\keyclosure{F}{q}$. Assume toward contradiction that $U'$ is separated by $\keyclosure{F}{q}$. As we have already established that $U$ is not separated by $\keyclosure{F}{q}$, we obtain that $x$ or $y$ belongs to $\keyclosure{F}{q}$. Since $\fd{x}{y}\in \FD{q\setminus \{F\}}$ (due to $R\neq F$), and $\fd{y}{z}\in \SFD{q}\subseteq \FD{q\setminus \{F\}}$ (due to \cref{it:Y}), we obtain $z\in \keyclosure{F}{q}$. As $z$ belongs to $U$, this contradicts the fact that $U$ is not separated by $\keyclosure{F}{q}$. Hence $U'$ is not separated by $\keyclosure{F}{q}$. In particular, we obtain $F\attacks{q} G$. This concludes the proof of the first item.

\textbf{(\cref{it:saeone})} 
 Suppose  that for some $F'\in q'$, we have    $\fd{u}{v}\in \FD{F'}$,
      $\fd{v}{w}\in \SFD{q'}$, and
    $\fd{u}{w}\in \FDclosure{q'\setminus \{F'\}}$.  We need to prove  that  $\fd{u}{w}\in \SFD{q'}$
    
    We show first that  $\fd{u}{w}\in \SFD{q}$. For this, we first prove two helping claims.

    \begin{claim}\label{claim:first}
    $\fd{v}{w}\in \SFD{q}$.
    \end{claim}
    \begin{claimproof}
    By assumption there are $G'\in q'$ and $H'\in q'\setminus \{G'\}$ such that $\fd{v}{w}\in \FD{G'}$ and $\fd{v}{w}\in \FDclosure{H'}$.
     
     We claim that  $\fd{v}{w}\in \FD{G}$.
     Toward contradiction, suppose $\fd{v}{w}\notin \FD{G}$.  Then  $G'=R'$, $v=x$, and $w\in Z$. In this case, since $G' \neq H'$, we have $H'=H\in q$. Now, either $\fd{x}{w}$ or $\fd{\emp}{w}$ belongs to $\FD{H'}$. The former case violates \cref{lem:noupstrong}, because there exists $y\in Y$ such that $\fd{x}{y}\in \WFD{q}$ and $\fd{y}{w}\in \igraph{q}$. In the latter case, we have $\fd{y}{w}\in \igraph{q}\cap \SFD{q}$ (by \cref{it:Y}), for some $y\in Y$.
     However, by normality of $q$, we also have $\fd{\emp}{w}\in \WFD{q}$, which creates a contradiction with \cref{lem:backward}.  Hence $\fd{v}{w}\in \FD{G}$. 

If also $\fd{v}{w}\in \FDclosure{H}$, then the claim follows.
    Thus, assume  $\fd{v}{w}\notin \FDclosure{H}$. Then $H'=R'$ and $w\in Z$. In this case, since $G' \neq H'$, we have $G'=G\in q$. In particular, $\fd{v}{w}\in \FD{q}$. Since $w\in Z$, we obtain  $\fd{y}{w}\in \igraph{q}$, for some $y\in Y$. By \cref{it:Y}, it holds that $\fd{y}{w}\in \SFD{q}$. Hence, 
    $\fd{v}{w}\in \SFD{q}$ by \Cref{lem:backward}, establishing the claim.
\end{claimproof}

\begin{claim}\label{claim:second}
 $\fd{u}{w}\in \FDclosure{q\setminus \{F\}}$. 
  \end{claim}
  \begin{claimproof}
  Let $D=t, \dots ,w$ be a path in $\FD{q'\setminus \{F'\}}$ where $t\in\{ u,\emp\}$. Then, every edge from $D$ belongs to $\FD{q\setminus \{F\}}$, except possibly for an edge of the form $\fd{x}{z}$, $z\in Z$, that has  been generated by $R'$. In this case, $R'\neq F'$, and the edge can be replaced with two edges $\fd{x}{y},\fd{y}{z}$, where $y\in Y$. In particular, the edge $\fd{x}{y}$ is generated by $R\neq F$, and the edge $\fd{y}{z}$ is generated by an atom $G\in q$ that does not appear in $q'$ (hence $G\neq F$). This proves the claim.
  \end{claimproof}

Let us then consider two alternative cases.

     {\bf Case $\fd{u}{v}\in \FD{F}$.} Then, Claims \ref{claim:first} and \ref{claim:second} and the saturation of $q$ imply $\fd{u}{w}\in \SFD{q}$.

    {\bf Case $\fd{u}{v}\notin \FD{F}$.} Then $F'=R'$, $u=x$, and $v\in Z$. In this case, however, we find $y\in Y$ such that $\fd{u}{y}\in \FD{F}$ and $\fd{y}{v} \in \igraph{q}$. Moreover, we have $\fd{y}{v}\in \SFD{q}$ by \cref{it:Y}. Then, Claim \ref{claim:first} and \Cref{lem:ctrans} yield $\fd{y}{w}\in \SFD{q}$. Then, using \Cref{claim:second} and the saturation of $q$, we obtain $\fd{u}{w}\in \SFD{q}$.
    
    Hence we have proven that $\fd{u}{w}\in \SFD{q}$. To finish the proof of the item,
 assume toward contradiction that $\fd{u}{w}\notin \SFD{q'}$. By the construction of $q'$, it is correct to conclude that $u\in Y$.
 However, by assumption $q'$ includes the atom $F'$ such that $\keyvars{F'}=u$. This contradicts the fact that $Y\subseteq \nkeyqueryvars{q'}$. We conclude by contradiction that  $\fd{u}{w}\in \SFD{q'}$.

     \textbf{(\cref{it:saeoneone})} 
In what follows we consider an atom $F'\in q'$. Toward contradiction, suppose $q'$ is not normal. We consider the different cases of normality in the following.

(\cref{it:normal1}) Suppose that $|\notkeyvars{F'}|>1$, and $\fd{p}{u} \in \SFD{q'}$, for $p= \keyvars{F'}$
and some $u\in \notkeyvars{F'}$. 
Then $\fd{p}{u} \in \FD{F'}$, and there exists $G'\in q'\setminus \{F'\}$ such that $\fd{p}{u} \in \FDclosure{G'}$.

{\bf Case 1: $F'\neq R'$}. Then $F=F'$, hence $|\notkeyvars{F}|>1$ and $\fd{p}{u} \in \FD{F}$. If $\fd{p}{u} \in \FDclosure{G}$, 
then $\fd{p}{u} \in \SFD{q}$, contradicting the normality of $q$. Suppose thus that $\fd{p}{u} \notin \FDclosure{G}$. This implies $G=R$ and $u\in Z$. 
Hence, we find $y\in Y$ such that $\fd{y}{u}\in \igraph{q}$,  and consequently, by \cref{it:Y}, $\fd{y}{u}\in \SFD{q}$. Thus, by $\fd{p}{u} \in \FD{q}$ and \Cref{lem:backward} we obtain $\fd{p}{u} \in \SFD{q}$.  This contradicts the normality of $q$.

{\bf Case 2: $F'= R'$}. Then $p=x$ and $u\in Y\cup Z$. Since $G'\neq F'$, we obtain $G=G'$ and hence $\fd{x}{u} \in \FDclosure{G}$.
Moreover, we have $F=R$. 

Suppose that $\fd{x}{u} \notin \FD{F}$. Since $\fd{x}{u} \in \FD{F'}$,  this entails $u\in Z$. From $\fd{x}{u} \in \FDclosure{G}$ we obtain that there exists $v\in \{x,\emp\}$ such that $\fd{v}{u}\in \FD{q}$. From this and $u\in Z$, we obtain $\fd{v}{u}\in \SFD{q}$ using again \Cref{lem:backward}. By normality of $q$, then, $v\neq \emp$, i.e., $v=x$. However, $\fd{x}{u}\in \FD{q}$ contradicts \Cref{lem:noupstrong}.

Suppose then that $\fd{x}{u} \in \FD{F}$. From $\fd{x}{u} \in \FDclosure{G}$ and $G\neq F$ it follows that $\fd{x}{u}\in \SFD{q}$. However, we have $F=R$, and thus by the assumption that $R\in \modei{q}$, it follows that $\fd{x}{u} \in \WFD{q}$, a contradiction.

(\cref{it:normal2}) Suppose toward contradiction that $\fd{u}{v} \in \SFD{q'}$, for some distinct $u,v\in \notkeyvars{F'}$. 
Then, in particular, $\fd{u}{v} \in \FD{q'}$.
It follows that $u\notin Y\cup Z$, since $ Y\cup Z\subseteq \nkeyqueryvars{q'}$. 
Since $\notkeyvars{R'}\subseteq Y \cup Z$, we obtain $F'\neq R'$, and hence $F'=F$. Now, $\fd{u}{v} \in \SFD{q'}$ means that there exist $G'\in q'$ and $H'\in q'\setminus \{G'\}$ such that $\fd{u}{v} \in \FD{G'}$ and $\fd{u}{v} \in\FDclosure{H'}$. 

We claim that $\fd{u}{v} \in \FD{G}$. Toward contradiction, suppose $\fd{u}{v} \notin \FD{G}$. Since $\fd{u}{v} \in \FD{G'}$, this entails $G=R$,  $u=x$, and $v\in Z$. 
Then $H'\neq R'$, i.e., $H'=H$ and we have that $\fd{x}{v}$ or $\fd{\emp}{v}$ belongs to  $\FD{H}$.
The first case contradicts \Cref{lem:noupstrong}. The second case entails, by normality, that $\fd{\emp}{v}\in \WFD{q}$; this, however, contradicts \Cref{lem:backward} due to $\fd{y}{v}\in \SFD{q}\cap \igraph{q}$, for some $y\in Y$. The claim thus follows by contradiction.

If $\fd{u}{v} \in\FDclosure{H}$, then by the above claim we obtain $\fd{u}{v} \in \SFD{q}$. Since $u,v$ are distinct variables in $\notkeyvars{F}$, this contradicts the normality of $q$.
 Thus, assume 
  $\fd{u}{v} \notin\FDclosure{H}$. Since $\fd{u}{v} \in\FDclosure{H'}$, this entails $H=R$ and $v\in Z$.  
Since $\fd{u}{v} \in \FD{q}$ and $v\in Z$, we obtain $\fd{u}{v}\in \SFD{q}$ using again \Cref{lem:backward}.
As above, this contradicts the normality of $q$.

(\cref{it:normal4}) Suppose, toward contradiction that $\fd{u}{v},\fd{v}{u}\in \SFD{q'}$, for some distinct $u,v\in \queryvars{q'}$. In particular,
$\fd{u}{v},\fd{v}{u}\in \FD{q'}$. 
Since $ Y\subseteq \nkeyqueryvars{q'}$, we obtain $u,v\in \queryvars{q}\setminus Y$, hence \Cref{lem:little2} entails
$\fd{u}{v},\fd{v}{u}\in \SFD{q}$, a contradiction with the normality of $q$.

(\cref{it:normal3}) Suppose toward contradiction that $\fd{\emp}{z} \in \SFD{q'}$, for some $z\in \queryvars{q'}$. Then $\fd{\emp}{z} \in \FD{G'}\cap \FD{H'}$, for some distinct $G',H'\in q'$. Now, $\fd{\emp}{z} \notin \FD{G}\cap \FD{H}$ due to the normality of $q$. 
Suppose, by symmetry, that $\fd{\emp}{z} \notin \FD{G}$. This is only possible when $G=R$ and $z\in Z$. 
Since $G',H'$ are distinct, so are $G,H$, and we obtain $H'=H$. In particular, we have $\fd{\emp}{z} \in  \FD{H}$, i.e., $\fd{\emp}{z} \in  \FD{q}$. Using  \Cref{lem:backward}, we again obtain $\fd{\emp}{z} \in  \SFD{q}$, but this contradicts the normality of $q$.
\end{proof}

\smallskip
\begin{lemma}[Consistent Edge Elimination]\label{lem:vedb01}
Let $\mathcal{C}$ consist of all $(\db,q)\in \DB\times \sjfuabcq$, where $q$ is a saturated and normal query and $\db$ is a $q$-saturated database.
For each $(\db,q)\in \mathcal{C}$ and extendable $F\in \modei{q}$, 
$(\db,q)\mapsto \atomextend_F(\db,q)$ is a strict, guarded, equivalence-preserving, and  polynomial-time reduction from $\mathcal{C}$ to $\mathcal{C}\cup\{\true,\false\}$.
 \end{lemma}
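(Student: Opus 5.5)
The proof follows the template of \Cref{lem:vedb111,lem:vedb0}. Strictness holds because $q_0\cup\{R\}$ is replaced by the single atom $R'$, and $q_0$ is non-empty: $Z\neq\emptyset$ forces some $y\in Y$ to be the key of an atom generating an edge into $Z$. That atom contributes at least one relation occurrence and two variable occurrences, which outweighs the $|Z|$ extra variable occurrences added to $R'$ once we count that each $z\in Z$ is the target of some atom in $q_0$. Membership of the output in $\mathcal{C}\cup\{\true,\false\}$ follows from \Cref{lem:strongel,lem:dbsat}. For guardedness, \Cref{it:inv:guarded,it:inv:guarding} come from \Cref{lem:nr5,lem:root2}, and \Cref{it:inv:2,it:sizeinv,it:dummy,it:dummy2} are immediate from construction because $R'$ is fresh and $\queryvars{q'}=\queryvars{q}$. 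For \Cref{it:inv:const}, an atom $G\in\modec{q}$ with $\keyvars{G}\in\initqueryvars{q}$ or $\atomvars{G}=\emptyset$ satisfies $G\notin q_0$, since $Y\cap\rootqueryvars{q}=\emptyset$, and $G\neq R$ because $R\in\modei{q}$. It stays consistent in $q'$ by \Cref{lem:little2}.

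Polynomial time rests on \Cref{it:inv:1}. Since $Z\subseteq\nrqueryvars{q}=\nrqueryvars{q'}$, we have $\atomvars{R'}\cap\rqueryvars{q'}=\atomvars{R}\cap\rqueryvars{q}$. The projection of $R'^{\db_0}$ onto these positions is contained in the corresponding projection of $R^{\db}$, which is at most $N$. The full size of $R'^{\db_0}$ is then polynomial: each $c_i$ is uniquely determined by $b_{j_i}$, because the $S_i$ are consistent by functional consistency of $\db$. This also bounds $|R'^{\db_0}|\le|R^{\db}|$, so nothing blows up before $\dbsaturate$ is applied.

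For equivalence preservation, by \Cref{lem:dbsat} it suffices to show $(\db,q)\in\cert$ iff $(\db_0,q')\in\cert$. The key structural facts are that each atom of $q_0$ is in $\modec{q}$ by \eqref{eq:subset}, so its relation is consistent in $\db$, and that every variable it mentions lies in $Y\cup Z$ or is reached from $Y$ via consistent edges.

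($\Rightarrow$) Given a repair $\rep'$ of $\db_0$, build $\rep$ by replacing the chosen $R'(\underline a,\vec b,\vec c)$ with $R(\underline a,\vec b)$ and adding back $\restrict{\db}{q_0}$, which is consistent. A valuation $\theta$ with $\theta(q)\subseteq\rep$ then gives $\theta(q')\subseteq\rep'$. This requires $\theta(z_i)=c_i$, which follows from $S_i\in q$ and the uniqueness of $c_i$.

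($\Leftarrow$) Given a repair $\rep$ of $\db$, let $\rep'$ keep its non-$R$, non-$q_0$ part and, for the chosen $R(\underline a,\vec b)$, the unique fact $R'(\underline a,\vec b,\vec c)$ if it exists in $\db_0$. It exists by guardedness/pairwise consistency of the saturated $\db$. We then take $\theta(q')\subseteq\rep'$ and must show $\theta(q_0)\subseteq\rep$. This is the main obstacle. Each $G\in q_0$ has the form $S(\underline{y},\vec w)$ with $y\in Y$ and only consistent edges, so it suffices to show $\db$ contains $S(\underline{\theta(y)},\theta(\vec w))$. Pairwise consistency gives some fact with key $\theta(y)$, and functional consistency makes its non-key values unique. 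The remaining task is to identify those unique values with $\theta(\vec w)$. For $w\in Z$ with $\fd{y}{w}\in\igraph{q}$, I would use collision consistency of $\db$ together with \Cref{lem:backward} to show that all consistent edges into $w$ from $Y$ agree with the value $c$ encoded in $R'$. For $w\notin Z$, where $\fd{y}{w}\in\SFD{q}\setminus\igraph{q}$, \Cref{lem:verysimple} gives $\fd{x}{w}\in\SFD{q}$ witnessed by an atom of $q'$ keyed at $x$. Collision consistency over the fact $R(\underline a,\vec b)$ then forces agreement with $\theta(w)$.
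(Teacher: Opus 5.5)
Your proposal is correct and follows essentially the paper's proof: the same guardedness bookkeeping via \Cref{lem:root2,lem:nr5,lem:little2}, the same $N$-boundedness argument from $Z\subseteq\nrqueryvars{q}$, the same two repair translations, and the same treatment of atoms of $q_0$ via collision consistency, split according to whether $\fd{y}{w}\in\igraph{q}$ (using \Cref{lem:verysimple} in the negative case). The appeal to \Cref{lem:backward} in the first case is unnecessary, since $\fd{y}{w}\in\SFD{q}$ already follows from $\outvar{\WFD{q}}{Y}=\emptyset$, and collision consistency then applies directly.
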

\begin{proof}
Recall that we write $Y=\notkeyvars{F}$ and $Z= \outvar{\igraph{q}}{Y}$. Recall that both $Y$ and $Z$ are non-empty by definition.

The reduction $\atomextend_F$ is clearly strict and computable in polynomial time. It is into $\mathcal{C}\cup\{\true,\false\}$
by \Cref{lem:strongel,lem:dbsat}. It suffices to prove that it is guarded and equivalence-preserving.

{\bf (Guardedness)}
\Cref{it:inv:guarded,it:inv:guarding} of guardedness follow by
by \Cref{lem:root2,lem:nr5}. \Cref{it:sizeinv,it:inv:2,it:sizeinv,it:dummy,it:dummy2} are clear by construction.
We prove  
  \Cref{it:inv:const,it:inv:1}.

For \Cref{it:inv:const}, assume that $G\in \modec{q}$, and either $\keyvars{G}\in \initqueryvars{q}$ or $\atomvars{G}= \emptyset$.
 Toward contradiction, suppose first that $G\notin {q'}$.
 Then, by construction, $\keyvars{G}\in Y$ or $G=F$. The former case is not possible since $Y\cap \initqueryvars{q}=\emptyset$. The latter case is not possible since $F\in \modei{q}$ by the definition of extendability. This proves that $G\in q'$. Moreover, $G\in \modec{(q')}$, because $G\in \modei{(q')}$ means that there exists an edge $\fd{u}{v}\in \WFD{q'}\cap \FD{G}$.
 Since $Y\subseteq \nkeyqueryvars{q'}$, this would imply $u\notin Y$. Thus, by \Cref{lem:little2}, we would obtain $\fd{u}{v}\in \WFD{q}$, which entails $G\in \modei{q}$, a contradiction.
Hence,  \Cref{it:inv:const} holds.  
 
For \Cref{it:inv:1}, 
assume that for each $S\in q$,
\(
|\restrict{S^{\db}}{V}| \leq N,
\)
where $V\defeq \atomvars{S}\cap \rqueryvars{q}$.
By \Cref{lem:nr5}, and due to  $\queryvars{q}= \queryvars{q'}$, we have $\rqueryvars{q}= \rqueryvars{q'}$. 

Let $S\in q\setminus (q_0 \cup\{R\})$. Then by \cref{it:sizeinv} of guardedness (whose verification we omitted), we obtain $S^{\db'}\leq S^{\db}$. Hence, by assumption we obtain
\(|\restrict{S^{\db'}}{V'}| \leq |\restrict{S^{\db}}{V}|\leq   N,\)
where $V'\defeq \atomvars{S}\cap \rqueryvars{q'}$ and $V\defeq \atomvars{S}\cap \rqueryvars{q}$ (i.e., $V'=V$).

Consider then the atom $R'\in q'$. 
Since $Z\subseteq \nrqueryvars{q}$, by \Cref{lem:nr5} we obtain $Z\subseteq \nrqueryvars{q'}$.
In particular, $Z\cap \rqueryvars{q'}=\emptyset$.
Hence, writing $V'\defeq \atomvars{R'}\cap \rqueryvars{q'}$ and $V\defeq \atomvars{R}\cap \rqueryvars{q}$, 
we have $V'= V$ by \Cref{lem:nr5}. By assumption and the construction of $\db'$, this leads to
\(|\restrict{(R')^{\db'}}{V'}| = |\restrict{R^{\db}}{V}|\leq  N\). Hence, \Cref{it:inv:1} holds, too.


We may thus conclude that the stated reduction  is guarded.

{\bf (Equivalence preservation)}
By \Cref{lem:dbsat} it suffices to prove that $(\db,q)\in \cert$ if and only if $(\db_0,q')\in \cert$, assuming $\db_0$ and $q'$ are as in \eqref{eq:strongel2} and
\eqref{eq:strongel}.

($\Rightarrow$) 
Suppose  $(\db,q)\in \cert$, and let  $\rep'$ be an arbitrary repair of $\db_0$. 
Let
\[
\rep \defeq (\rep' \setminus  \restrict{\rep'}{R'}) \cup \{R(\underline{a},\vec{b})\in \db \mid \exists \vec{c}\colon R'(\underline{a},\vec{b},\vec{c})\in \rep'\} \cup  \restrict{\db}{q_0}.
\]
Note that $\restrict{\db}{q_0}$ is consistent by \eqref{eq:subset} and the $q$-saturation of $\db$. Also, $\{R(\underline{a},\vec{b})\in \db \mid \exists \vec{c}\colon R(\underline{a},\vec{b},\vec{c})\in \rep'\}$ is consistent since $\rep'$ is consistent. Moreover, since $\db$ is $q$-saturated, for each $R(\underline{a},\vec{b}) \in \db$ there exists a (unique) extension $R'(\underline{a},\vec{b},\vec{c})\in \db_0$.
Hence it can be seen that $\rep$ is a repair of $\db$.

Since $\rep\models q$ by assumption, there exists a valuation $\theta$ on $\queryvars{q}$ such that $\theta(q)\subseteq \rep$. 
In particular, we obtain $\theta(q' \setminus \{R'\})\subseteq \rep'$ and $\theta(R)\in \rep$. 
We claim that  $\theta(R')\in \rep'$. 

Let $\vec{c}$ be the (unique) sequence of constants such that $R'(\underline{\theta(x)},\theta(\vec{y}),\vec{c})\in \rep'$.
Let $i\in [\ell]$. By definition we have $S_i(\underline{\theta(y_{j_i})},c_i,\uk)\in \db$.  
Since $\fd{y_{j_i}}{z_i}\in \SFD{q}$, $q$ is normal, and $\db$ is $q$-saturated, it follows that $\restrict{\db}{S_i}$ is consistent.
In particular, we have
 $S_i(\underline{\theta(y_{j_i})},c_i,\uk)\in \rep$. Since $S_i(\underline{\theta(y_{j_i})},\theta(z_i),\uk)\in \rep$, it is the case that $\theta(z_i)=c_i$. This proves $\theta(R')\in \rep'$. 
Hence $\rep'\models q'$, which leads to $(\db_0,q')\in \cert$.

($\Leftarrow$)
Suppose  $(\db_0,q')\in \cert$, and let  $\rep$ be an arbitrary repair of $\db$. 
Let
\[
\rep' \defeq (\rep \setminus  \restrict{\rep}{q_0 \cup\{R\}}) \cup \{R'(\underline{a},\vec{b},\vec{c})\in \db_0 \mid  R(\underline{a},\vec{b})\in \rep\}.
\]
Since $\rep$ is consistent, for each $\vec{a}$ there exists at most one $\vec{b}$ such that $R(\underline{a},\vec{b})\in \rep$.
Moreover, for each $R(\underline{a},\vec{b})\in \rep$, there is precisely one extension $R(\underline{a},\vec{b},\vec{c})\in \db_0$; this follows
by definition since $\fd{y_{j_i}}{z_i}\in \SFD{q}$, for each $i\in [\ell]$, and $\db$ is $q$-saturated. 
Hence, $\{R'(\underline{a},\vec{b},\vec{c})\in \db_0 \mid  R(\underline{a},\vec{b})\in \rep\}$ is consistent.
Furthermore, since $\rep$ is a repair, for each $R'(\underline{a},\vec{b},\vec{c})\in \db_0$ there exists a fact $R(\underline{a},\vec{b}')\in \rep$
with the same primary-key value.
Now, for each $i\in [\ell]$, as $S_i(\underline{y_{j_i}},z_i,\uk)\in q$  and $\db$ is $q$-saturated, 
there is a (unique) fact of the form $S_i(\underline{b'_{j_i}},c'_i,\uk)\in \db$. Hence, setting $ \vec{c}'\defeq (c'_1, \dots ,c'_\ell)$,
we obtain $R'(\underline{a},\vec{b}',\vec{c}')\in \db_0$. Thus we may conclude that $\rep'$ picks precisely one fact from each $R'$-block of $\db_0$.
It follows that $\rep'$ is a repair of $\db_0$.

By assumption
we find a valuation $\theta$ such that $\theta(q')\subseteq \rep'$. We prove that $\theta(q)\subseteq \rep$.

Clearly, it holds $\theta(q\setminus q_0)\subseteq \rep$. Let $H\in q_0$. Then $\keyvars{H}\in Y$, and
as observed in \eqref{eq:subset}, $H\in \modec{q}$. In particular, we can write
 $H=S(\underline{y},v_1, \dots ,v_\ell)$, where $y\in Y$, $y\notin \{v_1, \dots ,v_\ell\}$, and $\fd{y}{v_i}\in \SFD{q}$ whenever $v_i$ is a variable. 
 Note that $\restrict{\db}{S} $ is consistent, which means that $\restrict{\db}{S} =\restrict{\rep}{S}$.
 We prove $\theta(H)\in \rep$.
 
 If $\notkeyvars{H}=\emptyset$, then, since $\db$ is $q$-saturated and $\theta(R)\in \rep$, the reader can verify that $\theta(H)\in \rep$. Thus, suppose $\notkeyvars{H}\neq \emptyset$.
 Then, by the saturation and normality of $q$, $H$ is of the form $S(\underline{y},z, \vec{c})$, where $z$ is a variable such that $\fd{y}{z}\in \SFD{q}$ and
$\vec{c}$ is a (possibly empty) sequence of constants

 Let us first isolate a single variable $v\in \notkeyvars{H}$, and write $H=S(\underline{y},v, \uk)$. 
 We  prove the following claim. 
 \begin{claim}\label{claim:V}
 $S(\underline{\theta(y)},\theta(v),\uk)\in \db$.
 \end{claim}
 \begin{claimproof}
 Suppose first that $\fd{y}{v}\in \igraph{q}$. Then $v\in Z$, and in particular $v = z_i $, for some $i\in [\ell]$. From $R'(\underline{\theta(x)}, \theta(\vec{y}), \theta(\vec{z}))\in \rep'$ we obtain $S_i(\underline{\theta(y_{j_i})},\theta(v),\uk)\in \db$, where $\fd{y_{j_i}}{v}\in \SFD{q}$.  Note that we also have $R(\underline{\theta(x)}, \theta(\vec{y}))\in \db$.
 Since $y\in Y$ (i.e., $y$ is listed in $\vec{y}$) and $\fd{y}{v}\in \SFD{q}$, we obtain $S(\underline{\theta(y)},\theta(v),\uk))\in \db$ by $q$-saturation of $\db$ (collision consistency).

Suppose then that  $\fd{y}{v}\notin \igraph{q}$. Since $\fd{y}{v}\in \FD{q}\cap\SFD{q}$ and $\fd{x}{y}\in \WFD{q}$, we obtain $\fd{x}{v}\in \SFD{q}$. Let $T(\underline{x},v,\uk)\in \modec{q}$ be an atom generating $\fd{x}{v}$.



Note that $x\notin Y$ due to variable-repetition-freeness of $q$.  
Then, as  $T\neq R$ due to $R\in \modei{q}$, we obtain  $T\in q\setminus (q_0 \cup\{R\}) \subseteq q'$. Hence $\theta(q')\subseteq \rep'$ implies $T(\underline{\theta(x)},\theta(v),\uk)\in \rep$. 
Since furthermore $R(\underline{\theta(x)}, \theta(\vec{y}))\in \db$, $\fd{y}{v}\in \SFD{q}$, and $y\in Y$ (i.e., $y$ is a variable listed in $\vec{y}$), it follows by $q$-saturation of $\db$ that $S(\underline{\theta(y)},\theta(v),\uk)\in \db$.
This proves the claim.
\end{claimproof}


 Since $\restrict{\db}{S}$ is consistent and $\db$ is $q$-saturated, the claim implies  $\theta(H)=S(\underline{\theta(y)},\theta(v),\vec{c}) \in \rep$. 
 Thus $\theta(q)\subseteq \rep$, concluding the proof.
\end{proof}

\subsection{Inconsistent Edge Elimination}
We say that $S\in \modei{q}$, with 
$Z \defeq \notkeyvars{S}$,
 is \emph{eliminable} if the following conditions hold:
 \begin{enumerate}
 \item\label{it:aeone} $\keyvars{S} \in \outvarstar{\igraph{q}}{\bot}\cap \nrootqueryvars{q}$;
 \item\label{it:aetwo} $Z\subseteq  \leafqueryvars{q}\cap \nkeyqueryvars{q}\cap \nrqueryvars{q} $; 
\item\label{it:aenotempty} $Z\neq \emptyset$.
 \end{enumerate}
 
Let $ S(\underline{y},\vec{z})\in \modei{q}$ be eliminable. 
 We define a query $q'$ by two cases.
\begin{enumerate}[(i)]
\item\label{it:atomel}
The incoming edge of $y$ in $\igraph{q}$ is of the form $\fd{x}{y} \in \WFD{q}$ generated by $R(\underline{x},y,\vec{u}) \in \modei{q}$.
 We define
\begin{equation}\label{eq:atomel}
q' \defeq (q \setminus  \{R(\underline{x},y,\vec{u}) ,S(\underline{y},\vec{z})\}) \cup \{R'(\underline{x},y,\vec{u},\vec{z})\}.
\end{equation}

 \item\label{it:atomel2}
The incoming edge of $y$ in $\igraph{q}$ is of the form $\fd{y'}{y} \in \SFD{q}$
generated by $T(\underline{y'},y,\uk) \in \modec{q}$. Consequently, there exists an edge of the form $\fd{x}{y'} \in \WFD{q}$ generated by $R(\underline{x},y',\vec{u}) \in \modei{q}$.
We define 
\begin{equation}\label{eq:atomel2}
q' \defeq (q \setminus  \{R(\underline{x},y',\vec{u}) ,S(\underline{y},\vec{z})\}) \cup \{R'(\underline{x},y',\vec{u},\vec{z})\}.
\end{equation}
\end{enumerate}
Note that $\queryvars{q}=\queryvars{q'}$ and $Z\subseteq   \nkeyqueryvars{q'} $.

For a database $\db$, define first a database $\db_0$ in the above two cases.
\begin{enumerate}[(i)]
\item\label{eq:dbel2}
$\db_0 \defeq (\db \setminus \restrict{\db}{\{R,S\}})
\cup
\{R'(\underline{a},b,\vec{c},\vec{d})
\mid
R(\underline{a},b,\vec{c}),
S(\underline{b},\vec{d})\in \db\}.
$
\item\label{eq:dbel3}
$\db_0 \defeq (\db \setminus \restrict{\db}{\{R,S\}})
\cup
\{R'(\underline{a},b',\vec{c},\vec{d})
\mid
R(\underline{a},b',\vec{c}), T(\underline{b'},b,\uk), 
S(\underline{b},\vec{d})\in \db\}.
$
\end{enumerate}
Then, define
\begin{equation}
\atomeliminate_S(\db,q)\defeq (\db',q'),
\end{equation}
where $\db' \defeq \dbsaturate(\db_0,q')$.

Before moving on to the reduction lemma concerning $\atomeliminate_S$, let us first consider the following example which illustrates why the root of $\keyvars{S}$ in $\igraph{q}$ needs to be $\bot$. Note that this restriction is unique to the inconsistent edge elimination; all other reduction steps allow the root of the modified variable or atom to be a variable. 

\smallskip
\begin{example}\label{ex:cycle}
Consider the following query from $\sjfubcq$:
\[
q \;\defeq \;\{
R(\underline{x},y),
S(\underline{y},z),
T(\underline{u},v)
\}
\cup
\{
U_i(\underline{w},z),V_i(\underline{w},v),W_i(\underline{u},z)\mid i\in \{0,1\}
\}.
\]
Observe that $q$ is saturated and normal. The root variables are $x,u,w$, the guarded variables are $x,u,w,z,v$, and the only unguarded variable is $y$. However, in contrast to the requirements of inconsistent edge elimination w.r.t. $S$, we have that 
$\keyvars{S}= x\notin \outvarstar{\igraph{q}}{\bot}$ as $x$ is a root variable itself. Now, applying inconsistent edge elimination to $q$, would yield the query 
\[
q' \;\defeq \;\{
R(\underline{x},y,z),
T(\underline{u},v)
\}
\cup
\{
U_i(\underline{w},z),V_i(\underline{w},v),W_i(\underline{u},z)\mid i\in \{0,1\}
\}.
\]
For illustrations, see \Cref{fig:cycle}.  Now, $q$ has an acyclic attack graph, but $q'$ has a cyclic attack graph due to $R\attacks{} T$ and $T\attacks{} R$. In particular, $T$ does not attack $R$ in the query $q$.

Consider then the queries $q_{x\mapsto c}$ and $q'_{x\mapsto c}$, for some constant. The attack-propagation graphs  $\igraph{q_{x\mapsto c}}$ and $\igraph{q'_{x\mapsto c}}$ are then obtained from those of $q$ and $q'$ by replacing the vertex $x$ with $\emp$. In this case, atom elimination w.r.t. $S$ on $q_{x\mapsto c}$ yields $q'_{x\mapsto c}$, and both queries have an acyclic attack graph.
\end{example}

\begin{figure}
\vspace{5mm}
\[
\hspace{0cm}
\begin{array}{ccc}
\scalebox{.72}{%
\begin{tikzpicture}[
  every node/.style={font=\small}
]

\node[guarded] (x) at (0,0) {$x$};

\node[unguarded] (y) at (0,2) {$y$};

\node[guarded] (z) at (0,4) {$z$};

\node[guarded] (w) at (2,4) {$w$};

\node[guarded] (u) at (2,0) {$u$};

\node[guarded] (v) at (2,2) {$v$};


\node[levelbox, fit=(y)] {};
\node[levelbox, fit=(z)] {};
\node[levelbox, fit=(v)] {};

\sarr{x}{y}
\sarr{y}{z}
\sarr{u}{v}
\bdarr{u}{y}
\bdarr{w}{z}
\bdarr{w}{v}

\end{tikzpicture}
}
&
\hspace{.3cm}
\begin{tikzpicture}
\draw[->]
    (0,0) -- (2,0)
    node[midway,above] {``$\atomeliminate_S$''};
    \node (y) at (0,-3) {};
\end{tikzpicture}
&
\hspace{.3cm}

\scalebox{.72}{%
\begin{tikzpicture}[
  every node/.style={font=\small}
]

\node[guarded] (x) at (0,0) {$x$};

\node[guarded] (z) at (-1,2) {$z$};

\node[unguarded] (y) at (1,2) {$y$};

\node[guarded] (w) at (1,4) {$w$};

\node[guarded] (u) at (3,0) {$u$};

\node[guarded] (v) at (3,2) {$v$};


\node[levelbox, fit=(y)(z)] {};
\node[levelbox, fit=(v)] {};

\sarr{x}{y}
\sarr{x}{z}
\sarr{u}{v}
\bdarr{u}{y}
\bdarr{w}{z}
\bdarr{w}{v}

\end{tikzpicture}
}
\end{array}
\]
\caption{Illustration of attack-propagation graph: left $\igraph{q}$, right $\igraph{q'}$. Guarded variables are in blue and unguarded ones in red. \label{fig:cycle}}
\end{figure}

Now, let us start with the preparatory results, as in earlier sections.

\smallskip
\begin{lemma}\label{lem:nonewreps}
Let $q\in \sjfbcq$ be saturated and normal. 
Let $S\in \modei{q}$ be eliminable. 
 If $(\db',q')=\atomeliminate_S(\db,q)\notin\{\true,\false\}$, then
 $q'$ is variable-repetition-free.
 \end{lemma}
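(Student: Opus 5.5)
Since $q$ is variable-repetition-free and the construction of $q'$ only removes atoms and replaces the $R$-atom by $R'$, it suffices to show that the argument list of $R'$ has no repetitions. The plan mirrors the proof of \Cref{lem:repfree}. In case \ref{it:atomel} the argument list is $(x,y,\vec{u},\vec{z})$. In case \ref{it:atomel2} it is $(x,y',\vec{u},\vec{z})$. Since $(x,y,\vec{u})$, respectively $(x,y',\vec{u})$, is repetition-free and $\vec z$ is repetition-free (being the non-key part of $S$), the task reduces to showing that $Z\defeq\notkeyvars{S}$ is disjoint from the variables of $R$.

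First, $y\notin Z$ because $y=\keyvars{S}$ and $S$ is repetition-free. Every $z\in Z$ satisfies $\fd{y}{z}\in\WFD{q}\subseteq\igraph{q}$, since $S\in\modei{q}$ and normality applies (\cref{it:normal1} when $|Z|>1$; if $|Z|=1$, the sole edge is inconsistent by definition of $\modei{q}$). Now suppose $z\in Z$ coincides with a variable of $R$.

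\textbf{Subcase $z=x$.} In case \ref{it:atomel} we get the cycle $x\to y\to x$ in $\igraph{q}$. In case \ref{it:atomel2} we get the cycle $x\to y'\to y\to x$. Both contradict \Cref{lem:forest}. Alternatively, $z\in\nkeyqueryvars{q}$ by eliminability, whereas $x\in\keyqueryvars{q}$.

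\textbf{Subcase $z=y'$ (case \ref{it:atomel2}).} Here $z$ is a key variable of $T$, contradicting $Z\subseteq\nkeyqueryvars{q}$.

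\textbf{Subcase $z$ in $\vec u$, or $z=y$ in case \ref{it:atomel2}.} Then $z$ and $y$ (resp.\ $y'$) are siblings in $R$, with $z\neq y$ (resp.\ $z\neq y'$). By normality, $R$ generates $\fd{x}{z}\in\WFD{q}$. Together with $\fd{y}{z}\in\WFD{q}$, \Cref{lem:samesourcesat} yields $x=y$, which is impossible because $R$ is repetition-free. If $z=y$ in case \ref{it:atomel2}, then $y=z$ is not a key variable while $S$ has key $y$, a contradiction. One can also invoke \Cref{lem:puuh}: in case \ref{it:atomel} both $y$ and $z$ lie in $\notkeyvars{R}$, so $\fd{y}{z}\notin\FD{q}$ is violated directly.

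The main obstacle is making sure every coincidence is covered, particularly in case \ref{it:atomel2}, where $y\in\vec u$ might be possible. That is harmless here, since only collisions with $\vec z$ matter. The same combination of \Cref{lem:puuh}, \Cref{lem:samesourcesat}, and $Z\subseteq\nkeyqueryvars{q}$ handles all remaining cases.
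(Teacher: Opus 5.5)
\textbf{Gap: justifying $x\neq y$ in case \ref{it:atomel2}.} Your reduction to $Z\cap\atomvars{R}=\emptyset$ and your case split follow the paper's proof. Excluding $x$ and $y'$ via $Z\subseteq\nkeyqueryvars{q}$ works, since they are the key variables of $R$ and $T$, and it is simpler than the paper's appeal to \Cref{lem:forest}.

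However, one step fails as written. In the subcase $z\in\vec u$ of case \ref{it:atomel2}, \Cref{lem:samesourcesat} applied to $\fd{x}{z},\fd{y}{z}\in\WFD{q}$ gives $x=y$. You dismiss this because $R$ is repetition-free. But in this case the atom is $R(\underline{x},y',\vec u)$, and $y$ is not among its arguments. So repetition-freeness of $R$ says nothing about $x$ versus $y$. Nor does $R\neq S$ help, since distinct atoms may share a key variable.

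The missing ingredient is acyclicity of $\igraph{q}$. By the definition of case \ref{it:atomel2}, $\fd{x}{y'}\in\WFD{q}$ and $\fd{y'}{y}\in\SFD{q}\cap\igraph{q}$, so $x\to y'\to y$ is a path in $\igraph{q}$. Then $x=y$ would give the cycle $y\to y'\to y$, contradicting \Cref{lem:forest}. This is exactly the paper's step: ``Since $x\neq y$, due to acyclicity of $\igraph{q}$''.

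Consequently, your closing claim that \Cref{lem:puuh}, \Cref{lem:samesourcesat} and non-keyness handle all remaining cases is not quite right. \Cref{lem:puuh} applies only in case \ref{it:atomel}, where $y$ and $z$ both lie in $\notkeyvars{R}$. In case \ref{it:atomel2} you genuinely need \Cref{lem:forest}. Separately, your sub-subcase ``$z=y$ in case (ii)'' is vacuous, because you already established $y\notin Z$; it can be dropped.
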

\begin{proof}
Consider first \Cref{it:atomel}. 
Given that $q$ is repetition-free, repetition can only arise if $x$ appears in $\vec{z}$ or if $\vec{z}$ and $\vec{u}$ share a variable.
In the first case, we obtain that $\igraph{q}$ contains two edges $\fd{x}{y}$ and $\fd{y}{x}$, contradicting \Cref{lem:forest}. In the second case, we obtain that $\notkeyvars{R}$ contains two distinct variables $y$ and $z$, where $z$ is the common variable of $\vec{z}$ and $\vec{u}$, such that $\fd{y}{z}\in\FD{q}$, contradicting \Cref{lem:puuh}.

Consider then \Cref{it:atomel2}. In this case, repetition can only arise if ${x}$ or $y'$ appears in $\vec{z}$ or if $\vec{u}$ and $\vec{z}$ share a variable. In the first case, we obtain a contradiction with \Cref{lem:forest} analogously to the previous case. Considering the second case, suppose $z$ is a variable appearing in both $\vec{u}$ and $\vec{z}$. Then, $\fd{x}{z}$ and $\fd{y}{z}$ are both edges in $\WFD{q}$. Since $x\neq y$, due to acyclicity of $\igraph{q}$, we obtain a contradiction with \Cref{lem:samesourcesat}. 

We conclude that in both cases $q'$ is variable-repetition-free.
\end{proof}

\smallskip
\begin{lemma}\label{claim:simple}
Let $q\in \sjfbcq$ be saturated and normal. 
Let $S\in \modei{q}$ be eliminable. 
Let $u,v\in \queryvars{q}$ be such that $v\notin Z$, where $Z \defeq \notkeyvars{S}$.
 If $(\db',q')=\atomeliminate_S(\db,q)\notin\{\true,\false\}$, then
  \begin{itemize}
\item $\fd{u}{v} \in \FD{q}$ implies $\fd{u}{v} \in \FD{q'}$
\item $\fd{u}{v} \in \WFD{q}$ implies $\fd{u}{v} \in \WFD{q'}$
\item $\fd{u}{v} \in \SFD{q}$ if and only if $\fd{u}{v} \in \SFD{q'}$
\end{itemize}
The third item holds unconditionally, that is, even if $v\in Z$.
\end{lemma}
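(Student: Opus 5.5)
The plan mirrors \Cref{lem:little2}. Write $F'$ for the atom of $q'$ corresponding to $F\in q$: $F'=F$ for $F\notin\{R,S\}$, and $R'$ is the merged atom. The merged atom has the form $R'(\underline{x},y,\vec{u},\vec{z})$ in \Cref{it:atomel} and $R'(\underline{x},y',\vec{u},\vec{z})$ in \Cref{it:atomel2}. The case $u=v$ is immediate from \Cref{lem:nonewreps}, so assume $u\neq v$.

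\textbf{Edge inventory.} First I would record how $\FD{q}$ and $\FD{q'}$ differ. The edges of $\FD{q'}$ are those of $\FD{q}$, minus the edges $\fd{y}{z}$ with $z\in Z$ that were generated by $S$, plus the new edges $\fd{x}{z}$ for $z\in Z$ that are generated by $R'$. Every removed edge has its target in $Z$. Hence for $v\notin Z$, any edge $\fd{u}{v}\in\FD{F}$ satisfies $\fd{u}{v}\in\FD{F'}$. This gives the first item. The second item follows from the first and third: if $\fd{u}{v}\in\WFD{q}$, then $\fd{u}{v}\in\FD{q'}$, and $\fd{u}{v}\notin\SFD{q'}$ by the third item.

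\textbf{Forward direction of the third item.} Suppose $\fd{u}{v}\in\SFD{q}$, witnessed by $F\ne G$ with $\fd{u}{v}\in\FD{F}$ and $\fd{u}{v}\in\FDclosure{G}$. Neither witness can be $S$ when $v\in Z$: $S\in\modei{q}$, and by normality $S$ generates only inconsistent edges. So the only witness that can change is $R$, whose edges persist in $R'$. Consequently $F'\neq G'$, and $\fd{u}{v}\in\FD{F'}$. The closure condition $\fd{u}{v}\in\FDclosure{G'}$ also survives, because a path in $\FD{G}$ that uses an $S$-edge $\fd{y}{z}$ is replaced as follows. In \Cref{it:atomel}, the edge $\fd{x}{y}$ precedes it on a shortest path, by \Cref{lem:samesourcesat}, so the two edges can be replaced by the new edge $\fd{x}{z}$. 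In \Cref{it:atomel2}, a $T$-edge and the edge $\fd{x}{y'}$ precede it. In both cases the detour is shortcut by $R'$.

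\textbf{Backward direction (main obstacle).} Suppose $\fd{u}{v}\in\SFD{q'}$, witnessed by $F'\neq G'$. I would split on whether the witnesses involve $R'$ in a way that does not come from $q$.

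If $\fd{u}{v}\notin\FD{F}$, then $F'=R'$, $u=x$ and $v\in Z$. Here $G'=G$ carries $\fd{x}{v}$ or $\fd{\emp}{v}$ in $q$. In \Cref{it:atomel}, $x,y,v$ is a path of length two in $\igraph{q}$, since $\fd{y}{v}\in\WFD{q}$ and $v$ is a leaf. So $\fd{x}{v}\in\FD{q}$ contradicts \Cref{lem:noupstrong}. In \Cref{it:atomel2} the path $x,y',y,v$ gives the same contradiction. The alternative $\fd{\emp}{v}$ would be a second $\WFD{q}$-edge into $v$, by normality (\Cref{it:normal3}), which contradicts \Cref{lem:samesourcesat}.

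If instead $\fd{u}{v}\in\FD{F}$ but the closure path in $G'$ uses a new $R'$-edge $\fd{x}{z}$, I would unfold that edge into the original path $x,y,z$ (with $x,y',y,z$ in \Cref{it:atomel2}). This is a path of $\FD{q}$ through $R$, $T$ and $S$. It avoids $F$ unless $F\in\{R,S,T\}$, and in those cases the edge $\fd{u}{v}$ would be forced into $\WFD{q}$ or into a contradiction with \Cref{lem:backward}. The bookkeeping at this last step is where I expect the most care to be needed. I would settle it by taking a shortest witness path, arguing via \Cref{lem:samesourcesat} and \Cref{lem:noupstrong} that the path cannot reuse the key of $F$, and thereby conclude $\fd{u}{v}\in\FDclosure{G}$ and hence $\fd{u}{v}\in\SFD{q}$.
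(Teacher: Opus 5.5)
You handle the edge inventory, the first two items, the forward direction, and the first backward subcase ($\fd{u}{v}\notin\FD{F}$, settled with \Cref{lem:noupstrong} and \Cref{lem:samesourcesat}) correctly, and these parts agree with the paper. The gap is in the remaining backward subcase, and it comes from a misreading of $\FDclosure{G}$. In the definition of $\SFD{q}$ the second witness $G$ is a \emph{single atom}. So for $u\neq v$, the condition $\fd{u}{v}\in\FDclosure{G}$ only says that $G$ itself generates $\fd{u}{v}$ or $\fd{\emp}{v}$; there are no multi-edge ``closure paths'' to unfold. For the same reason, your path-replacement remark in the forward direction is vacuous, since the closure witness is never $S$. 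Consequently, $\fd{u}{v}\in\FD{F}$ together with $\fd{u}{v}\in\FDclosure{G'}\setminus\FDclosure{G}$ happens exactly when $G=R$, $v\in Z$, and the key of $R$ is $u$ or a constant. In that situation your target $\fd{u}{v}\in\FDclosure{G}=\FDclosure{R}$ is false, because $v\notin\notkeyvars{R}$ by \Cref{lem:nonewreps}. Unfolding $\fd{x}{z}$ into $x,y,z$ or $x,y',y,z$ gives a path through two or three distinct atoms, and such a path witnesses nothing under this definition. So the shortest-path bookkeeping you sketch cannot close the case.

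The case needs either a different witness or a contradiction. Here $F\notin\{R,S\}$, since $F'\neq R'$ and $F'$ exists. Also $\fd{u}{v}\in\FD{F}\subseteq\FD{q}$, and $\fd{y}{v}\in\WFD{q}$ is generated by $S$. If $u\neq y$, then \Cref{lem:samesourcesat} rules out $\fd{u}{v}\in\WFD{q}$, so $\fd{u}{v}\in\SFD{q}$. The case $u=y$ is impossible: the distinct atoms $F$ and $S$ would both generate $\fd{y}{v}$, which would make this $S$-edge consistent. When the key of $R$ is the variable $x=u$, the paper argues differently: $\fd{x}{v}\in\FD{q}$ contradicts \Cref{lem:noupstrong} along the $\igraph{q}$-path $x,y,v$ (resp.\ $x,y',y,v$), just as in your first subcase. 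The constant-key situation is not a corner case, since it is the typical way \textsc{Prune} invokes $\atomeliminate_S$.
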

\begin{proof}
By \Cref{lem:nonewreps} it suffices to focus on the case where $u\neq v$.
The first item is clear, and the second item follows from the first and the third item. Thus, we prove the third item.

Let us write $I'$ for the atom in $q'$ corresponding to an atom $I$ in $q$. That is, $R'$ is the extended atom from \cref{eq:atomel,eq:atomel2}, $S'$ is undefined,  and $I'\defeq I$ for $I\in q\setminus \{R,S\}$.

Assume that $\fd{u}{v}\in \SFD{q}$. Then we find distinct $G,H\in q$ such that $\fd{u}{v}\in \FD{G}$ and $\fd{u}{v}\in \FDclosure{H}$. 
We observe that $G\notin\{R,S\}$ because $R,S\in \modei{q}$, hence $G'=G$ and $\fd{u}{v}\in \FD{G'}$.
Furthermore, $\fd{u}{v}\in \FDclosure{H}$ implies $\fd{u}{v}\in \FD{H}$ or $\fd{\emp}{v}\in \FD{H}$. In the first case, we obtain 
$\fd{u}{v}\in \FD{H'}$, as above. In the second case, since $y$ is a variable we have $H\neq S$, hence $\fd{\emp}{v}\in \FD{H'}$ follows. Thus we conclude that $\fd{u}{v}\in \FDclosure{H}$ implies $\fd{u}{v}\in \FDclosure{H'}$. This leads to $\fd{u}{v}\in \SFD{q'}$.

For the converse direction, assume that $\fd{u}{v}\in \SFD{q'}$. Then we find distinct $G',H'\in q'$ such that $\fd{u}{v}\in \FD{G'}$ and $\fd{u}{v}\in \FDclosure{H'}$. If $\fd{u}{v}\in \FD{G}$ and $\fd{u}{v}\in \FDclosure{H}$, then $\fd{u}{v}\in \SFD{q}$.

Thus, suppose first that $\fd{u}{v}\notin \FD{G}$. This is only possible when $G=R$, $u=x$, and $v\in Z$. Since $G$ and $H$ are distinct, we have $H=H'$ and consequently $\fd{u}{v}\in \FDclosure{H}$. Then we have $\fd{x}{v}\in \FD{H}$ or $\fd{\emp}{v}\in \FD{H}$. The first case contradicts \Cref{lem:noupstrong}, and the second case entails $\fd{\emp}{v}\in \SFD{q}$ by \Cref{lem:samesourcesat}, $\fd{y}{v}\in \WFD{q}$, and $y\neq \emp$, contradicting the normality of $q$. Thus this is case is not possible. 

Suppose then that $\fd{u}{v}\in \FD{G}$ and $\fd{u}{v}\notin \FDclosure{H}$. This is only possible when $H=R$ and $v\in Z$. 
Moreover, we have that $x=u$ or $x=\emp$. If $x=u$, then $\fd{x}{v}\in \FD{q}$ and \Cref{lem:noupstrong} lead to a contradiction.
If $x=\emp$, then again we obtain $\fd{\emp}{v}\in \SFD{q}$ in contradiction with the normality of $q$.

Hence we conclude that $\fd{u}{v}\in \SFD{q}$.
\end{proof}

\begin{lemma}\label{lem:root3}
Let $q\in \sjfbcq$ be saturated and normal. 
Let $S\in \modei{q}$ be eliminable. 
 If $(\db',q')=\atomeliminate_S(\db,q)\notin\{\true,\false\}$, then
 $\initqueryvars{q}=\initqueryvars{q'}$.
\end{lemma}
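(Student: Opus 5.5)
We prove $\rootqueryvars{q}\cap\srcqueryvars{q}=\rootqueryvars{q'}\cap\srcqueryvars{q'}$ along the same lines as \Cref{lem:root1,lem:root11,lem:root2}, via four inclusions: $\rootqueryvars{q}\subseteq\rootqueryvars{q'}$, $\rootqueryvars{q'}\subseteq\rootqueryvars{q}$, $\srcqueryvars{q}\cap\rootqueryvars{q}\subseteq\srcqueryvars{q'}$, and $\srcqueryvars{q'}\subseteq\srcqueryvars{q}$. Throughout we use \Cref{claim:simple}, which says $\SFD{}$ is unchanged, together with the structure of $\FD{q'}$ relative to $\FD{q}$. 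In case (i) of the definition of $\atomeliminate_S$, $\FD{q'}$ is $\FD{q}$ minus the edges $\fd{y}{z}$ for $z\in Z$, plus the edges $\fd{x}{z}$ for $z\in Z$. In case (ii), with $y'$ in place of the key, it is $\FD{q}$ minus $\fd{y}{z}$, plus $\fd{x}{z}$. In both cases every new edge $\fd{x}{z}$ can be simulated in $\FD{q}$ by the path $x,y,z$ or $x,y',y,z$.

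\textbf{Roots.} First, $Z\cap\rootqueryvars{q}=\emptyset$, since $Z\subseteq\leafqueryvars{q}$. Also, in both $q$ and $q'$, every $z\in Z$ has an incoming $\WFD{}$-edge: from $y$ in $q$ and from $x$ in $q'$. For the latter, $\fd{x}{z}\notin\SFD{q'}$ by \Cref{claim:simple} combined with \Cref{lem:noupstrong}. For $u\notin Z$, an incoming $\igraph{}$-edge $\fd{w}{u}$ is preserved in both directions. If the edge is in $\WFD{}$, we use \Cref{claim:simple}; the only removed edges end in $Z$. If the edge is in $\SFD{}\cap\igraph{}$, we additionally need its witnessing edge $\fd{p}{w}\in\WFD{}$ with $\fd{p}{u}\notin\SFD{}$. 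That witness may lie in $Z$, but then the target must still be preserved. Where the witness edge is $\fd{y}{w}$ with $w\in Z$, we replace it by $\fd{x}{w}$, which is inconsistent in $q'$, and check $\fd{x}{u}\notin\SFD{q'}$ via \Cref{lem:verysimple}, the shortest-path argument, and \Cref{lem:noupstrong}. Conversely, new witnesses $\fd{x}{w}$ in $q'$ are traced back to $\fd{y}{w}\in\WFD{q}$, using \Cref{lem:swcollide}/\Cref{lem:ctrans} to show that $\fd{y}{u}\notin\SFD{q}$.

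\textbf{Source SCCs.} Reachability in $\FD{q'}$ implies reachability in $\FD{q}$ because each new edge is simulated, so $\srcqueryvars{q'}\subseteq\srcqueryvars{q}$ follows as in \Cref{lem:root2}. The one subcase needing care is a removed incoming edge $\fd{y}{u}$ with $u\in Z$. Here $u\in\leafqueryvars{q}\cap\nkeyqueryvars{q}$ receives the new edge $\fd{x}{u}$ in $q'$ and is a non-key variable, so $u$ is not a source there; hence this subcase cannot occur. For $\srcqueryvars{q}\cap\rootqueryvars{q}\subseteq\srcqueryvars{q'}$, take a shortest $\FD{q}$-path back from $u$. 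Such a path cannot pass through $Z$ as an intermediate vertex, because $Z\subseteq\nkeyqueryvars{q}$. Its endpoint also cannot lie in $Z$: that would force $u\in Z$, contradicting $u\in\rootqueryvars{q}$. Therefore the path avoids all removed edges and is a path in $\FD{q'}$.

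\textbf{Main obstacle.} The hard step is the $\SFD{}\cap\igraph{}$ case of root preservation when the witness edge touches $Z$. Here we must exclude that removing $S$ makes some consistent edge $\fd{z}{u}$ out of $Z$ lose, or newly acquire, its attack-propagation status. We would first try to rule this out entirely: since $Z\subseteq\leafqueryvars{q}$, every outgoing edge of $z\in Z$ lies in $\SFD{q}\setminus\igraph{q}$, so by \Cref{lem:verysimple} we get $\fd{y}{u}\in\SFD{q}$ (respectively $\fd{y'}{u}$ via \Cref{lem:ctrans}). \Cref{lem:ctrans} then also gives $\fd{x}{u}\in\SFD{}$ through the new edge, keeping $\fd{z}{u}$ outside $\igraph{q'}$. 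The hypothesis that $\keyvars{S}$ is rooted at $\emp$ should not be needed for this lemma; it matters only for attack-graph acyclicity.
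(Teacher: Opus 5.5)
Your plan is the paper's: four inclusions, with \Cref{claim:simple} transferring edge types, and every new edge $\fd{x}{z}$ simulated by $x,y,z$ or $x,y',y,z$. Your source-SCC part is sound. By intersecting with $\rootqueryvars{q}$ and using that $Z$-vertices have no out-edges, it even avoids the paper's appeal to $\rootof{y}=\emp$.

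The problem is the step you call the main obstacle, where the argument does not work as written. The step ``\Cref{lem:ctrans} then also gives $\fd{x}{u}\in\SFD{}$ through the new edge'' is invalid. \Cref{lem:ctrans} composes two consistent edges, but you yourself showed $\fd{x}{z}\in\WFD{q'}$, and $\fd{x}{y}$ (resp.\ $\fd{x}{y'}$) lies in $\WFD{q}$. Deriving $\fd{x}{u}\in\SFD{q}$ from $\fd{y}{u}\in\SFD{q}$ via \Cref{lem:verysimple} would need $\fd{y}{u}\notin\igraph{q}$, which you do not have. The Roots paragraph has the same problem in the other direction. There you propose to establish $\fd{x}{u}\notin\SFD{q'}$ ``via \Cref{lem:verysimple}'', a lemma that only ever yields membership in $\SFD{}$. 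You also aim there at $\fd{y}{u}\notin\SFD{q}$, while the next paragraph derives $\fd{y}{u}\in\SFD{q}$ for the same configuration.

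The missing observation is that this case cannot occur. Suppose $\fd{w}{u}\in\SFD{}\cap\igraph{}$ in $q$ or in $q'$. Then $w\neq u$ and $\fd{w}{u}$ is a genuine edge of $\FD{}$, so $w$ is a primary-key variable. But $Z\subseteq\nkeyqueryvars{q}$ by eliminability, and also $Z\subseteq\nkeyqueryvars{q'}$, since every key of $q'$ is a key of $q$. Hence $w\notin Z$, so the witness $\fd{p}{w}$ ends outside $Z$ and is neither removed nor added. \Cref{claim:simple} then transfers $\fd{p}{w}\in\WFD{}$, $\fd{w}{u}\in\SFD{}$ and $\fd{p}{u}\notin\SFD{}$ in both directions.

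You already use exactly this non-key property in your source-SCC argument. Applying it here removes the faulty steps, and the rest of your proof goes through. The paper's first claim also reaches this configuration ($p=x$, $w\in Z$). It refutes it by a longer detour through \Cref{lem:verysimple}, which is valid there only because $u\in\rootqueryvars{q}$ supplies $\fd{y}{u}\notin\igraph{q}$.
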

\begin{proof}
 We need to prove $\rootqueryvars{q}\cap \srcqueryvars{q} =\rootqueryvars{q'} \cap \srcqueryvars{q'}$. The proof is divided into separate claims.

\begin{claim} $\rootqueryvars{q} \subseteq \rootqueryvars{q'}$. 
\end{claim}
\begin{claimproof}
Toward contradiction, suppose that $u\in \rootqueryvars{q}\setminus \rootqueryvars{q'}$. Then there exists an edge of the form $\fd{w}{u}\in\igraph{q'}$.  In particular, $\fd{w}{u}\in\FD{q'}$, which leads to $w\neq u$ due to \Cref{lem:nonewreps}.
Consider two cases.

Suppose first that $\fd{w}{u}\in\WFD{q'}$. Then we have $\fd{w}{u}\notin\SFD{q}$ by \Cref{claim:simple}. 
Since $u\in \rootqueryvars{q}$, we also have $\fd{w}{u}\notin\WFD{q}$. Therefore, $\fd{w}{u}\in \FD{q'}\setminus \FD{q}$, which is only possible when $w=x$ and $u\in Z$. However, in this case $u\notin \rootqueryvars{q}$ due to the edge $\fd{y}{u}\in \WFD{q}$, for some $y\in Y$, a contradiction. Thus the case that $\fd{w}{u}\in\WFD{q'}$ is not possible.

Suppose then that $\fd{w}{u}\in\SFD{q'}$. 
Since $w\neq u$, we have that $\fd{w}{u}\in \FD{q}$.
Now, we find another edge $\fd{p}{w}\in \WFD{q'}$ such that $\fd{p}{u}\notin\SFD{q'}$.
By \Cref{claim:simple} we also have $\fd{w}{u}\in\SFD{q}$ and $\fd{p}{u}\notin\SFD{q}$. Since $u\in \rootqueryvars{q}$, it holds that
$\fd{p}{w}\notin \WFD{q}$ (otherwise, $\fd{w}{u}\notin \igraph{q}$).
Moreover, by \Cref{lem:ctrans} we have $\fd{p}{w}\notin \SFD{q}$. Summing up, $\fd{p}{w}\in \FD{q'}\setminus \FD{q}$, and thus $p=x$ and $w\in Z$. In particular, it follows that $\fd{x}{u}\notin \SFD{q}$. We deal with the two cases separately.

In the case of \Cref{it:atomel}, we find $y\in Y$ such that $\fd{y}{w}\in \WFD{q}$ and $\fd{w}{u}\in\SFD{q}$, while $\fd{w}{u}\notin\igraph{q}$ due to $u\in\rootqueryvars{q}$. Recalling that $\fd{w}{u}\in \FD{q}$, this leads to $\fd{y}{u}\in\SFD{q}$. 
Since $y\in \nrootqueryvars{q}$, we have that $u\neq y$, leading to $\fd{y}{u}\in\FD{q}$. Then, by $\fd{x}{y}\in \WFD{q}$ and $\fd{y}{u}\notin \igraph{q}$, it follows that $\fd{x}{u}\in\SFD{q}$, a contradiction.

In the case of \Cref{it:atomel2}, we find $y\in Y$ such that $\fd{y}{u}\in\SFD{q}$ obtains similarly, and consequently also $\fd{y'}{u}\in\SFD{q}$ transitively via $\fd{y'}{y}\in\SFD{q}$. 
Since $y'\in \nrootqueryvars{q}$ due to the edge $\fd{x}{y}\in \WFD{q}$, we obtain $y'\neq u$, leading to $\fd{y'}{u}\in\FD{q}$
Then, we obtain $\fd{x}{u}\in\SFD{q}$ from $\fd{x}{y'}\in \WFD{q}$, $\fd{y'}{u}\in\SFD{q}\cap \FD{q}$, and $\fd{y'}{u}\notin\igraph{q}$.
Again, this leads to a contradiction.


Hence neither the case that $\fd{w}{u}\in\SFD{q'}$ is  possible. We conclude by contradiction that $\rootqueryvars{q} \subseteq \rootqueryvars{q'}$.
\end{claimproof}

\begin{claim}
 $\rootqueryvars{q'} \subseteq \rootqueryvars{q}$.
 \end{claim}
 \begin{claimproof}
   Suppose $u\notin \rootqueryvars{q}$.  
By assumption there exists an edge of the form $\fd{w}{u}\in\igraph{q}$.  
Since $q$ is variable-repetition-free, we have $w\neq u$.

Suppose first that $\fd{w}{u}\in\WFD{q}$. If $u\notin Z$, then $\fd{w}{u}\in\WFD{q'}$  by \Cref{claim:simple}, i.e., $u\notin \rootqueryvars{q'}$. Thus, suppose $u\in Z$. Then, we have $\fd{x}{u}\in \FD{R'}$. If $\fd{x}{u}\in \WFD{q'}$, we are done, so assume that
$\fd{x}{u}\in \SFD{q'}$. Then, we find $G\in q'\setminus \{R'\}$ such that $\fd{x}{u}\in \FDclosure{G}$. 
In particular, we have $\fd{t}{u}\in \FD{G}$, where $t\in \{x,\emp\}$.
Note that it holds that $G\in q\setminus \{R\}$. Thus, by \Cref{lem:noupstrong}, the case where $t=x$ is not possible, i.e., $\fd{\emp}{u}\in \FD{G}$. Since $q$ is normal, it holds that $\fd{\emp}{u}\in \WFD{G}$. But then, since $y\neq \emp$ and $\fd{y}{u}\in \WFD{q}$, we obtain a contradiction with \Cref{lem:samesourcesat}.

 Then, suppose $\fd{w}{u}\in\SFD{q}$.
  Then,  we find another edge $\fd{p}{w}\in \WFD{q}$ such that $\fd{p}{u}\notin\SFD{q}$.
  In this case, by $Z\subseteq  \leafqueryvars{q}$ we obtain $w\notin Z$. 
 Note that also $u\notin Z$; if this were not the case,  we would have $\fd{w}{u}\in\SFD{q}\cap \igraph{q}$ and $\fd{y}{u}\in \WFD{q}$, contradicting \Cref{lem:backward}.
Consequently,
   \Cref{claim:simple} implies $\fd{p}{w}\in \WFD{q'}$, $\fd{w}{u}\in\SFD{q'}$, and $\fd{p}{u}\notin\SFD{q'}$. Since $w\neq u$, we also have $\fd{w}{u}\in\FD{q'}$. This leads to
to $\fd{w}{u}\in\igraph{q'}$, i.e., $u \notin \rootqueryvars{q'}$.
\end{claimproof}

 \begin{claim}
  $
  \srcqueryvars{q} \subseteq  \srcqueryvars{q'}$.
  \end{claim}
  \begin{claimproof}
 Toward contradiction, suppose $u\in  \srcqueryvars{q} \setminus  \srcqueryvars{q'}$.
Then we find an edge $\fd{u_0}{u}\in {\FD{q'}}$ such that $u_0 \notin \reach{u}{\FD{q'}}$.
Note that the edges in $\FD{q'}\setminus\FD{q}$ are of the form $\fd{x}{z}$, where $z\in Z$. However, $\fd{u_0}{u}$ is not of this form; otherwise, due to \Cref{it:aeone}, $u \in \reach{\emp}{\FD{q}}$,  contradicting the fact that $u\in \srcqueryvars{q}$, as $\emp$ has no incoming edges in $\FD{q}$.
 Hence 
 $\fd{u_0}{u}\in {\FD{q}}$, 
 so $u_0 \in \reach{u}{\FD{q}}$ obtains by assumption.
 As above, we note that any path in $\FD{q}$ from $u$ to $u_0$  does not include edges from $\FD{q}\setminus\FD{q'}$, all of which take the form $\fd{y}{z}$, where $z\in Z$. It is then correct to conclude that $u_0 \in \reach{u}{\FD{q'}}$, a contradiction. Hence $ \srcqueryvars{q} \subseteq  \srcqueryvars{q'}$ obtains.
\end{claimproof}

 \begin{claim}
  $\srcqueryvars{q'} \subseteq \srcqueryvars{q}$.
  \end{claim}
  \begin{claimproof}
Assuming $u\in \srcqueryvars{q'}$, suppose toward contradiction that $u\notin \srcqueryvars{q}$. Then we find an edge $\fd{u_0}{u} \in \FD{q}$ such that $u_0 \notin \reach{u}{ \FD{q}}$.
If $\fd{u_0}{u} \in \FD{q'}$, then by assumption $u_0 \in \reach{u}{ \FD{q'}}$, which implies $u_0 \in \reach{u}{ \FD{q}}$; for this, observe that any edge in $\FD{q'}\setminus \FD{q}$ proceeds from $x$ to $Z$ and is replaceable with two  or three edges in $\FD{q}$, depending on the case. However, this contradicts the fact that $u_0 \notin \reach{u}{ \FD{q}}$.
 Thus, suppose $\fd{u_0}{u} \notin \FD{q'}$. This entails $u_0= y$ and $u\in Z$. Then, we observe $\fd{x}{u} \in \FD{q'}$ while $Z\subseteq   \nkeyqueryvars{q'} $ (by \Cref{lem:nonewreps} and \Cref{it:aetwo}),  contradicting $u\in \srcqueryvars{q'}$.
 We conclude by contradiction that $\srcqueryvars{q'} \subseteq \srcqueryvars{q}$.
 \end{claimproof}
 This concludes the proof.
\end{proof}

\smallskip
\begin{lemma}\label{lem:nr2}
Let $q\in \sjfbcq$ be saturated and normal. 
Let $S\in \modei{q}$ be eliminable. 
 If $(\db',q')=\atomeliminate_S(\db,q)\notin\{\true,\false\}$, then
 $\nrqueryvars{q}=\nrqueryvars{q'}$.
\end{lemma}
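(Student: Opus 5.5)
We prove both inclusions, following the pattern of \Cref{lem:nr,lem:nr12,lem:nr5}. The two tools are \Cref{lem:root3}, which gives $\initqueryvars{q}=\initqueryvars{q'}$, and the third item of \Cref{claim:simple}, which says that $\fd{u}{v}\in\SFD{q}$ iff $\fd{u}{v}\in\SFD{q'}$ for all $u,v\in\queryvars{q}$, unconditionally, so even when $v\in Z$. Throughout we use that $\queryvars{q}=\queryvars{q'}$.

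For the inclusion $\nrqueryvars{q}\subseteq\nrqueryvars{q'}$, take $v\in\nrqueryvars{q}$. Then there is $u\in\initqueryvars{q}$ with $\fd{u}{v}\in\SFD{q}$. Since $u\in\initqueryvars{q}\subseteq\queryvars{q}=\queryvars{q'}$, \Cref{claim:simple} yields $\fd{u}{v}\in\SFD{q'}$. By \Cref{lem:root3}, $u\in\initqueryvars{q'}$, and hence $v\in\nrqueryvars{q'}$.

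For the converse inclusion, take $v\in\nrqueryvars{q'}$. Then there is $u\in\initqueryvars{q'}$ with $\fd{u}{v}\in\SFD{q'}$. By \Cref{lem:root3}, $u\in\initqueryvars{q}$, and the unconditional third item of \Cref{claim:simple} gives $\fd{u}{v}\in\SFD{q}$. Hence $v\in\nrqueryvars{q}$.

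The one delicate point is the case $u=v$, where the consistent edge is a dummy self-loop. Here membership in $\SFD{\cdot}$ holds by definition in both queries, so nothing further is needed. Apart from this, the argument rests entirely on the unconditional direction of \Cref{claim:simple}. If that direction failed for $v\in Z$, we would instead argue directly: since $Z\subseteq\nrqueryvars{q}$ by eliminability, every $v\in Z$ is already in $\nrqueryvars{q}$, so in the converse inclusion only $v\notin Z$ needs \Cref{claim:simple}.
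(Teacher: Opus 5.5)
Your proof is correct and matches the paper's argument: both inclusions follow from \Cref{lem:root3}, which gives equality of the guard-variable sets, together with the unconditional third item of \Cref{claim:simple}. Your extra remarks on the self-loop case and on $v\in Z$ are harmless but unnecessary, since \Cref{claim:simple} already covers both.
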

\begin{proof}
Assume first that $v\in \nrqueryvars{q}$. 
Then there exists 
  $u\in \initqueryvars{q}$ such that $\fd{u}{v}\in \SFD{q}$.
By \Cref{lem:root3} we obtain $u\in \initqueryvars{q'}$, and by \Cref{claim:simple} we obtain $\fd{u}{v}\in \SFD{q'}$. 
 Hence $v\in \nrqueryvars{q'}$. 
  
Conversely, assume $v\in \nrqueryvars{q'}$. 
Then there exists $u\in \initqueryvars{q'}$ such that $\fd{u}{v}\in \SFD{q'}$. 
Again, by \Cref{lem:root3,claim:simple} we obtain $u\in \initqueryvars{q}$ and $\fd{u}{v}\in \SFD{q}$, hence $v\in \nrqueryvars{q}$.
\end{proof}

\smallskip
\begin{lemma}\label{lem:qnew}
Let $q\in \sjfuabcq$ be a saturated and normal. 
 Let $S\in q$ be eliminable. 
 If $(\db',q')=\atomeliminate_S(\db,q)\notin\{\true,\false\}$, then
\begin{enumerate}[(a)]
\item\label{it:qaetwo}  $q'\in \sjfuabcq$; 
\item\label{it:qaeone} $q'$ is saturated;
\item\label{it:qaeoneone} $q'$ is normal.
\end{enumerate}
\end{lemma}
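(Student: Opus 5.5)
We follow the template of \Cref{lem:strongel}, writing $I'$ for the atom of $q'$ corresponding to $I\in q$. Thus $R'$ is the merged atom from \eqref{eq:atomel} or \eqref{eq:atomel2}, $S'$ is undefined, and $I'=I$ otherwise. We let $w$ denote the non-key variable of $R$ on the path to $y$, so $w=y$ in \Cref{it:atomel} and $w=y'$ in \Cref{it:atomel2}. The new key-nonkey edges are exactly the edges $\fd{x}{z}$ with $z\in Z$. Each of them is replaceable in $\FD{q}$ by $\fd{x}{y},\fd{y}{z}$ (case (i)) or by $\fd{x}{y'},\fd{y'}{y},\fd{y}{z}$ (case (ii)). Variable-repetition-freeness is given by \Cref{lem:nonewreps}, and preservation of consistent edges by \Cref{claim:simple}, which holds unconditionally for $\SFD{}$.

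For \cref{it:qaetwo}, we show that $F'\attacks{q'}G'$ implies $F\attacks{q}G$. Take a path $U$ in $\gaifman{q'}$ that is not separated by $\keyclosure{(F')}{q'}$.
\begin{itemize}
\item We first show that $U$ is not separated by $\keyclosure{F}{q}$. Take a shortest directed path $D$ in $\FD{q\setminus\{F\}}$ witnessing membership of some vertex of $U$. If $D$ uses an edge from $y$, then $y$ is preceded on $D$ by $x$ or $y'$; this uses \Cref{lem:samesourcesat}, and \Cref{lem:ctrans} for minimality. We then shortcut this segment by the edge $\fd{x}{z}$ generated by $R'$. Here $R'\neq F'$, because $R$ generates the inconsistent edge $\fd{x}{w}$ and so $R\neq F$ unless $F=R$ is handled directly.
\item Next we repair $U$ into a path of $\gaifman{q}$. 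Edges $\{u,z\}$ with $u\in\{x,w\}\cup\vec u$ and $z\in Z$ are rerouted through $w$ and $y$, and endpoints $z\in Z$ are extended through $y$.
\item The rerouted path stays unseparated. If $x$, $w$ or $y$ lay in $\keyclosure{F}{q}$, then $z$ would too, since $\fd{x}{w}$, $\fd{w}{y}$ and $\fd{y}{z}$ lie in $\FD{q\setminus\{F\}}$ whenever $F\notin\{R,S,T\}$; this contradicts that $z$ is on $U$.
\item The cases $F\in\{R,S,T\}$ need separate handling. When $F=S$, the attack on $G$ comes from the $R'$-part. We then use that $S\attacks{}z$ and $R\attacks{}y$ by \Cref{lem:nextattack,lem:upattack}, and show that $R\attacksp{q}G$ would close a cycle with $S$ otherwise. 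We expect to argue instead directly that $R\attacks{q}G$ or $S\attacks{q}G$, mapping the attack of $R'$ to one of $R$, $S$.
\end{itemize}

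For \cref{it:qaeone}, assume $\fd{u}{v}\in\FD{F'}$, $\fd{v}{w_0}\in\SFD{q'}$ and $\fd{u}{w_0}\in\FDclosure{q'\setminus\{F'\}}$. By \Cref{claim:simple}, $\fd{v}{w_0}\in\SFD{q}$, and the reachability transfers to $\FDclosure{q\setminus\{F\}}$ by expanding new edges. If $\fd{u}{v}\in\FD{F}$, saturation of $q$ gives $\fd{u}{w_0}\in\SFD{q}$. Otherwise $F'=R'$, $u=x$ and $v\in Z$. Since $Z\subseteq\leafqueryvars{q}$, \Cref{lem:verysimple2} gives $\fd{y}{w_0}\in\SFD{q}$. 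We then get $\fd{w}{w_0}\in\SFD{q}$, using \Cref{lem:ctrans} in case (ii), and apply saturation to $\fd{x}{w}\in\FD{R}$. In both cases \Cref{claim:simple} transfers $\fd{u}{w_0}\in\SFD{q}$ back to $\SFD{q'}$.

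For \cref{it:qaeoneone}, we check the four items of \Cref{def:normal}.
\begin{itemize}
\item For \cref{it:normal1} at $R'$: $\fd{x}{z}\in\SFD{q'}$ with $z\in Z$ would force $\fd{x}{z}$ or $\fd{\emp}{z}$ in $\FD{q}$. The first is excluded by \Cref{lem:noupstrong}. The second is excluded by \Cref{lem:samesourcesat} together with $\fd{y}{z}\in\WFD{q}$ and normality. For the other non-key variables of $R'$, \Cref{claim:simple} lets us use normality of $q$.
\item \Cref{it:normal2} for $R'$ reduces to \Cref{lem:puuh} and the fact that $Z\subseteq\nkeyqueryvars{q}$.
\item \Cref{it:normal4,it:normal3} follow directly from \Cref{claim:simple} and the normality of $q$.
\end{itemize}
We expect the main obstacle to be \cref{it:qaetwo}, specifically the case where $F$ or $G$ is $R$ or $S$. \Cref{ex:cycle} shows that acyclicity genuinely depends on hypothesis \cref{it:aeone}, that $\keyvars{S}$ is reachable from $\emp$. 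We would use it as follows: every variable on the $\igraph{q}$-path from $\emp$ to $y$ lies in $\keyclosure{}{}$ of no atom off that path. Hence any new attack by $R'$ on an atom $G$ lifts to an attack by $R$ or $S$ on $G$ in $q$.
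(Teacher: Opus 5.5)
There is a genuine gap. Your overall template is the right one: show that $U$ stays unseparated in $q$, reroute it, and transfer saturation and normality through \Cref{claim:simple}. However, the key step of \cref{it:qaetwo} is false as stated.

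On a shortest witness $D$ in $\FD{q\setminus\{F\}}$, the edge leaving $y$ is the inconsistent $S$-edge $\fd{y}{z}$. So \Cref{lem:ctrans} gives no shortcut across $y$, and the predecessor $v$ of $y$ need not be $x$ or $y'$. It can satisfy $\fd{v}{y}\in\SFD{q}\setminus\igraph{q}$. In that case \Cref{lem:verysimple} forces $v=t=\keyvars{F}$, so $D=t,y,z$ with $t\notin\{x,y'\}$; and when $\keyvars{F}=y$ one simply has $D=y,z$. Then $x$ is not on $D$, and there is nothing to shortcut by $\fd{x}{z}$. This is exactly the configuration behind \Cref{ex:cycle} (see \Cref{fig:cycle}: $F=T$, $D=u,y,z$). \Cref{it:aeone} does not exclude it: adding an atom $R_0(\underline{c},x)$ keeps the same shortest $D$.

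What is missing is the actual use of \cref{it:aeone}. Take the $\igraph{q}$-path $\emp,\dots,x$ leading to $y$.
\begin{itemize}
\item If $t$ lay on it, there would be an $\igraph{q}$-path of length at least $2$ from $t$ to $y$, contradicting $\fd{t}{y}\in\FD{q}$ by \Cref{lem:noupstrong}. For $t=y$, use \Cref{lem:forest} instead.
\item Hence the path uses no edge of $F$ or $S$, and it survives in $\FD{q'\setminus\{F'\}}$.
\item Since $\keyvars{F}=t\neq x$, we have $R'\neq F'$, so $\fd{x}{z}\in\FD{R'}$ puts $z$ into $\keyclosure{(F')}{q'}$, a contradiction.
\end{itemize}
Your closing remark inverts this role: the vertices of that path lie in $\keyclosure{(F')}{q'}$ for every $F'$ whose key is not on it. Also, $F=S$ never arises, since $S$ has no counterpart in $q'$. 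Your ``lift to $R$ or $S$'' variant is left unproved and is unnecessary: mapping $R'\mapsto R$ and extending endpoints in $Z$ through $y$ (and $y'$) suffices.

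\Cref{it:qaeone} has a similar problem in the case $F'=R'$, $u=x$, $v\in Z$. Since $Z\subseteq\nkeyqueryvars{q'}$, the hypothesis $\fd{v}{w_0}\in\SFD{q'}$ forces $w_0=v$. So \Cref{lem:verysimple2} is not applicable, because it needs $\fd{v}{w_0}\in\FD{q}$. Moreover, the conclusion $\fd{y}{w_0}\in\SFD{q}$ you draw from it is false, as $\fd{y}{v}\in\WFD{q}$. In fact the target $\fd{x}{v}\in\SFD{q}$ is itself false by \Cref{lem:noupstrong}. So you must show instead that the hypothesis $\fd{x}{v}\in\FDclosure{q\setminus\{R,S\}}$ cannot hold:
\begin{itemize}
\item A witnessing path cannot start at $\emp$, and none of its vertices lies in $\keyclosure{S}{q}$. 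Either would give $\fd{y}{v}\in\FDclosure{q\setminus\{S\}}$, hence $\fd{y}{v}\in\SFD{q}$.
\item So the path starts at $x$. Read in $\gaifman{q}$, it joins $x\in\atomvars{R}$ to $v\in\notkeyvars{S}$ unseparated, giving $S\attacks{q}R$.
\item Together with $R\attacks{q}S$ (\Cref{lem:nextattack,lem:upattack}), this contradicts acyclicity.
\end{itemize}

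Finally, in \cref{it:qaeoneone}, \Cref{it:normal2} for $R'$ is not settled by \Cref{lem:puuh} and $Z\subseteq\nkeyqueryvars{q}$ for mixed pairs $a\in\notkeyvars{R}$, $b\in Z$. For $a$ in $\vec u$, you need \Cref{lem:backward} and \Cref{lem:verysimple} to obtain $\fd{x}{b}\in\SFD{q}$, which contradicts \Cref{lem:noupstrong}. For $a=y'$ in case (ii), \Cref{lem:noupstrong} applies directly.
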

\begin{proof}
We consider the two cases of inconsistent edge elimination simultaneously.
In \cref{it:atomel}, let $R(\underline{x},y,\vec{u}), S(\underline{y},\vec{z}) \in \modei{q}$ be the atoms witnessing that $S$ is eliminable.
In \cref{it:atomel2}, let $R(\underline{x},y',\vec{u}),S(\underline{y},\vec{z}) \in \modei{q}$ and $T(\underline{y'},y,\uk)\in \modec{q}$ be the atoms witnessing that $S$ is eliminable.

Let us write $Z=\notkeyvars{S}$.

For $F\in q\setminus\{S\}$, let $F'$ be the corresponding atom in $q'$. In the case of \cref{it:atomel}, $F'=F$ if $F\notin \{R,S\}$, $R'$ is as in \cref{it:atomel}, and $S'$ is not defined.
We define $F'$ analogously in the case of \cref{it:atomel2}.
    Since $S'$ is undefined, we obtain $F\neq S \neq G$. Furthermore, we have $\keyvars{F}= \keyvars{F'}$ for all $F\in q\setminus\{S\}$.

\textbf{(\cref{it:qaetwo})} 
By \Cref{lem:nonewreps},
it suffices to prove that $q'$ has an acyclic attack graph. For this, we show that if $F'\attacks{q'} G'$, then $F\attacks{q} G$. 
    Toward contradiction, suppose this is not the case, i.e., $F'\attacks{q'} G'$ and $F\nattacks{q} G$, for some $F,G\in q$.

Now, $F'\attacks{q'} G'$ implies  $\gaifman{q'}$ contains an (undirected) path 
\[
U=x_1, \dots ,x_\ell \quad (\ell \geq 1)
\]
between some $x_1 \in \notkeyvars{F'}$ and   $x_\ell\in \atomvars{G'}$
such that $U$ is not separated by
$\keyclosure{(F')}{q'}$.

Now, $U$ is a path between $\notkeyvars{F}$ and $\atomvars{G}$, unless, for some $z\in Z$, we have that $z$ occurs as an endpoint of $U$ without belonging to $\notkeyvars{F}\cup\atomvars{G}$, or $U$ contains an edge $\{w,z\}$, where $w$ is a variable occurring in $(x,\vec{u})$.
We proceed in cases for each occurrence of such $z\in Z$ in $U$.
\begin{itemize}
\item  \cref{it:atomel}:  Replace the edge $\{w,z\}$ with the two edges $\{w,y\}$ and $\{y,z\}$, or extend the endpoint $z$ with the edge $\{y,z\}$
\item  \cref{it:atomel2}: Replace the edge $\{w,z\}$ with the three edges $\{w,y'\}$, $\{y',y\}$, and $\{y,z\}$, or extend the endpoint $z$ with the two edges $\{y,z\}$ and $\{y',y\}$.
\end{itemize}
Denote by $U'$ the undirected path obtained from $U$ in this way.

Since $F\nattacks{q} G$, 
it holds that $x_i\in \keyclosure{F}{q}$ for some $i\in [\ell]$. In the case that we have extended $U$ to $U'$, we may have $y\in \keyclosure{F}{q}$ (or, additionally, $y'\in \keyclosure{F}{q}$. However, in this case we obtain $z\in \keyclosure{F}{q}$, where the variable $z$ appears in $U$, by $\fd{y}{z}\in \FD{q\setminus \{F\}}$ (using $S\neq F$, and, in the second case, also $\fd{y'}{y}\in \SFD{q}$). 

Note that $\keyvars{F}\neq x_i$, because otherwise by $\keyvars{F}=\keyvars{F'}$ we obtain $x_i\in \keyclosure{(F')}{q'}$, a contradiction.

Hence 
$\FD{q\setminus \{F\}}$ contains a non-empty path 
\[D=t, \dots ,x_i,\]
 where $t\in\{\emp, \keyvars{F}\}$. 
Suppose $D$ is shortest among all paths witnessing $x_i\in \keyclosure{F}{q}$.
In particular, it is a
simple path. 
Moreover, we observe  
\begin{equation}\label{eq:huh}
\FD{q\setminus \{F\}} \subseteq\FD{q'\setminus \{F'\}}\cup \FD{S}.
\end{equation}
Hence, $x_i\notin \keyclosure{(F')}{q'}$ implies $D$ contains an edge of the form $\fd{y}{z}\in \FD{S}$. 
Moreover, $Z\subseteq \nkeyqueryvars{q}$ implies $z$ is the last node in $D$.
Thus, $D$ can be written as
\[
D=t, \dots ,y,z, 
\]
where  $z=x_i$ and possibly $t=y$. 
 By simplicity of $D$,  no atom generates more than one edge in $D$. In particular, no edge in the prefix $D[t,y]$ 
  belongs to $\FD{S}$. Hence, \eqref{eq:huh} entails $D[t,y]$ 
  is a path in 
$\FD{q'\setminus \{F'\}}$.

We now prove the following claim.
\begin{claim}
  $t\neq y$.
  \end{claim}
  \begin{claimproof}
  Assume toward contradiction that $t=y$. That is, $D$ consists of the single edge $\fd{y}{z}$.
Since $y\neq \emp$, we obtain $\keyvars{F}=y$, and consequently, $\keyvars{F'}=y$. On the other hand, $\rootvar{y}=\emp$, and thus by $y\in \reach{x}{\igraph{q}}$ we obtain $\rootvar{x}=\emp$. 
We now consider the two cases in parallel.

It follows that $\igraph{q}$ has a path of the form
\begin{itemize}
\item \Cref{it:atomel}: $D'= \emp, \dots ,x,y,z$,
\item \Cref{it:atomel2}: $D'= \emp, \dots ,x,y',y,z$,
\end{itemize}
where possibly $x=\emp$.

We may assume that $D'$ is simple by \Cref{lem:forest}.
Hence $D'[\emp,x]$ is a path in $\FD{q\setminus\{F\}}$.
By \eqref{eq:huh}, and since $D'[\emp,x]$ does not contain any edge from $\FD{S}$ (by simplicity of $D'$), we conclude that $D'[\emp,x]$ is a path in $\FD{q'\setminus \{F'\}}$. Moreover, since $R'\neq F'$ by $\keyvars{R'}=x\neq y=\keyvars{F'}$ (which follows by \Cref{lem:forest}), we have that $\fd{x}{z}$ belongs to $\FD{q'\setminus \{F'\}}$. Therefore, $x_i\in \keyclosure{(F')}{q'}$, a contradiction.
\end{claimproof}
Given the claim above, we can write 
\[
D=t, \dots ,v,y,z,
\]
where $z=x_i$, and possibly $v=t$. 
Consider the following claims.
\begin{claim}
$\fd{v}{y}\notin \igraph{q}$.
 \end{claim}
 \begin{claimproof}
 (\Cref{it:atomel})
We first show that $x\neq v$. 
Toward contradiction, suppose  $x=v$. Since $\fd{x}{y}$ belongs to $\WFD{q}$ and is uniquely generated by $R$, this entails $R\neq F$. In particular, we then have $R'\neq F'$, and $\fd{v}{z}$ belongs to $\FD{q'\setminus \{F'\}}$.
Again, using simplicity of $D$ and \eqref{eq:huh} it is possible to argue that $D[t,v]$ is a path in $\FD{q'\setminus \{F'\}}$.
Therefore, $x_i\in \keyclosure{(F')}{q'}$, a contradiction.  Hence $x\neq v$ obtains.
\Cref{lem:backward} then implies the claim.  

(\Cref{it:atomel2}) If $v \not\sib  y'$, then \Cref{lem:backward} entails $\fd{v}{y}\notin \igraph{q}$. Thus, suppose
$v \sib  y'$.
We first observe that $v\in\notkeyvars{R}$ and $\fd{x}{v}\in\WFD{q}$, with the latter edge uniquely generated by $R$.
This is immediate if $v=y'$. Otherwise, let $H\in q$ witness $v\sib y'$. By normality, $H$ generates inconsistent edges to both $v$ and $y'$. Since the inconsistent edge into $y'$ is uniquely generated by $R$, we obtain $H=R$.

Assume toward contradiction that $\fd{v}{y}\in\igraph{q}$. Then $\fd{v}{y}\in\SFD{q}$: this holds by assumption if $v=y'$, and by \Cref{lem:backward} otherwise.
We first exclude $t=v$. In this case, since $v$ is a variable, we have $\keyvars{F}=v$.
As in the preceding claim, take a path $D'=\emp,\dots,x,y',y,z$ in $\igraph{q}$.
Since $\fd{x}{v}\in\igraph{q}$ and $\igraph{q}$ is acyclic, $v$ does not occur in $D'[\emp,x]$. Nor does $y$ occur in this prefix.
Consequently, this prefix avoids edges generated by $F$ or $S$, and by \eqref{eq:huh} it is a path in $\FD{q'\setminus\{F'\}}$.
Moreover, $x\neq v=\keyvars{F}$ entails $R'\neq F'$, so appending $\fd{x}{z}\in\FD{R'}$ yields $z\in\keyclosure{(F')}{q'}$, a contradiction.

Thus $t\neq v$, and $v$ has an immediate predecessor $u$ in $D$.
If $u=x$, then $D$ uses the edge $\fd{x}{v}$ uniquely generated by $R$, so $R\neq F$ and $R'\neq F'$.
By simplicity of $D$, its prefix $D[t,x]$ avoids edges generated by $S$ and hence, by \eqref{eq:huh}, survives in $\FD{q'\setminus\{F'\}}$.
Appending $\fd{x}{z}\in\FD{R'}$ again gives $z\in\keyclosure{(F')}{q'}$, a contradiction.
If $u\neq x$, then \Cref{lem:samesourcesat} and $\fd{x}{v}\in\WFD{q}$ imply $\fd{u}{v}\in\SFD{q}$.
Together with $\fd{v}{y}\in\SFD{q}$, \Cref{lem:ctrans} yields $\fd{u}{y}\in\SFD{q}\subseteq\FD{q\setminus\{F\}}$.
Replacing $u,v,y$ by $u,y$ therefore shortens $D$, a contradiction.
This proves the claim in the second case.
\end{claimproof}

In particular, the claim entails  $\fd{v}{y}\in\SFD{q}$.
Suppose, toward contradiction, that there some  $u$ precedes $v$ in $D$. 
\Cref{lem:verysimple} states that if $\fd{u}{v}\in \FD{q}$, $\fd{v}{y}\in \SFD{q} \setminus \igraph{q}$, and $v\neq y$, then $\fd{u}{y}\in \SFD{q}$. However, due to $\SFD{q} \subseteq \FD{q\setminus \{F\}}$ this contradicts the assumption on $D$ being shortest.
Hence no such $u$ exists. This means that $t=v$.
%

We have now established that
\[
D=t,y,z,
\]
 witnessing $x_i\in \keyclosure{F}{q}$. 
By the above, $\fd{t}{y}\in \SFD{q}\setminus \igraph{q}$, $\fd{y}{z}\in \FD{S}$, and $z=x_i$.

On the other hand, by \Cref{it:aeone}, we have $y \in \outvarstar{\igraph{q}}{\emp}$.
 Consequently,  $\igraph{q}$ contains a path of the form
  \begin{itemize}
\item \Cref{it:atomel}: $D'= \emp, \dots ,x,y,z$,
\item \Cref{it:atomel2}: $D'= \emp, \dots ,x,y',y,z$.
\end{itemize}
Again, $D'$ is simple  as $\igraph{q}$ is acyclic. 
Note that $t\neq x$, as $\fd{t}{y}\notin  \igraph{q}$. Hence, by \Cref{lem:noupstrong} and $\fd{t}{y}\in \SFD{q}$, we obtain that $t$ does not appear in $D'[\emp,x]$. In particular, we have $t\neq \emp$.

Hence $t=\keyvars{F}$, so we obtain that $D'[\emp,x]$ is a path in $\FD{q\setminus \{F\}}$.
Since $D'[\emp,x]$ does not contain any edge from $\FD{S}$ (by simplicity of $D'$), we conclude that $D'[\emp,x]$ is a path in $\FD{q'\setminus \{F'\}}$. Moreover, as $R\neq F$ (by $\keyvars{R}=x\neq t=\keyvars{F}$), we obtain $R'\neq F'$ and thus $\fd{x}{z}$ belongs to $ \FD{q'\setminus \{F'\}}$. We conclude that $x_i\in \keyclosure{(F')}{q'}$, a contradiction.

We conclude by contradiction that $F'\attacks{q'} G'$ implies $F\attacks{q} G$.

\textbf{(\cref{it:qaeone})}  Suppose  that for some $H'\in q'$, we have    $\fd{u}{v}\in \FD{H'}$,
      $\fd{v}{w}\in \SFD{q'}$, and
    $\fd{u}{w}\in \FDclosure{q'\setminus \{H'\}}$. We need to show that $\fd{u}{w}\in \SFD{q'}$.
    
    We consider first the case where $H'=R'$. Then $u=x$ and $v\in\notkeyvars{R'}$.
    Note that $\fd{v}{w}\in \SFD{q}$ by \Cref{claim:simple}. Moreover, we have  $\fd{u}{w}\in \FDclosure{q\setminus \{R\}}$ due to $q'\setminus \{R'\} = q\setminus \{R,S\}$.
    
    If $v\in\notkeyvars{R}$, then
    $\fd{u}{v}\in \FD{R}$. 
    Thus, by saturation of $q$ it follows that  $\fd{u}{w}\in \SFD{q}$. Hence also $\fd{u}{w}\in \SFD{q'}$ by \Cref{claim:simple}.

     Suppose then that $v\notin\notkeyvars{R}$, i.e., $v$ appears in $\vec{z}$. We show that this case is not possible, considering the two constructions of $q'$ side by side. First, we prove that in both cases $\fd{y}{w}\notin \SFD{q}$.
     \begin{itemize}
     \item \Cref{it:atomel}:
     If $\fd{y}{w}\in \SFD{q}$, we obtain by $\fd{u}{y}\in \FD{R}$, $\fd{u}{w}\in \FDclosure{q\setminus \{R\}}$, and by saturation of $q$ that 
    $\fd{u}{w}\in \SFD{q}$, hence $\fd{u}{w}\in \SFD{q'}$ by \Cref{claim:simple}.
        Thus, we may assume  $\fd{y}{w}\notin \SFD{q}$ in which case $y\neq w$.     

     \item \Cref{it:atomel2}:
         If $\fd{y}{w}\in \SFD{q}$, we obtain by $\fd{u}{y'}\in \FD{R}$, $\fd{y'}{y}\in \SFD{q}$, $\fd{u}{w}\in \FDclosure{q\setminus \{R\}}$, and saturation of $q$ that 
    $\fd{u}{w}\in \SFD{q}$, hence $\fd{u}{w}\in \SFD{q'}$ by \Cref{claim:simple}.    Thus,         we may assume $\fd{y}{w}\notin \SFD{q}$ in which case $y\neq w$.
\end{itemize}

From $\fd{y}{w}\notin \SFD{q}$, $\fd{y}{v}\in \FD{S}$, and $\fd{v}{w}\in \SFD{q}$,
      we obtain by saturation of $q$ that  $\fd{y}{w}\notin \FDclosure{q\setminus \{S\}}$.
     
     Now, we recall that $\fd{u}{w}\in \FDclosure{q'\setminus \{R'\}}$. From $q'\setminus \{R'\} = q\setminus \{R,S\}$, we obtain $\fd{u}{w}\in \FDclosure{q\setminus \{S\}}$. Let \[D= t, \dots ,w\] be some path from $t$ to $w$ in $\FD{q\setminus \{S\}}$, where $t\in \{u,\emp\}$. 
    Then,  for each element $x'$ in $D$, 
    $\fd{y}{x'}\notin \FDclosure{q\setminus \{S\}}$, because otherwise we obtain a contradiction with $\fd{y}{w}\notin \FDclosure{q\setminus \{S\}}$.
    Furthermore,  $\fd{y}{w}\notin \FDclosure{q\setminus \{S\}}$ entails $t\neq \emp$, i.e., 
 $t=u$. Recall also that $u=x$.  
    Hence, \[U=x, \dots ,w,v,\] 
    obtained by appending $D$ with the undirected edge $\{w,v\}$, 
     is a path in $\gaifman{q}$ that is not separated by $\keyclosure{S}{q}$ (as $\fd{v}{w}\in \SFD{q}\subseteq \FDclosure{q\setminus \{S\}}$).
    Since this path connects $\notkeyvars{S}$ and $\atomvars{R}$, it follows  $S\attacks{q} R$. Since $R\attacks{q} S$ by \Cref{lem:nextattack} or \Cref{lem:upattack} (depending on the case), this contradicts the assumption that the attack graph of $q$ is acyclic. Hence, the case where $v$ is mentioned in $\vec{z}$ is not possible. This concludes the case where $H'=R$.

    Suppose then that $H'\neq R'$. Then, we have $H'\in q$, and, by \Cref{claim:simple}, $\fd{v}{w}\in \SFD{q}$. Consequently, to establish $\fd{u}{w}\in \SFD{q'}$,  by saturation of $q$ we only need to show that $\fd{u}{w}\in \FDclosure{q\setminus \{H'\}}$. We observe that the path from $u$ to $w$ in $\FD{q'\setminus \{H'\}}$, provided by assumption, is readily in $\FD{q\setminus \{H'\}}$, except that we may need to replace an edge induced by $R'$ with two edges induced by $R,S$ in \cref{it:atomel}, or three edges induced by $R,T,S$ in \cref{it:atomel2}. Note that this is possible as $S\neq H'$; additionally, in \cref{it:atomel2}, the edge generated by $T$ belongs to $\SFD{q}\subseteq \FD{q\setminus \{H'\}}$. Hence $\fd{u}{w}\in \FDclosure{q\setminus \{H'\}}$, which concludes the case $H'\neq R'$.
    
 Thus the case that $v$  appears in $\vec{z}$ is not possible. This concludes the proof of this item.
    
  \textbf{(\cref{it:qaeoneone})} 
In what follows we consider an atom $F'\in q'$. 

(\cref{it:normal1}) Suppose toward contradiction that $|\notkeyvars{F'}|>1$, and $\fd{p}{u} \in \SFD{q'}$, for $p= \keyvars{F'}$
and some $u\in \notkeyvars{F'}$. 
Then $\fd{p}{u} \in \FD{F'}$, and there exists $G'\in q'\setminus \{F'\}$ such that $\fd{p}{u} \in \FDclosure{G'}$.

{\bf Case 1: $F'\neq R'$}. Then $F=F'$ and hence $|\notkeyvars{F}|>1$ and $\fd{p}{u} \in \FD{F}$. 
By normality of $q$, it holds $\fd{p}{u} \in \WFD{q}$.

If $\fd{p}{u} \in \FDclosure{G}$, 
then $\fd{p}{u} \in \SFD{q}$, a contradiction. Thus $\fd{p}{u} \notin \FDclosure{G}$. This implies $G=R$ and $u\in Z$. Then by $\fd{y}{u},\fd{p}{u}\in \WFD{q}$ and 
 \Cref{lem:samesource}, we obtain $p=y$, hence $\fd{y}{u} \in \FD{F}$. But then $F=S$, and $F'$ is undefined, a contradiction.

{\bf Case 2: $F'= R'$}. Since $G'\neq F'$, we obtain $G=G'$ and hence $\fd{p}{u} \in \FDclosure{G}$. Thus, if
$\fd{p}{u} \in \FD{F}$, we obtain a contradiction with the normality of $q$. Hence $\fd{p}{u} \notin \FD{F}$.
Since $\fd{p}{u} \in \FD{F'}$, this entails $p=x$ and $u\in Z$. Now, $\fd{x}{u} \in \FDclosure{G}$
entails   $\fd{x}{u} \in \FD{G}$ or $\fd{\emp}{u} \in \FD{G}$. Since $\fd{x}{u} \notin \FD{q}$ by \Cref{lem:noupstrong}, 
we obtain $\fd{\emp}{u} \in \FD{q}$. By normality of $q$, in particular, this means $\fd{\emp}{u} \in \WFD{q}$.
By \Cref{lem:samesource}, this leads to $\emp = y$, a contradiction with \cref{it:aeone}, i.e., $y\in \nrootqueryvars{q}$.

(\cref{it:normal2}) Suppose toward contradiction that $\fd{u}{v} \in \SFD{q'}$, for some distinct $u,v\in \notkeyvars{F'}$. 
By \Cref{claim:simple}, we have $\fd{u}{v} \in \SFD{q}$.
This means that there exist $G'\in q'$ and $H'\in q'\setminus \{G'\}$ such that $\fd{u}{v} \in \FD{G'}$ and $\fd{u}{v} \in\FDclosure{H'}$.
Additionally, by $Z\subseteq \nkeyqueryvars{q}$ we obtain $u\notin Z$. 

{\bf Case 1: $u,v\in \notkeyvars{F}$.} Since $\fd{u}{v} \in \SFD{q}$, this violates the normality of $q$, a contradiction.



{\bf Case 2: $u,v\notin \notkeyvars{F}$.} Then $F=R$ and $F'=R'$. Note that we have $u\notin Z$. 
We proceed by case distinction.
     \begin{itemize}
     \item \Cref{it:atomel}:
the case where both $u,v$ belong to $(y,\vec{u})$ leads to a 
$F$ witnessing a violation of normality. Hence, $u$ belongs to $(y,\vec{u})$ and $v$ belongs to $Z$.

Suppose first that $u=y$. Then $\fd{u}{v} \in \SFD{q}$, $u=\keyvars{S}$, and $v\in \notkeyvars{S}$ contradict the assumption that $S\in \modei{q}$.
 
Suppose then that $u$ belongs to $\vec{u}$. Now, we have $y\neq u$, since $q$ is variable-repetition-free, and $\fd{y}{v}\in \WFD{q}$. Hence, by \Cref{lem:backward}, we obtain $\fd{u}{v} \in \SFD{q}\setminus \igraph{q}$. 
Since $\fd{x}{u}\in \WFD{q}$ and $u\neq v$, this leads to $\fd{x}{v} \in \SFD{q}$ by \Cref{lem:verysimple}. However, this raises a contradiction with \Cref{lem:noupstrong}.

     \item \Cref{it:atomel2}: the case where both $u,v$ belong to $(y',\vec{u})$ leads to a 
$F$ witnessing a violation of normality. Hence, $u$ belongs to $(y',\vec{u})$ and $v$ belongs to $Z$.

Suppose first that $u=y'$. Then $\fd{u}{v} \in \SFD{q}$ and $u\neq v$ imply $\fd{u}{v}\in \FD{q}$, which contradicts \Cref{lem:noupstrong},
as $y',y,v$ is a path in $\igraph{q}$.

Suppose then that $u$ belongs to $\vec{u}$. It follows that $y'\neq u$ since $q$ is variable-repetition-free. Then, arguing as in the previous case, we obtain a contradiction.
\end{itemize}

We conclude by contradiction that this item holds.

(\cref{it:normal4}) Suppose, toward contradiction that $\fd{u}{v},\fd{v}{u}\in \SFD{q'}$, for some distinct $u,v\in \queryvars{q'}$.
Thus, by \Cref{claim:simple}, we obtain
$\fd{u}{v},\fd{v}{u}\in \SFD{q}$, 
 a contradiction with the normality of $q$.

(\cref{it:normal3}) This item follows readily from \Cref{claim:simple} and the normality of $q$.
\end{proof}



 
 \medskip
 
\begin{lemma}[Inconsistent Edge Elimination]\label{lem:aedb}
Let $\mathcal{C}$ consist of all $(\db,q)\in \DB\times \sjfuabcq$, where $q$ is a saturated and normal query and $\db$ is a $q$-saturated database.
For each $(\db,q)\in \mathcal{C}$ and eliminable $S\in \modei{q}$, 
$(\db,q)\mapsto \atomeliminate_S(\db,q)$ is a strict, guarded, equivalence-preserving, and  polynomial-time reduction from $\mathcal{C}$ to $\mathcal{C}\cup\{\true,\false\}$.
\end{lemma}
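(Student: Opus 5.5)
I will follow the template of the proofs of \Cref{lem:vedb111,lem:vedb0,lem:vedb01}. Strictness holds because two atoms are replaced by one and no variable occurrence is added: the $\vec{z}$ positions move from $S$ into $R'$, and the key occurrence of $y$ in $S$ disappears. Polynomial time holds because $\db_0$ is a join of at most three relations, followed by \Cref{lem:dbsat}. Membership of the output in $\mathcal{C}\cup\{\true,\false\}$ follows from \Cref{lem:qnew,lem:nonewreps,lem:dbsat}. It then remains to prove guardedness and equivalence preservation.

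\textbf{Guardedness.} Items \ref{it:inv:guarded} and \ref{it:inv:guarding} are \Cref{lem:nr2,lem:root3}. Items \ref{it:inv:2}, \ref{it:sizeinv}, \ref{it:dummy} and \ref{it:dummy2} hold by construction, since $R'$ is fresh and $\queryvars{q'}=\queryvars{q}$.

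For \Cref{it:inv:const}, note that $R,S\in\modei{q}$, so every atom in $\modec{q}$ survives into $q'$. It stays consistent by the converse direction of \Cref{claim:simple}, since $\WFD{q'}$-edges generated by unchanged atoms are already in $\WFD{q}$.

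For $N$-unguarded-boundedness (\Cref{it:inv:1}):
\begin{itemize}
\item By \Cref{lem:nr2}, $\rqueryvars{q'}=\rqueryvars{q}$.
\item By \Cref{it:aetwo}, $Z\subseteq\nrqueryvars{q}$.
\item Hence $\atomvars{R'}\cap\rqueryvars{q'}=\atomvars{R}\cap\rqueryvars{q}$.
\item The projection of $R'^{\db'}$ onto these positions is contained in the corresponding projection of $R^{\db}$, so it is at most $N$.
\end{itemize}

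\textbf{Equivalence preservation.} It suffices to show that $(\db,q)\in\cert$ if and only if $(\db_0,q')\in\cert$. I treat case \ref{it:atomel2}; case \ref{it:atomel} is the same with $T$ removed.

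The first fact needed is that for each $R(\underline{a},b',\vec{c})\in\db$ there is exactly one extension $R'(\underline{a},b',\vec{c},\vec{d})\in\db_0$:
\begin{itemize}
\item \emph{Uniqueness of $b$.} $T^{\db}$ is consistent by functional consistency, since $\fd{y'}{y}\in\SFD{q}$.
\item \emph{Existence.} Pairwise consistency gives facts for $T$ and $S$.
\item \emph{Uniqueness of $\vec{d}$.} The $S$-block of $b$ is not consistent in general, since $S\in\modei{q}$. So uniqueness cannot be used directly, and I handle the $S$-block through the repair correspondence below.
\end{itemize}

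Given a repair $\rep$ of $\db$, define $\rep'$ as follows. Replace each chosen $R(\underline{a},b',\vec{c})$ by $R'(\underline{a},b',\vec{c},\vec{d})$, where $S(\underline{b},\vec{d})$ is the fact that $\rep$ chose from the block of $b$, and drop the $S$-facts.

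Conversely, given a repair $\rep'$ of $\db_0$, build $\rep$:
\begin{itemize}
\item Project $R'$-facts back to $R$-facts.
\item Put $S(\underline{b},\vec{d})$ into the $S$-block of the $b$ determined by the chosen $R'$-fact.
\item Choose arbitrarily in the remaining $S$-blocks.
\end{itemize}

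The main obstacle is that two different $R'$-choices could demand different $S$-facts for the same key $b$, and then $\rep$ would be inconsistent. Equivalently, distinct $R$-blocks may lead to a shared $y$-value. This is where \Cref{it:aeone} ($\rootvar{y}=\emp$) must enter. Because $y$'s tree is rooted at $\emp$, every satisfying valuation is forced along the $\emp$-rooted path by \Cref{lem:easy}. So I would argue directly at the level of falsifying repairs rather than building a bijection of repairs:
\begin{itemize}
\item Start from a repair $\rep'$ of $\db_0$ falsifying $q'$. Construct $\rep$ by choosing, for each $S$-block, the $\vec{d}$ of some $R'$-fact that uses it, and arbitrarily otherwise.
\item Suppose $\theta(q)\subseteq\rep$. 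Then $\theta(R)$ is the unique chosen $R$-fact for key $\theta(x)$, and $\theta(S)$ is the chosen $S$-fact.
\item Using \Cref{lem:easy} on the $\emp$-rooted path to $x$, together with collision consistency and guardedness, show that the conflict cannot arise for the block actually used by $\theta$. This gives $\theta(q')\subseteq\rep'$, a contradiction.
\end{itemize}

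The reverse direction is the easier one: a falsifying $\rep$ maps to $\rep'$ as above. Any $\theta(q')\subseteq\rep'$ lifts to $\theta(q)\subseteq\rep$ by reading $\vec{z}$ off the $R'$-fact into the chosen $S$-fact. Here \Cref{lem:backward} ensures that $Z$ occurs nowhere else with a conflicting key.
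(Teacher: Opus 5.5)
Strictness, polynomial time, membership in $\mathcal{C}\cup\{\true,\false\}$ and guardedness are fine. The direction that maps a falsifying repair $\rep$ of $\db$ to $\rep'$ is also fine: extend each chosen $R$-fact by the $S$-fact that $\rep$ picked for its $b$. This gives a well-defined repair of $\db_0$, and any $\theta(q')\subseteq\rep'$ lifts directly, so \Cref{lem:backward} is not needed there. You also correctly identify the obstacle and the relevant ingredient, $\rootof{y}=\emp$.

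\textbf{The gap.} It lies in the other direction. You build $\rep$ from a falsifying $\rep'$ by giving each $S$-block ``the $\vec{d}$ of some $R'$-fact that uses it''. You then claim that \Cref{lem:easy}, collision consistency and guardedness rule out a conflict on the block used by $\theta$. That claim is false. Several $R'$-blocks of $\db_0$, with different $x$-keys, can use the same $S$-block, and their chosen $\vec{d}$'s can differ. Nothing in $q$-saturation constrains which fact of the inconsistent $S$-block is taken.

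\textbf{Counterexample.} Take $q=\{P(\underline{c},x),R(\underline{x},y),S(\underline{y},z),G_i(\underline{g},z),H_i(\underline{g},x)\mid i\in\{0,1\}\}$. One checks that $q\in\sjfuabcq$ is saturated and normal and that $S$ is eliminable via case (i). Take the $q$-saturated database with facts $P(\underline{c},a)$, $P(\underline{c},a')$, $R(\underline{a},b)$, $R(\underline{a'},b)$, $S(\underline{b},d_1)$, $S(\underline{b},d_2)$, $G_i(\underline{g_1},d_1)$, $H_i(\underline{g_1},a')$, $G_i(\underline{g_2},d_2)$, $H_i(\underline{g_2},a)$. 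The repair $\rep'$ of $\db_0$ containing $P(\underline{c},a)$, $R'(\underline{a},b,d_1)$, $R'(\underline{a'},b,d_2)$ falsifies $q'$. But if you give the $S$-block of $b$ the value $d_2$, taken from the $a'$-fact, the resulting $\rep$ satisfies $q$ via $x,y,z,g\mapsto a,b,d_2,g_2$.

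\textbf{The missing idea.} The choice must be dictated by the one $R'$-fact that is actually reachable in $\rep'$. Since $\rootof{y}=\emp$ (\Cref{it:aeone}), $\igraph{q}$ contains a simple path $\emp,\dots,x,y$, passing through $y'$ in case (ii). The atoms generating its prefix up to $x$ start from a constant key. By acyclicity of $\igraph{q}$ none of them is $R$ or $S$, so they survive unchanged in $q'$. Following these facts in $\rep'$ fixes a unique value $a$ for $x$. The chosen fact $R'(\underline{a},b',\vec{c},\vec{d})\in\rep'$ then fixes a unique $b$, using the consistent $T$ in case (ii).

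Now put $S(\underline{b},\vec{d})$ into $\rep$ and choose every other $S$-block arbitrarily. Since $\rep$ agrees with $\rep'$ on the chain, any $\theta(q)\subseteq\rep$ has $\theta(x)=a$, $\theta(y)=b$ and $\theta(\vec{z})=\vec{d}$. Hence $\theta(q')\subseteq\rep'$. This is the paper's ``mapping determined by the repair'' argument, which coordinates only this single $R$-block and $S$-block.
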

\begin{proof}
Let $R(\underline{x},y,\vec{u}), S(\underline{y},\vec{z}) \in \modei{q}$ be the atoms witnessing that $S$ is eliminable.
Let us also write $Z=\notkeyvars{S}$. Recall that both $Z$ is non-empty by \Cref{it:aenotempty}.

The reduction $\atomeliminate_S$ is clearly strict and computable in polynomial time. It is into $\mathcal{C}\cup\{\true,\false\}$
by \Cref{lem:qnew,lem:dbsat}. It thus suffices to prove that it is guarded and equivalence-preserving.

{\bf (Guardedness)}
\Cref{it:inv:guarded,it:inv:guarding} of guardedness follow 
by \Cref{lem:root3,lem:nr2}. \Cref{it:sizeinv,it:inv:2,it:sizeinv,it:dummy,it:dummy2} are clear by construction.
We prove  
  \Cref{it:inv:const,it:inv:1}.

To see why  \Cref{it:inv:const} holds,
assume that $H\in \modec{q}$, and either $\keyvars{H}\in \initqueryvars{q}$ or $\atomvars{H}= \emptyset$.
 Toward contradiction, suppose first that $H\notin {q'}$.
 Then, in both cases \ref{it:atomel} and \ref{it:atomel2} of the construction of $q'$, it follows that $\rootof{\keyvars{H}}=\bot$ and $\notkeyvars{H}\neq \emptyset$, contradicting the assumption. This proves that  $H\in {q'}$. 
 We claim that furthermore $H\in \modec{(q')}$. Suppose, toward contradiction, that $H\in \modei{(q')}$. Then there exists an edge $\fd{u}{v}\in \WFD{q'}\cap \FD{H}$. By \Cref{claim:simple}, since $\fd{u}{v}\in \FD{q}$, this implies $\fd{u}{v}\in \WFD{q}$, contradicting $H\in \modec{q}$.
Hence, the claim is true, and therefore \Cref{it:inv:const} holds.  

To prove \Cref{it:inv:1}, 
Assume that for each $T\in q$,
\(
|\restrict{T^{\db}}{V}| \leq N.
\)
  Now, \Cref{lem:nr2} and the fact that $\queryvars{q}=\queryvars{q'}$ imply $\rqueryvars{q'}= \rqueryvars{q}$.
      For any $U\in q'\setminus \{R'\}$, we have $U^{\db}=U^{\db_0}$, and thus, using \Cref{lem:dbsat},
\[|\restrict{U^{\db'}}{V'}|\leq |\restrict{U^{\db_0}}{V'}|\leq | \restrict{U^{\db}}{V}| \leq  N,\]
 where  $V'\defeq \atomvars{U}\cap \rqueryvars{q'}$ and $V\defeq \atomvars{U}\cap \rqueryvars{q}$.

Consider then the atom $R'$. Since $ \notkeyvars{S} \subseteq \nrqueryvars{q}$,
letting $V' \defeq \atomvars{R'}\cap \rqueryvars{q'}$ and $V\defeq \atomvars{R}\cap \rqueryvars{q}$, we have that
 $V' = V$.
 Hence, by the construction and \Cref{lem:dbsat},
\[
|\restrict{(R')^{\db'}}{V'}| \leq |\restrict{(R')^{\db_0}}{V'}| = |\restrict{R^{\db}}{V'}| \leq |\restrict{R^{\db}}{V}| \leq  N.
\]
Thus, \Cref{it:inv:1} holds as well. We conclude that the stated reduction is guarded.

   {\bf (Equivalence preservation)}
   By \Cref{lem:dbsat},  it suffices to prove that $(\db,q)\in \cert$ if and only if $(\db_0,q')\in \cert$, where $q'$ and $\db_0$ are as in \eqref{eq:atomel2} and \eqref{eq:dbel2}.

    We consider case \ref{it:atomel} of the construction of $q'$ and assume that $x$ is a variable; the remaining cases are then analogous.

    Now, since $\rootof{y}=\emp$ (by \Cref{it:aeone}) and $\fd{x}{y}\in \igraph{q}$, the graph $\igraph{q}$ contains a path of the form
    \[
    D=\emp,\dots ,x,y,z.
    \]
    (Note that if $x$ is a constant, then in case \ref{it:atomel} this path takes the form   \(
    D=\emp,y,z.
    \))
    
    This path corresponds to a sequence of
     of atoms 
     \[
     R_1(\underline{f},x_1,\uk), R_2(\underline{x_1},x_2,\uk),\dots ,R_{n+1}(\underline{x_n},x,\uk), R(\underline{x},y,\vec{u}),S(\underline{y},\vec{z})\in q
     \]
    where $f$ is a constant, and $x_1, \dots ,x_n$ is a (possibly empty) sequence of variables.
     For every repair $\rep$ of $\db$, there exists a unique sequence of constants $f_1, \dots ,f_n,a,b,\vec{c},\vec{d}$ such that
         \[
     R_1(\underline{f},f_1,\uk), R_2(\underline{f_1},f_2,\uk),\dots ,R_{n+1}(\underline{f_n},a,\uk), R(\underline{a},b,\vec{d}),S(\underline{b},\vec{c})\in \rep.
     \]
     We say that the mapping $(x,y,\vec{z},\vec{u})\mapsto (a,b,\vec{c},\vec{d})$ is \emph{determined by $\rep$}. For a repair $\rep_0$ of $\db_0$, by replacing the atoms $R(\underline{x},y,\vec{u}),S(\underline{y},\vec{z})$ with $R'(\underline{x},y,\vec{z},\vec{u})$ and the facts $R(\underline{a},b,\vec{d}),S(\underline{b},\vec{c})$ with $R'(\underline{a},b,\vec{c},\vec{d})$ in the above sequences, we can similarly describe how the mapping $(x,y,\vec{z},\vec{u})\mapsto (a,b,\vec{c},\vec{d})$ is \emph{determined by $\rep_0$}.

    Clearly, if $\theta(q)\subseteq \rep$, for a valuation $\theta$, then $(x,y,\vec{u},\vec{z})\mapsto (\theta(x),\theta(y),\theta(\vec{u}),\theta(\vec{z}))$
    is the mapping that is determined by $\rep$, and similarly for $\rep_0$.
     In particular, whenever the repairs $\rep$ and $\rep_0$ agree on $R_1, \dots ,R_n$, then they determine the same mapping $(x,y,\vec{u},\vec{z})\mapsto (\theta(x),\theta(y),\theta(\vec{u}),\theta(\vec{z}))$.
    
     ($\Rightarrow$)
    Suppose  $(\db,q)\in \cert$, and let  $\rep_0$ be an arbitrary repair of $\db_0$. 
    Let $(x,y,\vec{z},\vec{u})\mapsto (a,b,\vec{c},\vec{d})$ be determined by $\rep_0$. In particular, $R'(\underline{a},b,\vec{c},\vec{d})\in \rep_0$.
We let $\rep$ extend $\restrict{\rep_0}{q\setminus \{R,S\}}$ as follows:
\begin{itemize}
\item Add $R(\underline{a},b,\vec{d})$ to $\rep$, and from the remaining $R$-blocks of $\db$ add an arbitrary fact to $\rep$.
\item Add $S(\underline{b},\vec{c})$ to $\rep$, and from the remaining  $S$-blocks of $\db$ add an arbitrary fact to $\rep$.
\end{itemize}
Clearly, $\rep$ is a repair of $\db$. Furthermore, $\rep$ and $\rep_0$ determine the same mapping $(x,y,\vec{z},\vec{u})\mapsto (a,b,\vec{c},\vec{d})$. By assumption the exists a valuation $\theta$ on $\queryvars{q}$ such that $\theta(q)\subseteq \rep$. 
Thus $\theta $ maps $(x,y,\vec{z},\vec{u})$ to $ (a,b,\vec{c},\vec{d})$. By construction we obtain $\theta(q')\subseteq \rep_0$. Thus $(\db_0,q')\in \cert$.

     ($\Leftarrow$)
 Suppose  $(\db_0,q')\in \cert$, and let $\rep$ be an arbitrary repair of $\db$. 
    Let $(x,y,\vec{z},\vec{u})\mapsto (a,b,\vec{c},\vec{d})$ be the mapping that is determined by $\rep$. In particular, $R(a,b,\vec{d}),S(\underline{b},\vec{c})\in \rep$.
    We let $\rep_0$ extend $\restrict{\rep}{q\setminus \{R,S\}}$ as follows:
\begin{itemize}
\item Add $R'(\underline{a},b,\vec{d},\vec{c})$ to $\rep_0$, and from the remaining $R'$-blocks of $\db_0$ add an arbitrary fact to $\rep_0$.
\end{itemize}
Analogously to the first direction, $\rep_0$ is a repair of $\db_0$. Furthermore, $\rep$ and $\rep_0$ determine the same mapping $(x,y,\vec{z},\vec{u})\mapsto (a,b,\vec{c},\vec{d})$. By assumption the exists a valuation $\theta$ on $\queryvars{q'}$ such that $\theta(q')\subseteq \rep_0$. Thus $\theta$ maps $ (x,y,\vec{z},\vec{u})$ to $ (a,b,\vec{c},\vec{d})$. By construction we obtain $\theta(q)\subseteq \rep$. Thus $(\db,q)\in \cert$.
\end{proof}

\subsection{Pruning Step}\label{sect:prunestep}
We first illustrate query pruning with the following example.
\begin{example}\label{ex:prev}
Consider the query $q'$ given in \eqref{eq:q'} of \Cref{ex:q'}.
The key-nonkey graph of $q'$ appeared already in \Cref{fig:guarding}.
We recall that the attack-propagation graph of $q'$ consists of a single $\emp$-rooted tree together with an isolated guard variable $u$. 
Suppose $\db$ is a $q'$-saturated database.
Unless $\textsc{Prune}(\db,q')$ produces $\false$, it outputs a pair $(\db',q'')$, where the query takes the form \eqref{eq:q''}.
The stages of the pruning reduction are depicted in \Cref{fig:prune}.
\end{example}

\begin{figure}[t]
\[
\hspace{-1cm}
\begin{array}{ccc}
\scalebox{0.55}{%
\begin{tikzpicture}[
  every node/.style={font=\small}
]

\node[empnode] (emp) at (0,0) {$\bot$};

\node[unguarded] (x1) at (-1,2) {$x_1$};
\node[unguarded] (x2) at (1,2) {$x_2$};

\node[guarded] (y1) at (-2,4) {$y_1$};
\node[guarded] (y2) at (0,4) {$y_2$};

\node[unguarded] (z1) at (-1,6) {$z_1$};
\node[guarded] (z2) at ( 1,6) {$z_2$};

\node[levelbox, fit=(z1)(z2)] {};

\node[levelbox, fit=(x1)(x2)] {};
\node[levelbox, fit=(y1)] {};
\node[levelbox, fit=(y2)] {};

\sarr{emp}{x1}
\sarr{emp}{x2}

\sarr{x1}{y1}
\bdarr{x1}{y2}
\bdarr{x2}{y2}
\rdarr{y1}{y2}

\sarr{y2}{z1}
\sarr{y2}{z2}

\end{tikzpicture}
}
&
\hspace{-.3cm}
\scalebox{0.55}{%
\begin{tikzpicture}[
  every node/.style={font=\small}
]
\node[
    anchor=west,
    font=\small\bfseries
] at ([xshift=-2cm]emp.west)
{(LP)};
\node[empnode] (emp) at (0,0) {$\bot$};

\node[unguarded] (x1) at (-1,2) {$x_1$};
\node[unguarded] (x2) at (1,2) {$x_2$};

\node[guarded] (y1) at (-2,4) {$y_1$};
\node[guarded] (y2) at (0,4) {$y_2$};

\node[unguarded] (z1) at (-1,6) {$z_1$};
\node[guarded] (z2) at ( 1,6) {$z_2$};

\node[levelbox, fit=(z1)(z2)] {};

\node[levelbox, fit=(x1)(x2)] {};
\node[levelbox, fit=(y1)] {};
\node[levelbox, fit=(y2)] {};

\sarr{emp}{x1}
\sarr{emp}{x2}

\sarr{x1}{y1}
\bdarr{x1}{y2}
\bdarr{x2}{y2}

\sarr{y2}{z1}
\sarr{y2}{z2}

\end{tikzpicture}
}
&
\hspace{-.5cm}
\scalebox{0.55}{%
\begin{tikzpicture}[
  every node/.style={font=\small}
]
\node[
    anchor=west,
    font=\small\bfseries
] at ([xshift=-2cm]emp.west)
{(LE)};
\node[empnode] (emp) at (0,0) {$\bot$};

\node[unguarded] (x1) at (-1,2) {$x_1$};
\node[unguarded] (x2) at (1,2) {$x_2$};

\node[guarded] (y1) at (-2,4) {$y_1$};
\node[guarded] (y2) at (0,4) {$y_2$};

\node[guarded] (z2) at ( 1,6) {$z_2$};

\node[levelbox, fit=(z2)] {};

\node[levelbox, fit=(x1)(x2)] {};
\node[levelbox, fit=(y1)] {};
\node[levelbox, fit=(y2)] {};

\sarr{emp}{x1}
\sarr{emp}{x2}

\sarr{x1}{y1}
\bdarr{x1}{y2}
\bdarr{x2}{y2}

\sarr{y2}{z2}

\end{tikzpicture}
}
\\[2ex]
\hspace{-.5cm}
\scalebox{0.55}{%
\begin{tikzpicture}[
  every node/.style={font=\small}
]
\node[
    anchor=west,
    font=\small\bfseries
] at ([xshift=-2cm]emp.west)
{(IEE)};
\node[empnode] (emp) at (0,0) {$\bot$};

\node[guarded] (y1) at (-3,2) {$y_1$};
\node[unguarded] (x1) at (-1,2) {$x_1$};
\node[unguarded] (x2) at (1,2) {$x_2$};

\node[guarded] (y2) at (0,4) {$y_2$};

\node[guarded] (z2) at ( 1,6) {$z_2$};

\node[levelbox, fit=(z2)] {};

\node[levelbox, fit=(y1)(x1)(x2)] {};
\node[levelbox, fit=(y2)] {};

\sarr{emp}{x1}
\sarr{emp}{x2}
\sarr{emp}{y1}

\bdarr{x1}{y2}
\bdarr{x2}{y2}

\sarr{y2}{z2}

\end{tikzpicture}
}
&
\scalebox{0.55}{%
\phantom{kkkk}
\begin{tikzpicture}[
  every node/.style={font=\small}
]
\node[
    anchor=west,
    font=\small\bfseries
] at ([xshift=-2cm]emp.west)
{(IEE)};
\node[empnode] (emp) at (0,0) {$\bot$};

\node[guarded] (y1) at (-3,2) {$y_1$};
\node[unguarded] (x1) at (-1,2) {$x_1$};
\node[unguarded] (x2) at (1,2) {$x_2$};

\node[guarded] (y2) at (0,4) {$y_2$};

\node[guarded] (z2) at ( 3,2) {$z_2$};


\node[levelbox, fit=(y1)(x1)(x2)(z2)] {};
\node[levelbox, fit=(y2)] {};

\sarr{emp}{x1}
\sarr{emp}{x2}
\sarr{emp}{y1}
\sarr{emp}{z2}

\bdarr{x1}{y2}
\bdarr{x2}{y2}


\end{tikzpicture}
}
&
\scalebox{0.55}{%
\hspace{0cm}
\begin{tikzpicture}[
  every node/.style={font=\small}
]
\node[
    anchor=west,
    font=\small\bfseries
] at ([xshift=-2cm]emp.west)
{(CEE)};
\node[empnode] (emp) at (0,0) {$\bot$};

\node[guarded] (y1) at (-4,2) {$y_1$};
\node[unguarded] (x1) at (-2,2) {$x_1$};
\node[unguarded] (x2) at (0,2) {$x_2$};

\node[guarded] (y2) at (2,2) {$y_2$};

\node[guarded] (z2) at ( 4,2) {$z_2$};


\node[levelbox, fit=(y1)(x1)(x2)(z2)(y2)] {};

\sarr{emp}{x1}
\sarr{emp}{x2}
\sarr{emp}{y1}
\sarr{emp}{z2}
\sarr{emp}{y2}


\end{tikzpicture}
}
\\[2ex]
\hspace{.3cm}
\scalebox{0.55}{%
\phantom{kkkk}
\begin{tikzpicture}[
  every node/.style={font=\small}
]
\node[
    anchor=west,
    font=\small\bfseries
] at ([xshift=-2cm]emp.west)
{(LE)};
\node[empnode] (emp) at (0,0) {$\bot$};

\node[guarded] (y1) at (-3,2) {$y_1$};
\node[unguarded] (x2) at (-1,2) {$x_2$};

\node[guarded] (y2) at (1,2) {$y_2$};

\node[guarded] (z2) at ( 3,2) {$z_2$};


\node[levelbox, fit=(y1)(x2)(z2)(y2)] {};

\sarr{emp}{x2}
\sarr{emp}{y1}
\sarr{emp}{z2}
\sarr{emp}{y2}


\end{tikzpicture}
}
&
\hspace{-.6cm}
\scalebox{0.55}{%
\hspace{2cm}
\begin{tikzpicture}[
  every node/.style={font=\small}
]
\node[
    anchor=west,
    font=\small\bfseries
] at ([xshift=-2cm]emp.west)
{(LE)};
\node[empnode] (emp) at (0,0) {$\bot$};

\node[guarded] (y1) at (-2,2) {$y_1$};

\node[guarded] (y2) at (0,2) {$y_2$};

\node[guarded] (z2) at ( 2,2) {$z_2$};


\node[levelbox, fit=(y1)(z2)(y2)] {};

\sarr{emp}{y1}
\sarr{emp}{z2}
\sarr{emp}{y2}


\end{tikzpicture}
}
\end{array}
\]
\caption{The key-nonkey-graphs $\FD{q}$ in query pruning (the edges corresponding to guarding relations are omitted). 
Every edge belongs to the attack-propagation graph $\igraph{q}$, except for the edge represented by the red arrow.
The red vertices represent unguarded variables and blue vertices guarded variables. 
The dashed rectangles enclose the non-primary-key variables of a single atom.
The pruning reduction is composed of leaf pruning (LP), leaf elimination (LE), consistent edge elimination (CEE), and inconsistent edge elimination (IEE). }\label{fig:prune}
\end{figure}

\begin{algorithm}[t]
\caption{$\textsc{Prune}(\db,q)$ \label{alg:prune}}
\Input{A saturated and normal query $q \in \sjfuabcq$ and a $q$-saturated database $\db$}
\Output{$\false$ or $(\db',q')$, where  $q'$ is a saturated  query from $\sjfuabcq$, $\db'$ is a $q'$-saturated database, and $\outvar{\igraph{q'}}{\emp} \subseteq \nrqueryvars{q'} \cap \leafqueryvars{q'} \cap \nkeyqueryvars{q'}$
}

\SetKwRepeat{Repeat}{repeat}{until}
\Repeat{none of the while-loops is entered}{

    \While{ exists prunable $y\in \queryvars{q}$}{
        choose the $\prec$-least prunable $y$\;
        $(\db,q) \gets \Try(\leafprune_y(\db,q))$\;
        \tcp{\Cref{lem:vedb111}; $y$ is prunable if $y\in  \leafqueryvars{q} \cap \keyqueryvars{q}$}
    }
    
        \While{ exists eliminable $y\in \queryvars{q}$}{
        choose the $\prec$-least eliminable
        $y $\;
                $(\db,q) \gets \Try(\leafeliminate_y(\db,q))$\;
        \tcp{\Cref{lem:vedb0}; $y$ is eliminable if $y\in    \leafqueryvars{q} \cap \nkeyqueryvars{q}\cap \rqueryvars{q}$}
    }

    \While{exists extendable $R \in \modei{q}$}{
        choose the $\prec$-least extendable $R$\;
                        $(\db,q) \gets \Try(\atomextend_R(\db,q))$\;
        \tcp{\Cref{lem:vedb01}; $R$ is extendable if,
given $Y \defeq \notkeyvars{R}$,
 $\outvar{\WFD{q}}{Y}=\emptyset$ and $\emptyset \neq \outvar{\igraph{q}}{Y}\subseteq  \leafqueryvars{q} \cap \nkeyqueryvars{q} \cap\nrqueryvars{q} $;
    }
    }
    \While{exists eliminable $S \in \modei{q}$}{
        choose the $\prec$-least eliminable $S$\;
                 $(\db,q) \gets \Try(\atomeliminate_S(\db,q))$\;
        \tcp{\Cref{lem:aedb}; $S$ is eliminable if, given $Y \defeq \notkeyvars{S}$,  $\emptyset \neq Y \subseteq    \leafqueryvars{q} \cap \nkeyqueryvars{q}\cap \nrqueryvars{q}$ and
        $\keyvars{S} \in \reach{\bot}{\igraph{q}}\cap \nrootqueryvars{q}$.}
    }
}
\Return{$(\db,q)$}\;
\end{algorithm}

Then, let $\textsc{Prune}$ be the reduction described in \Cref{alg:prune}. We now prove the pruning lemma.

\prune*
\begin{proof}
 $\textsc{Prune}$ is guarded and equivalence-preserving by \cref{prop:compose,lem:vedb111,lem:vedb0,lem:vedb01,lem:aedb}. By the same token,
 $\textsc{Prune}$ is a mapping into $\mathcal{C}$.
 
 It remains to prove \eqref{eq:setprune} and the polynomial-time computability.

{\bf Time complexity.}
Each iteration strictly reduces $\size{q}$, as stated in \cref{lem:vedb111,lem:vedb0,lem:vedb01,lem:aedb}.
Hence the number of iterations is polynomial. Also, each step can be carried out in polynomial time in the temporary input $(\db',q')$  by 
the same lemmas. In particular, checking the conditions of the while-clauses, including the construction of the graphs $\FD{q'},\SFD{q'},\WFD{q'},\igraph{q'}$, can be done in polynomial time in $\size{q'}$.

It holds that $(\db,q)$ is $N$-unguarded-bounded, for $N\defeq \size{\db}$. By guardedness (\Cref{def:faith}), then, $\db'$ is $N$-unguarded-bounded,  $|\adom{\db'}|\leq N$, and $|\initqueryvars{q}|=|\initqueryvars{q'}|$.  Hence, by \Cref{lem:grounding,lem:isone}, for each $R\in q'$,
 \[
 |R^{\db'}| \leq  |\restrict{R^{\db'}}{V}| \cdot |\adom{\db'}|^{|\initqueryvars{q'}|} \leq   N^{c+1},
 \]
  where $V\defeq \atomvars{R}\cap \rqueryvars{q'}$.
Furthermore, by guardedness, it holds that $\size{q'}\leq \size{q}$. Therefore, 
\[\size{\db'} \leq \size{q'}\cdot N^{c+1} \leq \size{q} \cdot \size{\db}^{c+1},\]
meaning that each step is computable in polynomial time in the initial input $(\db,q)$.
 We conclude that each step can be carried out in polynomial time in the initial input. Since each reduction inside the while-loop is strict, it is correct to conclude that the algorithm runs in polynomial time.

\smallskip
\noindent\textbf{\cref{eq:setprune}.} Let us consider the increasing effect each while-loop has.
\begin{itemize}
\item Once the while-loop 1 does not apply, 
\[ \leafqueryvars{q} \subseteq \nkeyqueryvars{q}.\]
\item Once the while-loops 1-2 do not apply, 
\begin{equation}\label{it:second}
 \leafqueryvars{q} \subseteq \nkeyqueryvars{q} \cap \nrqueryvars{q}.
\end{equation}
\item Once the while-loops 1-3 do not apply, given $F\in \modei{q}$, $Y \defeq \notkeyvars{F}$, such that $\outvar{\WFD{q}}{Y}=\emptyset$, 
\begin{equation}\label{it:third}
\text{ if $\outvar{\igraph{q}}{Y} \subseteq  \leafqueryvars{q}$, then $Y\subseteq \leafqueryvars{q}$ (i.e., $\outvar{\igraph{q}}{Y}=\emptyset$).}
\end{equation}
\item Once the while-loops 1-4 do not apply, 
\begin{equation}\label{eq:false}
\outvar{\igraph{q}}{\emp} \subseteq  \leafqueryvars{q} \cap \nkeyqueryvars{q} \cap \nrqueryvars{q}.
\end{equation}
\end{itemize}
The first three items are straightforward to verify.
For the last item, assume toward contradiction that none of the while-loops 1-4 applies, while \eqref{eq:false} is false. 

Since \eqref{it:second} holds but \eqref{eq:false} does not, we have  $ \outvar{\igraph{q}}{\emp} \not\subseteq \leafqueryvars{q}$. 
By normality of $q$, we find an atom $F\in \modei{q}$ such that $\keyvars{F}=\emp$ and $Y\not\subseteq \leafqueryvars{q}$, where $Y\defeq \notkeyvars{F}$. Note that vacuously 
 $Y\subseteq \outvarstar{\igraph{q}}{\bot} \cap \nrootqueryvars{q}$. We consider two possible cases.

Suppose first that $\outvar{\WFD{q}}{Y}\neq \emptyset$. 
 Then, we can pick an atom $G \in \modei{q}$ such that $\keyvars{G}\in Y$ and $\notkeyvars{G}\neq \emptyset$.
 Consequently, $\keyvars{G}\in  \outvarstar{\igraph{q}}{\bot} \cap \nrootqueryvars{q}$.

Suppose then that $\outvar{\WFD{q}}{Y}= \emptyset$.
 By \eqref{it:third}, since $Y\not\subseteq \leafqueryvars{q}$, we obtain $\outvar{\igraph{q}}{Y}\not \subseteq  \leafqueryvars{q}$.
We can now pick $y$ such that $\outvar{\igraph{q}}{y}\not\subseteq  \leafqueryvars{q}$. 
Hence,  and due to $\outvar{\WFD{q}}{y}= \emptyset$, there exist two edges of the form $\fd{y}{y'}\in \SFD{q} \cap \igraph{q}$ and $\fd{y'}{v}\in \WFD{q}$; for this, recall that by \Cref{lem:consecutive} $\igraph{q}$ does not have two consecutive edges from $\SFD{q}$. 
Hence, there exists an atom $G \in \modei{q}$,
where $\keyvars{G}=y'\in \outvarstar{\igraph{q}}{\bot} \cap \nrootqueryvars{q}$ and $\notkeyvars{G}\neq \emptyset$.

We conclude that
there exists an atom $G \in \modei{q}$
such that $\keyvars{G}\in \outvarstar{\igraph{q}}{\bot} \cap \nrootqueryvars{q}$ and $Z\neq \emptyset$, where $Z \defeq \notkeyvars{G}$. We may assume that $\keyvars{G}$ has a maximal distance from $\emp$ in $\igraph{q}$ among all such $G$. The reasoning above establishes that $Z\subseteq  \leafqueryvars{q}$; otherwise, due to acyclicity of $\igraph{q}$ (\Cref{lem:forest}),
one would find an atom $H \in \modei{q}$ such that $\keyvars{H}$ has a greater distance to $\emp$ than $\keyvars{G}$.
However, then $G$ is eliminable, contradicting the assumption that none of the while-loops 1-4 applies. Hence, by contradiction, the last item follows.
\end{proof}

\section{Definitions and Lemmas for \Cref{sect:sub}}\label{sect:appsub}
In this section we provide auxiliary definitions, lemmas, and missing proofs for \Cref{sect:sub}.

First, we present what we call substitution-guarded-reductions, which are similar to the guarded reductions given earlier in \Cref{def:faith}. 
 
 \smallskip
\begin{definition}[Substitution-Guarded reduction]\label{def:subguard}
A many-one reduction $f\colon \DB\times \sjfuabcq \to \DB\times \sjfuabcq \cup \{\true,\false\}$ is \emph{substitution-guarded} if for every $(\db,q)\in \DB\times \sjfuabcq$, 
assuming $(\db',q')= f(\db,q) \notin\{\true, \false\}$,
 the conditions of guardedness hold,
  except that 
\cref{it:inv:const,it:inv:guarded,it:inv:guarding} are replaced by the following conditions:
\begin{itemize}
\item[4$^*$]\label{it:4bstar} 
If $F\in \modec{q}$, and either $\keyvars{F}\in \initqueryvars{q}$ or $\atomvars{F}= \emptyset$,
the following statements hold.
\begin{itemize}
\item
 If  $\atomvars{F}= \emptyset$, then $F\in \modec{(q')}$.
 \item
  If $\keyvars{F}\in  \initqueryvars{q}$ and $F= R(\underline{x},y,\vec{c})$, where $y$ is a variable and $\vec{c}$ is a sequence of constants, then, for some constants $a$ and $b$, one of $F,R(\underline{x},b,\vec{c}),R(\underline{a},b,\vec{c}) $ is in $ \modec{(q')}$.
  \item 
  If $\keyvars{F}\in  \initqueryvars{q}$ and $F= R(\underline{x},\vec{c})$, where $\vec{c}$ is a sequence of constants, then, for some constant $a$, $F$ or $R(\underline{a},\vec{c})$ is in $ \modec{(q')}$.
\end{itemize}
\item[5$^*$]\label{it:4cstar} $\nrqueryvars{q'}= \nrqueryvars{q}\cap \queryvars{q'}$. 
\item[6$^*$]\label{it:4dstar} $\initqueryvars{q'} = \initqueryvars{q}\cap \queryvars{q'}$. 
\end{itemize}
\end{definition}
A substitution-guarded reduction is always {strict}, i.e.,  $\size{q'}<\size{q}$.

Note that if $F\in \modec{q}$ and  $\keyvars{F}\in \initqueryvars{q}$, then $F$ by normality is either of the form $R(\underline{x},y,\vec{c})$ or $R(\underline{x},\vec{c})$, where $y$ is a variable and $\vec{c}$ a sequence of constants. 
One may now observe that the following propositions hold. The first one is clear by definition. 
\bigskip
\begin{proposition}\label{prop:subisfaith}
 Guarded reductions are substitution-guarded.
\end{proposition}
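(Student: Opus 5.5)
The statement to prove is \Cref{prop:subisfaith}: every guarded reduction is substitution-guarded. The plan is to fix a guarded reduction $f$ and a pair $(\db,q)$ with $(\db',q')=f(\db,q)\notin\{\true,\false\}$, and then check the conditions of \Cref{def:subguard} one at a time. The conditions of \Cref{def:subguard} that are literally the same as those of \Cref{def:faith} (\Cref{it:inv:1,it:inv:2,it:sizeinv,it:dummy,it:dummy2}) carry over verbatim. So only the three replaced conditions 4$^*$, 5$^*$ and 6$^*$ need an argument.

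\textbf{Conditions 5$^*$ and 6$^*$.} By \Cref{it:dummy} of guardedness we have $\queryvars{q'}\subseteq\queryvars{q}$.
\begin{itemize}
\item Every guard variable of $q'$ lies in $\queryvars{q'}$, and every guarded variable of $q'$ lies in $\queryvars{q'}$.
\item Hence \Cref{it:inv:guarding} of guardedness gives $\initqueryvars{q'}=\initqueryvars{q}=\initqueryvars{q}\cap\queryvars{q'}$, which is 6$^*$.
\item Likewise \Cref{it:inv:guarded} gives $\nrqueryvars{q'}=\nrqueryvars{q}=\nrqueryvars{q}\cap\queryvars{q'}$, which is 5$^*$.
\end{itemize}
These are plain set identities and involve no real difficulty.

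\textbf{Condition 4$^*$.} Let $F\in\modec{q}$ with $\keyvars{F}\in\initqueryvars{q}$ or $\atomvars{F}=\emptyset$. \Cref{it:inv:const} of guardedness then yields $F\in\modec{(q')}$ directly.
\begin{itemize}
\item This settles the first bullet of 4$^*$.
\item In the remaining two bullets, $F$ itself is one of the allowed alternatives ($F$, $R(\underline{x},b,\vec{c})$, $R(\underline{a},b,\vec{c})$, respectively $F$ or $R(\underline{a},\vec{c})$). So $F\in\modec{(q')}$ suffices, with the constants $a,b$ chosen arbitrarily.
\end{itemize}

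\textbf{Strictness.} The one point needing care is the remark following \Cref{def:subguard}, which says a substitution-guarded reduction is always strict, $\size{q'}<\size{q}$. Guardedness alone only gives \Cref{it:dummy2}, i.e.\ $\size{q'}\leq\size{q}$, so strictness cannot be derived from guardedness. I would read that remark as an additional property of the substitution reductions actually used (\Cref{lem:pruneconst}), not as part of the definition. The proposition then concerns only the listed conditions, and I would state this reading explicitly. This interpretive point is the only place the argument could fail; everything else is routine bookkeeping.
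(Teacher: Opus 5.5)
Your proof is correct and is exactly the routine check against the definition that the paper intends; the paper dismisses the proposition as ``clear by definition.'' Items 1--3, 7 and 8 carry over verbatim, 4$^*$ follows from \Cref{it:inv:const} because $F$ itself is always an admissible alternative, and 5$^*$ and 6$^*$ follow from \Cref{it:inv:guarded,it:inv:guarding} because the guarded and guard variables of $q'$ lie in $\queryvars{q'}$. Your reading of the strictness remark is also the right one: it cannot be part of the definition, since $\dbsaturate$ is guarded (\Cref{lem:dbsatinv}) but not strict, and the paper itself applies this proposition to the non-strict restriction $(\db,q)\mapsto(\db^*,q)$ in the proof of \Cref{lem:main}.
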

\medskip
\begin{proposition}\label{prop:subcompose}
 Substitution-guarded reductions are closed under composition, provided that newly
introduced relation names are fresh.
\end{proposition}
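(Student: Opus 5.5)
The statement is \Cref{prop:subcompose}: substitution-guarded reductions are closed under composition, provided that newly introduced relation names are fresh. I would prove it by fixing two substitution-guarded reductions $f,g$ and an input $(\db,q)$. Put $(\db_1,q_1)=f(\db,q)$ and $(\db_2,q_2)=g(\db_1,q_1)$, assuming neither is $\true$ or $\false$; otherwise the composite is trivially fine, since reductions are extended as the identity on Booleans. I would then check each condition of \Cref{def:subguard} for the pair $(\db,q)\mapsto(\db_2,q_2)$.

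The conditions inherited unchanged from \Cref{def:faith} compose exactly as in \Cref{prop:compose}. $N$-unguarded-boundedness transfers in two steps. Inclusion of active domains, inclusion of variable sets, and monotonicity of query size are transitive, and strictness composes as well, since $\size{q_2}<\size{q_1}<\size{q}$.

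For \cref{it:sizeinv}, consider a relation $R\in\sch(q)\cap\sch(q_2)$. By freshness, a name absent from $q_1$ cannot reappear in $q_2$ unless it was newly introduced, and then it would not lie in $\sch(q)$. Hence $R\in\sch(q_1)$, so $\restrict{\db_2}{R}\subseteq\restrict{\db_1}{R}\subseteq\restrict{\db}{R}$.

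Conditions 5$^*$ and 6$^*$ compose by intersection. Using $\queryvars{q_2}\subseteq\queryvars{q_1}$,
\[
\nrqueryvars{q_2}=\nrqueryvars{q_1}\cap\queryvars{q_2}=\nrqueryvars{q}\cap\queryvars{q_1}\cap\queryvars{q_2}=\nrqueryvars{q}\cap\queryvars{q_2},
\]
and the same calculation gives the claim for $\initqueryvars{\cdot}$.

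The main obstacle is 4$^*$, because its conclusion lists admissible images of $F$ in $\modec{(q_1)}$ and these must be fed back into the hypothesis of 4$^*$ for $g$. I would argue by cases on the image $F_1$ of $F$ in $q_1$.
\begin{itemize}
\item If $\atomvars{F_1}=\emptyset$, for instance $F_1=R(\underline{a},b,\vec{c})$ or $R(\underline{a},\vec{c})$, then the first bullet of 4$^*$ for $g$ keeps $F_1$ in $\modec{(q_2)}$. The outcome is one of the allowed forms for $F$.
\item If $F_1=F$ or $F_1=R(\underline{x},b,\vec{c})$ with $x$ still a variable, then by 6$^*$ we get $x\in\initqueryvars{q}\cap\queryvars{q_1}=\initqueryvars{q_1}$. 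So 4$^*$ applies to $g$, and it yields $F$, $R(\underline{x},b',\vec{c})$, $R(\underline{a},b',\vec{c})$, $R(\underline{x},b,\vec{c})$, or $R(\underline{a},b,\vec{c})$ in $\modec{(q_2)}$. Each of these is among the shapes allowed for $F$ with respect to $q$.
\end{itemize}
The case $F=R(\underline{x},\vec{c})$ is handled the same way. This closes all conditions, and the proposition follows.
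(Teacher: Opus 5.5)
Your proof is correct and takes the same route as the paper. The paper defers to the condition-by-condition argument of \Cref{prop:compose}: freshness handles the restriction condition, the inclusion $\queryvars{q_2}\subseteq\queryvars{q_1}$ (Item 7) makes 5$^*$ and 6$^*$ compose, and 6$^*$ for the first reduction supplies the guard hypothesis needed to apply 4$^*$ to the second; your case analysis on the image of $F$ in $\modec{(q_1)}$ spells out the 4$^*$ step that the paper leaves implicit.
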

\begin{proof}
 The proof is analogous to that of \Cref{prop:compose}.
 Here, additionally, Items 5$^*$ and 6$^*$ rely on \Cref{it:dummy}.
 \end{proof}

\subsection{Substitution Step}
Next, we consider the substitution step, showing that it is substitution-guarded. Again, we start with preparatory lemmas.
Recall the notion of a substitutable set from \Cref{def:sub}.

\smallskip
\begin{lemma}\label{lem:rootnorm}
   Let $q\in \sjfuabcq$ be saturated and normal.
Let $Y\subseteq \queryvars{q}$ be substitutable.  
   Let $f\colon Y\to \Const$ be a function.
Define $q'\defeq q_{f}$. 
Then $\initqueryvars{q} \setminus  Y=\initqueryvars{q'}$.
\end{lemma}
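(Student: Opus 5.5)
We must show $\rootqueryvars{q}\cap\srcqueryvars{q}$ minus $Y$ equals $\rootqueryvars{q'}\cap\srcqueryvars{q'}$. Since $\queryvars{q'}=\queryvars{q}\setminus Y$, it suffices to compare roots and source-SCC membership for variables $v\notin Y$. The main tool is \Cref{claim:notsimple0}, which applies because \Cref{it:sub2} of \Cref{def:sub} gives exactly its hypotheses. It says that for $v\notin Y$, edges $\fd{u}{v}$ of $\FD{q}$, $\WFD{q}$ and $\SFD{q}$ correspond to edges $\fd{u'}{v}$ in $q'$, where $u'=\emp$ if $u\in Y$ and $u'=u$ otherwise. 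Moreover, consistency is preserved in both directions.

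\textbf{Roots.} The plan is to show $\rootqueryvars{q}\setminus Y=\rootqueryvars{q'}$.

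If $v\notin\rootqueryvars{q}$, there is an edge $\fd{w}{v}\in\igraph{q}$.
\begin{itemize}
\item If it lies in $\WFD{q}$, then $\fd{w'}{v}\in\WFD{q'}$, so $v\notin\rootqueryvars{q'}$.
\item If it lies in $\SFD{q}$, there is $\fd{p}{w}\in\WFD{q}$ with $\fd{p}{v}\notin\SFD{q}$. Since $Y$ is closed under $\SFD{q}$ and $v\notin Y$, we get $w\notin Y$. Also $p'$ is defined, and $\fd{p'}{v}\notin\SFD{q'}$ by item 3 of \Cref{claim:notsimple0}. So $\fd{w}{v}\in\igraph{q'}$.
\end{itemize}

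Conversely, suppose $\fd{w}{v}\in\igraph{q'}$. Pick a preimage $\fd{\hat w}{v}\in\FD{q}$ with $\hat w'=w$. If the edge is inconsistent in $q'$, it is inconsistent in $q$, so $v\notin\rootqueryvars{q}$. If it is consistent in $q'$, then $w$ is a variable by \Cref{it:normal3}, and \Cref{it:add2} of \Cref{lem:removecfd3} keeps $q'$ normal. So $w=\hat w\notin Y$, and $\fd{w}{v}\in\SFD{q}$. The witness $\fd{p}{w}\in\WFD{q'}$ lifts to $\fd{\hat p}{w}\in\WFD{q}$. Then $\fd{\hat p}{v}\notin\SFD{q}$, since otherwise item 3 would give $\fd{p}{v}\in\SFD{q'}$. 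Hence $\fd{w}{v}\in\igraph{q}$.

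\textbf{Source SCCs.} The next step is to show $(\srcqueryvars{q}\cap\rootqueryvars{q})\setminus Y=\srcqueryvars{q'}\cap\rootqueryvars{q'}$. Here I expect the main obstacle: substituting may cut paths through $Y$ and create new source SCCs, or route through $\emp$.

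For $\subseteq$, let $u$ be a guard of $q$ with $u\notin Y$. Suppose $\fd{u_0}{u}\in\FD{q'}$ with $u_0\notin\reach{u}{\FD{q'}}$, and lift it to $\fd{\hat u_0}{u}\in\FD{q}$.
\begin{itemize}
\item If $\hat u_0\in Y\cup\{\emp\}$, then $u_0=\emp$, so $u$ has an incoming edge from $\emp$ in $\FD{q'}$. Lifting again gives an incoming edge from $\emp$ or from $Y$ in $\FD{q}$. This contradicts $u$ being in a source SCC: $\emp$ is not reachable from anything, and $Y$ is reachable only from $\emp$ or $Y$ in case (ii), or is an unattacked single variable in case (i). In that last situation, I would use \Cref{lem:isedge} to derive a consistent edge into $u$ from $Y$, and then closure of $Y$ forces $u\in Y$.
\item Otherwise the $q$-path from $u$ back to $\hat u_0$ must avoid $Y$, by the same reachability argument. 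Hence it persists in $q'$ by item 1.
\end{itemize}

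For $\supseteq$, the argument is symmetric: an edge into $u$ in $q$ that blocks source membership is either preserved in $q'$ or comes from $Y$. In the latter case it becomes $\fd{\emp}{u}$ in $q'$, contradicting $u\in\srcqueryvars{q'}$.

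Case (i) with an unattacked $y$ is where I am least sure. There I would first try \Cref{lem:rootnotattacked}-style reasoning, using $\outvar{\SFD{q}}{y}=\{y\}$, to show $y$ itself is the whole source SCC.
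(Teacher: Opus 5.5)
Your treatment of the roots and of the inclusion $\srcqueryvars{q'}\subseteq\srcqueryvars{q}$ is correct and matches the paper's corresponding claims, with two small simplifications. You lift the witness $\fd{p}{w}$ directly to an inconsistent edge via \Cref{claim:notsimple0}, and you observe that any edge from $Y$ into $u$ becomes $\fd{\emp}{u}$.

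The gap is in the other source-SCC inclusion, in your case (i), where $Y=\{y\}$ with $y$ unattacked. Applying \Cref{lem:isedge} requires $y\in\rootqueryvars{q}$, which you never establish. The fallback you propose, showing that $\{y\}$ is itself a source SCC, is false in general. For example, $q=\{R(\underline{x},y),R'(\underline{x},y)\}$ is saturated and normal, $y$ is unattacked and $\{y\}$ is substitutable, yet $\fd{x}{y}\in\FD{q}$ while $x\notin\reach{y}{\FD{q}}$. Your route can be repaired by proving that unattacked variables are roots. An inconsistent $\igraph{q}$-edge into $y$ would make its generating atom attack $y$, by \Cref{lem:nextattack}. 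A consistent one is preceded by an inconsistent edge whose generating atom attacks $y$, by \Cref{lem:upattack}.

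The paper needs no property of $y$ here; it uses only that $u$ is a root. Suppose $u\notin Y$ is reachable in $\FD{q}$ from some vertex of $Y$. This covers both a direct edge from $\hat u_0\in Y$ and a return path from $u$ to $\hat u_0$ through $Y$. Take a shortest path from $Y$ to $u$. Its first edge leaves $Y$, so it lies in $\WFD{q}$ because $\outvar{\SFD{q}}{Y}\subseteq Y$. Then \Cref{lem:shortestpath} yields a nonempty $\igraph{q}$-path ending at $u$, contradicting $u\in\rootqueryvars{q}$. This treats both kinds of substitutable sets uniformly, so your split into (i) and (ii) is unnecessary, though your case (ii) argument via reachability from $\emp$ is valid.

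One further point, shared with the paper's proof: a vertex fails to lie in a source SCC when some edge enters its SCC, not necessarily an edge into the vertex itself. You should therefore start from an arbitrary $a$ with $u\in\reach{a}{\FD{q'}}$ and $a\notin\reach{u}{\FD{q'}}$, and analogously in $\FD{q}$. The same lifting and $Y$-avoidance arguments then go through along the whole path.
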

\begin{proof}
 We need to prove $(\rootqueryvars{q}\cap \srcqueryvars{q}) \setminus  Y =\rootqueryvars{q'} \cap \srcqueryvars{q'}$.
 We divide this to several claims. For a variable $x\in \queryvars{q}$, let us define $x'\defeq \emp$ if $x\in Y$, and otherwise $x'\defeq x$.

\begin{claim}
 $\rootqueryvars{q} \setminus  Y\subseteq\rootqueryvars{q'}$. 
 \end{claim}
 \begin{claimproof}
 Toward contradiction, suppose the claim is false.
Let $u\in \rootqueryvars{q} \setminus  (Y\cup \rootqueryvars{q'})$.
 Then there exists an edge of the form $\fd{p}{u}\in\igraph{q'}$.  
 Clearly, there exists $\fd{w}{u}\in\FD{q}$ where $w'\defeq p$.
 Since $q$ is variable-repetition-free, it holds that $w\neq u$.
 Since $u\in \rootqueryvars{q}$, we also have $\fd{w}{u}\notin\WFD{q}$.
We consider two possible cases.

Suppose first that $\fd{p}{u}\in\WFD{q'}$.   
Now, $\fd{w}{u}\in\SFD{q}$ implies $\fd{p}{u}\in \SFD{q'}$ by \Cref{claim:notsimple0}, which contradicts the assumption.
Hence, we obtain $\fd{w}{u}\notin\FD{q}$, a contradiction. 

Suppose then that $\fd{p}{u}\in\SFD{q'}$.
Then we find another edge
 $\fd{r}{p}\in \WFD{q'}$ such that $\fd{r}{u}\notin\SFD{q'}$. 
This entails $p=w \notin Y$ (as $p$ is a variable), i.e., we have $\fd{r}{w}\in \WFD{q'}$, $\fd{w}{u}\in\SFD{q'}$, and $\fd{r}{u}\notin\SFD{q'}$.
Again, there exists $\fd{s}{w}\in\FD{q}$ such that $s'=r$.
By \Cref{claim:notsimple0}, we have $\fd{w}{u}\in\SFD{q}$ and $\fd{s}{u}\notin\SFD{q}$. 
Note that $\fd{w}{u}\in\FD{q}$ due to $w\neq u$.
Thus, if $\fd{s}{w}\in \WFD{q}$, then $\fd{w}{u}\in\igraph{q}$, contradicting the fact that $u\in \rootqueryvars{q}$. 
Moreover, if $\fd{s}{w}\in \SFD{q}$, then by \Cref{lem:ctrans} we obtain $\fd{s}{u}\in\SFD{q}$, again a contradiction. 

We conclude by contradiction that  $\rootqueryvars{q}\subseteq\rootqueryvars{q'}$.
\end{claimproof}

\begin{claim}
 $\rootqueryvars{q'} \subseteq \rootqueryvars{q} \setminus Y$. 
 \end{claim}
 \begin{claimproof}
Since $\rootqueryvars{q'}$ does not intersect $Y$, 
it suffices to prove $\rootqueryvars{q'} \subseteq \rootqueryvars{q}$.
Assume toward contradiction that this is false. Let $u \in \rootqueryvars{q'} \setminus  \rootqueryvars{q} $.
In particular, $u\notin Y$.
 By assumption there exists an edge of the form $\fd{w}{u}\in\igraph{q}$.  
If $\fd{w}{u}\in\WFD{q}$, then $\fd{w'}{u}\in\WFD{q'}$  by \Cref{claim:notsimple0}, i.e., $u\notin \rootqueryvars{q'}$, a contradiction.
 Thus, suppose $\fd{w}{u}\in\SFD{q}$. Then we find another edge $\fd{p}{w}\in \WFD{q}$ such that $\fd{p}{u}\notin\SFD{q}$.
 Since $\outvar{\SFD{q}}{Y} \subseteq Y$ and $u\notin Y$, we obtain $w\notin Y$, hence $w=w'$.
 Consequently, \Cref{claim:notsimple0} implies $\fd{p'}{w}\in \WFD{q'}$, $\fd{w}{u}\in\SFD{q'}$, and $\fd{p'}{u}\notin\SFD{q'}$,
 which is tantamount 
to $\fd{p'}{u}\in\igraph{q'}$, i.e., $u \notin \rootqueryvars{q'}$. This contradicts our assumption, so we obtain $\rootqueryvars{q'} \subseteq \rootqueryvars{q} \setminus Y$.
\end{claimproof}

\begin{claim}
 $\srcqueryvars{q} \cap \rootqueryvars{q} \subseteq \srcqueryvars{q'}$.
 \end{claim}
 \begin{claimproof}
Assume  toward contradiction that $u\in (\srcqueryvars{q} \cap \rootqueryvars{q}) \setminus \srcqueryvars{q'}$. Then, there exists an edge of the form $\fd{u_0}{u}\in {\FD{q'}}$ such that $u_0\notin \reach{u}{\FD{q'}}$. In particular, $u\in \queryvars{q'}$.
We claim that $\fd{u_0}{u}\in {\FD{q}}$. Assume toward contradiction that $\fd{u_0}{u}\notin {\FD{q}}$. This is only possible when 
$\fd{\emp}{u}\in {\FD{q'}}$ and $\fd{y}{u}\in {\FD{q}}$, for some $y\in Y$. However, assuming  $y\in Y$, $\fd{y}{u}\in {\WFD{q}}$ contradicts $u\in \rootqueryvars{q}$ and $\fd{y}{u}\in {\SFD{q}}$  contradicts $u\notin Y$. Therefore, the claim that $\fd{u_0}{u}\in {\FD{q}}$ holds.

Now, from $u\in \srcqueryvars{q}$
  we obtain $u_0\in \reach{u}{\FD{q}}$. In particular, there exists a directed path $D$ from $u$ to $u_0$ in $\FD{q}$. 
  We may assume that $D$ is shortest among all such paths.
  Since no such path exists in $\FD{q'}$, we obtain that $D$ contains a vertex  $y\in Y$.
  Moreover, $y\neq u$ as $y$ does not belong to $\queryvars{q'}$ while $u$ does.
  We may select $y$ so that, beside it, no other vertex in the path $D[y,u]$ belongs to $Y$. Since $\outvar{\SFD{q}}{Y} \subseteq Y$ (due to the substitability of $Y$), it follows that the first edge in $D[y,u]$ belongs to $\WFD{q}$. 
  Since $D[y,u]$ is also a shortest path from $y$ to $u$, \Cref{lem:shortestpath} implies  $u\in \reach{y}{\igraph{q}}$,
  contradicting $u\in \rootqueryvars{q}$. 
   We conclude by contradiction that $\srcqueryvars{q} \cap \rootqueryvars{q} \subseteq \srcqueryvars{q'}$.
  \end{claimproof}

\begin{claim}
 $\srcqueryvars{q'} \subseteq \srcqueryvars{q}$.
 \end{claim}
 \begin{claimproof}
Assume toward contradiction that $u\in \srcqueryvars{q'} \setminus \srcqueryvars{q}$. 
In particular, we have $u\notin Y$.
Then we find an edge $\fd{u_0}{u} \in \FD{q}$ such that $u_0 \notin \reach{u}{ \FD{q}}$.

Suppose first that $\fd{u_0}{u} \in \FD{q'}$, in which case by assumption we have $u_0 \in \reach{u}{ \FD{q'}}$. Since $u\in \srcqueryvars{q'}$ (as opposed to being $\emp$), it can be seen that any path from $u$ to $u_0$ in $\FD{q'}$ belongs also to 
$\FD{q}$. Hence, $u_0 \in \reach{u}{ \FD{q}}$, contradicting the assumption. 

Suppose then that $\fd{u_0}{u} \notin \FD{q'}$. This entails $u_0\in Y$. Now, $\fd{u_0}{u}\in {\WFD{q}}$ implies by \Cref{claim:notsimple0} 
that $\fd{\emp}{u}\in {\WFD{q'}}$, contradicting the assumption that $u\in \srcqueryvars{q'}$.
Furthermore, $\fd{u_0}{u}\in {\SFD{q}}$ implies $u\in Y$ by the fact that $\outvar{\SFD{q}}{Y} \subseteq Y$, again a contradiction. Therefore, by contradiction we obtain $\srcqueryvars{q'} \subseteq \srcqueryvars{q}$.
\end{claimproof}
This concludes the proof.
\end{proof}

\begin{lemma}\label{lem:nrsub}
   Let $q\in \sjfuabcq$ be saturated and normal.
Let $Y\subseteq \queryvars{q}$ be substitutable.  
   Let $f\colon Y\to \Const$ be a function.
Define $q'\defeq q_{f}$. 
Then $\nrqueryvars{q} \setminus  Y=\nrqueryvars{q'}$.
\end{lemma}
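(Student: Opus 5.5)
We prove the two inclusions separately. As in the proof of \Cref{lem:nr2}, the plan is to reduce membership in $\nrqueryvars{\cdot}$ to two ingredients. The first is membership of the guard in $\initqueryvars{\cdot}$, which is handled by \Cref{lem:rootnorm}. The second is consistency of the guarding edge, which is handled by \Cref{claim:notsimple0}. Throughout we use that a substitutable $Y$ satisfies $\outvar{\SFD{q}}{\emp}\subseteq Y$ and $\outvar{\SFD{q}}{Y}\subseteq Y$ (\Cref{def:sub}). Hence the hypotheses of \Cref{claim:notsimple0} hold.

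($\subseteq$) Let $v\in\nrqueryvars{q}\setminus Y$, and pick $u\in\initqueryvars{q}$ with $\fd{u}{v}\in\SFD{q}$.

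\begin{itemize}
\item We first show $u\notin Y$. If $u\in Y$, then closure of $Y$ under $\SFD{q}$ would force $v\in Y$, a contradiction.
\item Hence $u\in\initqueryvars{q}\setminus Y=\initqueryvars{q'}$ by \Cref{lem:rootnorm}.
\item Since $u,v\notin Y$, the final statement of \Cref{claim:notsimple0} gives $\fd{u}{v}\in\SFD{q'}$.
\end{itemize}

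Thus $v\in\nrqueryvars{q'}$. The degenerate case $u=v$ is immediate, since $\fd{v}{v}$ is always consistent.

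($\supseteq$) Let $v\in\nrqueryvars{q'}$, witnessed by $u\in\initqueryvars{q'}$ with $\fd{u}{v}\in\SFD{q'}$.

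\begin{itemize}
\item Since $v\in\queryvars{q'}$, we have $v\notin Y$.
\item By \Cref{lem:rootnorm}, $u\in\initqueryvars{q}\setminus Y$. In particular, $u$ is a variable not in $Y$, so $u'=u$ in the notation of \Cref{claim:notsimple0}.
\item Item~\ref{it:not3} of that lemma then yields $\fd{u}{v}\in\SFD{q}$.
\end{itemize}

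Hence $v\in\nrqueryvars{q}\setminus Y$.

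The only step that needs care is justifying the use of \Cref{claim:notsimple0}. That lemma requires $q$ to be saturated and $Y$ to be closed as above; both are given. Otherwise the statement is a direct corollary of \Cref{lem:rootnorm}, which is where the real work lies. One more point must be checked: no guarded variable of $q'$ can come from a guard that was substituted away. This is exactly the content of the equality $\initqueryvars{q'}=\initqueryvars{q}\setminus Y$, so no extra argument is needed.
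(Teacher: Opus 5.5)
Your proof is correct and follows essentially the same route as the paper's. Both inclusions reduce to \Cref{lem:rootnorm} for the guard and to \Cref{claim:notsimple0} for the consistency of the guarding edge. In both arguments, $u\notin Y$ comes from the closure of $Y$ under $\SFD{q}$ in the forward direction, and from $u,v\in\queryvars{q'}$ in the backward direction.
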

\begin{proof}
Assume first that $v\in \nrqueryvars{q}\setminus Y$. Then there exists $u\in \initqueryvars{q}$ such that $\fd{u}{v}\in \SFD{q}$.
Note that we have  $u\notin Y$ by $v\notin Y$ and $\outvar{\SFD{q}}{Y} \subseteq Y$.
Hence,  by \Cref{lem:rootnorm} we obtain $u\in \initqueryvars{q'}$, and by \Cref{claim:notsimple0} we have $\fd{u}{v}\in \SFD{q'}$.
Therefore, $v\in \nrqueryvars{q'}$.

Assume then that $v\in \nrqueryvars{q'}$.  Then there exists $u\in \initqueryvars{q'}$ such that $\fd{u}{v}\in \SFD{q'}$.
It follows that  $u,v\notin Y$.
Hence,  by \Cref{lem:rootnorm} we obtain $u\in \initqueryvars{q}$, and by \Cref{claim:notsimple0} we have $\fd{u}{v}\in \SFD{q}$. 
Therefore, $v\in \nrqueryvars{q}$.
\end{proof}

We can then prove the following lemma. Recall that the reduction $\textsc{Substitute}_f$ is given in \eqref{eq:subf}.

\smallskip

\sub*
\begin{proof}
The reduction is strict, polynomial-time, and into $\mathcal{C}^*\cup\{\true,\false\}$ by \Cref{lem:removecfd3}.
It remains to show that the  reduction is substitution-guarded.

\Cref{it:inv:2,it:sizeinv,it:dummy} are immediate, and 
Items 5$^*$ and 6$^*$ follow by \Cref{lem:nrsub,lem:rootnorm}.
It remains to prove Items 4$^*$ and \Cref{it:inv:1}. We start with the former.

Let $q$ be a saturated and normal query from $\sjfuabcq$. 
Let $Y\subseteq \queryvars{q}$ be substitutable.  
Let $f\colon Y \to \adom{\db}$.

Let $F\in \modec{q}$, and suppose that either $\keyvars{F}\in \initqueryvars{q}$ or $\atomvars{F}= \emptyset$.
 We consider the three statements of Item 4$^*$ in the following.
\begin{itemize}
\item
 If  $\atomvars{F}= \emptyset$, then clearly $F\in \modec{(q')}$.
 \item
  Suppose $\keyvars{F}\in  \initqueryvars{q}$ and $F= R(\underline{x},y,\vec{c})$, where $y$ is a variable and $\vec{c}$ is a sequence of constants. In particular, we have that $\notkeyvars{F}=\{y\}$
  and $\fd{x}{y}\in \SFD{q}$.
  If $F\in q'$, then $x,y\notin Y$, and hence by \Cref{claim:notsimple0},  $\fd{x}{y}\in \SFD{q'}$, yielding $F\in \modec{(q')}$.
  Thus, suppose $F\notin q'$. Then, one of $R(\underline{f(x)},y,\vec{c})$, $R(\underline{x},f(y),\vec{c})$, $R(\underline{f(x)},f(y),\vec{c})$ belongs to $q'$. We show that only the middle atom can belong to $q'$.
  
  Toward contradiction, suppose first that $R(\underline{f(x)},f(y),\vec{c})\in q'$. Then $x,y\in Y$, and as $q$ is variable-repetition-free, we moreover have $x\neq y$. In particular, $Y$ is not a singleton, so by \Cref{it:sub3} of substitutability we obtain $Y= \notkeyvars{H}$, for some $H\in q$ such that $\keyvars{H}=\emp$.
  As $\fd{x}{y}\in \SFD{q}$, we obtain a contradiction with the normality of $q$. Hence, $R(\underline{f(x)},f(y),\vec{c})\notin q'$.
  
  Furthermore, toward contradiction, suppose that $R(\underline{f(x)},y,\vec{c})\in q'$. Then $x\in Y$, and by \Cref{it:sub2} of substitutability, also $y\in Y$, contradicting the fact that $y$ has not been substituted. Hence, $R(\underline{f(x)},y,\vec{c})\notin q'$.
  
  It follows that $R(\underline{x},f(y),\vec{c})\in q'$, and by definition this means that $R(\underline{x},f(y),\vec{c})\in \modec{(q')}$
  Thus, the second statement follows.
  \item 
  If $\keyvars{F}\in  \initqueryvars{q}$ and $F= R(\underline{x},\vec{c})$, where $\vec{c}$ is a sequence of constants, then clearly either $F$ or $R(\underline{f(x)},\vec{c})$ is in $ \modec{(q')}$.
\end{itemize}
This concludes the proof of Item 4$^*$.

To prove \Cref{it:inv:1}, assume that $(\db,q)$ is $N$-unguarded-bounded (see \Cref{def:Nguarded}).
In particular, for each $T\in q$, it holds that
\(
|\restrict{T^{\db}}{V}| \leq N,
\)
where $V\defeq \atomvars{T}\cap \rqueryvars{q}$. 
By \Cref{lem:nrsub}, $\nrqueryvars{q'}= \nrqueryvars{q} \setminus Y$. Since $\queryvars{q'}$ and $Y$ are disjoint, 
this leads to $\rqueryvars{q'}\subseteq \rqueryvars{q}$.
 Thus, using \Cref{lem:dbsat} and \Cref{it:sizeinv}, for each $T' \defeq T_{Y\mapsto f(Y)} \in q'$, where $T\in q$, we have
\[|\restrict{T^{\db'}}{V'}|\leq |\restrict{T^{\db}}{V'}|\leq | \restrict{T^{\db}}{V}| \leq  N,\]
 where  $V'\defeq \atomvars{T'}\cap \rqueryvars{q'}$ and $V\defeq \atomvars{T}\cap \rqueryvars{q}$.
Hence, \Cref{it:inv:1} holds as well. 

We conclude that the stated reduction is substitution-guarded.
This concludes the proof.
\end{proof}

\section{Proof of Main Lemma}\label{sect:appmain}

In the proof of the main lemma, we use the following result.

\smallskip
\begin{lemma}[{\cite[Lemma~4.4]{KoutrisW17}}]\label{lem:non-attacked}
Let $q$ be a query from $\sjfbcq$. Let $F\in q$ be an atom that
has no incoming attack in the attack graph of $q$.
Then, the following are equivalent for
every database $\db$:
\begin{enumerate}
\item every repair of $\db$ satisfies $q$;
\item for some
$f\colon \keyvarsvars{F} \to \adom{\db}$,
  every repair of $\db$ satisfies $q_f$.
\end{enumerate}
\end{lemma}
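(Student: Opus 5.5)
This lemma is quoted from \cite[Lemma~4.4]{KoutrisW17}; I would reprove it directly via a repair-combination (``frugal repair'') argument.

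The direction (2)$\Rightarrow$(1) is immediate. Any valuation witnessing $q_f$ in a repair $\rep$ extends by $f$ to a valuation witnessing $q$ in $\rep$. So every repair satisfying $q_f$ also satisfies $q$.

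For (1)$\Rightarrow$(2) I argue by contraposition. Write $V\defeq\keyvarsvars{F}$ and $Q\defeq \cl{V}{q\setminus\{F\}}$. Assume that for every $f\colon V\to\adom{\db}$ there is a repair $\rep_f$ of $\db$ falsifying $q_f$. The goal is a single repair $\rep^*$ falsifying $q$.

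Consider the $F$-facts of $\db$, grouped by key value $\vec a$. For each $\vec a$, the set of $Q$-values compatible with choosing $\vec a$ is determined along the consistent chains issuing from $\vec a$. Define $\rep^*$ as follows. For each block $B$ of $\db$, look at the key of $B$. If it is ``reachable'' from some key value $\vec a$ of $F$ via the closure $Q$, take the fact that $\rep_{\vec a}$ selects from $B$. For blocks not reached this way, pick arbitrarily, for instance from $\rep_{f_0}$ with a fixed $f_0$.

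To make this well defined, I first need that the $Q$-parts for distinct $\vec a$ do not clash. The plan is to use the standard ``key-closure determines values'' argument, namely \Cref{lem:easy} applied to $q\setminus\{F\}$. Any valuation $\theta$ with $\theta(q)\subseteq\rep^*$ is fixed on $Q$ once $\theta(V)$ is fixed, and on the $Q$-blocks $\rep^*$ agrees with $\rep_{\theta(V)}$.

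Next, suppose $\theta(q)\subseteq\rep^*$ and put $\vec a=\theta(V)$. I need to show that the parts of $\theta(q)$ outside the $Q$-region can be transported into $\rep_{\vec a}$, which would contradict the choice of $\rep_{\vec a}$. This is where the hypothesis that $F$ is unattacked enters. For every atom $G$, $G$ does not attack $F$. So every Gaifman path from $\notkeyvars{G}$ into $\atomvars{F}$ meets $\keyclosure{G}{q}$. The intended consequence is a separation property: the variables outside $Q$ split into components, each of which touches the rest only through $Q$. I would then use this to modify $\theta$ componentwise so that it lands inside $\rep_{\vec a}$.

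I expect the hardest part to be exactly this combination argument: showing that mixing choices from different repairs $\rep_f$ never creates a satisfying valuation. Proving it requires the separation property to hold simultaneously for all blocks. To handle this, I would follow Koutris and Wijsen and first prove the statement for a database that is ``purified'' relative to $q$. I would then choose $\rep^*$ to agree with $\rep_{\vec a}$ on every block whose key lies in the image of $\keyclosure{F}{q}$ under a valuation sending $V$ to $\vec a$. Finally, I would argue by induction along the unattacked structure that any satisfying valuation must agree with a single $\rep_{\vec a}$ on all atoms of $q$.

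Since the paper only cites this lemma, this reconstruction is enough to justify its use in the main lemma.
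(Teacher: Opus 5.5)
Your direction (2)\(\Rightarrow\)(1) is fine. The direction (1)\(\Rightarrow\)(2), however, is not proved. The sketch stops exactly where the content of the lemma lies (``I would follow Koutris and Wijsen\ldots'', ``argue by induction along the unattacked structure''), and the parts that are made concrete do not work.

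\textbf{The region you copy is the wrong one.} The failure of \(q_{\vec a}\) in \(\rep_{\vec a}\) is typically witnessed in blocks reached through non-key variables of \(F\) that lie outside \(\keyclosure{F}{q}\). A concrete example:
\begin{itemize}
\item Take \(q=\{R(\underline{x},y),S(\underline{y},z),T(\underline{y},z)\}\). Here \(R\) is unattacked, since \(\keyclosure{S}{q}=\keyclosure{T}{q}=\{y,z\}\).
\item Take \(\db=\{R(\underline{a_i},b_i),S(\underline{b_i},1),S(\underline{b_i},2),T(\underline{b_i},1)\mid i=1,2\}\). Every \(q_f\) is falsified by some repair.
\item Since \(\keyclosure{R}{q}=\{x\}\), your \(\rep^*\) copies only the \(R\)-blocks from the \(\rep_{a_i}\) and takes the \(S\)-blocks from \(\rep_{f_0}\).
\item \(\rep_{f_0}\) may pick \(S(\underline{b_2},1)\); for instance \(\rep_{f_0}=\rep_{a_1}\) is free there. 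Then \(\rep^*\) satisfies \(q\) via \(x\mapsto a_2\).
\end{itemize}

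\textbf{Enlarging the region does not fix this by itself.} If you include the blocks reached through \(F\)'s non-key values, the regions of different \(\vec a\) overlap, and the \(\rep_{\vec a}\) may disagree on them. \Cref{lem:easy} only compares valuations into a single repair, so it cannot reconcile such disagreements. You would need an explicit merge rule and a proof that mixing creates no new embedding, and that is exactly what is deferred.

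\textbf{Your separation property does not use the hypothesis.} That the variables outside \(Q\) form components attached to the rest only through \(Q\) holds for any vertex set \(Q\). What unattackedness actually gives is that, for every \(G\neq F\), \(\atomvars{F}\) avoids the variables attacked by \(G\). This is a statement about \(\keyclosure{G}{q}\), not about \(\keyclosure{F}{q}\).

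\textbf{The missing idea} is an exchange property that avoids merging all the \(\rep_f\) at once. Write \(E(\mathbf t)=\{\theta|_V\mid \theta(q)\subseteq\mathbf t\}\).

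Exchange property: suppose \(\theta(q)\subseteq\mathbf t\) with \(\theta(G)=B\) for some \(G\neq F\), and \(\mathbf t'\) replaces \(B\) by a key-equal fact \(A\). Then \(E(\mathbf t')\subseteq E(\mathbf t)\). The reasons are:
\begin{itemize}
\item Any embedding \(\mu\) into \(\mathbf t'\) that is not into \(\mathbf t\) must have \(\mu(G)=A\).
\item \(\mu\) and \(\theta\) agree on \(\keyclosure{G}{q}\), by consistency of \(\mathbf t\).
\item The valuation equal to \(\theta\) on the variables attacked by \(G\) and to \(\mu\) elsewhere maps \(q\) into \(\mathbf t\). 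It agrees with \(\mu\) on \(\atomvars{F}\), because \(G\) does not attack \(F\).
\end{itemize}
Changing the \(F\)-block whose key value is determined by \(f\) only affects whether \(f\in E\).

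With this, the lemma follows:
\begin{itemize}
\item Choose a repair \(\mathbf r\) with \(\subseteq\)-minimal \(E(\mathbf r)\); it is non-empty by (1). Pick \(f\in E(\mathbf r)\) and any repair \(\mathbf s\).
\item Repeatedly switch to \(\mathbf s\)'s choice a block containing a fact of a current \(f\)-embedding that is not in \(\mathbf s\).
\item Each switch keeps \(E\) inside \(E(\mathbf r)\), hence equal to it by minimality, so \(f\) survives.
\item The process can only stop with an \(f\)-embedding inside \(\mathbf s\). Hence every repair satisfies \(q_f\).
\end{itemize}
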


\usetikzlibrary{shapes.geometric,positioning,arrows.meta}
\usetikzlibrary{decorations.pathmorphing}

\usetikzlibrary{decorations.pathmorphing}

\usetikzlibrary{calc,decorations.pathmorphing}

\newcommand{\subtreeicon}[1]{%
\begin{scope}[shift={(#1)},rotate=180]

    \coordinate (A) at (90:7mm);
    \coordinate (B) at (210:7mm);
    \coordinate (C) at (330:7mm);
    \coordinate (M) at ($(B)!0.5!(C)$);

    \begin{scope}
        \clip (A)--(B)--(C)--cycle;

        \fill[red!35]
            (-2,-2) --
            plot[
                decorate,
                decoration={snake,amplitude=.2mm,segment length=1.2mm}
            ] coordinates {(0,-2) (0,2)}
            -- (-2,2) -- cycle;

        \fill[blue!30]
            (2,-2) --
            plot[
                decorate,
                decoration={snake,amplitude=.2mm,segment length=1.2mm}
            ] coordinates {(0,-2) (0,2)}
            -- (2,2) -- cycle;
    \end{scope}

    \draw[
        decorate,
        decoration={snake,amplitude=.2mm,segment length=1.2mm}
    ]
    (M) -- (A);

    \draw (A)--(B)--(C)--cycle;

\end{scope}
}

\newcommand{\subtreeiconblue}[1]{%
\begin{scope}[shift={(#1)},rotate=180]

    \coordinate (A) at (0,7mm);

    \coordinate (B) at (-11mm,4.1mm);
    \coordinate (C) at ( 11mm,4.1mm);

    \fill[blue!30] (A)--(B)--(C)--cycle;
    \draw (A)--(B)--(C)--cycle;

\end{scope}
}

\begin{figure}
\hspace{3cm}
\scalebox{.8}{
\begin{tikzpicture}[
    >=Stealth,
    every node/.style={font=\small},
    subtree/.style={
        draw,
        regular polygon,
        regular polygon sides=3,
        rotate=180,
        minimum size=20mm,
        inner sep=0pt
    }
]


\coordinate (t1) at (0,1.5);
\subtreeicon{t1}
\node[below=8mm of t1] {$x_1$};

\coordinate (t2) at (2,1.5);
\subtreeicon{t2}
\node[below=8mm of t2] {$x_2$};

\node at (3.5,1.4) {$\cdots$};

\coordinate (tn) at (5,1.5);
\subtreeicon{tn}
\node[below=8mm of tn] {$x_n$};


\draw[->,thick]
    (2.5,0.5) -- node[right]{existential substitution} (2.5,-0.8);


\coordinate (s1) at (0,-1.5);
\subtreeicon{s1}
\node[below=8mm of s1] {$\emp$};

\coordinate (s2) at (2,-1.5);
\subtreeicon{s2}
\node[below=8mm of s2] {$x_2$};

\node at (3.5,-1.6) {$\cdots$};

\coordinate (sn) at (5,-1.5);
\subtreeicon{sn}
\node[below=8mm of sn] {$x_n$};


\draw[->,thick]
    (2.5,-2.5) -- node[right]{pruning} (2.5,-3.8);


\coordinate (p1) at (0,-4.5);
\subtreeiconblue{p1}
\node[below=8mm of p1] {$\emp$};

\coordinate (p2) at (2,-4.5);
\subtreeicon{p2}
\node[below=8mm of p2] {$x_2$};

\node at (3.5,-4.6) {$\cdots$};

\coordinate (pn) at (5,-4.5);
\subtreeicon{pn}
\node[below=8mm of pn] {$x_n$};

\draw[->,thick]
    (2.5,-5.5) --
    node[right]{universal substitution}
    (2.5,-6.8);



\coordinate (u2) at (2,-7.5);
\subtreeicon{u2}
\node[below=8mm of u2] {$x_2$};

\node at (3.5,-7.6) {$\cdots$};

\coordinate (un) at (5,-7.5);
\subtreeicon{un}
\node[below=8mm of un] {$x_n$};
\end{tikzpicture}
}
\caption{Schematic outline for the main lemma. The triangles represent the trees of the attack-propagation graph $\igraph{q}$, with the red area representing unguarded and the blue area guarded variables. \label{fig:schematic}}
\end{figure}

A schematic outline of the reduction performed by \textsc{Solve} (see \Cref{alg:solve}) is shown in \Cref{fig:schematic}. Let $q$ be a saturated and normal query. Unless $\emp$ already has outgoing edges in the attack-propagation graph $\igraph{q}$, the algorithm first selects a suitable non-attacked root variable $x_1$ and existentially substitutes it with a constant. It then flattens the tree rooted at $\emp$ by pruning it, thereby removing all non-guarded variables from the tree. Finally, the guarded variables that remain in the tree are universally substituted by a constant. This removes one tree from $\igraph{q}$, after which the process repeats on the remaining graph. At any stage, a computation branch may terminate with $\false$; if it reaches the end, it returns $\true$. Guarding guarantees that the number of branches is polynomially bounded, with each branch running in polynomial time.

The function $\textsc{Solve}$ is given by \Cref{alg:solve}. Recall the meaning of $\Try$ from \Cref{rem:try}
and $\textsc{Substitute}_f$ from \eqref{eq:subf}.


\smallskip
\main*
\begin{proof}
We claim that, given $(\db,q) \in \mathcal{C}$, $\textsc{Solve}(\db,q)$ runs in polynomial time, and 
\begin{equation}\label{eq:solve}
\textsc{Solve}(\db,q)=\true \iff (\db,q)\in \cert.
\end{equation}
By \Cref{lem:isone} and the definition of $\mathcal{C}$, we have
\begin{equation}\label{eq:c-up}
 |\initqueryvars{q}|\leq \cgs{q} \leq c.
\end{equation}
We first consider (A) correctness of the algorithm \textsc{Solve}, and then (B) its running time.

{\bf{(A) Correctness.}} 
 By \Cref{lem:prune,lem:dbsat}, every reduction involved in the algorithm is equivalence-preserving, except for   
 the substitution steps in lines 5 and 14. Thus, it suffices to focus on these variable substitutions.
 We start with the case where the condition of the if-statement holds, and then proceed to the case where it does not hold. 

{\bf Case: $\queryvars{q}\neq \emptyset$ and $\outvar{\igraph{q}}{\emp} = \emptyset$.}
By \Cref{lem:non-attacked}, 
\[(\db,q)\in \cert \iff  (\db,q_{x\mapsto a})\in \cert  \text{ for some }a \in \adom{\db}\]
Moreover, by \Cref{lem:dbsat}, 
either
\(
\dbsaturate(\db,q_{x\mapsto a})=\false,
\)
in which case
\(
(\db,q_{x\mapsto a})\notin\cert,
\)
or, defining 
\(
\db' \defeq \dbsaturate(\db,q_{{x}\mapsto {a}}),
\) 
we have
\[
(\db,q_{{x}\mapsto {a}})\in\cert
\iff
(\db',q_{{x}\mapsto {a}})\in\cert.
\]

Suppose that the algorithm continues from the reduced instance $(\db',q_{{x}\mapsto {a}})$.
By \Cref{lem:rootnotattacked} and the normality of $q$, it can be seen that the singleton set $\{x\}$ is substitutable.
Hence, by \Cref{lem:pruneconst}, $q_{{x}\mapsto {a}}$ remains in $\sjfuabcq$ and is saturated and normal, while $\db'$ is $q_{{x}\mapsto {a}}$-saturated, as required for the recursive call of \textsc{Solve}. 

Thus, it can be seen that the first part of the algorithm is correct.

{\bf Case: $\queryvars{q}\neq \emptyset$ and $\outvar{\igraph{q}}{\emp} \neq \emptyset$.}
By \Cref{lem:prune}, pruning is equivalence-preserving and, unless it returns $\true$ or $\false$, produces a saturated and normal query in $\sjfuabcq$ together with a database saturated for that query. 

It is possible that after pruning we have $\outvar{\igraph{q}}{\emp}=\emptyset$. This can happen if before pruning $\emp$ is only connected to non-guarded variables in $\igraph{q}$; in this case, these variables are simply pruned away and the algorithm continues the while-loop.

Suppose that after pruning $\outvar{\igraph{q}}{\emp}\neq \emptyset$. Then there exists a variable $y\in \queryvars{q}$ such that $\fd{\emp}{y}\in \igraph{q}$. In particular, there exists an atom  $R\in \modei{q}$ such that $\keyvars{R}=\emp$ and $y\in Y\defeq \notkeyvars{R}$.  
 It is clear that
\begin{align*}&(\db,q)\in \cert \\
\iff &  (\db^*,q_f)\in \cert \text{ for all } f\colon Y \to \adom{\db}\text{ such that } R_f\in \db,
\end{align*}
where, for each such $f$, we set $\db^*\defeq (\db\setminus R^\db)\cup\{R_f\}$.
Therefore, by \Cref{lem:dbsat},
either
\(
\dbsaturate(\db^*,q_{f})=\false,
\)
in which case
\(
(\db^*,q_{f})\notin\cert,
\)
or, defining  
\(
\db' \defeq \dbsaturate(\db^*,q_{f}),
\)
we have
\[
(\db^*,q_{f})\in\cert
\iff
(\db',q_{f})\in\cert.
\]

Suppose that the algorithm continues from the reduced instance $(\db',q_{f})$.
It holds that $Y\subseteq\outvar{\igraph{q}}{\emp} $, and
by \Cref{lem:prune}  we obtain that $\outvar{\igraph{q}}{\emp} \subseteq   \nkeyqueryvars{q}$. 
Thus, by $Y \subseteq   \nkeyqueryvars{q}$ we obtain
$\outvar{\SFD{q}}{Y} \subseteq Y$. Furthermore, $\outvar{\SFD{q}}{\emp} \subseteq Y$ since $q$ is normal, meaning that
 \Cref{it:sub2} of substitutability obtains for $Y$. As \Cref{it:sub1} and \Cref{it:iii} are clear, we conclude that $Y$ is substitutable.
Applied to $(\db^*,q)$ and $f$, \Cref{lem:pruneconst} entails that $q_f$ belongs to $\sjfuabcq$ and is saturated and normal, while $\db'$ is $q_f$-saturated.
Hence the recursive call of \textsc{Solve} satisfies the required invariants.

It can now be seen that the second part of the algorithm is correct.

{\bf Case: $\queryvars{q}= \emptyset$.} Then, $q$ is simply a set of facts. Since $\db$ is $q$-saturated, $q\subseteq \db$ and $\db$ is consistent over the schema of $q$. Hence $(\db,q)\in \cert$.

 It can be verified that between every recursive call there is at least one strict reduction.
 If $\queryvars{q}\neq \emptyset$ and the first branch is taken, then the first reduction involved in the sequel, i.e.,  $(\db,q)\to \textsc{Substitute}_f(\db,q)$ is strict. Suppose that $\queryvars{q}\neq \emptyset$ and the second branch is taken. If the pruning step is vacuous, i.e., $(\db,q)$ is already at a pruned state, then the condition of the if-statement is triggered, leading to a strict reduction; otherwise, a non-vacuous pruning step involves at least one strict reduction. Therefore, we may conclude that the computation terminates. 
 
 We conclude by these considerations that \eqref{eq:solve} holds. This concludes step (A).
  
 {\bf (B) Running time.} 
We begin by noting  the following property. A \emph{state} of the computation is a pair $(\db',q')\notin \{\true,\false\}$
arising during an execution of the algorithm.

\begin{claim}\label{claim:inv-pres}
For every state $(\db',q')$ of the computation, the query $q'$ is a saturated and normal query from $\sjfuabcq$, the database $\db'$ is $q'$-saturated, and the reduction from $(\db,q)$ to $(\db',q')$ is substitution-guarded. Consequently,
$
|\initqueryvars{q'}|\leq |\initqueryvars{q}|\leq c.
$
  \end{claim}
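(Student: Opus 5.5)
The claim is an invariant over the computation, so I will prove it by induction on the number of reduction steps performed along the branch leading from the initial input $(\db,q)$ to the state $(\db',q')$. For the base case, the identity reduction is trivially substitution-guarded, and $(\db,q)\in\mathcal{C}$ supplies saturation, normality, membership in $\sjfuabcq$ and $q$-saturation of $\db$. For the inductive step, every transition of \textsc{Solve} is one of three kinds: a pruning call, an existential substitution $\textsc{Substitute}_{x\mapsto a}$, or a universal substitution, which consists of replacing $R^{\db}$ by the single fact $R_f$ followed by $\textsc{Substitute}_f$. The standing invariants are transported by \Cref{lem:prune} for pruning, and by \Cref{lem:pruneconst} for the substitutions once substitutability of the relevant set has been checked. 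Those substitutability checks were already carried out in part (A): the singleton $\{x\}$ via \Cref{lem:rootnotattacked} and normality, and $Y=\notkeyvars{R}$ via \eqref{eq:setprune}, normality and \Cref{it:iii}. Strictly speaking, \Cref{lem:pruneconst} only asks $\db$ to be an arbitrary database, so applying it to $\db^*$ is legitimate even though $\db^*$ need not be saturated.

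For the guardedness part, the first step is to observe that \textsc{Prune} is guarded by \Cref{lem:prune}, and hence substitution-guarded by \Cref{prop:subisfaith}. The existential step $\textsc{Substitute}_{x\mapsto a}$ is substitution-guarded by \Cref{lem:pruneconst}. The universal step is a composition: first a database restriction $(\db,q)\mapsto(\db^*,q)$ with $\db^*\subseteq\db$, which is guarded by \Cref{prop:dbrestriction}, and then $\textsc{Substitute}_f$, which is substitution-guarded by \Cref{lem:pruneconst}. The whole chain is then substitution-guarded by \Cref{prop:subcompose}. This composition requires that newly introduced relation names be fresh. That holds because every relation symbol created during pruning (the primed symbols in $\leafeliminate$, $\atomextend$, $\atomeliminate$, and those introduced by $\dbsaturate$-free renamings) is chosen fresh and never reused along a branch.

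Finally, the bound follows from Item 6$^*$ of \Cref{def:subguard}: $\initqueryvars{q'}=\initqueryvars{q}\cap\queryvars{q'}\subseteq\initqueryvars{q}$, which gives $|\initqueryvars{q'}|\le|\initqueryvars{q}|$, and \eqref{eq:c-up} gives $|\initqueryvars{q}|\le c$.

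I expect the main obstacle to be the freshness proviso, together with the claim that Item 6$^*$ itself composes. \Cref{prop:subcompose} relies on \Cref{it:dummy}, namely $\queryvars{q'}\subseteq\queryvars{q}$, holding at every step. I will verify this explicitly for each pruning operator (they never introduce variables; $\atomextend$ and $\atomeliminate$ preserve $\queryvars{q}$) and for substitution (which only removes variables). With that in hand, the intersection form of 6$^*$ chains correctly, since $\initqueryvars{q}\cap\queryvars{q'}\cap\queryvars{q''}=\initqueryvars{q}\cap\queryvars{q''}$.
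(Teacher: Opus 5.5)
Your proposal is correct and follows the paper's own argument. You split the path into the steps of lines 5, 10 and 14, transport the invariants with \Cref{lem:prune} and \Cref{lem:pruneconst}, get substitution-guardedness from \Cref{prop:dbrestriction}, \Cref{prop:subisfaith} and \Cref{prop:subcompose}, and conclude with Item~6$^*$ and \eqref{eq:c-up}. Your explicit induction and your check that Item~6$^*$ chains via $\queryvars{q''}\subseteq\queryvars{q'}$ only unpack what the paper delegates to \Cref{prop:subcompose}; \Cref{it:dummy} is already part of guardedness, so it need not be re-verified for each operator.
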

  \begin{claimproof}
The reduction from $(\db,q)$ to $(\db',q')$ is composed of those in lines 5, 10, and 14. In line 14, the database is first restricted to $\db^*=(\db\setminus R^\db)\cup\{R_f\}$. Since $R_f\in\db$, we have $\db^*\subseteq\db$, so $(\db,q)\mapsto(\db^*,q)$ is guarded by \Cref{prop:dbrestriction}. The subsequent application of $\textsc{Substitute}_f$ to $(\db^*,q)$ is substitution-guarded by \Cref{lem:pruneconst} and produces a saturated database together with a saturated and normal query from $\sjfuabcq$. The same query and database properties hold after lines 5 and 10 by \Cref{lem:pruneconst,lem:prune}, respectively. Thus \Cref{prop:subisfaith,prop:subcompose} imply that the composition is substitution-guarded. The final claim follows from Item~6$^*$ of substitution-guardedness and \eqref{eq:c-up}.
    \end{claimproof}

 Let $N\defeq \size{\db}$ be the size of the initial input database $\db$.  
 Specifically, $\db$  is $N$-unguarded-bounded.
 In the following, we first consider (B1) the running time of each computation branch, and then (B2) the number of such computation branches. By establishing a polynomial upper bound for both, 
 it follows that the total running time of the algorithm is polynomial.
 
{\bf (B1) Branch running time.}  
Let us first consider the size of an arbitrary intermediate input. For this, let $(\db',q')$ be an arbitrary state of the algorithm. By \Cref{claim:inv-pres}, $q'$ is saturated and normal and $\db'$ is $q'$-saturated. Moreover, substitution-guardedness gives that $\db'$ is $N$-unguarded-bounded, $\nrqueryvars{q'} \subseteq \nrqueryvars{q}$, $\adom{\db'}\subseteq \adom{\db}$, and $\size{q'}\leq \size{q}$.
  Thus, by \Cref{lem:grounding} and \eqref{eq:c-up},
 we obtain, for each $R\in q'$, that
 \[|R^{\db'}| \leq  |\restrict{R^{\db'}}{V}| \cdot |\adom{\db'}|^{|\initqueryvars{q'}|} 
   \leq N \cdot |\adom{\db}|^{|\initqueryvars{q}|}
 \leq N^{c+1},\]
  where $V\defeq \atomvars{R}\cap \rqueryvars{q'}$. Hence, 
  \begin{equation}\label{eq:dbupper}
  \size{\db'} \leq \size{q'} \cdot N^{c+1} \leq \size{q} \cdot N^{c+1} \leq \size{q} \cdot \size{\db}^{c+1}. 
  \end{equation}
  Consequently, there exists a polynomial $p$ such that $\size{\db'} + \size{q'} \leq p(\size{\db},\size{q})$, for  any state $(\db',q')$. 
  By \Cref{lem:prune,lem:removecfd3}, every reduction in the algorithm runs in polynomial time in its local input $(\db',q')$ (with the same fixed constant $c$ applying to every invocation of $\textsc{Prune}$ by \Cref{claim:inv-pres}).
  Hence, every reduction runs in polynomial time in the initial input $(\db,q)$.
  Since the recursion depth is bounded by $\size{q}$, following the argument in step (A), we may conclude that each computation branch runs in polynomial time in the input $(\db,q)$. This concludes step (B1).
  
  {\bf (B2) Number of branches.}
  Fix $\db$ as the initial  database, and $q$ as the initial  query. The computation can be viewed to maintain a partial function $f \colon \queryvars{q}\to \adom{\db}$, corresponding to the substitution of constants for variables. 
  Let \(\subs\) denote the set of all such substitutions \(f\), including the empty function, that arise during the computation. Any two surviving states with the same substitution domain have the same query structure, up to the substituted constants: database saturation may terminate a branch with \(\false\), but otherwise leaves the query unchanged, while the subsequent transformations depend only on its structure. Hence the next substitution domain is the same in all such states, and since substitutions only extend their domains, the sets \(\Dom(f)\), for \(f\in\subs\), form a chain under inclusion.
   In particular,  writing $\dom{f}$ for the domain of a function $f$, for any $f,g \in \subs$ it holds that $\dom{f}\subseteq\dom{g}$ or $\dom{g}\subseteq\dom{f}$.
   For every non-empty $f\in \subs$, we write $\Pred(f)$ for the \emph{immediate predecessor} of $f$, defined as the unique inclusion-maximal $f'\in \subs$ such that $f'\subsetneq f$. Moreover, $f'$ is a \emph{predecessor} of $f$ if $\Pred^i(f)=f'$, for some $i\geq 0$, where $\Pred^{i+1}(f)\defeq \Pred(\Pred^{i}(f))$ and $\Pred^0$ is the identity.


Let $g\in \subs$. A state $(\db',q')$ is called \emph{$g$-consistent}
if it arises from the initial input $(\db,q)$ via the substitutions
specified by $g$. Note that $q'$ may be distinct from the simple substitution $q_{g}$ due to the changes
made to the atoms during the pruning step.
We say that $g$ is \emph{saturated} if the computation reaches a state $(\db',q')$ where $q'$ is $g$-consistent and $\db'$ is $q'$-saturated.



  \begin{claim}\label{claim:isguarded}
  For every $f\in \subs$, $\dom{f} \subseteq \nrqueryvars{q}$.
  \end{claim}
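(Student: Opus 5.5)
We argue by induction along the chain of substitutions in $\subs$: for a non-empty $f\in\subs$ with $f'\defeq\Pred(f)$, we assume $\dom{f'}\subseteq\nrqueryvars{q}$ and show $\dom{f}\setminus\dom{f'}\subseteq\nrqueryvars{q}$. The base case $f=\emptyset$ holds trivially. The new variables $\dom{f}\setminus\dom{f'}$ are introduced at a single substitution step (line 5 or line 14 of \textsc{Solve}), applied to some $f'$-consistent state $(\db',q')$. By \Cref{claim:inv-pres}, the reduction from $(\db,q)$ to $(\db',q')$ is substitution-guarded. Iterating Item~5$^*$ along this composition therefore gives $\nrqueryvars{q'}=\nrqueryvars{q}\cap\queryvars{q'}\subseteq\nrqueryvars{q}$. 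It thus suffices to show that the variables substituted at that step belong to $\nrqueryvars{q'}$, where $q'$ denotes the query at the moment of substitution.

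\textbf{Existential step (line 5).} Here a single variable $x$ is substituted. We have $\outvar{\igraph{q'}}{\emp}=\emptyset$, and $x$ is chosen as in \Cref{lem:rootnotattacked}, so $x\in\rootqueryvars{q'}$. By \Cref{lem:rootguarded}, $\rootqueryvars{q'}\subseteq\nrqueryvars{q'}$, hence $x\in\nrqueryvars{q'}\subseteq\nrqueryvars{q}$.

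\textbf{Universal step (line 14).} Here the substituted set is $Y=\notkeyvars{R}$, where $\keyvars{R}=\emp$, and this substitution happens immediately after \textsc{Prune}. By \eqref{eq:setprune} of \Cref{lem:prune}, applied to the post-pruning state $(\db'',q'')$, we have $Y\subseteq\outvar{\igraph{q''}}{\emp}\subseteq\nrqueryvars{q''}$. Normality of $q''$ (\Cref{it:normal3}) guarantees that every edge out of $\emp$ lies in $\WFD{q''}\subseteq\igraph{q''}$. Every variable of $Y$ is the target of such an edge, so the first inclusion holds. We also need that the step from $q'$ to $q''$, followed by the restriction to $\db^*$, is covered by \Cref{claim:inv-pres}. \textsc{Prune} is guarded, so Item~\ref{it:inv:guarded} gives equality of guarded sets, and the restriction to $\db^*$ is guarded by \Cref{prop:dbrestriction}. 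Together with the previous paragraph, this yields $Y\subseteq\nrqueryvars{q}$.

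\textbf{Main obstacle.} The hard part is bookkeeping rather than mathematics: we must make sure that the domains of $f\in\subs$ consist of original variable names. Pruning only drops or merges atoms and never renames variables (Item~\ref{it:dummy} gives $\queryvars{q'}\subseteq\queryvars{q}$). Normalization, which identifies variables, is applied only before \textsc{Solve}. Consequently $\nrqueryvars{q'}\subseteq\nrqueryvars{q}$ is meaningful as an inclusion between sets of the original variables, and the induction closes.
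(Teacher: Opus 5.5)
Your proof is correct and follows the paper's argument. For each substituted variable you locate the step (line 5 or line 14) at which it is replaced. You then place it in $\nrqueryvars{\cdot}$ of the current query, via \Cref{lem:rootguarded} in the first case and via normality plus \eqref{eq:setprune} in the second, and transfer back to $q$ with Item~5$^*$ of substitution-guardedness from \Cref{claim:inv-pres}.

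The induction framing is harmless but unnecessary, since you never use the inductive hypothesis. The paper simply picks, for each $x\in\dom{f}$, the maximal predecessor of $f$ whose domain does not contain $x$.
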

  \begin{claimproof}
  Let $x\in \dom{f}$. Let $g$ be the $\subseteq$-maximal predecessor of $f$ such that $x\notin \dom{g}$.
  Then, the substitution $x\mapsto f(x)$ arises when line 5 or 14 is applied to some $g$-consistent state $(\db',q')$. 
   By  \Cref{claim:inv-pres}, $q'$ is saturated and normal. 

  Consider first the case where $x$ is replaced with a constant according to line 5 of the algorithm.
  Then $x\in \rootqueryvars{q'}$, and
 by \Cref{lem:rootguarded} we have $\rootqueryvars{q'}\subseteq \nrqueryvars{q'}$. By substitution-guardedness, 
 $\nrqueryvars{q'} \subseteq \nrqueryvars{q}$. In particular, $\rootqueryvars{q'}\subseteq\nrqueryvars{q} $, hence  $x \in \nrqueryvars{q}$. 
 
 Consider then the case where $x$ is replaced with a constant according to line 14 of the algorithm.
  This substitution is relative to some $F\in q'$ such that $\keyvars{F}=\emp$ and $x\in   \notkeyvars{F}$.
 Since $q'$ is normal, $\fd{\emp}{x}\in \WFD{q'}$. Consequently, as $q'$ is pruned, we obtain $x\in  \outvar{\igraph{q'}}{\emp} \subseteq  \nrqueryvars{q'}$. By substitution-guardedness, this leads to $x\in   \nrqueryvars{q}$.
  \end{claimproof}

  A function $f\in \subs$ is called \emph{maximal} if it is $\subseteq$-maximal among all functions of $\subs$; it is  \emph{saturated-maximal} if it is $\subseteq$-maximal among all saturated functions of $\subs$. 
  We denote by $\maxsubs$ the set of all maximal functions, and by $\satmaxsubs$ the set of all saturated-maximal functions.
  
    Each function in $\maxsubs$ corresponds to one computation branch. Our aim is to construct a polynomial upper bound for $|\maxsubs|$. 
  If $f\in \maxsubs$ is not saturated, then $\Pred(f)$ is saturated, extending to a function $f'\in \satmaxsubs$ (where possibly $\Pred(f)=f'$).
  Therefore, we can define a function $G \colon \maxsubs\to \satmaxsubs$ such that 
  $\Pred(f) \subseteq G(f)$. We claim that $|\maxsubs|$ is polynomial in $(\db,q)$ if $|\satmaxsubs|$ is polynomial in $(\db,q)$.
  \begin{claim}\label{claim:Mpoly}
  Let $(\db,q)$ be the initial input. 
  There exists a polynomial $p$ such that
\begin{equation}
|\maxsubs|
   \le |\satmaxsubs|\cdot p(\size{\db},\size{q}).
\end{equation}   \end{claim}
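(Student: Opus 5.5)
The plan is to bound $|\maxsubs|$ by the fibers of the map $G\colon \maxsubs\to\satmaxsubs$ defined just before the claim. We fix $g\in\satmaxsubs$ and bound the number of $f\in\maxsubs$ with $G(f)=g$ by a polynomial $p(\size{\db},\size{q})$; the inequality then follows by summing over $g$. Every such $f$ satisfies $\Pred(f)\subseteq g$. We split into two cases: either $f$ is itself saturated, so $f=g$ by maximality, contributing one; or $f$ is not saturated, in which case $\Pred(f)$ is saturated and is a predecessor of $g$ (possibly $g$ itself), and $f$ is obtained from $\Pred(f)$ by a single substitution step (line 5 or line 14) that ends with $\dbsaturate$ returning $\false$.

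The key step is to count one-step extensions of a fixed saturated $h\in\subs$. Since the domains in $\subs$ form a chain, $\dom{f}$ is determined by $\dom{h}$, so $f$ is determined by its values on $\dom{f}\setminus\dom{h}$.

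\begin{itemize}
\item In a line-5 step the new domain is a single variable, giving at most $|\adom{\db}|\le\size{\db}$ choices.
\item In a line-14 step the new values are read off a fact $R_f$ of the current database $\db'$ in a $h$-consistent state. So the number of choices is at most $|R^{\db'}|$, which by \eqref{eq:dbupper}, obtained via \Cref{lem:grounding} and \Cref{claim:inv-pres}, is at most $\size{q}\cdot\size{\db}^{c+1}$.
\end{itemize}

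A subtlety is that different $h$-consistent states could exist. As argued for the chain property, though, surviving states with the same substitution share the query structure, and the database is determined by the deterministic reductions, so there is essentially one state per $h$.

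It then remains to bound the number of relevant $h$, namely the predecessors of $g$ (together with $g$). This number is at most the chain length, which is at most $|\dom{g}|+1\le\size{q}+1$ because each step strictly enlarges the domain. Combining the bounds gives
\[
|G^{-1}(g)|\le 1+(\size{q}+1)\cdot\size{q}\cdot\size{\db}^{c+1},
\]
which is a polynomial, as required.

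The main obstacle is making rigorous that the branching at a fixed saturated $h$ is governed by a single state, whose database is polynomially bounded via $N$-unguarded-boundedness, rather than by exponentially many distinct states. I would handle this by arguing determinism: the algorithm's choices ($\prec$-least atoms, deterministic reductions) make the state reached after applying substitution $h$ a function of $h$.
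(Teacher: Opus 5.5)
Your proposal is correct and is essentially the paper's argument: it bounds each fiber $G^{-1}[\{g\}]$ by the number of predecessors of $g$ (at most about $\size{q}$) times the number of one-step extensions of each predecessor, and the latter is controlled by the database size bound \eqref{eq:dbupper}. Your explicit treatment of determinism (one branching state per substitution) and of the saturated case spells out what the paper leaves implicit; note only that ``$f=g$'' presupposes choosing $G(f)=f$ for saturated $f$, which is harmless because any other saturated $f$ in the fiber is a one-step extension of a predecessor of $g$ and is already counted.
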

   \begin{claimproof}
     It suffices to construct $p$ such that  $|G^{-1}[\{f\}]| \leq p(\size{\db},\size{q})$, for each $f \in \satmaxsubs$.
   Fix $f\in\satmaxsubs$.
The function $f$ has at most $\size{q}$ predecessors in $\subs$.
Moreover, each predecessor $f'$ has at most $\size{\db'}$
extensions $f^*\in\maxsubs$ such that $G(f^*)=f$, for some intermediate database $\db'$ satisfying \eqref{eq:dbupper}.
 The claim follows from this.
   \end{claimproof}

Next, we prove that $|\satmaxsubs|$ is polynomial in $(\db,q)$. 
We begin with the following claim. 

      \begin{claim}\label{claim:contains}
          Let $(\db,q)$ be the initial input. 
      Let $f\in \subs$, and let $(\db',q')$ be an $f$-consistent state.  Let $F\defeq S(\underline{x},y,\vec{c})\in \modec{q}$
      be such that $x\in \initqueryvars{q}$ and $\fd{x}{y}\in \SFD{q}$.
      Then $\modec{(q')}$ contains one of $F, S(\underline{x},f(y),\vec{c}), S(\underline{f(x)},f(y),\vec{c})$.
  \end{claim}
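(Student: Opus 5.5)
We prove the claim by induction along the chain of reductions that leads from the initial instance $(\db,q)$ to the $f$-consistent state $(\db',q')$. By \Cref{claim:inv-pres}, this path is a composition of reductions from lines 5, 10 and 14 of \textsc{Solve}. The pruning reductions of line 10 are guarded (\Cref{lem:prune}). The database restriction in line 14 is guarded (\Cref{prop:dbrestriction}). The substitutions $\textsc{Substitute}_g$ are substitution-guarded (\Cref{lem:pruneconst}). The invariant we carry along the path is the following: at every intermediate state $(\db'',q'')$ reached via a substitution $g\subseteq f$, the set $\modec{(q'')}$ contains one of $F$, $S(\underline{x},g(y),\vec{c})$, $S(\underline{g(x)},g(y),\vec{c})$, and the atom it contains has key $x\in\initqueryvars{q''}$ whenever $x\notin\dom{g}$.

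The base case $g=\emptyset$ is the hypothesis $F\in\modec{q}$ together with $x\in\initqueryvars{q}$.

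For the inductive step we split by the type of reduction, and by which of the three forms the current atom has.

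\textbf{Guarded steps.} If the current atom is $F$, we apply \Cref{it:inv:const} of \Cref{def:faith}: the atom has key in $\initqueryvars{q''}$ and lies in $\modec{(q'')}$, so it stays in $\modec{}$ of the next query. \Cref{it:inv:guarding} keeps the guard set unchanged, so the invariant that the key is a guard survives. If the current atom has the form $S(\underline{x},g(y),\vec{c})$ with $x$ still a guard, the same item applies. If it has the form $S(\underline{g(x)},g(y),\vec{c})$, then $\atomvars{}=\emptyset$ and we apply the other disjunct of \Cref{it:inv:const}.

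\textbf{Substitution steps.} Here we use Item~4$^*$ of \Cref{def:subguard}, whose three bullets match exactly our three forms $R(\underline{x},y,\vec c)$, $R(\underline{x},\vec c)$ and ground atoms.
\begin{itemize}
\item For an atom with a variable non-key $y$, Item~4$^*$ yields one of $F$, $R(\underline{x},b,\vec c)$, $R(\underline{a},b,\vec c)$. The proof of \Cref{lem:pruneconst} shows that $b=f(y)$ and $a=f(x)$: the constants are those assigned by the substitution. The proof also rules out the case where $x$ is substituted while $y$ is not, by \Cref{it:sub2} of \Cref{def:sub}.
\item Item~6$^*$ gives $\initqueryvars{q'''}=\initqueryvars{q''}\cap\queryvars{q'''}$, so an unsubstituted $x$ remains a guard.
\item In the middle form, if $x$ is later substituted, the third bullet of Item~4$^*$ gives $S(\underline{f(x)},f(y),\vec c)$.
\end{itemize}

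The main obstacle is bookkeeping the relation names. Pruning steps may rename atoms, for example through $\atomextend$ or $\atomeliminate$, which replace $R$ by a fresh $R'$. We must therefore check that an atom in $\modec{}$ whose key is a guard, or which is ground, is never such a renamed atom. \Cref{it:inv:const} asserts membership of the same $F$, so guardedness already guarantees this. A second point is that the claim uses $f(y)$ and $f(x)$ rather than arbitrary constants, so we must tie the constants in Item~4$^*$ to the actual substitution. We would do this by tracing the proof of \Cref{lem:pruneconst}, where $q'=q_f$ literally.

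Finally, composition is handled as in \Cref{prop:subcompose}, which gives the claim for $(\db',q')$.
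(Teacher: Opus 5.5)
Your proposal is correct and follows the paper's proof, with the composition unrolled. The paper applies \Cref{claim:inv-pres} (which rests on \Cref{prop:subisfaith,prop:subcompose}) to treat the whole path from $(\db,q)$ to $(\db',q')$ as one substitution-guarded reduction, reads off the three possible forms from Item~4$^*$ in a single step, and fixes the constants as $f(y)$, $f(x)$ by $f$-consistency; your step-by-step induction uses the same ingredients (Items~4 and~6 of guardedness for pruning, Items~4$^*$ and~6$^*$ for substitution) and makes the constant identification explicit through the literal query $q_g$ produced by each $\textsc{Substitute}_g$.
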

\begin{claimproof}
The query $\modec{(q')}$ contains an atom of the form  $S(\underline{a},b,\vec{c})$, $S(\underline{x},b,\vec{c})$, or $ S(\underline{x},y,\vec{c})$.
This is an immediate consequence of \Cref{claim:inv-pres} and Item 4$^*$ 
 of substitution-guardedness (\Cref{def:subguard}). Since $(\db',q')$ is an $f$-consistent state, it must be the case that $b=f(y)$ and $a=f(x)$.
\end{claimproof}

Let $f\in \satmaxsubs$. By definition, there exists a state $(\db',q')$ where $q'$ is $f$-consistent and $\db'$ is $q'$-saturated.
For a variable $x\in \queryvars{q'}$, choose $G\in q'$ such that $x\in \atomvars{G} $.
Then, since $\db'$ is $q'$-saturated (in particular, it satisfies non-emptiness and pairwise consistency over $q'$), we can choose  a valuation $\theta$ over $\atomvars{G}$ such that $\theta(G)\in \db'$. 
Let us extend $f$ by setting $x \mapsto \theta(x)$. Let $f^*$ denote the extension of $f$ obtained by repeating this procedure for any  $x\in \queryvars{q'}$.
As $\adom{\db'} \subseteq \adom{\db}$ by substitution-guardedness, we note that $f^*$ remains a partial function $\queryvars{q} $ to $\adom{\db}$.
Moreover, it can be seen that $\initqueryvars{q}\subseteq \dom{f^*}$; informally, at substitution any guard over $q$ is either itself substituted or remains a guard over $q'$ (by Item 6$^*$ of substitution-guardedness), and at pruning any guard remains a guard (by \Cref{it:inv:guarding} of guardedness).
 
\begin{claim}
          Let $(\db,q)$ be the initial input. 
Let $f,g\in \satmaxsubs$. Then $\restrict{f^*}{\initqueryvars{q}} =\restrict{g^*}{\initqueryvars{q}}$ implies $f=g$.
\end{claim}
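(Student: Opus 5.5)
I will prove the contrapositive: assuming $f\neq g$ for $f,g\in\satmaxsubs$, I will show $\restrict{f^*}{\initqueryvars{q}}\neq\restrict{g^*}{\initqueryvars{q}}$. Since the domains of functions in $\subs$ form a chain, I may assume without loss of generality that $\dom{f}\subseteq\dom{g}$. I then split into two cases. The case $f\subsetneq g$ is impossible: $g$ is saturated and strictly extends $f$, which contradicts the saturated-maximality of $f$. So $f$ and $g$ must disagree somewhere on $\dom{f}$. I choose the $\subseteq$-least common ancestor $h$ in the predecessor tree, namely the largest predecessor of $f$ that is also a predecessor of $g$. Let $f_1,g_1$ be the immediate successors of $h$ along the two chains. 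These arise from the same $h$-consistent state structure, because states with the same substitution domain have the same query structure. Hence $f_1$ and $g_1$ have the same domain $D$ and differ at some variable $x\in D$.

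Next I would trace how $x$ got substituted, as in \Cref{claim:isguarded}. By that claim, $x\in\nrqueryvars{q}$, so either $x\in\initqueryvars{q}$ or there is a guard $z\in\initqueryvars{q}$ with $\fd{z}{x}\in\SFD{q}$.

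In the first subcase, $x\in\initqueryvars{q}$. Since $f^*$ extends $f\supseteq f_1$ and $g^*$ extends $g_1$, we get $f^*(x)=f_1(x)\neq g_1(x)=g^*(x)$, which is what we want.

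In the second subcase, I take the $\prec$-least atom $F=S(\underline{z},x,\vec{c})\in\modec{q}$ generating $\fd{z}{x}$. Normality and binary-saturation ensure that such an atom exists and has this shape. I then apply \Cref{claim:contains} to the saturated states $(\db_f,q_f')$ and $(\db_g,q_g')$ witnessing saturation of $f$ and $g$. The query $\modec{(q_f')}$ contains one of $F$, $S(\underline{z},f(x),\vec{c})$ or $S(\underline{f(z)},f(x),\vec{c})$; since $x\in\dom{f}$, only the last two are possible. Similarly for $g$. Because $\db_f$ is $q_f'$-saturated (non-emptiness and pairwise consistency), it contains a fact $S(\underline{a},f(x),\vec{c})$ with $a=f^*(z)$. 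If $z$ was substituted, this holds by definition; otherwise $a$ is the value chosen when building $f^*$, and I fix that choice so that $f^*(z)$ is read off this fact. The analogous statement holds for $g$, giving a fact $S(\underline{f^*(z)},f(x),\vec{c})$ in $\db_f$ and a fact $S(\underline{g^*(z)},g(x),\vec{c})$ in $\db_g$. Both databases are subsets of the original $S^{\db}$ up to relation restriction, by \Cref{it:sizeinv} and the fact that $S$ is never renamed (it lies in $\modec{}$ with a guard key, so it is preserved by Item 4$^*$). Since $S$ is consistent in the saturated $\db$, the key determines the value. So if $f^*(z)=g^*(z)$, then $f(x)=g(x)$, a contradiction. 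Therefore $f^*(z)\neq g^*(z)$ with $z\in\initqueryvars{q}$, as required.

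The main obstacle is this last step. It needs the relation $S$ to survive unrenamed into both saturated states, with key column $z$ and value column $x$. It also needs consistency of $S$ in the original database, since functional consistency of $S^{\db}$ holds because $\fd{z}{x}\in\SFD{q}$. Finally, the choice of values in building $f^*$ must be pinned down consistently. If $f^*(z)$ is chosen from a different atom containing $z$, I would argue via guardedness of $\db_f$ that all $S$-facts agree: any other atom mentioning $z$ is joinable with an $S$-fact through pairwise consistency, and functional consistency then forces the same value. This lets the definition of $f^*$ be made canonical.
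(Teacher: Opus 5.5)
Your proposal is correct and follows essentially the paper's argument. You pick a variable $x$ with $f(x)\neq g(x)$, pass to a guard $z$ via the consistent atom $S(\underline{z},x,\vec{c})\in\modec{q}$, use \Cref{claim:contains} and pairwise consistency of the saturated states to place $S(\underline{f^*(z)},f(x),\vec{c})$ and $S(\underline{g^*(z)},g(x),\vec{c})$ inside $S^{\db}$, and conclude by functional consistency of $\db$. The common-ancestor detour is superfluous, since $\dom{f}\subseteq\dom{g}$ and $f\not\subseteq g$ already give such an $x$. No canonical choice of $f^*$ is needed either: pairwise consistency with whatever atom defined $f^*(z)$ directly yields the required $S$-fact, whose non-key value is the constant $f(x)$. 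This also lets you avoid the paper's case split on whether $z$ was substituted.
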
  
\begin{claimproof}
Suppose that $f\neq g$. 
By symmetry, we may assume that $\dom{f}\subseteq \dom{g}$. Since both functions are saturated-maximal, it holds that
$f\not\subseteq g$. In particular, there exists $y\in \dom{f}$ such that $f(y)\neq g(y)$. If $y\in \initqueryvars{q}$, we are done. 
Thus, suppose $y\notin \initqueryvars{q}$.

By \Cref{claim:isguarded}, it holds that $y\in \nrqueryvars{q}$. Then,  
there exists an edge $\fd{x}{y}\in \SFD{q}$ with $x\in \initqueryvars{q}$.
Since $x\neq y$ and $q$ is normal, this is witnessed by an 
 atom $F=S(\underline{x},y,\vec{c})\in \modec{q}$, where $\vec{c}$ is a sequence of constants.

Consider the state $(\db^f,q^f)$ such that $q^f$ is $f$-consistent and $\db^f$ is $q^f$-saturated.
By \Cref{claim:contains}, $q^f$ contains one of $F$, $S(\underline{x},f(y),\vec{c})$, $S(\underline{f(x)},f(y),\vec{c})$.
Since $y\in \dom{f}$, it follows that $y\notin \queryvars{q^f}$, and consequently it is the case that 
$q^f$ contains either $S(\underline{x},f(y),\vec{c})$ or $S(\underline{f(x)},f(y),\vec{c})$.
Consider also the state $(\db^g,q^g)$ such that $q^g$ is $g$-consistent and $\db^g$ is $q^g$-saturated. 
Similarly, we obtain that $q^g$ contains either $S(\underline{x},g(y),\vec{c})$ or $S(\underline{g(x)},g(y),\vec{c})$.

Assume first that $q^f$ contains $S(\underline{f(x)},f(y),\vec{c})$. Since $\dom{f}\subseteq \dom{g}$, we obtain that $q^g$
contains $S(\underline{g(x)},g(y),\vec{c})$.
Since $\db^f$ is $q^f$-saturated and $\db^g$ is $q^g$-saturated, we have that $S(\underline{f(x)},f(y),\vec{c}) \in \db^f $
and $ S(\underline{g(x)},g(y),\vec{c})\in \db^g$.
By \Cref{it:sizeinv} of substitution-guardedness, we have $\restrict{\db^f}{S}\subseteq \restrict{\db}{S}$  and $\restrict{\db^g}{S}\subseteq \restrict{\db}{S}$.
As $\db$ is $q$-saturated and $\fd{x}{y}\in \SFD{q}$, it follows from $f(y)\neq g(y)$ that $f(x)\neq g(x)$.
Specifically, this means that $f^*(x)\neq g^*(x)$.

Assume then that $q^f$ contains $S(\underline{x},f(y),\vec{c})$. 
By construction, there exists an atom $G\in q^f$
and a valuation $\theta$ over $\atomvars{G}$ such that $\theta(G)\in \db^f$ and $\theta(x)=f^*(x)$.
Since $\db^f$ is $q^f$-saturated, we can extend $\theta$ to a valuation $\theta^*$
such that $S(\underline{\theta^*(x)},f(y),\vec{c})\in \db^f$. In particular, this means that $S(\underline{f^*(x)},f(y),\vec{c}) \in \db^f$.
Furthermore, $q_g$ contains $S(\underline{x},g(y),\vec{c})$ or $S(\underline{g(x)},g(y),\vec{c})$, 
and analogously we obtain, in both cases, that $S(\underline{g^*(x)},g(y),\vec{c}) \in \db^g$. Then,
as in the previous paragraph, we obtain from $f(y)\neq g(y)$ that $f^*(x)\neq g^*(x)$.

Hence we have established that $\restrict{f^*}{\initqueryvars{q}} \neq \restrict{g^*}{\initqueryvars{q}}$.
\end{claimproof}

By the previous claim and \Cref{lem:isone}, we now obtain the following sequence of inequalities:
  \[
  |\satmaxsubs|\leq |\adom{\db}|^{|\initqueryvars{q}|} \leq   |\adom{\db}|^{\cgs{q}}\leq   |\adom{\db}|^{c}.
  \]
  This, together with \Cref{claim:Mpoly}, establishes the existence of a polynomial $p'$ such that 
  \[|\maxsubs|
   \le  p'(\size{\db},\size{q}),\]
concluding thereby step (B2).

The combination of (B1) and (B2) proves that $\textsc{Solve}(\db,q)$ runs in polynomial time in the input, concluding step (B).
\end{proof}

\begin{algorithm}[t]
\caption{$\textsc{Solve}(\db,q)$ \label{alg:solve}}
\Input{A saturated database $\db$ and a saturated and normal query $q\in\sjfuabcq$.}
\Output{\textnormal{true} iff $(\db,q)\in \cert$.}


\While{$\queryvars{q}\neq \emptyset$}{
\If{$\outvar{\igraph{q}}{\emp} = \emptyset$}{
    choose the $\prec$-least non-attacked variable $x\in\rootqueryvars{q}$ with $\outvar{\SFD{q}}{x}= \{x\}$\;
    \tcp{\Cref{lem:rootnotattacked}; $\{x\}$ is substitutable, case \ref{it:ii}}
    \ForEach{$f\colon \{x\} \to \adom{\db}$}{
        \(s\gets\textsc{Substitute}_f(\db,q)\)\;
    \tcp{\Cref{lem:pruneconst}} 
    \If{
    $s\neq \false$ and $\textsc{Solve}(s)=\true$}{
        \Return{$\true$}\;
}
    }
    \Return{$\false$}\;
        \tcp{\Cref{lem:non-attacked}}
}

\Else{
$(\db,q) \gets \Try(\textsc{Prune}(\db,q))$\;
\tcp{\Cref{lem:prune}
}
	\If{$\outvar{\igraph{q}}{\emp}\neq \emptyset$}{
		  Select $\prec$-least $R\in q$ with $\keyvars{R}=\emp$ and $\notkeyvars{R}\neq \emptyset$\;
	  \tcp{Pruning implies 
	  that $\notkeyvars{R}$ is substitutable, case \ref{it:iii}
}
    \ForEach{$f\colon \notkeyvars{R} \to  \adom{\db}$ such that
$R_{f}\in\db$}{
    $s \gets \textsc{Substitute}_{f}((\db\setminus R^\db)\cup\{R_f\},q)$\;
\tcp{\Cref{lem:pruneconst}}
    \If{
    $s=\false$ or $\textsc{Solve}(s)=\false$}{
        \Return{$\false$}\;
    }
}
\Return{$\true$}\;
    }
}
}
\Return $\true$\; 
\tcp{Saturation of $\db$ guarantees that $q \subseteq \db$ and $\db$ is consistent.}
\end{algorithm}

\section{Hardness Proofs with Unary Primary Keys}\label{sect:apphard}
We here provide the proofs for the lower-bound results stated in \Cref{sect:mainstatement}.

\smallskip
 \hard*
\begin{proof}
The upper bound is simple and well-known. We prove the lower bound. 
We use the following well-known $\Pi^{\sf P}_2$-complete problem.
An instance of $\aesat$ consists of a quantified Boolean formula
\[
\psi = \forall x_1\dots \forall x_m\exists y_{1}\dots\exists y_{n}\varphi,
\]
where $\varphi = C_1 \land \dots \land C_\ell$ is a 3-CNF formula over the variables
$x_1,\ldots,x_{m},y_1,\dots ,y_n$.
The question is whether the formula is true.
It is well known that $\aesat$ is $\Pi^{\sf P}_2$-complete.
For a clause $C_i$, let $C_{i,j}$ denote its $j$th literal, and for a literal $l \in \{x, \lnot x\}$, let $v(l)=x$ denote the underlying variable.
Given $(b_1,b_2,b_3)\in\{0,1\}^3$, we write
$(b_1,b_2,b_3)\models C_i$
if assigning $b_j$ to the variable $v(C_{i,j})$ satisfies $C_i$.

We define
\begin{equation}\label{eq:reduction}
q\defeq \{R_i(\underline{c},x_i) \mid i\in [m]\} \cup \{S(\underline{z_i},v(C_{i,1}),v(C_{i,2}),v(C_{i,3})) \mid S\in \{T_i,U_{i}\}, i\in [\ell]\}
\end{equation}
and
\begin{align*}
\db \defeq &\{R_i(\underline{c},b) \mid i\in [m], b \in \{0,1\}\} \\
&\cup \{S(\underline{(b_1,b_2,b_3)},b_1,b_2,b_3) \mid S\in \{T_i,U_{i}\}, i\in [\ell], (b_1,b_2,b_3)\models C_i\},
\end{align*}
where $c$ is a constant.

Clearly, $(\db,q)$ can be constructed from $\psi$ in polynomial time. It is straightforward to verify that $q$ belongs to $\sjfuabcq$. In particular, 
the atoms $T_i,U_i$ do not attack any atom, and the atoms $R_i$ do not attack any atoms $R_j$ where $j\neq i$.
Furthermore, we observe that the relations $T^\db_{i},U_i^\db$ are consistent, while the relations $R^\db_i$ are inconsistent.

We claim that $\psi$ is true if and only if $(\db,q)\in \cqa{\sjfuabcq}$.

($\Rightarrow$) Let $\rep$ be a repair of $\db$. Let $\theta$ be the valuation over $x_1, \dots ,x_m$ such that
$R_i(\underline{c},\theta(x_i))\in \rep$, for $i\in [m]$. Since $\psi$ is true, we can extend $\theta$ to a valuation $\theta'$ over
$x_1,\ldots,x_{m}$, $y_1, \dots ,y_n$ such that $\theta'\models \varphi$. In particular, 
$\theta'(v(C_{i,1}),v(C_{i,2}),v(C_{i,3}))\models C_i$ for each clause $C_i$. Hence, 
\[S(\underline{\theta'(v(C_{i,1}),v(C_{i,2}),v(C_{i,3}))},\theta'(v(C_{i,1})),\theta'(v(C_{i,2})),\theta'(v(C_{i,3})))\in \db,
\]
 for $S\in \{T_{i},U_{i}\}$. Extend $\theta'$ by setting $\theta'(z_i)=(\theta'(v(C_{i,1})),\theta'(v(C_{i,2})),\theta'(v(C_{i,3})))$ for each $i\in[\ell]$. By consistency of $T^\db_{i},U_i^\db$, it follows that $\theta'(q)\subseteq \rep$.

($\Leftarrow$) Let $\theta$ be an arbitrary valuation over $x_1, \dots ,x_m$. Choose a repair $\rep$ such that $R_i(\underline{c},\theta(x_i))\in \rep$, for $i\in [m]$. Assuming $(\db,q)\in \cqa{\sjfuabcq}$, there exists a valuation $\xi$ over $\queryvars{q}$ such that $\xi(q)\subseteq \rep$. Clearly, $\xi$ must agree with $\theta$ on the variables $x_1, \dots ,x_m$, so it is an extension of $\theta$ to the variables $y_{1},\dots ,y_{n}$.  Now, for $S\in \{T_{i},U_{i}\}$, it holds that 
\[
S(\underline{\xi(z_i)},\xi(v(C_{i,1})),\xi(v(C_{i,2})),\xi(v(C_{i,3})))\in \rep,
\]
 which entails $\xi(v(C_{i,1}),v(C_{i,2}),v(C_{i,3}))\models C_i$. Thus $\xi\models \varphi$, hence $\psi$ is true.
\end{proof}
Note that the above reduction does not leverage alternation implicit in repair semantics, beyond the second level. In the following parameterized perspective, we make better use of this phenomenon.

\paragraph*{Parameterized perspective}
We briefly recall the parameterized complexity notions used below. 
A parameterized problem belongs to (uniform) $\mathsf{XP}$ if there exists an algorithm that decides instances of size $n$ and parameter $k$ in time $O(n^{f(k)})$ for some computable function $f$.
The classes $\mathrm{W}[t]$ form the hierarchy
\[
\mathsf{FPT}\subseteq\mathrm{W}[1]\subseteq\mathrm{W}[2]\subseteq\cdots\subseteq\mathsf{XP},
\]
whose inclusions are conjectured to be strict; $\mathrm{co}\text{-}\mathrm{W}[t]$ consists of the complements of problems in $\mathrm{W}[t]$. For the hardness proof, we refer to the model-checking characterization of Flum and Grohe \cite{FlumGrohe2005}. 
Let $\Sigma_{t,1}$ denote sentences of the form
\[
\phi=
\exists x_1\cdots \exists x_m
\forall y_1\exists y_2\cdots Qy_{t-1}
\theta,
\]
where $\theta$ is quantifier-free and the $t-1$ quantifier blocks following the leading existential block are alternating and contain one variable each. Together, \cite[Theorem~4.13 and Corollary~4.15]{FlumGrohe2005} imply that, fixing $t\geq 2$,  model checking $\Sigma_{t,1}$-sentences
$\phi$ over graphs $G$, parameterized by the size $k$ of the whole sentence, is $\mathrm{W}[t]$-complete.
Without loss of generality, we may assume that every quantified variable occurs in $\theta$.

If $\phi$ is a quantifier-free first-order formula, $\vec{x}$ is a tuple of variables occurring in $\phi$, and $\vec{a}$ is an equally long tuple of constants, then $\phi[\vec{x}\mapsto\vec{a}]$ denotes the result of simultaneously substituting $\vec{a}$ for $\vec{x}$ in $\phi$.
\paramresult*
\begin{proof}
The algorithm underlying \Cref{thm:main} is uniform and yields membership in $\mathsf{XP}$.
Since $\mathrm{W}[t]\subseteq \mathrm{W}[t+1]$, it suffices to establish the hardness statement for odd $t \geq 3$.

Write $t=2p+1$ and rename variables so that
\[
\phi=
\exists x_1\cdots \exists x_m
\forall u_1\exists v_1\cdots \forall u_p\exists v_p\,
\theta.
\]

First, we apply a Tseitin transformation to the quantifier-free formula $\neg\theta$.
This yields an expansion $\mathcal A$ of $G$ such that
\[
G\not\models \phi
\quad\Longleftrightarrow\quad
\mathcal A\models \phi^-,
\]
where
\[
\phi^-= \forall x_1\cdots \forall x_m
\exists u_1\forall v_1\cdots \exists u_p\forall v_p \exists w_1 \dots \exists w_r\bigwedge_{h=1}^{M}P_h(\vec v_h),
\]
$M,r=O(k)$, and every $P_h$ has bounded arity. In particular, the transformation is FPT
with parameter $k$. Moreover, every quantified variable appears in some atom $P_h(\vec v_h)$.

Then, we construct from $(\mathcal A,\phi^-)$ an instance $(\db,q)$ of 
$\cqa{\sjfuabcq}$. Let
\begin{align*}
q \coloneqq {}&
\{R_i(\underline c,x_i)\mid i\in[m]\}
\cup
\{V_i(\underline{u'_i},v'_i)\mid i\in[p]\}
\\
&{}\cup
\{S(\underline{v'_i},u'_i)
\mid S\in\{A_i,B_i\},\ i\in[p]\}
\\
&{}\cup
\{S(\underline{u'_{i+1}},v'_i)
\mid S\in\{C_i,D_i\},\ i\in[p-1]\}
\\
&{}\cup
\{S(\underline{u'_i},u_i)
\mid S\in\{E_i,F_i\},\ i\in[p]\}
\\
&{}\cup
\{S(\underline{v'_i},v_i)
\mid S\in\{G_i,H_i\},\ i\in[p]\}
\\
&{}\cup
\{S(\underline{z_h},\vec v_h)
\mid S\in\{T_h,U_h\},\ h\in[M]\},
\end{align*}
where $c\in A$ is arbitrary, $u'_1,v'_1, \dots ,u'_p,v'_p,z_1,\dots,z_M$ are fresh variables, and
all  relation symbols are fresh. The quantifier gadgets in the resulting key-nonkey graph are illustrated in \Cref{fig:parameterized-gadget}. Assuming that $A$ is the universe of $\mathcal A$, the
database $\db$ contains
\begin{itemize}
\item
$R_i(\underline c,a)$ for each $i\in[m]$ and $a\in A$;
\item
$V_i(\underline{(a_1,b_1,\dots,a_{i-1},b_{i-1},a_i)},
(a_1,b_1,\dots,a_{i-1},b_{i-1},a_i,b_i))$
for each $i\in[p]$ and $a_1,b_1,\dots,a_i,b_i\in A$;
\item
$S(\underline{(a_1,b_1,\dots,a_i,b_i)},
(a_1,b_1,\dots,a_i))$
for each $i\in[p]$, $a_1,b_1,\dots,a_i,b_i\in A$, and $S\in\{A_i,B_i\}$;
\item
$S(\underline{(a_1,b_1,\dots,a_i,b_i,a_{i+1})},
(a_1,b_1,\dots,a_i,b_i))$
for each $i\in[p-1]$, $a_1,b_1,\dots,a_i,b_i,a_{i+1}\in A$, and $S\in\{C_i,D_i\}$;
\item
$S(\underline{(a_1,b_1,\dots,a_i)},a_i)$
for each $i\in[p]$, $a_1,b_1,\dots,a_i\in A$, and $S\in\{E_i,F_i\}$;
\item
$S(\underline{(a_1,b_1,\dots,a_i,b_i)},b_i)$
for each $i\in[p]$, $a_1,b_1,\dots,a_i,b_i\in A$, and $S\in\{G_i,H_i\}$.
\end{itemize}
Moreover, for every $\vec a\in P_h^{\mathcal A}$, choose a fresh identifier
$e_{h,\vec a}$, unique to $(h,\vec a)$, and add the twin facts
$T_h(\underline{e_{h,\vec a}},\vec a)$ and
$U_h(\underline{e_{h,\vec a}},\vec a)$ to $\db$.

Note that $\adom{\db}$ extends $A$ with the fresh identifiers.
Furthermore, it can be verified that every attack respects the order
\begin{equation}\label{eq:attack-order}
\{R_1,\dots,R_m\}
<V_1<\cdots<V_p
<\{A_i,B_i,C_i,D_i,E_i,F_i,G_i,H_i,T_h,U_h\},
\end{equation}
where $i,h$ range over appropriate integers.
 Thus, the attack graph is acyclic and
$q\in\sjfuabcq$.

We claim that 
\[
\mathcal A\models\phi^-
\iff 
\text{every repair of $\db$ satisfies $q$}.
\]
For the claim, observe first that
\begin{align*}
 & \text{ every repair of $\db$ satisfies $q$}\\
\iff &\forall c_1, \dots, c_m\in A: \text{ every repair of $\db$ satisfies $(q \setminus \{R_1, \dots ,R_m\})_{x_1, \dots ,x_m\mapsto c_1, \dots ,c_m}$}.
\end{align*}
Since the atoms $R_1, \dots ,R_m$ do not attack any atom in $q_{x_1, \dots ,x_m\mapsto c_1, \dots ,c_m}$, by \Cref{lem:replaceconstant2} and \eqref{eq:attack-order} the atom $V_1$ is unattacked in this query; hence, it is unattacked also in $q^0\defeq (q \setminus \{R_1, \dots ,R_m\})_{x_1, \dots ,x_m\mapsto c_1, \dots ,c_m}$.
Consequently, by construction and  \Cref{lem:non-attacked} we obtain that
\begin{align*}
& \text{every repair of $\db$ satisfies $ q^0 $}\\
\iff &\text{ $\exists a_1 \in A: $ every repair of $\db$ satisfies $q^0_{u'_1 \mapsto (a_1)}$}\\
\iff &\text{ $\exists a_1 \in A\forall b_1 \in A:$ every repair of $\db$ satisfies $(q^0 \setminus \{V_1\})_{u'_1,v'_1 \mapsto  (a_1),(a_1,b_1)}$.}
\end{align*}

Now, the atom $V_2$ is unattacked in $(q^0 \setminus \{V_1\})_{u'_1,v'_1 \mapsto  (a_1),(a_1,b_1)}$, so the same alternation repeats for $u'_2,v'_2$, and, by induction, for the remaining variable pairs $u'_i,v'_i$.
Suppose 
\[
q^1=(q^0 \setminus \{V_1, \dots ,V_p\})_{u'_1,v'_1, \dots ,u'_p,v'_p \mapsto  (a_1),(a_1,b_1), \dots ,(a_1,b_1, \dots ,a_p),(a_1,b_1, \dots ,a_p,b_p)}
\]
is some query obtained as a result of this alternation. 

By construction, every repair of $\db$ satisfies $q^1$ if and only if every repair of $\db$ satisfies 
\[
q^2 \defeq (q^1 \setminus \{A_i,B_i,C_i,D_i,E_i,F_i,G_i,H_i\})_{u_1,v_1, \dots ,u_p,v_p \mapsto a_1,b_1, \dots ,a_p,b_p},
\]
where $i$ ranges over appropriate integers. 
Note that $\queryvars{q^2}$ consists of only the variables listed in $w_1, \dots ,w_r,z_1, \dots ,z_M$ and $q^2$ only of the atoms $T_h, U_h$. 
Moreover, $\db$ restricted to $T_h, U_h$ is consistent.
Thus, every repair of $\db$ satisfies $q^2$ if and only if there exists $d_1, \dots ,d_r \in A$ and identifiers $e_1, \dots ,e_M$ such that $\db$ satisfies 
\[
q^* \defeq q^2_{w_1, \dots ,w_r, z_1, \dots ,z_M \mapsto d_1, \dots ,d_r, e_1, \dots ,e_M}
\] 
Finally, $\db$ satisfies $q^*$ precisely when $\mathcal A$ satisfies $\bigwedge_{h=1}^M P_h(\vec v_h)$ over the generated variable assignment. From this, the claim follows.

Finally, it suffices to show that the reduction from $(\mathcal A,\phi^-)$ to $(\db,q)$ is FPT.
In the key-nonkey graph, the source SCCs are 
\[
\{\emp\},\{u'_p,v'_p\},\{z_1\},\dots,\{z_M\}.
\]
Consequently, $\cgs{q}=M+1=O(k)$.
If $N=|A|$, the construction has
size at most $f(k)N^{c_t}$ for a constant $c_t$ depending only on the fixed
$t$ (one may take $c_t$ to be the maximum of $2p$ and the arity bound in the
normalization). Thus, the final reduction is FPT for every fixed $t$. This
proves the hardness claim.
\end{proof}

\begin{figure}[t]
\centering
\scalebox{0.72}{%
\begin{tikzpicture}[
  every node/.style={font=\small},
  quantifier node/.style={
    circle,
    draw=black,
    thick,
    inner sep=1.5pt,
    minimum size=7mm
  },
  quantifier box/.style={
    draw=black!45,
    dashed,
    rounded corners,
    inner xsep=6pt,
    inner ysep=8pt
  }
]

\node[quantifier node] (up1) at (0,1.3) {$u'_1$};
\node[quantifier node] (vp1) at (1.6,1.3) {$v'_1$};
\node[quantifier node] (u1)  at (0,0) {$u_1$};
\node[quantifier node] (v1)  at (1.6,0) {$v_1$};
\sarr{up1}{vp1}
\darr[bend left=35]{vp1}{up1}
\darr{up1}{u1}
\darr{vp1}{v1}
\node[quantifier box,fit=(up1)(vp1)(u1)(v1)] (block1) {};
\node[anchor=south] at ([yshift=1mm]block1.north)
  {$\exists u_1\forall v_1$};

\node[quantifier node] (up2) at (4.2,1.3) {$u'_2$};
\node[quantifier node] (vp2) at (5.8,1.3) {$v'_2$};
\node[quantifier node] (u2)  at (4.2,0) {$u_2$};
\node[quantifier node] (v2)  at (5.8,0) {$v_2$};
\sarr{up2}{vp2}
\darr[bend left=35]{vp2}{up2}
\darr{up2}{u2}
\darr{vp2}{v2}
\darr{up2}{vp1}
\node[quantifier box,fit=(up2)(vp2)(u2)(v2)] (block2) {};
\node[anchor=south] at ([yshift=1mm]block2.north)
  {$\exists u_2\forall v_2$};

\node (continuation) at (7.55,1.3) {$\cdots$};
\darr{continuation}{vp2}

\node[quantifier node] (upp) at (9.3,1.3) {$u'_p$};
\node[quantifier node] (vpp) at (10.9,1.3) {$v'_p$};
\node[quantifier node] (up)  at (9.3,0) {$u_p$};
\node[quantifier node] (vp)  at (10.9,0) {$v_p$};
\sarr{upp}{vpp}
\darr[bend left=35]{vpp}{upp}
\darr{upp}{up}
\darr{vpp}{vp}
\darr{upp}{continuation}
\node[quantifier box,fit=(upp)(vpp)(up)(vp)] (blockp) {};
\node[anchor=south] at ([yshift=1mm]blockp.north)
  {$\exists u_p\forall v_p$};

\end{tikzpicture}%
}
\caption{The quantifier-gadget chain; all other vertices and edges of the reduction are omitted. Each primed pair $(u'_i,v'_i)$ realizes the block $\exists u_i\forall v_i$; double arrows are consistent edges, single arrows are inconsistent edges, and the consistent edges $u'_{i+1}\to v'_i$ connect consecutive blocks.}
\label{fig:parameterized-gadget}
\end{figure}

The parameterized complexity of $\cqa{\sjfuabcq}$ thus lies between membership in $\mathsf{XP}$ and $\mathrm{co}\text{-}\mathrm{W}[t]$-hardness for every fixed $t$. 
Whether the gap can be tightened is left as an open question;  such a gap is not unprecedented: an analogous $\mathsf{XP}$ upper bound together with $\mathrm{W}[t]$-hardness for every $t$ has been obtained for chain scheduling with exact delays~\cite{BodlaenderW20}.

\section{Hardness Proof with Binary Primary Keys}\label{sect:apphard2}

We conclude by showing that the unary primary-key assumption in
\Cref{thm:main} cannot be omitted. In fact, already binary primary keys make
the problem \coNP-hard for queries of closure generator size zero. The
attack graphs in the reduction are not only acyclic, but contain no directed
path of length two.

\binaryhardness*
\begin{proof}
Note that the one-bag tree decomposition yields $\ctw{q}\leq\cgs{q}$
and $\hctw{q}\leq\hcgs{q}$. Moreover, it holds that $\hctw{q}\leq\ctw{q}$. Indeed, let $B$ be a bag and
let $V$ witness that its closure width is at most $c$. For every $v\in V$,
choose an atom $F_v\in q$ containing $v$. The subquery consisting of the atoms
$F_v$, for $v\in V$, contains $V$ and therefore witnesses that the hyperclosure width of $B$ is at
most $c$. Hence it suffices to prove the upper bound
for the hyperclosure treewidth alone.

For the upper bound, then, consider the complement of the problem. We first guess a repair $\rep$ of the input database $\db$ and verify that $\rep$ contains exactly one fact from every block of $\db$. It remains to verify that $\rep\not\models q$.  We now guess a tree decomposition $\mathcal{T}=(T,(B_t)_{t\in T})$ of $q$ and, for every $t\in T$, a subquery $q_t\subseteq q$ of size at most
$c$ such that $B_t\subseteq\cl{\queryvars{q_t}}{q}$. We may assume that $T$ has at most $\lvert\queryvars{q}\rvert$ nodes
\cite[Remark~3.2]{Adler08}. Hence, the decomposition and its hyperclosure-width witnesses have polynomial size, and their validity can
be checked in polynomial time. Regarding the primary-key dependencies as partial functions, the repair $\rep$ is a structure in which these functions are well defined. Therefore, given the guessed decomposition, Adler's dynamic-programming algorithm \cite[Theorem~4.2]{Adler08} decides in polynomial time whether $\rep\models q$. The algorithm extends immediately to queries with constants by requiring that homomorphisms fix constants. Consequently, we can verify $\rep\not\models q$ in polynomial time. Thus non-certainty is in $\NP$, and $\cqa{\mathcal{Q}}$ is in $\coNP$.

For the lower bound, we reduce from the complement of \textsc{3Sat}. Let $\varphi$ be a
3-CNF formula over variables $x_1,\ldots,x_m$. First, one constructs, in an obvious way, a Boolean circuit $C$ computing $\varphi$. 
We may assume that
in this construction the two inputs of every binary gate are distinct. 
 Let $o$ be the
output gate of $C$. We interpret every gate $g$ as a variable.

 Fix
three database constants $0,1,c$. The query $q_C$ contains the atoms
\begin{itemize}
\item $O(\underline o)$,
\item $R_i(\underline c,x_i)$ for each $i\in[m]$,
\item  $S_g(\underline{a,b},g),S'_g(\underline{a,b},g)$ for each binary gate $g$ with input gates $a,b$, and
\item $S_g(\underline{a},g),S'_g(\underline{a},g)$ for each NOT gate $g$ with input gate $a$.
\end{itemize}
All relation names in the construction are distinct, so $q_C$ is
self-join-free. Since only the atoms $R_i$ have outgoing attacks, but they do not attack each other, the attack graph has no cycles.
Moreover, from $\cl{\emptyset}{q_C}=\queryvars{q_C}$
 we obtain that
$ \cgs{q_C}=0$.

The database $\db_C$ contains
\begin{itemize}
\item $O(\underline 0)$,
\item $R_i(\underline c,0),R_i(\underline c,1)$  for each $i\in[m]$,
\item  $S_g(\underline{a,b},f_g(a,b)),
 S'_g(\underline{a,b},f_g(a,b))$ for $a,b\in\{0,1\}$, each binary gate $g$, and the Boolean function $f_g$ computing $g$,
\item $S_g(\underline{0},1),S'_g(\underline{0},1),S_g(\underline{1},0),S'_g(\underline{1},0)$ for each NOT gate $g$.
\end{itemize}
All gate relations are consistent, and the relations $S_g,S'_g,O$ belong to every repair. The relations $R_i$ are inconsistent.

It is not difficult to check that
 every repair of $\db_C$ satisfies $q_C$
 if and only if
 C is not satisfiable. The reduction can be computed in logarithmic space. This concludes the proof.
\end{proof}

\end{document}